\newcommand\blfootnote[1]{%
  \begingroup
  \renewcommand\thefootnote{}\footnote{\hspace*{-2mm}#1}%
  \addtocounter{footnote}{-1}%
  \endgroup
}
\newtheorem{definition}{Definition}[section]
\newtheorem{lemma}{Lemma}[section]
\newtheorem{theorem}{Theorem}[section]
\newtheorem{corollary}{Corollary}[section]
\def\Box{\hbox
   {\vbox
     {\hrule height .033em
      \hbox to .5em
        {\vrule width .033em
         \hbox{\vbox to .5em{\vss}\hss}
         \vrule width .033em}
      \hrule height .033em
     }}}
 \newif\ifqed\qedfalse 
 \def\endofpfmarker{\mbox{$\Box$}}
 \def\endofpf{\hbox{\ }\hfill\endofpfmarker\qedtrue}
 \def\proofstyle{}
 \def \qedhere {\endofpf}
 \newenvironment{pfenv}[1]{%
   \global\def\qedhere{\endofpf\global\def\qedhere{}}
   \proofstyle
   \trivlist 
   \item[\hskip \labelsep{\bf #1}]%
   }{%
   \qedhere
   \endtrivlist}
\newenvironment{proof}     {\begin{pfenv}{Proof:     }}{\end{pfenv}}
\def\vect#1{\widetilde{#1}}
\def\eqns#1{\begin{eqnarray*}#1\end{eqnarray*}}
\newcommand{\rulename}[1]{{\sc #1}}
\newcommand{\brn}[1]{\mbox{\rulename{#1}}}
\newenvironment{restate}[2]{%
    \trivlist
    \item[%
        \hskip \labelsep
        {\bf #1\hskip 5\p@\relax{#2}\enspace}]%
        \itshape\hskip-\labelsep}%
 {}
\newenvironment{restatenamed}[3]{%
    \trivlist
    \item[%
        \hskip \labelsep
        {{\bf{#1\ #2\unskip\hskip5pt\relax(#3)%
	   \unskip\hskip5pt}}}]
        \itshape\hskip-\labelsep}%
 {}
\newcommand{\bicong}{\approx }
\newcommand{\eqstr}{\equiv }
\newcommand{\rel}{\mathrel{\mathcal{R}}}
\newcommand{\wkbisim}{\mathrel{\approx_l}}
\newcommand{\altbisim}{\mathrel{\approx_L}}
\newcommand{\eqdef}{\mathrel{\stackrel{\rm\mbox{\tiny def}}{=}}}
\newcommand{\enveq}{\mathrel{\bicong_{s}}}
\newcommand{\barb}[1]{\Downarrow{\!#1}}
\newcommand{\hole}{\_}
\newcommand{\Const}[1]{\mathsf{#1}}
\newcommand{\youwin}{\Const{\,oops\,!\,}}
\newcommand{\hash}{\Const{h}}
\newcommand{\mac}[2]{\Const{mac}(#1,#2)}
\newcommand{\hbin}[2]{\Const{h}(#1,#2)}  
\newcommand{\fbin}[2]{\Const{f}(#1,#2)}  
\def\subst#1#2{^{#1} \!/\! _{#2}}
\newcommand{\someframe}{\varphi}
\newcommand{\frameof}[1]{\varphi(#1)}
\newcommand{\sx}[1]{\subst{#1}{x}}
\newcommand{\smx}{\{\sx{M}\}}
\newcommand{\smxvect}{\{\subst{\vect M}{\vect x}\}}
\def\ltr#1{\xrightarrow{#1}}
\newcommand{\infrule}[2]{\cfrac{\mbox{$#1$}}{\mbox{$#2$}}}
\newcommand{\enf}[1]{#1^\circ}
\newcommand{\doubleopensqbr}{{\rm [\hspace{-1.67pt}[}}
\newcommand{\doubleclosesqbr}{{\rm ]\hspace{-1.67pt}]}}
\newcommand{\tr}[1]{\doubleopensqbr #1 \doubleclosesqbr}
\newenvironment{defn}{\begin{tabbing}
  \hspace{1.5em} \= \hspace{.35\linewidth - 1.5em} \= \hspace{1.5em} \= \kill
  }{
  \end{tabbing}}
\newcommand{\entry}[2]{\>$#1$\>\>#2}
\newcommand{\clause}[2]{$#1$\>\>#2}
\newcommand{\syntaxcategory}[2]{\clause{#1::=}{#2}}
\newcommand{\Msg}[1]{\langle#1\rangle}
\newcommand{\Abs}[1]{(#1)}
\newcommand{\kw}[1]{\mathit{#1}}
\newcommand{\fn}{\kw{fn}}            
\newcommand{\fv}{\kw{fv}}            
\newcommand{\bv}{\kw{bv}}
\newcommand{\dom}{\kw{dom}}
\newcommand{\Rcv}[2]{#1\Abs{#2}}
\newcommand{\Snd}[2]{\overline{#1}\Msg{#2}}
\newcommand{\Let}[2]{\kw{let}\ #1 = #2\ \kw{in}\ }
\newcommand{\Res}[1]{\nu #1.}
\newcommand{\parop}{\mathbin{\mid}}
\newcommand{\Repl}[1]{\mathord{! #1}}
\newcommand{\nil}{\mathbf{0}}
\newcommand{\IfThen}[3]{\kw{if}\ #1 = #2\ \kw{then}\ #3}
\newcommand{\IfThenElse}[4]{\kw{if}\ #1 = #2\ \kw{then}\ #3\ \kw{else}\ #4}
\newcommand{\CTX}{E} 
\newcommand{\sdecrypt}[3]
\newcommand{\hassort}[2]{{#1}: {#2}}
\newcommand{\judgM}[2]{\vdash {#1}: {#2}}
\newcommand{\judgP}[1]{\vdash {#1}}
\newcommand{\rew}{R}
\newcommand{\substinj}{{\sc SubstInj}}
\newcommand{\size}{\mathit{size}}
\newcommand{\nuc}[1]{\nu\vect u.(#1 \parop C)} 
\newcommand{\pnf}{\mathrm{pnf}}
\newcommand{\equivpnf}{\mathrel{\smash[t]{\stackrel{\circ}{\equiv}}}}
\newcommand{\equivP}{\mathrel{\smash[t]{\stackrel{\diamond}{\equiv}}}}
\newcommand{\redpnf}{\mathrel{\rightarrow_\circ}}
\newcommand{\redP}{\mathrel{\rightarrow_\diamond}}
\newcommand{\ltrpnf}[1]{\mathrel{\ltr{#1}_{\circ}}}
\newcommand{\ltrP}[1]{\mathrel{\ltr{#1}_\diamond}}
\newcommand{\comp}[2]{#1\mathbin{\uplus}#2}
\newcommand{\prop}{\mathit{Prop}}
\newcommand{\Key}{\Const{Key}}
\newcommand{\Bool}{\Const{Bool}}
\newcommand{\Data}{\Const{Data}}
\newcommand{\DataList}{\Const{List}}
\newcommand{\Block}{\Const{Block}}
\newcommand{\BlockList}{\Const{BlockList}}
\newcommand{\BlockPair}{\Const{Block2}}
\newcommand{\BlockT}{\Const{Block3}}
\newcommand{\BlockBlockList}{\Const{Block2Blocks}}
\newcommand{\BlockBlockListList}{\Const{Block2Blocks\_List}}
\newcommand{\neseq}{\Const{ne\_list}}
\newcommand{\cons}[2]{{#1}::{#2}}
\newcommand{\bappend}{\mathbin{+\hspace*{-1.5mm}+}}
\newcommand{\len}{\mathit{length}}
\newcommand{\hval}{\mathit{hval}}
\newcommand{\fset}{I}
\newcommand{\fsetRel}{I'_\textit{h}}
\newcommand{\fsetNRel}{I'_\textit{alt}}
\newcommand{\fsetOut}{I_\textit{done}}
\newcommand{\fsetBOut}{I_\textit{out}}
\newcommand{\hset}{J}
\newcommand{\hsetOut}{J_{\textit{done}}}
\newcommand{\hsetBOut}{J_{\textit{out}}}
\newcommand{\hsetInt}{J_{\textit{test}}}
\newcommand{\hsetDrop}{J_{\textit{fail}}}
\newcommand{\env}{\mathcal{N}}
\title{The Applied Pi Calculus: Mobile Values,  \linebreak[2] New Names, and Secure Communication}
\author{Mart{\'\i}n Abadi
\\
Google Brain\\
\url{abadi@google.com}
\and
Bruno Blanchet\\
Inria\\
\url{Bruno.Blanchet@inria.fr}
\and
C\'edric Fournet\\
Microsoft Research\\
\url{fournet@microsoft.com}}
\begin{document}

\maketitle

\blfootnote{This work was started while Mart{\'\i}n Abadi
  was at Bell Labs Research, and continued while he was at the University of California at Santa Cruz and at Microsoft Research.

Authors' addresses: Mart{\'\i}n Abadi, Google Brain, Mountain View, CA, USA;
Bruno Blanchet, Inria, Paris, France;
C\'edric Fournet, Microsoft Research, Cambridge, UK.}

\begin{abstract}
We study the interaction of the programming construct ``new'', which
generates statically scoped names, with communication via messages on channels.
This interaction is crucial in security protocols, which are the main
motivating examples for our work; it also appears in other
programming-language contexts.

We define the applied pi calculus, a simple,
general extension of the pi calculus in which values can be formed
from names via the application of built-in functions, subject to
equations, and be sent as messages.  (In contrast, the pure pi
calculus lacks built-in functions; its only messages are atomic names.)
We develop semantics and proof techniques for this extended language
and apply them in reasoning about security protocols.

This paper essentially subsumes the conference paper that introduced
the applied pi calculus in 2001. 
It fills gaps, incorporates improvements, and further explains and
studies the applied pi calculus.  Since 2001, the applied pi calculus
has been the basis for much further work, described in many research
publications and sometimes embodied in useful software, such as the
tool ProVerif, which relies on the applied pi calculus to support the
specification and automatic analysis of security protocols. Although
this paper does not aim to be a complete review of the subject, it
benefits from that further work and provides better foundations for
some of it. In particular, the applied pi calculus has evolved through
its implementation in ProVerif, and the present definition reflects
that evolution.
\end{abstract}

\section{A Case for Impurity}\label{sec:intro}

Purity often comes before convenience and even before faithfulness in
the lambda calculus, the pi calculus, and other foundational
programming languages.  For example, in the standard pi calculus, the
only messages are atomic names~\cite{milner:communicating-mobile}.
This simplicity is extremely appealing from a foundational viewpoint,
and helps in developing the theory of the pi calculus. Furthermore,
ingenious encodings demonstrate that it may not entail a loss of
generality. In particular, integers, objects, and even higher-order
processes can be represented in the pure pi calculus. Similarly,
various encodings of cryptographic operations in the pi calculus have
been
considered~\cite{spi2four,carbonemaffeis,baldamustranslation,martinhoravara}.

On the other hand, this purity has a price.  In applications, the
encodings can be futile, cumbersome, and even misleading.  For
instance, in the study of programming languages based on the pi
calculus (such as Pict~\cite{pierce.turner:pict-programming},
JoCaml~\cite{JoCAML}, or occam-pi~\cite{Occampi}), there is little point in pretending that
integers are not primitive. The encodings may also hinder careful
reasoning about communication (for example, because they require extra messages), 
and they may complicate static analysis and proofs.

These difficulties are often circumvented through on-the-fly
extensions. The extensions range from quick punts (``for the next
example, let's pretend that we have a datatype of integers'') to the
laborious development of new calculi, such as the spi calculus~\cite{spi2four} (a calculus with cryptographic operations) and its
variants.  Generally, the extensions bring us closer to a realistic
programming language or modeling language---that is not always a bad
thing.

Although many of the resulting calculi are ad hoc and poorly
understood, others are robust and uniform enough to have a rich theory
and a variety of applications.  In particular, impure extensions of
the lambda calculus with function symbols and with equations among
terms (``delta rules'') have been developed systematically, with
considerable success. Similarly, impure versions of CCS and CSP with
value-passing are not always deep but often neat and convenient~\cite{MilCC89}.

In this paper, we introduce, study, and use an analogous uniform
extension of the pi calculus, which we call the applied pi calculus
(by analogy with ``applied lambda calculus'').  
From the pure pi calculus, we inherit
constructs for communication and concurrency, and for generating
statically scoped new names (``new'').  We add functions and
equations, much as is done in the lambda calculus.  Messages may then
consist not only of atomic names but also of values constructed from
names and functions. This embedding of names into the space of values
gives rise to an important interaction between the ``new'' construct
and value-passing communication, which appears in neither the pure pi
calculus nor value-passing CCS and CSP. Further, we add an auxiliary
substitution construct, roughly similar to a floating ``let''; this
construct is helpful in programming examples and especially in
semantics and proofs, and serves to capture the partial knowledge that
an environment may have of some values.

The applied pi calculus builds on the pure pi calculus and its
substantial theory, but it shifts the focus away from encodings. In
comparison with ad hoc approaches, it permits a general, systematic
development of syntax, operational semantics, equivalences, and proof
techniques.

Using the calculus, we can write and reason about 
programming examples where ``new'' and value-passing appear.
First, we can easily treat standard datatypes
(integers, pairs, arrays, etc.).  We can also model unforgeable capabilities
as new names, then model the application of certain functions to those
capabilities.  For instance, we may construct a pair of capabilities.
More delicately, the capabilities may be pointers
to composite structures, and then adding an offset to a pointer to a
pair may yield a pointer to its second component (e.g., as
in~\cite{liblit}).  Furthermore, we can study a variety of security
protocols. For this purpose, we represent
fresh channels, nonces, and keys as new names, and primitive
cryptographic operations as functions, obtaining a simple
but useful programming-language perspective on security protocols
(much as in the spi calculus).
A distinguishing characteristic of the present approach is that we need not
craft a special calculus and develop its proof techniques for
each choice of cryptographic operations.
Thus, we can express and analyze fairly sophisticated protocols that combine
several cryptographic primitives (encryptions, hashes,
signatures, XORs,~\dots). We can also describe attacks against the
protocols that rely on (equational) properties of some of those
primitives.
In our work to date, security protocols are our main
source of examples.

The next section defines the applied pi calculus. Section~\ref{sec:sample}
introduces some small, informal examples.
Section~\ref{sec:equivalences} defines semantic concepts, such as
process equivalence, and develops proof techniques.
Sections \ref{sec:diffie-hellman} and~\ref{sec:extension-attacks}
treat larger, instructive examples; they concern 
a Diffie-Hellman key exchange, cryptographic hash functions, and message authentication codes.
(The two sections are independent.)
Many other examples now appear in the literature, as explained below.
Section~\ref{sec:related} discusses related work, and 
Section~\ref{sec:conclusion} concludes.
The body of the paper contains some proofs and outlines others; many details of the proofs, however, are in appendices.

This paper essentially subsumes the conference paper that introduced
the applied pi calculus in 2001.  It fills gaps, incorporates various
improvements, and further explains and studies the applied pi
calculus. Specifically, it presents a revised language, with a revised
semantics, as explained in Sections~\ref{sec:calculus}
and~\ref{sec:equivalences}.  It also includes precise definitions and
proofs; these address gaps in the conference paper, discussed in
further detail in Section~\ref{sec:equivalences}.  Finally, some of the
examples in Sections~\ref{sec:sample}, \ref{sec:diffie-hellman}, and
especially~\ref{sec:extension-attacks} are polished or entirely new.

Since 2001, the applied pi calculus has been the basis for much
further work, described in many research publications (some of which
are cited below) and
tutorials~\cite{abadi:fosad,CortierKremer14,2011-Applied-pi-calculus}.  This further
work includes semantics, proof techniques, and applications in diverse
contexts (key exchange, electronic voting, certified email,
cryptographic file systems, encrypted Web storage, website
authorization, zero-knowledge proofs, and more).  It is sometimes
embodied in useful software, such as the tool ProVerif~\cite{Blanchet2001,Blanchet04,Blanchet07b,ProVerifSurvey}.  This tool,
which supports the specification and automatic analysis of security
protocols, relies on the applied pi calculus as
input language. 
Other software that builds on ProVerif
targets protocol implementations, Web-security mechanisms, or stateful
systems such as hardware
devices~\cite{Bhargavan:dec08,webspi,statverif}. 
Finally, the applied pi calculus has also been implemented in other settings, such as the prover Tamarin~\cite{Meier:tamarin,Kremer14}.

Although this paper does not aim to offer a complete review of the
subject and its growth since 2001, it benefits from that further work
and provides better foundations for some of it. In particular, the
applied pi calculus has evolved through its implementation in
ProVerif, and the present definition reflects that evolution.

\section{The Applied Pi Calculus}\label{sec:calculus}

In this section we define the applied pi calculus: its syntax and
informal semantics (Section~\ref{sec:syntax}), then its operational
semantics (Section~\ref{sec:ope-sem}). We also discuss a few variants and extensions of our definitions (Section~\ref{sect:variants}).

\subsection{Syntax and Informal Semantics}\label{sec:syntax}

A {\em signature} $\Sigma$ consists of a finite set of function symbols,
such as $\Const{f}$, $\Const {encrypt}$, and $\Const {pair}$, each with
an arity.
A function symbol with arity 0 is a constant symbol.

Given a signature $\Sigma$, an infinite set of names, and an infinite
set of variables, the set of \emph{terms} is defined by the grammar:
\begin{defn}
\syntaxcategory{L, M, N, T, U, V}             {terms}\\
\entry{a, b, c,\dots,k,\dots, m,n, \dots,s} {name}\\
\entry{x,y,z}                             {variable}\\
\entry{f(M_1,\dots,M_l)}            {function application}
\end{defn}
where $f$ ranges over the functions of $\Sigma$ and $l$ matches the
arity of $f$. 

Although names, variables, and constant symbols have similarities,
we find it clearer to keep them separate.
A term is ground when it does not contain 
variables (but it may contain names and constant symbols).
We use meta-variables $u,v,w$ to range over both names
and variables. 
We abbreviate tuples $u_1,\dots,u_l$ and $M_1,\dots,M_l$ to
$\vect{u}$ and~$\vect{M}$, respectively.

The grammar for \emph{processes} is similar to the one in the pi
calculus, but here messages can contain terms 
(rather than only names) and names need not be just channel names:

\begin{defn}
\syntaxcategory{P, Q, R}            {processes (or plain processes)}\\
\entry{\nil}                            {null process}\\
\entry{P \parop Q}                      {parallel composition}\\ 
\entry{\Repl P}                         {replication}\\
\entry{\Res n P}                        {name restriction (``new'')}\\
\entry{\IfThenElse{M}{N}{P}{Q}}         {conditional}\\
\entry{\Rcv N x.P}                      {message input}\\
\entry{\Snd N M.P}                      {message output}
\end{defn}
The null process $\nil$ does nothing; $P \parop Q$ is the
parallel composition of $P$ and~$Q$; the 
replication $\Repl P$ behaves as an infinite number of copies of $P$
running in parallel.
The process $\Res n P$ makes a new, private name $n$ then behaves as $P$.
The conditional construct $\IfThenElse{M}{N}{P}{Q}$ is standard,
but we should stress that $M=N$ represents equality, rather than
strict syntactic identity. 
We abbreviate it $\IfThen{M}{N}{P}$ when $Q$ is $\nil$.
Finally, $\Rcv N x.P$ is ready to input from channel $N$, then
to run $P$ with the actual message replaced for the formal parameter $x$,
while $\Snd N M.P$ is ready to output $M$ on channel $N$, then to run~$P$.
In both of these, we may omit $P$ when it is $\nil$.

Further, we extend processes with {\em active substitutions}:
\begin{defn}
\syntaxcategory{A, B, C}                     {extended processes}\\
\entry{P}                               {plain process}\\
\entry{A \parop B}                      {parallel composition}\\
\entry{\Res n A}                        {name restriction}\\
\entry{\Res x A}                        {variable restriction}\\
\entry{\smx}                            {active substitution}
\end{defn}
We write $\smx$ for the substitution that replaces the variable $x$
with the term~$M$.  Considered as a process, $\smx$ is like
$\Let{x}{M}\ldots$, and is similarly useful.
However, unlike a ``let'' definition, $\smx$ floats and applies to any 
process that comes into contact with it.
To control this contact, we may add a restriction: 
$\Res x (\smx \parop P)$ corresponds exactly to $\Let{x}{M}P$.
The substitution $\smx$ typically appears when the term $M$ has been sent
to the environment, but the environment may not have
the atomic names that appear in $M$; the variable~$x$ is just
a way to refer to $M$ in this situation.
Although the substitution $\smx$ concerns only one variable,
we can build bigger substitutions by parallel composition,
and may write $$\{\subst{M_1}{x_1}, \dots, \subst{M_l}{x_l}\} \quad {\mbox{for}}
\quad \{\subst{M_1}{x_1}\} \parop
\dots \parop \{\subst{M_l}{x_l}\}$$ 
We write $\sigma$, $\smx$, $\{\subst{\vect{M}}{\vect{x}}\}$ for substitutions,
$x\sigma$ for the image of $x$ by $\sigma$, and $T\sigma$ for the
result of applying $\sigma$ to the free variables of~$T$.
We identify the empty substitution and the null process $\nil$.

\begin{figure}[tp]
\eqns{
\fv(x) & \eqdef & \{ x\}\\
\fv(n) & \eqdef & \emptyset\\
\fv(f(M_1, \ldots, M_l)) & \eqdef & \fv(M_1) \cup \dots \cup \fv(M_l)\\[1ex]
\fv(\nil) & \eqdef & \emptyset\\
\fv(P \parop Q) & \eqdef & \fv(P) \cup \fv(Q)\\
\fv(\Repl P) & \eqdef & \fv(P)\\
\fv(\Res n P) & \eqdef & \fv(P)\\
\fv(\IfThenElse{M}{N}{P}{Q}) & \eqdef & \fv(M) \cup \fv(N) \cup \fv(P) \cup \fv(Q)\\
\fv(\Rcv N x.P) & \eqdef & \fv(N) \cup \left(\fv(P) \setminus \{ x \}\right)\\
\fv(\Snd N M.P) & \eqdef & \fv(N) \cup \fv(M) \cup \fv(P)\\[1ex]
\fv(A \parop B) & \eqdef & \fv(A) \cup \fv(B)\\
\fv(\Res n A) & \eqdef & \fv(A)\\
\fv(\Res x A) & \eqdef & \fv(A) \setminus \{ x \}\\
\fv(\smx) & \eqdef & \fv(M) \cup \{x\} 
}
$\fn(\cdot)$ is defined as $\fv(\cdot)$, except that
\eqns{
\fn(x) & \eqdef & \emptyset\\
\fn(n) & \eqdef & \{ n \}\\[1ex]
\fn(\Res n P) & \eqdef & \fn(P) \setminus \{ n \}\\
\fn(\Rcv N x.P) & \eqdef & \fn(N) \cup \fn(P)\\[1ex]
\fn(\Res n A) & \eqdef & \fn(A) \setminus \{ n \}\\
\fn(\Res x A) & \eqdef & \fn(A)\\
\fn(\smx) & \eqdef & \fn(M)\\[2ex]
\dom(P) & \eqdef & \emptyset\\
\dom(A \parop B) & \eqdef & \dom(A) \cup \dom(B)\\
\dom(\Res n A) & \eqdef & \dom(A)\\
\dom(\Res x A) & \eqdef & \dom(A) \setminus \{ x \}\\
\dom(\smx) & \eqdef & \{ x \}
}
\caption{Free variables, free names, and domain}\label{fig:fv-fn}
\end{figure}

As usual, names and variables have scopes, which are delimited by
restrictions and by inputs.
We write $\fv(A)$ and $\fn(A)$ for the sets of
free variables and free names of~$A$,
respectively.
These sets are inductively defined, as detailed in Figure~\ref{fig:fv-fn}.
The domain $\dom(A)$ of an extended process $A$ is the set of 
variables that $A$ exports 
(those variables~$x$ for which $A$ contains a substitution $\smx$ 
not under a restriction on $x$). Figure~\ref{fig:fv-fn} also defines
$\dom(A)$ formally.
We consider that expressions (processes and extended processes) are equal modulo renaming of bound names 
and variables.

We always assume that our substitutions are cycle-free, that is, 
by reordering,
they can be written 
$\{\subst{M_1}{x_1}, \dots, \subst{M_l}{x_l}\}$ where
$x_i \notin \fv(M_j)$ for all $i \leq j \leq l$.
For instance, we exclude substitutions such as $\{ \subst{f(y)}{x}, \subst{f(x)}{y} \}$.
We also assume that, in an extended process, there is at most one 
substitution for each variable, and there is 
exactly one when the variable is restricted, that is,
$\dom(A) \cap \dom(B) = \emptyset$ in every extended process $A \parop B$,
and $x \in \dom(A)$ in every extended process $\Res x A$. 
An extended process $A$ is {\em closed} when 
its free variables are all defined by an active substitution,
that is, $\dom(A) = \fv(A)$.
We use the abbreviation $\nu \vect{u}$ for the (possibly empty) series of
pairwise distinct restrictions $\nu u_1.\nu u_2.\dots \nu u_l$.

A {\em frame} is an extended process built up from $\nil$ and 
active substitutions of the form $\smx$ by parallel composition and
restriction.
We let $\varphi$ and $\psi$ range over frames.
Every extended
process $A$ can be mapped to a frame $\frameof{A}$ by replacing every
plain process embedded in $A$ with~$\nil$.
The frame $\frameof{A}$ can be viewed as an
approximation of $A$ that accounts for the static knowledge exposed by
$A$ to its environment, but not for $A$'s dynamic behavior.  
Assuming that all bound names and variables are pairwise distinct, and do not clash with free ones,
one can ignore all restrictions in a frame, thus obtaining an underlying
substitution; we require that, for each extended process, this resulting 
substitution be cycle-free.

\begin{figure}
\begin{gather*}
\frac{\hassort{u}{\tau}}{\judgM{u}{\tau}} \qquad 
\frac{\hassort{f}{\tau_1 \times \dots \times \tau_l \rightarrow \tau}
\quad \judgM{M_1}{\tau_1} \quad \dots \quad \judgM{M_l}{\tau_l}
}{\judgM{f(M_1, \ldots, M_l)}{\tau}}\\[1.5mm]
\judgP{0}\qquad
\frac{\judgP{P} \quad \judgP{Q}}{\judgP{P \parop Q}}\qquad
\frac{\judgP{P}}{\judgP{\Repl P}}\qquad
\frac{\judgP{P}}{\judgP{\Res n P}} \qquad 
\frac{\judgM{M}{\tau}\quad \judgM{N}{\tau} \quad \judgP{P} \quad \judgP{Q}}{\judgP{\IfThenElse{M}{N}{P}{Q}}}\\[1.5mm]
\frac{\judgM{N}{\Const{Channel}}\quad \judgP{P}}{\judgP{\Rcv N x.P}} \qquad 
\frac{\judgM{N}{\Const{Channel}}\quad \judgM{M}{\tau} \quad \judgP{P}}{\judgP{\Snd N M.P}}\\[1.5mm]
\frac{\judgP{A} \quad \judgP{B}}{\judgP{A \parop B}}\qquad
\frac{\judgP{A}}{\judgP{\Res u A}}\qquad
\frac{\hassort{x}{\tau} \quad \judgM{M}{\tau}}{\judgP{\smx}}
\end{gather*}
\caption{Sort system}\label{fig:sortsystem}
\end{figure}

We rely on a sort system for terms and processes. It includes a sort
$\Const{Channel}$ for channels. It may also include other sorts 
such as $\Const{Integer}$, $\Const {Key}$, or simply a universal 
sort for data $\Const {Data}$.  
Each variable and each name comes with a sort;
we write $\hassort{u}{\tau}$ to mean that $u$ has sort $\tau$.
There are an infinite number of variables and an infinite number
of names of each sort.
We typically use $a$, $b$, and $c$ as names of sort $\Const{Channel}$, 
$s$ and $k$ as names of some other sort (e.g., $\Const {Data}$), and 
$m$ and $n$ as names of any sort.  
Function symbols also come with the sorts of their arguments and 
of their result. We write
$\hassort{f}{\tau_1 \times \dots \times \tau_l \rightarrow \tau}$
to mean that $f$ has arguments of sorts $\tau_1, \ldots, \tau_l$
and a result of sort $\tau$.
Figure~\ref{fig:sortsystem} gives the rules of the sort system.
It defines the following judgments: $\judgM{M}{\tau}$ means that $M$ is a
term of sort $\tau$; $\judgP{P}$ means that the process $P$ is well-sorted; 
$\judgP{A}$ means that the extended process $A$ is well-sorted.
This sort system enforces 
that function applications are well-sorted, 
that $M$ and $N$ are of the same sort in conditional expressions, 
that $N$ has sort $\Const{Channel}$ in input and output expressions,
that $M$ is well-sorted (with an arbitrary sort $\tau$) in output expressions, and 
that active substitutions preserve sorts. 
We always assume 
that expressions are well-sorted, and that
substitutions preserve sorts.

\subsection{Operational Semantics}\label{sec:ope-sem}

We give an operational semantics for the applied pi calculus 
in the now customary ``chemical style''~\cite{berry92,milner92a}.
At the center of this operational semantics is a reduction
relation $\rightarrow$ on extended processes, which basically models the steps
of computations. For example, ${\Snd a M} \parop {\Rcv a x.\Snd b
  x} \mathrel{\rightarrow} \Snd b M$ represents the transmission of the message
$M$ on the channel $a$ to a process that will forward the message on
the channel $b$; the formal $x$ is replaced with its actual value $M$
in this reduction. The axioms for the reduction relation
$\rightarrow$, which are remarkably simple, rely on auxiliary rules
for a structural equivalence relation $\equiv$ that permits the
rearrangement of processes, for example the use of commutativity and
associativity of parallel composition. Furthermore, both structural
equivalence and reduction depend on an underlying equational
theory. Therefore, this section introduces equational theories, then
defines structural equivalence and reduction.

Given a signature $\Sigma$, we equip it with an equational theory,
that is, with a congruence relation on terms that is closed under
substitution of terms for variables and names.
(See for example Mitchell's textbook~\cite[chapter 3]{mitchell:book} and its
references for background on universal algebra and algebraic data
types from a programming-language perspective.)  
We further require that this equational theory respect
the sort system, that is, two equal terms are of the same sort,
and that it be non-trivial, that is, there exist two different 
terms in each sort.

An equational theory may be generated from a finite set of equational
axioms, or from rewrite rules, but this property is not essential for
us.  We tend to ignore the mechanics of specifying equational
theories, but give several examples in Section~\ref{sec:sample}.

We write $\Sigma \vdash M = N$ when the equation $M=N$ is in the
theory associated with $\Sigma$. Here we keep the theory implicit,
and we may even abbreviate $\Sigma \vdash M = N$ to $M = N$ when
$\Sigma$ is clear from context or unimportant.
We write $\Sigma \not\vdash M = N$ for the negation of $\Sigma \vdash
M = N$.

As usual, a context is an expression 
with a hole.  An {\em evaluation context} is a context whose hole is
not under a replication, a conditional, an input, or an output. A
context $\CTX[\hole]$ \emph{closes}~$A$ when $\CTX[A]$ is closed.

\emph{Structural equivalence} $\equiv$ is the smallest equivalence
relation on extended processes
that is closed 
by application of evaluation contexts, and such that:
\[\begin{array}{lrcll}
\brn{Par-\mbox{$\nil$}} & 
A  & \equiv & A \parop \nil \\
\brn{Par-A} & 
A \parop (B \parop C)  & \equiv & (A \parop B) \parop C \\
\brn{Par-C} & 
A  \parop B &  \equiv & B \parop A \\
\brn{Repl} & 
 \Repl P  & \equiv & P \parop  \Repl P\\[.5em]
\brn{New-\mbox{$\nil$}} & 
\nu n.\nil  & \equiv & \nil \\
\brn{New-C} & 
\nu u.\nu v.A  & \equiv & \nu v.\nu u.A \\
\brn{New-Par} & 
A \parop \nu u.B  & \equiv & \nu u.(A \parop B) 
\quad 
\mbox{when } u \not\in \fv(A) \cup \fn(A)\\[.5em]
\brn{Alias} & 
\nu x. \smx & \equiv & \nil 
\\
\brn{Subst} &
\smx \parop A & \equiv & \smx \parop A\smx\\
\brn{Rewrite} & \{\sx M\} & \equiv & \{ \sx N \} 
\quad \mbox{when } \Sigma \vdash M = N
\end{array}\]

The rules for parallel composition and restriction are standard.
\rulename{Alias} enables the introduction of an arbitrary active
substitution.
\rulename{Subst} describes the application of an
  active substitution to a process that is in contact with it.
\rulename{Rewrite} deals with equational rewriting.
\rulename{Subst} implicitly requires that $x: \tau$ and $\vdash M : \tau$ for some sort $\tau$. 
In combination, \rulename{Alias} and \rulename{Subst} yield 
$A\{\subst{M}{x}\} \equiv \nu x.( \{\subst{M}{x}\} \parop A )$
for $x \notin \fv(M)$:
$$\begin{array}{rcll}
A\{\subst{M}{x}\} & \equiv &  A\{\subst{M}{x}\} \parop \nil & \mbox{by \rulename{Par-\mbox{$\nil$}}}\\
& \equiv & A\{\subst{M}{x}\} \parop \nu x.\{\subst{M}{x}\}& \mbox{by \rulename{Alias}}\\
& \equiv & \nu x.( A\{\subst{M}{x}\} \parop \{\subst{M}{x}\}) & \mbox{by \rulename{New-Par}}\\
& \equiv & \nu x.(  \{\subst{M}{x}\} \parop A\{\subst{M}{x}\}) & \mbox{by \rulename{Par-C}}\\
& \equiv & \nu x.( \{\subst{M}{x}\} \parop A ) & \mbox{by \rulename{Subst}}
\end{array}$$

Using structural equivalence, every closed extended proc\-ess $A$ can
be rewritten to consist of a substitution and a closed plain process with
some restricted names:
\eqns{ A & \equiv & 
  \nu \vect{n}. (\{\subst{\vect{M}}{\vect{x}}\} \parop P)}%
where $\fv(P) = \emptyset$, $\fv(\vect{M}) = \emptyset$, and $\{\vect{n}\} \subseteq
\fn(\vect{M})$. 
In particular, every closed frame $\varphi$ can be rewritten to
consist of a substitution with some restricted names:
\eqns{ \varphi & \equiv & 
\nu \vect n. \{\subst{\vect{M}}{\vect{x}}\}}
where $\fv(\vect{M}) = \emptyset$ and $\{\vect{n}\} \subseteq
\fn(\vect{M})$. 
The set $\{\vect{x}\}$ is the domain of $\varphi$.

\emph{Internal reduction}~$\rightarrow$ is the smallest relation on
extended processes closed by
structural equivalence and application of evaluation contexts such that:
\[\begin{array}{lrcll}
\brn{Comm} &
{\Snd N x.P} \parop {\Rcv N x.Q} & \rightarrow & P \parop Q \\[.5em]
\brn{Then} &
\IfThenElse{M}{M}{P}{Q} & \rightarrow & P \\[.5em]
\brn{Else} &
\IfThenElse{M}{N}{P}{Q} & \rightarrow & Q \\
& \multicolumn{4}{l}{\mbox{\quad for any ground terms $M$ and $N$ such that $\Sigma \not\vdash M = N$}}\\
\end{array}\]%
Communication (\rulename{Comm}) is remarkably simple because the message
concerned is a variable; this simplicity entails no loss
of generality because \rulename{Alias} and \rulename{Subst} 
can introduce a variable to stand for a term:
\eqns{%
  {\Snd N M.P} \parop {\Rcv N x.Q} & \equiv & 
  \nu x.( \{\subst{M}{x}\} \parop  {\Snd N x.P} \parop {\Rcv N x.Q})\\
  & \rightarrow & 
  \nu x.( \{\subst{M}{x}\} \parop  P \parop Q) \quad \mbox{by \rulename{Comm}}\\
  & \equiv &  P \parop Q\{\subst{M}{x}\}
  }%
(This derivation assumes that $x \notin \fv(N) \cup \fv(M) \cup \fv(P)$, which can be established
by renaming as needed.)

Comparisons (\rulename{Then} and \rulename{Else}) directly depend 
on the underlying equational theory. Using \rulename{Else} sometimes 
requires that active substitutions in the context be applied first,
to yield ground terms $M$ and $N$. For example, rule \rulename{Else}
does not allow us to reduce $\{\subst{n}{x}\} \parop \IfThenElse{x}{n}{P}{Q}$.

This use of the equational theory may be reminiscent of initial
algebras. 
In an initial algebra, 
the principle of ``no confusion'' dictates that two elements
are equal only if this is required by the corresponding equational
theory. Similarly, $\IfThenElse{M}{N}{P}{Q}$ reduces to $P$ only
if this is required by the equational theory, and reduces to $Q$ otherwise.
Initial algebras
also obey the principle of ``no junk'', which says that all elements
correspond to terms built exclusively from function symbols of the signature.
In contrast, a fresh name need not equal any such term in the applied pi calculus.

\subsection{Variants and Extensions}\label{sect:variants}

Several variants of the syntax of the applied pi calculus 
appear in the literature, and further variants may be considered. We
discuss a few:
\begin{itemize}

\item 
In the conference paper,
there are several sorts for channels:
the sort $\Const{Channel}\langle \tau\rangle$ is the sort of channels
that convey messages of sort $\tau$. The sort $\Const{Channel}$
without argument is more general, in the sense that all processes 
well-sorted with 
$\Const{Channel}\langle \tau\rangle$ are also well-sorted with 
$\Const{Channel}$. 
Having a single sort for channels simplifies some models,
for instance when all public messages are sent on the same channel,
even if they have different types.
Moreover, by using $\Const{Channel}$ as only sort,
we can encode an untyped version of the applied pi calculus.
The tool ProVerif also uses the sort $\Const{Channel}$ without argument.

\item 
In a more refined version of the sort system, we could allow names
only in a distinguished set of sorts. For instance, we could consider
a sort of booleans, containing as only values the constants $\Const{true}$
and $\Const{false}$. Such a sort would not contain names.
Sorts without names would have to be treated with special care in proofs,
since our proofs often use fresh names. 

On the other hand, letting all sorts contain names does not prevent
modeling booleans by a sort. For example, we can treat as 
false all terms of the sort different from $\Const{true}$, including not only the constant $\Const{false}$
but also all names. Analogous treatments apply to other common datatypes.

\item In the conference paper, channels in inputs and outputs are
  names or variables rather than any term. 
  Allowing any term as channel
  yields a more general calculus and avoids some side conditions in
  theorems.  It is also useful for some encodings~\cite{Abadi04f}.  
  Finally, it is in line with the syntax of ProVerif, where this design choice
  was adopted in order to simplify the untyped version of the calculus.

  Nevertheless, the sort
  system can restrict the terms that appear as channels: if no
  function symbol returns a result of sort $\Const{Channel}$, then
  channels can be only names or variables.

\item Function symbols can also be defined by rewrite rules instead of
an equational theory. This approach is taken in ProVerif~\cite{Blanchet08c}:
a destructor $g$ is a partial function defined by rewrite rules 
$g(M_1, \ldots, M_l) \rightarrow M$;
the destructor application $g(N_1, \ldots, N_l)$ fails when 
no rewrite rule applies,
and this failure can be tested in the process calculus. 

A destructor $g: \tau_1 \times \dots \times \tau_l \rightarrow \tau$
with rewrite rule $g(M_1, \ldots, M_l) \rightarrow M$ can be encoded
in the applied pi calculus by function symbols $g : \tau_1 \times
\dots \times \tau_l \rightarrow \tau$ and $\Const{test}_g : \tau_1
\times \dots \times \tau_l \rightarrow \Const{bool}$ with the
equations
\eqns{
g(M_1, \ldots, M_l) &=& M\\
\Const{test}_g(M_1, \ldots, M_l) &=& \Const{true}
}
The function $\Const{test}_g$ allows one to test whether $g(N_1, \ldots, N_l)$ is defined, by checking whether $\Const{test}_g(N_1, \ldots, N_l) = \Const{true}$ holds. (See Section~\ref{sec:sample} for examples of such test functions.) 
The function $g$ may be applied even when its arguments are not instances of $(M_1, \ldots, M_l)$, thus yielding terms $g(N_1, \ldots, N_l)$ that do not exist in the calculus with rewrite rules. These ``stuck'' terms may be simulated with distinct fresh names in that variant of the calculus. 

Destructors are easy to implement in a tool. They also provide a
built-in error-handling construct: the error handling is triggered
when no rewrite rule applies.  However, they complicate the semantics
because they require a notion of evaluation of terms. 
Moreover, many useful functions can be
defined by equations but not as destructors (for instance, encryption 
without redundancy, XOR, and modular exponentiation, which we use in 
the rest of this paper).  Therefore, ProVerif supports both destructors 
and equations~\cite{Blanchet07b}. 
Thus, the language of ProVerif
is a superset of the applied pi calculus as defined in this 
paper~\cite[Chapter~4]{ProVerifSurvey}, with the caveat that ProVerif does not 
support all equational theories and that it considers only plain processes.

\item An extension that combines the applied pi calculus with
ambients and with a built-in construct for evaluating messages
as programs has also been studied~\cite{Blanchet03e}.
This extended calculus mixes many notions, so the corresponding proofs
are complex. Considering a single notion at a time yields a simpler
and more elegant calculus. Furthermore, although the applied pi calculus has few primitives,
it supports various other constructs via encodings; in 
particular, the message-evaluation construct could be
represented by defining an interpreter in the calculus.

\item Our equational theories are closed under substitution 
of terms for names. This property yields a simple and uniform treatment
of variables and names. An alternative definition, which may suffice, assumes only that 
equational theories  are closed under one-to-one
renaming and do not equate names.
That definition makes it possible to define a function that tests whether
a term is a name.

\end{itemize}
Some other variations concern the definition of the semantics:
\begin{itemize}

\item As in other papers~\cite{Blanchet07b,Liu11},
we can handle the replication by a reduction
step $\Repl P \rightarrow P \parop \Repl P$ instead of the
structural equivalence rule $\Repl P \equiv P \parop \Repl P$.
This modification prevents transforming $P \parop \Repl P$ into $\Repl P$, 
and thus simplifies some proofs.

\item As Section~\ref{sec:ope-sem} indicates, we can rewrite extended processes by
pulling restrictions to the top, so that every closed extended process
$A$ becomes an extended process $\enf A$ such that
\[A \equiv {\enf A} = \Res {\vect{n}} (\{ \subst{\vect{M}}{\vect{x}} \}
\parop P_1 \parop \ldots \parop P_l)\]
where $\fv(P_1 \parop \ldots \parop P_l) = \emptyset$, $\fv(\vect{M}) = \emptyset$, and 
$P_1$, \ldots, $P_l$ are replication, conditional, input, or output
expressions. 
We can then modify the definitions of structural equivalence and
internal reduction to act on processes in the form above.
Structural equivalence says that the parallel composition $P_1 \parop \ldots \parop P_l$ is associative
and commutative and that the names in~$\vect{n}$ can be reordered. Internal reduction is the smallest relation on closed extended processes, closed by structural equivalence, such that:
\[\begin{array}{rcll}
\CTX[{\Snd N M.P} \parop {\Rcv {N'} x.Q}] & \rightarrow & \CTX[P \parop \enf{Q\{\subst{M}{x}\}]} &\text{if }\Sigma \vdash N = N'\\[.5em]
\CTX[\IfThenElse{M}{N}{P}{Q}] & \rightarrow & \enf{\CTX[P]} &\text{if }\Sigma \vdash M = N\\[.5em]
\CTX[\IfThenElse{M}{N}{P}{Q}] & \rightarrow & \enf{\CTX[Q]} &\text{if }\Sigma \not\vdash M = N\\[.5em]
\CTX[\Repl P] & \rightarrow & \enf{ \CTX[P \parop \Repl P] }&
\end{array}\]%
for any evaluation context $\CTX$.
A similar idea appears in the intermediate applied pi 
calculus of Delaune et al.~\cite{Delaune10} and Liu et al.~\cite{Liu11,LiuLin12}. There, all restrictions not under replication are pulled to the top of processes, over conditionals, inputs, outputs, and parallel compositions; the processes $P_1$, \dots, $P_l$ may be $\nil$; and channels are names or variables.

\item Pushing the previous idea further, we can represent the extended
process 
\[A \equiv \Res {\vect{n}} (\{ \subst{\vect{M}}{\vect{x}} \} 
\parop P_1 \parop \ldots \parop P_l)\]
as a configuration $(\env, \sigma, \mathcal{P}) = (\{\vect{n}\}, \{\subst{\vect{M}}{\vect{x}} \}, \{P_1, \ldots, P_l \})$, where $\env$ is a set of names, $\sigma$ is a substitution, and $\mathcal{P}$ is a multiset of processes. We can then define internal reduction on such configurations, without any structural equivalence. (Sets and multisets allow us to ignore the ordering of restrictions and parallel processes.) This idea is used in semantics of the calculus of ProVerif~\cite{Abadi04f,Allamigeon05,Blanchet08c,ProVerifSurvey}.
\end{itemize}
Semantics based on global configurations are closer to abstract machines.
Such semantics simplify proofs, because they leave
only few choices in reductions. They also make it easier to define
further extensions of the calculus, such as tables and phases in
ProVerif~\cite{ProVerifSurvey}. However, our compositional semantics is more
convenient in order to model interactions between a process and a
context. It is also closer to the traditional semantics of the 
pi calculus. The two kinds of semantics are of course connected. In particular, 
Blanchet~\cite[Chapter~4]{ProVerifSurvey} formally relates 
the semantics of ProVerif based on configurations to our semantics.

\section{Brief Examples}\label{sec:sample}

This section collects several examples, focusing on signatures,
equations, and some simple processes.  We start with pairs; this
trivial example serves to introduce some notations and issues.  We then
discuss lists, cryptographic hash functions, encryption functions, digital
signatures, and the XOR function~\cite{Menezes96,Schnei96}, as well as 
a form of multiplexing, which demonstrates the use of channels that 
are terms rather than names. Further examples
appear in Sections~\ref{sec:diffie-hellman} and~\ref{sec:extension-attacks}.
More examples, such as blind signatures~\cite{Kremer05} and zero-knowledge proofs~\cite{Backes08}, have appeared in the literature since 2001.

Of course, at least some of these functions appear in most
formalizations of cryptography and security protocols.  In comparison
with the spi calculus, the applied pi calculus permits a more uniform
and versatile treatment of these functions, their variants, and their
properties.  Like the spi calculus, however, the  applied pi calculus takes
advantage of notations, concepts, and techniques from programming
languages.

\paragraph{Pairs}

Algebraic datatypes such as pairs, tuples, arrays, and lists occur in many
examples. Encoding them in the pure pi calculus is not hard, but
neither is representing them as primitive.
For instance, the signature~$\Sigma$ may contain the binary function
symbol $\Const{pair}$ and the unary function symbols $\Const{fst}$ and
$\Const{snd}$, with the abbreviation 
$(M,N)$  for $\Const{pair}(M,N)$, and with the evident equations:
\begin{eqnarray}
  \Const{fst}( (x,y) ) & =  & x \label{eq:fst}\\
  \Const{snd}( (x,y) ) & =  & y \label{eq:snd}
\end{eqnarray}
(So the equational theory consists of these equations, and all 
the equations obtained by reflexivity, symmetry, transitivity, 
applications of function symbols, and substitutions of terms for variables.)
These function symbols may for instance be sorted as follows:
\eqns{
  \Const{pair}&: &\Const{Data} \times \Const{Data} \rightarrow \Const{Data}\\
  \Const{fst}&: &\Const{Data}\rightarrow \Const{Data} \\
  \Const{snd}&: &\Const{Data}\rightarrow \Const{Data}
}%
We may use the test 
$(\Const{fst}(M), \Const{snd}(M)) = M$ to check that $M$ is a pair before using the values of $\Const{fst}(M)$ and $\Const{snd}(M)$. 
Alternatively, we may add a boolean function $\Const{is\_pair}$
that recognizes pairs, defined by the equation:
\eqns{
  \Const{is\_pair}((x,y)) &= & \Const{true}
}%
With this equation, the conditional $\IfThenElse{\Const{is\_pair}(M)}{\Const{true}}{P}{Q}$ runs $P$ if $M$ is a pair and $Q$ otherwise.
Using pairs, we may, for instance, define the process:
\[\nu s. \big( 
\Snd{a}{(M,s)}
\parop
\Rcv{a}{z}.\IfThen{\Const{snd}(z)}{s}{\Snd{b}{\Const{fst}(z)}}
\big) \]
One of its components sends a pair consisting of a term $M$
and a fresh name $s$ on a channel $a$. The other receives
a message on~$a$ and, if its second component is~$s$, it forwards
the first component on a channel~$b$.  Thus, we may say that
$s$ serves as a capability (or password) for the forwarding.
However, this capability is not protected from eavesdroppers
when it travels on~$a$. Any other process can listen on $a$
and can apply $\Const{snd}$ to the message received, 
thus learning~$s$. We can represent
such an attacker within the calculus, for example by the 
following process:
\[ \Rcv{a}{z}.\Snd{a}{(N,\Const{snd}(z))}\]
which may receive $(M,s)$ on $a$ and send $(N,s)$ on~$a$.
Composing this attacker in parallel with the process, we may obtain $N$ instead of
$M$ on~$b$.

Such attacks can be thwarted by the use of restricted channel names, as in 
the process
\[\Res{a} \nu s. \big( 
\Snd{a}{(M,s)}
\parop
\Rcv{a}{z}.\IfThen{\Const{snd}(z)}{s}{\Snd{b}{\Const{fst}(z)}}
\big) \]
Alternatively, they can be thwarted by the use of cryptography, as
discussed below.

\paragraph{Lists}

We may treat lists similarly, with the following function symbols and corresponding sorts:
\eqns{
\Const{nil} &: &\DataList\\
\Const{cons} &: &\Data \times \DataList \rightarrow \DataList\\
\Const{hd} &: &\DataList \rightarrow \Data\\
\Const{tl} &: &\DataList \rightarrow \DataList
}
The constant $\Const{nil}$ is the empty list;
$\Const{cons}(x,y)$ represents the concatenation of the element $x$ at the beginning of the list $y$, and we write it with infix notation as $\cons{x}{y}$,
where the symbol $::$ associates to the right;
and $\Const{hd}$ and $\Const{tl}$ are head and tail functions 
with equations:
\begin{equation}
\Const{hd}(\cons{x}{y}) = x
\qquad
\Const{tl}(\cons{x}{y}) = y\label{eq:seq}
\end{equation}
Further, we write $M \bappend N$ for 
the concatenation of an element $N$ at the end of a list $M$, 
where the function 
$\bappend : \DataList \times \Data \rightarrow \DataList$
associates to the left, and satisfies the equations:
\begin{equation}
\Const{nil} \bappend x = \cons{x}{\Const{nil}}
\qquad (\cons{x}{y}) \bappend z = \cons{x}{(y\bappend z)}
\label{eq:++}
\end{equation}

\paragraph{Cryptographic Hash Functions}

We represent a cryptographic hash function as a unary function
symbol $\hash$ with no equations.
The absence of an inverse for $\hash$ models the
one-wayness of $\hash$.  The fact that $\hash(M) = \hash(N)$ only when
$M = N$ models that $\hash$ is collision-free.

Modifying our first example, we may now write the process:
\[\nu s. \left( 
\Snd{a}{(M,\hash((s,M)))}
\parop 
\Rcv{a}{x}.\IfThen{\hash((s,\Const{fst}(x)))}{\Const{snd}(x)}
{\Snd{b}{\Const{fst}(x)}}
\right) \]
Here the value $M$ is authenticated by pairing it with the fresh
name $s$ and then hashing the pair.
Although $(M,\hash((s,M)))$ travels on the public channel
$a$, no other process can extract $s$ from this message, or produce 
$(N,\hash((s,N)))$ for some other $N$ using the available functions.
Therefore, we may reason that this process will forward only the intended
  term $M$ on channel~$b$. 

This example is a typical cryptographic application of hash
functions.  In light of the practical importance of those
applications, our treatment of hash functions is 
attractively straightforward. Still, we may question whether our formal
model of these functions is not too strong and simplistic 
in comparison with the
properties of actual implementations based on algorithms such as 
SHA. In Section~\ref{sec:extension-attacks}, we consider a somewhat
weaker, subtler model for hash functions.

\paragraph{Symmetric Encryption} 

In order to model symmetric cryptography (that is, shared-key cryptography), we take binary 
function symbols $\Const{enc}$ and $\Const{dec}$ for encryption
and decryption, respectively, with the equation:
\eqns{
  \Const {dec} (\Const {enc}(x, y), y) & =  & x 
  }%
Here $x$ represents the plaintext and $y$ the key.
We often use fresh names as
keys in examples; for instance, the (useless) process:
\[\nu k. 
\Snd{a}{\Const{enc}(M,k)}
\]
sends the term $M$ encrypted under a fresh key $k$.

In applications of encryption, it is frequent to assume that
each encrypted message comes with sufficient redundancy so that
decryption with the ``wrong'' key is evident. 
Accordingly, we can test whether the decryption of $M$ with the key $k$
succeeds by testing whether $\Const {enc} (\Const {dec}(M, k), k) =  M$.
Alternatively, we could also add a test function $\Const{test}_{\Const{dec}}$
with the equation 
\eqns{
  \Const{test}_\Const {dec} (\Const {enc}(x, y), y) & =  & \Const{true}
}%
Provided that we check that decryption succeeds before using the decrypted
message, this model of encryption basically yields the spi 
calculus~\cite{spi2four}. 

On the other hand, in modern
cryptology, such redundancy is not usually viewed as part of the encryption function
proper, but rather an addition. The 
redundancy can be implemented with message authentication codes.
We can model an encryption scheme without redundancy 
with the two equations:
\eqns{
  \Const {dec} (\Const {enc}(x, y), y) & =  & x \\
  \Const {enc} (\Const {dec}(z, y), y) & =  & z
  }%
These equations model that decryption is the inverse bijection of encryption,
a property that is typically satisfied by block ciphers.

\paragraph{Asymmetric Encryption}

It is only slightly harder to model 
asymmetric (public-key) cryptography, where the keys for encryption
and decryption are different. We introduce
two new unary function symbols
$\Const{pk}$ and $\Const{sk}$ for generating public and private keys from
a seed, and the equation:
\eqns{
  \Const{dec}(\Const{enc}(x,\Const{pk}(y)), \Const{sk}(y)) & =  & x 
}%

We may now write the process:
\[\nu s. \big( 
\Snd{a}{\Const{pk}(s)}
\parop
\Rcv{b}{x}.\Snd{c}{\Const{dec}(x,\Const{sk}(s))}
\big) \]
The first component publishes the public key $\Const{pk}(s)$
by sending it on $a$. The second receives a message on $b$,
uses the corresponding private key $\Const{sk}(s)$ to decrypt it,
and forwards the resulting plaintext on $c$.
As this example indicates, we essentially view name restriction ($\nu s$) as a generator of 
unguessable seeds. In some cases, those seeds may be directly
used as passwords or keys; in others, some transformations are needed.

Some encryption schemes have additional properties. In particular,
$\Const{enc}$ and $\Const{dec}$ may be the same function.  This
property matters in implementations, and sometimes permits attacks. 
Moreover, certain encryptions and decryptions commute in some
schemes.  For example, we have
$\Const{dec}(\Const{enc}(x,y),z)  = \Const{enc}(\Const{dec}(x,z),y)$
if the encryptions and decryptions are performed using RSA with the same modulus.
The treatment of such properties is left open in the spi 
calculus~\cite{spi2four}.
In contrast, it is easy to express the properties 
in the applied pi calculus, and to study the protocols and attacks
that depend on them. 

\paragraph{Non-Deterministic (``Probabilistic'') Encryption} 

Going further, we may add a third argument to $\Const{enc}$,
so that the encryption of a plaintext with a key is not unique.
This non-determinism is an essential property of 
probabilistic encryption~\cite{gm}. The equation for decryption becomes:
\eqns{
  \Const{dec}(\Const{enc}(x,\Const{pk}(y),z), \Const{sk}(y)) & =  & x 
}%

With this variant, we may write the process:
\[\Rcv{a}{x}. \big( 
\nu m. \Snd{b}{\Const{enc}(M,x,m)} \parop 
\nu n. \Snd{c}{\Const{enc}(N,x,n)}
\big) \]
which receives a message $x$ and uses it as an encryption key
for two messages, $\Const{enc}(M,x,m)$ and $\Const{enc}(N,x,n)$.
An observer who does not have the corresponding decryption key
cannot tell whether the underlying plaintexts 
$M$ and $N$ are identical by comparing the ciphertexts,
because the ciphertexts rely on different fresh names $m$ and $n$.
Moreover, even if the observer learns $x$, $M$, and $N$ (but not
the decryption key), it cannot
verify that the messages contain $M$ and $N$ because it
does not know $m$ and $n$.

\paragraph{Public-Key Digital Signatures} 

Like public-key encryption schemes,
digital signature schemes rely on pairs of public and private keys. In each pair, 
the private key serves for computing signatures and the public key for verifying
those signatures. 
In order to model key generation, we use again the two unary function
symbols $\Const{pk}$ and $\Const{sk}$ for generating public and private
keys from a seed. For signatures and their verification, we 
use a new binary function symbol $\Const{sign}$, a ternary
function symbol $\Const{check}$, and a constant symbol $\Const{ok}$,
with the equation:
\eqns{
  \Const{check}(x,\Const{sign}(x,\Const{sk}(y)), \Const{pk}(y))
  & =  & \Const{ok}
}%
(Several variants are possible.)

Modifying once more our first example, we may now write the process:
\[\begin{array}{l}
\left( \nu s. 
\{\subst{\Const{pk}(s)}{y}\} \parop
\Snd{a}{(M,\Const{sign}(M,\Const{sk}(s)))}
\right) \parop \\[.3em]
\Rcv{a}{x}.
\IfThen{\Const{check}(\Const{fst}(x),\Const{snd}(x),y)}{\Const{ok}}
{\Snd{b}{\Const{fst}(x)}}
\end{array}
\]
Here the value $M$ is signed using the private key $\Const{sk}(s)$. 
Although $M$ and its signature travel on the public channel~$a$, 
no other process can produce 
$N$ and its signature for some other $N$.
Therefore, again, we may reason that only the intended term $M$ will be
forwarded on channel~$b$. 
This property holds despite the publication of 
$\Const{pk}(s)$ (but not $\Const{sk}(s)$),
which is 
represented by the active substitution that maps $y$ to $\Const{pk}(s)$. 
Despite the restriction on $s$,
processes outside the restriction can use $\Const{pk}(s)$
through~$y$. In particular, 
$y$ refers to  $\Const{pk}(s)$ in the process that checks
the signature on~$M$.

\paragraph{XOR}

We may model the XOR function, some of its uses
in cryptography, and some of the protocol flaws connected with it. 
Some of these flaws (e.g.,~\cite{RyanSchneider98})
stem from the intrinsic equational properties of XOR,
such as associativity, commutativity, the existence of a neutral element, 
and the 
cancellation property that we may write:
\begin{eqnarray*}
  \Const{xor}(\Const{xor}(x,y),z) & = & \Const{xor}(x,\Const{xor}(y,z))\\
  \Const{xor}(x,y) & = & \Const{xor}(y,x)\\
  \Const{xor}(x,0) & = & x\\
  \Const{xor}(x,x) & = & 0
\end{eqnarray*}
Others arise because of the interactions between XOR and other operations 
(e.g.,~\cite{stubblebine:integrity,sshattack}).
For example, CRCs (cyclic redundancy checks)
can be poor proofs of integrity,
partly because of the equation
\eqns{ \Const{crc}(\Const{xor}(x,y)) & = & 
  \Const{xor}(\Const{crc}(x),\Const{crc}(y))}%

\paragraph{Multiplexing}

Finally, we illustrate a possible usage of channels that are not names.
Consider for instance a pairing function for building channels 
$\Const{pair}: \Const{Data} \times \Const{Port} \rightarrow \Const{Channel}$
with its associated projections $\Const{fst}: \Const{Channel} \rightarrow \Const{Data}$ and $\Const{snd}:  \Const{Channel} \rightarrow \Const{Port}$, and equations~\eqref{eq:fst} and~\eqref{eq:snd} from our first example.
We may use this function for multiplexing as follows:
\begin{align*}
\Res{s}(&\Snd{\Const{pair}(s, \Const{port}_1)}{M_1}
\parop \Snd{\Const{pair}(s, \Const{port}_2)}{M_2}\\
{}\parop{} &\Rcv{\Const{pair}(s, \Const{port}_1)}{x_1}
\parop \Rcv{\Const{pair}(s, \Const{port}_2)}{x_2})
\end{align*}
In this process, the first output can be received only by the
first input, and the second output can be received only by the second input.

\section{Equivalences and Proof Techniques}\label{sec:equivalences}

In examples, we frequently argue that two given processes cannot be
distinguished by any context, that is, that the processes are
observationally equivalent. The spi calculus developed the idea that
the context represents an active attacker, and equivalences capture
authenticity and secrecy properties in the presence of the attacker.
More broadly, a wide variety of security properties can be expressed as
equivalences.

In this section we define observational equivalence for the applied pi
calculus. We also introduce a notion of static equivalence for frames,
a labelled semantics for processes, and a labelled equivalence
relation. We prove that labelled equivalence and observational
equivalence coincide, obtaining a convenient proof technique for
observational equivalence. 

\subsection{Observational Equivalence}\label{subsec:obseq}

We write $A\barb{a}$ when $A$ can send a message on name $a$, that is, when
$A \rightarrow^*\equiv \CTX[ \Snd{a}{M}.P ]$ for some evaluation context
$\CTX[\hole]$ that does not bind~$a$.

\begin{definition}\label{def:bicong}
An {\em observational bisimulation} is a symmetric
relation $\rel$ between closed extended processes with the
same domain such that $A \rel B$ implies:
  \begin{enumerate}
  \item\label{bcone} if $A \barb{a}$, then $B \barb{a}$;
  \item\label{bctwo} if $A \rightarrow^* A'$ and $A'$ is closed, then 
    $B \rightarrow^* B'$ and $A' \rel B'$~for some $B'$;
  \item\label{bcthree} $\CTX[A] \rel \CTX[B]$ for all closing evaluation contexts $\CTX[\hole]$.
\end{enumerate}
{\em Observational equivalence} ($\bicong$) is the largest such relation.
\end{definition}

For example, when $\Const{h}$ is a unary function symbol with no equations, 
we obtain that $\nu s. \Snd{a}{s} \bicong \nu s. \Snd{a}{\Const{h}(s)}$.

These definitions are standard in the pi calculus, where $\barb{a}$ is
called a {\em barb} on~$a$, and where $\bicong$ is one of the two
usual notions of weak barbed bisimulation congruence.
(See Section~\ref{sec:bigpf} and~\cite{FournetGonthier98:equivalences} for a detailed discussion.)
In the applied pi calculus, one could also
define barbs on arbitrary terms, not just on names; we do not need that
generalization for our purposes.
The set of closing evaluation contexts for $A$ depends only
on $A$'s domain; hence, in Definition~\ref{def:bicong}, $A$ and $B$ have the
same closing evaluation contexts.
In Definition~\ref{def:bicong}\eqref{bctwo}, since $\rel$ is a
relation between closed extended processes, we require that $A'$ also
be closed. Being closed is not preserved by all reductions, since
structural equivalence may introduce free unused variables.  For
instance, we have $\nil \equiv \nu x.\{\subst{y}{x}\}$ by
\rulename{Alias} and $\{\sx M\} \equiv \{\sx {\Const{fst}((M,y))} \}$
by \rulename{Rewrite} using the equation \hbox{$\Const{fst}((x,y)) = x$}.

Although observational equivalence is undecidable
in general, various tools support certain automatic proofs of
observational equivalence and other equivalence relations, in the
applied pi calculus and related languages 
(e.g.,~\cite{Baudet05ccs,Blanchet07b,Chadha12,CCD-tcs13}).

\subsection{Static Equivalence}\label{subsec:static}

Two substitutions may be seen as equivalent when they behave equivalently
when applied to terms. We write $\enveq$ for this notion
of equivalence, and call it static equivalence. 
In the presence of the ``new'' construct, defining $\enveq$ is 
somewhat delicate and interesting. 
For instance, consider two functions $\Const f$ and~$\Const g$
with no equations (intuitively, two independent hash functions), and the
three frames:
\eqns{
  \someframe_0 & \eqdef & 
  \nu k. \{\subst{k}{x}\} \parop \nu s.\{\subst{s}{y}\}
  \\ \someframe_1 & \eqdef & 
  \nu k. \{\subst{\Const {f}(k)}{x},\subst{\Const {g}(k)}{y}\}
  \\ \someframe_2 & \eqdef & 
  \nu k. \{\subst{k}{x},\subst{\Const f(k)}{y}\}
}%
In $\someframe_0$, the variables $x$ and $y$ are mapped to two unrelated
values that are different from any value that the context may build
(since $k$ and $s$ are new). These properties also hold, but more subtly,
for $\someframe_1$; although $\Const {f}(k)$ and $\Const {g}(k)$ are
based on the same underlying fresh name, they look unrelated.
(Analogously, it is common to derive apparently unrelated keys
by hashing from a single underlying secret, as in SSL and TLS~\cite{SSLthreeotwo,TLS12}.) 
Hence, a context
that obtains the values for $x$ and $y$ cannot distinguish
$\someframe_0$ and $\someframe_1$.
On the other hand, the context can discriminate $\someframe_2$ by testing the
predicate $\Const f(x) = y$.
Therefore, we would like to define static equivalence so that 
$\someframe_0 \enveq \someframe_1 \not\enveq \someframe_2$. 

This example relies on a concept of equality of terms in a frame,
which the following definition captures.
\begin{definition}\label{def:eqframe}
Two terms $M$ and $N$ are equal in the frame $\varphi$,
written $(M=N)\varphi$, if and only if 
$\fv(M)\cup\fv(N) \subseteq \dom(\varphi)$,
$\varphi \equiv \nu \vect{n}.\sigma$, 
$M\sigma = N\sigma$, and
$\{\vect{n}\} \cap (\fn(M)\cup\fn(N)) = \emptyset$ 
for some names $\vect{n}$ and substitution $\sigma$.
\end{definition}
In Definition~\ref{def:eqframe}, 
the equality $M\sigma = N\sigma$ is independent
of the representative $\Res{\vect n}\sigma$ 
chosen for the frame $\varphi$ such that
$\varphi \equiv \nu \vect{n}.\sigma$ and
$\{\vect{n}\} \cap (\fn(M)\cup\fn(N)) = \emptyset$. 
(Lemma~\ref{lem:equivframes} in Appendix~\ref{app:disclosure} establishes this property.)

\begin{definition}\label{def:enveq}
  Two closed frames $\varphi$ and $\psi$ are
  \emph{statically equivalent}, written $\varphi \enveq \psi$, when 
  $\dom(\varphi)= \dom(\psi)$ and when, for all terms $M$ and $N$, we
  have $(M=N)\varphi$ if and only if $(M=N)\psi$.
  
  Two closed extended processes are statically equivalent,
  written $A \enveq B$, when their frames are statically equivalent.
\end{definition}

For instance, in our example, we have $(\Const f(x) = y)\someframe_2$ but not $(\Const f(x) =
y)\someframe_1$, hence $\someframe_1 \not\enveq \someframe_2$.

Depending on $\Sigma$, static equivalence can be quite hard to check,
but at least it does not depend on the dynamics of processes.  Some
simplifications are possible in common cases, in particular when terms
can be put in normal forms (for example, in the proof of Theorems~\ref{th:mac} and~\ref{th:indif}).
Decisions procedures exist for static equivalence in large classes of equational theories~\cite{Abadi06b}, some implemented in tools~\cite{Baudet09,CDK-jar10}.

The next lemma establishes closure properties of static equivalence:
it shows that static equivalence is invariant by structural equivalence 
and reduction, and closed by application of closing evaluation contexts.  
Its proof appears in Appendix~\ref{app:simplecontexts}.

\begin{lemma}\label{LEM:INVARIANT-STATIC-EQ}
Let $A$ and $B$ be closed extended processes. 
If $A \equiv B$ or $A \rightarrow B$, then $A \enveq B$.
If $A \enveq B$, then $\CTX[A] \enveq \CTX[B]$ for all closing evaluation contexts $\CTX[\hole]$.
\end{lemma}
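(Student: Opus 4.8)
The plan is to prove the three assertions in turn, in each case reducing to the defining condition of static equivalence: for all terms $M$ and $N$, $(M=N)\varphi(A)$ iff $(M=N)\varphi(B)$. For the first assertion I would split on whether $A \equiv B$ or $A \rightarrow B$. If $A \equiv B$, then by the definition of the frame of a process, $\varphi(A) \equiv \varphi(B)$: every rule of structural equivalence either acts within a plain-process subterm (so it disappears under the $\,\cdot\mapsto\nil$ mapping) or is itself a structural-equivalence rule on the frame part (\rulename{Par-$\nil$}, \rulename{Par-A}, \rulename{Par-C}, \rulename{New-C}, \rulename{New-$\nil$}, \rulename{New-Par}, \rulename{Alias}, \rulename{Subst}, \rulename{Rewrite}); and \rulename{Comm}, \rulename{Then}, \rulename{Else} are not structural-equivalence rules so they do not arise here. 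One must also check that $\equiv$ is closed under the frame operation when applied inside an evaluation context, which follows because evaluation contexts are built from the same frame-preserving constructs. Since $\equiv$ on frames trivially implies $\enveq$ (Definition~\ref{def:eqframe} quantifies over representatives $\nu\vect n.\sigma$ with $\varphi\equiv\nu\vect n.\sigma$, so structurally equivalent frames yield the same equalities-in-the-frame), we get $A \enveq B$. If $A \rightarrow B$, I would use the decomposition $A \equiv \nu\vect n.(\{\subst{\vect M}{\vect x}\}\parop P)$ and inspect the three reduction axioms: \rulename{Comm}, \rulename{Then}, and \rulename{Else} all leave the active-substitution part untouched (they only rewrite plain-process subterms), hence $\varphi(A) \equiv \varphi(B)$ and again $A \enveq B$. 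The one subtlety is that a single reduction step is ``closed by structural equivalence and application of evaluation contexts,'' so I should really argue: $A \rightarrow B$ means $A \equiv \CTX[R_1] $, $B \equiv \CTX[R_2]$ with $R_1 \to R_2$ one of the axioms; combine the $\equiv$ case with the axiom case via transitivity of $\enveq$ (which follows from Definition~\ref{def:enveq} since it is literally an ``iff for all $M,N$'' condition, hence an equivalence relation).

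For the closure assertion, suppose $A \enveq B$ and let $\CTX[\hole]$ be a closing evaluation context. I would first reduce to the case where $\CTX$ is one of the single-layer evaluation contexts $\_ \parop C$, $C \parop \_$, $\nu n.\_$, $\nu x.\_$, since a general evaluation context is a composition of these and $\enveq$ is transitive; the base case $\CTX = \hole$ is trivial. Using the normal-form representations $\varphi(A) \equiv \nu\vect n.\sigma_A$ and $\varphi(B) \equiv \nu\vect m.\sigma_B$ (obtained from the decomposition stated in Section~\ref{sec:ope-sem}), with bound names renamed apart from everything in sight, one computes the frame of $\CTX[A]$ directly. For $\CTX = \_ \parop C$: $\varphi(C[A]) \equiv \nu\vect n.(\sigma_A \parop \varphi(C))$, and one checks $(M=N)\varphi(\CTX[A])$ iff $(M=N)\varphi(\CTX[B])$ by pushing the equation through: equality in $\nu\vect n.(\sigma_A \parop \varphi(C))$ reduces, after possibly applying \rulename{Subst}/\rulename{Rewrite} to combine $\sigma_A$ with the underlying substitution of $\varphi(C)$, to an equality $M'\sigma_A = N'\sigma_A$ in the frame $\nu\vect n.\sigma_A$, where $M', N'$ are $M, N$ with the $\varphi(C)$-part of the substitution applied; since the domains agree and $A \enveq B$, this transfers to $B$. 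For $\CTX = \nu n.\_$ or $\nu x.\_$: restricting an additional name or variable just enlarges $\vect n$ or shrinks the domain, and one checks the side conditions on $\fn(M)\cup\fn(N)$ and $\fv(M)\cup\fv(N)\subseteq\dom$ still match up on both sides because $\dom(\varphi(A)) = \dom(\varphi(B))$. Throughout I would invoke Lemma~\ref{lem:equivframes} (cited in the excerpt as Appendix~\ref{app:disclosure}) to know that $(M=N)\varphi$ is well-defined independently of the chosen representative, so the manipulations above are legitimate.

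\textbf{Main obstacle.} The routine-but-delicate part is the $\_ \parop C$ case of the closure statement: one must handle the interaction between the active substitution $\sigma_A$ exported by $A$ and whatever active substitutions the context $C$ itself exports, making sure (i) domains stay disjoint (guaranteed by the standing hypothesis $\dom(A)\cap\dom(B)=\emptyset$ in every parallel composition, plus alpha-renaming), (ii) the combined substitution remains cycle-free, and (iii) the new free names of $C$ are not among the restricted $\vect n$ of $\varphi(A)$ — all arranged by the convention that bound names and variables are pairwise distinct and distinct from free ones. Getting the bookkeeping right so that ``$(M=N)$ in the combined frame'' cleanly factors through ``$(M'=N')$ in $\varphi(A)$ with $M',N'$ depending only on $C$ and not on $A$'' is the crux; once that factorization is in place, $A\enveq B$ finishes it immediately. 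I expect this to be a short argument modulo these care points, which is presumably why the paper relegates the full details to Appendix~\ref{app:simplecontexts}.
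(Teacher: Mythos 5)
Your overall strategy coincides with the paper's: both parts of the invariance claim are proved by showing that structural equivalence and reduction leave the frame unchanged up to $\equiv$ (the paper does this by induction on the derivation, exactly the rule-by-rule check you sketch), and context closure is proved by peeling the context apart, handling restriction by shrinking the range of $M,N$, and handling parallel composition by pushing the context's substitution into $M$ and $N$ and invoking $A \enveq B$ on the resulting terms. Your factorization for the parallel case --- $(M=N)$ in the combined frame iff $(M\sigma_C = N\sigma_C)$ in $\frameof{A}$, with $M\sigma_C,N\sigma_C$ depending only on the context --- is precisely the paper's inductive step.

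There is, however, one step that fails as stated: the reduction to single-layer contexts. You propose to write a general evaluation context as a composition of $\hole \parop C$, $C \parop \hole$, $\Res{n}\hole$, $\Res{x}\hole$ and iterate, but the intermediate processes need not be closed, and static equivalence (Definition~\ref{def:enveq}) is defined only on closed frames and processes. Concretely, take $\CTX[\hole] = \{\subst{n}{z}\} \parop (\hole \parop \Snd{a}{z})$ with $z \notin \dom(A)$: the context is a closing evaluation context for $A$, yet its innermost layer $\hole \parop \Snd{a}{z}$ yields $A \parop \Snd{a}{z}$, which has $z$ free but not in its domain, so the inductive hypothesis cannot even be stated for it. The paper avoids this by first renaming so that the context is simple and then rearranging it, via structural equivalence, into the canonical shape $\Res{\vect u}(\hole \parop B)$, proving along the way (Lemma~\ref{lem:simplecontexts}) that every subcontext of that shape is closing for $A$; the induction then runs over restrictions applied outside a single parallel composition. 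Your argument goes through once this normalization step is added (or the decomposition is otherwise arranged so that every intermediate stage is closed); without it, the induction breaks on contexts like the one above, which is a second care point on top of the bookkeeping you identify as the crux.
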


As the next two lemmas demonstrate, static equivalence coincides with 
observational equivalence on frames, but is coarser on extended processes.

\begin{lemma}\label{lem:enveq1}
  Observational equivalence and static equivalence coincide on frames.
\end{lemma}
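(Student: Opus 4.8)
The plan is to prove the two inclusions separately. For the easy direction, suppose $\varphi \bicong \psi$ for two closed frames. Since frames contain no plain-process content, the only reductions available are those generated by structural equivalence, so $\bicong$ collapses to a relation that must be preserved under all closing evaluation contexts and must match barbs. I would exhibit a context that turns any distinguishing equality test into an observable barb: given terms $M,N$ with $\fv(M)\cup\fv(N)\subseteq\dom(\varphi)$, consider the context $\CTX[\hole] = \hole \parop \IfThenElse{M}{N}{\Snd{a}{a}}{\nil}$ where $a$ is a fresh name not among the restricted names of $\varphi$ or $\psi$ (legitimate since we may always $\alpha$-rename the bound names of a frame away from $a$, and since all sorts are non-empty and infinite). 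By \rulename{Subst} and \rulename{Rewrite}, $\CTX[\varphi]$ has a barb on $a$ iff $(M=N)\varphi$, and similarly for $\psi$; so $\varphi\bicong\psi$ forces $(M=N)\varphi \iff (M=N)\psi$ for all $M,N$, which together with $\dom(\varphi)=\dom(\psi)$ (required by Definition~\ref{def:bicong}) gives $\varphi\enveq\psi$.

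For the harder direction, suppose $\varphi\enveq\psi$; I would show that static equivalence is itself an observational bisimulation on frames. Take $\rel$ to be the relation $\{(\CTX[\varphi],\CTX[\psi]) : \varphi\enveq\psi,\ \CTX[\hole]\text{ a closing evaluation context}\}$, closed under $\equiv$. This is symmetric. For condition~\eqref{bcthree}, composing evaluation contexts keeps us inside $\rel$ by construction. For conditions~\eqref{bcone} and~\eqref{bctwo} we must analyze how a closing evaluation context around a frame can barb or reduce. The key structural fact is that any reduction of $\CTX[\varphi]$ comes from the process part supplied by the context: using \rulename{New-Par} we push the restrictions of $\varphi$ outward and write $\CTX[\varphi] \equiv \nu\vect u.(\sigma \parop P)$ for a plain process $P$ and substitution $\sigma$ with $\varphi\equiv\nu\vect n.\sigma$. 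A \rulename{Comm} step inside $P$ that uses a subterm of $\sigma$ as its channel, or a \rulename{Then}/\rulename{Else} step whose guard mentions $\dom(\varphi)$, is exactly the kind of observation controlled by static equivalence: whether channel $N$ equals channel $N'$ after applying $\sigma$, or whether $M\sigma = N\sigma$, is precisely an $(M=N)\varphi$ question, hence matched by $\psi$. After the step, the residual is again of the form $\CTX'[\varphi]$ for a new evaluation context $\CTX'$ (the restrictions of $\varphi$ are untouched because frames have no inputs/outputs/conditionals of their own), so we stay in $\rel$; and by Lemma~\ref{LEM:INVARIANT-STATIC-EQ}, $\CTX[\varphi]\enveq\CTX[\psi]$ is preserved under $\equiv$ and $\rightarrow$, which is what lets us re-invoke static equivalence at the new state. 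Barbs are the special case where the context's output is on a free name.

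The main obstacle I expect is the bookkeeping in condition~\eqref{bctwo}: making precise that every reduction of $\CTX[\varphi]$ decomposes into "a reduction the context could do on its own" plus "an equality query answered by $\sigma$", and that the matching reduction of $\CTX[\psi]$ lands in a state related by $\rel$ with the \emph{same} domain. This requires care with $\alpha$-renaming (keeping the restricted names of $\varphi$ and $\psi$ disjoint from the free names and from the context) and with the fact that \rulename{Else} needs ground terms, so one must apply $\sigma$ first via \rulename{Subst}; the equational-theory closure under substitution of terms for names is what guarantees that "$M\sigma = N\sigma$" is stable under pushing restrictions around, exactly as recorded after Definition~\ref{def:eqframe}. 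Once this decomposition is established, re-applying $\enveq$ at each step and appealing to Lemma~\ref{LEM:INVARIANT-STATIC-EQ} for closure closes the coinduction, and the two inclusions together give the claimed coincidence on frames.
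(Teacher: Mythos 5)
Your argument is correct in outline, but it takes a genuinely different route from the paper. The paper obtains this lemma as an immediate corollary of Theorem~\ref{THM:OBSERVATIONAL-LABELED} (see Corollary~\ref{cor:enveq-frame}): since a frame has no internal reductions and no labelled transitions, conditions~\ref{pptwo} and~\ref{ppthree} of Definition~\ref{def:wkbisim} are vacuous on frames, so labelled bisimilarity degenerates to condition~\ref{ppone}, i.e.\ static equivalence, and ${\bicong}={\wkbisim}$ does the rest. You instead argue directly on the reduction semantics: your easy direction (turning an equality test into a barb via $\IfThen{M}{N}{\Snd{a}{a}}$) is essentially the paper's Lemma~\ref{lem:enveq2}, and your hard direction exhibits $\{(\CTX[\varphi],\CTX[\psi]) : \varphi\enveq\psi\}$ as an observational bisimulation by showing that every reduction or barb of $\CTX[\varphi]$ is a move of the context modulated by equality queries that $\psi$ answers identically. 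That decomposition claim is sound --- it is a frame-specialized instance of what the paper proves in general via partial normal forms and Lemma~\ref{lem:red-enveq} on the way to Lemma~\ref{lem:bisim-context-closed} --- and the ``bookkeeping'' you flag is exactly where that machinery would have to be reconstructed. What your route buys is independence from the labelled semantics and from the main coincidence theorem, at the price of redoing (a simpler special case of) the reduction-decomposition analysis; what the paper's route buys is a two-line proof, given that the heavy lifting has already been done for Theorem~\ref{THM:OBSERVATIONAL-LABELED}.
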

This lemma is an immediate corollary of 
Theorem~\ref{THM:OBSERVATIONAL-LABELED} below. (See 
Corollary~\ref{cor:enveq-frame} in Appendix~\ref{app:twodirection}.)

\begin{lemma}\label{lem:enveq2}
  Observational equivalence is strictly finer than static equivalence
  on extended processes: ${\bicong} \subset {\enveq}$.
\end{lemma}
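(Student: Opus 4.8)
\textbf{Proof plan for Lemma~\ref{lem:enveq2}.}

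The plan is to establish the two parts of the strict inclusion ${\bicong} \subset {\enveq}$ separately: first the inclusion ${\bicong} \subseteq {\enveq}$, then strictness via an explicit separating example. For the inclusion, I would argue that observational equivalence implies static equivalence. Suppose $A \bicong B$. By Lemma~\ref{LEM:INVARIANT-STATIC-EQ} it suffices to find, for each test that static equivalence could detect, a corresponding observational test. Concretely, I would show that if $(M=N)\varphi(A)$ fails while $(M=N)\varphi(B)$ holds (or vice versa), then one can build a closing evaluation context $\CTX[\hole]$ that produces a barb from one side but not the other, contradicting $A\bicong B$. The natural choice is a context of the form $\CTX[\hole] = \hole \parop \Rcv{c}{z_1}.\Rcv{c}{z_2}.\dots \parop \IfThen{M}{N}{\Snd{a}{a}}$ where $a$ is a fresh name not in $\dom(A)$, possibly after first using \rulename{Subst} (i.e.\ composing with the frame) so that $M$ and $N$ become ground; by rule \rulename{Then} this context exhibits barb $\barb{a}$ exactly when $M\sigma = N\sigma$ holds, i.e.\ exactly when $(M=N)\varphi$. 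Since $A\bicong B$ preserves barbs under all closing evaluation contexts (Definition~\ref{def:bicong}, clauses \eqref{bcone} and \eqref{bcthree}), we get $(M=N)\varphi(A)$ iff $(M=N)\varphi(B)$ for all $M,N$, and together with $\dom(A)=\dom(B)$ this is exactly $A\enveq B$.

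For strictness, I would exhibit two closed extended processes that are statically equivalent but not observationally equivalent. The simplest candidates are two processes with empty frames that differ in dynamic behavior: for instance $A = \Snd{a}{a}$ and $B = \nil$, or to keep a nonempty frame, $A = \{\subst{n}{x}\} \parop \Snd{a}{a}$ and $B = \{\subst{n}{x}\}$ for a name $n$ of appropriate sort and a channel name $a$. These have the same frame (hence $A\enveq B$ trivially), but $A\barb{a}$ while $B$ has no barb on $a$, so $A\not\bicong B$ by clause \eqref{bcone} of Definition~\ref{def:bicong}. This shows the inclusion is strict.

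The main obstacle is the careful handling of the ``groundness'' and freshness side conditions in the first part. Rule \rulename{Else} (and the equality test in \rulename{Then}) requires ground terms, so before testing $M = N$ the context must first incorporate the active substitutions of the process, which is exactly what composing with the frame and applying \rulename{Subst} achieves; but one must ensure that the restricted names $\vect n$ in $\varphi \equiv \nu\vect n.\sigma$ do not clash with names in $M$ or $N$, which is precisely the side condition built into Definition~\ref{def:eqframe} and which can be met by $\alpha$-renaming. One must also check that the chosen barb name $a$ genuinely lies outside $\dom(A) = \dom(B)$ and is not bound by the context, so that the barb is observable; since there are infinitely many names of each sort, such an $a$ exists. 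Finally, one should note that the reduction $\CTX[A] \to^* \equiv \CTX'[\Snd{a}{a}]$ witnessing the barb uses \rulename{Subst}, \rulename{Rewrite}, and \rulename{Then}, all of which are available in the operational semantics, so no additional machinery is needed. The argument is otherwise routine and mirrors the standard separation between barbed congruence and static equivalence in the spi-calculus literature.
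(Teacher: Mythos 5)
Your proposal is correct and follows essentially the same route as the paper: the inclusion ${\bicong}\subseteq{\enveq}$ is obtained by testing each candidate equality $(M=N)$ with a context of the form $\IfThen{M}{N}{\Snd{a}{\cdot}}$ for a fresh name $a$ and observing the barb $\barb{a}$ (the paper's Lemma on this point, proved in the appendix, handles the converse direction --- that the barb can only arise from the test firing --- which is the delicate step you flag), and strictness is witnessed by two processes with identical null frames but different barbs. The paper's counterexample is $\Snd{a}{n}$ versus $\Snd{b}{n}$ rather than your $\Snd{a}{a}$ versus $\nil$, but both work for the same reason: $\frameof{\cdot}$ erases plain processes.
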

To see that observational equivalence implies static equivalence, note
that if $A$ and $B$ are observationally equivalent then 
$A \parop C$ and $B \parop C$ have the same barbs for every $C$ 
with $\fv(C)\subseteq\dom(A)$,
and that they are statically equivalent when 
$A \parop C$ and $B \parop C$ have the same barb $\barb{a}$ for every $C$
of the special form $\IfThen{M}{N}{\Snd{a}{n}}$, where $a$ does not occur 
in $A$ or $B$ and $\fv(C) \subseteq \dom(A)$. (See Lemma~\ref{lem:enveq-gen} in Appendix~\ref{app:twodirection}.)
The converse does not hold, as the following counter-example shows:
letting $A = \Snd{a}{n}$ and $B = \Snd{b}{n}$, we have $A \not \bicong B$, but 
$A \enveq B$ because $\frameof{A} = \frameof{B} = \nil$.

\subsection{Labelled Operational Semantics and Equivalence}\label{subsec:labopsem}

A labelled operational semantics extends the chemical semantics
of Section~\ref{sec:ope-sem}, enabling us to reason
about processes that interact with their context while keeping it implicit.
The labelled semantics defines a relation $A \ltr{\alpha} A'$,
where $\alpha$ is a label of one of the following forms:
\begin{itemize}
\item a label $\Rcv{N}{M}$, 
which corresponds to an input of $M$ on~$N$;
  
\item a label $\nu x.\Snd{N}{x}$, where $x$ is a variable that must not occur in $N$, which corresponds to an
  output of $x$ on $N$. 
\end{itemize}
The variable $x$ is bound in the label $\nu x.\Snd{N}{x}$, so we define 
the bound variables of labels
by $\bv(\Rcv{N}{M}) \eqdef \emptyset$
and $\bv(\nu x.\Snd{N}{x}) \eqdef \{ x \}$. The free variables of labels
are defined by $\fv(\Rcv{N}{M}) \eqdef \fv(N) \cup \fv(M)$
and $\fv(\nu x.\Snd{N}{x}) \eqdef \fv(N)$
(since $x$ does not occur in $N$ in the latter label).

In addition to the rules for structural equivalence and reduction of
Section~\ref{sec:calculus},
we adopt the following rules:
\[\begin{array}{lc}
  \brn{In} &
  \Rcv{N}{x}.P 
  \ltr{ \Rcv{N}{M} }  
  P \{\subst{M}{x}\}  
  \\[3ex]
  \brn{Out-Var} &
  \infrule{ 
    x \notin \fv(\Snd{N}{M}.P)}{
  \Snd{N}{M}.P \ltr{ \nu x.\Snd{N}{x} } P \parop \{\subst{M}{x}\}}
  \\[4ex]
  \brn{Scope} &
  \infrule{ 
    A \ltr{\alpha} A' \hspace{5ex} 
    u \mbox{ does not occur in } \alpha
    }{
    \nu u.A \ltr{\alpha} \nu u.A'}
  \\[3ex]
  \brn{Par} & \hspace{-3ex}
  \infrule{
    A \ltr{\alpha} A' \hspace{5ex} 
    \bv(\alpha)\cap \fv(B) = \emptyset }{
    A \parop B \ltr{\alpha} A'\parop B }
  \\[3ex]
  \brn{Struct} & \hspace{-3ex}
  \infrule{
    A \equiv B \hspace{5ex} B \ltr{\alpha} B' \hspace{5ex} B' \equiv A' }{
    A \ltr{\alpha} A'}
\end{array}\]

According to \rulename{In}, a term $M$ may be input.
On the other hand, \rulename{Out-Var} permits output for terms ``by reference'': 
a fresh variable is associated with the term in question and output.

\begin{figure}[t]
\eqns{
&& \nu k. \Snd{a}{\Const{enc}(M,k)}. \Snd{a}{k} . \Rcv{a}{z}.\IfThen z M {\Snd{c}{\youwin}} \\
& \ltr{\nu x.\Snd{a}{x}} &
\nu k. \big(\{\subst{\Const{enc}(M,k)}{x}\} \parop \Snd{a}{k}. \Rcv{a}{z}.\IfThen z M {\Snd{c}{\youwin}} \big)\\
& \ltr{\nu y.\Snd{a}{y}} & 
\nu k. \big(\{\subst{\Const{enc}(M,k)}{x}\} \parop \{\subst{k}{y}\} \parop 
\Rcv{a}{z}.\IfThen z M {\Snd{c}{\youwin}} \big)\\
& \ltr{\Rcv{a}{\Const {dec}(x,y)}} &
\nu k. \big(\{\subst{\Const{enc}(M,k)}{x}\} \parop \{\subst{k}{y}\} \parop
\IfThen 
{\Const {dec} (x,y)}
 M {\Snd{c}{\youwin}} \big)\\
& \rightarrow  &
\nu k. \big( \{\subst{\Const{enc}(M,k)}{x}\} \parop \{\subst{k}{y}\} \big) \parop
\Snd{c}{\youwin} 
}
\caption{Example transitions}\label{fig:oops}
\end{figure}

For example, using the signature and equations for symmetric encryption, 
and the new constant symbol $\youwin$,
we have the sequence of transitions of Figure~\ref{fig:oops}.
The first two transitions do not directly reveal the term $M$. However, 
they give enough information to the environment to compute $M$ as $\Const{dec}(x,y)$,
and to input it in the third transition.

The labelled operational semantics leads to an equivalence relation:
\begin{definition}\label{def:wkbisim}
A \emph{labelled bisimulation} is a
symmetric relation~$\rel$ 
  on closed extended proc\-esses such that $A \rel B$ implies:
  \begin{enumerate}
  \item $A \enveq B$;\label{ppone}
  \item if $A \rightarrow A'$ and $A'$ is closed, then $B
    \rightarrow^* B'$ and $A' \rel B'$ for some $B'$;\label{pptwo}
  \item if $A \ltr{\alpha} A'$, $A'$ is closed, and 
    $\fv(\alpha) \subseteq \dom(A)$,
    then $B \rightarrow^*
    \ltr{\alpha} \rightarrow^* B'$ and $A' \rel B'$ for some $B'$.\label{ppthree}
\end{enumerate}
{\em Labelled bisimilarity} ($\wkbisim$) is the largest such relation.
\end{definition}
Conditions~\ref{pptwo} and~\ref{ppthree} are standard;
condition~\ref{ppone}, which requires that 
bisimilar processes be statically equivalent, is necessary
for example in order to distinguish the frames $\someframe_0$ and $\someframe_2$ of
Section~\ref{subsec:static}.
As in Definition~\ref{def:bicong}, we explicitly require that $A'$ be closed and $\fv(\alpha) \subseteq \dom(A)$
in order to exclude transitions that introduce free unused variables.

Our main result is that this relation coincides with observational
equivalence. Although such results are fairly common in process calculi,
they are important and non-trivial.
\begin{theorem}\label{THM:OBSERVATIONAL-LABELED}
  Observational equivalence is labelled bisimilarity: ${\bicong} = {\wkbisim}$.
\end{theorem}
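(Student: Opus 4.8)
The plan is to prove the two inclusions ${\wkbisim}\subseteq{\bicong}$ and ${\bicong}\subseteq{\wkbisim}$ separately. For the first, it suffices to check that $\wkbisim$ is itself an observational bisimulation in the sense of Definition~\ref{def:bicong}. Symmetry is built into Definition~\ref{def:wkbisim}; clause~\eqref{bctwo} (matching of $\rightarrow^*$) follows by iterating clause~\eqref{pptwo} of labelled bisimilarity, with some routine bookkeeping to keep the intermediate processes closed; and clause~\eqref{bcone} (barb preservation) follows from clause~\eqref{ppthree}, once one observes that $A\barb a$ holds exactly when $A \rightarrow^* \ltr{\nu z.\Snd a z} A''$ for a fresh $z$ --- the label then has no free variables, so clause~\eqref{ppthree} applies and supplies a matching transition sequence for $B$, whence $B\barb a$. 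The substantive part is clause~\eqref{bcthree}, closure under closing evaluation contexts.

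For that, I would prove the stronger statement that
\[\mathcal{S} = \{(\CTX[A],\CTX[B]) : A \wkbisim B,\ \CTX \text{ a closing evaluation context for } A \text{ and } B\}\]
is a labelled bisimulation. Every evaluation context is, up to structural equivalence, a composition of the basic forms ${\_}\parop C$, $\nu u.{\_}$, and $\{\subst{M}{x}\}\parop{\_}$, so it is enough to show $\wkbisim$ is preserved by each of these and then propagate this by induction on the structure of $\CTX$. Restriction and active substitution are comparatively easy, using \rulename{Scope}, \rulename{Alias}, \rulename{Subst} and the definition of static equivalence. The crux is parallel composition: given $A\wkbisim B$ and a completing $C$, every move of $A\parop C$ must be matched by $B\parop C$. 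A move internal to $C$ is matched by the same move of $C$; a move internal to $A$ is matched via clauses~\eqref{pptwo}/\eqref{ppthree} of $\wkbisim$ for $A$; and a communication between $A$ and $C$ is decomposed into a labelled input or output of $A$ together with the mirror action of $C$ --- here clause~\eqref{ppthree} for $A$ yields a matching $\rightarrow^*\ltr{\alpha}\rightarrow^*$ for $B$, which recombined with the move of $C$ produces the required reduction of $B\parop C$; static equivalence (clause~\eqref{ppone}) is precisely what makes the recombination sound when the exchanged term mentions variables of $\dom(A)=\dom(B)$. One then checks the resulting pairs are again in $\mathcal S$, while clause~\eqref{ppone} for $\mathcal S$ is delivered by Lemma~\ref{LEM:INVARIANT-STATIC-EQ}.

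For the converse, ${\bicong}\subseteq{\wkbisim}$, the plan is to show that $\bicong$ restricted to closed processes is a labelled bisimulation. Clause~\eqref{ppone}, $A\bicong B\Rightarrow A\enveq B$, is the easy half of Lemma~\ref{lem:enveq2}: a failure of static equivalence is exposed by a test context $\IfThen{M}{N}{\Snd{a}{n}}$ with $a$ fresh, which would break a barb. Clause~\eqref{pptwo} is immediate since a single reduction is a special case of clause~\eqref{bctwo} of $\bicong$. Clause~\eqref{ppthree} is the real content: given $A\ltr{\alpha}A'$ with $A'$ closed and $\fv(\alpha)\subseteq\dom(A)$, I would build, for each shape of $\alpha$, a closing evaluation context $\CTX_\alpha$ that drives the transition and signals its occurrence on a fresh name --- for an input $\alpha=\Rcv{N}{M}$ take $\CTX_\alpha = {\_}\parop\Snd{N}{M}.\Snd{c}{c}$ (with the syntactic term $M$ allowed to use the exported variables), and for an output $\alpha=\nu x.\Snd{N}{x}$ a context that inputs on $N$, records the value, and then emits on a fresh $c$. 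Since $\bicong$ is a congruence for evaluation contexts and preserves barbs and reductions, $\CTX_\alpha[B]$ must eventually signal on $c$; the last step is an extraction lemma showing that any computation of $\CTX_\alpha[B]$ that does so factors as $B\rightarrow^*\ltr{\alpha}\rightarrow^*B'$ followed by context bookkeeping, with $\CTX_\alpha[A']\bicong\CTX_\alpha[B']$, from which stripping the context (again by congruence, using the shape of $\CTX_\alpha$) gives $A'\bicong B'$.

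I expect the main obstacle to be clause~\eqref{bcthree} in the first direction, i.e.\ closure of $\wkbisim$ under parallel composition with an arbitrary completing process. The delicate points are the faithful splitting of an $A$--$C$ communication into a labelled action of $A$ and its mirror in $C$ --- complicated by active substitutions and by channels being arbitrary terms rather than names, so that matching channels requires applying substitutions first --- and maintaining the freshness and closedness invariants throughout (recall that ``closed'' is not preserved by arbitrary $\equiv$), together with invoking static equivalence at the right moments. The analogous difficulty in the second direction is the extraction lemma: recovering a single labelled step of $B$ from an arbitrary interleaved reduction sequence of $\CTX_\alpha[B]$.
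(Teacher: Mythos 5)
Your first direction follows essentially the paper's route: the paper likewise characterizes barbs as labelled output transitions (Lemma~\ref{lem:caract-barb}), closes up reduction sequences by instantiating stray free variables with fresh names, and proves context closure (Lemma~\ref{lem:bisim-context-closed}) by normalizing every closing evaluation context to the single shape $\nu\vect u.(\hole\parop C)$ and showing that the induced relation is a labelled bisimulation up to $\equiv$; the hard case is, as you say, the splitting and recombination of an $A$--$C$ communication, with static equivalence used to transport $C$'s actions between the two frames.

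The gap is in the second direction, in your choice of characteristic contexts. You detect the transition $\alpha$ by the \emph{appearance} of a barb on a fresh name $c$: for $\alpha=\Rcv N M$ you take $\hole\parop\Snd{N}{M}.\Snd{c}{c}$, so that performing the input unguards $\Snd{c}{c}$. But Definition~\ref{def:bicong} only guarantees that, given $\CTX_\alpha[A]\rightarrow A'\parop\Snd{c}{c}$, there is \emph{some} $B''$ with $\CTX_\alpha[B]\rightarrow^* B''$ and $A'\parop\Snd{c}{c}\bicong B''$; nothing forces those reductions to include the communication on $N$. The matching residual may be $B_1\parop\Snd{N}{M}.\Snd{c}{c}$ with $B\rightarrow^* B_1$ and the test still intact: it satisfies $B''\barb{c}$ (a weak barb, realizable later) and can perfectly well be equivalent to $A'\parop\Snd{c}{c}$, yet it has not performed $\alpha$, so there is no $B'$ to pair with $A'$ and your extraction lemma has nothing to extract --- the bisimulation game never obliges the opponent to make progress toward a barb it merely \emph{can} exhibit. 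This is why the paper signals the transition by the \emph{disappearance} of a barb, a property faithfully transferred by clauses~\eqref{bcone} and~\eqref{bctwo}: it uses $T^p_{\Rcv N M}=\Snd p p\parop\Snd N M.\Rcv p x$ and $T^{p,q}_{\Res{x}\Snd N x}=\Snd p p\parop\Rcv N x.\Rcv p y.\Snd q x$, so that reaching a state with $\not\barb{p}$ is possible only after the tested action (and the internal synchronization on $p$) has actually fired; Lemmas~\ref{lem:caract-RcvNM} and~\ref{lem:caract-SndN} then perform the extraction, and Lemma~\ref{lem:extrusion} strips the residual $\Snd{q}{x}$ and re-binds the extruded variable in the output case --- a step your output context also needs before the residuals can be compared. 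The rest of your plan for this direction (static equivalence from Lemma~\ref{lem:enveq2}, reductions immediate, congruence to strip contexts) is sound once the tests are replaced by their disappearing-barb versions.
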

The proof of this theorem is outlined in Section~\ref{sec:bigpf} and completed in the appendix. 

The theorem implies that 
$\wkbisim$ is closed by application of closing evaluation contexts.  However,
unlike the definition of $\bicong$, the definition of ${\wkbisim}$
does not include a condition about contexts.  It therefore permits
simpler proofs.

In addition, labelled bisimilarity can probably be established via 
standard ``bisimulation up to context'' techniques~\cite{San98MFCS}, which 
enable useful on-the-fly simplifications in frames after output steps.
We do not develop the theory of ``up to context'' techniques, since 
we do not use them in this paper.

The following lemmas provide methods for simplifying frames:

\begin{lemma}[Alias elimination]\label{lem:upto-frame}
Let $A$ and $B$ be closed extended processes, $M$ be a term such that
$\fv(M) \subseteq \dom(A)$, and $x$ be a variable such that $x \notin \dom(A)$.
We have $A \wkbisim B$ if and only if 
\eqns{ 
  \smx \parop A & \wkbisim & \smx \parop B}%
\end{lemma}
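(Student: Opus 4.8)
The plan is to derive both implications from Theorem~\ref{THM:OBSERVATIONAL-LABELED} and, in particular, from its consequence that $\wkbisim$ (which equals $\bicong$) is closed under application of closing evaluation contexts. The one structural fact I need is the identity $A \equiv \nu x.(\smx \parop A)$: instantiating $A' := A$ in the derivation combining \rulename{Alias} and \rulename{Subst} displayed after the structural-equivalence rules gives $A\smx \equiv \nu x.(\smx \parop A)$, and the hypotheses $\fv(M) \subseteq \dom(A)$ and $x \notin \dom(A)$ give $x \notin \fv(M)$ (so that derivation applies) and $x \notin \fv(A) = \dom(A)$ (so $A\smx = A$), whence $A \equiv \nu x.(\smx \parop A)$; likewise $B \equiv \nu x.(\smx \parop B)$. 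I also use ${\equiv} \subseteq {\wkbisim}$ on closed processes, which follows from ${\equiv} \subseteq {\bicong}$ together with Theorem~\ref{THM:OBSERVATIONAL-LABELED}: structural equivalence, restricted to closed extended processes, is readily checked to be an observational bisimulation --- condition~\ref{bcone} of Definition~\ref{def:bicong} because barbs are $\equiv$-closed, condition~\ref{bctwo} because $\rightarrow$ is $\equiv$-closed, and condition~\ref{bcthree} because $\equiv$ is closed under evaluation contexts.

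For the ``only if'' direction, suppose $A \wkbisim B$. The context $\CTX[\hole] = \smx \parop \hole$ is an evaluation context that closes $A$ and $B$: it produces a well-formed closed extended process, since $x \notin \dom(A) = \dom(B)$ and $\fv(M) \subseteq \dom(A)$ guarantee the at-most-one-substitution discipline, cycle-freeness (no dependency cycle can pass through $x$, as $x$ occurs free neither in $A$ nor in $M$), and $\fv(\smx \parop A) = \dom(A) \cup \{x\} = \dom(\smx \parop A)$. Hence closure of $\wkbisim$ under closing evaluation contexts gives $\smx \parop A \wkbisim \smx \parop B$.

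For the ``if'' direction, suppose $\smx \parop A \wkbisim \smx \parop B$. The context $\nu x.\hole$ is an evaluation context, and it closes both sides, since $\dom(\nu x.(\smx \parop A)) = \dom(A) = \fv(A) = \fv(\nu x.(\smx \parop A))$ (using $x \notin \dom(A) = \fv(A)$ and $\fv(M) \subseteq \fv(A)$), and symmetrically for $B$. Closure of $\wkbisim$ under closing evaluation contexts then gives $\nu x.(\smx \parop A) \wkbisim \nu x.(\smx \parop B)$. Combining this with the identity $A \equiv \nu x.(\smx \parop A)$, its $B$-counterpart, the inclusion ${\equiv} \subseteq {\wkbisim}$, and transitivity of $\wkbisim$ (it is an equivalence relation, being the largest labelled bisimulation), we conclude $A \wkbisim B$.

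The only genuine work is the side-condition bookkeeping: checking that $\smx \parop A$, $\smx \parop B$, and their $\nu x$-closures are all legal closed extended processes (closedness, cycle-freeness, at most one substitution per variable), which is precisely where the hypotheses $\fv(M) \subseteq \dom(A)$ and $x \notin \dom(A)$ enter; there is no conceptual obstacle. If one wished to prove this lemma without appealing to Theorem~\ref{THM:OBSERVATIONAL-LABELED}, the ``if'' direction could instead be established directly by exhibiting $\{(A',B') : \smx \parop A' \wkbisim \smx \parop B',\ \fv(M) \subseteq \dom(A') = \dom(B'),\ x \notin \dom(A')\}$ as a labelled bisimulation, using rule \rulename{Struct} and the displayed identity to transport transitions across the removal of $\smx$; but the context-closure argument above is considerably shorter.
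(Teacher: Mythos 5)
Your proposal is correct and follows essentially the same route as the paper: both directions are obtained from closure of $\wkbisim$ under closing evaluation contexts, applied to the contexts $\smx \parop \hole$ and $\Res{x}\hole$ respectively, with the converse direction finished by the structural equivalence $A \equiv \Res{x}(\smx \parop A)$ (valid since $x$ is not free in $A$ or $M$). The extra bookkeeping you supply --- checking that the contexts are closing and that ${\equiv}$ restricted to closed processes is contained in $\wkbisim$ --- is accurate and merely makes explicit what the paper leaves implicit.
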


\begin{proof} Both directions follow from context closure of $\wkbisim$, for the contexts $\smx \parop \_$ and $\Res{x}{\_}$, respectively. 
In the converse direction, since $x$ is not free in $A$ or $B$, 
we have $A \equiv \Res{x}{(\smx \parop A)}$, $\Res{x}{(\smx \parop A)} \wkbisim \Res{x}{(\smx \parop B)}$,
and $\Res{x}{(\smx \parop B)} \equiv B$ hence $A \wkbisim B$. 
\end{proof}

\begin{lemma}[Name disclosure]\label{LEM:DISCLOSURE}
  Let $A$ and $B$ be closed extended processes and $x$ be a variable such that
  $x \notin \dom(A)$.
We have $A \wkbisim B$ if and only if
\eqns{
  \nu n.(\{\subst{n}{x}\}
  \parop A )
  & \wkbisim & 
  \nu n.(\{\subst{n}{x}\}
  \parop B )}%
\end{lemma}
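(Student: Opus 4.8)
The plan is to follow exactly the pattern of the proof of Lemma~\ref{lem:upto-frame} (Alias elimination): both implications will be consequences of the fact that $\wkbisim$ is closed by application of closing evaluation contexts --- which holds by Theorem~\ref{THM:OBSERVATIONAL-LABELED}, since ${\bicong} = {\wkbisim}$ and $\bicong$ is closed by application of closing evaluation contexts --- together with a couple of structural rearrangements. Throughout I would use that $x \notin \fv(A) \cup \fv(B)$ (because $A$ and $B$ are closed with $\dom(A) = \dom(B)$ and $x \notin \dom(A)$, so also $x \notin \dom(B)$ for $\{\subst{n}{x}\} \parop B$ to be well formed), and that the bound name $n$ may be taken fresh for $A$ and $B$: this is legitimate since $n$ occurs bound in $\nu n.(\{\subst{n}{x}\} \parop \hole)$ and we identify expressions up to renaming of bound names.

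For the left-to-right direction I would apply the evaluation context $\nu n.(\{\subst{n}{x}\} \parop \hole)$. The routine verification is that it closes both $A$ and $B$: $\nu n.(\{\subst{n}{x}\} \parop A)$ is a well-formed extended process (no clash of substitution domains, using $x \notin \dom(A)$, and cycle-free, since $n$ is a name), it is closed, and its domain is $\dom(A) \cup \{x\} = \dom(B) \cup \{x\}$. Context closure of $\wkbisim$ then turns $A \wkbisim B$ directly into $\nu n.(\{\subst{n}{x}\} \parop A) \wkbisim \nu n.(\{\subst{n}{x}\} \parop B)$.

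The converse is the only place needing a little thought: there is no context that removes the restriction on $n$, so instead I would apply the context $\nu x.\hole$ (which closes each side, with domain $\dom(A) = \dom(B)$), obtaining $\nu x.\nu n.(\{\subst{n}{x}\} \parop A) \wkbisim \nu x.\nu n.(\{\subst{n}{x}\} \parop B)$, and then collapse each side by structural equivalence. Indeed, by \rulename{New-C} the left side equals $\nu n.\nu x.(\{\subst{n}{x}\} \parop A)$; by the combination of \rulename{Alias} and \rulename{Subst} derived in Section~\ref{sec:ope-sem} we have $\nu x.(\{\subst{n}{x}\} \parop A) \equiv A\{\subst{n}{x}\} = A$ since $x \notin \fv(A)$; and $\nu n.A \equiv A$ by \rulename{New-Par} and \rulename{New-\mbox{$\nil$}} since $n \notin \fn(A)$. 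The same rewriting applies to $B$. Since $\wkbisim$ is closed under structural equivalence on both sides, $A \wkbisim B$ follows. I expect no genuine obstacle here: the statement is essentially a corollary of Theorem~\ref{THM:OBSERVATIONAL-LABELED}, and the only care points are (i) realizing that the converse is obtained by binding $x$ rather than by trying to re-bind $n$, (ii) the domain and well-formedness bookkeeping for the two contexts, and (iii) the harmless freshness assumption on $n$.
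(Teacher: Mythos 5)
The forward direction of your argument is fine: context closure of $\wkbisim$ does apply to the (capturing) context $\Res{n}(\{\subst{n}{x}\} \parop \hole)$, and this is exactly what the paper does. The gap is in the converse, and it sits precisely in the step you call a ``harmless freshness assumption on $n$''. The lemma imposes no hypothesis $n \notin \fn(A) \cup \fn(B)$, and it must not: as the discussion following the lemma makes clear, the point is that $n$ may be a name that $A$ and $B$ actually use, and that giving the context only indirect access to it through $x$ changes nothing. You cannot obtain freshness by $\alpha$-renaming, because in $\Res{n}(\{\subst{n}{x}\} \parop A)$ the binder captures the free occurrences of $n$ in $A$; renaming it to a fresh $n'$ produces $\Res{n'}(\{\subst{n'}{x}\} \parop A\{\subst{n'}{n}\})$, not $\Res{n'}(\{\subst{n'}{x}\} \parop A)$. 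Once $n \in \fn(A)$ is allowed, your collapse of the converse fails at the last step: applying $\Res{x}\hole$ and then \rulename{New-C} and \rulename{Alias}/\rulename{Subst} gives $\Res{x}\Res{n}(\{\subst{n}{x}\} \parop A) \equiv \Res{n}(A\{\subst{n}{x}\}) = \Res{n}A$, and $\Res{n}A \not\equiv A$ when $n \in \fn(A)$. You therefore only obtain $\Res{n}A \wkbisim \Res{n}B$, which is strictly weaker: for instance $\Res{n}\,\Snd{n}{n} \wkbisim \Res{n}(\Snd{n}{n}.\Snd{n}{n})$ (both are inert), while $\Snd{n}{n} \not\wkbisim \Snd{n}{n}.\Snd{n}{n}$. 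No evaluation context can remove the restriction on $n$, which is why the converse cannot be settled by a context argument at all; this is also why this lemma, unlike Lemma~\ref{lem:upto-frame}, does not admit a two-line proof.

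What the paper does instead is exhibit an explicit labelled bisimulation: the relation $\rel$ defined by $A \rel B$ iff $\Res{n}(\{\subst{n}{x}\} \parop A) \wkbisim \Res{n}(\{\subst{n}{x}\} \parop B)$ for some $x \notin \dom(A)$. Proving that $\rel$ is a labelled bisimulation requires (Appendix~\ref{app:disclosure}) dedicated lemmas showing that every reduction or labelled transition of $\Res{n}(\{\subst{n}{x}\} \parop A)$ --- whose labels refer to the restricted name only through $x$ --- arises from a transition of $A$ with label $\alpha\{\subst{n}{x}\}$, together with a separate argument that static equivalence of the wrapped processes implies $A \enveq B$. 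Your argument does correctly establish the special case $n \notin \fn(A) \cup \fn(B)$, but that special case does not yield the lemma.
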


\begin{proof} The direct implication follows from context closure of
  $\wkbisim$. Conversely, we show that the relation $\rel$ defined by
$A \rel B$ if and only if $A$ and $B$ are closed extended processes and
  $\Res{n}(\{\subst{n}{x}\} \parop A )
  \wkbisim  
  \Res{n}(\{\subst{n}{x}\} \parop B )$ for some $x \notin \dom(A)$ 
  is a labelled bisimulation. This proof is detailed in Appendix~\ref{app:disclosure}.
\end{proof}

In Lemma~\ref{lem:upto-frame}, the substitution $\smx$ can  affect only
the context, since $A$ and $B$ are closed. However, the lemma implies that the substitution
does not give or mask any information about $A$ and $B$ to the context.
In Lemma~\ref{LEM:DISCLOSURE}, the restriction on $n$ and the substitution
$\{\subst{n}{x}\}$ mean that the context can access $n$ only indirectly, through the
free variable $x$. 
Intuitively, the lemma says that indirect access is equivalent
to direct access in this case.

\sloppy 
Our labelled operational semantics contrasts with a more naive
semantics carried over from the pure pi calculus, with output labels
of the form $\nu \vect{u}.\Snd{N}{M}$ and rules that permit direct output of any
term, such as:
\[\begin{array}{@{}lc@{}}
  \brn{Out-Term} & 
  \Snd{N}{M}.P 
  \ltr{ \Snd{N}{M} }  
  P
  \\[3ex]
  \brn{Open} & 
  \infrule{ 
    A \ltr{\nu\vect{u}.\Snd{N}{M}} A'  {\hspace{5ex}}
    v \in \fv(M) \cup \fn(M) \setminus (\fv(N) \cup \fn(N) \cup \{\vect{u}\})
    }{
    \nu v. A \ltr{\nu v,\vect{u}.\Snd{N}{M}} A' }
\end{array}\]
These rules lead to a different, finer equivalence relation,
which for example would distinguish $\nu k, s. \Snd{a}{(k,s)}$
and $\nu k. \Snd{a}{(\Const{f}(k),\Const{g}(k))}$.
This equivalence relation is often inadequate in applications 
(as in~\cite[Section 5.2.1]{spi2four}), hence our definitions. 

\fussy 

We have also studied intermediately liberal rules for output, which
permit direct output of certain terms.  In particular, the rules of
the conference paper permit
direct output of channel names. That feature implies that it is not
necessary to export variables of channel types; as
Section~\ref{sec:bigpf} explains, this property is needed for
Theorem~\ref{THM:OBSERVATIONAL-LABELED} for those rules. That feature
makes little sense in the present calculus, in which arbitrary terms
may be used as channels, so we abandon it in the rules above.
Nevertheless, certain rules with more explicit labels can still be
helpful.  We explain those rules next.

\subsection{Making the Output Labels More Explicit}\label{SUBSEC:REFINING}

In the labelled operational semantics of
Section~\ref{subsec:labopsem}, the labels for outputs do not reveal
anything about the terms being output: those terms are represented by
fresh variables.
Often, however, more explicit labels can be convenient in reasoning about
protocols, and they do not cause harm as long as they only make explicit
information that is immediately available to the environment.
For instance, for the process
$\nu k. \Snd{a}{(\Const{Header},\Const{enc}(M,k))}$, the
label $\nu y.\Snd{a}{(\Const{Header},y)}$
is more informative than $\nu x.\Snd{a}{x}$. 
In this example, the environment could anyway observe that $x$ is a pair such that $\Const{fst}(x) = \Const{Header}$ and use $\Const{snd}(x)$ for $y$.
More generally, we rely on  the following definition
to characterize the information that the environment can derive.

\begin{definition}\label{defn:derived}
  Variables $\vect x$ 
  \emph{resolve to} $\vect M$ \emph{in} $A$ if and only if $A \eqstr \smxvect \parop
  \Res{\vect x}{A}$.  
  They are \emph{solvable} in $A$ if and only if they resolve to some terms in $A$.
\end{definition}
Hence, when variables $\vect x$ resolve to terms $\vect M$ in $A$, 
they are in  $\dom(A)$ and
we can erase the restriction of $\nu \vect x.A$ 
by applying the context $\smxvect  \parop \hole$
and by structural equivalence.
Intuitively, 
$A$ does not reveal more information than $\nu \vect x.A$, because the
environment can build the terms $\vect M$ and use them instead of~$\vect x$.

In general, when variables $\vect x$ are in $\dom(A)$, there exist 
$\vect{n}$, $\vect M$, and~$A'$ such that 
$A \equiv
 \nu\vect{n}.(\smxvect \parop A')$. 
If variables $\vect x$ resolve to $\vect M$ in~$A$, then $\vect{n}$ can be chosen empty, 
so that the terms $\vect M$ are not under restrictions.
The following lemma provides two reformulations of Definition~\ref{defn:derived}, including a converse to this observation. Its proof appears in Appendix~\ref{app:refining}.
\begin{lemma}
\label{lem:not-so-new} 
The following three properties are equivalent:
\begin{enumerate}
\item\label{equiv:resolve} the variables $\vect x$ resolve to $\vect M$ in $A$;
\item\label{equiv:caract-equiv} there exists $A'$ such that $A \equiv \smxvect \parop A'$;
\item\label{equiv:caract-phi} 
$(\vect x = \vect M)\frameof{A}$
and the substitution $\smxvect$ is cycle-free.
\end{enumerate}
\end{lemma}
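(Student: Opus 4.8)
The statement to prove is Lemma~\ref{lem:not-so-new}, the equivalence of three characterizations of when variables $\vect x$ resolve to $\vect M$ in $A$. I would prove it as a cycle of implications $(\ref{equiv:resolve}) \Rightarrow (\ref{equiv:caract-equiv}) \Rightarrow (\ref{equiv:caract-phi}) \Rightarrow (\ref{equiv:resolve})$, since this keeps each step short and uses the tools already developed.

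For $(\ref{equiv:resolve}) \Rightarrow (\ref{equiv:caract-equiv})$: by Definition~\ref{defn:derived}, resolving means $A \eqstr \smxvect \parop \Res{\vect x}{A}$, so we may simply take $A' = \Res{\vect x}{A}$ and we are done immediately (noting $\eqstr$ is $\equiv$). For $(\ref{equiv:caract-equiv}) \Rightarrow (\ref{equiv:caract-phi})$: from $A \equiv \smxvect \parop A'$, apply $\frameof{\cdot}$ to both sides; since structural equivalence is respected by taking frames (the map replacing plain processes by $\nil$ commutes with the structural rules), $\frameof{A} \equiv \smxvect \parop \frameof{A'}$. Because $\smxvect \parop \frameof{A'}$ is a frame containing the substitutions $\smxvect$ explicitly (not under a restriction on the $\vect x$, since $A$ is well-formed and $\vect x \in \dom(A)$), one reads off $(\vect x = \vect M)\frameof{A}$ directly from Definition~\ref{def:eqframe}: write $\frameof{A} \equiv \nu\vect n.\sigma$ where $\sigma$ extends $\smxvect$, choosing the $\vect n$ to avoid the names of $\vect M$ by renaming. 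Cycle-freeness of $\smxvect$ is inherited from the standing assumption that all substitutions appearing in extended processes (hence in $A$, hence in the frame) are cycle-free. For $(\ref{equiv:caract-phi}) \Rightarrow (\ref{equiv:resolve})$: this is the converse direction flagged in the text as a genuine claim. From $(\vect x = \vect M)\frameof{A}$ we have $\frameof{A} \equiv \nu\vect n.\sigma$ with $\vect x\sigma = \vect M\sigma$ and $\{\vect n\}$ disjoint from the names of $\vect M$. Since $\vect x \in \dom(A)$ and the $\vect n$ avoid $\vect M$'s names, one shows $\vect x\sigma = \vect M$ (not merely $\vect x\sigma = \vect M\sigma$) by using cycle-freeness: order things so that $\vect M$ does not mention the $\vect x$, and then $\vect M\sigma = \vect M$. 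The general normal form gives $A \equiv \nu\vect n.(\smxvect[\vect x\sigma/\vect x] \parop A'') = \nu\vect n.(\{\subst{\vect M}{\vect x}\} \parop A'')$ for some $A''$ with $\vect x \notin \dom(A'')$; since the $\vect n$ do not occur in $\vect M$, we can pull the restrictions inward past $\smxvect$ using \rulename{New-Par}, obtaining $A \equiv \smxvect \parop \nu\vect n.A''$, and then \rulename{Alias}/\rulename{Subst} together with $x \notin \dom$ of the residual give $\nu\vect n.A'' \equiv \Res{\vect x}{A}$, i.e.\ property~(\ref{equiv:resolve}).

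\textbf{Main obstacle.} The delicate step is $(\ref{equiv:caract-phi}) \Rightarrow (\ref{equiv:resolve})$, specifically upgrading the frame-level equality $\vect x\sigma = \vect M\sigma$ to the syntactic conclusion that the resolved terms are exactly $\vect M$ and that the restricted names $\vect n$ can be eliminated. The point is that $(\vect x = \vect M)\frameof{A}$ only asserts equality \emph{modulo the equational theory and the frame's substitution}, whereas resolving requires an honest structural identity $A \equiv \smxvect \parop A'$. Bridging this gap requires care with the choice of representative $\nu\vect n.\sigma$ (invoking Lemma~\ref{lem:equivframes}-style independence, referenced in the text after Definition~\ref{def:eqframe}), with the name-disjointness side condition to justify moving restrictions past $\smxvect$, and with the cycle-freeness hypothesis to ensure $\vect M\sigma = \vect M$ so that no equational rewriting is secretly needed. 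I expect the bookkeeping about which names are restricted and the repeated appeals to \rulename{New-Par}, \rulename{Alias}, and \rulename{Subst} to be where the real work lies; the argument is conceptually routine but the conditions must be tracked precisely, which is presumably why the full proof is deferred to Appendix~\ref{app:refining}.
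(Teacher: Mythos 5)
Your overall architecture --- a cycle of implications with \ref{equiv:resolve}$\Rightarrow$\ref{equiv:caract-equiv} trivial and \ref{equiv:caract-phi}$\Rightarrow$\ref{equiv:resolve} carrying the weight --- is the same as the paper's, but the step you rely on to close the cycle is false. You claim that cycle-freeness lets you ``order things so that $\vect M$ does not mention the $\vect x$'' and conclude $\vect M\sigma = \vect M$, hence $\vect x\sigma = \vect M$ syntactically. Cycle-freeness of $\smxvect$ only provides an ordering with $x_i \notin \fv(M_j)$ for $i \le j$; it says nothing about the \emph{other} variables of $\dom(A)$ occurring in $\vect M$, and Definition~\ref{def:eqframe} requires $\fv(\vect M) \subseteq \dom(\frameof{A})$, so unless $\vect M$ is ground we have $\vect M\sigma \neq \vect M$. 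The paper's own example defeats the claim: $y$ resolves to $\Const{snd}(z)$ in $\varphi = \nu k.\{ \subst{M}{x}, \subst{\Const{enc}(x,k)}{y}, \subst{(\Const{Header},y)}{z}\}$, where $y\sigma = \Const{enc}(M,k)$ is not the term $\Const{snd}(z)$, and $(\Const{snd}(z))\sigma$ equals $y\sigma$ only modulo $\Sigma$. Consequently your identification of $\{\subst{\vect x\sigma}{\vect x}\}$ with $\smxvect$ does not hold, and your parenthetical that ``no equational rewriting is secretly needed'' is exactly the opposite of the truth: the rule \rulename{Rewrite} is indispensable in this direction.

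The paper's proof keeps everything modulo $\Sigma$. It writes $A \equiv \Res{\vect n}(\{\subst{\vect M'}{\vect x}\} \parop \sigma \parop P)$ in a normalized form where the variables of $\dom(A)$ occur in none of $\vect M'$, the image of $\sigma$, or $P$; renames $\vect n$ away from $\fn(\vect M)$; derives $\vect M' = \vect M\{\subst{\vect M'}{\vect x}\}\sigma = \vect M\smxvect\sigma$ modulo $\Sigma$ (the second equality being where cycle-freeness is actually used, working along the cycle-free ordering); and only then converts $\{\subst{\vect M'}{\vect x}\}$ into $\smxvect$ inside the frame by combining \rulename{Subst} with \rulename{Rewrite}. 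Your name-handling at the end (\rulename{New-Par}, \rulename{Alias}) is fine once this is in place. Note also that a smaller instance of the same oversight appears in your sketch of \ref{equiv:caract-equiv}$\Rightarrow$\ref{equiv:caract-phi}: for a representative whose substitution literally ``extends $\smxvect$'' you get $\vect x\sigma = \vect M$ but not $\vect M\sigma = \vect M$, so the equality $\vect x\sigma = \vect M\sigma$ demanded by Definition~\ref{def:eqframe} must be checked on a normalized representative (or via Lemma~\ref{lem:equivframes}) rather than read off directly.
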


For example, using pairs and symmetric encryption, 
we let:
\eqns{
  \varphi & \eqdef & \nu k.\{ \subst{M}{x}, \subst{\Const{enc}(x,k)}{y},
  \subst{(\Const{Header},y)}{z}\}
}%
The variable $y$ resolves to $\Const{snd}(z)$ in $\varphi$, 
since 
\eqns{
  \varphi & \equiv & \{\subst{\Const{snd}(z)}{y}\} \parop \nu k.\{ \subst{M}{x}, 
  \subst{(\Const{Header},\Const{enc}(x,k))}{z}\}
}%
and $z$ resolves to $(\Const{Header},y)$ in $\varphi$, since 
\eqns{
   \varphi & \equiv & 
   \{ \subst{(\Const{Header},y)}{z} \} \parop 
   \nu k.\{ \subst{M}{x}, \subst{\Const{enc}(x,k)}{y} \}
}%
In contrast, $x$ is not always solvable in $\varphi$ (for instance, when $M$ is $k$).

A second lemma shows that Definition~\ref{defn:derived} is robust in the sense that it is preserved by static equivalence,
so a fortiori by labelled bisimilarity: 

\begin{lemma}\label{lem:enveq-preserves-resolve}
If $A \enveq B$ and 
$\vect x$ resolve to $\vect M$ in $A$, then 
$\vect x$ resolve to $\vect M$ in $B$.
\end{lemma}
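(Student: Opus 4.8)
The plan is to reduce the statement to the characterization in Lemma~\ref{lem:not-so-new}, namely property~\eqref{equiv:caract-phi}: the variables $\vect x$ resolve to $\vect M$ in $A$ if and only if $(\vect x = \vect M)\frameof{A}$ and the substitution $\smxvect$ is cycle-free. Since the cycle-freeness condition involves only $\vect x$ and $\vect M$, not $A$ or $B$, it is inherited automatically once we know it holds in $A$. So the whole argument comes down to showing that the relation $(\vect x = \vect M)\varphi$ is preserved when we pass from $\varphi = \frameof{A}$ to $\psi = \frameof{B}$, given that $\varphi \enveq \psi$.

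First I would unpack the hypothesis $A \enveq B$: by Definition~\ref{def:enveq} this means exactly $\frameof{A} \enveq \frameof{B}$ as frames, so it suffices to prove the claim at the level of frames. Write $\varphi = \frameof{A}$ and $\psi = \frameof{B}$. Next I would invoke Lemma~\ref{lem:not-so-new} (direction \eqref{equiv:resolve}$\Rightarrow$\eqref{equiv:caract-phi}) on $A$: from ``$\vect x$ resolve to $\vect M$ in $A$'' we obtain $(\vect x = \vect M)\varphi$ and that $\smxvect$ is cycle-free. Now I would apply the definition of static equivalence (Definition~\ref{def:enveq}, via Definition~\ref{def:eqframe}): we need $\dom(\varphi) = \dom(\psi)$ — which is part of $\varphi \enveq \psi$ — and that for all terms $M', N'$, $(M' = N')\varphi$ iff $(M' = N')\psi$. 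Note $\vect x$ resolving to $\vect M$ in $A$ forces $\{\vect x\} \subseteq \dom(\varphi) = \dom(\psi)$, and $\fv(\vect M) \subseteq \dom(A) = \dom(\varphi)$; these domain facts are what is needed so that ``$(\vect x = \vect M)\psi$'' is even a well-formed assertion. Applying static equivalence componentwise (each $(x_i = M_i)\varphi \Leftrightarrow (x_i = M_i)\psi$) yields $(\vect x = \vect M)\psi$, i.e.\ $(\vect x = \vect M)\frameof{B}$. Since $\smxvect$ is still cycle-free, Lemma~\ref{lem:not-so-new} (direction \eqref{equiv:caract-phi}$\Rightarrow$\eqref{equiv:resolve}) applied to $B$ gives that $\vect x$ resolve to $\vect M$ in $B$, which is the conclusion.

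The main thing to be careful about — rather than a deep obstacle — is the bookkeeping with domains and free-variable side conditions: static equivalence of frames is only meaningful when the relevant terms have their free variables inside the common domain, and one must check that ``$\vect x$ resolve to $\vect M$ in $A$'' already supplies $\{\vect x\} \cup \fv(\vect M) \subseteq \dom(A) = \dom(\frameof{A})$, so that after transporting along $\enveq$ (which preserves domains) the same inclusions hold for $B$. The other subtlety is that Definition~\ref{defn:derived} and Lemma~\ref{lem:not-so-new} are phrased for tuples $\vect x, \vect M$, whereas static equivalence is naturally checked on single equalities; but $(\vect x = \vect M)\varphi$ is just the conjunction of the $(x_i = M_i)\varphi$ (from Definition~\ref{def:eqframe}, using a single common representative $\Res{\vect n}{\sigma}$ whose restricted names avoid all the $\fn(M_i)$, possible by renaming), so the reduction to the single-equality case is routine. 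I expect no genuinely hard step; the proof is a short chain through the two cited lemmas plus the definition of static equivalence.
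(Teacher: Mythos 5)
Your proof is correct and follows essentially the same route as the paper: the paper's one-line argument is precisely that static equivalence preserves property~\eqref{equiv:caract-phi} of Lemma~\ref{lem:not-so-new}, and then both directions of that lemma give the result. Your additional bookkeeping about domains and the componentwise reduction of $(\vect x = \vect M)\varphi$ to single equalities just makes explicit what the paper leaves implicit.
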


\begin{proof}
Static equivalence preserves property~\ref{equiv:caract-phi} of Lemma~\ref{lem:not-so-new}, so we conclude by Lemma~\ref{lem:not-so-new}.
\end{proof}

We introduce an alternative semantics in which the 
rules permit composite terms in output labels but require
that every restricted variable that is exported be solvable.
In this semantics, the label $\alpha$ in the relation $A \ltr{\alpha}
A'$ ranges over the same input labels $\Rcv{N}{M}$ 
as in Section~\ref{subsec:labopsem}, and over
generalized output labels of the form
$\nu\vect{x}.\Snd{N}{M}$, 
where $\{\vect{x}\} \subseteq \fv(M)\setminus \fv(N)$.
The label $\nu\vect{x}.\Snd{N}{M}$ corresponds to an output of $M$ on $N$ that
reveals the variables~$\vect{x}$.
We retain the rules for structural equivalence and reduction, and
rules \rulename{In}, \rulename{Par}, and \rulename{Struct} of Section~\ref{subsec:labopsem}.
We also keep rule \rulename{Scope}, but only for labels with no
  extrusion, that is, for labels $\Rcv{N}{M}$ and $\Snd{N}{M}$.
This restriction is necessary because variables may not remain solvable
after the application of a context $\Res{u}\hole$.
As a replacement for the rule \rulename{Out-Var}, 
we use the rule \rulename{Out-Term} discussed in Section~\ref{subsec:labopsem} and:
\[\begin{array}{lc}
  \brn{Open-Var} & 
  \infrule{ 
   \begin{array}{l}
    A \ltr{\Snd{N}{M}} A' \hspace{5ex}
    \{ \vect x \} \subseteq \fv(M) \setminus \fv(N)
    \\ \vect x \mbox{ solvable in } 
    \{\subst{M}{z}\} \parop A' \text{ for some }z \notin \fv(A') \cup \{ \vect x \}
   \end{array}}{
    \nu \vect x. A \ltr{\Res{\vect{x}} \Snd{N}{M}} A' }
\end{array}\]
These rules are more liberal
than those of Section~\ref{subsec:labopsem}.
For instance, consider $A_1 = \nu k.\Snd{a}{(\Const f(k),\Const g(k))}$ and 
$A_2 =  \nu k. \Snd{a}{(k, \Const f(k))}$. 
With the rules of Section~\ref{subsec:labopsem}, we have: 
$$A_i \ltr{\nu z.\Snd{a}{z}} \nu
x,y. (\{\subst{(x,y)}{z}\} \parop \someframe_i)$$ where 
$\someframe_i$ is as in Section~\ref{subsec:static}. 
With the new rules, we also have: 
\begin{equation}
A_i \ltr{\nu x,y.\Snd{a}{(x,y)}} \someframe_i\label{Ai-new}
\end{equation}
Indeed, $A_i \equiv \Res{x,y}(\Snd a {(x,y)} \parop \someframe_i)$ and
the variables $x,y$ are solvable in $\{\subst{(x,y)}{z}\} \parop \someframe_i$
because $\{\subst{(x,y}{z}\} \parop \someframe_i \equiv \{\subst{\Const{fst}(z)}{x}, \subst{\Const{snd}(z)}{y}\} \parop \Res{x,y}(\{\subst{(x,y)}{z}\} \parop \someframe_i)$, so we derive:
\begin{align}
\Snd a {(x,y)} &\ltr{\Snd a {(x,y)}} \nil \tag*{by $\brn{Out-Term}$}\\
\Snd a {(x,y)} \parop \someframe_i &\ltr{\Snd a {(x,y)}} \someframe_i \tag*{by $\brn{Par}$ and $\brn{Struct}$}\\
\Res{x,y}(\Snd a {(x,y)} \parop \someframe_i) &\ltr{\Res{x,y}\Snd a {(x,y)}} \someframe_i \tag*{by $\brn{Open-Var}$}\\
A_i &\ltr{\nu x,y.\Snd{a}{(x,y)}} \someframe_i \tag*{by $\brn{Struct}$}
\end{align}
Transition~\eqref{Ai-new} is the most informative for $A_1$ since $x$ and $y$ behave like fresh, 
independent values in $\someframe_1$.
For $A_2$, we also have the more informative transition: $$A_2 \ltr{\nu
  x.\Snd{a}{(x,\Const f(x))}} \nu k. \{\subst{k}{x}\}$$
that reveals
the link between $x$ and $y$, but not that $x$ is a name.
As in this example, several output transitions are sometimes possible,
each transition leading to an extended process with a different frame. 
In reasoning (for example, in proving that a relation is included in labelled bisimilarity), 
it often suffices to consider any one of the transitions, 
so one may be chosen so as to limit the complexity of the resulting extended processes.

We name ``simple semantics'' the labelled semantics of Section~\ref{subsec:labopsem} and ``refined semantics'' the semantics of this section, and ``simple labels'' and ``refined labels'' the corresponding labels.
The next theorem states that the two labelled semantics yield the same notion of
equivalence. Thus, making the output labels more explicit only makes apparent
some of the information that is otherwise kept in the static, equational part of
labelled bisimilarity. 

\begin{theorem}\label{thm:relate-lts} 
  Let $\altbisim$ be the relation of labelled
  bisimilarity obtained by applying Definition~\ref{def:wkbisim} to
  the refined semantics. We have ${\wkbisim} = {\altbisim}$.
\end{theorem}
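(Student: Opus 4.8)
The plan is to relate the two labelled semantics transition-by-transition, and then transfer bisimulations back and forth. The key technical observation is that a refined output transition $A \ltr{\nu\vect x.\Snd N M} A'$ can be decomposed: by \rulename{Out-Term} together with \rulename{Open-Var} and the solvability hypothesis, it corresponds exactly to a simple transition $A \ltr{\nu z.\Snd N z} \nu\vect x.(\{\subst M z\} \parop A')$ followed by the structural rearrangement that makes $\vect x$ explicit, i.e.\ to $\smxvect \parop$-context manipulations justified by Definition~\ref{defn:derived} and Lemma~\ref{lem:not-so-new}. Conversely, a simple output transition $A \ltr{\nu z.\Snd N z} A''$ always arises from \rulename{Out-Term} applied to some $\Snd N M.P$, so $A'' \equiv \nu\vect u.(\{\subst M z\}\parop A')$, and by picking $\vect x = \fv(M)\setminus\fv(N)$ intersected with the variables that can be extruded, one obtains a matching refined transition (possibly after a structural step). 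So first I would prove a \emph{transition-correspondence lemma}: for all closed $A$, (i) every simple transition is matched, up to $\equiv$ and up to erasing/introducing resolvable variables, by a refined transition with a label making some solvable variables explicit, and conversely (ii) every refined output transition $A\ltr{\nu\vect x.\Snd N M}A'$ is matched by the simple transition $A\ltr{\nu z.\Snd N z}$ into an extended process that is $\equiv$ to $\nu\vect x.(\{\subst Mz\}\parop A')$; and the input rules \rulename{In}, \rulename{Par}, \rulename{Struct}, and internal reduction are literally shared, so those cases are immediate.

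Given that lemma, I would prove the two inclusions. For ${\wkbisim} \subseteq {\altbisim}$: take a simple bisimulation $\rel$ with $A\rel B$ and show it is also a refined bisimulation. Condition~\ref{ppone} (static equivalence) is unchanged, and internal reduction is shared, so only condition~\ref{ppthree} for a refined output label needs work. Given $A\ltr{\nu\vect x.\Snd NM}A'$, use part (ii) of the correspondence lemma to get $A\ltr{\nu z.\Snd Nz}A_0$ with $A_0\equiv\nu\vect x.(\{\subst Mz\}\parop A')$; since $\rel$ is a simple bisimulation there is $B\rightarrow^*\ltr{\nu z.\Snd Nz}\rightarrow^* B_0$ with $A_0\rel B_0$; by static equivalence $A_0\enveq B_0$, and since in $A_0$ the variables $\vect x$ resolve to $\vect M$, Lemma~\ref{lem:enveq-preserves-resolve} gives that $\vect x$ resolve to $\vect M$ in $B_0$ as well, so by part (i)/(ii) of the correspondence lemma $B$ performs a matching refined transition $B\rightarrow^*\ltr{\nu\vect x.\Snd NM}\rightarrow^* B'$ with $B_0\equiv\{\subst Mz\}\parop\nu\vect x.(\{\subst{(x,y)\ldots}{}\}\ldots)$ — concretely $B'$ is obtained from $B_0$ by the same structural manipulation that produced $A'$ from $A_0$, and one checks $A'\rel{}B'$ holds up to $\equiv$ (closing $\rel$ under $\equiv$ if necessary, which is harmless by Lemma~\ref{LEM:INVARIANT-STATIC-EQ} and the fact that $\equiv$ is itself a bisimulation). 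For the reverse inclusion ${\altbisim}\subseteq{\wkbisim}$: take a refined bisimulation $\rel$ and show it is a simple one; here the only non-trivial case is a simple output $A\ltr{\nu z.\Snd Nz}A'$, which by part (i) of the correspondence lemma factors through a refined transition $A\ltr{\nu\vect x.\Snd NM}A''$ with $A'\equiv\nu\vect x.(\{\subst Mz\}\parop A'')$; match it in $B$ via $\rel$ and reassemble.

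The main obstacle I expect is bookkeeping around the \emph{solvability side condition} in \rulename{Open-Var} and making sure it is exactly preserved when transferring transitions — i.e.\ showing that whenever the simple semantics can output $z$ and the resulting frame has $\vect x$ resolving to $\vect M$, the refined semantics can legally extrude precisely $\vect x$ with label using $\vect M$, and vice versa. This is where Lemma~\ref{lem:not-so-new} (the equivalence of ``resolve'' with the frame-level equality $(\vect x=\vect M)\frameof{A}$ plus cycle-freeness) and Lemma~\ref{lem:enveq-preserves-resolve} do the real work, letting me carry the side condition across both $\equiv$ and static equivalence. A secondary nuisance is that \rulename{Scope} is restricted in the refined semantics to non-extruding labels, so in the correspondence lemma one must push name-restrictions through using \rulename{Open-Var} directly rather than \rulename{Scope}; this is routine but must be done carefully so that the chosen representative $\nu\vect n.\sigma$ of each frame keeps the extruded variables outside all restrictions, as guaranteed by the last sentence before Lemma~\ref{lem:not-so-new}. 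Everything else — the input, parallel, structural, and reduction cases — is a direct transcription since those rules are common to both semantics.
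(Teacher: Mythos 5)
Your overall architecture -- an output-correspondence lemma relating $A \ltr{\nu\vect x.\Snd{N}{M}} A'$ to $A \ltr{\nu z.\Snd{N}{z}} \nu\vect x.(\{\subst{M}{z}\}\parop A')$ plus solvability, and then two inclusions -- is exactly the paper's (Lemmas~\ref{lem:output-correspondence} and~\ref{lem:single-var-output-correspondence}). But there is a genuine gap in the hard direction ${\wkbisim}\subseteq{\altbisim}$. You propose to take an \emph{arbitrary} simple bisimulation $\rel$ and show it is a refined bisimulation, ``closing $\rel$ under $\equiv$ if necessary.'' That is not enough. After matching the simple transition you have $A^\circ = \nu\vect x.(\{\subst{M}{z}\}\parop A') \rel B^\circ$, and the refined residuals are $A' \equiv \CTX[A^\circ]$ and $B' = \CTX[B^\circ]$ for the evaluation context $\CTX[\hole] = \nu z.(\{\subst{\vect M}{\vect x}\}\parop\hole)$. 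To conclude $A'$ and $B'$ are related you need $\CTX[A^\circ]$ related to $\CTX[B^\circ]$, i.e.\ closure of the relation under a context that introduces an active substitution and a restriction -- a property an arbitrary labelled bisimulation does not have, and which structural equivalence alone cannot supply. The paper avoids this by proving the refined-bisimulation property for $\wkbisim$ itself and invoking Lemma~\ref{lem:bisim-context-closed} (equivalently Theorem~\ref{THM:OBSERVATIONAL-LABELED}) to get $A' \wkbisim \CTX[B^\circ]$. Your argument is repairable by restricting to $\rel = \wkbisim$ and citing context closure explicitly, but as stated the claim about arbitrary $\rel$ is false. You also still owe the structural identity $B_2 \equiv \Res{\vect x}(\{\subst{M}{z}\}\parop\CTX[B_2])$ on the $B$ side, which the paper derives from the frame-level equality $(z = M\smxvect)\frameof{B_2}$ obtained via static equivalence; your sketch gestures at this but the derivation is where the work lies.

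A secondary problem is your treatment of the easy direction ${\altbisim}\subseteq{\wkbisim}$. You propose to ``factor'' a simple output $A\ltr{\nu z.\Snd{N}{z}}A'$ through some refined transition $\nu\vect x.\Snd{N}{M}$ and then ``reassemble'' -- but the reassembly on $B$'s side needs the same context-closure step, reintroducing the problem in the direction where it should not arise. The clean observation (the paper's Lemma~\ref{lem:single-var-output-correspondence}) is that the simple label $\nu x.\Snd{N}{x}$ is itself a legal refined label ($x$ trivially resolves to the fresh $z$ in $\{\subst{x}{z}\}\parop A'$), and the transition relations for that label coincide in the two semantics; hence any refined bisimulation is literally a simple bisimulation with no reassembly needed.
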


The proof of Theorem~\ref{thm:relate-lts} relies on the next two 
lemmas, which relate simple and refined output transitions.

\begin{lemma}\label{lem:output-correspondence} 
  $A \ltr{ \nu\vect{x}.\Snd{N}{M}} A'$ if and only if, for some $z$ that does
  not occur in any of $A$, $A'$, $\vect{x}$, $N$, and $M$,
  $A \ltr{\nu z.\Snd{N}{z} } \nu\vect{x}. (\{\subst{M}{z}\} \parop A')$,
  $\{\vect x\} \subseteq \fv(M) \setminus \fv(N)$,
  and the variables $\vect{x}$ are solvable in $\{\subst{M}{z}\} \parop A'$. 
\end{lemma}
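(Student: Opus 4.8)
The plan is to unfold both labelled transition relations using the \rulename{Struct} rule and reduce the claim to a purely structural statement about frames and solvability. Recall that in the refined semantics a transition $A \ltr{\nu\vect x.\Snd N M} A'$ is derived, up to \rulename{Struct}, by an application of \rulename{Open-Var} to a transition $B \ltr{\Snd N M} B'$ obtained from \rulename{Out-Term}, \rulename{Par}, \rulename{Scope} (on the non-extruding label $\Snd N M$), and \rulename{Struct}; while in the simple semantics the transition $A \ltr{\nu z.\Snd N z} C$ comes, up to \rulename{Struct}, from \rulename{Out-Var}. So the first step is to observe that, modulo structural equivalence, $A \ltr{\nu z.\Snd N z} C$ holds iff $A \equiv \Snd N M.P \parop A_0$ for suitable $M$, $P$, $A_0$ with appropriate freshness of $z$, in which case $C \equiv \nu z.(\{\subst M z\} \parop P \parop A_0)$; this is essentially the computation already displayed after rule \rulename{Comm} in Section~\ref{sec:ope-sem}, combined with the fact that \rulename{Out-Var} followed by \rulename{Par}/\rulename{Scope}/\rulename{Struct} exactly produces such shapes.

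Next I would handle the two directions. For the forward direction, assume $A \ltr{\nu\vect x.\Snd N M} A'$ in the refined semantics. By the analysis above (now of \rulename{Out-Term} rather than \rulename{Out-Var}, plus \rulename{Open-Var}), we get $A \equiv \Res{\vect x}(\Snd N M.P \parop A_0)$ with $\{\vect x\} \subseteq \fv(M)\setminus\fv(N)$, $A' \equiv \Res{\vect x}\text{-free part, i.e. } P \parop A_0$, and the side condition that $\vect x$ be solvable in $\{\subst M z\} \parop A'$ for some fresh $z$. Introducing a fresh $z$ and using \rulename{Alias}/\rulename{Subst} (as in the combined-rules derivation displayed in the excerpt, $A\{\subst M x\} \equiv \nu x.(\{\subst M x\} \parop A)$), I can rewrite $A \equiv \Res z(\Snd N z.P\{\subst z{\_}\}... \parop \{\subst M z\} \parop A_0)$ — more precisely, since $z$ is fresh I can factor the output term through $z$, obtaining $A \ltr{\nu z.\Snd N z} \nu\vect x.(\{\subst M z\} \parop A')$ by \rulename{Out-Var}, \rulename{Par}, \rulename{Scope}, and \rulename{Struct}. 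The remaining conjuncts, $\{\vect x\} \subseteq \fv(M)\setminus\fv(N)$ and solvability of $\vect x$ in $\{\subst M z\} \parop A'$, are exactly the side conditions of \rulename{Open-Var} that were used, so they are available. For the converse, assume the three right-hand conditions. Apply the shape analysis to $A \ltr{\nu z.\Snd N z} \nu\vect x.(\{\subst M z\} \parop A')$ to get $A \equiv \Res z(\{\subst M z\} \parop \Snd N z.P \parop A_0)$-style decomposition; since $z$ is fresh and does not occur in $M$, substituting $M$ for $z$ gives $A \equiv \Res{\vect x}(\Snd N M.P \parop A_0)$ up to \rulename{Struct}. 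Then \rulename{Out-Term} gives $\Snd N M.P \ltr{\Snd N M} P$, \rulename{Par} adds $A_0$, and \rulename{Open-Var} — whose hypotheses $\{\vect x\}\subseteq\fv(M)\setminus\fv(N)$ and solvability of $\vect x$ in $\{\subst M z\}\parop A'$ are precisely what we assumed — yields $A \ltr{\nu\vect x.\Snd N M} A'$, finishing via \rulename{Struct}.

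The main obstacle I expect is bookkeeping around freshness of the auxiliary variable $z$ and around which restricted names/variables sit where: one must be careful that $z$ does not occur in $A$, $A'$, $\vect x$, $N$, or $M$, that factoring the term $M$ through $z$ does not accidentally capture or free any name, and that the structural-equivalence manipulations moving $\{\subst M z\}$ in and out past $\Res{\vect x}$ respect the side conditions of \rulename{New-Par}. It is also worth noting that solvability of $\vect x$ in $\{\subst M z\} \parop A'$ is stated ``for some $z$'' in \rulename{Open-Var} but the lemma fixes a particular fresh $z$; here one uses Lemma~\ref{lem:enveq-preserves-resolve} together with the fact that the choice of fresh $z$ is immaterial up to renaming, so that the ``some $z$'' and ``this $z$'' formulations agree. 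Apart from these hygiene points, the argument is a routine induction-free unfolding of the two sets of rules, and I would present it as such rather than spelling out every structural-equivalence step.
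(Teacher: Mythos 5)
Your overall strategy---decompose both transitions into canonical shapes and shuttle the definition of $\vect x$ in and out---is the right one, but the place where you declare the work routine is exactly where the proof lives, and in the converse direction your key step does not go through as stated. The shape analysis of $A \ltr{\nu z.\Snd{N}{z}} \nu\vect{x}.(\{\subst{M}{z}\}\parop A')$ tells you that $A$, up to $\equiv$, contains an output prefix $\Snd{N'}{M'}$ of \emph{some} term $M'$, and that the \emph{target} is structurally equivalent to $\nu\vect{x}.(\{\subst{M}{z}\}\parop A')$; it does not give you $A \equiv \Res{z}(\{\subst{M}{z}\}\parop\Snd{N}{z}.P\parop A_0)$. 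Indeed that equivalence cannot hold in general: $\{\vect x\}\subseteq\fv(M)$ while $\vect x$ are restricted in (or absent from) $A$, since they are extruded by the transition, so writing $\{\subst{M}{z}\}$ unguarded in $A$ would make $\vect x$ free in $A$. To apply \rulename{Open-Var} you must first exhibit $A\equiv\Res{\vect x}B$ with $B\ltr{\Snd{N}{M}}A'$, i.e.\ an output of the literal term $M$ outside the restriction on $\vect x$, and the only way to get there is to \emph{use} the solvability hypothesis: from $\vect x$ resolving to $\vect M$ in $\{\subst{M}{z}\}\parop A'$ one obtains (via Lemma~\ref{lem:not-so-new}) the frame equality $(z=M)$ and a copy of $\{\subst{\vect M}{\vect x}\}$ that can be placed in parallel outside all restrictions, after which \rulename{Subst}/\rulename{Rewrite} turn $\Snd{N}{z}$ into $\Snd{N}{M}$. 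In your plan solvability only appears as a side condition to be checked at the final \rulename{Open-Var} step; ``since $z$ is fresh, substituting $M$ for $z$'' is not a valid justification for the rewriting.

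Two smaller points. First, the canonical-shape claims you rely on in both directions are themselves nontrivial lemmas in this calculus, because structural equivalence includes \rulename{Subst} and \rulename{Rewrite} and so can change the syntactic output term; the paper proves the forward direction by induction on the derivation (strengthened to ``for all $z$ not occurring in the derivation'' so that the induction closes), with an explicit computation in the \rulename{Par} case showing that solvability of $\vect x$ in $\{\subst{M}{z}\}\parop B'$ transfers to $\{\subst{M}{z}\}\parop(B'\parop C)$, and proves the converse by decomposing through partial normal forms. Second, the mismatch between the ``for some $z$'' of \rulename{Open-Var} and the particular $z$ of the lemma is resolved by that same strengthened induction statement, not by Lemma~\ref{lem:enveq-preserves-resolve} alone. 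None of this is fatal to your plan, but as written the converse direction has a genuine gap and the forward direction defers its substance to an unproved characterization.
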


In Lemma~\ref{lem:output-correspondence}, 
the transition $A \ltr{ \nu\vect{x}.\Snd{N}{M}} A'$
is performed in the refined semantics,
while the transition $A \ltr{ \nu z.\Snd{N}{z} } \nu\vect{x}. (\{\subst{M}{z}\} \parop A')$
is performed in the simple semantics.
However, Lemma~\ref{lem:single-var-output-correspondence} below shows that the
choice of the semantics does not matter. 
Lemma~\ref{lem:single-var-output-correspondence} is a consequence of
Lemma~\ref{lem:output-correspondence}.

\begin{lemma}\label{lem:single-var-output-correspondence}
$A \ltr{ \nu x.\Snd{N}{x}} A'$ in the refined semantics 
if and only if 
$A \ltr{ \nu x.\Snd{N}{x}} A'$ in the simple semantics.
\end{lemma}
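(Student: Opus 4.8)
The plan is to obtain this lemma as an immediate specialization of Lemma~\ref{lem:output-correspondence}, taking the singleton tuple $\vect{x} = x$ and the term $M = x$. With that instance the side condition $\{\vect{x}\} \subseteq \fv(M) \setminus \fv(N)$ becomes exactly $x \notin \fv(N)$, which is the standing requirement on the label $\nu x.\Snd{N}{x}$ in both semantics, so $\nu x.\Snd{N}{x}$ is a legal refined label. Lemma~\ref{lem:output-correspondence} then reads: $A \ltr{\nu x.\Snd{N}{x}} A'$ holds in the refined semantics if and only if, for some $z$ not occurring in $A$, $A'$ or $N$, the \emph{simple} transition $A \ltr{\nu z.\Snd{N}{z}} \nu x.(\{\subst{x}{z}\} \parop A')$ holds and $x$ is solvable in $\{\subst{x}{z}\} \parop A'$. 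So the statement reduces to two subclaims: (i) the simple transition $A \ltr{\nu z.\Snd{N}{z}} \nu x.(\{\subst{x}{z}\} \parop A')$ is equivalent to the simple transition $A \ltr{\nu x.\Snd{N}{x}} A'$; and (ii) whenever either of these holds, $x$ is automatically solvable in $\{\subst{x}{z}\} \parop A'$, so that side condition can be dropped.

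The heart of the argument is the structural identity
\[ \nu x.(\{\subst{x}{z}\} \parop A') \;\equiv\; A'\{\subst{z}{x}\}, \]
valid whenever $x \in \dom(A')$ and $z$ does not occur in $A'$, where $A'\{\subst{z}{x}\}$ is $A'$ with its exported variable $x$ consistently renamed to the fresh variable $z$. I would prove it by writing $A'$ in the usual normal form $\nu\vect{u}.(\{\subst{M_0}{x}\} \parop \{\subst{\vect{L}}{\vect{y}}\} \parop Q)$ with $x$ exported (so $x \notin \fv(M_0)$ by cycle-freeness), pulling the $\vect{u}$ out by \rulename{New-Par}, using \rulename{Subst} with $\{\subst{M_0}{x}\}$ to turn $\{\subst{x}{z}\}$ into $\{\subst{M_0}{z}\}$ and to replace $x$ by $M_0$ throughout the remainder, discarding $\nu x.\{\subst{M_0}{x}\}$ by \rulename{Alias}, and pushing the restrictions back; a symmetric use of \rulename{Subst} with $\{\subst{M_0}{z}\}$ brings $A'\{\subst{z}{x}\}$ to the same form. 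Given the identity, subclaim (i) follows: the simple transition relation is closed under $\equiv$ (rule \rulename{Struct}), so $A \ltr{\nu z.\Snd{N}{z}} \nu x.(\{\subst{x}{z}\} \parop A')$ is the same as $A \ltr{\nu z.\Snd{N}{z}} A'\{\subst{z}{x}\}$, and the latter is obtained from $A \ltr{\nu x.\Snd{N}{x}} A'$ merely by renaming the bound variable of the output label (and the corresponding exported variable of the target) from $x$ to the fresh $z$ — a renaming the simple semantics is invariant under, since the variable introduced by \rulename{Out-Var} may be chosen arbitrarily and the side conditions in \rulename{Scope} and \rulename{Par} are preserved by it.

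For subclaim (ii), note that $x \in \dom(A')$ in both directions: it is immediate from the shape of \rulename{Out-Var}, \rulename{Scope}, \rulename{Par}, \rulename{Struct} when we start from the simple transition, and it is part of the solvability clause already supplied by Lemma~\ref{lem:output-correspondence} when we start from the refined one. Since $x \in \dom(A')$, the process $\{\subst{x}{z}\} \parop A'$ contains an active substitution $\{\subst{M_0}{x}\}$ outside any restriction on $x$, and \rulename{Subst} rewrites its frame to a representative whose underlying substitution sends both $x$ and $z$ to $M_0$; hence $(x = z)\,\frameof{\{\subst{x}{z}\} \parop A'}$, and since $\{\subst{z}{x}\}$ is trivially cycle-free, Lemma~\ref{lem:not-so-new} yields that $x$ resolves to $z$ there, so $x$ is solvable. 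The main obstacle I anticipate is purely the bookkeeping around the structural identity: because $\{\subst{x}{z}\}$ binds the exported variable $z$ to the \emph{restricted} variable $x$ — an unusual configuration — one cannot eliminate the restriction directly but must first resolve $x$ through its defining substitution inside $A'$, and keeping the domains, freshness and cycle-freeness conditions straight through these rewrites is the fiddly part; nothing deeper than the rules of $\equiv$ and Lemmas~\ref{lem:output-correspondence} and~\ref{lem:not-so-new} is needed.
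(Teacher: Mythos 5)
Your proposal is correct and follows essentially the same route as the paper's proof: both directions reduce to Lemma~\ref{lem:output-correspondence} with $M=x$, both hinge on rewriting $\nu x.(\{\subst{x}{z}\}\parop A')$ into a copy of $A'$ with the exported variable $x$ renamed to the fresh $z$ (the paper does this by expanding $A'\equiv\Res{\vect n}(\{\subst{M}{x}\}\parop A'')$), and both verify that $x$ resolves to $z$ in $\{\subst{x}{z}\}\parop A'$. Your only deviations are cosmetic — stating the renaming step as a named structural identity and routing the solvability check through Lemma~\ref{lem:not-so-new} rather than computing the resolution directly.
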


Theorem~\ref{thm:relate-lts} is then proved as follows. By Lemma~\ref{lem:single-var-output-correspondence}, 
$\altbisim$ is a simple-labelled bisimulation, and thus ${\altbisim} \subseteq {\wkbisim}$. 
Conversely, to show that $\wkbisim$ is a refined-labelled bisimulation,
it suffices to prove its bisimulation property for any refined output label.
This proof,
which relies on Lemma~\ref{lem:output-correspondence}, 
and the proofs of Lemmas~\ref{lem:not-so-new}, \ref{lem:output-correspondence},
and~\ref{lem:single-var-output-correspondence} are detailed in 
Appendix~\ref{app:refining}.

\subsection{Proving Theorem~\ref{THM:OBSERVATIONAL-LABELED} (${\bicong} = {\wkbisim}$)}\label{sec:bigpf}

A claim of Theorem~\ref{THM:OBSERVATIONAL-LABELED} appears, without
proof, in the conference version of this paper, for the calculus as
presented in that version.  There, the channels in labels cannot be variables.
The claim neglects to include a corresponding
hypothesis that exported variables must not be of channel type.  This
hypothesis is implicitly assumed, as it holds trivially for plain
processes and is maintained, as an invariant, by output transitions.
Without it, the two extended processes $\Res{a}(\{\subst{a}{x}\})$
and $\Res{a}(\{\subst{a}{x}\} \parop \Snd{a}{N})$ (where the exported
variable $x$ stands for the channel $a$) would constitute a counterexample:
they would not be observationally equivalent but they would be bisimilar in the
labelled semantics, since neither could make a labelled transition.
Delaune et al.~\cite{DelauneKremerRyan07,Delaune10}
included the hypothesis in their study of
symbolic bisimulation.
Avik Chaudhuri (private communication, 2007) pointed out
this gap in the statement of the theorem, and 
Bengtson et al.~\cite{bengtson:psi} discussed it as motivation for their work on alternative
calculi, the psi calculi, with a more abstract treatment of terms and a
mechanized metatheory.  
On the other hand, 
Liu~\cite{Liu11} presented a proof of the theorem, making
explicit the necessary hypothesis. Her proof demonstrated that the
theorem was basically right---no radical changes or new languages were
needed.  More recently, Liu and others have also developed an
extension of the proof for a stateful variant of the applied pi
calculus~\cite{Arapinis:post2014}.

Theorem~\ref{THM:OBSERVATIONAL-LABELED}, in its present form, does not
require that hypothesis because of some of the details of the calculus
as we define it in this paper. Specifically, the labelled semantics
allows variables that stand for channels in labels. Therefore, extended processes such as $\Res{a}(\{\subst{a}{x}\} \parop 
\Snd{a}{N})$ can make labelled transitions.

This section outlines the proof of
Theorem~\ref{THM:OBSERVATIONAL-LABELED}.  The appendix gives further
details, including all proofs that this section omits.  Those details
are fairly long and technical. In particular, they rely on a
definition of ``partial normal forms'' for extended processes, which
are designed to simplify reasoning about reductions. (In an extended
process $A \parop B$, the frame of $A$ may affect~$B$ and vice versa,
so $A$ and $B$ may not reduce independently of each other; partial
normal forms are designed to simplify the analysis of reductions in
such situations.)
We believe that these partial normal
forms may be useful in other proofs on the applied pi calculus.
In this section, we omit further specifics on
partial normal forms, since they are not essential to understanding
our main arguments. 

The proof of Theorem~\ref{THM:OBSERVATIONAL-LABELED} starts with a
fairly traditional definition of ``labelled bisimulation up to~$\equiv$'':

\begin{definition}\label{def:weakbisimstruct}
A relation~$\rel$ on closed extended proc\-esses is a \emph{labelled
bisimulation up to $\equiv$} if and only if $\rel$ is symmetric and
$A \rel B$ implies: 
\begin{enumerate} 

\item $A \enveq B$;\label{ppweakbisimstruct-one} 

\item if $A \rightarrow A'$ and $A'$
is closed, then $B \rightarrow^* B'$ and $A' \equiv \rel \equiv B'$
for some closed $B'$;\label{ppweakbisimstruct-two} 

\item if $A \ltr{\alpha}
A'$, $A'$ is closed, and
    $\fv(\alpha) \subseteq \dom(A)$,
    then $B \rightarrow^*
    \ltr{\alpha} \rightarrow^* B'$ and $A' \equiv \rel \equiv B'$ for some closed $B'$.\label{ppweakbisimstruct-three}

\end{enumerate}
\end{definition}

This definition implies that, if $\rel$ is a labelled bisimulation up to $\equiv$, 
then $\equiv \rel\equiv$ restricted to closed processes is
a labelled bisimulation (since, by Lemma~\ref{LEM:INVARIANT-STATIC-EQ}, 
static equivalence is invariant by structural equivalence).

We use the definition to establish the following lemma:
\begin{lemma}\label{lem:bisim-context-closed}
$\wkbisim$ is closed by application of closing evaluation contexts.
\end{lemma}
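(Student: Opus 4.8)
The plan is to exhibit, for each way of growing the hole into an evaluation context, a relation that is a \emph{labelled bisimulation up to $\equiv$} in the sense of Definition~\ref{def:weakbisimstruct}, and then to invoke the remark following that definition: since static equivalence is invariant under $\equiv$ (Lemma~\ref{LEM:INVARIANT-STATIC-EQ}), such a relation $\mathcal{R}$ yields an ordinary labelled bisimulation $\equiv\mathcal{R}\equiv$, which is therefore contained in $\wkbisim$; this gives $\CTX[A] \wkbisim \CTX[B]$ whenever $A \wkbisim B$ and $\CTX[\hole]$ is a closing evaluation context. Every evaluation context is a finite composition of the three elementary forms $\hole \parop C$, $\Res n \hole$ and $\Res x \hole$ (the form $C \parop \hole$ being brought to $\hole \parop C$ by \rulename{Par-C}), so it suffices to establish closure under these three and conclude by induction on the structure of the context. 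Throughout, one uses that $A \wkbisim B$ implies $\dom(A) = \dom(B)$ (by clause~\ref{ppone} of Definition~\ref{def:wkbisim} and the definition of $\enveq$), so that $\CTX[A]$ is closed exactly when $\CTX[B]$ is and the candidate relations below are well formed and symmetric.

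For the two restriction cases I would take $\mathcal{R} = \{(\Res u A, \Res u B) : A \wkbisim B \text{ and } \Res u A \text{ closed}\}$. Clause~\ref{ppweakbisimstruct-one} is immediate from Lemma~\ref{LEM:INVARIANT-STATIC-EQ}. For clauses~\ref{ppweakbisimstruct-two} and~\ref{ppweakbisimstruct-three}, observe that in the simple semantics a restriction can only be crossed by \rulename{Scope} (there is no \rulename{Open} rule), so a reduction or labelled transition of $\Res u A$ is, after $\alpha$-converting $u$ to be fresh and normalising by $\equiv$, a reduction or transition of $A$ whose label does not mention $u$; the side condition $\fv(\alpha) \subseteq \dom(\Res u A) \subseteq \dom(A)$ makes the matching clause of $A \wkbisim B$ applicable, and the resulting $B$-side computation is pushed back under $\Res u$ by \rulename{Scope}.

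The substantive case is $\hole \parop C$, for which I would take $\mathcal{R} = \{(A \parop C,\ B \parop C) : A \wkbisim B \text{ and } A \parop C \text{ closed}\}$; clause~\ref{ppweakbisimstruct-one} follows once more from Lemma~\ref{LEM:INVARIANT-STATIC-EQ}, applied with the context $\hole \parop C$. For clauses~\ref{ppweakbisimstruct-two} and~\ref{ppweakbisimstruct-three}, I would classify a step of $A \parop C$ into: a step internal to $C$, matched by the same step on the $B$-side (keeping the pair in $\mathcal{R}$); a step of $A$ alone --- an internal reduction, matched through clause~\ref{pptwo} of $A \wkbisim B$, or a labelled transition obtained through \rulename{Par}, matched through clause~\ref{ppthree} (here one uses that, $A$ being closed, the channel component of any label $A$ emits has its free variables in $\dom(A)$), with the $B$-side internal steps carried through $C$ again by \rulename{Par}; a labelled transition contributed by $C$ alone, matched directly; or a \rulename{Comm} synchronisation between $A$ and $C$. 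Organising this case analysis cleanly is exactly what the partial normal forms of Section~\ref{sec:bigpf} are for: because the frame of $A$ can influence $C$ and conversely, one cannot split a redex ``locally'' without first bringing $A \parop C$ to a suitable form.

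The synchronisation case is the main obstacle. When $A$ is on the output side, one uses \rulename{Alias}, \rulename{Subst} and \rulename{New-Par} to rewrite $A \parop C$ so that the transmitted message is carried by a fresh variable, exhibits the simple output transition $A \ltr{\nu x.\Snd{N}{x}} A'$ with $\fv(N) \subseteq \dom(A)$, applies clause~\ref{ppthree} of $A \wkbisim B$ to obtain $B \rightarrow^* \ltr{\nu x.\Snd{N}{x}} \rightarrow^* B'$ with $A' \wkbisim B'$, and recombines this with the input half supplied by $C$ to get the required simulation on the $B$-side. When instead $C$ is on the output side, the term it sends may mention names restricted in $C$ and variables exported by $C$; the remedy is to pull $C$'s restrictions to the top (reducing to the restriction case already handled) and to use \rulename{Subst} to rewrite, modulo $C$'s frame, the term received by $A$ into one whose free variables lie in $\dom(A)$, after which the dual argument goes through and the original label is re-derived on the $B$-side. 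The ``up to $\equiv$'' formulation of Definition~\ref{def:weakbisimstruct} is what legitimises all of these rearrangements, and the hypotheses ``$A'$ closed'' and ``$\fv(\alpha) \subseteq \dom(\cdot)$'' there confine the proof obligations to precisely the situations in which the corresponding clause of $A \wkbisim B$ is available; in particular, because the present calculus permits variables as channels in labels (as Section~\ref{sec:bigpf} discusses), no extra hypothesis on exported channel variables is required here.
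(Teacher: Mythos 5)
Your overall architecture---reduce to contexts built from $\hole \parop C$ and restrictions, exhibit a labelled bisimulation up to $\equiv$, analyse steps of the composite via partial normal forms, and convert a \rulename{Comm} between the hole and $C$ into a labelled transition of $A$ matched through clause~\ref{ppthree} of Definition~\ref{def:wkbisim}---is the paper's. But there is a genuine gap in your treatment of the steps performed by $C$ itself. You dismiss ``a step internal to $C$'' as ``matched by the same step on the $B$-side'' and a transition ``contributed by $C$ alone'' as ``matched directly''. This is not automatic: in $A \parop C$ the process $C$ runs under $A$'s frame (whose active substitutions reach into $C$ by \rulename{Subst}), while in $B \parop C$ it runs under $B$'s frame, and these are different substitutions that are only \emph{statically} equivalent. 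Whether a conditional in $C$ takes its then-branch, whether an internal \rulename{Comm} inside $C$ fires (its two channel terms may mention variables of $\dom(A)$), and on which channel $C$ can move, all depend on the frame. Proving that $C$ under $A$'s frame and $C$ under $B$'s frame make corresponding steps, and that the residual of $C$ can again be written as a single term-with-variables over $\dom(A)=\dom(B)$ so that the successor pair stays in the candidate relation, is exactly the content of Lemmas~\ref{lem:equiv-enveq}, \ref{lem:red-enveq}, and~\ref{lem:ltr-enveq} in Appendix~\ref{app:exploiting}; it is the heart of the proof, and your proposal is silent on it even though you correctly observe that the frame of $A$ influences $C$.

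A second, fixable defect: your elementary-context decomposition does not close in the parallel case. After a \rulename{Comm} in which $A$ sends $M$ to $C$ (and $B$ sends some $M'$), the residuals are $\Res{x}(A' \parop C')$ and $\Res{x}(B' \parop C')$, or, after eliminating the restriction by \rulename{Subst} and \rulename{Alias}, $A' \parop C'\{\subst{M}{x}\}$ and $B' \parop C'\{\subst{M'}{x}\}$ with $M \neq M'$; neither form lies in $\equiv\mathcal{R}\equiv$ for your relation $\{(A\parop C,\ B\parop C)\}$, and you cannot appeal to the restriction case already proved, since you are still in the middle of showing that this relation is a bisimulation. The remedy is to take the composite relation $\{(\nuc{A},\ \nuc{B}) \mid A \wkbisim B\}$ from the outset, as the paper does, keeping the transmitted message abstracted by an exported variable so that the context part remains syntactically identical on both sides.
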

In the proof of this lemma (which is given in Appendix~\ref{app:onedirection}), we 
show that we can restrict attention to contexts of the form $\nuc{\hole}$.
  To every relation $\rel$ on closed extended processes, we
  associate a relation ${\rel'} = \{ (\nuc A, \nuc B) \mid A \rel B, \allowbreak \nuc{\hole} \text{ closing for $A$ and $B$} \}$.
  We prove that, if $\rel$ is a labelled bisimulation,
  then $\rel'$ is a labelled bisimulation up to $\equiv$, hence ${\rel}
  \subseteq {\equiv\rel'\equiv} \subseteq {\wkbisim}$.
  For ${\rel} = {\wkbisim}$, this property entails that $\wkbisim$ is
  closed by application of evaluation contexts $\nuc{\hole}$.

Another lemma characterizes barbs in terms of labelled transitions:
\begin{lemma}\label{lem:caract-barb}
Let $A$ be a closed extended process.
We have $A \barb{a}$ if and only if $A \rightarrow^* \ltr{\nu x.\Snd{a}{x}} A'$ for some fresh variable $x$ and some $A'$.
\end{lemma}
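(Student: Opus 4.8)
The plan is to prove the two implications separately, exploiting the correspondence between the barb $A\barb a$ and a sequence of internal reductions leading to a process of the form $\CTX[\Snd a M.P]$, and then matching that output with an application of rule \rulename{Out-Var}. For the ``if'' direction, suppose $A \rightarrow^* \ltr{\nu x.\Snd a x} A'$. An inspection of the labelled rules shows that a transition with label $\nu x.\Snd a x$ can only be derived (via \rulename{Out-Var}, closed under \rulename{Scope}, \rulename{Par}, and \rulename{Struct}) from a process that is structurally equivalent to an evaluation context filled with an output $\Snd a M.P$, where the context does not bind~$a$; more precisely, one shows by induction on the derivation of the transition that the source process is $\equiv \CTX[\Snd a M.P]$ for some evaluation context $\CTX$ not binding~$a$. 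Combined with $A \rightarrow^*$ reaching that process, this is exactly the definition of $A\barb a$ (unfolding $\rightarrow^*\equiv$ absorbs the final $\equiv$).

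For the ``only if'' direction, assume $A\barb a$, i.e.\ $A \rightarrow^* \equiv \CTX[\Snd a M.P]$ for some evaluation context $\CTX[\hole]$ not binding~$a$. Since internal reduction $\rightarrow$ is closed under $\equiv$, we may absorb the trailing $\equiv$ and assume $A \rightarrow^* B$ with $B \equiv \CTX[\Snd a M.P]$. Pick a fresh variable $x$ not occurring anywhere in $B$ (in particular $x \notin \fv(\Snd a M.P)$ and $x$ not in $N = a$, which is a name, so the side condition of \rulename{Out-Var} is met). Then $\Snd a M.P \ltr{\nu x.\Snd a x} P \parop \{\subst{M}{x}\}$ by \rulename{Out-Var}. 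Since $\CTX[\hole]$ is an evaluation context --- built from $\parop$ and $\nu$ --- not binding~$a$, repeated applications of \rulename{Par} and \rulename{Scope} lift this transition through $\CTX$, yielding $\CTX[\Snd a M.P] \ltr{\nu x.\Snd a x} \CTX'$ for the corresponding residual $\CTX'$; here we use that $x$ is fresh so that \rulename{Par}'s side condition $\bv(\alpha)\cap\fv(B') = \emptyset$ holds, and that $a$ does not occur in the label $\nu x.\Snd a x$ except as the channel, so \rulename{Scope} applies for the restrictions in~$\CTX$ (which do not bind~$a$). Finally \rulename{Struct}, using $B \equiv \CTX[\Snd a M.P]$, gives $B \ltr{\nu x.\Snd a x} A'$ for some $A' \equiv \CTX'$, and hence $A \rightarrow^* \ltr{\nu x.\Snd a x} A'$ as desired.

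The main obstacle is the bookkeeping in the ``if'' direction: arguing that \emph{every} way of deriving the label $\nu x.\Snd a x$ ultimately comes from an \rulename{Out-Var} applied to a genuine output on channel~$a$, sitting inside an evaluation context that does not bind~$a$. This requires a careful induction on the transition derivation, handling the \rulename{Scope} case (the restricted name or variable $u$ does not occur in $\alpha$, so in particular $u \neq a$, and it can be pushed into the surrounding evaluation context) and the \rulename{Struct} case (where one invokes that structural equivalence preserves the shape ``evaluation context around an output on~$a$'', which is itself a small lemma about $\equiv$). The \rulename{Par} case is routine, and \rulename{In} and the reduction rules cannot produce an output label. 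Once this structural characterization of $\nu x.\Snd a x$-transitions is in hand, both directions close immediately.
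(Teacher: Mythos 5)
Your proposal is correct and follows essentially the same route as the paper: both directions reduce to the equivalence between ``$A \equiv \CTX[\Snd{a}{M}.P]$ for an evaluation context not binding $a$'' and ``$A \ltr{\nu x.\Snd{a}{x}} A'$ for fresh $x$'', proved by \brn{Out-Var}/\brn{Par}/\brn{Scope}/\brn{Struct} in one direction and by induction on the transition derivation in the other. The only cosmetic difference is that in the \brn{Struct} case you invoke a ``preservation of shape'' property of $\equiv$, whereas since the inductive claim is itself stated up to $\equiv$, transitivity of structural equivalence suffices there.
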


We then obtain Lemma~\ref{lem:th1-first-dir}, which is one direction of Theorem~\ref{THM:OBSERVATIONAL-LABELED}:
\begin{lemma} \label{lem:th1-first-dir}
${\wkbisim} \subseteq {\bicong}$.
\end{lemma}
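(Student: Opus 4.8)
The plan is to show that labelled bisimilarity $\wkbisim$ satisfies the three clauses in the definition of observational bisimulation (Definition~\ref{def:bicong}), since $\bicong$ is the largest such relation. Fix $A \wkbisim B$; recall that $A$ and $B$ are closed and have the same domain (by clause~\ref{ppone} of Definition~\ref{def:wkbisim}, $A \enveq B$, and static equivalence requires equal domains). I must check: (1) $A\barb{a}$ implies $B\barb{a}$; (2) $A \rightarrow^* A'$ with $A'$ closed implies $B \rightarrow^* B'$ with $A' \wkbisim B'$; (3) $\CTX[A] \wkbisim \CTX[B]$ for every closing evaluation context $\CTX[\hole]$. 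Each of these is essentially already in hand from the lemmas assembled in Section~\ref{sec:bigpf}.

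For clause~(1), I would invoke Lemma~\ref{lem:caract-barb}: $A\barb{a}$ holds iff $A \rightarrow^* \ltr{\nu x.\Snd{a}{x}} A'$ for some fresh $x$ and some $A'$. Taking a fresh $x$ so that $\fv(\nu x.\Snd{a}{x}) = \emptyset \subseteq \dom(A)$, clauses~\ref{pptwo} and~\ref{ppthree} of the definition of labelled bisimulation let me match this weak transition from $A$ by a corresponding weak transition $B \rightarrow^* \ltr{\nu x.\Snd{a}{x}} \rightarrow^* B'$, and in particular $B \rightarrow^* \ltr{\nu x.\Snd{a}{x}} B''$ for some intermediate $B''$; applying Lemma~\ref{lem:caract-barb} in the other direction gives $B\barb{a}$. (One should be slightly careful that the weak matching produces an actual $\nu x.\Snd{a}{x}$-labelled step somewhere in the chain, which it does.) For clause~(2), I would proceed by induction on the length of the reduction sequence $A \rightarrow^* A'$: the zero-length case is trivial, and for a single step $A \rightarrow A_1$ with $A_1$ not necessarily closed, I note that $A_1 \equiv A_1'$ for some closed $A_1'$ (structural equivalence can only add free unused variables via \rulename{Alias}, which can be removed; alternatively restrict to the closed representative), use clause~\ref{pptwo} to get $B \rightarrow^* B_1 \wkbisim A_1'$, and iterate. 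Since $\rightarrow$ is closed under $\equiv$, the bookkeeping about closedness is routine and I would not belabor it.

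Clause~(3), context closure, is exactly Lemma~\ref{lem:bisim-context-closed}, which has already been established (in Appendix~\ref{app:onedirection}) via the "bisimulation up to $\equiv$" machinery and the reduction to contexts of the form $\nuc{\hole}$. So this clause costs nothing here. Having verified all three clauses, $\wkbisim$ is an observational bisimulation, hence $\wkbisim \subseteq \bicong$, which is the statement of Lemma~\ref{lem:th1-first-dir}.

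The main obstacle is not any single clause but the interface between the labelled and unlabelled worlds: one must be sure that matching a \emph{weak} labelled transition (with $\rightarrow^*$ padding on both sides, as in clause~\ref{ppthree}) genuinely yields a weak \emph{barb}, and that the "closedness" side conditions—$A'$ closed and $\fv(\alpha)\subseteq\dom(A)$—are actually met when we apply the definition, given that structural equivalence and internal reduction can transiently introduce free unused variables. This is precisely the subtlety that Lemma~\ref{lem:caract-barb} is designed to absorb for clause~(1), and that the induction in clause~(2) must respect; once those two lemmas are in place, the proof of Lemma~\ref{lem:th1-first-dir} is short, with the real work having been front-loaded into Lemma~\ref{lem:bisim-context-closed} and Lemma~\ref{lem:caract-barb}.
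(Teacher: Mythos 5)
Your proposal is correct and follows essentially the same route as the paper: verify the three clauses of Definition~\ref{def:bicong}, using Lemma~\ref{lem:caract-barb} together with Properties~\ref{pptwo} and~\ref{ppthree} for barbs, and Lemma~\ref{lem:bisim-context-closed} for context closure. The only (minor) divergence is in clause~(2): where you repair non-closed intermediate reducts one step at a time via structural equivalence—a claim that itself needs justification—the paper instantiates all variables in $\bigcup_{i}(\fv(A_i)\setminus\dom(A_i))$ with fresh names at once, turning the whole trace into one between closed processes before applying Property~\ref{pptwo}, which is exactly the ``alternative'' you mention in passing.
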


\begin{proof} 
We show that $\wkbisim$ satisfies the three properties of Definition~\ref{def:bicong}, as follows.
\begin{enumerate}
\item To show that $\wkbisim$ preserves barbs, we apply Lemma~\ref{lem:caract-barb} and use Properties~\ref{pptwo} and~\ref{ppthree} of Definition~\ref{def:wkbisim}. 

\item Suppose that $A \wkbisim B$, $A \rightarrow^* A'$, and $A'$ is closed.
Given the trace $A = A_0 \rightarrow A_1 \rightarrow \ldots \rightarrow A_n = A'$, we instantiate all variables in $\bigcup_{i=0}^n (\fv(A_i)\setminus \dom(A_i))$ with fresh names. This instantiation yields a trace in which all intermediate processes are closed. We can then conclude that $B \rightarrow^* B'$ and $A' \wkbisim B'$ for some $B'$ by Property~\ref{pptwo} of Definition~\ref{def:wkbisim}. 

\item $\wkbisim$ is closed
  by application of closing evaluation contexts by Lemma~\ref{lem:bisim-context-closed}.
\end{enumerate}
Moreover, $\wkbisim$ is symmetric. Since $\bicong$ is the largest relation that satisfies these properties, we obtain ${\wkbisim} \subseteq {\bicong}$.
\end{proof}

The other direction of Theorem~\ref{THM:OBSERVATIONAL-LABELED} relies on two lemmas that characterize input and output transitions.
The first lemma characterizes inputs $\Rcv{N}{M}$ using processes of the
form $T^p_{\Rcv N M} \eqdef \Snd p p \parop \Snd N M.\Rcv p x$. Here, the use
of $p$ as a message in $\Snd p p$ is arbitrary: we could equally use processes
of the form $\Snd{p}{M'}$ for any term $M'$.

\begin{lemma}\label{lem:caract-RcvNM}
Let $A$ be a closed extended process.
Let $N$ and $M$ be terms such that $\fv(\Snd{N}{M}) \subseteq \dom(A)$. Let $p$ be a name that does not occur in $A$, $M$, and $N$.
\begin{enumerate}
\item\label{prop:caract-RcvNM-1} If $A \ltr{\Rcv{N}{M}} A'$ and $p$ does not occur in $A'$, 
then $A \parop T^p_{\Rcv N M} \rightarrow \rightarrow A'$ and $A' \not\barb{p}$.
\item\label{prop:caract-RcvNM-2} If $A \parop T^p_{\Rcv N M} \rightarrow^* A'$ and
$A' \not\barb{p}$, then
$A \rightarrow^* \ltr{\Rcv{N}{M}} \rightarrow^* A'$.
\end{enumerate}
\end{lemma}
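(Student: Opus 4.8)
The plan is to connect the internal reductions of $A \parop T^p_{\Rcv{N}{M}}$ with the labelled input transitions of $A$, exploiting that $p$ occurs in none of $A$, $M$, $N$, so that $T^p_{\Rcv{N}{M}}$ can only ever perform two ``scripted'' synchronizations: an output of $M$ on $N$ absorbed by (a descendant of) $A$, followed by the token communication on $p$. The workhorse, used for both parts, is a correspondence between the labelled rule \rulename{In} and communication: (a) if $A \ltr{\Rcv{N}{M}} A'$ then $A \parop \Snd{N}{M}.R \rightarrow A' \parop R$ for every plain process $R$ whose bound names avoid those of $A$; and conversely (b) if $A \parop \Snd{N}{M}.R \rightarrow C$ by a communication that consumes the displayed output prefix, then $C \equiv A' \parop R$ for some $A'$ with $A \ltr{\Rcv{N}{M}} A'$. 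Direction (a) follows by writing $A \equiv \nu\vect{u}.(\Rcv{N}{x}.P \parop C')$---the shape forced by \rulename{In}, \rulename{Par}, \rulename{Scope}, \rulename{Struct}, with $A' \equiv \nu\vect{u}.(P\{\subst{M}{x}\} \parop C')$ and $\vect{u}$ avoiding $\fn(N) \cup \fn(M)$---then $\alpha$-renaming $\vect{u}$ away from $R$, pulling $R$ under $\nu\vect{u}$ by \rulename{New-Par}, and applying, within the evaluation context $\nu\vect{u}.(\hole \parop C')$, the term-message form of \rulename{Comm} derived in Section~\ref{sec:ope-sem} (which routes $M$ through a fresh alias using \rulename{Alias} and \rulename{Subst}). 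Direction (b) unwinds the closure of $\rightarrow$ under $\equiv$ and evaluation contexts: a communication consuming $\Snd{N}{M}$ is a \rulename{Comm} on a channel equationally equal to $N$ between that prefix and an input occurring inside $A$, and reading the labelled rules backwards over the resulting decomposition of $A$ produces a transition with label exactly $\Rcv{N}{M}$.

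For part~\ref{prop:caract-RcvNM-1}, assume $A \ltr{\Rcv{N}{M}} A'$ with $p$ not occurring in $A'$. Applying (a) with $R = \Rcv{p}{x}$, and keeping $\Snd{p}{p}$ along in the evaluation context (it is untouched by a communication on $N \neq p$), gives $A \parop T^p_{\Rcv{N}{M}} \rightarrow (A' \parop \Rcv{p}{x}) \parop \Snd{p}{p}$; the second reduction is the token exchange $\Snd{p}{p} \parop \Rcv{p}{x} \rightarrow \nil$, again by the term-message form of \rulename{Comm}, so $A \parop T^p_{\Rcv{N}{M}} \rightarrow \rightarrow A'$. Finally $A' \not\barb{p}$, since $p \notin \fn(A')$ and neither $\rightarrow$ nor $\equiv$ can introduce a free name, so no process reachable from $A'$ displays an output on $p$.

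For part~\ref{prop:caract-RcvNM-2}, the plan is to dissect the given sequence $A \parop T^p_{\Rcv{N}{M}} \rightarrow^* A'$. Since $p$ is absent from $A$, $M$, $N$ and free names are preserved along reductions, no descendant of $A$ ever has $p$ free; hence the only output on $p$ that can arise is the original $\Snd{p}{p}$, the only input on $p$ is the original $\Rcv{p}{x}$, and $\Rcv{p}{x}$ can become active only after its guard $\Snd{N}{M}$ has been consumed by a communication on $N$ with an input inside (a descendant of) $A$. As the source has the barb $\barb{p}$ (from $\Snd{p}{p}$) whereas $A' \not\barb{p}$, both scripted synchronizations must occur. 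Using the inertness of $\Snd{p}{p}$ and $\Rcv{p}{x}$ to permute reductions past them, one reorganises the sequence into $A \rightarrow^* A_1$ (reductions internal to $A$), the communication on $N$, the communication on $p$, then $A_2 \rightarrow^* A'$; direction (b) identifies the communication on $N$ with a step $A_1 \ltr{\Rcv{N}{M}} A_2$, and the communication on $p$ only discards $\nil \parop \nil$. Chaining yields $A \rightarrow^* \ltr{\Rcv{N}{M}} \rightarrow^* A'$.

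The main obstacle is this reorganisation in part~\ref{prop:caract-RcvNM-2}: in an extended process the frames of parallel components are coupled, so disentangling the $A$-internal reductions from the two scripted synchronizations and commuting them past the inert components is not mere bookkeeping. This is precisely where the ``partial normal form'' machinery of Section~\ref{sec:bigpf} is meant to help, by letting one carry a normalised reduction sequence while tracking a small state of $T^p_{\Rcv{N}{M}}$ (token present / guard fired / token consumed) and stripping off the $A$-internal steps at each stage. A secondary subtlety is direction (b) of the correspondence: matching a concrete \rulename{Comm} on channel $N$ with the literal label $\Rcv{N}{M}$ when active substitutions in the surrounding context may rewrite the channel that the matching input presents.
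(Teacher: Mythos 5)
Your proposal is correct and follows essentially the same route as the paper: part~\ref{prop:caract-RcvNM-1} by extracting the input prefix from the derivation of $A \ltr{\Rcv{N}{M}} A'$ and performing the two communications explicitly, and part~\ref{prop:caract-RcvNM-2} by showing via the barb $\barb{p}$ that both scripted synchronizations must fire and then disentangling them from the $A$-internal steps using the partial-normal-form machinery (the paper does this by induction on the trace length rather than by permuting reductions, but that is the same technique). The two subtleties you flag---the disentangling, and matching the literal label $\Rcv{N}{M}$ against a channel rewritten by the frame---are exactly the points the paper's proof spends its effort on (via Lemmas~\ref{lem:decomp-redP-closed}, \ref{lem:decomp-ltrP}, \ref{lem:red-no-fn}, and \ref{lem:caract-ltrP}).
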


The second lemma characterizes outputs $\Res{x}\Snd N x$ using processes
of the form $T^{p,q}_{\Res{x}\Snd N x} \eqdef\; \Snd p p \parop \Rcv{N}{x}. \Rcv p y.\Snd q x$.

\begin{lemma}\label{lem:caract-SndN}
Let $A$ be a closed extended process.
Let $N$ be a term such that $\fv(N) \subseteq \dom(A)$. Let $p$ and $q$ be names that do not occur in $A$ and $N$.
\begin{enumerate}
\item\label{prop:caract-SndN-1} If $A \ltr{\Res{x}\Snd{N}{x}} A'$ and $p$ and $q$ do not occur in $A'$, 
then $A \parop T^{p,q}_{\Res{x}\Snd N x}  \rightarrow\rightarrow \Res{x}(A' \parop \Snd{q}{x})$,
$\Res{x}(A' \parop \Snd{q}{x}) \not\barb{p}$, and $x \notin \dom(A)$.
\item\label{prop:caract-SndN-2} Let $x$ be a variable such that $x \notin \dom(A)$.
If $A \parop T^{p,q}_{\Res{x}\Snd N x} \rightarrow^* A''$ and $A'' \not\barb{p}$, then
$A \rightarrow^*\allowbreak \ltr{\Res{x}\Snd{N}{x}}\allowbreak \rightarrow^* A'$
and $A'' \equiv \Res{x}(A' \parop \Snd{q}{x})$ for some $A'$.
\end{enumerate}
\end{lemma}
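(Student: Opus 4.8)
The plan is to prove the two parts by the method used for the analogous input characterization (Lemma~\ref{lem:caract-RcvNM}), exploiting that $p$ and $q$, not occurring in $A$, can interact only with the test process $T^{p,q}_{\Res x \Snd N x}=\Snd p p \parop \Rcv N x.\Rcv p y.\Snd q x$. Throughout I will use the shape of output transitions: inspecting the rules \rulename{Out-Var}, \rulename{Scope}, \rulename{Par}, \rulename{Struct} (together with \rulename{Alias}, \rulename{Subst}, \rulename{Rewrite} to absorb $\Sigma$-equality of the channel into $\equiv$), one sees that $A \ltr{\nu x.\Snd N x} A'$ holds iff, for a suitably fresh $x$, one can write $A \equiv \nu\vect u.(\Snd N M.P \parop A_1)$ and $A' \equiv \nu\vect u.(P \parop \{\subst M x\} \parop A_1)$, where $\{\vect u\}$ can be chosen to avoid $p$, $q$ (since $p,q\notin\fn(A)\cup\fn(N)$), and $x \notin \fv(A)$. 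The side condition $x\notin\fv(\Snd N M.P)$ of \rulename{Out-Var} and the freshness conditions of \rulename{Scope} and \rulename{Par} give in particular $x \notin \fv(A)=\dom(A)$, which is the last clause of Part~\ref{prop:caract-SndN-1}.

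\textbf{Part~\ref{prop:caract-SndN-1}.}
Starting from the decomposition above (with $\vect u$ fresh for $p,q$), compose with $T^{p,q}_{\Res x \Snd N x}$ and pull the restrictions $\nu\vect u$ over the test by \rulename{New-Par}. Then perform two internal reductions: first a \rulename{Comm} on $N$ between $\Snd N M.P$ and $\Rcv N x.\Rcv p y.\Snd q x$ (using the \rulename{Alias}/\rulename{Subst} idiom of Section~\ref{sec:ope-sem} to turn $M$ into a variable), which yields $\nu\vect u.(P \parop A_1 \parop \Snd p p \parop \Rcv p y.\Snd q M)$; then a \rulename{Comm} on $p$ between $\Snd p p$ and $\Rcv p y.\Snd q M$ (same idiom), which yields $\nu\vect u.(P \parop A_1 \parop \Snd q M)$. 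Reintroducing $\{\subst M x\}$ by \rulename{Subst} and the restriction $\nu x$ by \rulename{Alias}/\rulename{New-Par} (legitimate since $x\notin\fv(A)$, hence $x$ is not free in $P$, $A_1$, or $M$), this process is $\equiv \Res x(A' \parop \Snd q x)$; so $A \parop T^{p,q}_{\Res x \Snd N x} \rightarrow\rightarrow \Res x(A' \parop \Snd q x)$. Finally, since $p$ does not occur in $A'$ and $p\neq q$, the name $p$ occurs in no reduct of $\Res x(A' \parop \Snd q x)$, so $\Res x(A' \parop \Snd q x)\not\barb p$.

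\textbf{Part~\ref{prop:caract-SndN-2}.}
The key point is that $p$ and $q$ occur in $A \parop T^{p,q}_{\Res x \Snd N x}$ only inside the test, that $A$ neither emits nor reads on $p$ or $q$, and that $T^{p,q}_{\Res x \Snd N x}$ provides no message that $A$ can consume; hence along any reduction the only interaction between the $A$-part and the $T$-part is a single \rulename{Comm} on channel $N$ with the $A$-part as sender, followed (possibly later) by the single \rulename{Comm} on $p$ between $\Snd p p$ and the residual $\Rcv p y.\Snd q M$. I would make this precise by putting each intermediate process of $A \parop T^{p,q}_{\Res x \Snd N x} \rightarrow^* A''$ in partial normal form, tracking which parallel components descend from $A$ and which from $T$, and using a commutation (diamond) lemma for reductions on disjoint redexes to reorganize the trace as $A \parop T^{p,q}_{\Res x \Snd N x} \rightarrow^* A_0 \parop T^{p,q}_{\Res x \Snd N x} \rightarrow A_0' \rightarrow A_0'' \rightarrow^* A''$, the two middle steps being the \rulename{Comm} on $N$ and the \rulename{Comm} on $p$. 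Since $\Snd p p$ yields a barb on $p$ that disappears only when the guarded $\Rcv p y$ fires, and that firing requires the \rulename{Comm} on $N$ to have fired first, the hypothesis $A''\not\barb p$ forces both middle steps to occur. Thus $A_0 \equiv \nu\vect u.(\Snd N M.P \parop A_1)$ has an enabled output on $N$, so $A \rightarrow^* A_0 \ltr{\Res x\Snd N x} A_0'$ with $A_0' \equiv \nu\vect u.(P \parop \{\subst M x\}\parop A_1)$ and $A_0'' \equiv \Res x(A_0' \parop \Snd q x)$; and the residual reductions $A_0'' \rightarrow^* A''$ are internal to $A_0'$ joined with the inert $\Snd q M$, hence of the form $A_0' \rightarrow^* A'$ with $A'' \equiv \Res x(A' \parop \Snd q x)$, which gives $A \rightarrow^* \ltr{\Res x \Snd N x} \rightarrow^* A'$ as required (and $x\notin\dom(A)$ was assumed).

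\textbf{Main obstacle.}
The delicate step is the reorganization of the reduction trace in Part~\ref{prop:caract-SndN-2}: two \rulename{Comm} steps competing for the same redex do not commute, so one must argue carefully---using partial normal forms and the freshness of $p$, $q$---that the $A$-internal reductions are neither enabled by nor in conflict with any $T$-component, then push the two $T$-interactions to a canonical position in the trace. Everything else (Part~\ref{prop:caract-SndN-1}, the $\not\barb p$ claims, and the bookkeeping with \rulename{Alias}, \rulename{Subst}, \rulename{New-Par}) is routine structural-equivalence manipulation.
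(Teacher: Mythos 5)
Your Part~\ref{prop:caract-SndN-1} is essentially the paper's argument: the paper obtains your canonical decomposition of the output transition via partial normal forms (Lemmas~\ref{lem:redalpha-std-to-pnf}, \ref{lem:decomp-ltrpnf}, and~\ref{lem:caract-ltrP} play the role of your ``shape of output transitions'' observation), then performs the same two \rulename{Comm} steps and concludes $\not\barb{p}$ from $p\notin\fn(\cdot)$ exactly as you do (this last step is Lemma~\ref{lem:barb-obvious}, which does need the non-triviality of the equational theory, not just syntactic absence of $p$). For Part~\ref{prop:caract-SndN-2} you take a genuinely different route. You propose to normalize the whole trace by a commutation lemma for disjoint redexes, pushing the \rulename{Comm} on $p$ to sit immediately after the \rulename{Comm} on $N$; the paper never reorders the trace. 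Instead it proves three nested claims by induction on the length of the reduction sequence, each time analyzing only the \emph{first} step via the decomposition lemmas (\ref{lem:decomp-redP-closed}, \ref{lem:decomp-ltrP}, \ref{lem:red-no-fn}): either the $A$-derived component reduces (apply the induction hypothesis, after renaming $p,q$ inside the reduct via Lemma~\ref{lem:rename-free-names} so that freshness is maintained), or the unique interaction with the test fires. The innermost claim shows the pending $\Snd{q}{M'}$ stays inert, the middle one forces the \rulename{Comm} on $p$ to occur because $A''\not\barb{p}$, and the outer one extracts the labelled output on $N$ in place. What the paper's approach buys is that it never has to state or prove commutation of reductions modulo structural equivalence, which is exactly the step you flag as the main obstacle and which is genuinely delicate here ($\equiv$ can duplicate replications and rearrange scopes, so ``disjoint redexes'' is not a syntactic notion). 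What your approach would buy, if the commutation lemma were carried out, is a cleaner canonical trace; but since the target statement already allows internal reductions on both sides of the labelled transition, that extra normalization is not needed, and the induction-on-first-step argument is the lighter path. Your plan is sound, but the commutation lemma is real additional work rather than a routine appeal to freshness.
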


A further lemma provides a way of proving the equivalence of two extended processes with the same domain by putting them in a context that binds the variables in their domain and extrudes them.
  Given a family of processes $P_i$ for $i$ in a finite set $I$, we
  write $\prod_i P_i$ for the parallel composition of the processes
  $P_i$ if $I$ is not empty, and for $\nil$ otherwise.

\begin{lemma}\label{lem:extrusion}
Let $A$ and $B$ be two closed extended processes with a same domain that contains $\vect{x}$.
Let $\CTX_{\vect x}[\hole] \eqdef \nu \vect{x}.( \prod_{x \in \vect{x}}\Snd {n_x} x \parop \hole\,)$
using names $n_x$ that do not occur in $A$ or $B$.
If $\CTX_{\vect x}[A] \bicong \CTX_{\vect x}[B]$, then $A \bicong B$.
\end{lemma}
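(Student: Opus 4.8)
By Theorem~\ref{THM:OBSERVATIONAL-LABELED} it suffices to establish $A \wkbisim B$, which lets me work with labelled bisimilarity and with the hypothesis in the form $\CTX_{\vect x}[A] \wkbisim \CTX_{\vect x}[B]$. The guiding observation is that $\CTX_{\vect x}$ is reversible in the labelled semantics: for any closed $C$ with $\{\vect x\}\subseteq\dom(C)$ and fresh variables $x_1',\dots,x_k'$, a sequence of \rulename{Out-Var} steps gives
\[
\nu \vect{x}.\bigl(\textstyle\prod_{x\in\vect x}\Snd{n_x}{x}\parop C\bigr)
\ \ltr{\nu x_1'.\Snd{n_{x_1}}{x_1'}}\ \cdots\ \ltr{\nu x_k'.\Snd{n_{x_k}}{x_k'}}\ C\{\subst{x_1',\dots,x_k'}{x_1,\dots,x_k}\},
\]
where the final equality is structural: after each \rulename{Out-Var}, one uses \rulename{Subst} and \rulename{Alias}, together with the active substitution carried by $C$, to turn the restricted $\nu x_i.\{\subst{x_i}{x_i'}\}$ into an exported substitution on $x_i'$ and to erase $x_i$. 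Since the names $n_x$ occur in neither $A$ nor $B$, these are the only transitions available on the channels $n_x$; and since internal reduction creates no input or output prefix on a fresh name, the components $\prod_{x\in\vect x}\Snd{n_x}{x}$ stay inert under any internal reduction, so any reachable state of $\CTX_{\vect x}[B]$ still has that exact shape.

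The core step is to exhibit a labelled bisimulation through $(A,B)$. I would take $\rel$ to relate closed $P,Q$ with $\dom(P)=\dom(Q)$ whenever there exist $\vect v\subseteq\dom(P)$ and names $\vect m$ not occurring in $P$ or $Q$ such that $\nu\vect v.(\prod_{v\in\vect v}\Snd{m_v}{v}\parop P)\wkbisim\nu\vect v.(\prod_{v\in\vect v}\Snd{m_v}{v}\parop Q)$; by equivariance of the calculus under name permutations, $\wkbisim$ is preserved by injective renaming of free names, so the choice of $\vect m$ is immaterial. Then $(A,B)\in\rel$ (take $\vect v=\vect x$, $\vect m=\vect n$), and $\rel$ is symmetric. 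Clause~\ref{ppone} of Definition~\ref{def:wkbisim} holds for each related pair by revealing all of $\vect v$ on the left via the displayed run to reach $P\{\subst{\vect v'}{\vect v}\}$, matching it by a weak run of the right-hand side (which by inertness of the leaking outputs reaches $Q'\{\subst{\vect v'}{\vect v}\}$ with $Q\rightarrow^*Q'$), reading off $\frameof{P}\{\subst{\vect v'}{\vect v}\}\enveq\frameof{Q'}\{\subst{\vect v'}{\vect v}\}$ from clause~\ref{ppone} on that matched pair, and using $\frameof{Q'}\enveq\frameof{Q}$ from Lemma~\ref{LEM:INVARIANT-STATIC-EQ}. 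For clauses~\ref{pptwo} and~\ref{ppthree}, given $P\rightarrow P'$ or $P\ltr{\alpha}P'$, I would pick $\vect m$ also avoiding $\alpha$: if no variable of $\vect v$ occurs in $\alpha$, the wrapped process mimics the step directly by \rulename{Par} and \rulename{Scope}, the match provided by $\wkbisim$ on the right keeps the $\Snd{m_v}{v}$ inert, hence has the form $\nu\vect v.(\prod_{v\in\vect v}\Snd{m_v}{v}\parop Q')$ with $Q$ performing the matching (weak) step to $Q'$, and so $P'\rel Q'$; if $\alpha$ mentions some $v_i\in\vect v$, I first let the wrapped process reveal exactly those $v_i$ (shrinking $\vect v$ and renaming them to fresh exported variables), then proceed as before, landing $(P',Q')$ in $\rel$ with the smaller set of restricted variables. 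Hence $\rel\subseteq\wkbisim$, giving $A\wkbisim B$ and then $A\bicong B$ by Theorem~\ref{THM:OBSERVATIONAL-LABELED}.

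The step I expect to be the real obstacle is exactly this last case analysis together with its freshness bookkeeping: keeping the leaking channel names $\vect m$ and the variables introduced by \rulename{Out-Var} disjoint from everything relevant along an entire run (made delicate by the fact that \rulename{Rewrite} and \rulename{Alias} may reintroduce free names and variables into active substitutions, so ``fresh'' must be read as ``does not occur as an input or output channel''), verifying that the ``reveal-then-$\alpha$'' run on the left is faithfully mirrored on the right with the leaking outputs never participating in a reduction there, and phrasing the invariant in $\rel$ (which $\vect v$, which renamings) precisely enough that it is genuinely preserved by every transition and by $\equiv$. The remaining verifications are routine structural-equivalence manipulations of the kind already illustrated in Section~\ref{sec:ope-sem}.
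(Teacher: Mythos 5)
Your argument has a fatal circularity. You open by invoking Theorem~\ref{THM:OBSERVATIONAL-LABELED} to replace both the hypothesis and the conclusion by their labelled counterparts, and in particular to read the hypothesis as $\CTX_{\vect x}[A]\wkbisim\CTX_{\vect x}[B]$. But Lemma~\ref{lem:extrusion} is itself an ingredient of the proof of that theorem: it is applied in the output case of Lemma~\ref{lem:th1-second-dir} (${\bicong}\subseteq{\wkbisim}$) to recover $A'\bicong B'$ from $\nu x.(A'\parop\Snd{q}{x})\bicong\nu x.(B'\parop\Snd{q}{x})$. At the point where the present lemma must be established, only the direction ${\wkbisim}\subseteq{\bicong}$ (Lemma~\ref{lem:th1-first-dir}) is available; you may therefore pass from $A\wkbisim B$ to $A\bicong B$ at the end, but you may not pass from $\CTX_{\vect x}[A]\bicong\CTX_{\vect x}[B]$ to $\CTX_{\vect x}[A]\wkbisim\CTX_{\vect x}[B]$ at the start. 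Your whole strategy rests on that illegitimate step: it is precisely what lets you play a labelled game, with no congruence clause, instead of an observational one.

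As a consequence you never confront the step that is actually hard. The paper stays entirely at the level of $\bicong$: it lets $\rel$ relate $A$ and $B$ whenever $\CTX_{\vect x}[A]\bicong\CTX_{\vect x}[B]$ and shows that $\rel$ is an observational bisimulation. Barbs and reductions go much as you expect (your ``inertness'' observation corresponds to the paper's decomposition property for reductions of $\CTX_{\vect x}[A]$, which is proved via partial normal forms and Lemma~\ref{lem:red-no-fn} rather than merely asserted, since one must rule out a channel term of $B$ being \emph{equated} with $n_x$ by the theory). The crux is Condition~\eqref{bcthree} of Definition~\ref{def:bicong}: closure of $\rel$ under an arbitrary closing evaluation context $\CTX$. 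The paper handles it by reducing $\CTX$ to the form $\Res{\vect u}(\hole\parop C'')$ and building a context $\CTX_1$ that inputs the extruded values back on the channels $n_x$ before running $C''$, proving $\CTX_{\vect x'}[\CTX[A]]\bicong\CTX_1[\CTX_{\vect x}[A]]$ via Corollary~\ref{cor:bicong-comm}; this is where the fresh channels $n_x$ really earn their keep. To repair your proof you would have to either carry out the argument wholly in terms of $\bicong$ (and face that congruence clause), or restructure the proof of Theorem~\ref{THM:OBSERVATIONAL-LABELED} so that it no longer depends on this lemma; as written, the proposal does neither.
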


The final lemma is the other direction of Theorem~\ref{THM:OBSERVATIONAL-LABELED}:

\begin{lemma}\label{lem:th1-second-dir}
  $\bicong$ is a labelled bisimulation, and thus ${\bicong}\subseteq {\wkbisim}$.
\end{lemma}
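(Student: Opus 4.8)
The goal is to show that $\bicong$ is a labelled bisimulation, so that $\bicong\subseteq\wkbisim$; combined with Lemma~\ref{lem:th1-first-dir} this gives Theorem~\ref{THM:OBSERVATIONAL-LABELED}. I need to check the three clauses of Definition~\ref{def:wkbisim} for $\bicong$. Clause~\ref{ppone} (static equivalence) is immediate from Lemma~\ref{lem:enveq2}, i.e.\ ${\bicong}\subset{\enveq}$. Clause~\ref{pptwo} (matching $\rightarrow$ steps) follows directly from clause~\ref{bctwo} of Definition~\ref{def:bicong}: if $A\bicong B$ and $A\rightarrow A'$ with $A'$ closed, then since $A\rightarrow^* A'$ we get $B\rightarrow^* B'$ with $A'\bicong B'$. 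So the entire content is clause~\ref{ppthree}: I must show that a labelled transition $A\ltr{\alpha}A'$, with $A'$ closed and $\fv(\alpha)\subseteq\dom(A)$, can be matched by $B\rightarrow^*\ltr{\alpha}\rightarrow^* B'$ with $A'\bicong B'$, using only that $A\bicong B$.

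**Main steps.** The strategy is to simulate each kind of labelled transition by composing with a suitable testing context, using $\bicong$'s closure under closing evaluation contexts (clause~\ref{bcthree}) and its preservation of barbs and reductions, and then extracting the matching labelled transition on the $B$ side via the characterization lemmas. Concretely, first suppose $\alpha = \Rcv{N}{M}$ is an input. Choose a fresh name $p$ and form $T^p_{\Rcv N M}$; note $\fv(\Snd{N}{M})\subseteq\dom(A)=\dom(B)$ so the context is closing for both. By Lemma~\ref{lem:caract-RcvNM}(\ref{prop:caract-RcvNM-1}), $A\parop T^p_{\Rcv N M}\rightarrow\rightarrow A'$ with $A'\not\barb p$. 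Since $A\bicong B$ implies $A\parop T^p_{\Rcv N M}\bicong B\parop T^p_{\Rcv N M}$, clause~\ref{bctwo} gives $B\parop T^p_{\Rcv N M}\rightarrow^* B''$ with $A'\bicong B''$, and since $\bicong$ preserves barbs and $A'\not\barb p$, also $B''\not\barb p$ (a barb is preserved in both directions under $\bicong$). Then Lemma~\ref{lem:caract-RcvNM}(\ref{prop:caract-RcvNM-2}) yields $B\rightarrow^*\ltr{\Rcv N M}\rightarrow^* B''$, i.e.\ a matching transition with $A'\bicong B''$, as required. The case $\alpha=\nu x.\Snd{N}{x}$ is handled symmetrically using $T^{p,q}_{\Res x\Snd N x}$ and Lemma~\ref{lem:caract-SndN}: from $A\ltr{\Res x\Snd N x}A'$ one gets $A\parop T^{p,q}_{\Res x\Snd N x}\rightarrow\rightarrow\Res{x}(A'\parop\Snd q x)$ with that process $\not\barb p$; pushing through $\bicong$ and part~(\ref{prop:caract-SndN-2}) recovers $B\rightarrow^*\ltr{\Res x\Snd N x}\rightarrow^* B'$ with $\Res{x}(A'\parop\Snd q x)\bicong\Res{x}(B'\parop\Snd q x)$.

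**Closing the gap.** The previous paragraph leaves me not with $A'\bicong B'$ but with $\Res{x}(A'\parop\Snd q x)\bicong\Res{x}(B'\parop\Snd q x)$ — the frames still carry the extruded variable under a restriction together with an output on $q$. To descend to $A'\bicong B'$ I will use Lemma~\ref{lem:extrusion}: taking the context $\CTX_{\vect x}[\hole]$ that binds all of $\dom(A')\setminus\dom(A)$ (here just $x$, or more generally the variables freshly exported) and sends each on a fresh private name, I can rewrite $\Res{x}(A'\parop\Snd q x)$ up to $\equiv$ into the form $\CTX_{x}[A']$ (identifying $q$ with $n_x$, possible since $q$ is fresh), and likewise for $B'$; since $\bicong$ is invariant under $\equiv$ (it is weak barbed congruence) this gives $\CTX_x[A']\bicong\CTX_x[B']$, and Lemma~\ref{lem:extrusion} then yields $A'\bicong B'$. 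Finally, I must also discharge clause~\ref{pptwo}'s hypothesis that intermediate processes stay closed and that $A'$, $B'$ are closed; for the labelled case, the assumption $\fv(\alpha)\subseteq\dom(A)$ and $A'$ closed is given, and closedness of the various test compositions follows because the test contexts are closing and introduce no free variables beyond those in $\dom(A)$, so the applications of Definition~\ref{def:bicong} are legitimate.

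**Main obstacle.** The delicate point is the bookkeeping in the output case: ensuring the freshly extruded variable $x\notin\dom(A)$ (guaranteed by Lemma~\ref{lem:caract-SndN}(\ref{prop:caract-SndN-1})), that the name $q$ can be re-used as the private carrier $n_x$ in Lemma~\ref{lem:extrusion}, and that the structural-equivalence rewriting between $\Res{x}(A'\parop\Snd q x)$ and $\CTX_x[A']$ is valid — in particular that $q$ does not clash with anything in $A'$ or $B'$, which holds since $q$ was chosen fresh. One also has to be slightly careful that $\bicong$ truly preserves barbs in both directions: clause~\ref{bcone} of Definition~\ref{def:bicong} gives it in one direction, and symmetry of observational bisimulation gives the other. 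Beyond that, the argument is a standard "test context" simulation and I expect no deep difficulty; all the heavy lifting has been front-loaded into Lemmas~\ref{lem:caract-RcvNM}, \ref{lem:caract-SndN}, and~\ref{lem:extrusion}, whose proofs are in the appendix.
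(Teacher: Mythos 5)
Your proposal is correct and follows essentially the same route as the paper's proof: the first two clauses are discharged by Lemma~\ref{lem:enveq2} and clause~\eqref{bctwo} of Definition~\ref{def:bicong}, and the labelled-transition clause is handled by composing with the characteristic test processes $T^p_{\Rcv N M}$ and $T^{p,q}_{\Res{x}\Snd N x}$, invoking Lemmas~\ref{lem:caract-RcvNM} and~\ref{lem:caract-SndN} in both directions, and finishing the output case with Lemma~\ref{lem:extrusion} to strip the restriction $\Res{x}(\cdot \parop \Snd{q}{x})$. The extra bookkeeping you flag (freshness of $q$, symmetry giving barb preservation in both directions) is exactly the kind of detail the paper leaves implicit, so there is nothing to fix.
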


\begin{proof}
The relation $\bicong$ is symmetric. We show that it satisfies the three properties of Definition~\ref{def:wkbisim}.
\begin{enumerate}
\item If $A \bicong B$, then $A \enveq B$, by Lemma~\ref{lem:enveq2}.
\item If $A \bicong B$, $A \rightarrow A'$, and $A'$ is closed, then $B
    \rightarrow^* B'$ and $A' \bicong B'$ for some $B'$, by Property~\ref{bctwo} of the definition of $\bicong$.
\item If $A \bicong B$, $A \ltr{\alpha} A'$, $A'$ is closed, and 
    $\fv(\alpha) \subseteq \dom(A)$,
    then $B \rightarrow^*
    \ltr{\alpha} \rightarrow^* B'$ and $A' \bicong B'$ for some $B'$.
To prove this property, we rely on characteristic contexts $\hole \parop T_\alpha$ that unambiguously test
for a labelled transition $\ltr{\alpha}$ using the disappearance of a
barb $\barb{p}$, and do not otherwise affect $\bicong$.

Assume $A \bicong B$, $A \ltr{\alpha} A'$, $A'$ is closed, and 
    $\fv(\alpha) \subseteq \dom(A)$.
\begin{enumerate}
\item For input $\alpha = \Rcv{N}{M}$ (where $N$ and $M$ may contain variables exported by $A$ and $B$)
  and some fresh name $p$,  
  we have $A \parop T^p_{\Rcv N M} \rightarrow \rightarrow A' \not\barb{p}$
  by Lemma~\ref{lem:caract-RcvNM}\eqref{prop:caract-RcvNM-1}, 
  hence $B \parop T^p_{\Rcv N M} \rightarrow^* B' \not\barb{p}$ with $A'
  \bicong B'$, hence $B \rightarrow^* \ltr{\Rcv N M} \rightarrow^* B'$
  by Lemma~\ref{lem:caract-RcvNM}\eqref{prop:caract-RcvNM-2}.

\item For output $\alpha = \Res{x}\Snd{N}{x}$ and some fresh names $p$ and $q$, 
we have $A \parop T^{p,q}_{\Res{x}\Snd N x} \rightarrow \rightarrow \Res{x}(A' \parop \Snd{q}{x}) \not\barb{p}$ and $x \notin \dom(A)$
by Lemma~\ref{lem:caract-SndN}\eqref{prop:caract-SndN-1}, 
hence $B \parop T^{p,q}_{\Res{x}\Snd N x} \rightarrow^* B'' \not\barb{p}$ for some $B''$,
hence $B \rightarrow^* \ltr{\Res{x}\Snd N x} \rightarrow^* B'$ and $B'' \equiv \Res{x}(B' \parop \Snd{q}{x})$ for some $B'$
by Lemma~\ref{lem:caract-SndN}\eqref{prop:caract-SndN-2}.
We obtain a pair 
$\nu x.(A' \parop \Snd q x) \bicong \nu x.(B' \parop \Snd q x)$, 
and conclude by applying Lemma~\ref{lem:extrusion}.

\end{enumerate}
\end{enumerate}
Hence $\bicong$ is a labelled bisimulation, and ${\bicong} \subseteq {\wkbisim}$, since $\wkbisim$ is the largest labelled bisimulation.
\end{proof}

Theorem~\ref{THM:OBSERVATIONAL-LABELED} is an immediate consequence of Lemmas~\ref{lem:th1-first-dir} and~\ref{lem:th1-second-dir}.

Considering this proof of Theorem~\ref{THM:OBSERVATIONAL-LABELED}, we can explain further some aspects of 
our definition of observational equivalence (Definition~\ref{def:bicong}). That definition includes
conditions related to barbs, reductions, and evaluation contexts
(Conditions~\eqref{bcone} to~\eqref{bcthree}, respectively), 
as is done in work on the $\nu$-calculus~\cite{Honda95} and on the join calculus~\cite{AbadiFournetGonthier}.
In an alternative approach, used in CCS~\cite{Milner1992} and in the pi calculus~\cite{Sangiorgi93}, equivalence is defined in two stages:
\begin{enumerate}
\item First, barbed bisimilarity is defined as the largest barbed bisimulation, that is, the largest symmetric relation $\rel$ such that $A \rel B$ implies Conditions~\eqref{bcone} and~\eqref{bctwo} of Definition~\ref{def:bicong}. 
\item Second, equivalence is defined as the largest congruence (that is, the largest relation $\rel$ such that $A \rel B$ implies Condition~\eqref{bcthree} of Definition~\ref{def:bicong}) contained in barbed bisimilarity. 
\end{enumerate}
The two approaches do not necessarily yield the same equivalence relation; see \cite{FournetGonthier98:equivalences} for positive and negative examples in variants of the pi calculus. 
The advantage of our approach is that, in reasoning about process equivalences, we can add a context at any point after reductions, as we do in the proof of Lemma~\ref{lem:th1-second-dir}. With the alternative approach, we can add a context only at the beginning, before any reduction, so we need to build contexts that test for all possible sequences of
labelled transitions that the processes under consideration may make, and that manifest them as different combinations of barbs. This testing is not possible for all processes, so with the alternative approach, analogues of Theorem~\ref{THM:OBSERVATIONAL-LABELED} would typically require a restriction to so-called \emph{image finite} processes~\cite{Milner1992}. 
Our definition of observational equivalence avoids this restriction.

It would be interesting to formalize the proofs of this section (and
also those of the rest of the paper) with a theorem prover such as
Coq. This formalization may perhaps benefit from past Coq developments
on bisimulations for the pi calculus~\cite{Hirschkoff97,honsell2001pi}
and the spi calculus~\cite{Briais2008}. However, the applied pi
calculus introduces additional difficulties (because of the role of terms with
equational theories), and proving our results with Coq would certainly
require a major effort.

\section{Diffie-Hellman Key Agreement}\label{sec:diffie-hellman}

\newcommand{\Pair}[1]{\Const {pair} #1} 
\newcommand{\Crypt}[1]{\Const {enc} #1}
\newcommand{\F}[2]{\Const {f} (#1,#2) }
\newcommand{\Gf}[1]{\Const {g}(#1)}
\newcommand{\SH}[1]{\Const {h}(#1)}

The fundamental Diffie-Hellman protocol allows two principals to
establish a shared secret by exchanging messages over public channels~\cite{DH}.
The principals need not have any shared secrets in advance.  The basic
protocol, on which we focus here as an example, does not provide
authentication; therefore, a ``bad'' principal may play the role of
either principal in the protocol. On the other hand, the two
principals that follow the protocol will communicate securely with one
another afterwards, even in the presence of active attackers. In
extended protocols, such as the Station-to-Station protocol~\cite{STS}
and SKEME~\cite{Krawczyk:SKEME}, additional messages perform authentication.

We program the basic protocol in terms of the binary function symbol $\Const{f}$
and the unary function symbol $\Const{g}$, with the equation:
\begin{eqnarray}
\F x {\Gf y} & = & \F y {\Gf x} \label{pp:dh}
\end{eqnarray}
Concretely, the functions are 
$\F x y = y^x\ mod\ p$ and 
$\Gf x = \alpha^x\ mod\ p$ for a prime $p$ and a generator $\alpha$ of $\mathbb{Z}^*_p$, 
and we have the equation 
$\F x {\Gf y} = (\alpha^y)^x = \alpha^{y \times x}
= \alpha^{x \times y} = (\alpha^x)^y = \F y {\Gf x}$.
However, we ignore the underlying number theory,
working abstractly with $\Const{f}$ and $\Const{g}$.

The protocol has two symmetric participants, which we represent by the processes
$A_0$ and $A_1$. 
The protocol establishes a shared key, then the
participants  respectively run $P_0$ and $P_1$ using the key.
We use the public channel $c_{01}$ for messages from $A_0$ to $A_1$
and the public channel $c_{10}$ for communication in the opposite direction.
(Although the use of two distinct public channels is of no value for
security, it avoids some trivial confusions, so makes for a cleaner
presentation.)
We assume that none of the values introduced in the protocol appears in
$P_0$ and $P_1$, except for the key.

In order to establish the key, $A_0$ invents a name $n_0$, sends $\Gf
{n_0}$ to $A_1$, and $A_1$ proceeds symmetrically. Then $A_0$ computes
the key as $\F {n_0}{\Gf {n_1}}$ and $A_1$ computes it as $\F {n_1}{\Gf
{n_0}}$, with the same result. 
We find it convenient to use the
following substitutions for $A_0$'s message and key:
\eqns{
  \sigma_0  & \eqdef & \{ \subst{\Gf {n_0}}{x_0} \} \\
  \phi_0   & \eqdef & \{ \subst{\F {n_0}{x_1}}{y} \}
  }%
and the corresponding substitutions $\sigma_1$ and $\phi_1$,
as well as the frame:
\eqns{
  \varphi & \eqdef & (\nu n_0 .\; (\phi_0 \parop \sigma_0)) \parop (\nu n_1.\;\sigma_1)
  }%
With these notations, $A_0$ is:
\newcommand{\sndrcv}[2]{\Snd {c_{01}}{#1} \parop {\Rcv {c_{10}}{#2}}.}
\newcommand{\altsndrcv}[2]{\Snd {c_{10}}{#1} \parop {\Rcv {c_{01}}{#2}}.}
\eqns{
  A_0   & \eqdef & \nu n_0. (
  \sndrcv {x_0 \sigma_0} {x_1} P_0\phi_0)
  }%
and $A_1$ is analogous.

Two reductions represent a normal run of the
protocol:
\begin{eqnarray}
A_0 \parop A_1 & \rightarrow \rightarrow & 
\nu x_0, x_1, n_0, n_1.\;
(P_0\phi_0 \parop P_1\phi_1 \parop \sigma_0 \parop \sigma_1) \label{op:comms}
\\
& \equiv & 
\nu x_0, x_1, n_0, n_1, y.\;
(P_0 \parop P_1 \parop \phi_0 \parop \sigma_0 \parop \sigma_1)\quad \label{op:dh}
\\
& \equiv & 
\nu y.
(P_0 \parop P_1 \parop \nu x_0,x_1.\;\varphi) \label{op:push}
\end{eqnarray}
The two communication steps (\ref{op:comms}) use 
structural equivalence to activate the substitutions $\sigma_0$ and
$\sigma_1$ and extend the scope of the secret values $n_0$ and $n_1$.
The structural equivalence (\ref{op:dh}) crucially relies on equation~(\ref{pp:dh}) 
in order to reuse the active substitution $\phi_0$ instead of $\phi_1$ after
the reception of $x_0$ in $A_1$.
The next structural equivalence (\ref{op:push}) tightens the scope for
restricted
names and variables, then uses the definition of~$\varphi$.

We model an eavesdropper as a process 
$\Rcv{c_{01}}{x_0}.\Snd{c_{01}}{x_0}.\Rcv{c_{10}}{x_1}.\Snd{c_{10}}{x_1}.P$
that intercepts messages on $c_{01}$ and~$c_{10}$,
remembers them, but forwards them unmodified. 
Using the labelled semantics to represent
the interaction of $A_0 \parop A_1$ with 
such a passive attacker, we obtain:
\begin{eqnarray*}
A_0 \parop A_1 & \ltr{\Snd{c_{01}}{x_0}} & \Res{n_0}(\sigma_0 \parop \Rcv{c_{10}}{x_1}.P_0\phi_0) \parop A_1 \\
&\ltr{\Rcv{c_{01}}{x_0}} & \Res{n_0}(\sigma_0 \parop \Rcv{c_{10}}{x_1}.P_0\phi_0) \parop 
\Res{n_1}(\Snd{c_{10}}{\sigma_1x_1} \parop P_1\phi_1) \\
& \ltr{\Snd{c_{10}}{x_1}} & \Res{n_0}(\sigma_0 \parop \Rcv{c_{10}}{x_1}.P_0\phi_0) \parop 
\Res{n_1}(\sigma_1 \parop P_1\phi_1) \\
&\ltr{\Rcv{c_{10}}{x_1}} & \Res{n_0}(\sigma_0 \parop P_0\phi_0) \parop 
\Res{n_1}(\sigma_1 \parop P_1\phi_1) \\
& \equiv & 
\nu n_0, n_1, y.\;
(P_0 \parop P_1 \parop \phi_0 \parop \sigma_0 \parop \sigma_1)
\\
& \equiv & 
\nu y.
(P_0 \parop P_1 \parop \varphi) 
\end{eqnarray*}
The labelled transitions $\ltr{\Snd{c_{01}}{x_0}} \ltr{\Rcv{c_{01}}{x_0}}$ show
that the eavesdropper obtains the message sent on $c_{01}$ by $A_0$, stores it in $x_0$,
and forwards it to $A_1$. The transitions $\ltr{\Snd{c_{10}}{x_1}} \ltr{\Rcv{c_{10}}{x_1}}$
deal with the message on~$c_{10}$ in a similar way.
The absence of the restrictions on $x_0$ and $x_1$
corresponds to the fact that the eavesdropper has obtained the values
of these variables.

The following theorem relates this process to $$\nu k. (P_0 \parop
P_1)\{\subst{k}{y}\}$$ which represents the bodies $P_0$ and $P_1$ of
$A_0$ and $A_1$ sharing a key $k$.  This key appears as a simple
shared name, rather than as the result of communication and
computation. Intuitively, we may read $\nu k. (P_0 \parop
P_1)\{\subst{k}{y}\}$ as the ideal outcome of the protocol: $P_0$ and
$P_1$ execute using a shared key, without concern for how the key was
established, and without any side-effects from weaknesses in the
establishment of the key.  The theorem says that this ideal outcome is
essentially achieved, up to some ``noise''. This ``noise'' is a
substitution that maps $x_0$ and $x_1$ to unrelated, fresh names. 
It accounts for the fact that an attacker may have the key-exchange
messages, and that they look just like unrelated values to the
attacker.
In particular, the key in use between $P_0$ and $P_1$ has no
observable relation to those messages, or to any other left-over
secrets.
We view this independence of the shared key as an important forward-secrecy property.

\begin{theorem} Let $P_0$ and $P_1$ be processes with free
  variable $y$ where the name $k$ does not appear.
We have:
$$\begin{array}{l}
  \nu y. (P_0 \parop P_1 \parop \varphi) 
\;\bicong\; 
  \nu k. (P_0 \parop P_1)\{\subst{k}{y}\} \parop 
  \nu s_0.\{\subst{s_0}{x_0}\} \parop
  \nu s_1.\{\subst{s_1}{x_1}\} 
\end{array}$$
\end{theorem}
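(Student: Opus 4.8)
The plan is to use Theorem~\ref{THM:OBSERVATIONAL-LABELED} and exhibit a labelled bisimulation, after first replacing each side by a more convenient structurally equivalent form (legitimate since $\equiv\,\subseteq\,\bicong\,=\,\wkbisim$). On the left, \rulename{New-Par} extrudes $n_0,n_1$ past $P_0\parop P_1$; then \rulename{Subst} applies $\sigma_1$ inside $\phi_0$, turning it into $\{\subst{\F{n_0}{\Gf{n_1}}}{y}\}$; then \rulename{Subst} and \rulename{Alias} consume this now uniquely restricted substitution on $y$ inside $P_0$ and $P_1$. Using that $n_0,n_1,x_0,x_1$ do not occur in $P_0,P_1$, one obtains
\[\nu y.(P_0 \parop P_1 \parop \varphi)\;\equiv\;\nu n_0,n_1.\bigl(P_0\{\subst{\F{n_0}{\Gf{n_1}}}{y}\}\parop P_1\{\subst{\F{n_0}{\Gf{n_1}}}{y}\}\parop\{\subst{\Gf{n_0}}{x_0}\}\parop\{\subst{\Gf{n_1}}{x_1}\}\bigr),\]
and likewise the right-hand side is structurally equivalent to
\[\nu k,s_0,s_1.\bigl(P_0\{\subst{k}{y}\}\parop P_1\{\subst{k}{y}\}\parop\{\subst{s_0}{x_0}\}\parop\{\subst{s_1}{x_1}\}\bigr).\]
Both forms are closed with domain $\{x_0,x_1\}$, so it remains to show they are labelled bisimilar.

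Next I would exhibit a labelled bisimulation containing this pair. Fixing a fresh variable $z$ for the shared key, let $\mathcal{R}$ be the symmetric relation consisting of all pairs
\[\bigl(\;\nu n_0,n_1.(\{\subst{\Gf{n_0}}{x_0}\}\parop\{\subst{\Gf{n_1}}{x_1}\}\parop C\{\subst{\F{n_0}{\Gf{n_1}}}{z}\})\;,\;\nu k,s_0,s_1.(\{\subst{s_0}{x_0}\}\parop\{\subst{s_1}{x_1}\}\parop C\{\subst{k}{z}\})\;\bigr),\]
where $C$ ranges over extended processes with $\fv(C)\subseteq\dom(C)\cup\{x_0,x_1,z\}$, with $x_0,x_1,z\notin\dom(C)$, and in which none of $n_0,n_1,s_0,s_1,k$ occurs; taking $C=P_0\{\subst{z}{y}\}\parop P_1\{\subst{z}{y}\}$ recovers the pair above. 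The two processes in a pair share the same $C$ and differ only in that the values $\Gf{n_0},\Gf{n_1},\F{n_0}{\Gf{n_1}}$ standing for the two key-exchange messages and the key are replaced, on the right, by three independent fresh names $s_0,s_1,k$; moreover $n_0,n_1$ (resp.\ $s_0,s_1,k$) are restricted. Two consequences drive the bisimulation argument. First, because the simple labelled semantics of Section~\ref{subsec:labopsem} never places a restricted name in a label (there is no name-extrusion rule; \rulename{Scope} forbids it), every labelled transition available to one process carries a label mentioning neither $n_0,n_1$ nor $s_0,s_1,k$, and is therefore available to the other with the identical label, leading to a pair of $\mathcal{R}$ with an updated $C$. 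Second, every internal reduction of one process is matched step-for-step by the other, again landing in $\mathcal{R}$, provided the equality tests triggered — the guard of a \rulename{Then}/\rulename{Else} step, the channel match of a \rulename{Comm} step — evaluate alike on the two sides. So the bisimulation clauses reduce to discharging that proviso together with clause~\ref{ppone} of Definition~\ref{def:wkbisim}, the static equivalence of the two frames of a pair.

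All of this follows from one core lemma about the equational theory: for all terms $U,V$ whose variables lie in $\{x_0,x_1,z\}$ and whose names avoid $n_0,n_1,s_0,s_1,k$, writing $\eta=\{\subst{\Gf{n_0}}{x_0},\subst{\Gf{n_1}}{x_1},\subst{\F{n_0}{\Gf{n_1}}}{z}\}$ and $\eta'=\{\subst{s_0}{x_0},\subst{s_1}{x_1},\subst{k}{z}\}$, one has $\Sigma\vdash U\eta=V\eta$ iff $\Sigma\vdash U\eta'=V\eta'$, and likewise $U\eta\equiv V\eta$ syntactically iff $U\eta'\equiv V\eta'$. Granting this, static equivalence of the frames of a pair follows by unfolding Definitions~\ref{def:eqframe} and~\ref{def:enveq}, and the conditional and channel tests agree on the two sides; hence $\mathcal{R}$ is a labelled bisimulation, and composing with the structural equivalences above gives the theorem. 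The ``if'' direction of the lemma is immediate, since the equational theory is closed under substitution of terms for names: from $\Sigma\vdash U\eta'=V\eta'$ substitute $s_0\mapsto\Gf{n_0}$, $s_1\mapsto\Gf{n_1}$, $k\mapsto\F{n_0}{\Gf{n_1}}$ (the syntactic variant is equally trivial). The ``only if'' direction — that instantiating by the Diffie--Hellman values creates no equalities or coincidences beyond those already present when $x_0,x_1,z$ are distinct fresh names — is the crux and the main obstacle; it is precisely the secrecy content of equation~(\ref{pp:dh}): from $\Gf{n_0}$ and $\Gf{n_1}$ without $n_0$ or $n_1$, the single equation $\F x{\Gf y}=\F y{\Gf x}$ can neither build $\F{n_0}{\Gf{n_1}}$ nor relate it to any other context-constructible term. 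I would prove it by a rewriting/normal-form analysis of the theory generated by~(\ref{pp:dh}): orient the equation modulo the permutation of the two argument positions of $\Const{f}$ whose second argument is $\Const{g}$-headed, show that in any term built from context names together with $\Gf{n_0},\Gf{n_1},\F{n_0}{\Gf{n_1}}$ but without $n_0,n_1$ no rewrite step can place $\Gf{n_0}$ (or $\Gf{n_1}$) in the position needed to rewrite to $\F{n_0}{\Gf{n_1}}$, and conclude that $\Gf{n_0},\Gf{n_1},\F{n_0}{\Gf{n_1}}$ behave as three free generators. Such normal-form reasoning for restricted signatures is of the same flavour as that used later in the proofs of Theorems~\ref{th:mac} and~\ref{th:indif}.
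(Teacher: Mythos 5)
Your route is correct in outline but genuinely different from---and considerably heavier than---the paper's. The paper observes that both sides of the equivalence are the \emph{same} closing evaluation context $\nu y.(P_0 \parop P_1 \parop \hole)$ applied to two frames, namely $\varphi$ and $\nu s_0,s_1,k.\{\subst{s_0}{x_0},\subst{s_1}{x_1},\subst{k}{y}\}$; it then needs only the static equivalence $\varphi \enveq \nu s_0,s_1,k.\{\subst{s_0}{x_0},\subst{s_1}{x_1},\subst{k}{y}\}$ (discharged automatically by ProVerif), Lemma~\ref{lem:enveq1} to promote static equivalence to observational equivalence \emph{on frames}, and closure of $\bicong$ under evaluation contexts. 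Your explicit bisimulation $\mathcal{R}$, parametrized by an arbitrary extended process $C$, in effect re-proves this combination by hand in a special case: the step-for-step matching of reductions and labelled transitions under the two instantiations $\eta$ and $\eta'$ is exactly the content of the static-equivalence-exploitation lemmas (Lemmas~\ref{lem:equiv-enveq}--\ref{lem:ltr-enveq}) used to establish Lemma~\ref{lem:bisim-context-closed}, and your ``core lemma'' is precisely the static equivalence above, restated via Definition~\ref{def:eqframe}. What your approach buys is independence from Theorem~\ref{THM:OBSERVATIONAL-LABELED}'s context-closure machinery in exchange for redoing it; what the paper's buys is a three-line proof once that machinery is in place. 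Your observation that restricted names never appear in labels under the simple semantics (so both sides offer identical labels) is correct and is the right reason the matching works.

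The one place where your proposal is materially thinner than it should be is the core lemma itself, which is the entire cryptographic content of the theorem. Your plan to ``orient'' equation~(\ref{pp:dh}) needs care: $\F{x}{\Gf{y}} = \F{y}{\Gf{x}}$ is permutative (each side is the other up to swapping $x$ and $y$), so it cannot be oriented into a terminating rewrite rule in the ordinary sense; one must either rewrite modulo the induced permutation or argue via a canonical-representative choice on the two argument positions. The claim you need---that for a context lacking $n_0,n_1$ the three values $\Gf{n_0}$, $\Gf{n_1}$, $\F{n_0}{\Gf{n_1}}$ generate no equalities beyond those of three fresh names---is true, and the normal-form style of argument is the right one (compare the treatment in Appendices~\ref{app:constructing} and~\ref{app:indif}), but as written it is a sketch of the hardest step rather than a proof. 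The paper sidesteps this by delegating the check to ProVerif; if you want a self-contained manual proof, this is where the work lies.
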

\begin{proof}
The theorem follows from Lemma~\ref{lem:enveq1} and the static equivalence
$\varphi \enveq \nu s_0,s_1,k.\{\subst{s_0}{x_0},\allowbreak\subst{s_1}{x_1},\allowbreak \subst{k}{y}\}$, which
says that the frame $\varphi$ generated by the protocol execution 
is equivalent to one that maps variables to fresh names.
This static equivalence is proved automatically by ProVerif,
using the technique presented in~\cite{Blanchet07b}.
We conclude by applying the context $\nu y. (P_0 \parop P_1 \parop \hole)$.
\end{proof}

Extensions of the basic protocol add rounds of communication that
confirm the key and authenticate the principals. We have studied one
such extension with key confirmation. There, the shared secret $\F
{n_0}{\Gf {n_1}}$ is used in confirmation messages. 
Because of these messages, the shared secret can no longer be equated with a virgin
key for $P_0$ and~$P_1$. Instead, the final key is
computed by hashing the shared secret. This hashing guarantees the
independence of the final key.

We have also studied more advanced protocols that rely on a
Diffie-Hellman key exchange, such as the JFK protocol~\cite{jfk-tissec}.
The analysis of JFK in the applied pi calculus~\cite{abf06-jfk}
illustrates the composition of manual reasoning with invocations of
ProVerif.

\section{Hash Functions and Message Authentication Codes}
\label{sec:extension-attacks}

Section~\ref{sec:sample} briefly discusses cryptographic hash functions. In this section we continue their study, and also treat 
message authentication codes (MACs). We consider constructions of both hash functions and MACs.
These examples provide a further illustration
of the usefulness of equations in the applied pi calculus. On the other
hand, some aspects of the constructions are rather low-level, and we
would not expect to account for all their combinatorial details (e.g.,
the ``birthday attacks''~\cite{Menezes96}). A higher-level task is to express and
reason about protocols treating hash functions and MACs as primitive; this is squarely
within the scope of our approach.

\subsection{Using MACs}\label{sec:usingMACs}

MACs serve to authenticate messages using shared keys.  
When $k$ is a key and $M$ is a message, and $k$ is known only to 
a certain principal $A$ and to the recipient~$B$ of the message,
$B$ may take $\mac{k}{M}$ as proof that $M$ comes from~$A$. More precisely, 
$B$ can check $\mac{k}{M}$ by recomputing it
upon receipt of $M$ and $\mac{k}{M}$, 
and reason that $A$ must be the sender of~$M$. 
This property should hold even if $A$ generates MACs for other messages as well; 
those MACs should not permit forging a MAC for $M$. In the worst case, it
should hold even if $A$ generates MACs for other messages on demand. 

Using a new binary function symbol $\Const{mac}$, we may describe this scenario 
by the following processes:
\eqns{
A & \eqdef & \Repl {\Rcv{a}{x}.\Snd{b}{(x,\mac k x)}} \\
B & \eqdef & \Rcv{b}{y}.\IfThen{\mac k {\Const{fst}(y)}}{\Const{snd}(y)}{\Snd{c}{\Const{fst}(y)}}\\
S & \eqdef & \nu k.(A \parop B)
}%
The process $S$ represents the complete system, composed of $A$ and $B$; 
the restriction on $k$ means that $k$ is private to $A$ and $B$. 
The process $A$ receives messages on a public channel $a$ and returns them MACed
on the public channel $b$. When $B$ receives a message on $b$, it checks its MAC and  
acts upon it, here simply by forwarding on a channel $c$. 
Intuitively, we would expect that 
$B$ forwards on~$c$ only a message that $A$ has MACed. In other words,
although an attacker may intercept, modify, and inject messages on $b$,
it should not be able to forge a MAC and trick $B$ into forwarding some other message.
Hence, 
every message output on $c$ equals a
  preceding input on $a$, as illustrated in Figure~\ref{fig:goodtrace}.

\begin{figure*}[t]
\eqns{
\nu k.(A \parop B) & \ltr{{\Rcv{a}{M}}}  & \nu k.(A \parop B \parop \Snd{b}{(M,\mac k M)})\\
   & \ltr{{\nu x.\Snd{b}{x}}}  & \nu k.(A \parop B \parop \{\subst{(M,\mac k M)}{x}\})\\
   & \ltr{\Rcv{b}{x}}\rightarrow &  
   \nu k.(A \parop \Snd{c}{M} \parop \{\subst{(M,\mac k M)}{x}\})\\
   & \ltr{\nu y.\Snd{c}{y}}& \nu k.(A \parop \{\subst{(M,\mac k M)}{x}, \subst{M}{y}\})
}%

\caption{A correct trace}\label{fig:goodtrace}
\end{figure*}

This property can be expressed precisely in terms of the labelled semantics
and it can be checked without too much difficulty
when $\Const{mac}$ is a primitive function symbol with no equations. 
The property remains true even if there is a function $\Const{extract}$ that
maps a MAC $\Const{mac}(x,y)$ to the underlying cleartext~$y$,
with the equation $\Const{extract}(\Const{mac}(x,y)) = y$. Since MACs are
not supposed to guarantee secrecy, such a function may well exist,
so it is safer to assume that it is available to the attacker.

The property is more delicate if $\Const{mac}$ is defined from other
operations, as it invariably is in practice.  In that case, the property
may even be taken as \emph{the} specification of MACs~\cite{GoldwasserBellare}. Thus, a MAC
implementation may be deemed correct if and only if the process $S$ works as
expected when $\Const{mac}$ is instantiated with that implementation.
More specifically, the next section deals with the question of whether
the property remains true when $\Const{mac}$ is defined from hash
functions.

\subsection{Constructing Hash Functions and MACs}\label{SUBSEC:CONSTRUCTING}

In Section~\ref{sec:sample}, we give no equations for hash
functions. In practice, following Merkle and Damg{\r{a}}rd, hash functions are commonly defined by
iterating a basic binary compression function, which maps two input
blocks to one output block~\cite{Menezes96}.  Furthermore, keyed hash functions
include a key as an additional argument. Thus, we may have:
\begin{eqnarray}
\hbin{x}{\cons{y_0}{\cons{y_1}{z}}} &=& \hbin{\fbin{x}{y_0}}{\cons{y_1}{z}}\label{eq:h-iter}\\
\hbin{x}{\cons{y}{\Const{nil}}} &=& \fbin{x}{y}\label{eq:h-end}
\end{eqnarray} 
Here, we use the sorts $\Block$ for blocks and $\BlockList$ for sequences of blocks,  defined as lists as in Section~\ref{sec:sample},
with sorts $\Block$ and $\BlockList$ instead of $\Data$ and $\DataList$, respectively.
The function $\Const{h} :  \Block \times \BlockList \rightarrow \Block$ is the keyed hash function, 
$\Const{f} : \Block \times \Block \rightarrow \Block$ is the compression function.

In these equations we are rather abstract in our treatment of
blocks, their sizes, and therefore of padding and other related
issues.  We also ignore two common twists: some
functions use initialization vectors to start the iteration, and
some append a length block to the input.  Nevertheless, we can explain
various MAC constructions, describing flaws in some and reasoning
about the properties of others.

A first, classical definition of a MAC from a keyed hash
function $\Const{h}$ is:
\eqns{
\mac{x}{y} & \eqdef & \hbin{x}{y}
}%
For instance, the MAC of a three-block message $M = \cons{M_1}{\cons{M_2}{\cons{M_3}{\Const{nil}}}}$ with key~$k$ is $\mac{k}{M} = \fbin{\fbin{\fbin{k}{M_1}}{M_2}}{M_3}$.
More generally, the MAC of a $n$-block message $M = \cons{M_1}{\cons{\dots}{\cons{M_{n}}{\Const{nil}}}}$ is $\mac{k}{M} = \Const f(\dots(\fbin{k}{M_1},\ldots),\allowbreak M_{n})$.
This implementation is subject to a well-known extension attack.
Given the MAC of $M = \cons{M_1}{\cons{\dots}{\cons{M_{n}}{\Const{nil}}}}$, an attacker can compute the MAC of 
any extension $M\bappend N = \cons{M_1}{\cons{\dots}{\cons{M_{n}}{\cons{N}{\Const{nil}}}}}$ without knowing the MAC key, since
$\mac{k}{M\bappend N}  = \fbin{\mac{k}{M}}{N}$.

We describe the extension attack formally, through the operational semantics of
the process $S$ of Section~\ref{sec:usingMACs},
in Figures~\ref{fig:attack} and~\ref{fig:attackbis}. These figures use the semantics of Sections~\ref{subsec:labopsem} and~\ref{SUBSEC:REFINING}
respectively. In both cases, we assume $k \not\in\fn(M)\cup\fn(N)$.
Additionally, we adopt the sorts
  $\Const{pair}: \BlockList \times \Block \rightarrow \Const{Data}$,
  $\Const{fst}: \Const{Data}\rightarrow \BlockList$,
and $\Const{snd}: \Const{Data}\rightarrow \Block$,
the abbreviation $(M,N)$ for $\Const{pair}(M,N)$, and the equations~\eqref{eq:fst} and~\eqref{eq:snd} of Section~\ref{sec:sample}.
In Figures~\ref{fig:attack} and~\ref{fig:attackbis}, we see that the message $M$ that the system MACs differs 
from the message $M\bappend N$ that it forwards on~$c$.
These transitions are not enabled with the primitive MAC of Section~\ref{sec:usingMACs},
hence $S$ with the proposed
MAC implementation is not labelled bisimilar to $S$ with the primitive MAC.

\begin{figure*}[t]
\eqns{
\nu k.(A \parop B) & \ltr{{\Rcv{a}{M}}}  & \nu k.(A \parop B \parop \Snd{b}{(M,\mac k M)})\\
   & \ltr{{\nu x.\Snd{b}{x}}}  & \nu k.(A \parop B \parop \{\subst{(M,\mac k M)}{x}\})\\
   & \ltr{\Rcv{b}{(M\bappend N,\fbin{\Const{snd}(x)}{N})}}\rightarrow &  
   \nu k.(A \parop \Snd{c}{M\bappend N} \parop \{\subst{(M,\mac k M)}{x}\})\\
   & \ltr{\nu y.\Snd{c}{y}}& \nu k.(A \parop \{\subst{(M,\mac k M)}{x}, \subst{M\bappend N}{y}\})
}%
\caption{An attack scenario}\label{fig:attack}
\end{figure*}

\begin{figure*}
\eqns{
\nu k.(A \parop B) & \ltr{{\Rcv{a}{M}}}  & \nu k.(A \parop B \parop \Snd{b}{(M,\mac k M)})\\
   & \ltr{{\nu x.\Snd{b}{(M,x)}}}  & \nu k.(A \parop B \parop \{\subst{\mac k M}{x}\})\\
   & \ltr{\Rcv{b}{(M\bappend N,\fbin{x}{N})}}\rightarrow &  
   \nu k.(A \parop \Snd{c}{M\bappend N} \parop \{\subst{\mac k M}{x}\})\\
   & \ltr{\Snd{c}{M\bappend N}}& \nu k.(A \parop \{\subst{\mac k M}{x}\})
}%
\caption{An attack scenario (with refined labels)}\label{fig:attackbis}
\end{figure*}

There are several ways to address extension attacks, and indeed the
literature contains many MAC constructions that are not subject to
these attacks. We have considered some of them.
Here we describe a construction that uses the MAC key twice:
\eqns{
\mac x y       & \eqdef & \fbin{x}{\hbin{x}{y}}
}%
Under this definition, the MAC of $M = \cons{M_1}{\cons{M_2}{\cons{M_3}{\Const{nil}}}}$ with key~$k$ is $\mac{k}{M} = \Const{f}(k,\allowbreak\Const{f}(\Const{f}(\Const{f}(k,\allowbreak M_1), \allowbreak M_2), \allowbreak M_3))$, and
the process $S$ forwards
on $c$ only a message that it has previously MACed, as desired.

Looking beyond the case of $S$, 
we can prove a more general result by comparing 
the situation where $\Const{mac}$ is primitive (and has no special equations)
and one with the definition of $\mac x y$ as $\fbin{x}{\hbin{x}{y}}$.
Given a name $k$ and an extended process $C$ that uses 
the symbol $\Const{mac}$, we write
$\tr{C}$ for the translation of $C$ in which the definition of $\Const{mac}$
is expanded wherever the key $k$ is used, 
with $\fbin {k}{\hbin{k}{M}}$ replaced for $\mac {k} M$.
The theorem says that this translation yields an equivalent process
(so, intuitively, the constructed MACs work as well as the primitive ones).
It applies to a class of equational theories generated by rewrite rules.

\begin{theorem} \label{th:mac}
Suppose that the signature $\Sigma$ is equipped with 
an equational theory generated by a convergent rewrite system such that
$\Const{mac}$ and $\Const{f}$ do not occur in the left-hand sides of rewrite rules;
the only rewrite rules with $\Const{h}$ at the root of the left-hand side are
those of~\eqref{eq:h-iter} and~\eqref{eq:h-end} oriented from left to right;
there are no rewrite rules with $::$ nor $\Const{nil}$ at the root of the left-hand side;
and names do not occur in rewrite rules.
  Suppose that $C$ is closed and the name $k$ appears only as first argument of $\Const{mac}$ in $C$. 
Then $\nu k. C \bicong \nu k.\tr{C}$.
\end{theorem}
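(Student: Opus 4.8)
The plan is to invoke Theorem~\ref{THM:OBSERVATIONAL-LABELED} to reduce the claim to exhibiting a labelled bisimulation, and to build that bisimulation from the syntactic translation $\tr{\cdot}$, extended from processes to terms and frames by replacing every subterm $\mac{k}{M}$ by $\fbin{k}{\hbin{k}{M}}$. First I would record the bookkeeping facts: $\tr{\cdot}$ commutes with substitution on the terms that actually arise, and the property ``$k$ occurs only as a first argument of $\Const{mac}$'' is an invariant of reduction and of labelled transitions, up to $\equiv$. This uses that $C$ is closed, that a restricted $k$ cannot appear in an input label (so the environment-supplied term never mentions $k$, and $\tr{}$ acts as the identity on it), and that \rulename{Comm}, \rulename{In}, and \rulename{Out-Var} only relocate terms already present; the only delicate rules are \rulename{Then}/\rulename{Else} and \rulename{Rewrite}, which are handled by the term lemma below.

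The heart of the argument is a lemma on terms: \emph{for all ground terms $M$ and $N$ in which $k$ occurs only under $\Const{mac}$, we have $M = N$ iff $\tr{M} = \tr{N}$ in the equational theory of $\Sigma$.} I would prove it by analysing normal forms, which exist and are unique since the rewrite system is convergent. Because $\Const{mac}$ and $\Const{f}$ never occur in left-hand sides and names never occur in rules, a subterm $\mac{k}{P}$ is in normal form exactly when $P$ is, and behaves as an opaque leaf. Dually, since the only $\Const{h}$-rules are \eqref{eq:h-iter} and \eqref{eq:h-end} and no rule is rooted at $::$ or $\Const{nil}$, the normal form of $\fbin{k}{\hbin{k}{P}}$ is $\fbin{k}{\fbin{\cdots\fbin{k}{P_1}\cdots}{P_n}}$ when $\mathrm{nf}(P)$ is a block list $\cons{P_1}{\cdots\cons{P_n}{\Const{nil}}}$, and $\fbin{k}{\hbin{k}{\mathrm{nf}(P)}}$ otherwise; in either case this normal form, read as a function of $\mathrm{nf}(P)$, is injective, and its outermost symbol is $\Const{f}$ with a leading $k$ that no context can produce without $k$. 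Hence $\tr{\mac{k}{P}} = \tr{\mac{k}{P'}}$ iff $P = P'$, and, matching each opaque $\mac{k}{\cdot}$ blob on one side with its $\tr{\cdot}$-image on the other by structural induction on the common normal-form skeleton, the lemma follows. Applying it to the underlying substitution of a closed frame gives $\nu k.\varphi \enveq \nu k.\tr{\varphi}$ whenever $k$ occurs only under $\Const{mac}$ in $\varphi$, i.e.\ Condition~\ref{ppone} of Definition~\ref{def:wkbisim} for related processes.

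For the dynamic conditions I would prove an operational correspondence: if $\nu k.D \to E$ with $D$ satisfying the invariant and $E$ closed, then $E \equiv \nu k.D'$ with $D'$ satisfying the invariant and $\nu k.\tr{D} \to^*\equiv \nu k.\tr{D'}$, and symmetrically; likewise for each labelled transition $\ltr{\alpha}$, noting that $\tr{\alpha} = \alpha$ since labels never mention $k$. The structural rules, \rulename{Comm}, \rulename{In}, and \rulename{Out-Var} translate literally; \rulename{Then}/\rulename{Else} and \rulename{Rewrite} are handled by the term lemma (a conditional takes the same branch on both sides, and rewriting inside a frame preserves static equivalence by Lemma~\ref{LEM:INVARIANT-STATIC-EQ} together with the lemma). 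Then the relation $\mathcal R = \{(\nu k.D,\ \nu k.\tr D)\}$, ranging over closed $D$ in which $k$ occurs only as a first argument of $\Const{mac}$, closed under $\equiv$ on each side and symmetrised, is a labelled bisimulation; since the hypotheses give $(\nu k.C, \nu k.\tr C)\in\mathcal R$, we obtain $\nu k.C \wkbisim \nu k.\tr C$, and Theorem~\ref{THM:OBSERVATIONAL-LABELED} yields $\nu k.C \bicong \nu k.\tr C$.

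The main obstacle is the term lemma, and within it the case where the environment drives a genuine block list into the MAC argument so that $\hbin{k}{\cdot}$ unfolds: one must check that the resulting nested-$\Const{f}$ term is still as opaque as the abstract $\mac{k}{\cdot}$ blob and still recovers its argument injectively. This is exactly where the outer $\fbin{k}{\cdot}$ wrapper is essential, since without it one would have $\tr{\mac{k}{M\bappend N}} = \fbin{\tr{\mac{k}{M}}}{N}$, letting the environment mount an extension attack and breaking static equivalence. A secondary nuisance is maintaining the ``$k$ only under $\Const{mac}$'' invariant across \rulename{Rewrite} and reconciling $\tr{\cdot}$ with active substitutions, which is why the invariant is phrased up to $\equiv$ and suitable representatives of frames are chosen.
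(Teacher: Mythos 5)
Your proposal is correct and follows essentially the same route as the paper's proof: a term-level injectivity lemma ($M=N$ iff $\tr{M}=\tr{N}$ when $k$ occurs only under $\Const{mac}$) established by normal-form analysis under the convergent rewrite system, an operational correspondence showing reductions and labelled transitions commute with $\tr{\cdot}$ while preserving the occurrence invariant on $k$, and the conclusion via Theorem~\ref{THM:OBSERVATIONAL-LABELED} applied to the relation $\{(\nu k.D,\nu k.\tr{D})\}$. The paper additionally factors the operational correspondence through its partial normal forms, but that is a technical device rather than a different idea.
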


In the proof of this theorem (which is given in Appendix~\ref{app:constructing}), 
we use the same notion of partial normal form as in the proof
of Theorem~\ref{THM:OBSERVATIONAL-LABELED}. We define a relation $\rel$ by 
$A \rel B$ if and only if
$A$ and $B$ are closed,
$A \equiv \Res{k}C$, $B \equiv \Res{k}\tr{C}$,
$C$ is a closed extended process in partial normal form, 
and the name $k$ appears only as MAC key in $C$.
We show that the relation $\rel \cup \rel^{-1}$ (that is, the union of $\rel$ with its inverse relation) is a labelled bisimulation.
Static equivalence follows from the preservation of equality 
by the translation $\tr{\cdot}$ for terms in which $k$ occurs 
only as MAC key;
reductions commute with the translation $\tr{\cdot}$ and
  preserve the restriction on the occurrences of the key~$k$.
We conclude by Theorem~\ref{THM:OBSERVATIONAL-LABELED}.
An alternative proof of similar complexity would show that
$\rel \cup \rel^{-1}$ is an observational bisimulation.

  Theorem~\ref{th:mac} considers a single MAC key at a time. For an extended
  process with several MAC keys $k_1, \dots, k_n$, we can apply
  Theorem~\ref{th:mac} once for each key $k_i$, using structural equivalence to
  move each restriction $\nu k_i$ to the root of the extended process.

  Theorem~\ref{th:mac} allows cryptographic primitives other than hash functions and
MACs, provided the assumptions on the equational theory are satisfied.
The following corollary states a simple special case for the primitives
mentioned in this section. It suffices for treating the system $S$.

\begin{corollary}\label{cor:mac}
Suppose that the signature $\Sigma$ is equipped with the equational theory
defined by the equations~\eqref{eq:fst}, \eqref{eq:snd}, \eqref{eq:seq}, \eqref{eq:++}, \eqref{eq:h-iter}, and
\eqref{eq:h-end}.
Suppose that $C$ is closed and the name $k$ appears only as first argument of $\Const{mac}$ in $C$. 
Then $\nu k. C \bicong \nu k.\tr{C}$.
\end{corollary}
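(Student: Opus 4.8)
The plan is to derive Corollary~\ref{cor:mac} directly from Theorem~\ref{th:mac}: it suffices to exhibit a convergent rewrite system that generates the stated equational theory and to check that this system satisfies the syntactic side conditions of the theorem. First I would orient each defining equation from left to right, obtaining the rules $\Const{fst}((x,y)) \to x$, $\Const{snd}((x,y)) \to y$, $\Const{hd}(\cons{x}{y}) \to x$, $\Const{tl}(\cons{x}{y}) \to y$, $\Const{nil}\bappend x \to \cons{x}{\Const{nil}}$, $(\cons{x}{y})\bappend z \to \cons{x}{(y\bappend z)}$, together with equations~\eqref{eq:h-iter} and~\eqref{eq:h-end} read from left to right. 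The congruence generated by this system is exactly the equational theory defined by equations~\eqref{eq:fst}, \eqref{eq:snd}, \eqref{eq:seq}, \eqref{eq:++}, \eqref{eq:h-iter}, and~\eqref{eq:h-end}, so the remaining work is to verify convergence and the hypotheses of Theorem~\ref{th:mac}.

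For termination, I would use a weight interpretation: the projection, head, and tail rules are size-decreasing, since the right-hand side is a proper subterm of the left; the rule $(\cons{x}{y})\bappend z \to \cons{x}{(y\bappend z)}$ strictly shrinks the first argument of $\bappend$ while creating one $\cons{}{}$ node, $\Const{nil}\bappend x \to \cons{x}{\Const{nil}}$ removes one $\bappend$, equation~\eqref{eq:h-iter} removes one $\cons{}{}$ from the list argument of $\Const{h}$ while introducing one $\Const{f}$, and equation~\eqref{eq:h-end} removes one $\Const{h}$. Assigning coefficients to $\bappend$ and $\Const{h}$ large enough to dominate any $\cons{}{}$ and $\Const{f}$ nodes created, with term size breaking ties, makes every rule strictly decreasing (equivalently, a recursive path ordering with $\bappend, \Const{h} \succ \cons{}{}, \Const{f}$ suffices). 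For confluence, I would observe that the system is left-linear and has no critical pairs: the left-hand sides have pairwise distinct root symbols except for~\eqref{eq:h-iter} and~\eqref{eq:h-end}, whose left-hand sides $\hbin{x}{\cons{y_0}{\cons{y_1}{z}}}$ and $\hbin{x}{\cons{y}{\Const{nil}}}$ cannot be unified since that would force $\cons{y_1}{z} = \Const{nil}$, and no left-hand side unifies with a non-variable proper subterm of another (those subterms are variables or have $\cons{}{}$, $\Const{pair}$, or $\Const{nil}$ at the root, none of which heads a left-hand side). A left-linear system with no critical pairs is locally confluent, and with termination this gives convergence by Newman's lemma.

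It then remains to check the side conditions of Theorem~\ref{th:mac}. The symbol $\Const{mac}$ occurs in no rule, hence in no left-hand side; $\Const{f}$ occurs only on right-hand sides; the only rules with $\Const{h}$ at the root of the left-hand side are~\eqref{eq:h-iter} and~\eqref{eq:h-end}, already oriented from left to right; the remaining rules have $\bappend$, $\Const{hd}$, or $\Const{tl}$ at the root of the left-hand side, so no rule has $\cons{}{}$ or $\Const{nil}$ at the root; and no rule mentions a name. Since, by hypothesis, $C$ is closed and $k$ occurs in $C$ only as the first argument of $\Const{mac}$, all premises of Theorem~\ref{th:mac} hold, and it yields $\nu k.C \bicong \nu k.\tr{C}$.

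The only genuine effort beyond invoking Theorem~\ref{th:mac} is establishing convergence, and within that the delicate point is termination, because equation~\eqref{eq:h-iter} rebuilds the $\Const{h}$-headed term and the $\bappend$ rules manufacture fresh $\cons{}{}$ nodes; I expect a suitably weighted interpretation to settle this routinely, and the complete absence of overlaps between the left-hand sides makes the confluence part essentially immediate.
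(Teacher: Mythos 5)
Your proposal is the paper's own proof: orient the equations left to right, verify convergence, and invoke Theorem~\ref{th:mac}; your check of the side conditions and your confluence argument (no critical pairs, since the only root overlaps are between the two $\bappend$ rules and between the two $\Const{h}$ rules, and neither pair of left-hand sides unifies) coincide with the paper's.

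The one step that would fail as written is the termination measure. An additive weight assignment with term size as tie-breaker cannot order the rule $(\cons{x}{y}) \bappend z \rightarrow \cons{x}{(y\bappend z)}$: both sides contain exactly one $\bappend$ and one $::$ (no symbol is created or destroyed on balance, contrary to your remark that a $::$ node is created) and both sides have the same size, so any such interpretation gives them equal weight. Your informal observation that this rule ``strictly shrinks the first argument of $\bappend$'' is the right idea, but coefficients plus size do not capture it; the paper turns it into a fourth lexicographic component, the sum over all occurrences of $\bappend$ of the lengths of their first arguments, taken after size, the number of $\bappend$'s, and the number of $::$'s. Your parenthetical fallback to a recursive path ordering also needs a refinement: with multiset status on $\Const{h}$, rule~\eqref{eq:h-iter} fails, because the new subterm $\fbin{x}{y_0}$ on the right is not dominated by $\cons{y_0}{\cons{y_1}{z}}$ (the variable $x$ does not occur in it); one must give $\Const{h}$ lexicographic status with its second argument compared before its first, which is exactly the caveat the paper records. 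With either repair the termination argument goes through and the corollary follows as you describe.
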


\subsection{Constructing Robust Hash Functions}\label{SUBSEC:INDIF}

Constructions of hash functions, of the kind described in
Section~\ref{SUBSEC:CONSTRUCTING}, typically impose constraints on the
use of these functions. For example, some care is needed in order to
thwart extension attacks in the definition of MACs. The possibility of
such attacks stems from structural flaws in the constructions; details
such as the iteration of a compression function are not completely
hidden, lead to unwanted additional properties, and can be exploited.

A line of work in cryptography studies safer hash functions with
stronger guarantees~\cite{Coron}. 
Although these functions are
generally built much as in Section~\ref{SUBSEC:CONSTRUCTING} by
iterating a compression function, their design conceals their inner
structure. The functions thus aim to behave like abstract ``random oracles''
on inputs of arbitrary length. A notion of indifferentiability captures this goal.

In this section, as a final, more advanced example, we describe one
design that strengthens the Merkle-Damg{\r{a}}rd approach, following Coron et al.~\cite[Section 3.4]{Coron}. 
In this example, the attacker is given only indirect access to functions
such as the hash function $\Const{h}$. We model this restriction
by inserting a private name $k$ as the first argument of $\Const{h}$.
(Cryptographically, the name $k$ may reflect the initial random sampling
of $\Const{h}$.)
We refer to this argument as a key, of sort $\Key$.
We use sorts $\Block$ for blocks and $\BlockList$ for sequences of blocks, defined as lists as in Section~\ref{sec:sample},
with sorts $\Block$ and $\BlockList$ instead of $\Data$ and $\DataList$, respectively.
We use sort $\BlockPair$ for pairs of blocks, with
$\Const{pair} : \Block \times \Block \rightarrow \BlockPair$,
$\Const{fst} : \BlockPair \rightarrow \Block$,
and
$\Const{snd} : \BlockPair \rightarrow \Block$,
the abbreviation $(x,y)$ for $\Const{pair}(x,y)$,
and the equations
\begin{equation}
\Const{fst}((x,y)) = x \qquad \Const{snd}((x,y)) = x
\qquad (\Const{fst}(x),\Const{snd}(x)) = x
\label{eq:pairs-all}
\end{equation}
The third equation of~\eqref{eq:pairs-all} is not present in Section~\ref{sec:sample}; 
it models that all elements of sort $\BlockPair$ are pairs.
We use sort $\BlockT$ for pairs of a $\BlockPair$ and a $\Block$
defined in the same way with overloaded function symbols $\Const{pair}$,
$\Const{fst}$, and $\Const{snd}$, and sort $\Bool$ for booleans.

We define the hash function $\Const{h} : \Key \times \BlockList \rightarrow \Block$ by:
\begin{align}
\Const{h}(k,z) & {} = \Const{h}_2(k,(0,0),z)\label{eq:h}
\end{align}
where
\begin{align}
\Const{h}_2(k,x,\Const{nil}) & {} = \Const{fst}(x) \label{eq:h2nil}\\
\Const{h}_2(k,x,\cons{y}{z}) & {} = \Const{h}_2(k,\Const{f}(k,(x,y)),z) \label{eq:h2cons}
\end{align}
The function $\Const{h}_2  : \Key \times \BlockPair \times \BlockList \rightarrow \Block$ uses a compression function $\Const{f} : \Key \times \BlockT \rightarrow \BlockPair$. 
In $\Const{h}_2(k,x,z)$, the variable $x$ represents the fixed-size
internal state of the hash function and $z$ is the remainder of the
input. The internal state starts at $(0,0)$ and is updated by
applications of the compression function $\Const{f}$ as input blocks are
processed. Finally, only the first half of the internal state is returned.

For instance, 
the hash of a two-block message $M = \cons{M_1}{\cons{M_2}{\Const{nil}}}$ with key~$k$ is 
$\Const{h}(k,M)
 = \Const{fst}(\Const f 
 ( k, (\Const f(k, ((0,0), M_1)),M_2)))$.
More generally, we have
\[\Const h(k, \cons{M_1}{\cons{\dots}{ \cons{M_n}{\Const{nil}}}}) 
= \Const{fst}( \Const f(k,(\dots \Const f(k,((0,0),M_1))\dots,M_n)))\]

Indifferentiability requires that the hash function behave like a
black box (like a ``random oracle''), even in interaction with an
adversary that also has access to the underlying compression function.
The compression function and the hash function are related, of course.
However, as far as the adversary can tell, it is the compression function
that may be defined from the hash function (in fact, from an ideal hash function without equations as in Section~\ref{sec:sample}) rather than the other way
around. Thus, we express indifferentiability as the
equivalence of two systems, each of which provides access to the hash
function and the compression function. In the applied pi calculus, one of the systems is:
\[\Res{k}{(A^0_h \parop A^0_f)}\]
where the processes
\begin{align*}
A^0_h & {} = \Repl{\Rcv{c_h}{y}. \IfThen{\neseq(y)}{\Const{true}}{\Snd{c'_h}{ \Const{h}(k,y) }}} \\
A^0_f & {} = \Repl{\Rcv{c_f}{x}. \Snd{c'_f}{ \Const{f}(k,x) }} 
\end{align*}
answer requests to evaluate $\Const{h}$ and $\Const{f}$ with key $k$.
We restrict ourselves to hashes of non-empty sequences of blocks. In practice,
one never hashes the empty string, because the input of the hash function is padded to a non-zero multiple of the block length. This restriction is important in this example, because the definition of $\Const h$ yields $\Const h(k, \Const{nil}) = 0$, and this special hash value would break indifferentiability.
In order to enforce this restriction,
we use symbols $\Const{true}: \Bool$ and
$\neseq : \BlockList \rightarrow \Bool$,
with equations
\begin{equation}
\neseq(\cons{x}{\Const{nil}}) = \Const{true}\qquad\qquad
\neseq(\cons{x}{\cons{y}{z}}) = \neseq(\cons{y}{z})
\label{eq:neseq}
\end{equation}
The term $\neseq(M)$ is equal to $\Const{true}$ when $M$ is a non-empty list.

The other system offers an analogous interface, for an ideal hash function
$\Const{h}' : \Key \times \BlockList \rightarrow \Block$ 
and for a stateful compression function built from
$\Const{h}'$:
\[\Res{k}{(A^1_h \parop A^1_f)}\]
The process $A^1_h$ answers requests to evaluate an ideal hash function $\Const{h}'$:
\begin{align*}
A^1_h & {} = \Repl{\Rcv{c_h}{y}. \IfThen{\neseq(y)}{\Const{true}}{\Snd{c'_h}{ \Const{h}'(k,y) }}}
\end{align*}
and $A^1_f$ simulates the compression function using $\Const{h}'$. The
code for $A^1_f$, which is considerably more intricate, captures the
core of the security argument as it might appear in the cryptography literature.
(The paper by Coron et al.~\cite{Coron} omits this argument and, as far as we know, this argument does not appear elsewhere.)
{\allowdisplaybreaks\begin{align*}
A^1_f & {} = \Res{\ell,c_s}({\Repl{\Rcv{c_s}{s}. \Rcv{c_f}{x}. \Snd{\ell}{x,s,s}}  \parop \Repl{Q} \parop \Snd{c_s}{\cons{((0,0),\Const{nil})}{\Const{nil}}}})
\\
Q & {} = \begin{array}[t]{@{}l}
\Rcv{\ell}{x,t,s}.
  \IfThenElse{ t }{ \Const{nil} } { P_0 }{}\\ 
    \IfThenElse{ \Const{fst}(\Const{hd}(t))}{\Const{fst}(x)}{ P_1 }{ 
    \Snd{\ell}{x,\Const{tl}(t),s}}
\end{array}\\
P_0 & {} = \Snd{c'_f}{
  \Const{f}'(k, x) }
  \parop  \Snd{c_s}{s} 
\\
P_1 & {} = 
\begin{array}[t]{@{}l}
\Let z { \Const{snd}(\Const{hd}(t)) } \\*
\Let {z'}{ z \bappend \Const{snd}(x) } \\*
\Let r {(\Const{h}'(k, z'), \Const{f}_c(k,z'))}\\*
\Snd{c'_f}{r} \parop \Snd{c_s}{\cons{(r,z')}{s}}
\end{array}
\end{align*}}%
In this definition,  $\Let x M P$ is syntactic sugar for 
$P\{\subst{M}{x}\}$, $\Rcv{\ell}{x,t,s}.P$ is syntactic sugar for
$\Rcv{\ell}{y}.\Let x {\Const{fst}(y)} \Let t {\Const{fst}(\Const{snd}(y))} \Let s {\Const{snd}(\Const{snd}(y))} P$ where $y$ is a fresh variable,
and $\Snd{\ell}{x,t,s}$ is syntactic sugar for $\Snd{\ell}{(x,(t,s))}$,
with the appropriate sorts and overloading of the function symbols for pairs.
The function symbol 
$\Const{f}' : \Key \times \BlockT \rightarrow \BlockPair$
represents the compression function outside the domain used for implementing
the hash function,
and the function symbol $\Const{f}_c : \Key \times \BlockList \rightarrow \Block$ represents the second projection of the compression function inside that domain.
The channel $c_s$ maintains global private state, a lookup table
that maps each term $(\Const{h}'(k, M), \Const{f}_c(k, M))$ with $M = \cons{M_1}{\cons{\dots}{ \cons{M_n}{\Const{nil}}}}$ built as a result
of previous compression requests to the term
$M$, and initially maps $(0,0)$ to $\Const{nil}$.
This lookup table is represented as a list of pairs.
Each table element, of sort $\BlockBlockList$, is a pair of a $\BlockPair$ and a $\BlockList$; the table, of sort $\BlockBlockListList$, is a list of $\BlockBlockList$; we overload the function symbols for pairs and lists.
Upon a compression request with input~$x$, 
the process $Q$ looks up $\Const{fst}(x)$ in the table:
$Q$ receives as input $x$, the initial state of the table $s$,
and the tail $t$ of the lookup table. It uses a local channel $l$
for encoding the recursive call.
The auxiliary processes $P_0$ and $P_1$ complete compression requests
in the cases where lookups fail and succeed, respectively.
When a lookup fails, the compression request is outside the domain used for
implementing the hash function, so $P_0$ answers it using $\Const{f}'$,
and leaves the table unchanged.
When a lookup succeeds, we have either
$\Const{fst}(x) = (\Const{h}'(k, M), \Const{f}_c(k, M))$ with $M = \cons{M_1}{\cons{\dots}{ \cons{M_{n-1}}{\Const{nil}}}}$ and $n>0$
or $\Const{fst}(x) = (0,0)$ and we let $M = \Const{nil}$.
The lookup yields $z = M$,
$P_1$ computes $z' = z \bappend \Const{snd}(x)$ and 
returns $r = (\Const{h}'(k, z'), \Const{f}_c(k, z'))$ as result
of the compression request.
The table
is extended by adding the mapping from $r$ to $z'$.

Let us now explain, informally, why this code ensures that the results of the compression function are consistent with those of hash computations. 
The result of a compression request with argument~$x$ needs to be made consistent with the hash function
when 
\begin{equation}
x = (\Const f(k, (\dots \Const f(k,((0,0),M_1))\dots,M_{n-1})), M_n)
\label{eq:consist}
\end{equation}
for some $M_1, \ldots, M_n$ ($n>0$), because in that case
\begin{equation}
\Const h(k, \cons{M_1}{\cons{\dots}{ \cons{M_n}{\Const{nil}}}}) 
= \Const{fst}( \Const f(k,x))
\label{eq:consist2}
\end{equation}
that is, in the system $\Res{k}{(A^0_h \parop A^0_f)}$, the result of the hash request with argument $\cons{M_1}{\cons{\dots}{ \cons{M_n}{\Const{nil}}}}$ computed by $A^0_h$ is equal to the first block of the result of the compression request with argument $x$ computed by $A^0_f$.
We need to have an analogous equality in the system $\Res{k}{(A^1_h \parop A^1_f)}$.
In the system $\Res{k}{(A^0_h \parop A^0_f)}$, equality~\eqref{eq:consist} holds exactly when $\Const{fst}(x)$ is the result of previous compression requests $\Const{f}(k, (\dots \Const f(k,((0,0),M_1))\dots,M_{n-1}))$
for some $M_1, \ldots, M_{n-1}$.
In the system $\Res{k}{(A^1_h \parop A^1_f)}$, the table lookup tests a corresponding condition and, when it succeeds, $P_1$ retrieves
$z = \cons{M_1}{\cons{\dots}{ \cons{M_{n-1}}{\Const{nil}}}}$,
computes $z' = \cons{M_1}{\cons{\dots}{ \cons{M_{n-1}}{\cons{M_{n}}{\Const{nil}}}}}$ since $\Const{snd}(x) = M_n$, and returns $r = (\Const{h}'(k, z'), \Const{f}_c(k, z'))$.
Hence, $\Const{fst}(r) = \Const{h}'(k, z')$ and
the result of the hash request with argument $z' = \cons{M_1}{\cons{\dots}{ \cons{M_n}{\Const{nil}}}}$ computed by  $A^1_h$ is equal to the first block of the result $r$ of the compression request with argument~$x$ computed by $A^1_f$.

Formally, we obtain the following observational equivalence:
\begin{theorem}\label{th:indif} 
$\Res{k}{(A^0_h \parop A^0_f)} 
\bicong 
\Res{k}{( A^1_h \parop A^1_f)} $.
\end{theorem}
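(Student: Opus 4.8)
The plan is to derive the theorem from Theorem~\ref{THM:OBSERVATIONAL-LABELED}, reducing the goal to exhibiting a labelled bisimulation $\rel$ (in the sense of Definition~\ref{def:wkbisim}) that relates $\Res{k}(A^0_h \parop A^0_f)$ and $\Res{k}(A^1_h \parop A^1_f)$, from which $\bicong$ follows. Following the strategy used for Theorem~\ref{th:mac}, I would first describe, using partial normal forms, the extended processes reachable from each side, and then define $\rel$ so that a related pair carries the "same history" of already-answered queries on the public channels $c_h$ and $c_f$. A state reachable from the left is, up to $\equiv$, of the form $\Res{k}(A^0_h \parop A^0_f \parop \sigma_0 \parop R_0)$, where $\sigma_0$ binds the exported variables to the terms $\Const{h}(k,M)$ and $\Const{f}(k,x)$ produced by past queries and $R_0$ collects the finitely many outputs on $c'_h, c'_f$ not yet consumed. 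A state reachable from the right has the corresponding shape $\Res{k}(A^1_h \parop A^1_f \parop \sigma_1 \parop R_1)$, but here $A^1_f$ additionally carries its restricted channels $\ell$ and $c_s$, with $c_s$ holding a lookup table $\tau$ and possibly pending messages on $\ell$ that encode an in-progress traversal of $\tau$.

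The invariant tying a left state to a right state is twofold. First, $\tau$ lists exactly the pairs $((\Const{h}'(k,M'), \Const{f}_c(k,M')), M')$ for the prefixes $M'$ of inputs to the compression requests answered so far, starting from $((0,0),\Const{nil})$ — this is the formal content of the informal consistency discussion that follows the definition of $A^1_f$. Second, there is a "decoding" substitution $\theta$ that replaces each $\Const{h}'(k,M)$ by $\Const{h}(k,M)$ and each iterated $\Const{f}'/\Const{f}_c$ application inside the hash domain by the matching $\Const{fst}/\Const{snd}$ of the world-$0$ iterate of $\Const{f}$, and under which $\sigma_1\theta$ and $R_1\theta$ coincide with $\sigma_0$ and $R_0$. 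Because the $\ell$-traversal and the round-trip on $c_s$ use only restricted channels, they are silent, so I would close $\rel$ under these administrative reductions; then a purely administrative $\rightarrow$ step on the right is matched on the left by zero steps, and a genuine step on either side is matched by the corresponding genuine step on the other followed by the administrative steps it triggers. With this in place, conditions~\ref{pptwo} and~\ref{ppthree} of Definition~\ref{def:wkbisim} are mostly bookkeeping: the only visible inputs are on $c_h$ and $c_f$, and each is matched by the same input on the other side; each output on $c'_h$ or $c'_f$ is matched by the corresponding output with $\theta$ extended accordingly; and the $\neseq$ tests behave identically on the two sides.

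The substantive part is condition~\ref{ppone}, static equivalence of the two frames at every reachable pair. For this I would put terms of each signature in normal form with respect to the rewrite system obtained by orienting the governing equations left to right (the pair projections, and \eqref{eq:h}, \eqref{eq:h2nil}, \eqref{eq:h2cons}, \eqref{eq:neseq} on the left; on the right, the pair equations together with the freeness of $\Const{h}'$, $\Const{f}'$, and $\Const{f}_c$), and argue that $\theta$, restricted to normal forms built from the exported variables and the public symbols, is equality-preserving and equality-reflecting. The crucial points are that the only way the environment can learn anything about the private key $k$ is through the values it is handed back, and that the defining equations for $\Const{h}$ collapse exactly the iterated-$\Const{f}$ terms that the table on the right keeps synchronized — so expanding $\Const{h}$ in world $0$ introduces no extra coincidences among the exported terms, and none are lost in world $1$.

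I expect this frame analysis to be the main obstacle: one must show that $\theta$ induces a bijection between the reachable normal forms of the two worlds and that it reflects equality exactly. This means ruling out, on the one hand, accidental equalities in world $0$ arising from the interaction of the Merkle--Damg\r{a}rd equations with the remaining public symbols ($0$, $\Const{pair}$, $\Const{nil}$, $::$, $\neseq$, $\Const{true}$), and, on the other hand, the possibility that the table in world $1$ is missing an entry that world $0$ would equate — that is, an inductive argument that $Q$'s traversal of $\tau$ reports a hit precisely when the world-$0$ compression argument $x$ equals an iterate $\Const{f}(k,(\dots\Const{f}(k,((0,0),M_1))\dots,M_{n-1}))$, including the degenerate case $\Const{fst}(x)=(0,0)$ with $M=\Const{nil}$ and the empty-list hashes excluded by $\neseq$. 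Carrying this out rigorously is the heart of the proof; once the invariant and the decoding are established, the remaining conditions of Definition~\ref{def:wkbisim} are routine and Theorem~\ref{THM:OBSERVATIONAL-LABELED} yields $\Res{k}(A^0_h \parop A^0_f) \bicong \Res{k}(A^1_h \parop A^1_f)$.
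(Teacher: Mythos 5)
Your overall strategy coincides with the paper's: define a relation indexed by the history of hash and compression requests, close it under the silent administrative reductions on the restricted channels $\ell$ and $c_s$, show it is a labelled bisimulation, and conclude by Theorem~\ref{THM:OBSERVATIONAL-LABELED}. Your invariant on the lookup table (it records exactly the pairs $((\Const{h}'(k,M),\Const{f}_c(k,M)),M)$ for the prefixes reached so far, plus $((0,0),\Const{nil})$) and your treatment of the visible transitions match the paper's relation $\rel$. Where you genuinely diverge is in the static-equivalence step, which you correctly identify as the heart of the matter. You propose a direct ``decoding'' translation $\theta$ from world-$1$ terms to world-$0$ terms and aim to show it preserves and reflects equality on normal forms. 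The paper instead reduces \emph{both} frames to a common frame of fresh names: Lemma~\ref{lem:key-stat-eq} and Corollary~\ref{cor:key-stat-eq} show that any closed substitution whose range consists of terms $f(k,M)$, with $f$ absent from left-hand sides of the rewrite rules, is statically equivalent to a substitution by fresh names identified exactly when the corresponding terms are equal; Lemma~\ref{lem:pairs-stat-eq} then splits those names into pairs, and Lemma~\ref{lem:eq-g} pins down when two iterated-$\Const{f}$ terms coincide. The two frames are thus each equivalent to a name frame, and static equivalence reduces to showing the two identification patterns agree --- which the paper proves by a mutual induction (its properties P1 and P2) establishing that $\Const{fst}(x_i\sigma^0)=\Const{h}(k,M\sigma^0)$ iff $\Const{fst}(x_i\sigma^1)=\Const{h}'(k,M\sigma^1)$. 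This buys a clean separation between the generic ``hash outputs look like fresh names'' argument and the protocol-specific consistency of the table. Your route would have to prove essentially the same two facts (no accidental equalities in world $0$; no missing table entries in world $1$), but packaged as equality-reflection of a term translation; note that your $\theta$ is a replacement of compound terms rather than a substitution, and that $\sigma_1\theta$ agrees with $\sigma_0$ only modulo the equational theory, so the normal-form analysis you defer is not lighter than the paper's --- it is the same induction in different clothing, and as written your proposal leaves it entirely as a plan. There is no error in your outline, but the proof is not complete until that step is carried out.
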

In the proof of this theorem (which is given in Appendix~\ref{app:indif}), 
we define a relation $\rel$ between configurations of the two systems,
and show that $\rel \cup \rel^{-1}$ is a labelled bisimulation. A key step of
this proof consists in proving static equivalence between related
configurations; this step formalizes the informal explanation of the 
process $A_1^f$ given above. We conclude by 
Theorem~\ref{THM:OBSERVATIONAL-LABELED}.

\section{Related Work}\label{sec:related}

This section aims to position the applied pi calculus with respect to
research on process calculi and on the analysis of security protocols.
As discussed in Section~\ref{sec:intro}, the applied pi calculus has
been the basis for much further work since its initial publication; 
this section does not discuss many papers that build on the applied pi
calculus. (Some of those papers, and others, are mentioned elsewhere 
in this paper.)

\subsection{Process Calculi}

The applied pi calculus has many commonalities with the original pi
calculus~\cite{milner:communicating-mobile} and its relatives, such as the spi calculus~\cite{spi2four} (discussed
in Sections~\ref{sec:sample} and~\ref{sec:equivalences}).  In particular, the model of communication adopted in the
applied pi calculus is deliberately classical: communication is
through named channels, and value computation is rather separate from
communication. 

Furthermore, active substitutions are reminiscent of the
constraints of the fusion calculus~\cite{victor:fusion-calculus}.
They are especially close to the substitution environments that Boreale et
al.~employ in their proof techniques for a variant of the spi calculus
with a symmetric cryptosystem~\cite{boreale:techniques}. We
incorporate substitutions into processes, systematize them, and
generalize from symmetric cryptosystems to arbitrary operations and
equations. 

Extensions of the pi calculus are not limited to modelling cryptography:
many extensions and variants of the pi calculus have been designed for diverse applications.
Examples include calculi for mobility, such as the ambient calculus~\cite{Cardelli00},
calculi for modelling biological processes, such as the enhanced pi calculus~\cite{Curti04},
and calculi for service-oriented computing, which
model the contracts that services implement, 
the composition of services, and their protocols~\cite{Orchestration,Lucchi07,CCpi,CruzFilipe14}. 
The psi calculi~\cite{bengtson:psi} provide a general framework parameterized by nominal data types for terms, conditions (generalizing comparison between terms),
and assertions (generalizing our notion of frames) and their operational semantics.
They also give sufficient conditions on these parameters to ensure that the resulting observational equivalence coincides with labelled bisimilarity.
The framework accommodates encodings of the pi calculus and several of its variants, for example ones with fusion~\cite{ExplicitFusion} and
concurrent constraints~\cite{CCpi}.
In particular, Bengtson et al.~give an encoding of the applied pi calculus into their framework.
However, as they explain, the result of this encoding
differs from the applied pi calculus in the way processes interact with contexts.
In particular, an important difference is that, when an encoded process sends a ciphertext, the ciphertext
appears on the label of the transition, and an agent that receives this
message will immediately learn the cleartext and the key. 
In psi calculi, one can avoid such counter-intuitive disclosures by 
explicitly creating and using aliases.
A recent extension of psi calculi~\cite{Borgstrom14,Borgstrom16} addresses these difficulties 
with a new form of pattern matching. 
In contrast, the management of aliases is built into the applied pi calculus, 
facilitating the modelling of 
security-protocol attackers as contexts. 

\subsection{Analysis of Security Protocols}

The analysis of a security protocol generally requires reasoning about its possible executions.
However, the ways of talking about the executions and
their properties vary greatly. We use a process calculus whose
semantics provides a detailed specification for interactions with a context.
Because the process calculus has a proper ``new'' construct 
(like the pi calculus but unlike CSP), it provides a direct
account of the generation of new keys and other fresh quantities. 
It also enables reasoning with equivalence and implementation
relations. 

Reasoning with those
relations is often more challenging than reasoning about trace properties, but it can be worthwhile.
Equivalences are useful, in particular, for modeling privacy properties~\cite{Pfitzmann01},
for instance in electronic voting~\cite{DKR08}.
While proofs of equivalences are difficult to automate in 
general---and observational equivalence is undecidable as noted in Section~\ref{subsec:obseq}---,
several tools support certain automatic proofs of equivalences
in the applied pi calculus and similar languages:
tools have focused on establishing particular kinds of equivalences such as trace
equivalence for bounded processes (that is, processes without 
replication)~\cite{TiuDawson10,Chadha12,CCD-tcs13}
or for restricted classes of unbounded processes~\cite{Chretien15,Chretien15b}.
Although ProVerif initially focused on proofs of trace properties~\cite{Blanchet08c},
it also supports automatic proofs of diff-equivalences, which are equivalences between processes 
that share the same structure and differ only in the choice of
terms~\cite{Blanchet07b}.
A diff-equivalence between two processes requires that the two processes reduce in the
same way, in the presence of any adversary. In particular, the two
processes must have the same branching behaviour. Hence,
diff-equivalence is much stronger than observational equivalence.
Maude-NPA~\cite{Santiago14} and Tamarin~\cite{Basin15} 
also employ that notion. Baudet~\cite{Baudet05ccs,Baudet07} showed that diff-equivalence is decidable for bounded processes: he treated a class of equivalences that model security against off-line guessing attacks in~\cite{Baudet05ccs} and proved the full result in~\cite{Baudet07}.

Furthermore, the use of a process calculus permits treating security
protocols as programs written in a programming notation---subject to
typing~\cite{abadi:secrecy,cardelli:mark,Gordon2004}, to other static analyses~\cite{bodei:control}, and to translations~\cite{Abadi:protection,AbadiFournetGonthier,AbadiFournetGonthier:auth}.
Thus, language-based approaches have led to tools such as ProVerif 
where protocols can be described by programs, and 
analyzed using automated techniques
that leverage 
type systems and Horn clauses~\cite{Abadi04c}. 

The applied pi calculus is also convenient as an intermediate language.
Translations to ProVerif have been implemented from TulaFale (a language for standardized Web-services protocols)~\cite{Bhargavan04}, from F\#~\cite{Bhargavan:dec08}, and from JavaScript
in order to verify protocols, including TLS~\cite{Bhargavan08b}.

As in many other works
(e.g.,~\cite{DY83,DLM,Merritt:thesis,JCRYPT::KemmererMM1994,Lowe,Schneider,Paulson:induct,Mitchell:murphi,strand,spi2four,Dam97:Proving_Trust,amadiolugiez,Escobar06,AVISPA,Cremers08b,Schmidt12}),
our use of the applied pi calculus conveniently avoids matters of
computational complexity and probability. In contrast, other
techniques for the analysis of security protocols employ more concrete
computational models, where principals are basically Turing machines
that manipulate bitstrings, and security depends on the computational
limitations of attackers (e.g.,~\cite{y,gm,gmr,br,GoldwasserBellare}).

Although these two approaches remained rather distinct during the
1980s and 1990s, fruitful connections have now been established
(e.g.,~\cite{Lincoln98,PfScWa00,Abadi2002b,datta:jcs05,BlanchetOakland06,BlanchetPointchevalCrypto06,CLC:ccs,Backes:CoSP,abadi:progress,barthe:techniques}). In
particular, some work interprets symbolic proofs in terms of concrete,
bitstring-based models~\cite{Abadi2002b}, in some cases specifically
studying the ``computational soundness'' of the applied pi
calculus~\cite{bck2009,CLC:ccs,Backes:CoSP}.  
Other work focuses directly
on those concrete models but benefits from notations and ideas from
process calculi and programming languages. For example, the tool
CryptoVerif~\cite{BlanchetOakland06,BlanchetPointchevalCrypto06}
provides guarantees in terms bitstrings, running times, and
probabilities, but its input language is strongly reminiscent of the
applied pi calculus, which influenced it---rather than of Turing
machines.

\section{Conclusion}\label{sec:conclusion}

In this paper, we describe a uniform extension of the pi calculus, the
applied pi calculus, in which messages may be compound values, not
just channel names.  We study its theory, developing its semantics and proof
techniques.  Although the calculus has no special, built-in features
to deal with security, it has proved useful in the analysis of security
protocols.

Famously, the pi calculus is the language of those lively social
occasions where all conversations are exchanges of names.  The applied
pi calculus opens the possibility of more substantial, structured
conversations; the cryptic character of some of these conversations
can only add to their charm and to their tedium.

The previous paragraph closed 
the conference paper that introduced the applied pi calculus in 2001. 
We are now in a
better position to assess the possibility to which it refers.  As we
hoped in 2001, the applied pi calculus has been extensively used for
modeling and for reasoning about security protocols, particularly
ones that rely heavily on cryptography (and less for ones that rely on
simple capabilities). 
The flexibility of the applied
pi calculus is a key enabler for those applications. This flexibility
did not render unnecessary or uninteresting the exploration of
variants and extensions. However, it did allow the applied pi calculus
to remain a relevant core system---it was not displaced by an extended
language with many more constructs.

We are also in a better position to comment on matters of charm and
tedium, alas. It is debatable whether security protocols have become
more charming or more tedious since 2001. It is clear, however, that
they play an ever-growing role, and that 
their security remains problematic.
The evolution of TLS exemplifies these
points. The literature now contains many attacks on TLS, 
e.g.,~\cite{3shake,logjam,vanhoef2015all,drown},
but also several
partial specifications and proofs~\cite{paterson2011tag,DBLP:conf/crypto/JagerKSS12,kpw13,DBLP:conf/sp/CremersHSM16}, sometimes relying on the applied pi
calculus~\cite{tls-toplas,BhargavanSP17}, and sometimes with language-based methods of the kind that
the applied pi calculus started to explore~\cite{BhargavanFKPS13,DBLP:conf/crypto/BhargavanFKPSB14}.  
The state-of-the-art approaches~\cite{paterson2011tag,DBLP:conf/crypto/JagerKSS12,kpw13,BhargavanFKPS13,DBLP:conf/crypto/BhargavanFKPSB14,BhargavanSP17b,BhargavanSP17} 
rely on refined frameworks that consider matters of
computational complexity and probability, which are beyond the
(explicit) scope of the applied pi calculus, as explained above.
In such applications, tools play a helpful role, often an essential
one.  Although they sometimes lead to important insights, manual proofs---in particular, manual proofs of
equivalences---can be rather painful
and tedious. 
(We may have suspected this fact
in 2001; on the basis of our experience since then, we now know it
with certainty.) On the other hand, the relative ease of use of
ProVerif has contributed greatly to the spread of the applied pi
calculus. The applied pi calculus has evolved through
its implementation in ProVerif, and as a result of its use in this context. 
That evolution is, in our opinion, an improvement, so the present paper
aims to reflect it.

Finally, independently of the merits and the future of the applied pi
calculus, we believe that languages, and in particular the formalization 
of attackers as contexts, should continue to play a role in the analysis
of security protocols. The alternatives (defining protocols as
interacting Turing machines?) are not easier. Describing protocols
in a programming notation not only makes them precise but also brings them
into the realm where ideas and tools from programming can support analysis.

\appendix
\setlist[itemize]{leftmargin=*}

\section*{Supplementary Material: Proofs}

The appendix contains proofs and auxiliary definitions needed for
these proofs.  After introducing the notion of simple
contexts and proving Lemma~\ref{LEM:INVARIANT-STATIC-EQ} (Section~\ref{app:simplecontexts}), the bulk of the appendix
is devoted to lemmas and definitions that contribute to the
proof of Theorem~\ref{THM:OBSERVATIONAL-LABELED} (Sections~\ref{app:bigpfpnf} and~\ref{app:bigpfmain}).
Section~\ref{app:disclosure} presents the proof of Lemma~\ref{LEM:DISCLOSURE}.
Section~\ref{app:refining} presents the proofs related to refined labels.
Finally, Sections~\ref{app:constructing} and~\ref{app:indif} present the 
proofs related to the two constructions of hash functions given in Sections~\ref{SUBSEC:CONSTRUCTING} 
and~\ref{SUBSEC:INDIF} respectively.

\section{Simple Contexts and Proof of Lemma~\ref{LEM:INVARIANT-STATIC-EQ}}\label{app:simplecontexts}

In order to work with definitions that refer to contexts, such as 
Definition~\ref{def:bicong}, it is convenient to
generalize structural equivalence from extended processes to contexts.
For this generalization, 
we use the rules of Section~\ref{sec:ope-sem}, except that 
(1) we do not rename bound names and variables whose scope includes the hole; and
(2) in rule \rulename{New-Par}, the hole is considered to contain any name and variable.

Further, in order to avoid special cases in proofs, we often adopt simplifying assumptions on contexts. We say that an evaluation context $\CTX$ is \emph{simple} when (1) no name is both free in $\CTX$ and restricted above the hole; and 
(2) no variable is both in $\dom(\CTX)$ and restricted above the hole.
We say that $\CTX$ is \emph{simple for}~$A$ if, in addition, it is closing for $A$.
These conditions on scopes exclude, for example,
$\Snd{a}{s} \parop \Res{s}(\hole)$
and 
$\{\subst{
  \Const{s}}{
  x}\} \parop \Res{x}(\hole)$.

\begin{lemma}\label{lem:simplecontexts}
Let $A$ be a closed extended process.
Given a simple context $\CTX$ for~$A$, there exists a context $\CTX'$ of the form $\Res{\vect u}(\hole \parop B)$ such that $\CTX \equiv \CTX'$ and all subcontexts of $\CTX'$ are simple for $A$.
\end{lemma}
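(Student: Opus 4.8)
The plan is to argue by structural induction on the evaluation context $\CTX$, viewed as built from the hole by $\hole$, $\CTX \parop D$, $D \parop \CTX$, and $\Res{u}\CTX$, where $D$ ranges over extended processes. Before the induction I would record two observations. First, the set of names and variables restricted above the hole is invariant under structural equivalence of contexts: the rules $\brn{Par-A}$, $\brn{Par-C}$, $\brn{Par-\mbox{$\nil$}}$, $\brn{Repl}$, $\brn{Subst}$, $\brn{Rewrite}$ do not touch restrictions, $\brn{New-C}$ only reorders nested restrictions, $\brn{Alias}$ and the rule $\nu n.\nil \equiv \nil$ apply only to subcontexts not containing the hole, and $\brn{New-Par}$ — read with the stated convention that the hole is treated as containing every name and variable — can never pull a restriction out of, or push one into, a subcontext containing the hole. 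Second, any context of the form $\Res{\vect u}(\hole \parop B)$ is automatically simple, since its free names are $\fn(B)\setminus\{\vect u\}$ and its names restricted above the hole are exactly the names in $\vect u$, so the two are disjoint, and likewise for the domain; hence for such a context ``simple for $A$'' carries only the content that each of its restriction-prefix subcontexts is closing for $A$.

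For $\CTX = \hole$ I would take $\CTX' \eqdef \hole \parop \nil \equiv \hole$; this closes $A$ because $A$ is closed, and its subcontexts $\hole$ and $\hole\parop\nil$ are simple for $A$. For $\CTX = \Res{u}\CTX_1$ I would first note that $\CTX_1$ is simple for $A$: its free names and its domain are contained in those of $\CTX$ together with $\{u\}$; the names and variables restricted above the hole in $\CTX_1$ are those of $\CTX$ minus $u$; $u$ itself cannot be both free and restricted above the hole in $\CTX_1$ under the distinctness convention on bound names; and $\CTX_1[A]$ is closed because $\CTX[A] = \Res{u}\CTX_1[A]$ is closed and well-formed, using that a restricted variable always appears in — hence is free in — the body of its restriction. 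The induction hypothesis then gives $\CTX_1 \equiv \Res{\vect v}(\hole \parop B)$ with all subcontexts simple for $A$, and I would take $\CTX' \eqdef \Res{u}\Res{\vect v}(\hole\parop B)$, which is of the required form; its one new subcontext is $\CTX'$ itself, which is simple by the second observation and is closing for $A$ since $\CTX'[A] \equiv \CTX[A]$.

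The interesting case is $\CTX = \CTX_1 \parop D$ (the case $D\parop\CTX_1$ being symmetric). Again $\CTX_1$ is simple for $A$: its free names and domain lie inside those of $\CTX$, its restricted-above-the-hole sets coincide with those of $\CTX$, and $\CTX_1[A]$ is closed because $\CTX[A]$ is closed and, were a variable of $\fv(D)$ restricted above the hole in $\CTX_1$, it would be free in $\CTX[A]$ while lying neither in $\dom(D)$ (else it would be in $\dom(\CTX)$ and restricted above the hole, against simplicity condition~(2) for $\CTX$) nor in $\dom(\CTX_1[A])$. By the induction hypothesis $\CTX_1 \equiv \Res{\vect v}(\hole \parop B)$, so $\CTX \equiv \Res{\vect v}(\hole\parop B)\parop D$, and I want to pull $\Res{\vect v}$ out past $D$ by $\brn{New-Par}$. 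The crux is that I cannot alpha-rename $\vect v$ here — its scope contains the hole — so I must show outright that $\{\vect v\}\cap(\fv(D)\cup\fn(D))=\emptyset$. The names in $\vect v$ are restricted above the hole in $\CTX$ (since $D$ sits beside, not above, the hole), hence disjoint from $\fn(D)\subseteq\fn(\CTX)$ by simplicity condition~(1) for $\CTX$; the variables in $\vect v$ are disjoint from $\fv(D)$ by exactly the argument just used for closedness of $\CTX_1[A]$. Thus $\brn{New-Par}$ applies, and with $\brn{Par-A}$ and $\brn{Par-C}$ I obtain $\CTX \equiv \Res{\vect v}(\hole \parop (B\parop D)) \eqdef \CTX'$, of the required form. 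All subcontexts of $\CTX'$ are simple by the second observation (those of shape $\Res{\vect v'}(\hole\parop(B\parop D))$ and $\hole\parop(B\parop D)$), or trivially simple ($\hole$); each is closing for $A$ because $\CTX'[A]\equiv\CTX[A]$ is closed, the variables of $\vect v$ all lie in — hence are free in — $\dom(A\parop B\parop D)$ by well-formedness, so closedness is preserved when only a suffix of $\vect v$ is restricted, and $\hole$ is closing since $A$ is closed.

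I expect the only real obstacle to be the $\brn{New-Par}$ side condition in the parallel case: showing that the restrictions sitting above the hole, which cannot be renamed, do not capture free names or variables of the sibling process. Its resolution spends both clauses of the definition of ``simple'' together with the closedness of $\CTX[A]$. Everything else is routine bookkeeping about free names, domains, and preservation of well-formedness under structural equivalence.
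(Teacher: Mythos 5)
There is a genuine gap in the parallel case of your induction: the claim that $\CTX_1$ is \emph{closing} for $A$ (and hence simple \emph{for} $A$, so that the induction hypothesis applies) is false in general. The failure mode you guard against --- a variable of $\fv(D)$ restricted above the hole in $\CTX_1$ --- is the right condition for the later $\brn{New-Par}$ step, but it is not what closedness of $\CTX_1[A]$ requires. From $\fv(\CTX_1[A]\parop D)=\dom(\CTX_1[A]\parop D)$ one only gets $\fv(\CTX_1[A])\setminus\dom(\CTX_1[A])\subseteq\dom(D)$: a variable of $\CTX_1$ may be free in $\CTX_1[A]$ while being defined only by an active substitution sitting in the sibling $D$. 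Concretely, take $A$ closed, $x\notin\dom(A)$, $M$ ground, and $\CTX=(\hole\parop\Snd{a}{x})\parop\{\subst{M}{x}\}$. This context is simple for $A$ (no restrictions above the hole, and $\CTX[A]$ is closed), but its left subcontext $\CTX_1=\hole\parop\Snd{a}{x}$ is not closing for $A$, since $x\in\fv(\CTX_1[A])\setminus\dom(\CTX_1[A])$. So the induction hypothesis cannot be invoked on $\CTX_1$ as you have set things up. (Your $\brn{New-Par}$ side-condition argument itself is sound, and the restriction case is fine because a restricted variable is always in the domain of the body.)

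The repair is to decouple the two roles that ``simple for $A$'' plays: carry through the induction only the scope conditions (1) and (2) --- which \emph{are} inherited by $\CTX_1$ --- together with well-formedness of $\CTX_1[A]$, and establish that the subcontexts of the final $\Res{\vect u}(\hole\parop B)$ are closing for $A$ in a separate pass at the end, peeling off the outermost restrictions one by one exactly as you do implicitly in your last sentence. This is in effect what the paper does: it avoids the induction on subcontexts of the \emph{original} context altogether, performing a global rename-then-move-then-reassociate construction, and only afterwards checks closedness for the subcontexts of the \emph{resulting} context $\Res{\vect u}(\hole\parop B)$, where the problem above cannot arise.
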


\begin{proof}
We construct the context $\CTX'$ from $\CTX$ as follows:
\begin{enumerate}
\item\label{step:ren} We rename all names and variables bound by restrictions that 
are not above the hole to distinct fresh names and variables.
\item We move all restrictions above the hole in $\CTX$ to the root of $\CTX$.
These moves are possible because the names and variables bound by these restrictions do not occur elsewhere: they are not free since $\CTX$ is simple for $A$ and they are not bound by other restrictions by the renaming of step~\ref{step:ren}. 
\item We reorganize parallel compositions by associativity and commutativity so that the obtained context is of the form $\Res{\vect u}(\hole \parop B)$.
\end{enumerate}
Hence we obtain a context $\CTX' = \Res{\vect u}(\hole \parop B)$ such that
$\CTX \equiv \CTX'$ and $\CTX'$ is closing for $A$, that is, $\CTX'[A]$ is closed.

The subcontexts of $\CTX'$ are $\hole$, $\hole \parop B$, and contexts of the form
$\Res{\vect u'}(\hole \parop B)$ where $\vect u'$ is a suffix of  $\vect u$.
The contexts $\hole$ and $\hole \parop B$ have no names and variables
bound in the hole.
In the contexts $\Res{\vect u'}(\hole \parop B)$, the names and variables
bound in the hole are $\vect u'$, and they are not free.
So all these contexts are simple.

We show that all subcontexts of $\CTX'$ are closing for $A$.
For the context $\CTX'' = \hole$, we have $\CTX''[A] = A$ and we know by hypothesis
that $A$ is closed.
For the other subcontexts, we proceed by removing one by one the elements of $\vect u'$.
\begin{itemize}
\item Suppose that $\Res{x}\CTX''[A]$ is closed where $\CTX'' = \Res{\vect u''}(\hole \parop B)$. We have 
$\fv(\CTX''[A]) \setminus \dom(\CTX''[A]) = \fv(\Res{x}\CTX''[A]) \setminus \dom(\Res{x}\CTX''[A]) = \emptyset$, so $\CTX''[A]$ is closed.
\item Suppose that $\Res{n}\CTX''[A]$ is closed where $\CTX'' = \Res{\vect u''}(\hole \parop B)$. Then $\CTX''[A]$ is obviously also closed.
\qedhere
\end{itemize}
\end{proof}

\begin{lemma}\label{lem:rename-static-eq}
Let $A$ and $B$ be two closed extended processes.
\begin{enumerate}
\item Let $\sigma$ be a bijective renaming.
We have $A \enveq B$ if and only if $A\sigma \enveq B\sigma$.
\item Let $A'$ and $B'$ be obtained from $A$ and $B$, respectively, by replacing all variables (including their occurrences in domains of active substitutions) with distinct variables. 
We have $A \enveq B$ if and only if $A' \enveq B'$.
\end{enumerate}
\end{lemma}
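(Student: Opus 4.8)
The plan is to reduce both claims to a single observation about Definition~\ref{def:eqframe}: for every bijective renaming $\sigma$ (that is, a sort-preserving bijection mapping names to names and variables to variables) and all terms $M,N$, one has $(M=N)\varphi$ if and only if $(M\sigma=N\sigma)\varphi\sigma$. Granting this, the first claim follows quickly. Frames commute with renaming, $\frameof{A\sigma}=(\frameof A)\sigma$ and likewise for $B$, and $\dom(A\sigma)=\sigma(\dom A)$; since $\sigma$ is a bijection, $\dom(\frameof{A\sigma})=\dom(\frameof{B\sigma})$ iff $\dom(\frameof A)=\dom(\frameof B)$. Moreover the map $M\mapsto M\sigma$ is a bijection on terms, so the quantifications over test terms in Definition~\ref{def:enveq} for $(A\sigma,B\sigma)$ and for $(A,B)$ are in bijective correspondence via the displayed equivalence, instantiated at the frames $\frameof A$ and $\frameof B$; hence $A\sigma\enveq B\sigma$ iff $A\enveq B$.

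To prove the observation I would first record three routine facts, each by induction on the relevant structure: (i) free names, free variables, and domains commute with bijective renaming; (ii) structural equivalence is closed under bijective renaming --- the axioms of $\equiv$ are schematic, their side conditions (such as $u\notin\fv(A)\cup\fn(A)$ in \rulename{New-Par}) are preserved because $\sigma$ is injective, and closure under evaluation contexts is immediate; (iii) the equational theory is closed under bijective renaming in both directions, being closed under substitution of terms for names and variables by assumption, with $\sigma^{-1}$ used for the reverse direction. Now suppose $(M=N)\varphi$: there are $\vect n$ and $\sigma_0$ with $\varphi\equiv\nu\vect n.\sigma_0$, $M\sigma_0=N\sigma_0$, $\{\vect n\}\cap(\fn(M)\cup\fn(N))=\emptyset$, and $\fv(M)\cup\fv(N)\subseteq\dom(\varphi)$. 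By $\alpha$-renaming the bound names $\vect n$ we may additionally assume $\{\vect n\}$ is disjoint from the finite set of names moved by $\sigma$, so that $\vect n\sigma=\vect n$. By (ii), $\varphi\sigma\equiv\nu\vect n.(\sigma_0\sigma)$; by (i) and injectivity of $\sigma$, $\{\vect n\}\cap(\fn(M\sigma)\cup\fn(N\sigma))=\emptyset$ and $\fv(M\sigma)\cup\fv(N\sigma)\subseteq\dom(\varphi\sigma)$; and a one-line induction on terms gives $(T\sigma_0)\sigma=(T\sigma)(\sigma_0\sigma)$, where $\sigma_0\sigma$ is the renamed substitution (domain $\sigma(\dom\sigma_0)$, with $(\sigma_0\sigma)(x\sigma)=(\sigma_0 x)\sigma$), so by (iii) $(M\sigma)(\sigma_0\sigma)=(N\sigma)(\sigma_0\sigma)$. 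Hence $(M\sigma=N\sigma)\varphi\sigma$. The converse is obtained by applying this direction to $\sigma^{-1}$ and the terms $M\sigma,N\sigma$, using $(\varphi\sigma)\sigma^{-1}=\varphi$.

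For the second claim, let $\rho$ be the injective variable renaming in question, understood as applied to both $A$ and $B$ (this is forced: applying unrelated renamings would in general make the two domains disjoint, so the equivalence could not hold). Since for each sort there are infinitely many variables and $\rho$ is a finite injective partial map on variables, it extends to a sort-preserving bijection $\sigma$ of all variables, taken to be the identity on names. Then $\sigma$ agrees with $\rho$ on every variable occurring in $A$ or $B$, so $A\sigma=A'$ and $B\sigma=B'$, and the first claim gives $A'\enveq B'$ iff $A\enveq B$. I expect the only real care to be needed in the middle step: choosing the representative $\nu\vect n.\sigma_0$ so that $\vect n$ simultaneously avoids the names of $M,N$ and the support of $\sigma$, and verifying the bookkeeping identity $(T\sigma_0)\sigma=(T\sigma)(\sigma_0\sigma)$. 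Everything else is bijection-chasing, with no conceptual obstacle.
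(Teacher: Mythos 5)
Your proof is correct and takes essentially the same route as the paper's: both rest on the single observation that $(M=N)\varphi$ holds iff $(M\sigma=N\sigma)(\varphi\sigma)$, justified by closure of the equational theory under renaming together with the bijectivity of $M\mapsto M\sigma$ on test terms (and the converse via $\sigma^{-1}$). The paper merely states this more tersely and handles the second point by a parallel argument rather than by extending the variable replacement to a bijection and reusing the first point, but the content is the same.
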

\begin{proof}
To show the first point, suppose that $A \enveq B$. Hence 
for all terms $M$ and $N$, 
$(M=N)\frameof{A}$ if and only if $(M=N)\frameof{B}$.
So  $(M\sigma = N\sigma)\frameof{A\sigma}$
if and only if $(M\sigma = N\sigma) \frameof{B\sigma}$, since the equational theory is closed under renaming.
So $A\sigma \enveq B\sigma$.
The same argument also shows the converse, via the inverse renaming.

To show the second point, suppose that $A \enveq B$. Hence 
for all terms $M$ and $N$, 
$(M=N)\frameof{A}$ if and only if $(M=N)\frameof{B}$.
We let $M'$ and $N'$ be obtained from $M$ and $N$, respectively, by
the same variable replacement as the one that transforms
$A$ and $B$ into $A'$ and $B'$.
So  $(M' = N')\frameof{A'}$
if and only if $(M' = N') \frameof{B'}$, since 
$M\sigma = M'\sigma'$ where $\frameof{A} \equiv \Res{\vect n}\sigma$
and $\frameof{A'} \equiv \Res{\vect n}\sigma'$.
So $A' \enveq B'$.
As above, the same argument also shows the converse, via the inverse variable replacement.
\end{proof}

\begin{restate}{Lemma}{\ref{LEM:INVARIANT-STATIC-EQ}}
Let $A$ and $B$ be closed extended processes. 
If $A \equiv B$ or $A \rightarrow B$, then $A \enveq B$.
If $A \enveq B$, then $\CTX[A] \enveq \CTX[B]$ for all closing evaluation contexts $\CTX[\hole]$.
\end{restate}%
\begin{proof}
  We show that, if $A \equiv B$, then $\frameof{A} \equiv \frameof{B}$, by an easy induction on the derivation of $A \equiv B$. We then show that, if $A \rightarrow B$, then $\frameof{A} \equiv \frameof{B}$ since the frame is not affected by \brn{Comm}, \brn{Then}, and \brn{Else}. Since Definition~\ref{def:eqframe} considers frames up to structural equivalence, we conclude that static equivalence is invariant by structural equivalence and reduction.

For the context-closure property, we suppose that $A \enveq B$ and we show that for all closing evaluation contexts $\CTX$, we have $\CTX[A] \enveq \CTX[B]$. 
We first rename the free names and variables of $\CTX$, so that the obtained context is simple,
and apply Lemma~\ref{lem:rename-static-eq}. 
Then by Lemma~\ref{lem:simplecontexts}, we construct a context $\CTX'$ such that $\CTX \equiv \CTX'$ and all subcontexts of $\CTX'$ are simple. Since static equivalence is invariant by structural equivalence, it is sufficient to show that $\CTX'[A] \enveq \CTX'[B]$. All subcontexts of $\CTX'$ are closing evaluation contexts, so we proceed by structural induction on $\CTX'$.
The cases of name restriction and variable restriction hold because they restrict the range of $M$ and $N$ in Definition~\ref{def:eqframe}.
In the case of parallel composition, we have $\CTX' \equiv \hole \parop \Res{\vect{n}}(\{\subst{\vect{M'}}{\vect{x}}\}\parop P)$ with $\fv(\vect{M'}) \cup \fv(P) \subseteq \dom(A) = \dom(B)$ and $\{ \vect{x} \} \cap \dom(A) = \emptyset$. By renaming the names $\vect{n}$ so that they do not occur free in $A$ and $B$, we have $\frameof{\CTX'[A]} \equiv \Res{\vect{n}}(\frameof{A} \parop \{\subst{\vect{M'}}{\vect{x}}\})$ and
$\frameof{\CTX'[B]} \equiv \Res{\vect{n}}(\frameof{B} \parop \{\subst{\vect{M'}}{\vect{x}}\})$. Since we have already handled the case of name restriction, it suffices to consider the closing context $\hole \parop \{\subst{M'}{x}\}$ with $\fv(M') \subseteq \dom(A)$ and $x\notin \dom(A)$. In this case, we apply the inductive hypothesis using $M\{\subst{M'}{x}\}$ and $N\{\subst{M'}{x}\}$ instead of $M$ and $N$ in Definition~\ref{def:eqframe}.
\end{proof}

\section{Proof of Theorem~\ref{THM:OBSERVATIONAL-LABELED}: Partial Normal Forms}\label{app:bigpfpnf}

Our proof of Theorem~\ref{THM:OBSERVATIONAL-LABELED}, outlined in
Section~\ref{sec:bigpf}, requires a definition of partial normal forms, which we 
present in Section~\ref{app:pnf}. A semantics on partial normal forms
corresponds to the standard semantics (Section~\ref{app:relpnf}). We
can soundly restrict attention to reductions between closed processes
in the semantics of partial normal forms (Section~\ref{app:respnd}). Moreover, partial normal forms enable helpful compositions and decompositions of reductions (Section~\ref{app:comppnf}).

\subsection{Definition of Partial Normal Forms}\label{app:pnf}

In this section, we define partial normal forms and prove two of their basic properties.

We first define the normalization of the parallel composition of two substitutions.
The composition $\comp{\sigma}{\sigma'}$ of 
two substitutions $\sigma$ and $\sigma'$ such that $\sigma \parop \sigma'$ is cycle-free
is defined as follows: 
we reorder $\sigma \parop \sigma'$ into $\{ \subst{M_1}{x_1}, \ldots, \subst{M_l}{x_l} \}$
where $x_i \notin \fv(M_j)$ for all $i \leq j \leq l$; we let $\sigma_0 = \nil$ and $\sigma_{i+1} = \sigma_i \{\subst{M_{i+1}}{x_{i+1}}\} \parop \{ \subst{M_{i+1}}{x_{i+1}} \}$ for $0 \leq i \leq l-1$; then $\comp{\sigma}{\sigma'} = \sigma_l$.
By definition, we have $\comp{\sigma}{\sigma'} \equiv \sigma \parop \sigma'$.

The partial normal form $\pnf(A)$ of an extended process $A$ is 
an extended process of the form 
$\Res{\vect n}(\{ \subst{\vect M}{\vect x}\} \parop P)$ 
such that $(\fv(P) \cup \fv(\vect M)) \cap \{ \vect x \} = \emptyset$. 
The sequence of restrictions
$\nu \vect n$ may be empty, in which case the partial normal form is 
written $\{ \subst{\vect M}{\vect x}\} \parop P$. The substitution $\{ \subst{\vect M}{\vect x}\}$ may be empty, in which case it is written $\nil$.
The partial normal form of $A$ is defined by induction on $A$ as follows:
\begin{align*}
\pnf(P) &= \nil \parop P\\
\pnf(\{ \subst{M}{x} \}) &= \{ \subst{M}{x} \} \parop \nil\\
\pnf(\Res{n}A) &= \Res{n,\vect n}(\sigma  \parop P)
\text{ where }\pnf(A) = \Res{\vect n}( \sigma \parop P) \text{ and } n \notin \{\vect n\}\\
\pnf(\Res{x}A) &= \Res{\vect n}( \sigma_{|\dom(\sigma) \setminus \{ x \}} \parop P)
\text{ where }\pnf(A) = \Res{\vect n}(\sigma \parop P)\\
\pnf(A \parop B) &= \Res{\vect n,\vect n'}(\comp{\sigma}{\sigma'} \parop (P \parop P')(\comp{\sigma}{\sigma'}))\\
&\text{\begin{minipage}{10cm}
where $\pnf(A) = \Res{\vect n}(\sigma \parop P)$,
$\pnf(B) = \Res{\vect n'}(\sigma' \parop P')$ and
$\vect n$ and $\vect n'$ are renamed so that they are disjoint,
the names of $\vect n$ are not free in $\sigma' \parop P'$, and the names of
$\vect n'$ are not free in $\sigma \parop P$.
\end{minipage}}
\end{align*}
The last four cases apply only when the argument of $\pnf$ is not a plain process.
We define a {\it normal process} as an extended process in partial normal form.

Two simple lemmas provide some basic properties of partial normal forms.

\begin{lemma}\label{lem:equivpnf}
$A \equiv \pnf(A)$.
\end{lemma}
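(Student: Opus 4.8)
The plan is to prove $A \equiv \pnf(A)$ by structural induction on the extended process $A$, taking the cases in the same order as the clauses defining $\pnf$. The two base cases are immediate: when $A = P$ is a plain process, $\pnf(A) = \nil \parop P$, and $P \equiv P \parop \nil \equiv \nil \parop P$ by \rulename{Par-\mbox{$\nil$}} and \rulename{Par-C}; when $A = \{\subst{M}{x}\}$, $\pnf(A) = \{\subst{M}{x}\} \parop \nil$, which is again an instance of \rulename{Par-\mbox{$\nil$}}.

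For $A = \Res{n}A_0$, I would write $\pnf(A_0) = \Res{\vect n}(\sigma \parop P)$, renaming bound names so that $n \notin \{\vect n\}$ (legitimate since expressions are taken up to renaming of bound names). The induction hypothesis gives $A_0 \equiv \pnf(A_0)$, and since $\equiv$ is closed under evaluation contexts we may apply $\Res{n}\hole$ to both sides, obtaining $A \equiv \Res{n}\Res{\vect n}(\sigma \parop P) = \pnf(A)$. For $A = \Res{x}A_0$, well-formedness of $A$ forces $x \in \dom(A_0)$; since $\equiv$ preserves domains (a routine check) and $A_0 \equiv \pnf(A_0)$, the substitution part $\sigma$ of $\pnf(A_0) = \Res{\vect n}(\sigma \parop P)$ contains a binding for $x$, so $\sigma = \sigma_0 \parop \{\subst{M}{x}\}$ with $\sigma_0 = \sigma_{|\dom(\sigma)\setminus\{x\}}$, and the partial-normal-form condition guarantees $x \notin \fv(M) \cup \fv(\sigma_0) \cup \fv(P)$. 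Starting from $A \equiv \Res{x}\Res{\vect n}(\sigma_0 \parop \{\subst{M}{x}\} \parop P)$ (induction hypothesis and context $\Res{x}\hole$, with $x \notin \{\vect n\}$ by renaming), I would commute $\Res{x}$ past $\Res{\vect n}$ with \rulename{New-C}, rearrange the body with \rulename{Par-C} and \rulename{Par-A} and pull $\sigma_0 \parop P$ out of the scope of $\Res{x}$ with \rulename{New-Par} (licit because $x \notin \fv(\sigma_0 \parop P)$ and $x$, being a variable, is not a free name), collapse the residual $\Res{x}\{\subst{M}{x}\}$ to $\nil$ with \rulename{Alias}, and clean up with \rulename{Par-\mbox{$\nil$}}, arriving at $\Res{\vect n}(\sigma_0 \parop P) = \pnf(A)$.

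The case $A = A_0 \parop B_0$ is where the real work lies, and I expect it to be the main obstacle. Writing $\pnf(A_0) = \Res{\vect n}(\sigma \parop P)$ and $\pnf(B_0) = \Res{\vect n'}(\sigma' \parop P')$, renamed so that $\vect n$ and $\vect n'$ are disjoint and each avoids the free names of the other component exactly as the definition prescribes, I would use the induction hypothesis and then repeated applications of \rulename{New-Par} — with \rulename{Par-C} and \rulename{Par-A} to position the operands, the side conditions being supplied by the renaming — to hoist all the restrictions to the top, reorder them by \rulename{New-C}, and rearrange the body, reaching $A \equiv \Res{\vect n, \vect n'}(\sigma \parop \sigma' \parop P \parop P')$. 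Since $A_0 \parop B_0$ is well-formed, $\dom(\sigma) \cap \dom(\sigma') = \emptyset$ and $\sigma \parop \sigma'$ is cycle-free, so $\comp{\sigma}{\sigma'}$ is defined and $\comp{\sigma}{\sigma'} \equiv \sigma \parop \sigma'$; closure of $\equiv$ under $\hole \parop P \parop P'$ then lets us replace $\sigma \parop \sigma'$ by $\comp{\sigma}{\sigma'}$. The final step is to apply \rulename{Subst} once for each binding of $\comp{\sigma}{\sigma'}$, in the order fixed by its cycle-free reordering, turning $P \parop P'$ into $(P \parop P')(\comp{\sigma}{\sigma'})$ and yielding $\pnf(A_0 \parop B_0)$. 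The delicate point here — and what I would be most careful about — is precisely this last step: that the iterated single-variable rewrites of \rulename{Subst} compose to a single application of the whole substitution $\comp{\sigma}{\sigma'}$ (which is why $\comp{\cdot}{\cdot}$ is defined by the staged construction in the text), together with the bookkeeping needed to see that $\comp{\sigma}{\sigma'}$ is well-defined, which in turn rests on the routine fact — proved in parallel with the lemma — that $\pnf$ preserves domains and well-formedness.
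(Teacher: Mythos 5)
Your proof is correct and follows exactly the route the paper takes: the paper's proof of this lemma is simply ``by induction on the syntax of $A$,'' and your case analysis fills in the details of that induction (including the use of \rulename{New-Par}, \rulename{Alias}, and iterated \rulename{Subst} in the variable-restriction and parallel-composition cases) in the intended way. The points you flag as delicate --- that $\comp{\sigma}{\sigma'} \equiv \sigma \parop \sigma'$ and that $\pnf$ preserves domains --- are indeed the facts the paper relies on (the former is noted as holding by definition of $\uplus$, the latter is checked in the proof of Lemma~\ref{lem:pnf-closed}).
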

\begin{proof}
By induction on the syntax of $A$. 
\end{proof}

\begin{lemma}\label{lem:pnf-closed}
If $A$ is closed, then $\pnf(A)$ is closed.
\end{lemma}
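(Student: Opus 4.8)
The plan is to prove, by structural induction on $A$ along the five clauses in the definition of $\pnf$, that for \emph{every} extended process $A$ one has $\dom(\pnf(A)) = \dom(A)$ and $\fv(\pnf(A)) \subseteq \fv(A)$. (The induction has to range over arbitrary, not merely closed, extended processes, because the two components of a parallel composition $A \parop B$ may fail to be closed even when $A \parop B$ is closed.) The lemma then follows at once: when $A$ is closed we have $\fv(A) = \dom(A)$, so $\fv(\pnf(A)) \subseteq \fv(A) = \dom(A) = \dom(\pnf(A))$; and since $\pnf(A)$ has the shape $\Res{\vect n}(\{\subst{\vect M}{\vect x}\} \parop P)$, its domain $\{\vect x\}$ is contained in $\fv(\{\subst{\vect M}{\vect x}\}) \subseteq \fv(\pnf(A))$, which gives the reverse inclusion; hence $\fv(\pnf(A)) = \dom(\pnf(A))$, i.e.\ $\pnf(A)$ is closed.

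The base cases $\pnf(P) = \nil \parop P$ and $\pnf(\{\subst{M}{x}\}) = \{\subst{M}{x}\} \parop \nil$ are trivial, as is the name-restriction case, since prefixing a $\nu n$ changes neither $\fv$ nor $\dom$. In the parallel case, with $\pnf(A) = \Res{\vect n}(\sigma \parop P)$ and $\pnf(B) = \Res{\vect n'}(\sigma' \parop P')$, the composed substitution $\comp{\sigma}{\sigma'}$ has, directly from its definition, the same domain as and no more free variables than $\sigma \parop \sigma'$; combining this with the elementary inclusion $\fv(Q\rho) \subseteq \fv(Q) \cup \fv(\rho)$ applied to $Q = P \parop P'$ and $\rho = \comp{\sigma}{\sigma'}$, and with the induction hypotheses for $A$ and $B$, yields $\dom(\pnf(A\parop B)) = \dom(A) \cup \dom(B) = \dom(A\parop B)$ and $\fv(\pnf(A\parop B)) \subseteq \fv(\pnf(A)) \cup \fv(\pnf(B)) \subseteq \fv(A\parop B)$.

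The only case needing care is variable restriction, where $\pnf(\Res x B) = \Res{\vect n}(\sigma_{|\dom(\sigma)\setminus\{x\}} \parop P)$ with $\pnf(B) = \Res{\vect n}(\sigma \parop P)$. The domain computation is direct: $\dom(\pnf(\Res x B)) = \dom(\sigma)\setminus\{x\} = \dom(\pnf(B))\setminus\{x\} = \dom(B)\setminus\{x\} = \dom(\Res x B)$ by the induction hypothesis. For free variables, deleting the entry for $x$ can only shrink the set, so $\fv(\pnf(\Res x B)) \subseteq \fv(\pnf(B)) \subseteq \fv(B)$; what remains is to show that $x$ itself does not occur free in $\pnf(\Res x B)$, so that this set in fact lies in $\fv(B) \setminus \{x\} = \fv(\Res x B)$. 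Here I would invoke the invariant built into partial normal forms --- that $\pnf(B) = \Res{\vect n}(\{\subst{\vect M}{\vect y}\} \parop P)$ with $(\fv(P) \cup \fv(\vect M)) \cap \{\vect y\} = \emptyset$ --- together with the well-formedness requirement that $x \in \dom(B)$ whenever $\Res x B$ is a legal extended process: since $\dom(B) = \dom(\pnf(B)) = \{\vect y\}$, we get $x \in \{\vect y\}$, hence $x \notin \fv(P)$ and $x \notin \fv(\vect M)$, and $x$ is of course absent from the shrunk domain $\dom(\sigma)\setminus\{x\}$, so $x \notin \fv(\pnf(\Res x B))$. This appeal to the variable-disjointness invariant of partial normal forms in the variable-restriction clause is the one subtle point; everything else is routine bookkeeping over the clauses of Figure~\ref{fig:fv-fn}.
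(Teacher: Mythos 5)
Your proof is correct and follows exactly the route the paper takes: an induction on the syntax of $A$ establishing $\fv(\pnf(A)) \subseteq \fv(A)$ and $\dom(\pnf(A)) = \dom(A)$, from which closedness follows. The paper states only this one-line outline, whereas you additionally work out the cases (in particular the variable-restriction case, correctly using the disjointness invariant of partial normal forms and the requirement $x \in \dom(A)$ for $\Res{x}A$), so your argument is a faithful, more detailed version of the same proof.
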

\begin{proof}
We prove by induction on the syntax of $A$ that $\fv(\pnf(A)) \subseteq \fv(A)$
and $\dom(\pnf(A)) = \dom(A)$. The result follows.
\end{proof}

\subsection{Relation between the Standard Semantics and the Semantics on Partial Normal Forms}\label{app:relpnf}

In this section, we define an operational semantics on partial normal forms, by defining
structural equivalence, internal reduction, and labelled transitions.
We relate this semantics to the standard semantics of the applied
pi calculus given in Sections~\ref{sec:ope-sem} and~\ref{subsec:labopsem}.

We begin with the definition of structural
equivalence on partial normal forms.
Let $\equivP$ be the smallest equivalence relation on plain processes
closed by application of evaluation contexts such that
\[\begin{array}{lrcll}
\brn{Par-\mbox{$\nil$}}' & 
P  \parop \nil & \equivP & P \\
\brn{Par-A}' & 
P \parop (Q \parop R)  & \equivP & (P \parop Q) \parop R \\
\brn{Par-C}' & 
P  \parop Q &  \equivP & Q \parop P \\
\brn{Repl}' & 
\Repl P  & \equivP & P \parop  \Repl P\\
\brn{New-\mbox{$\nil$}}' & 
\nu n.\nil  & \equivP & \nil \\
\brn{New-C}' & 
\nu n.\nu n'.P  & \equivP & \nu n'.\nu n.P \\
\brn{New-Par}' & 
P \parop \nu n.Q  & \equivP & \nu n.(P \parop Q) &\mbox{when } n \not\in \fn(P)\\
\brn{Rewrite}' &
P\{\subst{M}{x}\} & \equivP & P\{\subst{N}{x}\} &\text{when }\Sigma \vdash M = N\\
\end{array}\]
and
let $\equivpnf$ be the smallest equivalence relation on
normal processes such that
\[\begin{array}{@{}lrcll@{}}
\brn{Plain}'' &
\Res{\vect n}(\sigma \parop P) & \equivpnf & \Res{\vect n}(\sigma \parop P')&\\
&\multicolumn{4}{@{}l@{}}{\quad\qquad\text{when $P \equivP P'$ and }(\fv(P) \cup \fv(P')) \cap \dom(\sigma) = \emptyset}\\
\brn{New-C}'' &
\Res{\vect n}(\sigma \parop P) & \equivpnf & \Res{\vect n'}(\sigma \parop P)&\text{when $\vect n'$ is a reordering of $\vect n$}\\
\brn{New-Par}'' &
\Res{\vect n}(\sigma \parop \Res{n'}P) & \equivpnf & \Res{\vect n, n'}(\sigma \parop P)&\text{when $n' \notin \fn(\sigma)$}\\
\brn{Rewrite}'' &
\Res{\vect n}(\sigma \parop P) & \equivpnf & \Res{\vect n}(\sigma' \parop P)&\\
&\multicolumn{4}{@{}l@{}}{\quad\qquad\text{when $\dom(\sigma) = \dom(\sigma')$}} \\
&\multicolumn{4}{@{}l@{}}{\quad\qquad\text{and $\Sigma \vdash x \sigma = x\sigma'$ for all $x \in \dom(\sigma)$}}\\
&\multicolumn{4}{@{}l@{}}{\quad\qquad\text{and $(\fv(x\sigma) \cup \fv(x\sigma')) \cap \dom(\sigma) = \emptyset$ for all $x \in \dom(\sigma)$}}
\end{array}\]
In $\brn{Plain}''$ and $\brn{Rewrite}''$, the hypotheses on free variables
ensure that the process remains normalized in case fresh variables are introduced (respectively, 
via $\brn{Rewrite}'$
and 
by rewriting the substitution $\sigma$ to $\sigma'$).

We also introduce the corresponding reduction relation.
Let $\redP$ be the smallest relation on plain processes closed by 
$\equivP$ and by application of evaluation contexts such that:
\[\begin{array}{lrcll}
\brn{Comm}' &
{\Snd N M.P} \parop {\Rcv{N}{x}.Q} & \redP & P \parop Q\{\subst{M}{x}\}\\[.5em]
\brn{Then}' &
\IfThenElse{M}{M}{P}{Q} & \redP & P\\[.5em]
\brn{Else}' &
\IfThenElse{M}{N}{P}{Q} & \redP & Q \\
& \multicolumn{4}{l}{\mbox{\quad for any ground terms $M$ and $N$ such that $\Sigma \not\vdash M = N$}}
\end{array}\]%
and let $\redpnf$ be the smallest relation on normal processes closed by $\equivpnf$ such that 
$\Res{\vect n}(\sigma \parop P) \redpnf \Res{\vect n}(\sigma \parop P')$ when $P \redP P'$.

\begin{lemma}\label{lem:instance}
\begin{enumerate}
\item\label{instance:equiv} If $P \equivP P'$, then $P \sigma \equivP P' \sigma$.
\item\label{instance:red} If $P \redP P'$, then $P \sigma \redP P' \sigma$.
\end{enumerate}
\end{lemma}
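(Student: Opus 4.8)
The plan is to prove both parts by induction on the derivation of the hypothesis --- $P \equivP P'$ for part~\ref{instance:equiv}, and $P \redP P'$ for part~\ref{instance:red} --- with the proof of part~\ref{instance:red} invoking part~\ref{instance:equiv} wherever $\redP$ is closed under $\equivP$. Throughout I would adopt the usual freshness convention, legitimate here because expressions are identified up to renaming of bound names and variables: before applying $\sigma$, rename every bound name and variable occurring in $P$, in $P'$, and in the evaluation contexts appearing in the derivation so that none of them lies in $\dom(\sigma)$ nor in $\fn$ or $\fv$ of the range of $\sigma$. Under this convention $\sigma$ commutes with the plain-process constructors: $(\Res n Q)\sigma = \Res n (Q\sigma)$, $(Q \parop R)\sigma = Q\sigma \parop R\sigma$, $(\Repl Q)\sigma = \Repl (Q\sigma)$, and $\nil\sigma = \nil$; equivalently, for each plain evaluation context $\CTX$ there is a plain evaluation context $\CTX\sigma$, obtained by applying $\sigma$ to the non-hole part of $\CTX$, such that $\CTX[Q]\sigma = (\CTX\sigma)[Q\sigma]$. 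I would also use the routine substitution lemma stating that, once $x$ is renamed fresh for $\sigma$, $(Q\{\subst{M}{x}\})\sigma = (Q\sigma)\{\subst{M\sigma}{x}\}$.

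For part~\ref{instance:equiv}, the reflexivity, symmetry, and transitivity cases are immediate from the induction hypothesis, and the context-closure case follows from $\CTX[P]\sigma = (\CTX\sigma)[P\sigma]$ together with closure of $\equivP$ under evaluation contexts. For the axioms: \brn{Par-A}', \brn{Par-C}', \brn{Repl}', \brn{New-C}', and the two $\nil$-elimination rules \brn{Par-\mbox{$\nil$}}' and \brn{New-\mbox{$\nil$}}' are each sent by $\sigma$ to an instance of itself by the commutation properties above; for \brn{New-Par}', since $n$ is fresh for $\sigma$ we still have $n \notin \fn(P\sigma)$, so the side condition survives and the image is again an instance of \brn{New-Par}'; and for \brn{Rewrite}', renaming $x$ fresh for $\sigma$ and using the substitution lemma gives $(P\{\subst{M}{x}\})\sigma = (P\sigma)\{\subst{M\sigma}{x}\}$ and similarly for $N$, while $\Sigma \vdash M = N$ yields $\Sigma \vdash M\sigma = N\sigma$ because the equational theory is closed under substitution of terms for variables, so the images are related by \brn{Rewrite}'.

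For part~\ref{instance:red}, closure under $\equivP$ is handled using part~\ref{instance:equiv}: from $P \equivP Q \redP Q' \equivP P'$ and the induction hypothesis $Q\sigma \redP Q'\sigma$ one obtains $P\sigma \equivP Q\sigma \redP Q'\sigma \equivP P'\sigma$, hence $P\sigma \redP P'\sigma$; context closure is again via $\CTX[Q]\sigma = (\CTX\sigma)[Q\sigma]$. For \brn{Comm}', after renaming the bound input variable $x$ fresh for $\sigma$, applying $\sigma$ to ${\Snd N M.P} \parop {\Rcv{N}{x}.Q} \redP P \parop Q\{\subst{M}{x}\}$ produces ${\Snd{N\sigma}{M\sigma}.P\sigma} \parop {\Rcv{N\sigma}{x}.Q\sigma}$ on the left and, by the substitution lemma, $P\sigma \parop (Q\sigma)\{\subst{M\sigma}{x}\}$ on the right, which is exactly an instance of \brn{Comm}'. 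For \brn{Then}', $\sigma$ distributes over the conditional, giving an instance of \brn{Then}'. For \brn{Else}', $M$ and $N$ are ground, so $M\sigma = M$, $N\sigma = N$, and $\Sigma \not\vdash M\sigma = N\sigma$ still holds, giving an instance of \brn{Else}'.

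I expect the main obstacle to be the bookkeeping in the two cases --- \brn{Comm}' and \brn{Rewrite}' --- where $\sigma$ must be commuted past an inner substitution: there one has to first $\alpha$-rename the variable in question to be fresh for $\sigma$, so that $(Q\{\subst{M}{x}\})\sigma$ is literally $(Q\sigma)\{\subst{M\sigma}{x}\}$ with no accidental capture. Every other case is purely mechanical, relying only on how $\sigma$ distributes over the plain-process constructors and over evaluation contexts.
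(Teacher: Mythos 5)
Your proof is correct and follows exactly the route the paper takes: the paper's own proof is a one-line remark that both properties are "immediate by induction on derivations," with part 2 invoking part 1 in the $\equivP$ case, and your write-up simply fills in the case analysis and the freshness/substitution-lemma bookkeeping. The only point the paper adds that you leave implicit is the observation that the change from \brn{Comm} to $\brn{Comm}'$ (message as an arbitrary term rather than a shared free variable) is precisely what makes the $\brn{Comm}'$ case go through under an arbitrary substitution.
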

\begin{proof}
These properties are immediate by induction on derivations.
The proof of Property~\ref{instance:red} relies on Property~\ref{instance:equiv} in the case in which one applies $\equivP$. 
Note that the change from \brn{Comm} to $\brn{Comm}'$ is crucial for Property~\ref{instance:red}.
\end{proof}

\begin{lemma}\label{lem:equiv-red-induct}
Assume that $\Res{\vect n}(\sigma \parop P)$, $\Res{\vect n'}(\sigma' \parop P')$, and $\Res{\vect n''}(\sigma'' \parop P'')$ are normal processes.
If $\Res{\vect n}(\sigma \parop P) \equivpnf  \Res{\vect n'}(\sigma' \parop P')$, then 
\begin{enumerate}
\item\label{equiv-Resx} $\Res{\vect n}(\sigma_{|\dom(\sigma) \setminus \{ x\}} \parop P) \equivpnf  \Res{\vect n'}(\sigma'_{|\dom(\sigma') \setminus \{ x\}} \parop P')$;
\item\label{equiv-Resn} $\Res{n,\vect n}(\sigma \parop P) \equivpnf  \Res{n,\vect n'}(\sigma' \parop P')$; and
\item\label{equiv-Par} if $\sigma \parop \sigma''$ and $\sigma' \parop \sigma''$ are cycle-free, then
$\Res{\vect n,\vect n''}(\comp{\sigma}{\sigma''} \parop (P \parop P'')(\comp{\sigma}{\sigma''})) \equivpnf  \Res{\vect n',\vect n''}(\comp{\sigma'}{\sigma''} \parop (P' \parop P'')(\comp{\sigma'}{\sigma''}))$.
\end{enumerate}
If $\Res{\vect n}(\sigma \parop P) \redpnf  \Res{\vect n'}(\sigma' \parop P')$, then
\begin{enumerate}
\setcounter{enumi}{3}
\item\label{red-Resx}  
$\Res{\vect n}(\sigma_{|\dom(\sigma) \setminus \{ x\}} \parop P) \redpnf  \Res{\vect n'}(\sigma'_{|\dom(\sigma') \setminus \{ x\}} \parop P')$;
\item\label{red-Resn}  $\Res{n,\vect n}(\sigma \parop P) \redpnf  \Res{n,\vect n'}(\sigma' \parop P')$; and 
\item\label{red-Par} if $\sigma \parop \sigma''$ and $\sigma' \parop \sigma''$ are cycle-free, then
$\Res{\vect n,\vect n''}(\comp{\sigma}{\sigma''} \parop (P \parop P'')(\comp{\sigma}{\sigma''})) \redpnf  \Res{\vect n',\vect n''}(\comp{\sigma'}{\sigma''} \parop (P' \parop P'')(\comp{\sigma'}{\sigma''}))$.
\end{enumerate}
\end{lemma}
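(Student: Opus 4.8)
The plan is to prove the six claims by induction on derivations, in two stages: first the three structural-equivalence claims, parts~\eqref{equiv-Resx}--\eqref{equiv-Par}, by induction on the derivation of $\Res{\vect n}(\sigma \parop P) \equivpnf \Res{\vect n'}(\sigma' \parop P')$; then the three reduction claims, parts~\eqref{red-Resx}--\eqref{red-Par}, by induction on the derivation of $\Res{\vect n}(\sigma \parop P) \redpnf \Res{\vect n'}(\sigma' \parop P')$, invoking the already-established structural-equivalence claims whenever $\redpnf$ is closed under $\equivpnf$.

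In the structural-equivalence stage, the reflexivity, symmetry, and transitivity cases follow directly from the induction hypothesis; for transitivity in part~\eqref{equiv-Par} one also notes that the only generating rule that changes the substitution is $\brn{Rewrite}''$, which fixes the domain and alters images only up to $\Sigma$, so the composition with $\sigma''$ stays defined and cycle-free along the derivation. It then remains to check the four generating rules against the three operations. Removing a variable $x$ from the common domain (part~\eqref{equiv-Resx}) and prepending a restriction $\nu n$ (part~\eqref{equiv-Resn}) are routine: $\brn{Plain}''$, $\brn{New-C}''$, $\brn{New-Par}''$, and $\brn{Rewrite}''$ are all stable under these operations, and their side conditions are preserved (e.g.\ the condition $n' \notin \fn(\sigma)$ of $\brn{New-Par}''$ only gets easier when $\sigma$ shrinks, and a reordering of $\vect n$ extends to a reordering of $n, \vect n$ that fixes the new name).

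The work is concentrated in part~\eqref{equiv-Par}, composition with a third substitution $\sigma''$. For $\brn{Plain}''$ (so $\sigma' = \sigma$ and $P \equivP P'$), Lemma~\ref{lem:instance}\eqref{instance:equiv} gives $P(\comp{\sigma}{\sigma''}) \equivP P'(\comp{\sigma}{\sigma''})$, hence $(P \parop P'')(\comp{\sigma}{\sigma''}) \equivP (P' \parop P'')(\comp{\sigma}{\sigma''})$ by closure of $\equivP$ under the evaluation context $\hole \parop P''(\comp{\sigma}{\sigma''})$, and $\brn{Plain}''$ re-applies once one checks the free-variable side condition, which holds because a composition of normal-process substitutions has images disjoint from its domain. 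For $\brn{New-C}''$, a reordering of $\vect n$ lifts to one of $\vect n, \vect n''$ that fixes $\vect n''$. For $\brn{New-Par}''$, after renaming the bound name $n'$ to be fresh for $\sigma''$, $P''$, and $\comp{\sigma}{\sigma''}$, applying $\comp{\sigma}{\sigma''}$ leaves $n'$ untouched, so $n'$ can be pulled out of the process part (by $\brn{New-Par}'$ inside $\brn{Plain}''$) and then extruded to the name prefix by $\brn{New-Par}''$, using $n' \notin \fn(\comp{\sigma}{\sigma''}) \subseteq \fn(\sigma) \cup \fn(\sigma'')$; a final $\brn{New-C}''$ moves it past $\vect n''$. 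For $\brn{Rewrite}''$, the compositions $\comp{\sigma}{\sigma''}$ and $\comp{\sigma'}{\sigma''}$ share the same domain and have $\Sigma$-equal images, so one $\brn{Rewrite}''$ step converts the substitution part, and finitely many applications of $\brn{Rewrite}'$ (one per domain variable) together with $\brn{Plain}''$ convert the process part; chaining by transitivity closes the case.

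For the reduction claims, the base case has $\vect n' = \vect n$, $\sigma' = \sigma$, and $P \redP P'$: parts~\eqref{red-Resx} and~\eqref{red-Resn} hold by re-applying the defining rule of $\redpnf$, and part~\eqref{red-Par} follows from Lemma~\ref{lem:instance}\eqref{instance:red}, which gives $P(\comp{\sigma}{\sigma''}) \redP P'(\comp{\sigma}{\sigma''})$, plus closure of $\redP$ under $\hole \parop P''(\comp{\sigma}{\sigma''})$, before re-applying the $\redpnf$ rule. When $\redpnf$ is derived by wrapping a $\redpnf$ step in $\equivpnf$ steps, one threads parts~\eqref{equiv-Resx}--\eqref{equiv-Par} through the outer steps and the induction hypothesis through the inner one, reassembling via closure of $\redpnf$ under $\equivpnf$. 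I expect the main obstacle to be precisely the bookkeeping in part~\eqref{equiv-Par}: ensuring that $\comp{\sigma}{\sigma''}$ stays cycle-free with images disjoint from its domain along the induction, so that $\brn{Plain}''$ and $\brn{Rewrite}''$ remain applicable, and getting the $\brn{New-Par}''$ case right, namely that the extruded name can always be chosen fresh enough to commute past $\comp{\sigma}{\sigma''}$ and then be reordered to the end of the name prefix.
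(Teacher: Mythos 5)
Your proof is correct and follows essentially the same route as the paper's: induction on the derivations, with Lemma~\ref{lem:instance}\eqref{instance:equiv} supplying $(P \parop P'')(\comp{\sigma}{\sigma''}) \equivP (P' \parop P'')(\comp{\sigma}{\sigma''})$ in the $\equivP$ case of part~\eqref{equiv-Par}, Lemma~\ref{lem:instance}\eqref{instance:red} playing the analogous role for part~\eqref{red-Par}, and parts~\eqref{equiv-Resx}--\eqref{equiv-Par} threaded through the $\equivpnf$ steps when proving parts~\eqref{red-Resx}--\eqref{red-Par}. You simply spell out the rule-by-rule case analysis that the paper leaves implicit.
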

\begin{proof}
We establish these properties by induction on derivations.
To prove Property~\ref{equiv-Par} in the case $\equivP$, we use that
if $P \equivP P'$ then $(P \parop P'')(\comp{\sigma}{\sigma''}) \equivP
(P' \parop P'')(\comp{\sigma}{\sigma''})$, which follows from Lemma~\ref{lem:instance}\eqref{instance:equiv}.
To prove Properties~\ref{red-Resx} to~\ref{red-Par}, we use Properties~\ref{equiv-Resx} to~\ref{equiv-Par}, respectively, in the case in which we apply $\equivpnf$. 
Additionally, to prove Property~\ref{red-Par} in the case $\redP$,
we use that if $P \redP P'$ then $(P \parop P'')(\comp{\sigma}{\sigma''}) \redP
(P' \parop P'')(\comp{\sigma}{\sigma''})$, which follows from Lemma~\ref{lem:instance}\eqref{instance:red}.
\end{proof}

\begin{lemma}\label{lem:struct-std-to-pnf}
If $A \equiv B$, then $\pnf(A) \equivpnf \pnf(B)$.
\end{lemma}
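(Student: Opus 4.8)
The plan is to argue by induction on the derivation of $A \equiv B$. Since $\equiv$ is the smallest equivalence relation on extended processes that is closed under application of evaluation contexts and contains the ten axioms of Section~\ref{sec:ope-sem}, the induction splits into three families of cases: the axioms, the closure under reflexivity, symmetry and transitivity, and the closure under evaluation contexts. The reflexivity/symmetry/transitivity cases are immediate because $\equivpnf$ is by construction an equivalence relation and, by Lemma~\ref{lem:equivpnf}, $\pnf$ always returns a normal process.

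For closure under evaluation contexts it suffices to treat the four one-level evaluation contexts $\hole \parop C$, $C \parop \hole$, $\Res{n}\hole$ and $\Res{x}\hole$, since a general evaluation context is a nesting of these; one then inducts on that nesting. The key point is that $\pnf$ acts compositionally on these four formers: writing $\pnf(A) = \Res{\vect n}(\sigma \parop P)$ and $\pnf(C) = \Res{\vect n''}(\sigma'' \parop P'')$, the definition of $\pnf$ gives (after the alpha-renamings it prescribes) $\pnf(\Res{n}A) = \Res{n,\vect n}(\sigma \parop P)$, $\pnf(\Res{x}A) = \Res{\vect n}(\sigma_{|\dom(\sigma)\setminus\{x\}} \parop P)$, $\pnf(A \parop C) = \Res{\vect n,\vect n''}(\comp{\sigma}{\sigma''} \parop (P \parop P'')(\comp{\sigma}{\sigma''}))$, and, symmetrically, $\pnf(C \parop A) \equivpnf \pnf(A \parop C)$ (using $\brn{New-C}''$, $\brn{Rewrite}''$ — since $\comp{\sigma''}{\sigma} \equiv \comp{\sigma}{\sigma''}$ — and $\brn{Plain}''$ with $\brn{Par-C}'$). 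Thus, given the induction hypothesis $\pnf(A) \equivpnf \pnf(B)$, parts~\eqref{equiv-Resn}, \eqref{equiv-Resx} and~\eqref{equiv-Par} of Lemma~\ref{lem:equiv-red-induct} yield directly $\pnf(\Res{n}A) \equivpnf \pnf(\Res{n}B)$, $\pnf(\Res{x}A) \equivpnf \pnf(\Res{x}B)$ and $\pnf(A \parop C) \equivpnf \pnf(B \parop C)$; the cycle-freeness side conditions needed for part~\eqref{equiv-Par} hold because $A \parop C$ and $B \parop C$ are extended processes.

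For the axiom cases one computes the partial normal form of both sides. Two observations keep most of these routine. First, since $\brn{Rewrite}$ is the only structural rule that rewrites a substitution-as-process, two substitutions that are $\equiv$ have the same domain and agree modulo $\Sigma$ on each variable; hence, whenever the two computed normal forms have substitution parts $\tau$ and $\tau'$ with $\tau \equiv \tau'$ — which is always the case here, as every composition $\comp{\cdot}{\cdot}$ occurring is $\equiv$ to the parallel composition of the same atomic substitutions — rule $\brn{Rewrite}''$ lets us replace one by the other, and $\brn{Rewrite}'$ inside $\brn{Plain}''$ does the corresponding job in the plain process. Second, reassociation and reordering of the plain process, erasure of $\nil$, and the reordering or top placement of name restrictions are absorbed by $\brn{Plain}''$ (via $\brn{Par-\mbox{$\nil$}}'$, $\brn{Par-A}'$, $\brn{Par-C}'$, $\brn{Repl}'$, $\brn{New-\mbox{$\nil$}}'$), by $\brn{New-C}''$ and by $\brn{New-Par}''$. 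With these, the axioms $\brn{Par-\mbox{$\nil$}}$, $\brn{Par-C}$, $\brn{Repl}$, $\brn{New-\mbox{$\nil$}}$, $\brn{New-C}$, $\brn{New-Par}$ and $\brn{Rewrite}$ are straightforward — for $\brn{New-C}$ and $\brn{New-Par}$ one splits on whether $u$ and $v$ are names or variables, only the name cases using $\brn{New-C}''$ and $\brn{New-Par}''$ — and $\brn{Alias}$ is immediate since $\nu x.\smx$ and $\nil$ both normalize to $\nil$.

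The case I expect to be the real obstacle is $\brn{Subst}$: $\smx \parop A \equiv \smx \parop A\smx$ (and, to a lesser extent, $\brn{Par-A}$). Here, for $\pnf(A) = \Res{\vect n}(\sigma \parop P)$ with $x \notin \dom(\sigma)$, one must show that $\comp{\smx}{\sigma}$ already resolves $x$ everywhere, so that $\comp{\smx}{\sigma}$ and $\comp{\smx}{(\sigma\smx)}$ coincide up to reordering of entries and $(\nil \parop P)(\comp{\smx}{\sigma})$ equals $(\nil \parop P\smx)(\comp{\smx}{(\sigma\smx)})$; the two computed normal forms then agree up to $\brn{Rewrite}''$. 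This rests on an auxiliary lemma about the iterative definition of $\comp{\cdot}{\cdot}$: for a cycle-free family of bindings, exhaustive substitution is confluent and absorbs a pre-application of any member of the family. I would isolate this as a separate lemma, proved by induction following the definition of $\comp{\cdot}{\cdot}$ together with the cycle-free reordering, and reuse it — in its commutativity/associativity form — for $\brn{Par-A}$ and $\brn{Par-C}$. Granting that lemma, all remaining manipulations are mechanical, with Lemma~\ref{lem:instance} supplying the closure of $\equivP$ under substitution that these manipulations need.
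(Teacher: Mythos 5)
Your proof follows essentially the same route as the paper's: induction on the derivation of $A \equiv B$, with Lemma~\ref{lem:equiv-red-induct} handling closure under restriction and parallel contexts, direct computation of the two partial normal forms for each axiom, and the reordering/absorption properties of $\comp{\cdot}{\cdot}$ (which the paper asserts inline rather than isolating as a lemma) carrying the $\brn{Subst}$ and $\brn{Par-A}$ cases. The only detail you gloss over is the case split on whether the subprocesses are plain (so that, e.g., $\pnf(\Res{n}A) = \nil \parop \Res{n}A$ rather than $\Res{n,\vect n}(\sigma \parop P)$), a bookkeeping point the paper dispatches with $\brn{New-Par}''$ at the start of its proof and in the restriction and parallel-composition cases.
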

\begin{proof}
By induction on the derivation of $A \equiv B$.
We first notice that, if $P$ and $Q$ are plain processes, 
then $\pnf(P \parop Q) = \nil \parop (P \parop Q)$ 
can also be obtained by applying the definition of $\pnf(A \parop B)$ for extended processes, with the same result: $\pnf(P) = \nil \parop P$, $\pnf(Q) = \nil \parop Q$, so $\pnf(P \parop Q) = \nil \parop (P \parop Q)$. We use this property to avoid distinguishing whether $A$, $B$, $C$ are plain processes or not in the first three cases and in the last case.

\begin{itemize}
\item Case $A \equiv A \parop \nil$.

Let $\pnf(A) = \Res{\vect n}(\sigma \parop P)$.
We have $\pnf(A \parop \nil) = \Res{\vect n}(\sigma \parop (P \parop \nil))$. 
Since $P \equivP P \parop \nil$, we have $\pnf(A) \equivpnf \pnf(A \parop \nil)$.

\item Case $A \parop (B \parop C)  \equiv (A \parop B) \parop C$.

Let $\pnf(A) = \Res{\vect n}(\sigma \parop P)$, $\pnf(B) = \Res{\vect n'}(\sigma' \parop P')$, and $\pnf(C) = \Res{\vect n''}(\sigma'' \parop P'')$.
To compute  $\pnf(A \parop (B \parop C))$, we first rename $\vect n'$ and $\vect n''$ so that they are disjoint,
the names of $\vect n'$ are not free in $\sigma'' \parop P''$, and the names of
$\vect n''$ are not free in $\sigma' \parop P'$. Then 
$\pnf(B \parop C) = \Res{\vect n',\vect n''}(\comp{\sigma'}{\sigma''} \parop (P' \parop P'')(\comp{\sigma'}{\sigma''}))$. Then, we rename $\vect n$ and $\vect n',\vect n''$ so that they are disjoint,
the names of $\vect n$ are not free in $\comp{\sigma'}{\sigma''} \parop (P' \parop P'')(\comp{\sigma'}{\sigma''})$, and the names of
$\vect n',\vect n''$ are not free in $\sigma \parop P$. Hence, $\vect n$, $\vect n'$ and $\vect n''$ are renamed so that they are disjoint,
the names of $\vect n$ are not free in $\sigma' \parop P'$ and $\sigma'' \parop P''$, the names of $\vect n'$ are not free in $\sigma \parop P$ and $\sigma'' \parop P''$, and the names of $\vect n''$ are not free in $\sigma \parop P$ and $\sigma' \parop P'$. This condition is the same as the one obtained when we compute $\pnf((A \parop B) \parop C)$.
Let $\sigma_0 = \comp{\sigma}{(\comp{\sigma'}{\sigma''})} = \comp{(\comp{\sigma}{\sigma'})}{\sigma''}$.
So 
\begin{align*}
\pnf(A \parop (B \parop C)) 
&= \Res{\vect n,\vect n', \vect n''}(\sigma_0 \parop (P \parop (P' \parop P'')(\comp{\sigma'}{\sigma''}))\sigma_0) \\
&= \Res{\vect n,\vect n', \vect n''}(\sigma_0 \parop (P\sigma_0 \parop (P'\sigma_0 \parop P'' \sigma_0)))\\
&\equivpnf \Res{\vect n,\vect n', \vect n''}(\sigma_0 \parop ((P\sigma_0 \parop P'\sigma_0) \parop P''\sigma_0)) = \pnf((A \parop B) \parop C)
\end{align*}
since $P\sigma_0 \parop (P'\sigma_0 \parop P''\sigma_0) \equivP (P\sigma_0 \parop P'\sigma_0) \parop P''\sigma_0$.

\item Case $A  \parop B \equiv B \parop A$.

Let $\pnf(A) = \Res{\vect n}(\sigma \parop P)$ and $\pnf(B) = \Res{\vect n'}(\sigma' \parop P')$. We rename $\vect n$ and $\vect n'$ so that they are disjoint,
the names of $\vect n$ are not free in $\sigma' \parop P'$, and the names of
$\vect n'$ are not free in $\sigma \parop P$.
Let $\sigma_0 = \comp{\sigma}{\sigma'} = \comp{\sigma'}{\sigma}$.
We have $\pnf(A \parop B) = \Res{\vect n,\vect n'}(\sigma_0 \parop (P\sigma_0 \parop P'\sigma_0)) \equivpnf \Res{\vect n',\vect n}(\sigma_0 \parop (P'\sigma_0 \parop P\sigma_0)) = \pnf(B \parop A)$ since $P\sigma_0 \parop P'\sigma_0 \equivP P'\sigma_0 \parop P\sigma_0$.

\item Case $\Repl P  \equiv P \parop  \Repl P$.

We have $\pnf(\Repl P) = \nil \parop \Repl P \equivpnf \nil \parop (P \parop \Repl P) = \pnf(P \parop \Repl P)$, since $\Repl P \equivP P \parop \Repl P$.  

\item Case $\nu n.\nil \equiv \nil$.

We have $\pnf(\nu n.\nil) = \nil \parop \nu n.\nil \equivpnf \nil \parop \nil = \pnf(\nil)$, since $\nu n.\nil \equivP \nil$.

\item Case $\nu u.\nu v.A  \equiv \nu v.\nu u.A$.

If $A$ is a plain process, then $u$ and $v$ are names. (If $u$ or $v$ were variables, these variables would be in the domain of $A$, so $A$ would not be a plain process.)
In this case, $\pnf(\nu u.\nu v.A ) = \nil \parop \nu u.\nu v.A \equivpnf \nil \parop \nu v.\nu u.A = \pnf(\nu v.\nu u.A)$ since $ \nu u.\nu v.A \equivP \nu v.\nu u.A$.

If $A$ is not a plain process, let $\pnf(A) = \Res{\vect n}(\sigma \parop P)$.
If $u$ and $v$ are names, then we have $\pnf(\nu u.\nu v.A) = \Res{u,v,\vect n}(\sigma \parop P) \equivpnf \Res{v,u,\vect n}(\sigma \parop P) = \pnf(\nu v.\nu u.A)$.
If $u$ and $v$ are variables, then $\pnf(\nu u.\nu v.A) = \Res{\vect n}(\sigma_{|\dom(\sigma) \setminus \{u,v\}} \parop P) =  \pnf(\nu v.\nu u.A)$.
If $u$ is a name and $v$ is a variable, then $\pnf(\nu u.\nu v.A) = \Res{u,\vect n}(\sigma_{|\dom(\sigma) \setminus \{v\}} \parop P) = \pnf(\nu v.\nu u.A)$. The remaining case is symmetric.

\item Case $A \parop \nu u.B  \equiv \nu u.(A \parop B)$ with $u \not\in \fv(A) \cup \fn(A)$.

If $B$ is a plain process, then $u$ is a name and $\pnf(\nu u.B) = \nil \parop \nu u.B$. 
\begin{itemize}
\item If $A$ is also a plain process, then $\pnf(A \parop \nu u.B) = \nil \parop (A \parop \nu u.B) \equivpnf \nil \parop \nu u.(A \parop B) = \pnf(\nu u.(A \parop B))$ since $A \parop \nu u.B \equivP \nu u.(A \parop B)$.
\item If $A$ is not a plain process, then 
let $\pnf(A) = \Res{\vect n}(\sigma \parop P)$.
We rename $\vect n$ so that the names of $\vect n$ are not free in $B$ and do
not contain $u$.
We have $\pnf(A \parop \nu u.B) = \Res{\vect n}(\sigma \parop (P \parop \nu u.B)\sigma) \equivpnf \Res{\vect n}(\sigma \parop \nu u.(P \parop B)\sigma) \equivpnf \Res{u,\vect n}(\sigma \parop (P \parop B)\sigma) = \pnf(\nu u.(A \parop B))$.
\end{itemize}

If $B$ is not a plain process, then let $\pnf(B) = \Res{\vect n'}(\sigma' \parop P')$. Let $\pnf(A) = \Res{\vect n}(\sigma \parop P)$.
We rename $\vect n$ and $\vect n'$ so that $\vect n$ and $\vect n'$ are disjoint and do not contain $u$, the names of $\vect n$ are not free in $\sigma' \parop P'$, and the names of $\vect n'$ are not free in $\sigma \parop P$.
\begin{itemize}
\item If $u$ is a name, then $\pnf(A \parop \nu u.B) = 
\Res{\vect n, u, \vect n'}(\comp{\sigma}{\sigma'} \parop (P \parop P')(\comp{\sigma}{\sigma'}))
\equivpnf \Res{u, \vect n, \vect n'}(\comp{\sigma}{\sigma'} \parop (P \parop P')(\comp{\sigma}{\sigma'})) = \pnf( \nu u.(A \parop B))$.
\item If $u$ is a variable, then 
\begin{align*}
\pnf(A \parop \nu u.B)
&= \Res{\vect n, \vect n'}(\comp{\sigma}{(\sigma'_{|\dom(\sigma') \setminus\{u\}})} \parop (P \parop P')(\comp{\sigma}{(\sigma'_{|\dom(\sigma') \setminus\{u\}})}))\\
&=
\Res{\vect n, \vect n'}((\comp{\sigma}{\sigma'})_{|\dom(\comp{\sigma}{\sigma'}) \setminus\{u\}} \parop (P \parop P')(\comp{\sigma}{\sigma'}))\\
&= \pnf( \nu u.(A \parop B))
\end{align*}
\end{itemize}

\item Case $\nu x. \smx \equiv \nil$.

We have $\pnf(\nu x. \smx) = \nil \parop \nil = \pnf(\nil)$.

\item Case $\{\sx M\} \parop A \equiv \{\sx M\} \parop A \{\sx M\}$.

Let $\pnf(A) = \Res{\vect n}(\sigma \parop P)$. We rename $\vect n$ 
so that these names do not occur in $M$.
We have $\pnf(\{\sx M\} \parop A) = \Res{\vect n}((\comp{\{\sx M\}}{\sigma}) \parop (\nil \parop P)(\comp{\{\sx M\}}{\sigma}))$
and $\pnf(\{\sx M\} \parop A \{\sx M\}) = \Res{\vect n}((\comp{\{\sx M\}}{\sigma\{\sx M\}}) \parop (\nil \parop P\{\sx M\})(\comp{\{\sx M\}}{\sigma}))$,
so $\pnf(\{\sx M\} \parop A) = \pnf(\{\sx M\} \parop A \{\sx M\})$.

\item Case $\{\sx M\} \equiv \{ \sx N \}$ with $\Sigma \vdash M = N$.

We have $\pnf(\{\sx M\}) = \{\sx M\} \parop \nil \equivpnf \{ \sx N \} \parop \nil = \pnf(\{ \sx N \})$.

\item Case $\Res{x}A \equiv \Res{x}B$ knowing that $A \equiv B$.

Let $\pnf(A) = \Res{\vect n}(\sigma \parop P)$ and
$\pnf(B) = \Res{\vect n'}(\sigma' \parop P')$.
By induction hypothesis, we have $\pnf(A) \equivpnf \pnf(B)$, that is,
$\Res{\vect n}(\sigma \parop P) \equivpnf  \Res{\vect n'}(\sigma' \parop P')$. 
By Lemma~\ref{lem:equiv-red-induct}\eqref{equiv-Resx}, we conclude that 
\[\pnf(\Res{x}A) = \Res{\vect n}(\sigma_{|\dom(\sigma) \setminus \{ x\}} \parop P) \equivpnf  \Res{\vect n'}(\sigma'_{|\dom(\sigma') \setminus \{ x\}} \parop P') = \pnf(\Res{x}B)\]

\item Case $\Res{n}A \equiv \Res{n}B$ knowing that $A \equiv B$.
By induction hypothesis, we have $\pnf(A)\equivpnf \pnf(B)$.

If $A$ and $B$ are plain processes, then $\pnf(A) = \nil \parop A \equivpnf \nil \parop B = \pnf(B)$, so $\Res{n}(\nil \parop A) \equivpnf \Res{n}(\nil \parop B)$ by Lemma~\ref{lem:equiv-red-induct}\eqref{equiv-Resn}, so 
\[\pnf(\Res{n} A) = \nil \parop \Res{n} A \equivpnf \nil \parop \Res{n}B = \pnf(\Res{n}B)\]
by $\brn{New-Par}''$. 

If $A$ is a plain process and $B$ is not a plain process, then let $\pnf(B) = \Res{\vect n}(\sigma \parop P)$. We have $\pnf(A) = \nil \parop A \equivpnf  \Res{\vect n}(\sigma \parop P) = \pnf(B)$, so $\Res{n}(\nil \parop A) \equivpnf \Res{n, \vect n}(\sigma \parop P)$  by Lemma~\ref{lem:equiv-red-induct}\eqref{equiv-Resn}, so 
\[\pnf(\Res{n} A) = \nil \parop \Res{n} A \equivpnf \Res{n, \vect n}(\sigma \parop P) = \pnf(\Res{n}B)\]
by $\brn{New-Par}''$.

If $A$ and $B$ are not plain processes, then let $\pnf(A) = \Res{\vect n}(\sigma \parop P)$ and $\pnf(B) = \Res{\vect n'}(\sigma' \parop P')$.
We have 
\[\pnf(\Res{n} A) = \Res{n,\vect n}(\sigma \parop P) \equivpnf \Res{n,\vect n'}(\sigma' \parop P') = \pnf(\Res{n} B)\]
by Lemma~\ref{lem:equiv-red-induct}\eqref{equiv-Resn}.

\item Case $A \parop A'' \equiv B \parop A''$ knowing that $A \equiv B$.

Let $\pnf(A) = \Res{\vect n}(\sigma \parop P)$, $\pnf(B) = \Res{\vect n'}(\sigma' \parop P')$, and  $\pnf(A'') = \Res{\vect n''}(\sigma'' \parop P'')$.
We rename $\vect n$, $\vect n'$, and $\vect n''$ so that 
$\vect n$ and $\vect n'$ are disjoint from $\vect n''$, the names of $\vect n$ and $\vect n'$ are not free in $\sigma'' \parop P''$, the names of $\vect n''$ are not free in $\sigma \parop P$ and $\sigma' \parop P'$.
By induction hypothesis, we have $\pnf(A) \equivpnf \pnf(B)$, that is, $\Res{\vect n}(\sigma \parop P) \equivpnf  \Res{\vect n'}(\sigma' \parop P')$, so 
\begin{align*}
\pnf(A \parop A'') &= \Res{\vect n,\vect n''}(\comp{\sigma}{\sigma''} \parop (P \parop P'')(\comp{\sigma}{\sigma''})) \\
&\equivpnf  \Res{\vect n',\vect n''}(\comp{\sigma'}{\sigma''} \parop (P' \parop P'')(\comp{\sigma'}{\sigma''})) = \pnf(B \parop A'')
\end{align*}
by Lemma~\ref{lem:equiv-red-induct}\eqref{equiv-Par}.
\qedhere
\end{itemize}
\end{proof}

\begin{lemma}\label{lem:struct-process-std-to-pnf}
If $P \equiv Q$, then $P \equivP Q$.
\end{lemma}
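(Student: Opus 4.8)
The plan is to obtain the statement from Lemma~\ref{lem:struct-std-to-pnf}, which already transports $\equiv$ to $\equivpnf$ on partial normal forms. Since $P$ and $Q$ are plain, $\pnf(P) = \nil \parop P$ and $\pnf(Q) = \nil \parop Q$, so Lemma~\ref{lem:struct-std-to-pnf} yields $\nil \parop P \equivpnf \nil \parop Q$. It then remains to read off $P \equivP Q$ from this fact, the point being that a chain witnessing $\equivpnf$ between normal processes whose substitution part is empty can be turned, step by step, into a $\equivP$ derivation between the underlying plain processes.

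First I would record that $\equivpnf$ preserves the domain of the substitution part: each of the rules $\brn{Plain}''$, $\brn{New-C}''$, $\brn{New-Par}''$, and $\brn{Rewrite}''$ relates two normal processes with the same $\dom(\sigma)$, hence so does the generated equivalence. Consequently every normal process $\equivpnf$-equivalent to $\nil \parop P$ has empty substitution part, i.e.\ is of the form $\Res{\vect n}(\nil \parop R)$ for a (necessarily plain) process $R$; mapping such a $\Res{\vect n}(\nil \parop R)$ to the plain process $\Res{\vect n}R$ is well defined, since the top-level restrictions and the residual plain body are determined syntactically. The core claim is then: if $\Res{\vect n}(\nil \parop R) \equivpnf \Res{\vect{n}'}(\nil \parop R')$, then $\Res{\vect n}R \equivP \Res{\vect{n}'}R'$. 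Because ``being $\equivP$-related after this read-back'' is itself an equivalence relation on the subclass of normal processes with empty substitution part, and because that subclass is $\equivpnf$-closed, it suffices to verify the claim for one instance of each generating rule whose two sides lie in the subclass (that is, have $\sigma = \nil$ on both sides): for $\brn{Plain}''$ it follows from closure of $\equivP$ under the evaluation context $\Res{\vect n}\hole$; for $\brn{New-C}''$ it follows from $\brn{New-C}'$, since adjacent transpositions generate every reordering of $\vect n$ and $\brn{New-C}'$ applies under evaluation contexts; for $\brn{New-Par}''$ the two read-backs $\Res{\vect n}\Res{n'}R$ and $\Res{\vect n,n'}R$ are literally the same plain process; and an instance of $\brn{Rewrite}''$ with $\sigma = \nil$ forces $\sigma' = \nil$ as well and is trivial. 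Taking $\vect n$ and $\vect{n}'$ empty, $R = P$ and $R' = Q$ then gives $P \equivP Q$.

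The only delicate point is the bookkeeping around the read-back map: checking that it is well defined, and — above all — that the chain realizing $\nil \parop P \equivpnf \nil \parop Q$ never leaves the class of normal processes with empty substitution part, which is exactly the domain-preservation observation above. Everything else reduces to the four-case analysis sketched here, each case being immediate.
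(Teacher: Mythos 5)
Your proof is correct and follows the same route as the paper: reduce to Lemma~\ref{lem:struct-std-to-pnf}, then read $\equivP$ back off the resulting $\equivpnf$ chain by induction over its generating rules, using that $\equivpnf$ preserves the domain of the substitution part so the chain never leaves the class of normal processes with empty $\sigma$. The paper compresses this into the one-sided invariant ``if $\nil \parop P \equivpnf \Res{\vect n}(\sigma \parop Q)$ then $\sigma = \nil$ and $P \equivP \Res{\vect n}Q$'' proved by ``an easy induction''; your symmetric, two-sided formulation of that invariant is exactly what one needs to make the transitivity case go through.
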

\begin{proof}
By Lemma~\ref{lem:struct-std-to-pnf}, $P \equiv Q$ implies $\pnf(P) = \nil \parop P \equivpnf \pnf(Q) = \nil \parop Q$. We show that, if $\nil \parop P \equivpnf \Res{\vect n}(\sigma \parop Q)$, then $\sigma = \nil$ and $P \equivP \Res{\vect n}Q$, by an easy induction on the derivation of $\nil \parop P \equivpnf \Res{\vect n}(\sigma \parop Q)$. By applying this result to $\nil \parop P \equivpnf \nil \parop Q$, we obtain $P \equivP Q$.
\end{proof}

\begin{lemma}\label{lem:struct-pnf-to-std}
If $P \equivP Q$, then $P \equiv Q$.
If $A \equivpnf B$, then $A \equiv B$.
\end{lemma}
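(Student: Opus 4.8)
The plan is to prove each of the two inclusions ${\equivP} \subseteq {\equiv}$ and ${\equivpnf} \subseteq {\equiv}$ by induction on the derivation of $P \equivP Q$ (resp.\ $A \equivpnf B$), checking that every generating axiom is an instance of, or is derivable from, the axioms of standard structural equivalence, and that the closure operations used in the definitions of $\equivP$ and $\equivpnf$---reflexivity, symmetry, transitivity, and (for $\equivP$ only) application of evaluation contexts---are all available for $\equiv$ as well. I would establish the statement about $\equivP$ first, since $\brn{Plain}''$ refers to it.

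For ${\equivP} \subseteq {\equiv}$: the rules $\brn{Par-A}'$, $\brn{Par-C}'$, $\brn{Repl}'$, $\brn{New-C}'$, $\brn{New-Par}'$, and the two $\nil$-rules $\brn{Par-\mbox{$\nil$}}'$ and $\brn{New-\mbox{$\nil$}}'$ are just the corresponding rules of $\equiv$ specialised to plain processes and to restrictions on names, so they transfer immediately. The only rule that needs work is $\brn{Rewrite}'$: given $\Sigma \vdash M = N$, we must show $P\{\subst{M}{x}\} \equiv P\{\subst{N}{x}\}$. I would choose a variable $z$ fresh for $P$, $M$, and $N$, set $P' = P\{\subst{z}{x}\}$ so that $P\{\subst{L}{x}\} = P'\{\subst{L}{z}\}$ for $L \in \{M,N\}$---note that $\{\subst{L}{z}\}$ is a well-formed (cycle-free) active substitution precisely because $z \notin \fv(L)$---and then combine the derived equivalence $A\{\subst{L}{z}\} \equiv \nu z.(\{\subst{L}{z}\} \parop A)$ with $\brn{Rewrite}$ in the evaluation context $\nu z.(\hole \parop P')$:
\[
P\{\subst{M}{x}\} = P'\{\subst{M}{z}\} \;\equiv\; \nu z.(\{\subst{M}{z}\} \parop P') \;\equiv\; \nu z.(\{\subst{N}{z}\} \parop P') \;\equiv\; P'\{\subst{N}{z}\} = P\{\subst{N}{x}\}.
\]
Since $\equiv$ is itself an equivalence relation closed under evaluation contexts, the inductive cases for reflexivity, symmetry, transitivity, and context closure are immediate.

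For ${\equivpnf} \subseteq {\equiv}$: rule $\brn{Plain}''$ follows from the first inclusion, as $P \equivP P'$ gives $P \equiv P'$, to which we apply the evaluation context $\Res{\vect n}(\sigma \parop \hole)$; rule $\brn{New-C}''$ follows by iterating $\brn{New-C}$; rule $\brn{New-Par}''$ follows by applying $\brn{New-Par}$ to $\sigma \parop \Res{n'}P$ (its side condition $n' \notin \fv(\sigma) \cup \fn(\sigma)$ holds because $n'$ is a name, hence not in $\fv(\sigma)$, and $n' \notin \fn(\sigma)$ is assumed) and then the context $\Res{\vect n}(\hole)$; and rule $\brn{Rewrite}''$ follows by rewriting $\sigma$ into $\sigma'$ one component $\{\subst{x\sigma}{x}\}$ at a time with $\brn{Rewrite}$---the side condition $(\fv(x\sigma)\cup\fv(x\sigma')) \cap \dom(\sigma) = \emptyset$ for all $x \in \dom(\sigma)$ guarantees that every intermediate substitution is a well-formed extended process, so $\brn{Rewrite}$ applies componentwise---and then applying the context $\Res{\vect n}(\hole \parop P)$. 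The remaining inductive cases are again reflexivity, symmetry, and transitivity.

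The only genuine obstacle is the treatment of substitutions in $\brn{Rewrite}'$ and $\brn{Rewrite}''$: one cannot simply invoke $\brn{Rewrite}$ on a substitution such as $\{\subst{M}{x}\}$ when $x \in \fv(M)$, since that would be an ill-formed (cyclic) active substitution, and one must also respect the $x \notin \fv(M)$ side condition of the derived Alias/Subst equivalence. The fresh-variable renaming used above, together with the hypotheses on free variables built into $\brn{Rewrite}'$ and $\brn{Rewrite}''$, are precisely what make these steps go through; everything else is routine bookkeeping.
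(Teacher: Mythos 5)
Your proof is correct and follows the same route as the paper, which disposes of this lemma with a one-line ``by induction on the derivations''; you have simply made explicit the case analysis that the paper leaves to the reader. Your treatment of $\brn{Rewrite}'$ via a fresh variable $z$ and the derived Alias/Subst equivalence $A\{\subst{L}{z}\} \equiv \nu z.(\{\subst{L}{z}\} \parop A)$ is exactly the right way to handle the one case that is not a verbatim instance of a rule of $\equiv$.
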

\begin{proof}
By induction on the derivations of $P \equivP Q$ and of $A \equivpnf B$, respectively.
\end{proof}

\begin{lemma}\label{lem:red-std-to-pnf}
If $A \rightarrow B$, then $\pnf(A) \redpnf \pnf(B)$.
\end{lemma}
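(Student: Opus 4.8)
The plan is to argue by induction on the derivation of $A \rightarrow B$, using that internal reduction is the smallest relation generated by the axioms \rulename{Comm}, \rulename{Then}, \rulename{Else} under closure by structural equivalence and by evaluation contexts, so that a derivation consists of exactly these three kinds of step.

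First I would handle the base cases. When $A \rightarrow B$ is an instance of \rulename{Comm}, \rulename{Then}, or \rulename{Else}, both $A$ and $B$ are plain processes, so $\pnf(A) = \nil \parop A$ and $\pnf(B) = \nil \parop B$; each axiom matches an axiom $\brn{Comm}'$, $\brn{Then}'$, $\brn{Else}'$ for $\redP$ on plain processes (for \rulename{Comm} the message is a variable $x$, so the substitution $Q\{\subst{x}{x}\}$ in $\brn{Comm}'$ is just $Q$), giving $A \redP B$ and hence $\nil \parop A \redpnf \nil \parop B$ straight from the definition of $\redpnf$. For the closure under structural equivalence, $A \equiv A' \rightarrow B' \equiv B$ with $\pnf(A') \redpnf \pnf(B')$ by the induction hypothesis; Lemma~\ref{lem:struct-std-to-pnf} gives $\pnf(A) \equivpnf \pnf(A')$ and $\pnf(B') \equivpnf \pnf(B)$, and since $\redpnf$ is closed under $\equivpnf$, we obtain $\pnf(A) \redpnf \pnf(B)$.

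The remaining case is closure under evaluation contexts: $A = \CTX[A_0] \rightarrow \CTX[B_0] = B$ with $A_0 \rightarrow B_0$ and, inductively, $\pnf(A_0) \redpnf \pnf(B_0)$. Here I would run a secondary induction on $\CTX$, which is either $\hole$, a parallel composition, a name restriction, or a variable restriction, since the hole of an evaluation context lies under none of replication, conditional, input, or output. The case $\CTX = \hole$ is immediate; the cases $\CTX = \Res{n}\CTX'$ and $\CTX = \Res{x}\CTX'$ follow, after unfolding the corresponding clauses in the definition of $\pnf$, from Lemma~\ref{lem:equiv-red-induct}\eqref{red-Resn} and~\eqref{red-Resx}; and the case $\CTX = \CTX' \parop D$ (the case $D \parop \CTX'$ being analogous, using in addition commutativity of parallel composition) follows from Lemma~\ref{lem:equiv-red-induct}\eqref{red-Par}, after renaming the restricted names of $\pnf(\CTX'[A_0])$, $\pnf(\CTX'[B_0])$, and $\pnf(D)$ as prescribed by the definition of $\pnf$ on parallel compositions. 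The cycle-freeness side conditions that \eqref{red-Par} requires hold because $\CTX[A_0]$ and $\CTX[B_0]$ are bona fide extended processes, whose underlying substitutions are cycle-free by standing assumption. As in the proof of Lemma~\ref{lem:struct-std-to-pnf}, when one of $\CTX'[A_0]$, $\CTX'[B_0]$, or $D$ is a plain process I would use that the general defining clause of $\pnf$ on parallel compositions still yields the correct normal form, so the plain and non-plain subcases are treated uniformly.

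The hard part will be the bookkeeping in the parallel-composition case: keeping the $\alpha$-renamings of restricted names forced by $\pnf(\cdot \parop \cdot)$ consistent between the two sides, and checking the cycle-freeness hypotheses so that Lemma~\ref{lem:equiv-red-induct}\eqref{red-Par} applies. Once those are in place, the conclusion is a direct assembly of the already-established properties of $\pnf$, $\equivpnf$, $\redpnf$, and Lemma~\ref{lem:equiv-red-induct}.
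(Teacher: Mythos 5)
Your proof is correct and follows essentially the same route as the paper's: induction on the derivation of $A \rightarrow B$, matching the axioms \rulename{Comm}, \rulename{Then}, \rulename{Else} to $\brn{Comm}'$, $\brn{Then}'$, $\brn{Else}'$ on plain processes, discharging the closure under structural equivalence via Lemma~\ref{lem:struct-std-to-pnf}, and discharging the closure under evaluation contexts via Lemma~\ref{lem:equiv-red-induct}\eqref{red-Resx}, \eqref{red-Resn}, and \eqref{red-Par}. The only cosmetic difference is that you package the context case as a secondary induction on the structure of the context, whereas the paper treats each one-step context closure (variable restriction, name restriction, parallel composition) as a separate case of the main induction; these are equivalent.
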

\begin{proof}
By induction on the derivation of $A \rightarrow B$.
\begin{itemize}

\item Cases \brn{Comm}, \brn{Then}, and \brn{Else}. $A$ and $B$
are plain processes, so 
$\pnf(A) = \nil \parop A \redpnf \nil \parop B = \pnf(B)$, since
$A \redP B$ by \brn{Comm}$'$, \brn{Then}$'$, and \brn{Else}$'$ respectively.

\item Case $\Res{x}A \rightarrow \Res{x}B$ knowing $A \rightarrow B$.
The result follows easily from Lemma~\ref{lem:equiv-red-induct}\eqref{red-Resx}.

\item Case $\Res{n}A \rightarrow \Res{n}B$ knowing $A \rightarrow B$.
The result follows easily from Lemma~\ref{lem:equiv-red-induct}\eqref{red-Resn}, by distinguishing cases depending on whether $A$ and $B$ are plain processes or not, as in the proof of Lemma~\ref{lem:struct-std-to-pnf}.

\item Case $A \parop A'' \rightarrow B \parop A''$ knowing $A \rightarrow B$.
The result follows easily from Lemma~\ref{lem:equiv-red-induct}\eqref{red-Par}, as in the proof of Lemma~\ref{lem:struct-std-to-pnf}. 

\item If we apply $\equiv$, the result follows immediately from
Lemma~\ref{lem:struct-std-to-pnf}.
\qedhere
\end{itemize}
\end{proof}

\begin{lemma}\label{lem:red-pnf-to-std}
If $P \redP Q$, then $P \rightarrow Q$.
If $A \redpnf B$, then $A \rightarrow B$.
\end{lemma}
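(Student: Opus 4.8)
The plan is to prove both statements by induction on the respective derivations, reusing the closure properties of the standard semantics and relying on Lemma~\ref{lem:struct-pnf-to-std} to transport structural equivalences from the partial-normal-form semantics back to the standard one. This mirrors the proof of Lemma~\ref{lem:red-std-to-pnf} in the reverse direction.

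For the first statement, I would induct on the derivation of $P \redP Q$. The cases \brn{Then}$'$ and \brn{Else}$'$ are immediate, since those rules are literally \brn{Then} and \brn{Else} of the standard semantics. The interesting base case is \brn{Comm}$'$, where ${\Snd N M.P} \parop {\Rcv{N}{x}.Q} \redP P \parop Q\{\subst{M}{x}\}$; here I would recover the corresponding standard reduction by replaying the derivation displayed after rule \brn{Comm} in Section~\ref{sec:ope-sem}, namely
\[{\Snd N M.P} \parop {\Rcv N x.Q} \equiv \nu x.(\{\subst{M}{x}\} \parop {\Snd N x.P} \parop {\Rcv N x.Q}) \rightarrow \nu x.(\{\subst{M}{x}\} \parop P \parop Q) \equiv P \parop Q\{\subst{M}{x}\},\]
after first alpha-renaming the bound input variable $x$ so that $x \notin \fv(N) \cup \fv(M) \cup \fv(P)$. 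For the closure of $\redP$ by evaluation contexts I would invoke the induction hypothesis together with the fact that $\rightarrow$ is itself closed by application of evaluation contexts. For the closure of $\redP$ by $\equivP$: if $P \equivP P' \redP Q' \equivP Q$, the induction hypothesis gives $P' \rightarrow Q'$, Lemma~\ref{lem:struct-pnf-to-std} gives $P \equiv P'$ and $Q' \equiv Q$, and since $\rightarrow$ is closed by $\equiv$ we conclude $P \rightarrow Q$.

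For the second statement, I would induct on the derivation of $A \redpnf B$. The base case is $\Res{\vect n}(\sigma \parop P) \redpnf \Res{\vect n}(\sigma \parop P')$ with $P \redP P'$; the first statement gives $P \rightarrow P'$, and since $\Res{\vect n}(\sigma \parop \hole)$ is an evaluation context and $\rightarrow$ is closed by application of such contexts, $\Res{\vect n}(\sigma \parop P) \rightarrow \Res{\vect n}(\sigma \parop P')$. For closure by $\equivpnf$ I proceed exactly as in the $\equivP$ case above, now using the second half of Lemma~\ref{lem:struct-pnf-to-std}.

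The main obstacle, modest as it is, lies in the \brn{Comm}$'$ case: one must check that the freshness side condition $x \notin \fv(N) \cup \fv(M) \cup \fv(P)$ required by the \brn{New-Par} step can always be achieved by alpha-renaming the bound variable $x$ of the input, and that this renaming does not change the right-hand side, since $Q\{\subst{M}{x}\}$ is stable under simultaneous renaming of $x$ in the input and in the substitution (processes and terms being identified modulo renaming of bound names and variables). Everything else is a routine transfer of closure properties, paralleling the proof of Lemma~\ref{lem:red-std-to-pnf}.
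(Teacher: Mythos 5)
Your proof is correct and follows essentially the same route as the paper's (which is a one-line ``induction on the derivations, using Lemma~\ref{lem:struct-pnf-to-std} in the $\equivP$/$\equivpnf$ cases''); you have merely spelled out the one genuinely non-immediate step, namely recovering a standard \brn{Comm} reduction from \brn{Comm}$'$ via the \rulename{Alias}/\rulename{Subst} derivation of Section~\ref{sec:ope-sem}, with the correct freshness side condition.
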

\begin{proof}
By induction on the derivations of  $P \redP Q$ and $A \redpnf B$, respectively.
In the cases in which we apply $\equivP$ or $\equivpnf$, we rely on Lemma~\ref{lem:struct-pnf-to-std}.
\end{proof}

Similarly, we define restricted labelled transitions. First, for plain processes, we define $P \ltrP{\alpha} A$ as follows:
\[\begin{array}{lc}
  \brn{In}' &
  \Rcv{N}{x}.P 
  \ltrP{ \Rcv{N}{M} }  
  P \{\subst{M}{x}\}  
  \\[3ex]
  \brn{Out-Var}' & 
  \infrule{ 
    x \notin \fv(\Snd{N}{M}.P)}{
    \Snd{N}{M}.P \ltrP{\nu x.\Snd{N}{x}} P \parop \{ \subst{M}{x} \} }
  \\[3ex]
  \brn{Scope}' &
  \infrule{ 
    P \ltrP{\alpha} A \hspace{5ex} 
    n \mbox{ does not occur in } \alpha
    }{
    \nu n.P \ltrP{\alpha} \nu n.A}
  \\[3ex]
  \brn{Par}' & \hspace{-3ex}
  \infrule{
    P \ltrP{\alpha} A \hspace{5ex} 
    \bv(\alpha)\cap \fv(Q) = \emptyset }{
    P \parop Q \ltrP{\alpha} A\parop Q }
  \\[3ex]
  \brn{Struct}' & \hspace{-3ex}
  \infrule{
    P \equivP Q \hspace{5ex} Q \ltrP{\alpha} B \hspace{5ex} B \equiv A }{
    P \ltrP{\alpha} A}
\end{array}\]
We define $A \ltrpnf{\alpha} B$, where $A$ is a normal process and $B$ is an extended process, 
as follows: there exist $\vect n$, $\sigma$, $P$, $\alpha'$, $B'$
such that
$A \equivpnf \Res{\vect n}(\sigma \parop P)$, 
$P \ltrP{\alpha'} B'$, $B \equiv \Res{\vect n}(\sigma \parop B')$, 
$\fv(\sigma) \cap \bv(\alpha') = \emptyset$, 
$\Sigma \vdash \alpha \sigma = \alpha'$, and the elements of $\vect n$ 
do not occur in $\alpha$.

We give below an alternative formulation of $\ltrP{\alpha}$.
\begin{lemma}\label{lem:caract-ltrP}
We have $P \ltrP{\alpha} A$ if and only if for some $\vect n$, $P_1$, $P_2$, $A_1$, $N$, $M$, $P'$, $x$, we have
$P \equivP \Res{\vect n}(P_1 \parop P_2)$,
$A \equiv \Res{\vect n}(A_1 \parop P_2)$,
$\{\vect n\} \cap \fn(\alpha) = \emptyset$,
$\bv(\alpha) \cap \fv(P_1 \parop P_2) = \emptyset$,
and
one of the following two cases holds:
\begin{enumerate}
\item $\alpha = \Rcv{N}{M}$, $P_1 = \Rcv{N}{x}.P'$, and $A_1 = P'\{\subst{M}{x}\}$; or
\item $\alpha = \Res{x}\Snd{N}{x}$, $P_1 = \Snd{N}{M}.P'$, and $A_1 = P' \parop \{\subst{M}{x}\}$.
\end{enumerate}
\end{lemma}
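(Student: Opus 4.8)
The plan is to prove the two implications separately. The right-to-left implication will be a direct synthesis of a derivation from the rules defining $\ltrP{}$, and the left-to-right implication will go by induction on the derivation of $P \ltrP{\alpha} A$. This is the standard ``normal form for labelled transitions'' lemma; the only mild subtlety is the asymmetry built into the statement, namely that the source $P$ is a plain process related by $\equivP$ while the target $A$ may be an extended process related by $\equiv$, so the two sides of each transition must be handled with their respective equivalences.

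For the right-to-left implication, assume the listed data are given. If $\alpha = \Rcv{N}{M}$, rule $\brn{In}'$ gives $P_1 = \Rcv{N}{x}.P' \ltrP{\alpha} P'\{\subst{M}{x}\} = A_1$; if $\alpha = \Res{x}\Snd{N}{x}$, then since $\bv(\alpha) = \{x\}$ is disjoint from $\fv(P_1 \parop P_2)$ the side condition of $\brn{Out-Var}'$ holds, and it gives $P_1 = \Snd{N}{M}.P' \ltrP{\alpha} P' \parop \{\subst{M}{x}\} = A_1$. In both cases $\brn{Par}'$ (whose side condition $\bv(\alpha) \cap \fv(P_2) = \emptyset$ again follows from the hypothesis) gives $P_1 \parop P_2 \ltrP{\alpha} A_1 \parop P_2$, then $\brn{Scope}'$, applied once per name of $\vect n$ --- each application legitimate because $\{\vect n\} \cap \fn(\alpha) = \emptyset$ --- gives $\Res{\vect n}(P_1 \parop P_2) \ltrP{\alpha} \Res{\vect n}(A_1 \parop P_2)$, and finally $\brn{Struct}'$ with the two given equivalences $P \equivP \Res{\vect n}(P_1 \parop P_2)$ and $\Res{\vect n}(A_1 \parop P_2) \equiv A$ yields $P \ltrP{\alpha} A$.

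For the left-to-right implication, I argue by induction on the derivation of $P \ltrP{\alpha} A$. In the base cases $\brn{In}'$ and $\brn{Out-Var}'$, take $\vect n$ empty and $P_2 = \nil$ and read $P_1$, $A_1$, $N$, $M$, $P'$, $x$ off the rule; the clause $\bv(\alpha) \cap \fv(P_1 \parop P_2) = \emptyset$ is vacuous in the first case and is exactly the rule's side condition in the second. For $\brn{Scope}'$ (concluding $\nu n.P \ltrP{\alpha} \nu n.A$ from $P \ltrP{\alpha} A$ with $n$ not in $\alpha$), the induction hypothesis supplies a decomposition for $P \ltrP{\alpha} A$, and using closure of $\equivP$ and of $\equiv$ under the evaluation context $\nu n.\hole$ the same decomposition with name list $n, \vect n$ works for $\nu n.P \ltrP{\alpha} \nu n.A$; the clause $(\{n\}\cup\{\vect n\}) \cap \fn(\alpha) = \emptyset$ holds because $n \notin \fn(\alpha)$. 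For $\brn{Struct}'$ (from $P \equivP Q$, $Q \ltrP{\alpha} B$, $B \equiv A$), apply the induction hypothesis to $Q \ltrP{\alpha} B$ and chain the resulting equivalences: $P \equivP Q \equivP \Res{\vect n}(P_1\parop P_2)$ by transitivity of $\equivP$, and $A \equiv B \equiv \Res{\vect n}(A_1 \parop P_2)$ by transitivity of $\equiv$.

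I expect the delicate case to be $\brn{Par}'$, which concludes $P \parop Q \ltrP{\alpha} A \parop Q$ from $P \ltrP{\alpha} A$ and $\bv(\alpha) \cap \fv(Q) = \emptyset$. The induction hypothesis gives $P \equivP \Res{\vect n}(P_1 \parop P_2)$ and $A \equiv \Res{\vect n}(A_1 \parop P_2)$ with the stated properties, but to absorb $Q$ into the inert component I must first $\alpha$-rename the restricted \emph{names} $\vect n$ to fresh names $\vect n'$ disjoint from $\fn(Q)$ and still from $\fn(\alpha)$; because these are name restrictions, this renaming leaves all free variables --- hence the clause about $\bv(\alpha)$ and the syntactic form of $P_1$ and $A_1$ --- unchanged. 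Then scope extrusion ($\brn{New-Par}'$ on the $\equivP$-equation and $\brn{New-Par}$ on the $\equiv$-equation, iterated over $\vect n'$, with $\brn{Par-A}$ and $\brn{Par-C}$ to rearrange) gives $P \parop Q \equivP \Res{\vect n'}(P_1' \parop (P_2' \parop Q))$ and $A \parop Q \equiv \Res{\vect n'}(A_1' \parop (P_2' \parop Q))$, so $P_2' \parop Q$ serves as the new inert component, and $\bv(\alpha) \cap \fv(P_1' \parop P_2' \parop Q) = \emptyset$ follows by combining the induction hypothesis with the side condition of $\brn{Par}'$. This $\alpha$-conversion of $\vect n$, together with the routine verification that every clause of the right-hand side survives it --- and the discipline of keeping $\equivP$ on the source and $\equiv$ on the target --- is the main point requiring care; everything else is mechanical.
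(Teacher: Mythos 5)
Your proposal is correct and follows essentially the same route as the paper's proof: a direct synthesis via $\brn{In}'$/$\brn{Out\mbox{-}Var}'$, $\brn{Par}'$, $\brn{Scope}'$, and $\brn{Struct}'$ for the right-to-left direction, and an induction on the derivation for the left-to-right direction, with the same renaming of the restricted names $\vect n$ to fresh ones in the $\brn{Par}'$ case so that the new parallel component can be absorbed into the inert part. The paper likewise notes that this renaming leaves $\alpha$, $N$, and $M$ untouched and only acts inside $P'$ (and $M$ in the output case), which matches your observation that the clauses of the decomposition survive the $\alpha$-conversion.
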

\begin{proof}
For the implication from left to right, we proceed by induction on this derivation of $P \ltrP{\alpha} A$.
\begin{itemize}
\item Case $\brn{In}'$: We are in the first case with $P_2 = \nil$ and $\vect n = \emptyset$.
\item Case $\brn{Out-Var}'$: We are in the second case with $P_2 = \nil$ and $\vect n = \emptyset$.
\item Case $\brn{Scope}'$: $P \ltrP{\alpha} A$ has been derived from
$Q \ltrP{\alpha} B$ with $P = \Res{n}Q$, $A = \Res{n}B$, and $n$ does not occur in $\alpha$. By induction hypothesis, $Q \equivP \Res{\vect n'}(Q_1 \parop Q_2)$,
$B \equiv \Res{\vect n'}(B_1 \parop Q_2)$,
$\{\vect n'\} \cap \fn(\alpha) = \emptyset$ and
$\bv(\alpha) \cap \fv(Q_1 \parop Q_2) = \emptyset$.
So $P \equivP \Res{n,\vect n'}(Q_1 \parop Q_2)$, $A \equivP \Res{n,\vect n'}(B_1 \parop Q_2)$, $\{n,\vect n'\} \cap \fn(\alpha) = \emptyset$ and
$\bv(\alpha) \cap \fv(Q_1 \parop Q_2) = \emptyset$.

\item Case $\brn{Par}'$: $P \ltrP{\alpha} A$ has been derived from
$Q \ltrP{\alpha} B$ with $P = Q \parop Q'$, $A = B \parop Q'$,
and $\bv(\alpha) \cap \fv(Q') = \emptyset$.
By induction hypothesis, $Q \equivP \Res{\vect n'}(Q_1 \parop Q_2)$,
$B \equiv \Res{\vect n'}(B_1 \parop Q_2)$,
$\{\vect n'\} \cap \fn(\alpha) = \emptyset$ and
$\bv(\alpha) \cap \fv(Q_1 \parop Q_2) = \emptyset$.
So $P \equivP \Res{\vect n'}(Q_1 \parop Q_2)\parop Q'
\equivP \Res{\vect n}(Q_1\{\subst{\vect n}{\vect n'}\} \parop (Q_2\{\subst{\vect n}{\vect n'}\} \parop Q'))$ and
$A \equiv \Res{\vect n'}(B_1 \parop Q_2)\parop Q'
\equiv \Res{\vect n}(B_1\{\subst{\vect n}{\vect n'}\} \parop (Q_2\{\subst{\vect n}{\vect n'}\} \parop Q'))$ where $\vect n$ consists of fresh names that do not occur in $\alpha$ nor $Q'$.
Let $P_1 = Q_1\{\subst{\vect n}{\vect n'}\}$, $P_2 = Q_2\{\subst{\vect n}{\vect n'}\} \parop Q'$, and $A_1 = B_1\{\subst{\vect n}{\vect n'}\}$.
Then $P \equivP \Res{\vect n}(P_1 \parop P_2)$, $A \equiv \Res{\vect n}(A_1 \parop P_2)$, 
$\{\vect n\} \cap \fn(\alpha) = \emptyset$,
$\bv(\alpha) \cap \fv(P_1 \parop P_2) = \emptyset$,
and the two cases are preserved because the renaming of $\vect n'$ into $\vect n$ leaves $\alpha$ unchanged, so in the first case, it leaves $N$ and $M$ unchanged, and just renames inside $P'$, and in the second case, it leaves $N$ unchanged and renames inside $M$ and $P'$.

\item Case $\brn{Struct}'$:
$P \ltrP{\alpha} A$ has been derived from
$Q \ltrP{\alpha} B$ with $P \equivP Q$ and $B \equiv A$.
By induction hypothesis, $Q \equivP \Res{\vect n'}(Q_1 \parop Q_2)$,
$B \equiv \Res{\vect n'}(B_1 \parop Q_2)$,
$\{\vect n'\} \cap \fn(\alpha) = \emptyset$ and
$\bv(\alpha) \cap \fv(Q_1 \parop Q_2) = \emptyset$.
So $P \equivP \Res{\vect n'}(Q_1 \parop Q_2)$,
$A \equiv \Res{\vect n'}(B_1 \parop Q_2)$,
$\{\vect n'\} \cap \fn(\alpha) = \emptyset$ and
$\bv(\alpha) \cap \fv(Q_1 \parop Q_2) = \emptyset$.

\end{itemize}
For the converse implication, we have $P_1 \ltrP{\alpha} A_1$ 
by $\brn{In}'$ in Case 1 and by $\brn{Out-Var}'$ in Case 2.
Then $P_1 \parop P_2 \ltrP{\alpha} A_1 \parop P_2$ by $\brn{Par}'$, 
$\Res{\vect n}(P_1 \parop P_2) \ltrP{\alpha} \Res{\vect n}(A_1 \parop P_2)$ by $\brn{Scope}'$,
and $P \ltrP{\alpha} A$ by $\brn{Struct}'$.
\end{proof}

\begin{lemma}\label{lem:instance-ltrP}
If $P \ltrP{\alpha} A$ and $\fv(\sigma) \cap \bv(\alpha) = \emptyset$,
then $P \sigma \ltrP{\alpha \sigma} A \sigma$.
\end{lemma}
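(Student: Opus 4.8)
The plan is to argue by induction on the derivation of $P \ltrP{\alpha} A$, following the five rules $\brn{In}'$, $\brn{Out-Var}'$, $\brn{Scope}'$, $\brn{Par}'$, $\brn{Struct}'$ that define $\ltrP$. Since expressions are identified up to renaming of bound names and variables, I would first $\alpha$-rename every bound name or variable that the transition introduces or crosses — the restricted names $\vect n$ (or $n$ in $\brn{Scope}'$), and the output variable $x$ in $\brn{Out-Var}'$ — so that it lies outside $\dom(\sigma)\cup\fv(\sigma)\cup\fn(\sigma)$. This makes $\sigma$ distribute over restriction and parallel composition without capture, and makes available the substitution-composition identity $(P\{\subst{M}{x}\})\sigma = (P\sigma)\{\subst{M\sigma}{x}\}$ for such a fresh $x$. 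Note also that $\dom(\sigma)\subseteq\fv(\sigma)$, so the hypothesis $\fv(\sigma)\cap\bv(\alpha)=\emptyset$ already yields $\dom(\sigma)\cap\bv(\alpha)=\emptyset$; in the output case this says $x\notin\dom(\sigma)$, which keeps $\{\subst{M}{x}\}\sigma$ well formed.

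Before the induction itself I would isolate one auxiliary fact: structural equivalence on \emph{extended} processes is preserved by substitution, i.e.\ $A\equiv B$ implies $A\sigma\equiv B\sigma$ (after renaming the bound names and variables of $A$ and $B$ apart from $\dom(\sigma)\cup\fv(\sigma)\cup\fn(\sigma)$). This follows by a routine induction over the axioms of $\equiv$: the axioms for parallel composition and restriction survive applying $\sigma$ verbatim, $\brn{Alias}$ and $\brn{Subst}$ survive after the renaming, $\brn{Rewrite}$ survives because the equational theory is closed under substitution, and closure under evaluation contexts is immediate. (The corresponding statement for plain processes and $\equivP$ is Lemma~\ref{lem:instance}\eqref{instance:equiv}, which is already available.)

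The cleanest way to organize the argument is then through the characterisation of $\ltrP$ in Lemma~\ref{lem:caract-ltrP}. From $P \ltrP{\alpha} A$ I obtain $\vect n, P_1, P_2, A_1$ (and $N, M, P', x$) with $P\equivP\Res{\vect n}(P_1\parop P_2)$, $A\equiv\Res{\vect n}(A_1\parop P_2)$, $\{\vect n\}\cap\fn(\alpha)=\emptyset$, $\bv(\alpha)\cap\fv(P_1\parop P_2)=\emptyset$, in one of the two shapes. Renaming $\vect n$ and $x$ fresh for $\sigma$, Lemma~\ref{lem:instance}\eqref{instance:equiv} gives $P\sigma\equivP\Res{\vect n}(P_1\sigma\parop P_2\sigma)$ and the auxiliary fact gives $A\sigma\equiv\Res{\vect n}(A_1\sigma\parop P_2\sigma)$; the side conditions transfer because $\fn(\alpha\sigma)\subseteq\fn(\alpha)\cup\fn(\sigma)$, $\bv(\alpha\sigma)=\bv(\alpha)$, and $\fv(P_1\sigma\parop P_2\sigma)\subseteq(\fv(P_1\parop P_2)\setminus\dom(\sigma))\cup\fv(\sigma)$, using $\bv(\alpha)\cap\fv(\sigma)=\emptyset$. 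In the input shape, $\alpha\sigma=\Rcv{N\sigma}{M\sigma}$, $P_1\sigma=\Rcv{N\sigma}{x}.P'\sigma$ and $A_1\sigma=(P'\sigma)\{\subst{M\sigma}{x}\}$ by the substitution-composition identity; in the output shape, $\alpha\sigma=\nu x.\Snd{N\sigma}{x}$ (a legal label, since $x\notin\fv(N)\cup\fv(\sigma)$ forces $x\notin\fv(N\sigma)$), $P_1\sigma=\Snd{N\sigma}{M\sigma}.P'\sigma$ and $A_1\sigma=P'\sigma\parop\{\subst{M\sigma}{x}\}$. Applying the converse direction of Lemma~\ref{lem:caract-ltrP} then yields $P\sigma\ltrP{\alpha\sigma}A\sigma$. (Alternatively, one can perform the bare induction on the five rules; the only non-mechanical step is $\brn{Struct}'$, where the conclusion $B\sigma\equiv A\sigma$ needs precisely the auxiliary $\equiv$-preservation fact, while $P\sigma\equivP Q\sigma$ is Lemma~\ref{lem:instance}\eqref{instance:equiv}.)

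I expect the main obstacle to be bookkeeping rather than anything conceptual: keeping the capture-avoiding renamings consistent and re-checking that every side condition of the rules ($x$ fresh in $\brn{Out-Var}'$, $n$ or $\vect n$ not occurring in $\alpha$ in $\brn{Scope}'$, $\bv(\alpha)\cap\fv(B)=\emptyset$ in $\brn{Par}'$) is re-established after applying $\sigma$, together with first stating and discharging the auxiliary lemma that $\equiv$ on extended processes is closed under substitution.
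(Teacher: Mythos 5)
Your proposal is correct and follows essentially the same route as the paper: the paper's proof likewise combines Lemma~\ref{lem:instance}\eqref{instance:equiv}, the auxiliary fact that $\equiv$ on extended processes is preserved by substitution, and the characterization of Lemma~\ref{lem:caract-ltrP} after renaming $\vect n$ away from $\fn(\sigma)$. The only cosmetic difference is that the paper discharges the auxiliary fact not by induction over the axioms of $\equiv$ but by observing $A\sigma \equiv \Res{\vect x}(A \parop \sigma)$ with $\{\vect x\} = \dom(\sigma)$, so closure of $\equiv$ under evaluation contexts gives it immediately.
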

\begin{proof}
By Lemma~\ref{lem:instance}\eqref{instance:equiv},
if $P \equivP P'$, then $P \sigma \equivP P' \sigma$,
We show that, if $A \equiv B$ and $\dom(\sigma) \cap \dom(A) = \emptyset$, 
then $A \sigma \equiv B \sigma$,
by noticing that $A \sigma \equiv \Res{\vect x}(A \parop \sigma)$
where $\{ \vect x \} = \dom(\sigma)$.
Then, we use the characterization of Lemma~\ref{lem:caract-ltrP},
after renaming the elements of $\vect n$ so that $\{ \vect n \}
\cap \fn(\sigma) = \emptyset$.
\end{proof}

\begin{lemma}\label{lem:redalpha-std-to-pnf}
If $A \ltr{\alpha} B$, then $\pnf(A) \ltrpnf{\alpha} B$.
\end{lemma}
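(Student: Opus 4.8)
The plan is to prove this by induction on the derivation of $A \ltr{\alpha} B$, treating the five rules $\brn{In}$, $\brn{Out-Var}$, $\brn{Scope}$, $\brn{Par}$, $\brn{Struct}$ of the labelled semantics in turn, mirroring the proof of Lemma~\ref{lem:red-std-to-pnf} for internal reduction. Two observations streamline the argument: $\ltrpnf{\alpha}$ is closed on the left under $\equivpnf$ and on the right under $\equiv$, directly from its definition (which existentially quantifies over a normal form $\equivpnf$-equal to the source and reads the target off only up to $\equiv$); and $\equivpnf$ implies $\equiv$ (Lemma~\ref{lem:struct-pnf-to-std}), while $A \equiv B$ implies $\pnf(A) \equivpnf \pnf(B)$ (Lemma~\ref{lem:struct-std-to-pnf}). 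Together these dispose of $\brn{Struct}$: from $A \equiv A_1$, $A_1 \ltr{\alpha} A_1'$, $A_1' \equiv B$ and the induction hypothesis $\pnf(A_1) \ltrpnf{\alpha} A_1'$ we get $\pnf(A) \equivpnf \pnf(A_1) \ltrpnf{\alpha} A_1' \equiv B$. For the base cases $\brn{In}$ and $\brn{Out-Var}$, $A$ is a plain process ($\Rcv{N}{x}.P$ or $\Snd{N}{M}.P$), so $\pnf(A) = \nil \parop A$, and the required $\ltrpnf{\alpha}$ transition is exhibited directly with empty substitution and no restrictions, taking $\brn{In}'$ or $\brn{Out-Var}'$ as the witnessing plain transition.

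For $\brn{Scope}$, $A = \Res{u}A_0 \ltr{\alpha} \Res{u}B_0 = B$ comes from $A_0 \ltr{\alpha} B_0$ with $u \notin \fv(\alpha) \cup \fn(\alpha)$, and the induction hypothesis supplies witnesses $\vect n, \sigma, P, \alpha', B_0'$ with $\pnf(A_0) \equivpnf \Res{\vect n}(\sigma \parop P)$, $P \ltrP{\alpha'} B_0'$, $B_0 \equiv \Res{\vect n}(\sigma \parop B_0')$, $\fv(\sigma) \cap \bv(\alpha') = \emptyset$, $\Sigma \vdash \alpha\sigma = \alpha'$, and $\vect n$ disjoint from $\alpha$. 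When $u$ is a name I would prepend the restriction, using Lemma~\ref{lem:equiv-red-induct}\eqref{equiv-Resn} (plus the partial-normal-form rule $\brn{New-Par}''$ to reconcile with the definition of $\pnf$ when $A_0$ is plain) to obtain $\pnf(\Res{u}A_0) \equivpnf \Res{u, \vect n}(\sigma \parop P)$; the side conditions persist since $u \notin \fn(\alpha)$. When $u$ is a variable $x$, the definition of $\pnf$ deletes $\{\subst{(x\sigma)}{x}\}$ from the substitution, and Lemma~\ref{lem:equiv-red-induct}\eqref{equiv-Resx} gives $\pnf(\Res{x}A_0) \equivpnf \Res{\vect n}(\sigma_{|\dom(\sigma) \setminus \{x\}} \parop P)$; the crucial point is that a partial normal form has no domain variable free in the images of its substitution or in its plain part, so from $x \in \dom(\sigma)$ and $x \notin \fv(\alpha)$ one gets $x \notin \fv(\sigma_{|\dom(\sigma) \setminus \{x\}})$ and, after first rewriting an input term to $M\sigma$ by $\brn{Rewrite}'$, also $x \notin \fv(B_0')$, whence the derived equivalence $C\{\subst{L}{x}\} \equiv \Res{x}(\{\subst{L}{x}\} \parop C)$ gives $\Res{x}B_0 \equiv \Res{\vect n}(\sigma_{|\dom(\sigma) \setminus \{x\}} \parop B_0')$; and $\Sigma \vdash \alpha\,\sigma_{|\dom(\sigma) \setminus \{x\}} = \alpha'$ follows from $\Sigma \vdash \alpha\sigma = \alpha'$ since the two substitutions agree on $\fv(\alpha)$.

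For $\brn{Par}$, $A = A_0 \parop B_1 \ltr{\alpha} A_0' \parop B_1 = B$ comes from $A_0 \ltr{\alpha} A_0'$ with $\bv(\alpha) \cap \fv(B_1) = \emptyset$. Using the induction hypothesis and the freedom in the definition of $\ltrpnf$ (the substitution part of the chosen normal form may be taken to have $\Sigma$-equal images, via $\brn{Rewrite}''$), I would choose the normal form for $A_0$ to have the substitution part $\sigma_0$ of $\pnf(A_0)$ itself, so that the cycle-freeness of $\comp{\sigma_0}{\sigma_1}$, where $\pnf(B_1) = \Res{\vect n_1}(\sigma_1 \parop P_1)$, is automatic (it is exactly what occurs in $\pnf(A_0 \parop B_1)$). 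Then Lemma~\ref{lem:equiv-red-induct}\eqref{equiv-Par}, applied to $\pnf(A_0) \equivpnf \Res{\vect n}(\sigma_0 \parop P)$, shows $\pnf(A_0 \parop B_1) \equivpnf \Res{\vect n, \vect n_1}(\rho \parop (P \parop P_1)\rho)$ with $\rho = \comp{\sigma_0}{\sigma_1}$ (bound names renamed mutually fresh). I would then push $P \ltrP{\alpha'} B_0'$ through $\rho$ with Lemma~\ref{lem:instance-ltrP} (after $\alpha$-renaming the bound variable of $\alpha'$ away from $\rho$ and $B_1$) and through the parallel component $P_1\rho$ with $\brn{Par}'$, obtaining $(P \parop P_1)\rho \ltrP{\alpha'\rho} (B_0' \parop P_1)\rho$. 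These data witness $\pnf(A) \ltrpnf{\alpha} A_0' \parop B_1$: the label condition $\Sigma \vdash \alpha\rho = \alpha'\rho$ follows from $\Sigma \vdash \alpha\sigma_0 = \alpha'$ by congruence (using that $\rho$ acts on terms as $\sigma_0$ followed by $\sigma_1$), and the target is right because $A_0' \equiv \Res{\vect n}(\sigma_0 \parop B_0')$ gives $A_0' \parop B_1 \equiv \Res{\vect n, \vect n_1}(\rho \parop (B_0' \parop P_1)\rho)$.

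I expect the main obstacle to be precisely this substitution book-keeping, concentrated in the $\brn{Par}$ case and, to a lesser degree, in the variable case of $\brn{Scope}$: one must keep the normalizing composition $\comp{\sigma_0}{\sigma_1}$ (resp.\ the deleted component $\{\subst{(x\sigma)}{x}\}$) in sync with the label, the plain part and the target, and discharge a handful of freshness and cycle-freeness side conditions together with the subtleties caused by the label being pinned down only up to $\Sigma$-equality. All of these, however, are exactly the situations that Lemmas~\ref{lem:equiv-red-induct}, \ref{lem:caract-ltrP} and~\ref{lem:instance-ltrP} (and the rules $\brn{Rewrite}'$, $\brn{Rewrite}''$) were engineered to handle, so once those are available the verification is routine if laborious.
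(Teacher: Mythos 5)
Your proposal is correct and follows essentially the same route as the paper's proof: induction on the derivation with the plain-process base cases handled via $\pnf(A)=\nil\parop A$, \brn{Struct} via Lemma~\ref{lem:struct-std-to-pnf}, \brn{Scope} split into the name and variable subcases with the substitution domain restricted in the latter, and \brn{Par} discharged by composing the two partial normal forms and pushing the plain transition through $\brn{Par}'$ and Lemma~\ref{lem:instance-ltrP}. The only differences (the order in which $\brn{Par}'$ and the instantiation lemma are applied, and your explicit choice of $\pnf(A_0)$'s own substitution as the representative) are inessential.
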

\begin{proof}
By induction on the derivation of $A \ltr{\alpha} B$.
\begin{itemize}
\item In all cases in which $A$ is a plain process, we have
$\pnf(A) = \nil \parop A \ltrP{\alpha} B$ since, for plain processes,
the rules that define $A \ltrP{\alpha} B$ are the same as those
that define $A \ltr{\alpha} B$.
So $\pnf(A) = \nil \parop A \ltrpnf{\alpha} B$, with
$\alpha' = \alpha$, $\sigma = \nil$, and $\vect n = \emptyset$.

\item Case \brn{Scope} with $u = n$. We have $A' \ltr{\alpha} B'$,
$n$ does not occur in $\alpha$, $A = \Res{n}A'$, and $B = \Res{n}B'$.
By induction hypothesis, $\pnf(A') \ltrpnf{\alpha} B'$, so
$\pnf(A') \equivpnf \Res{\vect n}(\sigma \parop P)$, 
$P \ltrP{\alpha'} B''$, $B' \equiv \Res{\vect n}(\sigma \parop B'')$, 
$\fv(\sigma) \cap \bv(\alpha') = \emptyset$, 
$\Sigma \vdash \alpha \sigma = \alpha'$, and the elements of $\vect n$ 
do not occur in $\alpha$.
So $\pnf(A) \equivpnf \Res{n, \vect n}(\sigma \parop P)$
and $B \equiv \Res{n, \vect n}(\sigma \parop B'')$,
so $\pnf(A) \ltrpnf{\alpha} B$.

\item Case \brn{Scope} with $u = x$. We have $A' \ltr{\alpha} B'$,
$x$ does not occur in $\alpha$, $A = \Res{x}A'$, and $B = \Res{x}B'$.
By induction hypothesis, $\pnf(A') \ltrpnf{\alpha} B'$, so
$\pnf(A') \equivpnf \Res{\vect n}(\sigma \parop P)$, 
$P \ltrP{\alpha'} B''$, $B' \equiv \Res{\vect n}(\sigma \parop B'')$, 
$\fv(\sigma) \cap \bv(\alpha') = \emptyset$, 
$\Sigma \vdash \alpha \sigma = \alpha'$, and the elements of $\vect n$ 
do not occur in $\alpha$.
Let $\sigma' = \sigma_{|\dom(\sigma) \setminus \{x\}}$.
So $\pnf(A) \equivpnf \Res{\vect n}(\sigma' \parop P)$,
$P \ltrP{\alpha'} B''$, $B \equiv \Res{\vect n}(\sigma' \parop B'')$,
$\fv(\sigma') \cap \bv(\alpha') = \emptyset$,
$\Sigma \vdash \alpha \sigma' = \alpha'$ since $x$ does not occur in $\alpha$, 
and the elements of $\vect n$ do not occur in $\alpha$,
so $\pnf(A) \ltrpnf{\alpha} B$.

\item Case \brn{Par}. We have $A' \ltr{\alpha} B'$, $\bv(\alpha) \cap \fv(B_0) = \emptyset$, $A = A' \parop B_0$, and $B = B' \parop B_0$.
By induction hypothesis, $\pnf(A') \ltrpnf{\alpha} B'$, so
$\pnf(A') \equivpnf \Res{\vect n}(\sigma \parop P)$, 
$P \ltrP{\alpha'} B''$, $B' \equiv \Res{\vect n}(\sigma \parop B'')$, 
$\fv(\sigma) \cap \bv(\alpha') = \emptyset$, 
$\Sigma \vdash \alpha \sigma = \alpha'$, and the elements of $\vect n$ 
do not occur in $\alpha$.
Let $\pnf(B_0) = \Res{\vect n'}(\sigma' \parop P')$, where $\vect n$ and
$\vect n'$ are renamed so that they are disjoint,
the names of $\vect n$ are not free in $\sigma' \parop P'$, and the names of
$\vect n'$ are not free in $\sigma \parop P$, in $\alpha$, nor in $B''$.
Then $\pnf(A) \equivpnf \Res{\vect n,\vect n'}(\comp{\sigma}{\sigma'} \parop (P \parop P')(\comp{\sigma}{\sigma'}))$.
By $\brn{Par}'$, $P \parop P' \ltrP{\alpha'} B'' \parop P'$.
(We have $\bv(\alpha') \cap \fv(P') = \emptyset$ because
$\fv(P') \subseteq \fv(\pnf(B_0)) \subseteq \fv(B_0)$, 
$\bv(\alpha') = \bv(\alpha)$, and
$\bv(\alpha) \cap \fv(B_0) = \emptyset$.)
By Lemma~\ref{lem:instance-ltrP}, $(P \parop P')(\comp{\sigma}{\sigma'}) \ltrP{\alpha'(\comp{\sigma}{\sigma'})} (B'' \parop P')(\comp{\sigma}{\sigma'})$.
(We have $\fv(\comp{\sigma}{\sigma'}) \cap \bv(\alpha') = \emptyset$
because $\fv(\sigma) \cap \bv(\alpha') = \emptyset$
and $\fv(\sigma') \cap \bv(\alpha') = \emptyset$.)
Moreover, 
\begin{align*}
&B = B' \parop B_0 \equiv \Res{\vect n}(\sigma \parop B'')
\parop \Res{\vect n'}(\sigma' \parop P')
\equiv \Res{\vect n,\vect n'}(\comp{\sigma}{\sigma'} \parop (B'' \parop P')(\comp{\sigma}{\sigma'}))\\
&\fv(\comp{\sigma}{\sigma'}) \cap \bv(\alpha'(\comp{\sigma}{\sigma'})) 
= \fv(\comp{\sigma}{\sigma'}) \cap \bv(\alpha') = \emptyset\\
&\Sigma \vdash \alpha(\comp{\sigma}{\sigma'}) = \alpha\sigma(\comp{\sigma}{\sigma'})
= \alpha'(\comp{\sigma}{\sigma'})
\end{align*}
and the elements of $\vect n, \vect n'$ do not occur in $\alpha$,
so $\pnf(A) \ltrpnf{\alpha} B$.

\item Case \brn{Struct}.
We have $A' \ltr{\alpha} B'$, $A \equiv A'$, and $B \equiv B'$.
By induction hypothesis, $\pnf(A') \ltr{\alpha} B'$, so
$\pnf(A') \equivpnf \Res{\vect n}(\sigma \parop P)$, 
$P \ltrP{\alpha'} B''$, $B' \equiv \Res{\vect n}(\sigma \parop B'')$, 
$\fv(\sigma) \cap \bv(\alpha') = \emptyset$, 
$\Sigma \vdash \alpha \sigma = \alpha'$, and the elements of $\vect n$ 
do not occur in $\alpha$.
By Lemma~\ref{lem:struct-std-to-pnf}, $\pnf(A) \equivpnf \pnf(A')$,
so $\pnf(A) \equivpnf \Res{\vect n}(\sigma \parop P)$, 
and $B \equiv \Res{\vect n}(\sigma \parop B'')$,
hence $\pnf(A) \ltrpnf{\alpha} B$.
\qedhere
\end{itemize}
\end{proof}

\begin{lemma}\label{lem:redalpha-pnf-to-std}
If $P \ltrP{\alpha} A$, then $P \ltr{\alpha} A$.
If $A \ltrpnf{\alpha} B$, then $A \ltr{\alpha} B$.
\end{lemma}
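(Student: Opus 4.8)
The plan is to treat the two statements in turn, as is done for the sibling ``pnf-to-std'' lemmas (Lemmas~\ref{lem:struct-pnf-to-std} and~\ref{lem:red-pnf-to-std}). For the first statement I would proceed by induction on the derivation of $P \ltrP{\alpha} A$. Rules $\brn{In}'$, $\brn{Out-Var}'$, and $\brn{Par}'$ are literally the rules $\brn{In}$, $\brn{Out-Var}$, and $\brn{Par}$ of the labelled semantics, so they apply unchanged; rule $\brn{Scope}'$ is the instance of $\brn{Scope}$ in which the restricted variable $u$ is a name $n$; and for $\brn{Struct}'$ the induction hypothesis yields $Q \ltr{\alpha} B$, while Lemma~\ref{lem:struct-pnf-to-std} turns the premise $P \equivP Q$ into $P \equiv Q$, so that $\brn{Struct}$ gives $P \ltr{\alpha} A$.

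For the second statement I would unfold the definition of $A \ltrpnf{\alpha} B$: it provides $\vect n$, $\sigma$, $P$, $\alpha'$, and $B'$ with $A \equivpnf \Res{\vect n}(\sigma \parop P)$, $P \ltrP{\alpha'} B'$, $B \equiv \Res{\vect n}(\sigma \parop B')$, $\fv(\sigma) \cap \bv(\alpha') = \emptyset$, $\Sigma \vdash \alpha\sigma = \alpha'$, and the elements of $\vect n$ not occurring in $\alpha$; moreover $\Res{\vect n}(\sigma \parop P)$ is a normal process, so $\fv(P) \cap \dom(\sigma) = \emptyset$. By Lemma~\ref{lem:struct-pnf-to-std} we have $A \equiv \Res{\vect n}(\sigma \parop P)$, and $B \equiv \Res{\vect n}(\sigma \parop B')$ is given, so by $\brn{Struct}$ it suffices to prove $\Res{\vect n}(\sigma \parop P) \ltr{\alpha} \Res{\vect n}(\sigma \parop B')$, and then, since $\vect n$ does not occur in $\alpha$, by $\brn{Scope}$ it suffices to prove $\sigma \parop P \ltr{\alpha} \sigma \parop B'$.

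To establish that last transition I would apply Lemma~\ref{lem:caract-ltrP} to $P \ltrP{\alpha'} B'$, obtaining $P \equivP \Res{\vect m}(P_1 \parop P_2)$ and $B' \equiv \Res{\vect m}(B_1 \parop P_2)$ where either $\alpha' = \Rcv{N'}{M'}$ with $P_1 = \Rcv{N'}{x}.P_0$ and $B_1 = P_0\{\subst{M'}{x}\}$, or $\alpha' = \Res{x}\Snd{N'}{x}$ with $P_1 = \Snd{N'}{M'}.P_0$ and $B_1 = P_0 \parop \{\subst{M'}{x}\}$. After routine renaming (so that $\vect m$, the bound $x$, and the free variables of $P_1 \parop P_2$ avoid $\dom(\sigma)$ and $\alpha$), the hypothesis $\Sigma \vdash \alpha\sigma = \alpha'$ forces $\alpha = \Rcv{N}{M}$ with $\Sigma \vdash N\sigma = N'$ and $\Sigma \vdash M\sigma = M'$ (resp.\ $\alpha = \Res{x}\Snd{N}{x}$ with $\Sigma \vdash N\sigma = N'$). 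The crux is to move the prefix from channel $N'$ to channel $N$ in the presence of $\sigma$: rule $\brn{Subst}$ gives $\sigma \parop \Rcv{N}{x}.P_0 \equiv \sigma \parop \Rcv{N\sigma}{x}.P_0$ (using $\fv(P_0) \cap \dom(\sigma) = \emptyset$), and then $\brn{Alias}$, $\brn{Subst}$, and $\brn{Rewrite}$ (exactly as in the derived equivalence $A\{\subst{M}{x}\} \equiv \nu x.(\{\subst{M}{x}\} \parop A)$ for $x \notin \fv(M)$) let us rewrite $N\sigma$ into $N'$ because $\Sigma \vdash N\sigma = N'$; hence $\sigma \parop P \equiv \Res{\vect m}(\sigma \parop \Rcv{N}{x}.P_0 \parop P_2)$. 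This process performs the transition labelled $\alpha$ by $\brn{In}$ (resp.\ $\brn{Out-Var}$), $\brn{Par}$, and $\brn{Scope}$, reaching $\Res{\vect m}(\sigma \parop P_0\{\subst{M}{x}\} \parop P_2)$ (resp.\ $\Res{\vect m}(\sigma \parop P_0 \parop \{\subst{M'}{x}\} \parop P_2)$); a symmetric use of $\brn{Subst}$ and $\brn{Rewrite}$, bridging $M$ with $M\sigma$ and $M\sigma$ with $M'$ in the input case and pulling $\sigma$ back out in both, shows this target is $\equiv \sigma \parop B'$, and $\brn{Struct}$ closes the argument.

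The main obstacle is precisely the step just described: in the standard semantics the label $\alpha$ may mention the variables that $A$ exports and may differ, both under $\sigma$ and modulo the equational theory, from the ``resolved'' label $\alpha'$ carried by the partial-normal-form transition. Reproducing that resolution in the standard semantics amounts to checking that ordinary structural equivalence is rich enough---through $\brn{Alias}$, $\brn{Subst}$, and $\brn{Rewrite}$---to simulate the in-process equational rewriting that rule $\brn{Rewrite}'$ bakes into $\equivP$ (and $\brn{Rewrite}''$ into $\equivpnf$). Once that is in hand, everything else---the first statement, and the bookkeeping of the freshness side conditions on $\vect m$, $\vect n$, and $x$ and of the placement of restrictions---is routine and parallels the proofs of Lemmas~\ref{lem:struct-pnf-to-std} and~\ref{lem:red-pnf-to-std}.
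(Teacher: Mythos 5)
Your proposal is correct and follows essentially the same route as the paper: the first statement by induction on the derivation with Lemma~\ref{lem:struct-pnf-to-std} handling the $\brn{Struct}'$ case, and the second by unfolding the definition of $\ltrpnf{\alpha}$, applying Lemma~\ref{lem:caract-ltrP}, bridging the gap between $\alpha$ and $\alpha' = \alpha\sigma$ (modulo $\Sigma$) through structural equivalence in the presence of the active substitution $\sigma$, and then rebuilding the transition with \brn{In}/\brn{Out-Var}, \brn{Par}, \brn{Scope}, and \brn{Struct}. The freshness conditions you flag as needing "routine renaming" (in particular that $\bv(\alpha')$ avoids $\fv(\sigma)$ and $\fv(P_1 \parop P_2)$) are in fact already supplied by the definition of $\ltrpnf{\alpha}$ and by Lemma~\ref{lem:caract-ltrP}, so only the names $\vect n'$ need renaming, exactly as in the paper's proof.
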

\begin{proof}
The first point is proved by induction on the derivation of $P \ltrP{\alpha} A$. In the case $\brn{Struct}'$, we use Lemma~\ref{lem:struct-pnf-to-std}.

For the second point, we have $A \equivpnf \Res{\vect n}(\sigma \parop P)$, 
$P \ltrP{\alpha'} B'$, $B \equiv \Res{\vect n}(\sigma \parop B')$, 
$\fv(\sigma) \cap \bv(\alpha') = \emptyset$, 
$\Sigma \vdash \alpha \sigma = \alpha'$, and the elements of $\vect n$ 
do not occur in $\alpha$, for some $\vect n$, $\sigma$, $P$, $\alpha'$, $B'$.
By Lemma~\ref{lem:caract-ltrP}, we have
$P \equivP \Res{\vect n'}(P_1 \parop P_2)$,
$B' \equiv \Res{\vect n'}(B_1 \parop P_2)$,
$\{\vect n'\} \cap \fn(\alpha') = \emptyset$,
$\bv(\alpha') \cap \fv(P_1 \parop P_2) = \emptyset$,
and one of the following two cases holds:
\begin{enumerate}
\item $\alpha' = \Rcv{N'}{M'}$, $P_1 = \Rcv{N'}{x}.P'$, and $B_1 = P'\{\subst{M'}{x}\}$;
\item $\alpha' = \Res{x}\Snd{N'}{x}$, $P_1 = \Snd{N'}{M'}.P'$, and $B_1 = P' \parop \{\subst{M'}{x}\}$
\end{enumerate}
for some $\vect n'$, $P_1$, $P_2$, $B_1$, $N'$, $M'$, $P'$, $x$.
We rename the elements of $\vect n'$ so that 
$\{\vect n'\} \cap \fn(\alpha) = \emptyset$.

In Case~1, $\alpha = \Rcv{N}{M}$ for some $N$ and $M$.
We have 
\[A \equiv \Res{\vect n,\vect n'}(\sigma \parop \Rcv{N'}{x}.P' \parop P_2) \equiv \Res{\vect n,\vect n'}(\sigma \parop \Rcv{N}{x}.P' \parop P_2)\]
using Lemma~\ref{lem:struct-pnf-to-std} and \brn{Rewrite}, since $\Sigma \vdash N \sigma = N'$.
We have 
\[B \equiv \Res{\vect n,\vect n'}(\sigma \parop P'\{\subst{M'}{x}\} \parop P_2) \equiv \Res{\vect n,\vect n'}(\sigma \parop P'\{\subst{M}{x}\} \parop P_2)\]
using \brn{Rewrite}, since $\Sigma \vdash M \sigma = M'$.
Hence, we derive
\begin{align*}
&\Rcv{N}{x}.P' \ltr{\alpha} P'\{\subst{M}{x}\}\tag*{by \brn{In}}\\
&\Rcv{N}{x}.P' \parop P_2 \parop \sigma \ltr{\alpha} 
P'\{\subst{M}{x}\} \parop P_2 \parop \sigma\tag*{by \brn{Par}}\\
&\Res{\vect n,\vect n'}(\Rcv{N}{x}.P' \parop P_2 \parop \sigma) 
\ltr{\alpha} 
\Res{\vect n,\vect n'}(P'\{\subst{M}{x}\} \parop P_2 \parop \sigma)\tag*{by \brn{Scope}}\\
&A \ltr{\alpha} B\tag*{by \brn{Struct}}
\end{align*}
To apply \brn{Par}, we notice that $\fv(P_2 \parop \sigma) \cap \bv(\alpha) = \emptyset$ since 
$\bv(\alpha) = \bv(\alpha')$.

In Case~2, $\alpha = \Res{x}\Snd{N}{x}$ for some $N$.
We have 
\[A \equiv \Res{\vect n,\vect n'}(\sigma \parop \Snd{N'}{M'}.P' \parop P_2) \equiv \Res{\vect n,\vect n'}(\sigma \parop \Snd{N}{M'}.P' \parop P_2)\]
using Lemma~\ref{lem:struct-pnf-to-std} and \brn{Rewrite}, since $\Sigma \vdash N \sigma = N'$.
We have 
\[B \equiv \Res{\vect n,\vect n'}(\sigma \parop P' \parop \{\subst{M'}{x}\} \parop P_2)\]
Hence, we derive
\begin{align*}
&\Snd{N}{M'}.P' \ltr{\alpha} P' \parop \{\subst{M'}{x}\}\tag*{by \brn{Out-Var}}\\
&\Snd{N}{M'}.P' \parop P_2 \parop \sigma \ltr{\alpha} 
P' \parop \{\subst{M'}{x}\} \parop P_2 \parop \sigma\tag*{by \brn{Par}}\\
&\Res{\vect n,\vect n'}(\Snd{N}{M'}.P' \parop P_2 \parop \sigma) 
\ltr{\alpha} 
\Res{\vect n,\vect n'}(P' \parop \{\subst{M'}{x}\} \parop P_2 \parop \sigma)
\tag*{by \brn{Scope}}\\
&A \ltr{\alpha} B\tag*{by \brn{Struct}}
\end{align*}
\end{proof}

\subsection{Restriction to Closed Processes}\label{app:respnd}

Next, we show that we can restrict ourselves to reductions between closed processes
in the semantics on partial normal forms.
Let $\rel$ be an inductive relation on processes.
We say that a derivation of $\rel$ is \emph{closed} when all processes that appear in the derivation are closed,
and that a derivation of $\rel$ is \emph{closed on the left} when all processes that appear in the derivation before applying $\rel$ 
are closed.

Let $A$ and $B$ be two normal processes.
We write $\Sigma \vdash A = B$ when $B$ is obtained from $A$ by replacing
some terms $M$ with terms $N$ such that $\Sigma \vdash M = N$.
When $\Sigma \vdash P = Q$, we have $P \equivP Q$ by (possibly several) applications of $\brn{Rewrite}'$.
When $\Sigma \vdash A = B$, we have $A \equivpnf B$ by (possibly several) applications of $\brn{Rewrite}'$ and $\brn{Rewrite}''$.

\begin{lemma}\label{lem:inv-subst}
\begin{enumerate}
\item \label{inv-subst:equivP}
If $P \equivP Q$ and $\Sigma \vdash P = P\sigma$, 
then $P\sigma \equivP Q\sigma$ and $\Sigma \vdash Q = Q\sigma$.

\item \label{inv-subst:equivpnf}
If $A \equivpnf B$ and $\Sigma \vdash A = A\sigma$, 
then $A\sigma \equivpnf B\sigma$ and $\Sigma \vdash B = B\sigma$.
\end{enumerate}
\end{lemma}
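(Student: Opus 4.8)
The plan is to prove both parts by induction on the derivation of the structural equivalence, after two preliminary reductions. First I would record three facts about the auxiliary relation $\Sigma \vdash \cdot = \cdot$ on processes and on normal processes: it is an equivalence relation and a congruence (in particular, symmetric), being obtained position by position from the congruence $\Sigma \vdash \cdot = \cdot$ on terms; it implies $\equivP$, respectively $\equivpnf$, by applications of $\brn{Rewrite}'$, respectively $\brn{Rewrite}'$ and $\brn{Rewrite}''$, as already noted just before the lemma; and it is preserved by substitution, i.e.\ $\Sigma \vdash A = B$ implies $\Sigma \vdash A\sigma = B\sigma$, since the equational theory is closed under substitution and hence $\Sigma \vdash M = N$ gives $\Sigma \vdash M\sigma = N\sigma$.

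Second, I would reduce each part to a single biconditional. For part~1 it suffices to show that $P \equivP Q$ implies $\Sigma \vdash P = P\sigma$ if and only if $\Sigma \vdash Q = Q\sigma$: under the hypothesis $\Sigma \vdash P = P\sigma$ this yields $\Sigma \vdash Q = Q\sigma$, and then $P\sigma \equivP Q\sigma$ follows from the chain $P\sigma \equivP P \equivP Q \equivP Q\sigma$, whose outer two links come from $\Sigma \vdash P\sigma = P$ and $\Sigma \vdash Q = Q\sigma$ via the facts above, and $\equivP$ being transitive. Part~2 reduces identically to: $A \equivpnf B$ implies $\Sigma \vdash A = A\sigma$ if and only if $\Sigma \vdash B = B\sigma$, with $\equivpnf$ in the chain.

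Third, I would prove the biconditional by induction on the derivation of $P \equivP Q$, respectively $A \equivpnf B$. Since the statement is symmetric in the two processes, the reflexivity, symmetry and transitivity cases are immediate. For closure under an evaluation context $C$ I would use that evaluation contexts over plain processes bind only names: after renaming the bound names of $C$ apart from $\fn(\sigma)$, $\sigma$ commutes with $C$, the term-occurrences contributed by $C$ are unaffected by replacing $P'$ with $Q'$ in the hole, so $\Sigma \vdash C[P'] = C[P']\sigma$ splits as a condition on the $C$-occurrences (the same for $C[Q']$) together with $\Sigma \vdash P' = P'\sigma$, which the induction hypothesis equates with $\Sigma \vdash Q' = Q'\sigma$. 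For the generating rules $\brn{Par-\mbox{$\nil$}}'$, $\brn{Par-A}'$, $\brn{Par-C}'$, $\brn{Repl}'$, $\brn{New-\mbox{$\nil$}}'$, $\brn{New-C}'$ and $\brn{New-Par}'$ the two related processes are built from exactly the same term-occurrences, up to reordering, duplication, regrouping and deletion of a restricted name (renamed apart from $\fn(\sigma)$), so the two conditions coincide literally. For $\brn{Rewrite}'$, where $Q$ differs from $P$ only in replacing a term $M$ by a term $N$ with $\Sigma \vdash M = N$, we have $\Sigma \vdash P = Q$ and, applying $\sigma$, $\Sigma \vdash P\sigma = Q\sigma$; composing with $\Sigma \vdash P = P\sigma$ by transitivity and symmetry gives $\Sigma \vdash Q = Q\sigma$, and the converse is symmetric. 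Part~2 is handled the same way on $\brn{Plain}''$, $\brn{New-C}''$, $\brn{New-Par}''$ and $\brn{Rewrite}''$: part~1 takes care of the $\equivP$-premise of $\brn{Plain}''$, the restricted names of $\brn{New-C}''$ and $\brn{New-Par}''$ are renamed apart from $\fn(\sigma)$, and $\brn{Rewrite}''$ again reduces to preservation of $\Sigma \vdash \cdot = \cdot$ under $\sigma$.

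The main obstacle is bookkeeping rather than depth. The one genuine subtlety is the symmetry rule of the structural equivalence, which is exactly why I would run the induction on the biconditional instead of on the one-directional statement of the lemma; with the one-directional statement the symmetry case would ask for information flowing the wrong way. The rest is careful name management: all bound names — and, in the normal-process case, the exported domain variables — must be chosen apart from $\dom(\sigma) \cup \fv(\sigma) \cup \fn(\sigma)$, so that $\sigma$ really commutes with restriction and parallel composition and does not disrupt the freshness conditions of partial normal forms.
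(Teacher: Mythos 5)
Your proof is correct and follows essentially the same route as the paper's: an induction on the derivation of $\equivP$ (resp.\ $\equivpnf$), with the $\brn{Rewrite}'$/$\brn{Rewrite}''$ cases resting on closure of the equational theory under substitution and all other rules preserving the multiset of term occurrences up to renaming of bound names away from $\sigma$. Your repackaging --- proving the symmetric biconditional ``$\Sigma \vdash P = P\sigma$ iff $\Sigma \vdash Q = Q\sigma$'' so that the symmetry rule of the equivalence goes through, and then recovering $P\sigma \equivP Q\sigma$ from the chain $P\sigma \equivP P \equivP Q \equivP Q\sigma$ rather than carrying it through the induction --- is a sensible way to make explicit what the paper's terse ``all cases are straightforward'' leaves implicit.
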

\begin{proof}
We prove these properties by induction on the derivations.
All cases are straightforward. (When $y \in \dom(A)$, we consider that $A \{\subst{a}{y}\} = A$.) 
In the cases $\brn{Rewrite}'$ and $\brn{Rewrite}''$,
we use that the equational theory is closed under substitution of 
terms for variables.
In the cases of transitivity of $\equivP$ and $\equivpnf$, 
we use the induction hypothesis twice.
\end{proof}

\begin{lemma}\label{lem:instance-equiv}
If $A \equiv B$, then $A\sigma \equiv B\sigma$.
\end{lemma}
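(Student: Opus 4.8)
The plan is to prove the statement by induction on the derivation of $A \equiv B$, organized in the standard way. Let $\mathcal{R}$ be the relation consisting of all pairs $(A,B)$ of extended processes such that $A\sigma \equiv B\sigma$ for every substitution $\sigma$. Since $\equiv$ is, by definition, the smallest equivalence relation on extended processes that is closed by application of evaluation contexts and that contains every instance of the axioms \rulename{Par-\mbox{$\nil$}}, \rulename{Par-A}, \rulename{Par-C}, \rulename{Repl}, \rulename{New-\mbox{$\nil$}}, \rulename{New-C}, \rulename{New-Par}, \rulename{Alias}, \rulename{Subst}, and \rulename{Rewrite}, it is enough to check that $\mathcal{R}$ is an equivalence relation, that $\mathcal{R}$ is closed by application of evaluation contexts, and that $\mathcal{R}$ contains every instance of those axioms; this gives ${\equiv} \subseteq {\mathcal{R}}$, which is exactly the claim. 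Throughout, I rely on the standing convention that expressions are identified modulo renaming of bound names and variables: given $\sigma$, the bound names and variables of the processes at hand, and those bound names and variables of a context whose scope includes the hole, may be renamed so as to avoid $\dom(\sigma) \cup \fv(\sigma)$; with that renaming in force, $\sigma$ commutes with parallel composition, with replication, and with restriction, and for an active substitution $(\{\subst{M}{x}\})\sigma = \{\subst{M\sigma'}{x}\}$ where $\sigma'$ is $\sigma$ with $x$ deleted from its domain.

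Reflexivity, symmetry, and transitivity of $\mathcal{R}$ follow immediately from the corresponding properties of $\equiv$. For closure under evaluation contexts: if $(A,B) \in \mathcal{R}$ and $E[\hole]$ is an evaluation context, then, after renaming the bound names and variables of $E$ whose scope includes the hole away from $\dom(\sigma) \cup \fv(\sigma)$, one has $(E[A])\sigma = E'[A\sigma']$ and $(E[B])\sigma = E'[B\sigma']$ for a common evaluation context $E'$ and a common $\sigma'$ obtained from $\sigma$ by deleting the variables bound above the hole (these do not depend on $A$ or $B$, using that $A$ and $B$ have the same domain); since $(A,B) \in \mathcal{R}$ yields $A\sigma' \equiv B\sigma'$, closure of $\equiv$ under evaluation contexts gives $E'[A\sigma'] \equiv E'[B\sigma']$. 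It then remains to treat the axioms. The rules \rulename{Par-\mbox{$\nil$}}, \rulename{Par-A}, \rulename{Par-C}, \rulename{Repl}, \rulename{New-\mbox{$\nil$}}, and \rulename{New-C} are immediate, since $\sigma$ commutes with the constructors involved, so that applying $\sigma$ to each side of the axiom yields the corresponding side of an instance of the same axiom. For \rulename{New-Par}, the side condition $u \notin \fv(A) \cup \fn(A)$ is preserved: using $\fv(A\sigma) \subseteq \fv(A) \cup \fv(\sigma)$ and $\fn(A\sigma) \subseteq \fn(A) \cup \fn(\sigma)$, and having renamed $u$ away from $\fv(\sigma) \cup \fn(\sigma)$, we get $u \notin \fv(A\sigma) \cup \fn(A\sigma)$, so the same instance of \rulename{New-Par} applies after substitution. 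For \rulename{Alias} and \rulename{Subst}, after renaming $x$ away from $\sigma$ and writing $\sigma'$ for $\sigma$ with $x$ deleted from its domain, the images of both sides are again instances of the very same rule, using cycle-freedom (which here gives $x \notin \fv(M)$, so that $(A\{\subst{M}{x}\})\sigma' = (A\sigma')\{\subst{M\sigma'}{x}\}$). For \rulename{Rewrite}, $\Sigma \vdash M = N$ implies $\Sigma \vdash M\sigma = N\sigma$ because the equational theory is closed under substitution of terms for variables and names, so $\{\subst{M\sigma}{x}\} \equiv \{\subst{N\sigma}{x}\}$ by \rulename{Rewrite} again.

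The only genuinely delicate point, and the one I expect to cost the most care, is the bookkeeping underlying the commutation facts used above, namely that $\sigma$ can be pushed through parallel composition, restriction, and evaluation contexts after the appropriate $\alpha$-renaming, and that substitutions compose correctly in the presence of active substitutions (the \rulename{Subst} case). None of this requires new ideas: it is the same sort of reasoning already used, for instance, in the proof of Lemma~\ref{lem:instance-ltrP}. My plan is therefore to fix the renaming conventions once, establish the commutation facts as a short preliminary observation, and then dispatch each axiom as a one-line check.
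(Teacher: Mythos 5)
Your proposal is correct and follows essentially the same route as the paper, whose entire proof of this lemma is ``by induction on the derivation of $A \equiv B$''; you have simply written out the induction (equivalence-relation closure, context closure, and a case per axiom) that the paper leaves implicit. The one point to tighten is the \rulename{Subst} case: there the variable $x$ is exported, hence \emph{free} in $\smx \parop A$, so it cannot be $\alpha$-renamed away from $\sigma$ as you do for \rulename{Alias}; instead you must invoke the standing convention that $\sigma$ does not act on variables in the domain of the process (equivalently, restrict to $\dom(\sigma) \cap \dom(A) = \emptyset$, which is how the lemma is actually applied and is the convention the paper makes explicit in the parenthetical of Lemma~\ref{lem:inv-subst}), after which your computation $(A\{\subst{M}{x}\})\sigma = (A\sigma)\{\subst{M\sigma}{x}\}$ goes through as stated.
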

\begin{proof}
We prove this lemma by induction on the derivation of $A \equiv B$. 
\end{proof}

\begin{lemma}\label{lem:closing} In all the cases below, $Y$ 
  is a set of variables, $\sigma$ is a substitution from $Y$ to
  pairwise distinct fresh names.
\begin{enumerate} 
\item\label{closing:equivP} If 
$P \equivP Q$, 
$Y \eqdef \fv(P) \cup \fv(Q)$, and
$\Sigma \vdash P = P\sigma$, then
$\Sigma \vdash Q = Q\sigma$ and
$P\sigma \equivP Q\sigma$ by a closed derivation.

\item\label{closing:redP} If 
$P \redP Q$, 
$Y \eqdef \fv(P) \cup \fv(Q)$, and
$\Sigma \vdash P = P\sigma$, then
$\Sigma \vdash Q = Q\sigma$ and
$P\sigma \redP Q\sigma$ by a closed derivation.

\item\label{closing:ltrP} If 
$P \ltrP{\alpha} A$, 
$Y \eqdef \fv(P) \cup \fv(\alpha) \cup (\fv(A) \setminus \dom(A))$, and
$\Sigma \vdash P = P\sigma$, then
$\Sigma \vdash \alpha = \alpha\sigma$,
$A \equiv A\sigma$,
and 
$P\sigma \ltrP{\alpha\sigma} A\sigma$ 
by a derivation closed on the left.

\item\label{closing:equivpnf} If 
$A \equivpnf B$, 
$Y \eqdef (\fv(A) \setminus \dom(A)) \cup (\fv(B) \setminus \dom(B))$, and
$\Sigma \vdash A = A\sigma$, then 
$\Sigma \vdash B = B\sigma$ and
$A\sigma\equivpnf B\sigma$ by a closed derivation.

\item\label{closing:redpnf} If 
$A \redpnf B$, 
$Y \eqdef (\fv(A) \setminus \dom(A)) \cup (\fv(B)\setminus \dom(B))$, and
$\Sigma \vdash A = A\sigma$, then 
$\Sigma \vdash B = B\sigma$ and 
$A\sigma\redpnf B\sigma$ by a closed derivation.
\end{enumerate}
\end{lemma}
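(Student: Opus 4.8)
The plan is to prove the five statements together, by induction on the derivation of the relation in the hypothesis, since the definitions of $\redP$ and $\ltrP{\alpha}$ invoke $\equivP$ and that of $\redpnf$ invokes $\equivpnf$. For each rule of each relation I would apply $\sigma$ (possibly first enlarged, see below) uniformly to every process, term, and label occurring in the derivation; Lemmas~\ref{lem:instance}, \ref{lem:instance-equiv}, and~\ref{lem:instance-ltrP} then guarantee that each rule remains an instance of the same rule after substitution, so the substituted tree is again a valid derivation, now of $P\sigma \equivP Q\sigma$ in case~\ref{closing:equivP}, of $P\sigma \ltrP{\alpha\sigma} A\sigma$ in case~\ref{closing:ltrP}, and similarly for the other three. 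The side condition $\fv(\sigma)\cap\bv(\alpha)=\emptyset$ needed for Lemma~\ref{lem:instance-ltrP} holds after $\alpha$-renaming, because $\sigma$'s range contains only names and the bound variable of an output label may be chosen fresh, hence outside $\dom(\sigma)$. The extra conjuncts $\Sigma \vdash Q = Q\sigma$, $\Sigma \vdash \alpha = \alpha\sigma$, and $A \equiv A\sigma$ come from Lemma~\ref{lem:inv-subst} in the cases that apply $\equivP$ or $\equivpnf$, and from a direct congruence argument (using that the equational theory is closed under substitution of terms for variables) in the axiom and context-closure cases; for $\ltrP{\alpha}$ one may alternatively read them off the characterization in Lemma~\ref{lem:caract-ltrP}. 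What then remains is to check that the substituted derivation is \emph{closed}, or only \emph{closed on the left} for $\ltrP{\alpha}$.

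Making closedness work hinges on choosing the domain of $\sigma$ large enough. A subderivation may mention intermediate processes with free variables outside $Y$: transitivity of $\equivP$ (or $\equivpnf$) introduces a middle process $R$; the rules $\brn{Rewrite}'$ and $\brn{Rewrite}''$ may replace a term by a $\Sigma$-equal term containing fresh variables in dummy positions; and rules such as $\brn{New-Par}'$ and $\brn{New-Par}''$ reshuffle restrictions. In each such case I would enlarge $\sigma$ to $\sigma'$ by mapping the additional free variables to further pairwise distinct fresh names, chosen also outside the bound names and variables of the whole derivation (which we may assume by $\alpha$-renaming). Since the new variables are not free in $P$, we still have $\Sigma \vdash P = P\sigma'$ and $P\sigma' = P\sigma$, so the induction hypothesis applies with $\sigma'$; its conclusion yields a closed (respectively closed-on-the-left) subderivation, because $\dom(\sigma')$ now covers the free variables of every process in that subderivation, $\sigma'$ has only names in its range, and for a normal process $\dom(A\sigma')=\dom(A)$ is exactly the set of residual free variables. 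Recombining the subderivations — composing the two halves of a transitivity step, re-applying an evaluation context $E\sigma'$, or reassembling a $\brn{Struct}'$ step — keeps everything closed, because the outer constructor contributes no free variable that $\sigma$ (or $\sigma'$) fails to cover.

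For the base cases I would go through the axioms one by one. In each, the substituted statement is again an instance of the same axiom: for $\brn{Rewrite}'$ and $\brn{Rewrite}''$ because $\Sigma \vdash M = N$ implies $\Sigma \vdash M\sigma = N\sigma$; for $\brn{Comm}'$ and $\brn{In}'$ because substitution commutes with the term substitution $\{\subst{M}{x}\}$ once the bound variable $x$ is chosen fresh; for $\brn{Out-Var}'$ similarly, with $x$ fresh and outside $\dom(\sigma)$; and the side conditions on free names and variables survive because $\sigma$ introduces only names fresh for the derivation. Both sides are then closed (they have no free variable of $Y$, and none at all for plain processes; for normal processes only the shared domain remains) — except that an $\ltrP{\alpha}$ derivation ending with $\brn{Struct}'$ followed by a final $B \equiv A$ step may leave $A\sigma$ with free unused variables, which is precisely why statement~\ref{closing:ltrP} asks only for a derivation closed on the left.

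The step I expect to be the main obstacle is the $\brn{Struct}'$ case of statement~\ref{closing:ltrP}: there $P \ltrP{\alpha} A$ is derived from $P \equivP Q$, $Q \ltrP{\alpha} B$, and $B \equiv A$, where the last relation is \emph{full} structural equivalence on extended processes and $A$ need not be closed. Handling it requires combining Lemma~\ref{lem:instance-equiv} (to obtain $B\sigma' \equiv A\sigma'$, hence $A \equiv A\sigma'$, and hence $A \equiv A\sigma$, by chaining with the derivation $B \equiv A$ and the derivation $B \equiv B\sigma'$ supplied by the induction hypothesis on $Q \ltrP{\alpha} B$) with careful tracking of how the enlarged domain propagates from $P$ through $Q$ and $B$ to $A$, and with the observation that closedness is only demanded up to the point where $\ltrP{\alpha}$ is applied. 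The analogous but milder interaction occurs in statements~\ref{closing:equivpnf} and~\ref{closing:redpnf}, where the rule $\brn{Plain}''$ forwards an $\equivP$-derivation on the plain-process component; this is where statement~\ref{closing:equivP} gets used and must already hold at the same or a smaller derivation height, which is the reason the five statements are proved in a single induction.
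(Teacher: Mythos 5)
Your proposal is correct and follows essentially the same route as the paper: induction on the derivations, pushing $\sigma$ through each rule (using closure of the equational theory under substitution of names for variables in the \brn{Rewrite$'$}/\brn{Rewrite$''$}/\brn{Else$'$} cases), extending the substitution to cover free variables of intermediate processes in transitivity and composite steps (which is harmless since variables not free in $P$ or $Q$ may be substituted or not without changing the result), and invoking Lemma~\ref{lem:instance-equiv} in the \brn{Struct$'$} case, where the right-hand full structural equivalence is exactly why statement~\ref{closing:ltrP} only claims a derivation closed on the left. The paper's proof is just a much terser rendering of the same argument.
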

\begin{proof}
We prove the lemma by induction on the derivations.
All cases are straightforward.
In the cases $\brn{Rewrite}'$, $\brn{Rewrite}''$, and $\brn{Else}'$,
we use that the equational theory is closed under substitution of 
names for variables.
In the cases of transitivity of $\equivP$ and $\equivpnf$, 
we use the induction hypothesis
and notice that, if a variable does not occur free in a certain process,
then we can substitute it or not without changing the result.
We use a similar argument when we apply a structural equivalence
step and a reduction step.
In the case $\brn{Struct}'$, we additionally use Lemma~\ref{lem:instance-equiv}.
\end{proof}

\begin{lemma}\label{lem:closed-interm}
\begin{enumerate}
\item \label{prop:closed-interm-redP}
  If $P \redP Q$ and $P$ is closed, then $P \redP Q$ by a derivation closed on the left.
\item \label{prop:closed-interm-redpnf}
If $A \redpnf B$ and $A$ is closed, then $A \redpnf B$ by a derivation closed on the left.
\item \label{prop:closed-interm-ltrP}
If $P \ltrP{\alpha} A$, $P$ is closed, and $\alpha$ is $\Res{x}\Snd{N}{x}$ or
$\Rcv{N}{M}$ for some ground term $N$, then
$P \ltrP{\alpha} A$ 
by a derivation closed on the left.
 
\item \label{prop:closed-interm-ltrpnf}
  If $A \ltrpnf{\alpha} B$,
$A$ is closed, and $\fv(\alpha) \subseteq \dom(A)$, then $A \ltrpnf{\alpha}
B$ 
by a derivation closed on the left,
and the label $\alpha'$ of the transition $P \ltrP{\alpha'} B'$
used in the definition of $A \ltrpnf{\alpha} B$ is closed.
\end{enumerate}
\end{lemma}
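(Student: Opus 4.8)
The plan is to reduce all four statements to Lemma~\ref{lem:closing}, using the observation that when the source process is closed the substitution $\sigma$ appearing there does not touch it, so its side condition holds for free.

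For item~\eqref{prop:closed-interm-redP}, suppose $P \redP Q$ with $P$ closed. I would apply Lemma~\ref{lem:closing}\eqref{closing:redP} with $\sigma$ renaming $Y = \fv(P) \cup \fv(Q) = \fv(Q)$ to pairwise distinct fresh names; since $\fv(P) = \emptyset$ we have $P\sigma = P$, so $\Sigma \vdash P = P\sigma$ by reflexivity, and the lemma provides a closed derivation of $P \redP Q\sigma$ together with $\Sigma \vdash Q = Q\sigma$. As $\Sigma \vdash Q = Q\sigma$ gives $Q\sigma \equivP Q$ by repeated use of $\brn{Rewrite}'$ and $\redP$ is closed under $\equivP$, appending those $\equivP$ steps \emph{after} the base reduction rule yields a derivation of $P \redP Q$ all of whose processes occurring before the reduction are closed --- the required derivation closed on the left. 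Item~\eqref{prop:closed-interm-redpnf} is handled identically, via Lemma~\ref{lem:closing}\eqref{closing:redpnf}, using that $\redpnf$ preserves domains (so $\sigma$ again leaves $A$ unchanged) and that $\Sigma \vdash B = B\sigma$ entails $B\sigma \equivpnf B$.

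Items~\eqref{prop:closed-interm-ltrP} and~\eqref{prop:closed-interm-ltrpnf} are similar for output labels: when $\alpha = \Res{x}\Snd{N}{x}$ with $N$ ground we have $\fv(\alpha) = \emptyset$, so the substitution of Lemma~\ref{lem:closing}\eqref{closing:ltrP} affects neither $P$ nor $\alpha$, and from its conclusion $P \ltrP{\alpha} A\sigma$ and $A \equiv A\sigma$ one recovers $P \ltrP{\alpha} A$ by one application of $\brn{Struct}'$ at the end. For an input label $\alpha = \Rcv{N}{M}$ this shortcut fails, because $\fv(M) \subseteq Y$ would be renamed and $\brn{Struct}'$ cannot rewrite the label; so for item~\eqref{prop:closed-interm-ltrP} I would argue by induction on the derivation of $P \ltrP{\alpha} A$. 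The rules $\brn{In}'$, $\brn{Out-Var}'$, $\brn{Scope}'$, and $\brn{Par}'$ are immediate, since in each of them closedness of the conclusion's source forces closedness of the premise's source. The delicate case is $\brn{Struct}'$: here $P \equivP Q$, $Q \ltrP{\alpha} B$, $B \equiv A$, with $P$ closed but $Q$ possibly carrying free variables introduced by $\brn{Rewrite}'$. After renaming those spurious variables apart from $\fv(\alpha)$ (which is legitimate as they do not occur in the ground channel $N$, and the environment-supplied value $M$ may be taken disjoint from them), I would use Lemma~\ref{lem:closing}\eqref{closing:equivP} to obtain a closed $Q'$ with $\Sigma \vdash Q = Q'$ and a closed derivation of $P \equivP Q'$, transport the transition to $Q' \ltrP{\alpha} B'$ by Lemma~\ref{lem:instance-ltrP}, apply the induction hypothesis to $Q'$, and re-introduce the renamed variables after the base rule via $\brn{Rewrite}'$ to land exactly on $A$. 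Item~\eqref{prop:closed-interm-ltrpnf} then follows by unfolding the definition of $\ltrpnf{}$: when $A$ is closed and in partial normal form $\Res{\vect n}(\{\subst{\vect M}{\vect x}\} \parop P)$, the partial-normal-form condition together with closedness forces $\fv(\vect M) = \fv(P) = \emptyset$, so $\{\subst{\vect M}{\vect x}\}$ has closed range and its domain is $\dom(A) \supseteq \fv(\alpha)$; hence the label $\alpha'$ of the underlying transition $P \ltrP{\alpha'} B'$ (which is $\alpha\{\subst{\vect M}{\vect x}\}$ up to $\Sigma$) is ground and $P$ is closed, so item~\eqref{prop:closed-interm-ltrP} applies to that transition, and one wraps the resulting derivation with the $\equivpnf$-step $A \equivpnf \Res{\vect n}(\{\subst{\vect M}{\vect x}\} \parop P)$, taken closed by Lemma~\ref{lem:closing}\eqref{closing:equivpnf}.

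I expect the input-label case of item~\eqref{prop:closed-interm-ltrP} to be the main obstacle: the label is fixed by the statement but may contain free variables, and $\brn{Struct}'$ has no way of adjusting it, so one cannot simply invoke Lemma~\ref{lem:closing}\eqref{closing:ltrP} as a black box. The groundness hypothesis on $N$ is exactly what makes the channel immune to the cleanup renaming, and the careful bookkeeping --- renaming the spurious variables of intermediate processes while keeping them clear of $\fv(\alpha)$ and then restoring them past the base rule --- is where the work concentrates.
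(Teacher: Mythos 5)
Items~\eqref{prop:closed-interm-redP}, \eqref{prop:closed-interm-redpnf}, and~\eqref{prop:closed-interm-ltrpnf}, as well as the output case of item~\eqref{prop:closed-interm-ltrP}, follow the paper's own argument: reduce to Lemma~\ref{lem:closing}, observe that $\sigma$ fixes the closed source, and absorb the residual $\Sigma \vdash Q = Q\sigma$ by $\equivP$ (or $\brn{Struct}'$) steps placed after the reduction, where closedness is not required. That part is fine.

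The gap is in the input case of item~\eqref{prop:closed-interm-ltrP}, precisely at the point you identify as delicate. In the $\brn{Struct}'$ case you have $P \equivP Q \ltrP{\Rcv{N}{M}} B$ with $Q$ carrying spurious free variables, and you propose to rename them ``apart from $\fv(\alpha)$'', justified by the claim that ``the environment-supplied value $M$ may be taken disjoint from them.'' But $M$ is fixed by the statement of the lemma, not chosen by you, and nothing prevents a spurious free variable of $Q$ (introduced by $\brn{Rewrite}'$) from also occurring in $M$. Every tool you have for transporting the transition along a renaming or substitution --- Lemma~\ref{lem:instance-ltrP} in particular --- renames the label as well, yielding $Q\sigma \ltrP{\alpha\sigma} B\sigma$ with $\alpha\sigma \neq \alpha$; and $\brn{Struct}'$, as you note, cannot repair a label. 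So as written the induction produces a closed-on-the-left derivation of a transition with the \emph{wrong} label $\Rcv{N}{M\sigma}$. The only way to keep $M$ fixed while cleaning up $Q$ is to observe that $M$ enters the derivation only at the single $\brn{In}'$ step, via the characterization $Q \equivP \Res{\vect n}(\Rcv{N}{x}.Q_1 \parop Q_2)$ of Lemma~\ref{lem:caract-ltrP} --- and once you invoke that lemma the induction is superfluous. This is exactly what the paper does: apply Lemma~\ref{lem:caract-ltrP} to $P$ directly, close the resulting structural equivalence $P \equivP \Res{\vect n}(\Rcv{N}{x}.P_1 \parop P_2)$ with Lemma~\ref{lem:closing}\eqref{closing:equivP} (harmless, since $N$ is ground and $M$ has not yet been substituted in), and then rebuild the transition by $\brn{In}'$, $\brn{Par}'$, $\brn{Scope}'$, $\brn{Struct}'$, plugging in the original $M$ at the $\brn{In}'$ step. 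A secondary looseness, inherited from the same issue, appears in your item~\eqref{prop:closed-interm-ltrpnf}: the label $\alpha'$ of the underlying transition is only equal to $\alpha\sigma'$ \emph{modulo} $\Sigma$, so it need not literally be ground, and the representative $\Res{\vect n}(\sigma' \parop P)$ in the definition of $\ltrpnf{\alpha}$ need not be closed even when $A$ is; the paper resolves both by substituting fresh names for the spurious variables of $\sigma'$, $P$, $\alpha'$, and $B'$ simultaneously before applying Lemma~\ref{lem:closing}\eqref{closing:ltrP}.
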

  \begin{proof} In the proof below, $Y$ ranges over sets of variables and $\sigma$ maps $Y$ to pairwise distinct fresh names.
The first two properties immediately follow from Lemma~\ref{lem:closing}.
For instance, if $P \redP Q$ and $P$ is closed,
let $Y = \fv(P) \cup \fv(Q) = \fv(Q)$.
We have $P = P\sigma$, so a fortiori $\Sigma \vdash P = P\sigma$. 
By Lemma~\ref{lem:closing}\eqref{closing:redP},
we have
$\Sigma \vdash Q = Q\sigma$ and
$P = P\sigma \redP Q\sigma$ by a closed derivation, 
so $Q \equivP Q\sigma$.
Hence $P \redP Q$ by a derivation closed on the left.

\emph{Property~3:} Suppose that $\alpha = \Rcv{N}{M}$ where
$N$ is a ground term. By Lemma~\ref{lem:caract-ltrP}, 
$P \equivP \Res{\vect n}(\Rcv{N}{x}.P_1 \parop P_2)$,
$A \equiv \Res{\vect n}(P_1\{\subst{M}{x}\} \parop P_2)$,
and $\{\vect n\} \cap \fn(\alpha) = \emptyset$.
Let $Y = \fv(\Rcv{N}{x}.P_1 \parop P_2)$.
We rename $x$ so that $x \notin Y$.
Since $P$ is closed, $P = P\sigma$, so a fortiori 
$\Sigma \vdash P = P\sigma$. 
By Lemma~\ref{lem:closing}\eqref{closing:equivP},
$P = P\sigma \equivP \Res{\vect n}(\Rcv{N}{x}.P_1\sigma \parop P_2\sigma)$
by a closed derivation
and $\Sigma \vdash \Res{\vect n}(\Rcv{N}{x}.P_1 \parop P_2)
= \Res{\vect n}(\Rcv{N}{x}.P_1\sigma \parop P_2\sigma)$,
so $\Sigma \vdash P_1 = P_1 \sigma$ and $\Sigma \vdash P_2 = P_2 \sigma$.
Hence $A \equiv \Res{\vect n}(P_1\{\subst{M}{x}\} \parop P_2)
\equiv \Res{\vect n}(P_1\sigma\{\subst{M}{x}\} \parop P_2\sigma)$.
We derive 
\begin{align*}
&\Rcv{N}{x}.P_1\sigma \ltrP{\Rcv{N}{M}} P_1\sigma\{\subst{M}{x}\}\tag*{by $\brn{In}'$}\\
&\Rcv{N}{x}.P_1\sigma \parop P_2\sigma \ltrP{\Rcv{N}{M}} P_1\sigma\{\subst{M}{x}\} \parop P_2\sigma\tag*{by $\brn{Par}'$}\\
&\Res{\vect n}(\Rcv{N}{x}.P_1\sigma \parop P_2\sigma) \ltrP{\Rcv{N}{M}} \Res{\vect n}(P_1\sigma\{\subst{M}{x}\} \parop P_2\sigma)\tag*{by $\brn{Scope}'$}\\
&P \ltrP{\Rcv{N}{M}} A\tag*{by $\brn{Struct}'$}
\end{align*}
using the previous closed derivation of 
$P \equivP \Res{\vect n}(\Rcv{N}{x}.P_1\sigma \parop P_2\sigma)$.
In the resulting derivation, 
all intermediate processes before $\ltrP{\alpha}$ are closed.
The case $\alpha = \Res{x}\Snd{N}{x}$ where $N$ is a ground term can
be proved in a similar way, or by using
Lemma~\ref{lem:closing}\eqref{closing:ltrP} since $\alpha$ is closed.

\emph{Property~4:} 
Suppose that $A$ is closed and $A \ltrpnf{\alpha} B$.
Then $A \equivpnf \Res{\vect n}(\sigma' \parop P)$, 
$P \ltrP{\alpha'} B'$, $B \equiv \Res{\vect n}(\sigma' \parop B')$, 
$\fv(\sigma') \cap \bv(\alpha') = \emptyset$, 
$\Sigma \vdash \alpha \sigma' = \alpha'$, and the names~$\vect n$ 
do not occur in $\alpha$.
Let $Y = (\fv(\sigma') \setminus \dom(\sigma')) \cup \fv(P) \cup \fv(\alpha') \cup (\fv(B') \setminus \dom(B'))$.
Then 
$\Sigma \vdash A = A\sigma$, so by Lemma~\ref{lem:closing}\eqref{closing:equivpnf},
$A\sigma \equivpnf \Res{\vect n}(\sigma'\sigma \parop P\sigma)$
by a closed derivation and 
$\Sigma \vdash  \Res{\vect n}(\sigma' \parop P) = \Res{\vect n}(\sigma'\sigma \parop P\sigma)$, so 
$\Sigma \vdash \sigma' = \sigma' \sigma$ and $\Sigma \vdash P = P \sigma$.
Hence by Lemma~\ref{lem:closing}\eqref{closing:ltrP},
$P\sigma \ltrP{\alpha'\sigma} B'\sigma$ by a derivation
closed on the left,
and $B' \equiv B'\sigma$.
So $B \equiv \Res{\vect n}(\sigma'\sigma \parop B'\sigma)$,
$\fv(\sigma'\sigma) \cap \bv(\alpha'\sigma) = \emptyset$, 
$\Sigma \vdash \alpha \sigma'\sigma = \alpha'\sigma$, and the names $\vect n$ 
do not occur in $\alpha$.
Hence we obtain the desired derivation using 
$\alpha'\sigma$ instead of $\alpha'$, 
$\sigma'\sigma$ instead of $\sigma'$, 
$P\sigma$ instead of $P$, and 
$B'\sigma$ instead of~$B'$. 
\end{proof}

\subsection{Decomposition and Composition of Reductions on Partial Normal Forms}\label{app:comppnf}

The next few lemmas allow us to analyze internal reductions and labelled transitions on partial normal forms.
Most of these lemmas describe the possible reductions of a process.
Lemma~\ref{lem:comp-ltr} composes two reductions: if two processes perform labelled transitions, one an output transition
and the other an input transition on the same channel, then their parallel composition performs an internal reduction. 

\begin{lemma}\label{lem:decomp-ltrP}
Suppose that $P_0$ is closed, $\alpha$ is $\Res{x}\Snd{N'}{x}$ or $\Rcv{N'}{M'}$ for some ground term $N'$, and $P_0 \ltrP{\alpha} A$. Then one of the following cases holds:
\begin{enumerate}

\item $P_0 = P \parop Q$ and either $P \ltrP{\alpha} A'$ and $A \equiv A' \parop Q$, or $Q \ltrP{\alpha} A'$ and $A \equiv P \parop A'$, for some $P$, $Q$, and~$A'$;

\item $P_0 = \Res{n} P$, $P \ltrP{\alpha} A'$, and $A \equiv \Res{n} A'$ for some $P$, $A'$, and $n$ that does not occur in $\alpha$;

\item $P_0 = \Repl{P}$, $P \ltrP{\alpha} A'$, and $A \equiv A' \parop \Repl{P}$ for some $P$ and $A'$;

\item $P_0 = \Rcv{N}{x}.P$, $\alpha = \Rcv{N'}{M'}$, $\Sigma \vdash N = N'$, and $A \equiv P\{\subst{M'}{x}\}$ for some $N$, $x$, $P$, $N'$, and $M'$;

\item $P_0 = \Snd{N}{M}.P$, $\alpha = \Res{x}\Snd{N'}{x}$, $\Sigma \vdash N = N'$, $x \notin \fv(P_0)$, and $A \equiv P \parop \{\subst{M}{x}\}$ for some $N$, $M$, $P$, $x$, and $N'$.
\end{enumerate}
\end{lemma}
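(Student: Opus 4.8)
The plan is to prove the statement by induction on the derivation of $P_0 \ltrP{\alpha} A$ (in the rule system $\brn{In}'$, $\brn{Out-Var}'$, $\brn{Scope}'$, $\brn{Par}'$, $\brn{Struct}'$ for plain processes), after one preliminary normalization of that derivation. First I would apply Lemma~\ref{lem:closed-interm}, part~\eqref{prop:closed-interm-ltrP}: since $P_0$ is closed and $\alpha$ is $\Res{x}\Snd{N'}{x}$ or $\Rcv{N'}{M'}$ for a ground term $N'$, we may assume the given derivation is closed on the left, so that every plain process occurring in it strictly before the conclusion is itself closed. This is exactly what makes the induction hypothesis legitimately applicable to the premises of the rules. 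The four rules $\brn{In}'$, $\brn{Out-Var}'$, $\brn{Scope}'$, $\brn{Par}'$ then deliver the conclusion without using the induction hypothesis: $\brn{In}'$ puts us in case~4 with $N' = N$ ground (so $\Sigma \vdash N = N'$ by reflexivity) and $M' = M$; $\brn{Out-Var}'$ puts us in case~5, again with $N' = N$; $\brn{Scope}'$ with $P_0 = \Res{n}P$ puts us in case~2, noting that $P$ is a closed plain process because plain processes have empty domain; and $\brn{Par}'$ puts us in the first alternative of case~1. In all four the statement only records the existence of the relevant sub-transition, so no recursion is needed.

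The rule $\brn{Struct}'$ is the crux. Here $P_0 \equivP Q$, $Q \ltrP{\alpha} B$, $B \equiv A$, with $Q$ closed by the closed-on-left assumption, so the induction hypothesis applies to $Q \ltrP{\alpha} B$ and classifies $Q$'s transition into one of the five cases. It then remains to transfer this classification across the structural equivalence $P_0 \equivP Q$ (and the equivalence $B \equiv A$ on the target side). I would do this by case analysis on the axioms generating $\equivP$ on plain processes: $\brn{Par-\nil}'$, $\brn{Par-A}'$, $\brn{Par-C}'$ together with context closure under parallel composition merely reshuffle case~1; $\brn{New-\nil}'$, $\brn{New-C}'$, $\brn{New-Par}'$ together with context closure under $\nu n$ reshuffle case~2; $\brn{Repl}'$ is precisely what produces case~3, namely when $P_0 = \Repl P$ and the unfolding $P \parop \Repl P$ is classified by the induction hypothesis; and $\brn{Rewrite}'$, which can change the channel term of a prefix, is absorbed into the side conditions $\Sigma \vdash N = N'$ of cases~4 and~5. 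To keep this organization tidy I would use Lemma~\ref{lem:caract-ltrP} as the standard tool for rewriting any transition into the canonical form $P_0 \equivP \Res{\vect n}(P_1 \parop P_2)$ with $P_1$ a guarded input or output prefix and $A \equiv \Res{\vect n}(A_1 \parop P_2)$; with that in hand one can instead argue by structural induction on $P_0$, the only genuinely delicate point being to exclude $P_0 = \nil$ and $P_0 = \IfThenElse{M}{N}{P}{Q}$, which follows because none of the $\equivP$ axioms can expose a guarded prefix from under a conditional or from $\nil$.

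I expect the main obstacle to be exactly this bookkeeping inside the $\brn{Struct}'$ case: tracking through repeated applications of the structural-congruence axioms both which of the five shapes $P_0$ lands in and the various freshness side conditions ($\{\vect n\} \cap \fn(\alpha) = \emptyset$, $\bv(\alpha)$ disjoint from the free variables of the passive components, and $x \notin \fv(P_0)$ in case~5). None of this is conceptually hard, but it is where the proof does its work; the reduction to a closed-on-left derivation via Lemma~\ref{lem:closed-interm} is what prevents the induction from derailing on intermediate processes that fail to be closed.
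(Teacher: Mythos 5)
Your skeleton matches the paper's: reduce to a derivation closed on the left via Lemma~\ref{lem:closed-interm}\eqref{prop:closed-interm-ltrP}, then induct on the derivation, with the four non-structural rules immediate and all the work in $\brn{Struct}'$. But there is a genuine gap exactly at the point you dismiss as bookkeeping. In the $\brn{Struct}'$ case the induction hypothesis classifies $Q_0 \ltrP{\alpha} B$, and to transfer that classification across a single $\equivP$ axiom you must, in several subcases, decompose a transition that was \emph{produced by the induction hypothesis} rather than obtained from a subderivation. Concretely, for $\brn{New-C}'$ with $P_0 = \Res{n}\Res{n'}P \equivP \Res{n'}\Res{n}P = Q_0$, the induction hypothesis yields $\Res{n}P \ltrP{\alpha} B'$ with $B \equiv \Res{n'}B'$; to land $P_0$ in case~2 you need $\Res{n'}P \ltrP{\alpha} A''$, which requires peeling one more restriction off $\Res{n}P \ltrP{\alpha} B'$ --- a transition to which the induction hypothesis does not apply, since its derivation need not be smaller than the one you started from. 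The same problem occurs for $\brn{Par-A}'$ (you must re-decompose the transition of $P \parop Q$) and for $\Repl{P} \equivP P \parop \Repl{P}$ (you must re-decompose the transition of the residual $\Repl{P}$). The statement as given is therefore not an inductive invariant, and the paper says so explicitly before strengthening it: it defines $\prop(P_0 \ltrP{\alpha} A)$ as the \emph{greatest} property closed under the five clauses, each clause additionally recording $\prop$ of the nested sub-transition, and then proves by induction that every closed-on-left derivation satisfies $\prop$; the re-decompositions above are then licensed by unfolding $\prop$ rather than by the induction hypothesis. Your proposal needs this strengthening (or an equivalent device; no well-founded measure on the produced transitions is available) to go through.

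Your fallback --- use Lemma~\ref{lem:caract-ltrP} and argue by structural induction on $P_0$ --- does not sidestep the problem. That lemma characterizes transitions only up to $\equivP$, whereas the conclusion here is a syntactic case split on the top constructor of $P_0$ relating its transition to transitions of its \emph{immediate} subterms; recovering the latter from a canonical form $P_0 \equivP \Res{\vect n}(P_1 \parop P_2)$ again requires inverting structural equivalence, which is the same analysis in disguise. So the delicate point is not only excluding $\nil$ and conditionals.
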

\begin{proof}
An obvious approach for proving this result is to proceed
by induction on the derivation of $P_0 \ltrP{\alpha} A$.
However, the statement is not strong enough to provide an inductive
invariant. For instance, in case $P_0 \ltrP{\alpha} A$ is derived
from $P_0 = \Res{n}\Res{n'}P \equivP \Res{n'}\Res{n}P \ltrP{\alpha} A$, 
we can apply the statement to $\Res{n'}\Res{n}P \ltrP{\alpha} A$ 
by induction hypothesis, because 
$\Res{n'}\Res{n}P \ltrP{\alpha} A$ is derived by a derivation smaller 
than that of $P_0 \ltrP{\alpha} A$.
Hence, we obtain that
$\Res{n}P \ltrP{\alpha} A'$ and $A \equiv \Res{n'}A'$ for some $A'$.
However, we cannot apply the result to $\Res{n}P \ltrP{\alpha} A'$,
because we are not sure that the derivation of $\Res{n}P \ltrP{\alpha} A'$
is smaller than that of $P_0 \ltrP{\alpha} A$.
For this reason, we strengthen the induction hypothesis as shown below,
to make sure that it can be applied to a labelled transition,
such as $\Res{n}P \ltrP{\alpha} A'$, obtained by applying the desired result
itself.

Let $\prop(P_0 \ltrP{\alpha} A)$ be the greatest property such that $\prop(P_0 \ltrP{\alpha} A)$ holds if and only if one of the following cases holds:
\begin{enumerate}
\item\label{prop1:PAR} $P_0 = P \parop Q$ and either 
$P \ltrP{\alpha} A'$, $\prop(P \ltrP{\alpha} A')$, and $A \equiv A' \parop Q$, 
or 
$Q \ltrP{\alpha} A'$, $\prop(Q \ltrP{\alpha} A')$, and $A \equiv P \parop A'$,
for some $P$, $Q$, and $A'$;

\item\label{prop1:RES} $P_0 = \Res{n} P$, $P \ltrP{\alpha} A'$, $\prop(P \ltrP{\alpha} A')$, and $A \equiv \Res{n} A'$ for some $P$, $A'$, and $n$ that does not occur in $\alpha$;

\item\label{prop1:REPL} $P_0 = \Repl{P}$, $P \ltrP{\alpha} A'$, $\prop(P \ltrP{\alpha} A')$, and $A \equiv A' \parop \Repl{P}$ for some $P$ and $A'$;

\item\label{prop1:RCV} $P_0 = \Rcv{N}{x}.P$, $\alpha = \Rcv{N'}{M'}$, $\Sigma \vdash N = N'$, and $A \equiv P\{\subst{M'}{x}\}$ for some $N$, $x$, $P$, $N'$, and $M'$;

\item\label{prop1:SND} $P_0 = \Snd{N}{M}.P$, $\alpha = \Res{x}\Snd{N'}{x}$, $\Sigma \vdash N = N'$, $x \notin \fv(P_0)$, and $A \equiv P \parop \{\subst{M}{x}\}$ for some $N$, $M$, $P$, $x$, and $N'$.
\end{enumerate}
Let us show that, if $P_0 \ltrP{\alpha} A$ is derived by a derivation closed on the left, then $\prop(P_0 \ltrP{\alpha} A)$, by induction on the derivation of $P_0 \ltrP{\alpha} A$. 
\begin{itemize}
\item Case $\brn{In}'$. We have $P_0 = \Rcv{N}{x}.P$, $\alpha = \Rcv{N}{M}$, and
$A = P\{\subst{M}{x}\}$, so we are in Case~\ref{prop1:RCV} of $\prop(P_0 \ltrP{\alpha} A)$ with $N' = N$ and $M' = M$.
\item Case $\brn{Out-Var}'$. We have $P_0 = \Snd{N}{M}.P$, $\alpha = \Res{x}\Snd{N}{x}$, $x \notin \fv(P_0)$, and $A = P \parop \{\subst{M}{x}\}$, so we are in Case~\ref{prop1:SND} of $\prop(P_0 \ltrP{\alpha} A)$ with $N' = N$.
\item Case $\brn{Scope}'$. We have $P_0 = \Res{n} P$, $n$ does not occur in $\alpha$, $P \ltrP{\alpha} A'$, and $A = \Res{n} A'$ for some $A'$. We obtain $\prop(P \ltrP{\alpha} A')$ by induction hypothesis, so we are in Case~\ref{prop1:RES} of $\prop(P_0 \ltrP{\alpha} A)$.
\item Case $\brn{Par}'$. We have $P_0 = P \parop Q$, $P \ltrP{\alpha} A'$ and $A = A' \parop Q$ for some $P$, $Q$, and $A'$. We obtain $\prop(P \ltrP{\alpha} A')$ by induction hypothesis, so we are in Case~\ref{prop1:PAR} of $\prop(P_0 \ltrP{\alpha} A)$.
\item Case $\brn{Struct}'$. We have $P_0 \equivP Q_0 \ltrP{\alpha} B \equiv A$. The case in which $P_0 = Q_0$ is obvious. Let us consider the case in which the structural equivalence $P_0 \equivP Q_0$ consists of applying a single structural equivalence step. (The case in which it consists of several steps can be transformed into several applications of $\brn{Struct}'$.) 
The process $Q_0$ is closed, and by induction hypothesis $\prop(Q_0 \ltrP{\alpha} B)$. We show that, if $P_0 \equivP Q_0 \ltrP{\alpha} B \equiv A$, $\prop(Q_0 \ltrP{\alpha} B)$, and all processes in the derivation of $P_0 \equivP Q_0$ are closed, then $\prop(P_0 \ltrP{\alpha} A)$, by induction on the derivation of $P_0 \equivP Q_0$.
\begin{itemize}
\item Case $P_0 = Q_0\parop \nil \equivP Q_0$. We have $Q_0 \ltrP{\alpha} B$, $\prop(Q_0 \ltrP{\alpha} B)$, and $A \equiv B \equiv B \parop \nil$, so we are in Case~\ref{prop1:PAR} of $\prop(P_0 \ltrP{\alpha} A)$.

\item Case $P_0 \equivP P_0 \parop \nil$. Since $\prop(P_0 \parop \nil \ltrP{\alpha} B)$, we have 
either $P_0 \ltrP{\alpha} B'$, $\prop(P_0 \ltrP{\alpha} B')$, and $B \equiv B' \parop \nil$ for some $B'$, or 
$\nil \ltrP{\alpha} B'$, $\prop(\nil \ltrP{\alpha} B')$, and $B \equiv P \parop B'$ for some $B'$. By definition of $\prop$, $\prop(\nil \ltrP{\alpha} B')$ is impossible, so we are in the first case: $\prop(P_0 \ltrP{\alpha} B')$ and $A \equiv B'$. Since $\prop(P_0 \ltrP{\alpha} B')$ is invariant by structural equivalence applied to $B'$, we can then conclude that $\prop(P_0 \ltrP{\alpha} A)$.

\item Case $P_0 = P \parop (Q \parop R) \equivP (P \parop Q) \parop R$.
We have $\prop((P \parop Q) \parop R  \ltrP{\alpha} B)$, so 
either $P \parop Q \ltrP{\alpha} B'$, $\prop(P \parop Q \ltrP{\alpha} B')$, and $B \equiv B' \parop R$ for some $B'$, 
or $R \ltrP{\alpha} B'$, $\prop(R \ltrP{\alpha} B')$, and $B \equiv (P \parop Q) \parop B'$ for some $B'$.
In the first case, either $P \ltrP{\alpha} B''$, $\prop(P \ltrP{\alpha} B'')$, and $B' \equiv B'' \parop Q$ for some $B''$, 
or $Q \ltrP{\alpha} B''$, $\prop(Q \ltrP{\alpha} B'')$, and $B' \equiv P \parop B''$ for some $B''$.
Consider for instance the last case, in which $Q$ reduces. The other two cases are similar. 
Since $R$ is closed, $\bv(\alpha) \cap \fv(R) = \emptyset$, so by $\brn{Par}'$, 
$Q \parop R \ltrP{\alpha} B'' \parop R$, $\prop(Q \parop R \ltrP{\alpha} B'' \parop R)$, and $A \equiv B \equiv B' \parop R \equiv (P \parop B'') \parop R \equiv P \parop (B'' \parop R)$, so we are in Case~\ref{prop1:PAR} of $\prop(P_0 \ltrP{\alpha} A)$.

\item Case $P_0 = (P \parop Q) \parop R \equivP P \parop (Q \parop R)$. This case is similar to the previous one.

\item Case $P_0 = P \parop Q \equivP Q \parop P$. (This case is its own symmetric.) This case is immediate, since the desired result is invariant by swapping $P$ and $Q$.

\item Case $P_0 = \Repl{P} \equivP P \parop \Repl{P}$. Since $\prop(P \parop \Repl{P}  \ltrP{\alpha} B)$, 
either $P \ltrP{\alpha} B'$, $\prop(P \ltrP{\alpha} B')$, and $B \equiv B' \parop \Repl{P}$ for some $B'$, 
or $\Repl{P} \ltrP{\alpha} B'$, $\prop(\Repl{P} \ltrP{\alpha} B')$, and $B \equiv P \parop B'$ for some $B'$.
In the first case, we are in Case~\ref{prop1:REPL} of $\prop(P_0 \ltrP{\alpha} A)$ with $A' = B'$.
In the second case, since  $\prop(\Repl{P} \ltrP{\alpha} B')$, we have  $P \ltrP{\alpha} B''$, $\prop(P \ltrP{\alpha} B'')$, and $B' \equiv B'' \parop \Repl{P}$ for some $B''$. Hence, $A \equiv B \equiv B' \parop P \equiv B'' \parop \Repl{P}$, so we are in Case~\ref{prop1:REPL} of $\prop(P_0 \ltrP{\alpha} A)$ with $A' = B''$.

\item Case $P_0 = P \parop \Repl{P} \equivP \Repl{P}$. 
Since $\prop(\Repl{P}  \ltrP{\alpha} B)$, we have  $P \ltrP{\alpha} B'$, $\prop(P \ltrP{\alpha} B')$, and $B \equiv B' \parop \Repl{P}$ for some $B'$.
We are in Case~\ref{prop1:PAR} of $\prop(P_0 \ltrP{\alpha} A)$ with $A' = B'$.

\item Case $P_0 = \Res{n}\nil \equivP \nil$. We have that $\prop(\nil  \ltrP{\alpha} B)$ is impossible, so this case never happens.

\item Case $P_0 = \nil \equivP \Res{n}\nil$. Since $\prop(\Res{n}\nil  \ltrP{\alpha} B)$, we have $\prop(\nil  \ltrP{\alpha} B')$, which is impossible,  so this case never happens.

\item Case $P_0 = \Res{n}\Res{n'}P \equivP \Res{n'}\Res{n}P$. (This case is its own symmetric.) We rename $n$ and $n'$ so that they do not occur in $\alpha$.
Since $\prop(\Res{n'}\Res{n}P  \ltrP{\alpha} B)$, we have 
$\Res{n}P \ltrP{\alpha} B'$, $\prop(\Res{n}P \ltrP{\alpha} B')$, and $B \equiv \Res{n'} B'$ for some $B'$, so
$P \ltrP{\alpha} B''$, $\prop(P \ltrP{\alpha} B'')$, and $B' \equiv \Res{n} B''$ for some $B''$.
Hence, by \brn{Scope'}, $\Res{n'}P \ltrP{\alpha} \Res{n'}B''$, 
$\prop(\Res{n'}P \ltrP{\alpha} \Res{n'}B'')$, and we have $A \equiv B \equiv \Res{n'}B' \equiv \Res{n'}\Res{n}B'' \equiv \Res{n}\Res{n'}B''$, so we are in Case~\ref{prop1:RES} of $\prop(P_0 \ltrP{\alpha} A)$ with $A' = \Res{n'}B''$.

\item Case $P_0 = P \parop \Res{n}Q \equivP \Res{n}(P \parop Q)$ and $n \notin \fn(P)$. We rename $n$ so that it does not occur in $\alpha$.
Since $\prop(\Res{n}(P \parop Q)  \ltrP{\alpha} B)$, we have 
$P \parop Q \ltrP{\alpha} B'$, $\prop(P \parop Q \ltrP{\alpha} B')$, and $B \equiv \Res{n} B'$ for some $B'$, 
so either
$P \ltrP{\alpha} B''$, $\prop(P \ltrP{\alpha} B'')$, and $B' \equiv B'' \parop Q$ for some $B''$, or 
$Q \ltrP{\alpha} B''$, $\prop(Q \ltrP{\alpha} B'')$, and $B' \equiv P \parop B''$ for some $B''$.
In the first case, we rename $n$ in $Q$ so that $n \notin \fn(B'')$, hence $A \equiv B \equiv \Res{n}B' \equiv \Res{n}(B'' \parop Q) \equiv B'' \parop \Res{n}Q$, so we are in Case~\ref{prop1:PAR} of $\prop(P_0 \ltrP{\alpha} A)$ with $A' = B''$.
In the second case, by $\brn{Scope}'$, $\Res{n}Q \ltrP{\alpha} \Res{n}B''$, $\prop(\Res{n}Q \ltrP{\alpha} \Res{n}B'')$, and $A \equiv B \equiv \Res{n}B' \equiv \Res{n}(P \parop B'') \equiv P \parop \Res{n}B''$ since $n \notin \fn(P)$, so we are in Case~\ref{prop1:PAR} of $\prop(P_0 \ltrP{\alpha} A)$ with $A' = \Res{n}B''$.

\item Case $P_0 = \Res{n}(P \parop Q) \equivP P \parop \Res{n}Q$ and $n \notin \fn(P)$. This case is fairly similar to the previous one.

\item Case $P_0 = P_1 \{\subst{M}{x}\} \equivP P_1\{\subst{N}{x}\}$ and $\Sigma \vdash M=N$. (This case is its own symmetric.)
We have $\prop(P_1\{\subst{N}{x}\}\ltrP{\alpha} B)$.
We show by induction on the syntax of $P_1$ that, if $\prop(P_1\{\subst{N}{x}\}\ltrP{\alpha} B)$, $\Sigma \vdash M=N$, and $A \equiv B$, then $\prop(P_1 \{\subst{M}{x}\} \ltrP{\alpha} A)$. 
\begin{itemize}

\item Case $P_1 = P \parop Q$. 
We have $\prop(P\{\subst{N}{x}\}  \parop Q\{\subst{N}{x}\} \ltrP{\alpha} B)$, 
so we have either $P\{\subst{N}{x}\} \ltrP{\alpha} B'$, $\prop(P\{\subst{N}{x}\} \ltrP{\alpha} B')$, and $B \equiv B' \parop Q\{\subst{N}{x}\}$, or $Q\{\subst{N}{x}\} \ltrP{\alpha} B'$, $\prop(Q\{\subst{N}{x}\} \ltrP{\alpha} B')$, and $B \equiv P\{\subst{N}{x}\} \parop B'$ for some $B'$. 
Hence $P_1 \{\subst{M}{x}\} = P\{\subst{M}{x}\}  \parop Q\{\subst{M}{x}\}$ and either $P\{\subst{M}{x}\} \equivP P\{\subst{N}{x}\} \ltrP{\alpha} B'$, $\prop(P\{\subst{M}{x}\} \ltrP{\alpha} B')$ by induction hypothesis, and $A \equiv B \equiv B' \parop Q\{\subst{N}{x}\} \equiv B' \parop Q\{\subst{M}{x}\} $, or $Q\{\subst{M}{x}\} \equivP Q\{\subst{N}{x}\} \ltrP{\alpha} B'$, $\prop(Q\{\subst{M}{x}\} \ltrP{\alpha} B')$ by induction hypothesis, and $A \equiv B \equiv P\{\subst{N}{x}\} \parop B' \equiv P\{\subst{M}{x}\} \parop B' $, so we are in Case~\ref{prop1:PAR} of $\prop(P_1 \{\subst{M}{x}\} \ltrP{\alpha} A)$.

\item Case $P_1 = \Res{n}P$. We rename $n$ so that it does not occur in $\alpha$. We have $\prop(\Res{n}P\{\subst{N}{x}\}  \ltrP{\alpha} B)$, so $P\{\subst{N}{x}\}\ltrP{\alpha} B'$, $\prop(P\{\subst{N}{x}\}  \ltrP{\alpha} B')$, and $B \equiv \Res{n}B'$ for some $B'$. Hence $P_1 \{\subst{M}{x}\} = \Res{n}P\{\subst{M}{x}\}$, $P\{\subst{M}{x}\} \equivP P\{\subst{N}{x}\}\ltrP{\alpha} B'$, $\prop(P\{\subst{M}{x}\}  \ltrP{\alpha} B')$ by induction hypothesis, and $A \equiv B \equiv \Res{n}B'$, so we are in Case~\ref{prop1:RES} of $\prop(P_1 \{\subst{M}{x}\} \ltrP{\alpha} A)$.

\item Case $P_1 = \Repl{P}$. We have $\prop(\Repl{P\{\subst{N}{x}\}}  \ltrP{\alpha} B)$, so $P\{\subst{N}{x}\}\ltrP{\alpha} B'$, $\prop(P\{\subst{N}{x}\}\ltrP{\alpha} B')$, and $B \equiv B' \parop \Repl{P\{\subst{N}{x}\}}$ for some $B'$. Hence $P_1 \{\subst{M}{x}\} = \Repl{P\{\subst{M}{x}\}}$, $P\{\subst{M}{x}\} \equivP P\{\subst{N}{x}\}\ltrP{\alpha} B'$, $\prop(P\{\subst{M}{x}\}\ltrP{\alpha} B')$ by induction hypothesis, and $A \equiv B \equiv B' \parop \Repl{P\{\subst{N}{x}\}} \equiv B' \parop \Repl{P\{\subst{M}{x}\}}$, so we are in Case~\ref{prop1:REPL} of $\prop(P_1 \{\subst{M}{x}\} \ltrP{\alpha} A)$.

\item Case $P_1 = \Rcv{N_1}{x_1}.P$. We rename $x_1$ so that $x_1 \neq x$. 
We have $\prop(\Rcv{N_1\{\subst{N}{x}\}}{x_1}.P\{\subst{N}{x}\} \ltrP{\alpha} B)$, so $\alpha = \Rcv{N'}{M'}$, $\Sigma \vdash N_1\{\subst{N}{x}\} = N'$,
and $B \equiv P\{\subst{N}{x}\}\{\subst{M'}{x_1}\}$. So
$\Sigma \vdash N_1\{\subst{M}{x}\} = N'$,
and $A \equiv B \equiv P\{\subst{M}{x}\}\{\subst{M'}{x_1}\}$,
hence $\prop(\Rcv{N_1\{\subst{M}{x}\}}{x_1}.P\{\subst{M}{x}\} \ltrP{\alpha} A)$,
so we are in Case~\ref{prop1:RCV} of $\prop(P_1\{\subst{M}{x}\} \ltrP{\alpha} A)$.

\item Case $P_1 = \Snd{N_1}{M_1}.P$.
We have $\prop(\Snd{N_1\{\subst{N}{x}\}}{M_1\{\subst{N}{x}\}}.P\{\subst{N}{x}\} \ltrP{\alpha} B)$, so $\alpha = \Res{x_1}\Snd{N'}{x_1}$, $\Sigma \vdash N_1\{\subst{N}{x}\} = N'$,
and $B \equiv P\{\subst{N}{x}\} \parop \{\subst{M_1\{\subst{N}{x}\}}{x_1}\}$. So
$\Sigma \vdash N_1\{\subst{M}{x}\} = N'$,
and $A \equiv B \equiv P\{\subst{N}{x}\} \parop \{\subst{M_1\{\subst{M}{x}\}}{x_1}\}$,
hence $\prop(\Snd{N_1\{\subst{M}{x}\}}{M_1\{\subst{M}{x}\}}.P\{\subst{M}{x}\} \ltrP{\alpha} A)$,
so we are in Case~\ref{prop1:SND} of $\prop(P_1\{\subst{M}{x}\} \ltrP{\alpha} A)$.

\end{itemize}
Using this result, we obtain $\prop(P_0 \ltrP{\alpha} A)$.

\item Case $P_0 = P \parop Q \equivP P' \parop Q$ knowing $P \equivP P'$.
Since $\prop(P' \parop Q \ltrP{\alpha} B)$, 
either 
$P' \ltrP{\alpha} A'$, $\prop(P' \ltrP{\alpha} A')$, and $A \equiv A' \parop Q$ for some $A'$, or 
$Q \ltrP{\alpha} A'$, $\prop(Q \ltrP{\alpha} A')$, and $A \equiv P' \parop A'$ for some $A'$.
In the first case, by $\brn{Struct}'$, $P \ltrP{\alpha} A'$. By induction hypothesis, $\prop(P \ltrP{\alpha} A')$. Moreover, $A \equiv A' \parop Q$,
so we are in Case~\ref{prop1:PAR} of $\prop(P_0 \ltrP{\alpha} A)$.
In the second case, $Q \ltrP{\alpha} A'$, $\prop(Q \ltrP{\alpha} A')$, and $A \equiv P' \parop A' \equiv P \parop A'$,
so we are in Case~\ref{prop1:PAR} of $\prop(P_0 \ltrP{\alpha} A)$.

\item Case $P_0 = \Res{n}P \equivP \Res{n}P'$ knowing $P \equivP P'$. 
We rename $n$ so that it does not occur in $\alpha$.
Since $\prop(\Res{n}P' \ltrP{\alpha} B)$, we have
$P' \ltrP{\alpha} A'$, $\prop(P' \ltrP{\alpha} A')$, and $A \equiv \Res{n}A'$ for some $A'$.
By $\brn{Struct}'$, $P \ltrP{\alpha} A'$. By induction hypothesis, $\prop(P \ltrP{\alpha} A')$. Moreover, $A \equiv \Res{n}A'$, so we are in Case~\ref{prop1:RES} of $\prop(P_0 \ltrP{\alpha} A)$.

\end{itemize}
\end{itemize}
Since $P_0$ is closed and $\alpha$ is $\Res{x}\Snd{N'}{x}$ or $\Rcv{N'}{M'}$ for some ground term $N'$, by Lemma~\ref{lem:closed-interm}\eqref{prop:closed-interm-ltrP}, there exists a derivation of $P_0 \ltrP{\alpha} A$ closed on the left. So by applying the previous result, $\prop(P_0 \ltrP{\alpha} A)$, which yields the desired property.
\end{proof}

\begin{lemma}\label{lem:decomp-ltrpnf}
If $\Res{\vect n}(\sigma \parop P)$ is a closed normal process,
$\Res{\vect n}(\sigma \parop P) \ltrpnf{\alpha} A$, 
$\fv(\alpha) \subseteq \dom(\sigma)$,
and the elements of $\vect n$ do not occur in $\alpha$,
then 
$P \ltrP{\alpha\sigma} A'$, $A \equiv \Res{\vect n}(\sigma \parop A')$, and $\bv(\alpha) \cap \dom(\sigma) = \emptyset$ for some $A'$.
\end{lemma}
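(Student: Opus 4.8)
The plan is to unfold the definition of $\ltrpnf{\alpha}$ and then transport the underlying labelled transition of the plain part back from the representative used in that definition to $P$ itself, by induction on the $\equivpnf$-derivation connecting the two. From $\Res{\vect n}(\sigma\parop P) \ltrpnf{\alpha} A$ we first obtain witnesses $\vect n'$, $\sigma'$, $P'$, $\alpha'$, $B'$ with $\Res{\vect n}(\sigma\parop P) \equivpnf \Res{\vect n'}(\sigma'\parop P')$, $P' \ltrP{\alpha'} B'$, $A \equiv \Res{\vect n'}(\sigma'\parop B')$, $\fv(\sigma')\cap\bv(\alpha') = \emptyset$, $\Sigma\vdash\alpha\sigma' = \alpha'$, and $\vect n'$ not occurring in $\alpha$. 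By Lemma~\ref{lem:closed-interm}\eqref{prop:closed-interm-ltrpnf} I may assume this derivation is closed on the left, so that $\alpha'$ and $B'$ are closed and every intermediate normal process in the $\equivpnf$-chain is closed. Since $\equivpnf$ visibly preserves domains, every intermediate $\Res{\vect n''}(\sigma''\parop P'')$ has $\dom(\sigma'') = \dom(\sigma) \supseteq \fv(\alpha)$; since each $\equivpnf$-rule leaves the substitution in a fixed equational class, $\Sigma\vdash x\sigma'' = x\sigma$ for all $x$ in that domain, hence $\Sigma\vdash\alpha\sigma = \alpha\sigma''$; and since a closed normal process $\Res{\vect m}(\rho\parop R)$ has $\fv(\rho)=\dom(\rho)$ and the bound variable of an output label is determined by the label's shape, $\fv(\sigma')\cap\bv(\alpha')=\emptyset$ already yields the conclusion $\bv(\alpha)\cap\dom(\sigma)=\emptyset$ and, together with $\Sigma\vdash\alpha\sigma=\alpha'$, that the target label $\alpha\sigma$ is ground and equationally equal to the labels occurring along the chain.

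The core is a transport statement proved by induction on the length of the chain $z_0 = \Res{\vect n}(\sigma\parop P) \equivpnf \cdots \equivpnf z_k = \Res{\vect n'}(\sigma'\parop P')$, each step a single instance -- or the reverse of one -- of $\brn{Plain}''$, $\brn{New-C}''$, $\brn{New-Par}''$, or $\brn{Rewrite}''$: writing $z_k = \Res{\vect n_k}(\sigma_k\parop P_k)$, from a transition $P_k \ltrP{\beta} B_k$ with $\beta$ ground and $\Sigma\vdash\alpha\sigma = \beta$, one produces $A'$ with $P \ltrP{\alpha\sigma} A'$ and $\Res{\vect n_k}(\sigma_k\parop B_k) \equiv \Res{\vect n}(\sigma\parop A')$. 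A small auxiliary fact read off Lemma~\ref{lem:caract-ltrP} is used throughout: a transition $Q \ltrP{\gamma} B$ is insensitive to equational rewriting of its label, so $\Sigma\vdash\gamma' = \gamma$ with $\bv(\gamma')=\bv(\gamma)$ gives $Q \ltrP{\gamma'} B$. The base case $z_0 = z_k$ is then just this label rewriting. For a $\brn{Plain}''$ step the two plain parts are $\equivP$-related and $\brn{Struct}'$ closes it; for $\brn{New-C}''$ and $\brn{Rewrite}''$ only the restriction list, respectively the substitution, changes, and one concludes using $\brn{New-C}$, respectively $\brn{Rewrite}$, on the residual extended process. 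For $\brn{New-Par}''$ in the direction that extracts a restriction from $P$, one peels that restriction off with $\brn{Scope}'$ and re-inserts it on the residual with $\brn{New-Par}$; in the opposite direction the plain part at hand is literally of the form $\Res{m'}P_0$, so Lemma~\ref{lem:decomp-ltrP}, applied with the ground label $\alpha\sigma$, forces the transition to arise through $\brn{Scope}'$ and is recoverable on $P_0$. Transitivity composes the two halves, re-invoking the transport statement on the shorter chain; the hypotheses needed for that -- namely $\fv(\alpha)$ lying in the intermediate domain, a ground intermediate label, and closedness of the intermediate normal process -- are exactly the facts collected above.

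I expect the main obstacle to be the bookkeeping in the two $\brn{New-Par}''$ subcases: keeping the extracted name $m'$ simultaneously out of $\alpha$, out of $\fn(\sigma)$, and out of the residual, so that $\brn{New-Par}$ and $\brn{Scope}'$ both apply, and checking that $\fn(\alpha\sigma)\subseteq\fn(\alpha)\cup\fn(\sigma)$ so that freshness of $m'$ for $\alpha$ and for $\sigma$ transfers to freshness for the rewritten label. The reduction to closed-on-the-left derivations via Lemma~\ref{lem:closed-interm} is what makes these manipulations safe, since it pins down ground labels and closed intermediate processes and so excludes the ``free unused variable'' pathologies that $\equivpnf$ and $\equiv$ would otherwise allow to creep in at every step.
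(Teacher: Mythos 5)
Your proposal is correct and follows essentially the same route as the paper's proof: unfold the definition of $\ltrpnf{\alpha}$, invoke Lemma~\ref{lem:closed-interm}\eqref{prop:closed-interm-ltrpnf} to work with a derivation closed on the left, normalize the label via the ``Lemma~\ref{lem:caract-ltrP} back and forth'' argument, and then transport the $\ltrP$-transition along the $\equivpnf$-derivation by induction, using $\brn{Scope}'$/$\brn{New-Par}$ for one direction of $\brn{New-Par}''$ and Lemma~\ref{lem:decomp-ltrP} for the reversed direction. The only difference is presentational: you make explicit the argument for $\bv(\alpha)\cap\dom(\sigma)=\emptyset$ (which the paper leaves implicit) and phrase the induction as a chain of single $\equivpnf$-steps rather than on the derivation tree.
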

\begin{proof}
By Lemma~\ref{lem:closed-interm}\eqref{prop:closed-interm-ltrpnf}, we consider a derivation of $\Res{\vect n}(\sigma \parop P) \ltrpnf{\alpha} A$ closed on the left and the label $\alpha'$ below is closed.
By definition of $\ltrpnf{\alpha}$, we have $\Res{\vect n}(\sigma \parop P) \equivpnf \Res{\vect n'}(\sigma' \parop P')$, $P' \ltrP{\alpha'} B$, $A \equiv \Res{\vect n'}(\sigma' \parop B)$, $\Sigma \vdash \alpha\sigma' = \alpha'$ for some $\vect n'$, $\sigma'$, $P'$, $\alpha'$, $B$ such that the elements of $\vect n'$ do not occur in $\alpha$ and $\fv(\sigma') \cap \bv(\alpha') = \emptyset$.
By applying Lemma~\ref{lem:caract-ltrP} back and forth, since $P' \ltrP{\alpha'} B$ and $\Sigma \vdash \alpha\sigma' = \alpha'$, we have $P' \ltrP{\alpha\sigma'} B$.
We proceed by induction on the derivation of $\Res{\vect n}(\sigma \parop P) \equivpnf \Res{\vect n'}(\sigma' \parop P')$.
\begin{itemize}

\item Base case: $\Res{\vect n}(\sigma \parop P) = \Res{\vect n'}(\sigma' \parop P')$, and the desired result holds.

\item Transitivity: the result is proved by applying the induction hypothesis twice.

\item Case $\brn{Plain}''$: $\vect n = \vect n'$, $\sigma = \sigma'$, $P \equivP P' \ltrP{\alpha\sigma} B$, and $A \equiv \Res{\vect n}(\sigma \parop B)$, so the result holds.

\item Case $\brn{New-C}''$: $\vect n'$ is a reordering of $\vect n$, 
$\Res{\vect n}(\sigma \parop P) \equivpnf \Res{\vect n'}(\sigma \parop P)$, $P \ltrP{\alpha\sigma} B$ and $A \equiv \Res{\vect n'}(\sigma \parop B)$, 
so $P \ltrP{\alpha\sigma} B$ and $A \equiv \Res{\vect n'}(\sigma \parop B) \equiv \Res{\vect n}(\sigma \parop B)$, hence the result holds.

\item Case $\brn{New-Par}''$: $P = \Res{n'}P'$, $\Res{\vect n}(\sigma \parop \Res{n'}P') \equivpnf \Res{\vect n, n'}(\sigma \parop P')$, $P' \ltrP{\alpha\sigma} B$, and $A \equiv \Res{\vect n,n'}(\sigma \parop B)$ where
the elements of $\vect n,n'$ do not occur in $\alpha$ and $n' \notin \fn(\sigma)$.
By $\brn{Scope}'$, $P = \Res{n'}P'\ltrP{\alpha\sigma} \Res{n'}B$
and by $\brn{New-Par}$, $A \equiv \Res{\vect n,n'}(\sigma \parop B) \equiv
\Res{\vect n}(\sigma \parop \Res{n'}B)$, hence the result holds.

\item Case $\brn{New-Par}''$ reversed: 
$\Res{\vect n, n'}(\sigma \parop P) \equivpnf \Res{\vect n}(\sigma \parop \Res{n'}P)$, $\Res{n'}P \ltrP{\alpha\sigma} B$ and $A \equiv \Res{\vect n}(\sigma \parop B)$ where the elements of $\vect n$ do not occur in $\alpha$ and $n' \notin \fn(\sigma)$.
We rename $n'$ so that it does not occur in $\alpha$.
By Lemma~\ref{lem:decomp-ltrP}, $P \ltrP{\alpha\sigma} B'$ and $B \equiv \Res{n'}B'$
for some $B'$.
Hence, $P \ltrP{\alpha\sigma} B'$ and $A \equiv \Res{\vect n}(\sigma \parop B) 
\equiv \Res{\vect n}(\sigma \parop \Res{n'}B') 
\equiv \Res{\vect n, n'}(\sigma \parop B')$ by $\brn{New-Par}$, 
so the result holds.

\item Case $\brn{Rewrite}''$:
$\Res{\vect n}(\sigma \parop P) \equivpnf \Res{\vect n}(\sigma' \parop P)$, $P \ltrP{\alpha\sigma'} B$ and $A \equiv \Res{\vect n}(\sigma' \parop B)$
where $\dom(\sigma) = \dom(\sigma')$, $\Sigma \vdash x \sigma = x\sigma'$ for all $x \in \dom(\sigma)$, and $(\fv(x\sigma) \cup \fv(x\sigma')) \cap \dom(\sigma) = \emptyset$ for all $x \in \dom(\sigma)$.
Hence $P \ltrP{\alpha\sigma'} B$ and $\Sigma \vdash \alpha \sigma = \alpha \sigma'$, so by applying Lemma~\ref{lem:caract-ltrP} back and forth,
$P \ltrP{\alpha\sigma} B$.
Moreover, $A \equiv \Res{\vect n}(\sigma' \parop B)
\equiv \Res{\vect n}(\sigma \parop B)$ by several applications of $\brn{Rewrite}$, so the result holds.
\qedhere
\end{itemize}
\end{proof}

\begin{lemma}\label{lem:comp-ltr}
If $P$ and $Q$ are closed processes, $N$ is a ground term, $P \ltrP{\Rcv{N}{x}} A$, 
and $Q \ltrP{\Res{x}\Snd{N}{x}} B$, 
then 
$P \parop Q \redP R$ and $R\equiv \Res{x}(A \parop B)$ for some $R$.
\end{lemma}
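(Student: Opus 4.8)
The plan is to apply the characterisation of restricted labelled transitions (Lemma~\ref{lem:caract-ltrP}) to each hypothesis, which exposes the blocking input and output prefixes on~$N$ at top level under name restrictions, and then to reconstruct the internal communication with $\brn{Comm}'$. From $P \ltrP{\Rcv{N}{x}} A$ (the input case of Lemma~\ref{lem:caract-ltrP}, with received term~$x$) I obtain names $\vect m$, plain processes $P'$, $P_2$, and a variable $w$ with $P \equivP \Res{\vect m}(\Rcv{N}{w}.P' \parop P_2)$, $A \equiv \Res{\vect m}(P'\{\subst{x}{w}\} \parop P_2)$, and $\{\vect m\}$ disjoint from $\fn(N)$. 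Since $P$ is closed and $N$ is ground, I may use a derivation of this transition closed on the left (Lemma~\ref{lem:closed-interm}\eqref{prop:closed-interm-ltrP}), so that $\Rcv{N}{w}.P' \parop P_2$ is closed; hence $\fv(P') \subseteq \{w\}$ and $\fv(P_2) = \emptyset$. Dually, from $Q \ltrP{\Res{x}\Snd{N}{x}} B$ (the output case) I obtain $\vect n$, a term $M$, and plain processes $P''$, $Q_2$ with $Q \equivP \Res{\vect n}(\Snd{N}{M}.P'' \parop Q_2)$, $B \equiv \Res{\vect n}(P'' \parop \{\subst{M}{x}\} \parop Q_2)$, $\{\vect n\}$ disjoint from $\fn(N)$, and $x \notin \fv(\Snd{N}{M}.P'' \parop Q_2)$ — in particular $x \notin \fv(M) \cup \fv(P'') \cup \fv(Q_2)$. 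After $\alpha$-renaming I take $w$ and the names $\vect m, \vect n$ pairwise distinct and fresh for everything in sight.

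Next I produce the reduction. By scope extrusion ($\brn{New-Par}'$ and $\brn{New-C}'$; the side conditions hold because the restricted names avoid $\fn(N)$ and each other's free names by the freshness choices), $P \parop Q \equivP \Res{\vect m, \vect n}\bigl(\Snd{N}{M}.P'' \parop \Rcv{N}{w}.P' \parop P_2 \parop Q_2\bigr)$. Applying $\brn{Comm}'$ inside the evaluation context $\Res{\vect m, \vect n}(\hole \parop P_2 \parop Q_2)$ and using that $\redP$ is closed under $\equivP$ and under evaluation contexts, I get $P \parop Q \redP R$ with $R \eqdef \Res{\vect m, \vect n}\bigl(P'' \parop P'\{\subst{M}{w}\} \parop P_2 \parop Q_2\bigr)$, which is a plain process.

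Finally I verify $R \equiv \Res{x}(A \parop B)$. Pulling all restrictions to the top ($\brn{New-Par}$, $\brn{New-C}$, with the usual freshness checks) gives $\Res{x}(A \parop B) \equiv \Res{\vect m, \vect n}\,\Res{x}\bigl(P'\{\subst{x}{w}\} \parop \{\subst{M}{x}\} \parop P'' \parop P_2 \parop Q_2\bigr)$. By $\brn{Subst}$ the active substitution $\{\subst{M}{x}\}$ may be applied to its context; since $w \ne x$ and $\fv(P') \subseteq \{w\}$, we have $x \notin \fv(P')$, so $P'\{\subst{x}{w}\}\{\subst{M}{x}\} = P'\{\subst{M}{w}\}$, while $P''$, $P_2$, $Q_2$ are left unchanged because $x$ is free in none of them. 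Now $x$ occurs only in the remaining $\{\subst{M}{x}\}$ (it is absent from $\fv(P'\{\subst{M}{w}\})$ since $x \notin \fv(M)$), so by $\brn{New-Par}$, $\brn{Alias}$ ($\Res{x}\{\subst{M}{x}\} \equiv \nil$), and elimination of the residual $\nil$ we obtain $\Res{\vect m, \vect n}\bigl(P'\{\subst{M}{w}\} \parop P'' \parop P_2 \parop Q_2\bigr) = R$, as required.

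The proof has no conceptual depth; the only real work is the routine bookkeeping of the freshness and scoping side conditions across the structural-equivalence steps. The one point worth stating carefully — and the step I expect to absorb most of the attention — is the interaction between the label variable $x$, which is shared by the two transitions, and the input's internal formal parameter $w$: the active substitution $\{\subst{M}{x}\}$ created by the output transition, once $x$ is restricted and $\brn{Subst}$ is applied, is precisely what realises the substitution $\{\subst{M}{w}\}$ that $\brn{Comm}'$ performs, so the composite reduction and $\Res{x}(A\parop B)$ coincide up to $\equiv$.
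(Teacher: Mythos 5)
Your proof is correct and follows essentially the same route as the paper's: decompose both transitions with Lemma~\ref{lem:caract-ltrP}, arrange the bound names and the formal parameter to be fresh, extrude the restrictions, fire $\brn{Comm}'$, and check that the resulting process matches $\Res{x}(A\parop B)$ by discharging the active substitution $\{\subst{M}{x}\}$. The one step where your justification diverges is the claim that taking a derivation closed on the left (Lemma~\ref{lem:closed-interm}) forces $\Rcv{N}{w}.P'\parop P_2$ to be closed: Lemma~\ref{lem:caract-ltrP} only asserts the \emph{existence} of some decomposition, and $\equivP$ can introduce spurious free variables via $\brn{Rewrite}'$, so closedness of the components is not automatic; the paper secures exactly this point by applying Lemma~\ref{lem:closing}\eqref{closing:equivP} to replace the stray free variables of $P'$ and $P_2$ with fresh names, which is the fix you should substitute for your appeal to Lemma~\ref{lem:closed-interm}. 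With that repair the remaining bookkeeping ($x\notin\fv(P')\cup\fv(P_2)$, $x\notin\fv(M)\cup\fv(P'')\cup\fv(Q_2)$) goes through as you describe.
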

\begin{proof}
By Lemma~\ref{lem:caract-ltrP}, we have
$P \equivP \Res{\vect n}(\Rcv{N}{y}.P_1 \parop P_2)$,
$A \equiv \Res{\vect n}(P_1\{\subst{x}{y}\} \parop P_2)$,
$\{\vect n\} \cap \fn(N) = \emptyset$
and
$Q \equivP \Res{\vect n'}(\Snd{N}{M}.Q_1 \parop Q_2)$,
$B \equiv \Res{\vect n'}(Q_1 \parop \{\subst{M}{x}\} \parop Q_2)$,
$\{\vect n'\} \cap \fn(N) = \emptyset$,
$x \notin \fv(\Snd{N}{M}.Q_1 \parop Q_2)$.

Let $Y = \fv(\Res{\vect n}(\Rcv{N}{y}.P_1 \parop P_2))$.
We rename $y$ so that $y \notin Y$.
Let $\sigma$ be a substitution from $Y$ to pairwise distinct fresh names.
Since $P$ and $N$ are closed, 
$P\sigma = P$ and $N\sigma = N$, so a fortiori
$\Sigma \vdash P = P\sigma$.
By Lemma~\ref{lem:closing}\eqref{closing:equivP}, $P \equivP \Res{\vect n}(\Rcv{N}{y}.P_1\sigma \parop P_2\sigma)$ and $\Sigma \vdash \Res{\vect n}(\Rcv{N}{y}.P_1 \parop P_2) = \Res{\vect n}(\Rcv{N}{y}.P_1\sigma \parop P_2\sigma)$.
Then, by introducing fresh names $\vect n_1, \vect n'_1$,
\[\begin{split}
P \parop Q 
&\equivP
\Res{\vect n_1, \vect n'_1}(\Rcv{N}{y}.P_1\sigma\{\subst{\vect n_1}{\vect n}\} 
\parop P_2\sigma\{\subst{\vect n_1}{\vect n}\} \\
&\phantom{\equivP\Res{\vect n_1, \vect n'_1}}
\parop \Snd{N}{M\{\subst{\vect n'_1}{\vect n'}\}}.Q_1\{\subst{\vect n'_1}{\vect n'}\} \parop Q_2\{\subst{\vect n'_1}{\vect n'}\})\\
&\redP 
\Res{\vect n_1, \vect n'_1}(P_1\sigma\{\subst{\vect n_1}{\vect n}\}\{\subst{M\{\subst{\vect n'_1}{\vect n'}\}}{y}\} 
\parop P_2\sigma\{\subst{\vect n_1}{\vect n}\} 
\parop Q_1\{\subst{\vect n'_1}{\vect n'}\} \parop Q_2\{\subst{\vect n'_1}{\vect n'}\}) = R
\end{split}\]
and
\[\begin{split}
\Res{x}(A \parop B) 
&\equiv 
\Res{x, \vect n_1, \vect n'_1}(P_1\sigma\{\subst{x}{y},\subst{\vect n_1}{\vect n}\}  \parop P_2\sigma\{\subst{\vect n_1}{\vect n}\}\\
&\phantom{\equiv \Res{x, \vect n_1, \vect n'_1}} \parop Q_1\{\subst{\vect n'_1}{\vect n'}\} \parop \{\subst{M\{\subst{\vect n'_1}{\vect n'}\}}{x}\} \parop Q_2\{\subst{\vect n'_1}{\vect n'}\})\\
&\equiv \Res{\vect n_1, \vect n'_1}(P_1\sigma\{\subst{M\{\subst{\vect n'_1}{\vect n'}\}}{y},\subst{\vect n_1}{\vect n}\}  \parop P_2\sigma\{\subst{\vect n_1}{\vect n}\} \parop Q_1\{\subst{\vect n'_1}{\vect n'}\} \parop Q_2\{\subst{\vect n'_1}{\vect n'}\})\\
& \equiv R
\end{split}\]
because $x$ is not free in $P_1\sigma$, $P_2\sigma$, $Q_1$, $Q_2$, 
since $\Res{\vect n}(\Rcv{N}{y}.P_1\sigma \parop P_2\sigma)$ is closed
and $x \notin \fv(\Snd{N}{M}.Q_1 \parop Q_2)$.
\end{proof}

\begin{lemma}\label{lem:decomp-redP}
Suppose that $P_0$ is a closed process and $P_0 \redP R$. Then 
one of the following cases holds:
\begin{enumerate}

\item $P_0 = P \parop Q$ for some $P$ and $Q$, and one of the following cases holds:
\begin{enumerate}
\item $P \redP P'$ and $R \equiv P' \parop Q$ for some $P'$,
\item $P \ltrP{\Rcv{N}{x}} A$, $Q \ltrP{\Res{x}\Snd{N}{x}} B$, and $R \equiv \Res{x}(A \parop B)$ for some $A$, $B$, $x$, and ground term $N$, 
\end{enumerate}
and two symmetric cases obtained by swapping $P$ and $Q$;

\item $P_0 = \Res{n}P$, $P \redP Q'$, and $R \equiv \Res{n} Q'$ for some $n$, $P$, and $Q'$;

\item $P_0 = \Repl{P}$, $P \parop P \redP Q'$, and $R \equiv Q' \parop \Repl{P}$ for some $P$ and $Q'$;

\item $P_0 = \IfThenElse{M}{N}{P}{Q}$ and either $\Sigma \vdash M = N$ and $R \equiv P$, or $\Sigma \vdash M \neq N$ and $R \equiv Q$, for some $M$, $N$, $P$, and $Q$.
\end{enumerate}
\end{lemma}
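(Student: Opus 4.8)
The plan is to proceed by induction on the derivation of $P_0 \redP R$, closely following the structure of the proof of Lemma~\ref{lem:decomp-ltrP}. The first step is to invoke Lemma~\ref{lem:closed-interm}\eqref{prop:closed-interm-redP} to replace the given derivation by one that is closed on the left, so that every process occurring before the final application of $\redP$ is closed. This is what makes later appeals to Lemmas~\ref{lem:decomp-ltrP}, \ref{lem:comp-ltr}, and~\ref{lem:caract-ltrP} legitimate, and it also guarantees that whenever we reach an actual \brn{Comm}' step the redex is a closed plain process, so the communication channel $N$ is ground --- which is needed to produce case~1(b) in the form stated.

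As in Lemma~\ref{lem:decomp-ltrP}, a naive induction fails because closure by $\equivP$ does not decrease the size of the redex process, so the induction hypothesis cannot be re-applied to a reduction reconstructed by the statement itself. I would therefore define a strengthened property $\prop(P_0 \redP R)$ as the greatest property that holds exactly when one of the enumerated cases holds and, in addition, each sub-reduction $\redP$ appearing in that case (namely the $P \redP P'$ of case~1(a), the $P \redP Q'$ of case~2, and the $P \parop P \redP Q'$ of case~3) itself satisfies $\prop$. By construction $\prop$ is invariant under structural equivalence applied to $P_0$ or to $R$. The lemma then follows once we show, by induction on the left-closed derivation, that $P_0 \redP R$ implies $\prop(P_0 \redP R)$.

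The base cases are short: \brn{Comm}' puts $P_0 = \Snd N M.P \parop \Rcv N x.Q$ into the shape of case~1(b) (after renaming $x$ fresh, the left component makes a $\Res x\Snd N x$ transition by \brn{Out-Var}' and the right an input transition by \brn{In}', and $R \equiv \Res x(A \parop B)$ by a direct computation), and \brn{Then}'/\brn{Else}' give case~4. The context-closure rules \brn{Par}' and \brn{Scope}' land in cases~1(a) and~2, using the induction hypothesis to carry $\prop$ into the sub-reduction. The substantive case is \brn{Struct}', for which I would run a nested induction on the derivation of $P_0 \equivP Q_0$, proving that $\prop(Q_0 \redP B)$ together with $P_0 \equivP Q_0$ and $B \equiv A$ entails $\prop(P_0 \redP A)$. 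The axioms \brn{Par-A}', \brn{Par-C}', \brn{Repl}' and \brn{New-Par}' are the ones along which a communication redex can migrate between the two sides of a $\parop$ or across a $\nu$: there I would push the reconstructed reduction back through \brn{Par}' and \brn{Scope}' --- discharging the $\bv(\alpha)\cap\fv(\cdot)=\emptyset$ and name-freshness side conditions using closedness and fresh renamings --- and then use Lemma~\ref{lem:comp-ltr} to re-assemble a case~1(b) communication and Lemma~\ref{lem:decomp-ltrP} to split one when necessary; \brn{Repl}' is also what manufactures case~3, the two copies of $P$ arising from unfolding $\Repl P$ twice; and \brn{Rewrite}' is dealt with by a further induction on the structure of the process carrying the active substitution, exactly as in the proof of Lemma~\ref{lem:decomp-ltrP}.

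The main obstacle, as with Lemma~\ref{lem:decomp-ltrP}, is bookkeeping rather than ideas: threading $\prop$ correctly through the nested $\equivP$ induction, and, in case~1(b), making sure that the two labelled transitions reconstructed on either side of a parallel composition genuinely recompose --- via Lemma~\ref{lem:comp-ltr} --- to the given reduction up to $\equiv$, with the bound variable $x$ and the restricted names managed by suitable fresh renamings. No new technique beyond the one already deployed for Lemma~\ref{lem:decomp-ltrP} is required; this proof is a routine, if tedious, adaptation of it.
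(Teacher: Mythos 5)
Your proposal is correct and follows essentially the same route as the paper's proof: a strengthened greatest property $\prop$ that recursively constrains the sub-reductions in cases 1(a), 2, and 3, an outer induction on a derivation made closed on the left via Lemma~\ref{lem:closed-interm}, a nested induction over the $\equivP$ derivation in the \brn{Struct}$'$ case (with a further structural induction for \brn{Rewrite}$'$), and Lemmas~\ref{lem:decomp-ltrP} and~\ref{lem:comp-ltr} to split and recompose the communication redexes. One small slip in the prose: $\prop$ is invariant under $\equiv$ applied to $R$ but \emph{not} under $\equivP$ applied to $P_0$ (the cases pin down the syntactic shape of $P_0$) --- which is precisely why the nested induction you then describe is needed, so the plan itself is unaffected.
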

\begin{proof}
We proceed similarly to Lemma~\ref{lem:decomp-ltrP}.
Let $\prop(P_0 \redP R_0)$ be the greatest property such that $\prop(P_0 \redP R_0)$ holds if and only if one of the following cases holds:
\begin{enumerate}
\item\label{prop2:PAR} $P_0 = P \parop Q$ for some $P$ and $Q$, and one of the following cases holds:
\begin{enumerate}
\item $P \redP P'$, $\prop(P \redP P')$, and $R_0 \equiv P' \parop Q$ for some $P'$,
\item $P \ltrP{\Rcv{N}{x}} A$, $Q \ltrP{\Res{x}\Snd{N}{x}} B$, and $R_0 \equiv \Res{x}(A \parop B)$ for some $A$, $B$, $x$, and ground term $N$, 
\end{enumerate}
and two symmetric cases obtained by swapping $P$ and $Q$, named (a') and (b') respectively;

\item\label{prop2:RES} $P_0 = \Res{n} P$, $P \redP Q'$, $\prop(P \redP Q')$, and $R_0 \equiv \Res{n} Q'$ for some $n$, $P$, and $Q'$;

\item\label{prop2:REPL} $P_0 = \Repl{P}$, $P \parop P \redP Q'$, $\prop(P \parop P \redP Q')$, and $R_0 \equiv Q' \parop \Repl{P}$ for some $P$ and $Q'$;

\item\label{prop2:IF} $P_0 = \IfThenElse{M}{N}{P}{Q}$ and either $\Sigma \vdash M = N$ and $R_0 \equiv P$, or $\Sigma \vdash M \neq N$ and $R_0 \equiv Q$, for some $M$, $N$, $P$, and $Q$.

\end{enumerate}
Let us show that, if $P_0 \redP R_0$ is derived by a derivation closed on the left, then $\prop(P_0 \redP R_0)$, by induction on the derivation of $P_0 \redP R_0$.
\begin{itemize}
\item In the case $\brn{Comm}'$, $P_0 = P \parop Q$ where $P = \Snd{N}{M}.P'$, $Q = \Rcv{N}{x}.Q'$,
and $R_0 = P' \parop Q'\{\subst{M}{x}\}$,
so by choosing a fresh variable $y$,
$P \ltrP{\Res{y}\Snd{N}{y}} P' \parop \{\subst{M}{y}\}$, $Q \ltrP{\Rcv{N}{y}} Q'\{\subst{y}{x}\}$,
$\Res{y}(P' \parop \{\subst{M}{y}\} \parop Q'\{\subst{y}{x}\}) \equiv P' \parop Q'\{\subst{M}{x}\} \equiv R_0$, 
and $N$ is ground since $P_0$ is closed.
Therefore, we are in Case~\ref{prop2:PAR}.(a') of $\prop(P_0 \redP R_0)$.
\item In the case $\brn{Then}'$, we are in Case~\ref{prop2:IF} of $\prop(P_0 \redP R_0)$ with $P_0 = \IfThenElse{M}{M}{P}{Q}$ and $R_0 = P$.
\item In the case $\brn{Else}'$, we are in Case~\ref{prop2:IF} of $\prop(P_0 \redP R_0)$ with $P_0 = \IfThenElse{M}{N}{P}{Q}$,
$\Sigma \vdash M \neq N$, and $R_0 = Q$.
\item If we apply a reduction under an evaluation context $\CTX$, then 
$P_0 = \CTX[P] \redP \CTX[P'] = R_0$ is derived from $P \redP P'$.
By induction hypothesis, we have $\prop(P \redP P')$, and 
we are in Case~\ref{prop2:PAR}.(a) (respectively, \ref{prop2:PAR}.(a') or
\ref{prop2:RES}) of $\prop(P_0 \redP R_0)$ when $E$ is $E' \parop Q$ (respectively, $Q \parop E'$ or $\Res{n}E'$).
\item Finally, suppose that we use $\equivP$. We have $P_0 \equivP Q_0 \redP R_1 \equivP R_0$. The case in which $P_0 = Q_0$ is obvious by induction, since $R_1 \equivP R_0$ implies $R_1 \equiv R_0$ by Lemma~\ref{lem:struct-pnf-to-std}. 
Let us consider the case in which the structural equivalence $P_0 \equivP Q_0$ consists of applying a single structural equivalence step. (The case in which it consists of several steps can be transformed into several applications of the rule.) The process $Q_0$ is closed, and by induction hypothesis $\prop(Q_0 \redP R_1)$. We show that, if $P_0 \equivP Q_0 \redP R_1 \equiv R_0$, $\prop(Q_0 \redP R_1)$, and all processes in the derivation of $P_0 \equivP Q_0$ are closed, then $\prop(P_0 \redP R_0)$, by induction on the derivation of $P_0 \equivP Q_0$.
\begin{itemize}
\item Case $P_0 = Q_0\parop \nil \equivP Q_0$. We have $Q_0 \redP R_1$, $\prop(Q_0 \redP R_1)$, and $R_0 \equiv R_1 \equiv R_1 \parop \nil$, so we are in Case~\ref{prop2:PAR}.(a) of $\prop(P_0 \redP R_0)$.

\item Case $P_0 \equivP P_0 \parop \nil$. Since $\prop(P_0 \parop \nil \redP R_1)$, we have 
either 
\begin{enumerate}
\item $P_0 \redP R'$, $\prop(P_0 \redP R')$, and $R_1 \equiv R' \parop \nil$ for some $R'$;
\item $P_0 \ltrP{\Rcv{N}{x}} A$, $\nil \ltrP{\Res{x}\Snd{N}{x}} B$;
\item $P_0 \ltrP{\Res{x}\Snd{N}{x}} A$, $\nil \ltrP{\Rcv{N}{x}} B$; or
\item $\nil \redP R'$, $\prop(\nil \redP R')$, and $R_1 \equiv P_0 \parop R'$ for some $R'$.
\end{enumerate}
Cases 2 and 3 are impossible by Lemma~\ref{lem:decomp-ltrP}.
Case 4 is impossible since, by definition of $\prop$, $\prop(\nil \redP R')$ does not hold. 
So we are in the first case: $\prop(P_0 \redP R')$ and $R_0 \equiv R_1 \equiv R' \parop \nil \equiv R'$. Since $\prop(P_0 \redP R')$ is invariant by structural equivalence applied to $R'$, we can then conclude that $\prop(P_0 \redP R_0)$.

\item Case $P_0 = P \parop (Q \parop R) \equivP (P \parop Q) \parop R$.
Since $\prop((P \parop Q) \parop R \redP R_1)$, we have four cases:
\begin{itemize}
\item $P \parop Q \redP R_2$, $\prop(P \parop Q \redP R_2)$, and $R_1 \equiv R_2 \parop R$. We have again four cases.
\begin{itemize}

\item $P \redP R_3$, $\prop(P \redP R_3)$, and $R_2 \equiv R_3 \parop Q$.
Then $R_0 \equiv R_1 \equiv R_2 \parop R \equiv (R_3 \parop Q) \parop R \equiv R_3 \parop (Q \parop R)$, so we are in Case~\ref{prop2:PAR}.(a) of $\prop(P_0 \redP R_0)$.

\item $Q \redP R_3$, $\prop(Q \redP R_3)$, and $R_2 \equiv P \parop R_3$.
Then $Q \parop R \redP R_3 \parop R$, $\prop(Q \parop R \redP R_3 \parop R)$,
and $R_0 \equiv R_1 \equiv R_2 \parop R \equiv (P \parop R_3) \parop R
\equiv P \parop (R_3 \parop R)$, so we are in Case~\ref{prop2:PAR}.(a') of $\prop(P_0 \redP R_0)$.

\item $P \ltrP{\Rcv{N}{x}} A$, $Q \ltrP{\Res{x}\Snd{N}{x}} B$, and $R_2 \equiv \Res{x}(A \parop B)$ for some $A$, $B$, $x$, and ground term $N$.
Then $Q \parop R \ltrP{\Res{x}\Snd{N}{x}} B \parop R$ by $\brn{Par}'$,
and $R_0 \equiv R_1 \equiv R_2 \parop R \equiv \Res{x}(A \parop B) \parop R
\equiv \Res{x}(A \parop (B \parop R))$ since $x \notin \fv(R)$, so we are in Case~\ref{prop2:PAR}.(b) of $\prop(P_0 \redP R_0)$.

\item $P \ltrP{\Res{x}\Snd{N}{x}} A$, $Q \ltrP{\Rcv{N}{x}} B$, and $R_2 \equiv \Res{x}(A \parop B)$ for some $A$, $B$, $x$, and ground term $N$.
This case can be handled similarly to the previous one.

\end{itemize}

\item $R \redP R_2$, $\prop(R \redP R_2)$, and $R_1 \equiv (P \parop Q) \parop R_2$. Then $Q \parop R \redP Q \parop R_2$, $\prop(Q \parop R \redP Q \parop R_2)$, and $R_0 \equiv R_1 \equiv (P \parop Q) \parop R_2 \equiv P \parop (Q \parop R_2)$, so we are in Case~\ref{prop2:PAR}.(a') of $\prop(P_0 \redP R_0)$.

\item $P \parop Q \ltrP{\Rcv{N}{x}} A$, $R \ltrP{\Res{x}\Snd{N}{x}} B$, and $R_1 \equiv \Res{x}(A \parop B)$ for some $A$, $B$, $x$, and ground term $N$. By Lemma~\ref{lem:decomp-ltrP}, either $P \ltrP{\Rcv{N}{x}} A'$ and $A \equiv A' \parop Q$ for some $A'$, or $Q \ltrP{\Rcv{N}{x}} A'$ and $A \equiv P \parop A'$ for some $A'$.
In the first case, $P \ltrP{\Rcv{N}{x}} A'$, $Q \parop R \ltrP{\Res{x}\Snd{N}{x}} Q \parop B$ by $\brn{Par}'$ and $\brn{Struct}'$, and $R_0 \equiv R_1 \equiv \Res{x}(A \parop B) \equiv \Res{x}((A' \parop Q) \parop B) \equiv \Res{x}(A' \parop (Q \parop B))$, so we are in Case~\ref{prop2:PAR}.(b) of $\prop(P_0 \redP R_0)$.
In the second case, by Lemma~\ref{lem:comp-ltr}, 
$Q \parop R \redP R_2$ and $R_2 \equiv \Res{x}(A' \parop B)$ for some $R_2$, 
$\prop(Q \parop R \redP R_2)$, and  
$R_0 \equiv R_1 \equiv \Res{x}(A \parop B) 
\equiv \Res{x}((P\parop A') \parop B) 
\equiv P \parop \Res{x}(A' \parop B) \equiv P \parop R_2$ 
since $x \notin \fv(P)$. Hence, we are in Case~\ref{prop2:PAR}.(a') of $\prop(P_0 \redP R_0)$.

\item $P \parop Q \ltrP{\Res{x}\Snd{N}{x}} A$, $R \ltrP{\Rcv{N}{x}} B$, and $R_1 \equiv \Res{x}(A \parop B)$ for some $A$, $B$, $x$, and ground term $N$.
This case can be handled similarly to the previous one.

\end{itemize}

\item Case $P_0 = (P \parop Q) \parop R \equivP P \parop (Q \parop R)$. This case is similar to the previous one.

\item Case $P_0 = P \parop Q \equivP Q \parop P$. (This case is its own symmetric.) This case is immediate, since the desired result is invariant by swapping $P$ and $Q$.

\item Case $P_0 = \Repl{P} \equivP P \parop \Repl{P}$. Since $\prop(P \parop \Repl{P} \redP R_1)$, we have four cases: 
\begin{itemize}
\item $P \redP P'$, $\prop(P \redP P')$, and $R_1 \equiv P' \parop \Repl{P}$ for some $P'$. Hence $P \parop P \redP P' \parop P$, $\prop(P \parop P \redP P' \parop P)$, and $R_0 \equiv R_1 \equiv P' \parop \Repl{P} \equiv (P' \parop P) \parop \Repl{P}$, so we are in Case~\ref{prop2:REPL} of $\prop(P_0 \redP R_0)$.
\item $\Repl{P} \redP P'$, $\prop(\Repl{P} \redP P')$, and $R_1 \equiv P \parop P'$ for some $P'$. Since $\prop(\Repl{P} \redP P')$, we have $P \parop P \redP R_2$,  $\prop(P \parop P \redP R_2)$, and $P' \equiv R_2 \parop \Repl{P}$.
So $P \parop P \redP R_2$,  $\prop(P \parop P \redP R_2)$, and $R_0 \equiv R_1 \equiv P \parop P' \equiv P \parop (R_2 \parop \Repl{P}) \equiv R_2 \parop \Repl{P}$, so we are in Case~\ref{prop2:REPL} of $\prop(P_0 \redP R_0)$.
\item $P \ltrP{\Rcv{N}{x}} A$, $\Repl{P} \ltrP{\Res{x}\Snd{N}{x}} B$, and $R_1 \equiv \Res{x}(A \parop B)$ for some $A$, $B$, $x$, and ground term $N$.
By Lemma~\ref{lem:decomp-ltrP}, $P \ltrP{\Res{x}\Snd{N}{x}} B'$ and $B \equiv B' \parop \Repl{P}$ for some $B'$.
So by Lemma~\ref{lem:comp-ltr}, $P \parop P \redP R_2$ and $R_2 \equiv \Res{x}(A \parop B')$ for some $R_2$.
So $P \parop P \redP R_2$, $\prop(P \parop P \redP R_2)$, and
$R_0 \equiv R_1 \equiv \Res{x}(A \parop B) \equiv \Res{x}(A \parop (B' \parop \Repl{P})) \equiv \Res{x}(A \parop B') \parop \Repl{P} \equiv R_2 \parop \Repl{P}$ since $x \notin \fv(\Repl{P})$. So we are in Case~\ref{prop2:REPL} of $\prop(P_0 \redP R_0)$.

\item $P \ltrP{\Res{x}\Snd{N}{x}} A$, $\Repl{P} \ltrP{\Rcv{N}{x}} B$, and $R_1 \equiv \Res{x}(A \parop B)$ for some $A$, $B$, $x$, and ground term $N$.
This case can be handled similarly to the previous one.
\end{itemize}

\item Case $P_0 = P \parop \Repl{P} \equivP \Repl{P}$. 
Since $\prop(\Repl{P} \redP R_1)$, we have 
$P \parop P \redP R_2$, $\prop(P \parop P \redP R_2)$, and
$R_1 \equiv R_2 \parop \Repl{P}$ for some $R_2$.
Since $\prop(P \parop P \redP R_2)$, we have four cases, which reduce to two
by symmetry:
\begin{itemize}
\item $P \redP P'$, $\prop(P \redP P')$, and $R_2 \equiv P' \parop P$ for some $P'$. Hence $R_0 \equiv R_1 \equiv R_2 \parop \Repl{P} \equiv P' \parop P \parop \Repl{P} \equiv P' \parop \Repl{P}$, so we are in Case~\ref{prop2:PAR}.(a) of $\prop(P_0 \redP R_0)$.

\item $P \ltrP{\Rcv{N}{x}} A$, $P \ltrP{\Res{x}\Snd{N}{x}} B$, and $R_2 \equiv \Res{x}(A \parop B)$ for some $A$, $B$, $x$, and ground term $N$.
Hence $P \ltrP{\Rcv{N}{x}} A$, $\Repl{P} \equivP P \parop \Repl{P} \ltrP{\Res{x}\Snd{N}{x}} B \parop \Repl{P}$ by $\brn{Par}'$ so $\Repl{P} \ltrP{\Res{x}\Snd{N}{x}} B \parop \Repl{P}$ by $\brn{Struct}'$, and $R_0 \equiv R_1 \equiv R_2 \parop \Repl{P} \equiv \Res{x}(A \parop B)\parop \Repl{P}\equiv \Res{x}(A \parop (B \parop \Repl{P}))$ since $x \notin \fv(\Repl{P})$.
So we are in Case~\ref{prop2:PAR}.(b) of $\prop(P_0 \redP R_0)$.

\end{itemize}

\item Case $P_0 = \Res{n}\nil \equivP \nil$. We have that $\prop(\nil  \redP R_1)$ is impossible, so this case never happens.

\item Case $P_0 = \nil \equivP \Res{n}\nil$. Since $\prop(\Res{n}\nil  \redP R_1)$, we have $\prop(\nil  \redP R'_1)$, which is impossible, so this case never happens.

\item Case $P_0 = \Res{n}\Res{n'}P \equivP \Res{n'}\Res{n}P$. (This case is its own symmetric.) Since $\prop(\Res{n'}\Res{n}P  \redP R_1)$, we have 
$\Res{n}P \redP R'_1$, $\prop(\Res{n}P \redP R'_1)$, and $R_1 \equiv \Res{n'} R'_1$ for some $R'_1$, so
$P \redP R''_1$, $\prop(P \redP R''_1)$, and $R'_1 \equiv \Res{n} R''_1$ for some $R''_1$.
Hence, $\Res{n'}P \redP \Res{n'}R''_1$, 
$\prop(\Res{n'}P \redP \Res{n'}R''_1)$, and $R_0 \equiv R_1 \equiv \Res{n'}R'_1 \equiv \Res{n'}\Res{n}R''_1 \equiv \Res{n}\Res{n'}R''_1$, so we are in Case~\ref{prop2:RES} of $\prop(P_0 \redP R_0)$ with $Q' = \Res{n'}R''_1$.

\item Case $P_0 = P \parop \Res{n}Q \equivP \Res{n}(P \parop Q)$ and $n \notin \fn(P)$.  Since $\prop(\Res{n}(P \parop Q) \redP R_1)$, we have
$P \parop Q \redP R_2$, $\prop(P \parop Q \redP R_2)$, and $R_1 \equiv \Res{n}R_2$ for some $R_2$. Since $\prop(P \parop Q \redP R_2)$, we have four cases:
\begin{itemize}
\item $P \redP P'$, $\prop(P \redP P')$, and $R_2 \equiv P' \parop Q$ for some $P'$. We rename $n$ in $Q$ so that $n \notin \fn(P')$. Hence $P \redP P'$, $\prop(P \redP P')$, and $R_0 \equiv R_1 \equiv \Res{n}R_2 \equiv \Res{n}(P' \parop Q) \equiv P' \parop \Res{n}Q$ since $n \notin \fn(P')$. 
So we are in Case~\ref{prop2:PAR}.(a) of $\prop(P_0 \redP R_0)$.
\item $Q \redP Q'$, $\prop(Q \redP Q')$, and $R_2 \equiv P \parop Q'$ for some $Q'$. Then $\Res{n}Q \redP \Res{n}Q'$, $\prop(\Res{n}Q \redP \Res{n}Q')$, and $R_0 \equiv R_1 \equiv \Res{n}R_2 \equiv \Res{n}(P \parop Q') \equiv P \parop \Res{n}Q'$ since $n \notin \fn(P)$.
So we are in Case~\ref{prop2:PAR}.(a') of $\prop(P_0 \redP R_0)$.
\item $P \ltrP{\Rcv{N}{x}} A$, $Q \ltrP{\Res{x}\Snd{N}{x}} B$, and $R_2 \equiv \Res{x}(A \parop B)$ for some $A$, $B$, $x$, and ground term $N$.

We need to rename $n$ so that $n \notin \fn(N) \cup \fn(A)$.
To do that, we first show that, for all processes $P$, $P'$, if $P \equivP P'$ and $\Sigma \vdash P\{\subst{n'}{n}\} = P$, then $\Sigma \vdash P'\{\subst{n'}{n}\} = P'$, by induction on the derivation of $P \equivP P'$.

We also show that, if $A \equiv A'$, then $A\{\subst{n'}{n}\} \equiv A'\{\subst{n'}{n}\}$, by induction on the derivation of $A \equiv A'$.

By Lemma~\ref{lem:caract-ltrP}, since $P \ltrP{\Rcv{N}{x}} A$, we have
$P \equivP \Res{\vect n}(\Rcv{N}{y}.P_1 \parop P_2)$,
$A \equiv \Res{\vect n}(P_1\{\subst{x}{y}\} \parop P_2)$, and
$\{\vect n\} \cap \fn(N) = \emptyset$,
for some $\vect n$, $P_1$, $P_2$, $y$,
and since $Q \ltrP{\Res{x}\Snd{N}{x}} B$, we have
$Q \equivP \Res{\vect n'}(\Snd{N}{M}.Q_1 \parop Q_2)$,
$A \equiv \Res{\vect n'}(Q_1 \parop \{\subst{M}{x}\} \parop Q_2)$,
$\{\vect n'\} \cap \fn(N) = \emptyset$, and
$x \notin \fv(\Snd{N}{M}.Q_1 \parop Q_2)$,
for some $\vect n'$, $Q_1$, $Q_2$, $M$.
Let $n'$ be a fresh name.
\begin{itemize}
\item First case: $n \notin \vect n$. 
We have $P\{\subst{n'}{n}\} = P$ since $n \notin \fn(P)$, 
so by the result shown above, $\Sigma \vdash (\Res{\vect n}(\Rcv{N}{y}.P_1 \parop P_2))\{\subst{n'}{n}\} = \Res{\vect n}(\Rcv{N}{y}.P_1 \parop P_2)$,
so $\Sigma \vdash N\{\subst{n'}{n}\} = N$,
$P \equivP \Res{\vect n}(\Rcv{N\{\subst{n'}{n}\}}{y}.P_1\{\subst{n'}{n}\} \parop P_2\{\subst{n'}{n}\})$,
$A\{\subst{n'}{n}\} \equiv \Res{\vect n}(P_1\{\subst{n'}{n}\}\{\subst{x}{y}\} \parop P_2\{\subst{n'}{n}\})$, and
$\fn(N\{\subst{n'}{n}\}) \linebreak[2] \cap \{\vect n\}  = \emptyset$,
so by Lemma~\ref{lem:caract-ltrP}, $P \ltrP{\Rcv{N\{\subst{n'}{n}\}}{x}} A\{\subst{n'}{n}\}$.
\item Second case: $n \in \vect n$, so $n \notin \fn(N)$. We have $N\{\subst{n'}{n}\} = N$.
So $P \equivP \Res{\vect n}(\Rcv{N\{\subst{n'}{n}\}}{y}.P_1\} \parop P_2)$,
$A\{\subst{n'}{n}\} \equiv \Res{\vect n}(P_1\{\subst{x}{y}\} \parop P_2)$, and
$\{\vect n\} \cap \fn(N\{\subst{n'}{n}\}) = \emptyset$,
so by Lemma~\ref{lem:caract-ltrP}, $P \ltrP{\Rcv{N\{\subst{n'}{n}\}}{x}} A\{\subst{n'}{n}\}$.
\end{itemize}
Let $N' = N\{\subst{n'}{n}\}$ and $A' = A\{\subst{n'}{n}\}$.
Hence in both cases, $P \ltrP{\Rcv{N'}{x}} A'$
and $\Sigma \vdash N' = N$, so
$Q \equivP \Res{\vect n'}(\Snd{N'}{M}.Q_1 \parop Q_2)$,
$A \equiv \Res{\vect n'}(Q_1 \parop \{\subst{M}{x}\} \parop Q_2)$,
$\{\vect n'\} \cap \fn(N') = \emptyset$, and
$x \notin \fv(\Snd{N'}{M}.Q_1 \parop Q_2)$,
so by Lemma~\ref{lem:caract-ltrP}, $Q \ltrP{\Res{x}\Snd{N'}{x}} B$.
Hence $P \ltrP{\Rcv{N'}{x}} A'$, $\Res{n}Q \ltrP{\Res{x}\Snd{N'}{x}} \Res{n}B$
by $\brn{Scope}'$, and $R_0 \equiv R_1 \equiv \Res{n}R_2 \equiv \Res{n}\Res{x}(A' \parop B) \equiv \Res{x}(A' \parop \Res{n}B)$,
so we are in Case~\ref{prop2:PAR}.(b) of $\prop(P_0 \redP R_0)$.
\item $P \ltrP{\Res{x}\Snd{N}{x}} A$, $Q \ltrP{\Rcv{N}{x}} B$, and $R_2 \equiv \Res{x}(A \parop B)$ for some $A$, $B$, $x$, and ground term $N$. This case can be handled similarly to the previous one.
\end{itemize}

\item Case $P_0 = \Res{n}(P \parop Q) \equivP P \parop \Res{n}Q$ and $n \notin \fn(P)$. This case is fairly similar to the previous one.

\item Case $P_0 = P_1 \{\subst{M}{x}\} \equivP P_1\{\subst{N}{x}\}$ and $\Sigma \vdash M=N$. (This case is its own symmetric.)
We have $\prop(P_1\{\subst{N}{x}\}\redP R_1)$.
We show by induction on the syntax of $P_1$ that, if $\prop(P_1\{\subst{N}{x}\}\redP R_1)$, $\Sigma \vdash M=N$, and $R_0 \equiv R_1$, then $\prop(P_1 \{\subst{M}{x}\} \redP R_0)$. 
\begin{itemize}
\item Case $P_1 = P \parop Q$. 
We have $\prop(P\{\subst{N}{x}\}  \parop Q\{\subst{N}{x}\} \redP R_1)$, 
so we have four cases:
\begin{itemize}
\item $P\{\subst{N}{x}\} \redP P'$, $\prop(P\{\subst{N}{x}\} \redP P')$, and $R_1 \equiv P' \parop Q\{\subst{N}{x}\}$ for some $P'$. We have $P\{\subst{M}{x}\} \redP P'$, $\prop(P\{\subst{M}{x}\} \redP P')$ by induction hypothesis, 
and $R_0 \equiv R_1 \equiv P' \parop Q\{\subst{N}{x}\}$, so we are in Case~\ref{prop2:PAR}.(a) of
$\prop(P\{\subst{M}{x}\}  \parop Q\{\subst{M}{x}\} \redP R_0)$.

\item $Q\{\subst{N}{x}\} \redP Q'$, $\prop(Q\{\subst{N}{x}\} \redP Q')$, and $R_1 \equiv P\{\subst{N}{x}\}  \parop Q'$ for some $Q'$. This case is obtained from the previous one by swapping $P$ and $Q$.

\item $P\{\subst{N}{x}\} \ltrP{\Rcv{N'}{x}} A$, $Q\{\subst{N}{x}\} \ltrP{\Res{x}\Snd{N'}{x}} B$, and $R_1 \equiv \Res{x}(A \parop B)$ for some $A$, $B$, $x$, and ground term $N'$.
Then 
$P\{\subst{M}{x}\} \ltrP{\Rcv{N'}{x}} A$, $Q\{\subst{M}{x}\} \ltrP{\Res{x}\Snd{N'}{x}} B$ by $\brn{Struct}'$, and $R_1 \equiv \Res{x}(A \parop B)$, so $R_0 \equiv R_1 \equiv \Res{x}(A \parop B)$. So we are in Case~\ref{prop2:PAR}.(b) of
$\prop(P\{\subst{M}{x}\}  \parop Q\{\subst{M}{x}\} \redP R_0)$.

\item $P\{\subst{N}{x}\} \ltrP{\Res{x}\Snd{N'}{x}} A$, $Q\{\subst{N}{x}\} \ltrP{\Rcv{N'}{x}} B$, and $R_1 \equiv \Res{x}(A \parop B)$ for some $A$, $B$, $x$, and ground term $N'$.
This case is obtained from the previous one by swapping $P$ and $Q$.
\end{itemize}

\item Case $P_1 = \Res{n}P$. 
We have $\prop(\Res{n}P\{\subst{N}{x}\}  \redP R_1)$, so $P\{\subst{N}{x}\}\redP R_2$, $\prop(P\{\subst{N}{x}\}  \redP R_2)$, and $R_1 \equiv \Res{n}R_2$ for some $R_2$. Hence $P_1 \{\subst{M}{x}\} = \Res{n}P\{\subst{M}{x}\}$, $P\{\subst{M}{x}\} \equivP P\{\subst{N}{x}\}\redP R_2$, $\prop(P\{\subst{M}{x}\}  \redP R_2)$ by induction hypothesis, and $R_0 \equiv R_1 \equiv \Res{n}R_2$, so we are in Case~\ref{prop2:RES} of $\prop(P_1 \{\subst{M}{x}\} \redP R_0)$.

\item Case $P_1 = \Repl{P}$. 
We have $\prop(\Repl{P\{\subst{N}{x}\}}  \redP R_1)$, so
$P\{\subst{N}{x}\} \parop P\{\subst{N}{x}\} \redP R_2$, 
$\prop(P\{\subst{N}{x}\} \parop P\{\subst{N}{x}\} \redP R_2)$, and 
$R_1 \equiv R_2 \parop \Repl{P\{\subst{N}{x}\}}$ for some $R_2$.
Since $\prop(P\{\subst{N}{x}\} \parop P\{\subst{N}{x}\} \redP R_2)$,
we have four cases, which reduce to two by symmetry:
\begin{itemize}
\item $P\{\subst{N}{x}\} \redP P'$, $\prop(P\{\subst{N}{x}\} \redP P')$, and $R_2 \equiv P' \parop P\{\subst{N}{x}\}$ for some $P'$. We have $P\{\subst{M}{x}\} \redP P'$, $\prop(P\{\subst{M}{x}\} \redP P')$ by induction hypothesis, 
so $P\{\subst{M}{x}\} \parop P\{\subst{M}{x}\} \redP P' \parop P\{\subst{M}{x}\}$
and $R_0 \equiv R_1 \equiv R_2 \parop \Repl{P\{\subst{N}{x}\}}
\equiv P' \parop P\{\subst{N}{x}\} \parop \Repl{P\{\subst{N}{x}\}}
\equiv P' \parop P\{\subst{M}{x}\} \parop \Repl{P\{\subst{M}{x}\}}$, so 
we are in Case~\ref{prop2:REPL} of $\prop(\Repl{P\{\subst{M}{x}\}} \redP R_0)$.

\item $P\{\subst{N}{x}\} \ltrP{\Rcv{N'}{x}} A$, $P\{\subst{N}{x}\} \ltrP{\Res{x}\Snd{N'}{x}} B$, and $R_2 \equiv \Res{x}(A \parop B)$ for some $A$, $B$, $x$, and ground term $N'$.
Hence $P\{\subst{M}{x}\} \ltrP{\Rcv{N'}{x}} A$, $P\{\subst{M}{x}\} \ltrP{\Res{x}\Snd{N'}{x}} B$ by $\brn{Struct}'$, and $R_0 \equiv R_1 \equiv R_2 \parop \Repl{P\{\subst{N}{x}\}} \equiv \Res{x}(A \parop B) \parop \Repl{P\{\subst{N}{x}\}}$.
By Lemma~\ref{lem:comp-ltr}, $P\{\subst{M}{x}\} \parop P\{\subst{M}{x}\} \redP R_3$ and $R_3 \equiv \Res{x}(A \parop B) $, so $\prop(P\{\subst{M}{x}\} \parop P\{\subst{M}{x}\} \redP R_3)$, and $R_0 \equiv R_3 \parop \Repl{P\{\subst{M}{x}\}}$, so we are in Case~\ref{prop2:REPL} of $\prop(\Repl{P\{\subst{M}{x}\}} \redP R_0)$.

\end{itemize}

\item Case $P_1 = \IfThenElse{M_1}{N_1}{P}{Q}$. We have $\prop( \kw{if}$ $M_1\{\subst{N}{x}\} = N_1\{\subst{N}{x}\}$ $\kw{then}$ $P\{\subst{N}{x}\}$ $\kw{else}$ $Q\{\subst{N}{x}\} \redP R_1)$, so we have two cases:
\begin{itemize}
\item $\Sigma \vdash M_1\{\subst{N}{x}\} = N_1\{\subst{N}{x}\}$ and $R_1 \equiv P\{\subst{N}{x}\}$. Hence, we have $\Sigma \vdash M_1\{\subst{M}{x}\} = N_1\{\subst{M}{x}\}$ and $R_0 \equiv R_1 \equiv P\{\subst{M}{x}\}$, so we are in Case~\ref{prop2:IF} of $\prop(P_1\{\subst{M}{x}\}\redP R_0)$.
\item $\Sigma \vdash M_1\{\subst{N}{x}\} \neq N_1\{\subst{N}{x}\}$ and $R_1 \equiv Q\{\subst{N}{x}\}$. Hence, we have $\Sigma \vdash M_1\{\subst{M}{x}\} \neq N_1\{\subst{M}{x}\}$ and $R_0 \equiv R_1 \equiv Q\{\subst{M}{x}\}$, so we are in Case~\ref{prop2:IF} of $\prop(P_1\{\subst{M}{x}\}\redP R_0)$.
\end{itemize}

\end{itemize}
Using this result, we obtain $\prop(P_0 \redP R_0)$.

\item Case $P_0 = P \parop Q \equivP P' \parop Q$ knowing $P \equivP P'$.
Since $\prop(P' \parop Q \redP R_1)$, we have four cases:
\begin{itemize}
\item $P' \redP P''$, $\prop(P' \redP P'')$, and $R_1 \equiv P'' \parop Q$ for some $P''$. 
Then $P \redP P''$, $\prop(P \redP P'')$ by induction hypothesis, 
and $R_0 \equiv R_1 \equiv P'' \parop Q$, so we are in Case~\ref{prop2:PAR}.(a) of $\prop(P_0 \redP R_0)$. 
\item $Q \redP Q'$, $\prop(Q \redP Q')$, and $R_1 \equiv P' \parop Q'$ for some $Q'$. 
Then $Q \redP Q'$, $\prop(Q \redP Q')$, and $R_0 \equiv R_1 \equiv P' \parop Q'$, so we are in Case~\ref{prop2:PAR}.(a') of $\prop(P_0 \redP R_0)$. 
\item $P' \ltrP{\Rcv{N}{x}} A$, $Q \ltrP{\Res{x}\Snd{N}{x}} B$, and $R_1 \equiv \Res{x}(A \parop B)$ for some $A$, $B$, $x$, and ground term $N$.
Then $P \ltrP{\Rcv{N}{x}} A$ by $\brn{Struct}'$, $Q \ltrP{\Res{x}\Snd{N}{x}} B$, and $R_0 \equiv R_1 \equiv \Res{x}(A \parop B)$, so we are in Case~\ref{prop2:PAR}.(b) of $\prop(P_0 \redP R_0)$.
\item $P' \ltrP{\Res{x}\Snd{N}{x}} A$, $Q \ltrP{\Rcv{N}{x}} B$, and $R_1 \equiv \Res{x}(A \parop B)$ for some $A$, $B$, $x$, and ground term $N$. This case can be handled similarly to the previous one.
\end{itemize}

\item Case $P_0 = \Res{n}P \equivP \Res{n}P'$ knowing $P \equivP P'$. 
Since $\prop(\Res{n}P' \redP R_1)$, 
$P' \redP R'_1$, $\prop(P' \redP R'_1)$, and $R_1 \equiv \Res{n}R'_1$ for some $R'_1$.
Then, $P \redP R'_1$. By induction hypothesis, $\prop(P \redP R'_1)$. Moreover, $R_0 \equiv R_1 \equiv \Res{n}R'_1$, so we are in Case~\ref{prop2:RES} of $\prop(P_0 \redP R_0)$.

\end{itemize}
\end{itemize}
If $P_0$ is closed and $P_0 \redP R_0$, then by Lemma~\ref{lem:closed-interm}\eqref{prop:closed-interm-redP}, there exists a derivation of $P_0 \redP R_0$ closed on the left. So by applying the previous result, $\prop(P_0 \redP R_0)$, which yields the desired property.
\end{proof}

\begin{lemma}\label{lem:decomp-redpnf}
If $\Res{\vect n}(\sigma \parop P)$ is a closed normal process
and 
$\Res{\vect n}(\sigma \parop P) \redpnf A$, 
then 
$P \redP P'$ and $A \equiv \Res{\vect n}(\sigma \parop P')$ for some $P'$.
\end{lemma}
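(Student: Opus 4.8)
The plan is to mirror the proof of Lemma~\ref{lem:decomp-ltrpnf}, specialised to internal reductions. After invoking Lemma~\ref{lem:closed-interm}\eqref{prop:closed-interm-redpnf} to work with a derivation of $\Res{\vect n}(\sigma \parop P) \redpnf A$ closed on the left (so that every normal process occurring in it is closed), I would unfold the definition of $\redpnf$: since $\redpnf$ is the smallest relation on normal processes closed by $\equivpnf$ and containing the rule ``$P \redP P'$ implies $\Res{\vect n}(\sigma \parop P) \redpnf \Res{\vect n}(\sigma \parop P')$'', every derivation flattens to a core reduction sandwiched by structural equivalence, i.e.\ there are $\vect m,\tau,Q,Q'$ with $\Res{\vect n}(\sigma \parop P) \equivpnf \Res{\vect m}(\tau \parop Q)$, $Q \redP Q'$, and $\Res{\vect m}(\tau \parop Q') \equivpnf A$. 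By Lemma~\ref{lem:struct-pnf-to-std} the last relation gives $A \equiv \Res{\vect m}(\tau \parop Q')$, so it suffices to establish the following sub-claim: if $\Res{\vect n}(\sigma \parop P) \equivpnf \Res{\vect m}(\tau \parop Q)$, both closed normal (so that $P$ and $Q$ are in fact closed plain processes), and $Q \redP Q'$, then $P \redP P'$ and $\Res{\vect n}(\sigma \parop P') \equiv \Res{\vect m}(\tau \parop Q')$ for some $P'$.

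I would prove this sub-claim by induction on the derivation of $\Res{\vect n}(\sigma \parop P) \equivpnf \Res{\vect m}(\tau \parop Q)$, analysing the last rule. In the reflexive case and in the cases $\brn{Plain}''$, $\brn{New-C}''$, $\brn{Rewrite}''$, the two sides differ only by a reordering of $\vect n$, a $\equivP$-step on the plain part, or a $\Sigma$-equality on the substitution; one takes $P' = Q'$, using that $\redP$ is closed by $\equivP$ and that $\brn{New-C}$ and $\brn{Rewrite}$ convert $\Res{\vect m}(\tau \parop Q')$ into $\Res{\vect n}(\sigma \parop Q')$. In the forward direction of $\brn{New-Par}''$ one has $P = \Res{n'}Q$ with $n' \notin \fn(\sigma)$; reducing under the evaluation context $\Res{n'}\hole$ gives $P \redP \Res{n'}Q'$, and $\Res{\vect n}(\sigma \parop \Res{n'}Q') \equiv \Res{\vect n, n'}(\sigma \parop Q') = \Res{\vect m}(\tau \parop Q')$ by $\brn{New-Par}$. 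The transitivity case chains two applications of the induction hypothesis, which is legitimate since a $\equivpnf$ derivation can be taken from identity, transitivity, the base rules, and the reverse of $\brn{New-Par}''$, the remaining base rules being symmetric.

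The one genuinely non-bookkeeping case is the reversed direction of $\brn{New-Par}''$, where $\vect n = \vect m, n'$ with $n' \notin \fn(\sigma)$ and $Q = \Res{n'}P$, so the core reduction $\Res{n'}P \redP Q'$ happens underneath a restriction and has to be pushed back out. Here I would apply Lemma~\ref{lem:decomp-redP} to this reduction of the closed process $\Res{n'}P$: since $\Res{n'}P$ is a restriction, only case~2 of that lemma can apply, yielding $P \redP Q''$ with $Q' \equiv \Res{n'}Q''$; then $P' = Q''$ works because $\Res{\vect n}(\sigma \parop Q'') = \Res{\vect m, n'}(\sigma \parop Q'') \equiv \Res{\vect m}(\sigma \parop \Res{n'}Q'') \equiv \Res{\vect m}(\tau \parop Q')$ using $\brn{New-Par}$ (valid since $n'$ is a name, hence not free in $\sigma$, and $n' \notin \fn(\sigma)$).

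I expect this reversed $\brn{New-Par}''$ case, together with making sure the closedness hypothesis is available wherever Lemma~\ref{lem:decomp-redP} is invoked, to be the only delicate point. The key structural observation underlying it is that in a closed normal process $\Res{\vect n}(\sigma \parop P)$ the plain component $P$ is itself closed: $\fv(P)$ must be contained in $\dom(\sigma)$ by closedness and disjoint from $\dom(\sigma)$ by normality, hence empty. Everything else is the same kind of manipulation of structural equivalence already carried out in the proof of Lemma~\ref{lem:decomp-ltrpnf}.
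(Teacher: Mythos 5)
Your proposal is correct and follows essentially the same route as the paper's proof: restrict to a derivation closed on the left via Lemma~\ref{lem:closed-interm}\eqref{prop:closed-interm-redpnf}, unfold $\redpnf$ into a core $\redP$ step sandwiched by $\equivpnf$, and induct on the derivation of the structural equivalence, with the reversed $\brn{New-Par}''$ case handled exactly as you describe by invoking Lemma~\ref{lem:decomp-redP} to pull the reduction out from under the restriction. The closedness observation you highlight ($\fv(P)=\emptyset$ for the plain part of a closed normal process) is indeed the point that licenses those invocations.
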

\begin{proof}
We proceed similarly to Lemma~\ref{lem:decomp-ltrpnf}.
By Lemma~\ref{lem:closed-interm}\eqref{prop:closed-interm-redpnf}, we consider a derivation of $\Res{\vect n}(\sigma \parop P) \redpnf A$ closed on the left.
By definition of $\redpnf$, we have $\Res{\vect n}(\sigma \parop P) \equivpnf \Res{\vect n'}(\sigma' \parop P')$, $P' \redP Q'$ and $A \equivpnf \Res{\vect n'}(\sigma' \parop Q')$ for some $\vect n'$, $\sigma'$, $P'$, $Q'$.
We proceed by induction on the derivation of $\Res{\vect n}(\sigma \parop P) \equivpnf \Res{\vect n'}(\sigma' \parop P')$.
\begin{itemize}

\item Base case: $\Res{\vect n}(\sigma \parop P) = \Res{\vect n'}(\sigma' \parop P')$, and the desired result holds.

\item Transitivity: the result is proved by applying the induction hypothesis twice.

\item Case $\brn{Plain}''$: $P \equivP P' \redP Q'$ and $A \equiv \Res{\vect n}(\sigma \parop Q')$, so the result holds.

\item Case $\brn{New-C}''$: $\vect n'$ is a reordering of $\vect n$, 
$\Res{\vect n}(\sigma \parop P) \equivpnf \Res{\vect n'}(\sigma \parop P)$, $P \redP Q'$ and $A \equiv \Res{\vect n'}(\sigma \parop Q')$, 
so $P \redP Q'$ and $A \equiv \Res{\vect n'}(\sigma \parop Q') \equiv \Res{\vect n}(\sigma \parop Q')$, hence the result holds.

\item Case $\brn{New-Par}''$: 
$P = \Res{n'}P'$, $\Res{\vect n}(\sigma \parop \Res{n'}P') \equivpnf \Res{\vect n, n'}(\sigma \parop P')$, $P' \redP Q'$, and $A \equiv \Res{\vect n,n'}(\sigma \parop Q')$ where $n' \notin \fn(\sigma)$.
Therefore, $P = \Res{n'}P' \redP \Res{n'}Q'$
and by $\brn{New-Par}$, $A \equiv \Res{\vect n,n'}(\sigma \parop Q') \equiv
\Res{\vect n}(\sigma \parop \Res{n'}Q')$, hence the result holds.

\item Case $\brn{New-Par}''$ reversed: 
$\Res{\vect n, n'}(\sigma \parop P) \equivpnf \Res{\vect n}(\sigma \parop \Res{n'}P)$, $\Res{n'}P \redP Q'$ and $A \equiv \Res{\vect n}(\sigma \parop Q')$ where $n' \notin \fn(\sigma)$.
By Lemma~\ref{lem:decomp-redP}, $P \redP Q''$ and $Q' \equiv \Res{n'}Q''$
for some $Q''$.
Hence, $P \redP Q''$ and $A \equiv \Res{\vect n}(\sigma \parop Q') 
\equiv \Res{\vect n}(\sigma \parop \Res{n'}Q'') 
\equiv \Res{\vect n, n'}(\sigma \parop Q'')$ by $\brn{New-Par}$, 
so the result holds.

\item Case $\brn{Rewrite}''$:
$\Res{\vect n}(\sigma \parop P) \equivpnf \Res{\vect n}(\sigma' \parop P)$, $P \redP Q$ and $A \equiv \Res{\vect n}(\sigma' \parop Q)$
where $\dom(\sigma) = \dom(\sigma')$, $\Sigma \vdash \sigma x = \sigma' x$ for all $x \in \dom(\sigma)$, and $(\fv(\sigma x) \cup \fv(\sigma' x)) \cap \dom(\sigma) = \emptyset$ for all $x \in \dom(\sigma)$.
Hence $P \redP Q$ and $A \equiv \Res{\vect n}(\sigma' \parop Q)
\equiv \Res{\vect n}(\sigma \parop Q)$ by several applications of $\brn{Rewrite}$, so the result holds.
\qedhere
\end{itemize}
\end{proof}

We prove the following strengthened version of Lemma~\ref{lem:decomp-redpnf}, 
in which the process $P'$ is guaranteed to be closed.

\begin{lemma}\label{lem:decomp-redpnf-closed}
If $\Res{\vect n}(\sigma \parop P)$ is a closed normal process
and 
$\Res{\vect n}(\sigma \parop P) \redpnf A$, 
then 
$P \redP P'$ and $A \equiv \Res{\vect n}(\sigma \parop P')$ for some closed process $P'$.
\end{lemma}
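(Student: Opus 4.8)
The plan is to first apply Lemma~\ref{lem:decomp-redpnf} and then use the closing construction of Lemma~\ref{lem:closing} to turn the residual plain process into a closed one. Applying Lemma~\ref{lem:decomp-redpnf} to the hypothesis $\Res{\vect n}(\sigma \parop P) \redpnf A$ yields a (not necessarily closed) plain process $P_1$ with $P \redP P_1$ and $A \equiv \Res{\vect n}(\sigma \parop P_1)$. It then remains to replace $P_1$ by a closed process while keeping these two facts.

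The first point I would establish is that $P$ itself is already closed. Writing $\sigma = \{\subst{\vect M}{\vect x}\}$, the normal-form condition gives $(\fv(P) \cup \fv(\vect M)) \cap \{\vect x\} = \emptyset$; moreover $\dom(\Res{\vect n}(\sigma \parop P)) = \{\vect x\}$ (all the restrictions $\vect n$ of a normal process are names) and $\fv(\Res{\vect n}(\sigma \parop P)) = \fv(\vect M) \cup \{\vect x\} \cup \fv(P)$. Since $\Res{\vect n}(\sigma \parop P)$ is closed, these two sets coincide, which forces $\fv(P) \subseteq \{\vect x\}$, hence $\fv(P) = \emptyset$ by the normal-form condition.

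Now take $Y \eqdef \fv(P_1) = \fv(P) \cup \fv(P_1)$ and let $\theta$ be a substitution from $Y$ to pairwise distinct fresh names, chosen not to occur in $\vect n$, $\sigma$, or $P$. Since $P$ is closed, $P\theta = P$, so $\Sigma \vdash P = P\theta$ holds trivially; hence Lemma~\ref{lem:closing}\eqref{closing:redP} applies to $P \redP P_1$ and gives $\Sigma \vdash P_1 = P_1\theta$ together with a closed derivation of $P = P\theta \redP P_1\theta$. In particular $P_1\theta$ is closed and $P \redP P_1\theta$. Finally, from $\Sigma \vdash P_1 = P_1\theta$ we obtain $P_1 \equivP P_1\theta$ by repeated applications of $\brn{Rewrite}'$, hence $P_1 \equiv P_1\theta$ by Lemma~\ref{lem:struct-pnf-to-std}; closure of $\equiv$ under evaluation contexts then gives $\Res{\vect n}(\sigma \parop P_1) \equiv \Res{\vect n}(\sigma \parop P_1\theta)$, so $A \equiv \Res{\vect n}(\sigma \parop P_1\theta)$. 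Taking $P' \eqdef P_1\theta$ finishes the proof.

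I do not anticipate a genuine obstacle: the argument is essentially bookkeeping. The two things worth flagging are the observation that a closed normal process already has a closed plain-process component (so the side condition $\Sigma \vdash P = P\theta$ of Lemma~\ref{lem:closing}\eqref{closing:redP} is vacuous), and the upgrade of the equational identity $\Sigma \vdash P_1 = P_1\theta$ to a genuine structural equivalence via $\brn{Rewrite}'$. Some care is needed only in choosing $\theta$ fresh with respect to all the names in play and in invoking context-closure of $\equiv$.
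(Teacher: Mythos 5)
Your proof is correct and follows essentially the same route as the paper's: apply Lemma~\ref{lem:decomp-redpnf}, then use Lemma~\ref{lem:closing}\eqref{closing:redP} with a substitution of fresh names for the residual's free variables, noting that $P$ itself is closed so the hypothesis $\Sigma \vdash P = P\theta$ holds trivially. The only difference is that you spell out two steps the paper leaves implicit (why $P$ is closed, and how $\Sigma \vdash P_1 = P_1\theta$ upgrades to $\equiv$ via $\brn{Rewrite}'$), both of which are handled correctly.
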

\begin{proof}
By Lemma~\ref{lem:decomp-redpnf}, we get the existence of a process $P'$,
which may not be closed.
Let us apply Lemma~\ref{lem:closing}\eqref{closing:redP}.
Let $Y = \fv(P) \cup \fv(P') = \fv(P')$.
Let $\sigma'$ be a substitution from $Y$ to pairwise distinct fresh names.
Since $P$ is closed, $P = P\sigma'$, so a fortiori
$\Sigma \vdash P = P\sigma'$.
Hence $P\sigma' \redP P'\sigma'$
and $\Sigma \vdash P' = P'\sigma'$.
So $P \redP P'\sigma'$ and $A \equiv  \Res{\vect n}(\sigma \parop P')
\equiv \Res{\vect n}(\sigma \parop P'\sigma')$, so we get
the desired result by using the closed process $P'\sigma'$ instead of $P'$.
\end{proof}

The following strengthened version of Lemma~\ref{lem:decomp-redP}
is proved in a similar way.

\begin{lemma}\label{lem:decomp-redP-closed}
Suppose that $P_0$ is a closed process and $P_0 \redP R$. Then 
one of the following cases holds:
\begin{enumerate}

\item $P_0 = P \parop Q$ for some $P$ and $Q$, and one of the following cases holds:
\begin{enumerate}
\item $P \redP P'$ and $R \equiv P' \parop Q$ for some closed process $P'$,
\item $P \ltrP{\Rcv{N}{x}} A$, $Q \ltrP{\Res{x}\Snd{N}{x}} B$, and $R \equiv \Res{x}(A \parop B)$ for some $A$, $B$, $x$, and ground term $N$, 
\end{enumerate}
and two symmetric cases obtained by swapping $P$ and $Q$;

\item $P_0 = \Res{n}P$, $P \redP Q'$, and $R \equiv \Res{n} Q'$ for some $n$ and
some closed processes $P$ and $Q'$;

\item $P_0 = \Repl{P}$, $P \parop P \redP Q'$, and $R \equiv Q' \parop \Repl{P}$ for some closed processes $P$ and $Q'$.

\item $P_0 = \IfThenElse{M}{N}{P}{Q}$ and either $\Sigma \vdash M = N$ and $R \equiv P$, or $\Sigma \vdash M \neq N$ and $R \equiv Q$, for some $M$, $N$, $P$, and $Q$.
\end{enumerate}
\end{lemma}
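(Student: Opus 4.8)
The plan is to obtain this strengthened lemma directly from Lemma~\ref{lem:decomp-redP} by closing up the reducts it produces, using Lemma~\ref{lem:closing}\eqref{closing:redP} in exactly the way Lemma~\ref{lem:decomp-redpnf-closed} is obtained from Lemma~\ref{lem:decomp-redpnf}. First I would apply Lemma~\ref{lem:decomp-redP} to $P_0 \redP R$, obtaining one of its four cases (with the symmetric variants). In the cases where the statement of Lemma~\ref{lem:decomp-redP} already coincides with the statement claimed here — namely the subcase $P \ltrP{\Rcv{N}{x}} A$, $Q \ltrP{\Res{x}\Snd{N}{x}} B$ with $R \equiv \Res{x}(A \parop B)$ (and its symmetric variant), and the conditional case $P_0 = \IfThenElse{M}{N}{P}{Q}$ — there is nothing further to prove. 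So the only work lies in the three cases in which a reduct $P'$ or $Q'$ is produced that might fail to be closed: $P_0 = P \parop Q$ with $P \redP P'$ and $R \equiv P' \parop Q$ (and its symmetric variant); $P_0 = \Res{n}P$ with $P \redP Q'$ and $R \equiv \Res{n}Q'$; and $P_0 = \Repl{P}$ with $P \parop P \redP Q'$ and $R \equiv Q' \parop \Repl{P}$.

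In each of these three cases the \emph{source} of the reduction — $P$ in the first two, $P \parop P$ in the third — is closed, because $P_0$ is closed and being closed is preserved when one decomposes a parallel composition, a restriction, or a replication; only reducts may fail to be closed, since $\equivP$ can introduce fresh unused free variables (e.g.\ via $\brn{Rewrite}'$). Taking the first case as representative: since $P$ is closed we have $P = P\sigma$ for any substitution $\sigma$, so a fortiori $\Sigma \vdash P = P\sigma$; choosing $\sigma$ to map $\fv(P')$ to pairwise distinct fresh names and applying Lemma~\ref{lem:closing}\eqref{closing:redP} gives $\Sigma \vdash P' = P'\sigma$ and $P = P\sigma \redP P'\sigma$, where $P'\sigma$ is closed because all its free variables lie in the domain of $\sigma$. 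From $\Sigma \vdash P' = P'\sigma$ we get $P' \equivP P'\sigma$, hence $R \equiv P' \parop Q \equiv P'\sigma \parop Q$, so the first case holds with the closed process $P'\sigma$ in place of $P'$. The $\Res{n}P$ and $\Repl{P}$ cases are handled identically, closing $Q'$ with a substitution whose fresh names are also chosen to avoid $n$ in the restriction case; note that in the $\Repl{P}$ case the closed source is $P \parop P$, and closedness of $P$ itself follows from closedness of $\Repl{P}$, matching the claim that $P$ and $Q'$ are closed.

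I do not expect a genuine obstacle here: this is a routine transfer-of-closedness argument, and every ingredient is already in place — Lemma~\ref{lem:decomp-redP} for the case analysis, Lemma~\ref{lem:closing}\eqref{closing:redP} for replacing a reduct by a $\equivP$-equivalent closed process, and the remark (recorded just before this lemma in the text) that $\Sigma$-equal normal processes are $\equivP$-related. The one point that needs a little care is simply to observe, in each of the three relevant cases, that the process being reduced is in fact closed — which is immediate from the closedness of $P_0$ and the shape of $P_0$ in that case — so that Lemma~\ref{lem:closing}\eqref{closing:redP} applies with the hypothesis $\Sigma \vdash P = P\sigma$ satisfied trivially.
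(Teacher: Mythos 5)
Your proposal is correct and follows essentially the same route as the paper, which derives this lemma from Lemma~\ref{lem:decomp-redP} "in a similar way" to how Lemma~\ref{lem:decomp-redpnf-closed} is obtained from Lemma~\ref{lem:decomp-redpnf}, i.e.\ by applying Lemma~\ref{lem:closing}\eqref{closing:redP} with a substitution sending the reduct's free variables to fresh names and replacing the reduct by its $\Sigma$-equal (hence $\equivP$-equivalent) closed instance. Your observation that the source of the reduction is closed in each relevant case, so that the hypothesis $\Sigma \vdash P = P\sigma$ holds trivially, is exactly the point the paper relies on.
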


\section{Proof of Theorem~\ref{THM:OBSERVATIONAL-LABELED}: Main Lemmas}\label{app:bigpfmain}

Relying on partial normal forms and their semantics, we prove the
remaining lemmas needed for the proof of
Theorem~\ref{THM:OBSERVATIONAL-LABELED}.  
Sections~\ref{app:onedirection} and~\ref{app:twodirection}
establish the two directions of Theorem~\ref{THM:OBSERVATIONAL-LABELED}. 
The argument for the first direction employs lemmas
about consequences of static equivalences; these lemmas are in 
Section~\ref{app:exploiting}.

\subsection{Exploiting Static Equivalence}\label{app:exploiting}

The lemmas in this section rely on static equivalences in order to analyze and to establish
structural equivalences or reductions. 
For all these lemmas, we consider the action of two equivalent frames
$\Res{\vect n}\sigma \enveq \Res{\vect n'}\sigma'$ on a process $P'$ such that 
$\fn(P') \cap \{ \vect n, \vect n'\} = \emptyset$: we suppose a structural equivalence or reduction of a process $P$ such that $\Sigma \vdash P' \sigma = P$, and prove a corresponding structural equivalence or reduction of the process $P'\sigma'$.  
Lemma~\ref{lem:equiv-enveq} deals with structural equivalence, Lemma~\ref{lem:red-enveq} with internal reduction, and Lemma~\ref{lem:ltr-enveq} with labelled transitions.

\begin{lemma}\label{lem:equiv-enveq}
Suppose that $\Res{\vect n}\sigma \enveq \Res{\vect n'}\sigma'$, 
 $\fn(P') \cap \{ \vect n, \vect n'\} = \emptyset$, and
 $\Sigma \vdash P' \sigma = P$.
 If $P \equivP Q$, then 
$P' \sigma' \equivP Q' \sigma'$ for some $Q'$ such that 
$\fn(Q') \cap \{ \vect n, \vect n'\} = \emptyset$;  
$\Sigma \vdash Q = Q' \sigma$; and,
(*) if $\sigma$, $\sigma'$, and $P'\sigma$ are closed,
then $Q'\sigma$ is closed.
\end{lemma}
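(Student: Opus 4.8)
\textbf{Proof proposal for Lemma~\ref{lem:equiv-enveq}.}

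The plan is to proceed by induction on the derivation of $P \equivP Q$, building the ``primed'' process $Q'$ alongside the derivation. The key observation that makes the induction go through is that, at each step, the process $P$ is (up to the equational theory) of the form $P'\sigma$ for a process $P'$ whose free names avoid $\vect n$ and $\vect n'$; we want to track a corresponding $P'$-level description of both sides of the structural equivalence, so that the same derivation can be replayed with $\sigma'$ in place of $\sigma$. First I would treat the base cases corresponding to the axioms of $\equivP$. Most of these are purely structural (\brn{Par-\mbox{$\nil$}}$'$, \brn{Par-A}$'$, \brn{Par-C}$'$, \brn{Repl}$'$, \brn{New-\mbox{$\nil$}}$'$, \brn{New-C}$'$, \brn{New-Par}$'$): here $Q$ is obtained from $P$ by a rearrangement that does not touch the term structure, so one simply applies the same rearrangement to $P'$ to get $Q'$, notes that $\Sigma \vdash Q = Q'\sigma$ because applying $\sigma$ commutes with these structural rearrangements, and that $\fn(Q') \subseteq \fn(P')$ in all these cases (so the side condition on $\vect n, \vect n'$ is preserved, and for \brn{New-Par}$'$ the side condition $n \notin \fn(P')$ transfers). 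The closedness claim~(*) is immediate since $Q'$ is built from the same pieces as $P'$.

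The interesting base case is \brn{Rewrite}$'$: $P = P_1\{\subst{M}{x}\}$ and $Q = P_1\{\subst{N}{x}\}$ with $\Sigma \vdash M = N$. Here I would use the hypothesis $\Sigma \vdash P'\sigma = P$ together with static equivalence. The point is that $P$ already arises as $P'\sigma$ up to the equational theory, so $P' \sigma \equivP P' \sigma$ trivially (reflexivity), and we can simply take $Q' = P'$: then $\Sigma \vdash Q = P \;(\text{since }\Sigma\vdash M=N\text{ and }\equiv\text{-congruence}) = P'\sigma = Q'\sigma$. Actually one must be slightly more careful — the rewrite may act inside a subterm that $\sigma$ contributed, or inside the part coming from $P'$; in either case the resulting term is equal modulo $\Sigma$ to $P'\sigma$, so $Q' = P'$ works and the static equivalence $\Res{\vect n}\sigma \enveq \Res{\vect n'}\sigma'$ is not even needed for this particular axiom. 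The side conditions and~(*) are trivial since $Q' = P'$.

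For the inductive cases — transitivity, symmetry, and closure under evaluation contexts — I would proceed as follows. For symmetry, swap the roles and invoke the statement in the other direction (the statement is symmetric in $P \leftrightarrow Q$ once we also swap the conclusion's roles, using that static equivalence is symmetric). For transitivity $P \equivP Q \equivP R$: apply the induction hypothesis to $P \equivP Q$ to get $Q'$ with $\Sigma \vdash Q = Q'\sigma$ and $\fn(Q') \cap \{\vect n,\vect n'\} = \emptyset$ and closedness of $Q'\sigma$ when applicable; then apply the induction hypothesis to $Q \equivP R$ with $Q'$ playing the role of $P'$ (its hypotheses $\fn(Q')\cap\{\vect n,\vect n'\}=\emptyset$ and $\Sigma\vdash Q'\sigma = Q$ are exactly what we just established, reading the equation backwards), obtaining $R'$ with the required properties; chain $P'\sigma' \equivP Q'\sigma' \equivP R'\sigma'$. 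For context closure, the context is itself built from processes and restrictions; descend into the hole, applying the induction hypothesis to the sub-equivalence, and reattach the (unchanged, name-disjoint) surrounding context to both sides — here one uses that applying $\sigma'$ commutes with plugging into an evaluation context whose free names avoid $\vect n'$, and that $\fn$ of the reconstructed process is controlled. The main obstacle I anticipate is the bookkeeping in the \brn{New-Par}$'$ and context-closure cases: one must be careful that the fresh names $\vect n, \vect n'$ never get captured by a restriction introduced by the structural equivalence, which is ensured by the standing convention that bound names are renamed apart from free ones, but needs to be invoked explicitly; and in \brn{Rewrite}$'$ one must be careful about whether the rewritten occurrence lies in the $P'$-part or in $\sigma$, though as noted taking $Q' = P'$ sidesteps the issue. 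Overall this is a routine but somewhat tedious structural induction; the content is entirely in setting up the right inductive invariant (tracking $P'$ and the equation $\Sigma \vdash P'\sigma = P$), which is already baked into the statement.
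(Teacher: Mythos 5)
Your plan coincides with the paper's proof in its essentials: induction on the derivation of $P \equivP Q$, taking $Q' = P'$ in the $\brn{Rewrite}'$ case (where indeed $\Sigma \vdash P'\sigma = P$ and $\Sigma \vdash P = Q$, so $Q'=P'$ works and static equivalence is not needed), replaying the purely structural axioms on $P'$ and concluding via Lemma~\ref{lem:instance}, and chaining the induction hypothesis for transitivity. Two points in your write-up need repair, though neither changes the route. First, your treatment of symmetry is circular: to apply the induction hypothesis to the sub-derivation of $Q \equivP P$ you would need, as input, a process $R$ with $\Sigma \vdash R\sigma = Q$ --- which is precisely the witness you are trying to construct. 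The standard fix, and what the paper does (explicitly in the analogous Lemma~\ref{lem:p-notin-fnA-preserved}), is to take as base cases each axiom applied \emph{in both directions} under an evaluation context, and to use the induction hypothesis only for transitivity; each axiom read right-to-left can be replayed on $P'$ just as well. Second, you declare property~(*) ``immediate,'' whereas the paper does not carry it through the induction at all: it first proves the lemma without~(*), then patches the resulting witness by applying Lemma~\ref{lem:closing} with a substitution $\sigma''$ sending the free variables of $Q'\sigma$ to fresh names and replacing $Q'$ by $Q'\sigma''$. Your claim is plausible case by case, but making it precise amounts to strengthening the inductive invariant and checking it in every case (including transitivity and context closure, which your sketch does not address for~(*)); the renaming patch is the cleaner way out if you prefer not to do that bookkeeping.
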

\begin{proof}
  We first prove the lemma without property (*),
  by induction on the derivation of $P \equivP Q$. 
The only rule that depends on terms is $\brn{Rewrite}'$, and when
$P \equivP Q$ by $\brn{Rewrite}'$,
 $\Sigma \vdash P' \sigma = P = Q$, so taking $Q' = P'$, we have $P' \sigma' \equivP Q' \sigma'$, $\fn(Q') \cap \{ \vect n, \vect n'\} = \emptyset$, and $\Sigma \vdash Q = Q' \sigma$.
For all other base cases, the structural equivalence rule applied in $P \equivP Q$ also applies to $P'$ and yields a process $Q'$ such that $P' \equivP Q'$, $\fn(Q') \cap \{ \vect n, \vect n'\} = \emptyset$, and $\Sigma \vdash Q = Q' \sigma$;
by Lemma~\ref{lem:instance}\eqref{instance:equiv} we conclude $P' \sigma' \equivP Q' \sigma'$.
The case of transitivity is proved by applying the induction hypothesis twice.

We now prove the lemma with property~(*) by
applying Lemma~\ref{lem:closing}\eqref{closing:equivP} to the structural equivalence $P' \sigma' \equivP Q' \sigma'$
for the process $Q'$ obtained above. 
Let $Y = \fv(P'\sigma) \cup \fv(Q'\sigma) = \fv(Q'\sigma) = \fv(Q')\setminus \dom(\sigma)$ and let $\sigma''$ map $Y$ to pairwise distinct fresh names.
We have $P'\sigma\sigma'' = P'\sigma$ so a fortiori
$\Sigma \vdash P'\sigma = P'\sigma\sigma''$, then
by Lemma~\ref{lem:closing}\eqref{closing:equivP}
$P'\sigma\sigma'' \equivP Q'\sigma\sigma''$
and $\Sigma \vdash Q'\sigma = Q'\sigma\sigma''$.
So $\Sigma \vdash Q = Q'\sigma = (Q'\sigma'')\sigma$.
Since $P' \sigma' \equivP Q' \sigma'$, we have
$P'\sigma' = P'\sigma'\sigma'' \equivP Q'\sigma'\sigma'' = (Q'\sigma'')\sigma'$.
Since $\fn(Q') \cap \{ \vect n, \vect n'\} = \emptyset$, we have 
$\fn(Q'\sigma'') \cap \{ \vect n, \vect n'\} = \emptyset$.
We also have $\fv(Q'\sigma'') \subseteq \dom(\sigma) = \dom(\sigma')$,
so we get the desired result by using $Q'\sigma''$ instead of $Q'$.
\qedhere
\end{proof}

\begin{lemma}\label{lem:red-enveq}
Suppose that
  $\Res{\vect n}\sigma \enveq \Res{\vect n'}\sigma'$, 
  $\fn(P') \cap \{ \vect n, \vect n'\} = \emptyset$, and
  $\Sigma \vdash P' \sigma = P$.
If  $P \redP Q$, then $P' \sigma' \redP Q' \sigma'$ for some $Q'$
  such that $\fn(Q') \cap \{ \vect n, \vect n'\} = \emptyset$; 
  $\Sigma \vdash Q = Q' \sigma$; and, 
(*) if $\sigma$, $\sigma'$, and $P'\sigma$ are closed, then $Q'\sigma$ is closed.
\end{lemma}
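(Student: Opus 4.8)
\textbf{Proof plan for Lemma~\ref{lem:red-enveq}.}
The plan is to mirror the structure of the proof of Lemma~\ref{lem:equiv-enveq}: first establish the statement \emph{without} property~(*) by induction on the derivation of $P \redP Q$, using Lemma~\ref{lem:equiv-enveq} to handle the cases where $\equivP$ is applied, and then recover property~(*) by a uniform closing argument via Lemma~\ref{lem:closing}\eqref{closing:redP}. So first I would fix the hypotheses $\Res{\vect n}\sigma \enveq \Res{\vect n'}\sigma'$, $\fn(P') \cap \{\vect n,\vect n'\} = \emptyset$, and $\Sigma \vdash P'\sigma = P$, and proceed by case analysis on the last rule used to derive $P \redP Q$.

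For the base cases, I would argue as follows. For $\brn{Then}'$, $P = \IfThenElse{M}{M}{P_1}{P_2}$ and $Q = P_1$. Since $\Sigma \vdash P'\sigma = P$, the process $P'$ must (after pushing the equational rewriting through) have the form $\IfThenElse{M'_1}{M'_2}{P'_1}{P'_2}$ with $\Sigma \vdash M'_1\sigma = M = M'_2\sigma$ and $\Sigma \vdash P'_i\sigma = P_i$. By static equivalence $(M'_1 = M'_2)\Res{\vect n}\sigma$, hence $(M'_1 = M'_2)\Res{\vect n'}\sigma'$, that is $\Sigma \vdash M'_1\sigma' = M'_2\sigma'$, so $P'\sigma' \redP P'_1\sigma'$ by $\brn{Then}'$; take $Q' = P'_1$. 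For $\brn{Else}'$, $P = \IfThenElse{M}{N}{P_1}{P_2}$ with $\Sigma \not\vdash M = N$ and $Q = P_2$; here I use the \emph{converse} direction of static equivalence: from $\Sigma \not\vdash M'_1\sigma = M'_2\sigma$ and $\Res{\vect n}\sigma \enveq \Res{\vect n'}\sigma'$ we get $\Sigma \not\vdash M'_1\sigma' = M'_2\sigma'$ (note that $M'_1\sigma, M'_2\sigma$ being ground forces the frame-representatives to avoid their free names, which one can arrange by $\alpha$-renaming $\vect n,\vect n'$), so $P'\sigma' \redP P'_2\sigma'$ by $\brn{Else}'$; take $Q' = P'_2$. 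For $\brn{Comm}'$, $P = \Snd{N}{M}.P_1 \parop \Rcv{N}{x}.P_2 \redP P_1 \parop P_2\{\subst{M}{x}\}$; the channel equality is purely syntactic so $P'$ has a matching shape $\Snd{N'_1}{M'}.P'_1 \parop \Rcv{N'_2}{x}.P'_2$ with $\Sigma \vdash N'_1\sigma = N = N'_2\sigma$, whence by static equivalence $\Sigma \vdash N'_1\sigma' = N'_2\sigma'$ and, applying $\brn{Rewrite}'$ inside the process to align the two channel terms, $P'\sigma' \redP (P'_1 \parop P'_2\{\subst{M'}{x}\})\sigma'$; take $Q' = P'_1 \parop P'_2\{\subst{M'}{x}\}$ (choosing $x$ fresh for $\vect n,\vect n'$ so the side condition $\fn(Q') \cap \{\vect n,\vect n'\} = \emptyset$ holds, using $\fn(M') \cap \{\vect n,\vect n'\} = \emptyset$). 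The inductive cases are the evaluation-context closure of $\redP$ (parallel composition, name restriction), handled by applying the induction hypothesis to the subprocess and reassembling—here I would note that $\fn(P')\cap\{\vect n,\vect n'\}=\emptyset$ is inherited by subcontexts after suitable $\alpha$-renaming of bound names—and the case where a $\equivP$ step is used, where I invoke Lemma~\ref{lem:equiv-enveq} for the structural-equivalence parts and the induction hypothesis for the reduction part, composing the resulting $\equivP$ and $\redP$ steps into a $\redP$ derivation closed under $\equivP$.

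Finally, to obtain property~(*), I would take the $Q'$ produced above and apply Lemma~\ref{lem:closing}\eqref{closing:redP} exactly as in the closing argument at the end of the proof of Lemma~\ref{lem:equiv-enveq}: set $Y = \fv(P'\sigma) \cup \fv(Q'\sigma)$, which under the hypothesis that $P'\sigma$ is closed equals $\fv(Q'\sigma) = \fv(Q') \setminus \dom(\sigma)$, let $\sigma''$ map $Y$ to pairwise distinct fresh names, observe $\Sigma \vdash P'\sigma = P'\sigma\sigma''$ since $P'\sigma$ is closed, deduce $P'\sigma\sigma'' \redP Q'\sigma\sigma''$ and $\Sigma \vdash Q'\sigma = Q'\sigma\sigma''$, and then replace $Q'$ by $Q'\sigma''$: this is still closed when composed with $\sigma$, still satisfies $\fn(Q'\sigma'') \cap \{\vect n,\vect n'\} = \emptyset$ and $\fv(Q'\sigma'') \subseteq \dom(\sigma) = \dom(\sigma')$, still gives $\Sigma \vdash Q = (Q'\sigma'')\sigma$, and still yields $P'\sigma' \redP (Q'\sigma'')\sigma'$ since $P'\sigma' = P'\sigma'\sigma''$.

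The main obstacle I anticipate is the $\brn{Else}'$ case: reduction by $\brn{Else}'$ depends on a \emph{negative} fact about the equational theory ($\Sigma \not\vdash M = N$), and to transfer it across the static equivalence I must be careful that the terms involved are genuinely ground when evaluated under the frame and that the chosen frame representatives $\Res{\vect n}\sigma$, $\Res{\vect n'}\sigma'$ do not capture any names of $M'_1, M'_2$. This requires a small lemma-style observation (available in spirit from Definition~\ref{def:eqframe} and the surrounding discussion about representative-independence, Lemma~\ref{lem:equivframes}) that we may always $\alpha$-rename $\vect n$ and $\vect n'$ away from $\fn(M'_1)\cup\fn(M'_2)$ before invoking static equivalence, after which $(M'_1 = M'_2)\Res{\vect n}\sigma \iff \Sigma \vdash M'_1\sigma = M'_2\sigma$ and likewise for $\sigma'$, so that static equivalence gives the biconditional and its negation transfers directly. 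A secondary but routine nuisance is bookkeeping the freshness of the bound variable $x$ in the $\brn{Comm}'$ case and of restricted names in the context cases, all of which is handled by $\alpha$-renaming.
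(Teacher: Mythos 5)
Your proposal follows essentially the same route as the paper's proof: an induction on the derivation of $P \redP Q$ without property~(*), using static equivalence to transfer the (in)equalities in the $\brn{Comm}'$, $\brn{Then}'$, and $\brn{Else}'$ cases, Lemma~\ref{lem:equiv-enveq} for the $\equivP$ steps, and then the closing argument via Lemma~\ref{lem:closing}\eqref{closing:redP} with a substitution $\sigma''$ of fresh names to recover~(*). The details you flag (renaming $x$ and the restricted names away from $\vect n,\vect n'$, aligning the two channel terms by $\brn{Rewrite}'$ before $\brn{Comm}'$, and the negative transfer in $\brn{Else}'$) are exactly the ones the paper handles, so the proposal is correct.
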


\begin{proof} 
  We first prove the lemma without property (*),
by induction on the derivation of $P \redP Q$. 
\begin{itemize}
\item Case $\brn{Comm}'$. We have $P = \Snd{N}{M}.P_0 \parop \Rcv{N}{x}.Q_0 \redP P_0 \parop Q_0\{\subst{M}{x}\} = Q$. We rename $x$ so that $x \notin \dom(\sigma')$. Therefore, $P' = \Snd{N'}{M'}.P'_0 \parop \Rcv{N''}{x}.Q'_0$ 
for some $N'$, $M'$, $P'_0$, $N''$, $Q'_0$ such that
$\Sigma \vdash N'\sigma = N''\sigma = N$, 
$\Sigma \vdash M'\sigma = M$, $\Sigma \vdash P_0' \sigma = P_0$, and
$\Sigma \vdash Q_0'\sigma = Q_0$.
Let $Q' = P'_0 \parop Q'_0\{\subst{M'}{x}\}$.
We have 
\begin{align*}
P' \sigma' &= \Snd{N' \sigma'}{M'\sigma'}.P'_0\sigma' \parop \Rcv{N''\sigma'}{x}.Q'_0\sigma' \\
&\equiv \Snd{N' \sigma'}{M'\sigma'}.P'_0\sigma' \parop \Rcv{N'\sigma'}{x}.Q'_0\sigma' \\
&\redP P'_0\sigma' \parop Q'_0\sigma'\{\subst{M'\sigma'}{x}\} = Q'\sigma'
\end{align*}
since $\Sigma \vdash N' \sigma' = N'' \sigma'$ because $\Sigma \vdash N'\sigma = N''\sigma$, $(\fn(N') \cup \fn(N'')) \cap \{\vect n, \vect n'\} = \emptyset$, and $\Res{\vect n}\sigma \enveq \Res{\vect n'}\sigma'$.
Moreover, $\fn(Q') \cap \{ \vect n, \vect n'\} = \emptyset$, and $\Sigma \vdash Q = P_0 \parop Q_0\{\subst{M}{x}\} = P_0'\sigma \parop Q_0'\sigma\{\subst{M'\sigma}{x}\} = Q' \sigma$.

\item The cases $\brn{Then}'$ and $\brn{Else}'$ are similar: the equalities that trigger reductions happen both in $P$ and in $P' \sigma$.
\item The case in which we apply $\equivP$ holds by Lemma~\ref{lem:equiv-enveq} and induction hypothesis.
\item Case in which we apply a context. The reduction $P = \CTX[P_0]\redP Q = \CTX[Q_0]$
is derived from $P_0 \redP Q_0$. 
If $E$ contains a restriction $\Res{n}$ above the hole, we rename $n$ so that
$n \notin \{\vect n, \vect n'\}$.
Hence $P' = \CTX'[P'_0]$ with $\Sigma \vdash \CTX' \sigma = \CTX$,
$\Sigma \vdash P'_0 \sigma = P_0$, and
$\fn(P'_0) \cap \{\vect n,\vect n'\} = \emptyset$. By induction hypothesis, $P'_0 \sigma' \redP Q'_0 \sigma'$, $\fn(Q'_0) \cap \{ \vect n, \vect n'\} = \emptyset$, and $\Sigma \vdash Q_0 = Q'_0 \sigma$ for some $Q'_0$. Let $Q' = \CTX'[Q'_0]$.
Then $P' \sigma' = \CTX'\sigma'[P'_0\sigma']\redP \CTX'\sigma'[Q'_0\sigma'] = Q' \sigma'$, $\fn(Q') \cap \{ \vect n, \vect n'\} = \emptyset$, and $\Sigma \vdash Q = \CTX[Q_0] = \CTX'\sigma[Q'_0\sigma] = Q' \sigma$.
\end{itemize}
We now prove the lemma with property~(*) by applying 
Lemma~\ref{lem:closing}\eqref{closing:redP} to the reduction $P'\sigma \redP Q'\sigma$ for the process $Q'$ obtained above.
Let $Y = \fv(P'\sigma) \cup \fv(Q'\sigma) = \fv(Q'\sigma) = \fv(Q')\setminus \dom(\sigma)$ and let $\sigma''$ map $Y$ to pairwise distinct fresh names.
We have $P'\sigma\sigma'' = P'\sigma$ so a fortiori
$\Sigma \vdash P'\sigma = P'\sigma\sigma''$, then
by Lemma~\ref{lem:closing}\eqref{closing:redP},
$P'\sigma\sigma'' \redP Q'\sigma\sigma''$
and $\Sigma \vdash Q'\sigma = Q'\sigma\sigma''$.
So $\Sigma \vdash Q = Q'\sigma = (Q'\sigma'')\sigma$.
Since $P' \sigma' \redP Q' \sigma'$, we have
$P'\sigma' = P'\sigma'\sigma'' \redP Q'\sigma'\sigma'' = (Q'\sigma'')\sigma'$.
Since $\fn(Q') \cap \{ \vect n, \vect n'\} = \emptyset$, we have 
$\fn(Q'\sigma'') \cap \{ \vect n, \vect n'\} = \emptyset$.
We also have $\fv(Q'\sigma'') \subseteq \dom(\sigma) = \dom(\sigma')$,
so we get the desired result by using $Q'\sigma''$ instead of $Q'$.
\qedhere
\end{proof}

Lemma~\ref{lem:ltr-enveq} gives two variants of the same result:
if $\Res{\vect n}\sigma \enveq \Res{\vect n'}\sigma'$
and $P$ such that $\Sigma \vdash P' \sigma = P$ has a labelled transition,
then
$P'\sigma'$ has a corresponding labelled transition.
The two variants differ by the closure assumptions and conclusions.

\begin{lemma}\label{lem:ltr-enveq}
Suppose that  $\Res{\vect n}\sigma \enveq \Res{\vect n'}\sigma'$, 
  $\fn(P') \cap \{ \vect n, \vect n'\} = \emptyset$, 
  $\Sigma \vdash P' \sigma = P$, 
$P \ltrP{\alpha} A$, and
$\sigma$, $\sigma'$, and $P'\sigma$ are closed.
\begin{enumerate}
\item\label{prop:ltr-enveq-open} If $\alpha$ is an output or  $\alpha = \Rcv{N}{M}$ 
with some $M'$ such that 
$\Sigma \vdash M' \sigma = M$, 
$M'\sigma$ is closed, and 
$\fn(M') \cap \{ \vect n, \vect n'\} = \emptyset$;
then 
$P' \sigma' \ltrP{\alpha'\sigma'} A' \sigma'$, 
$\fn(A') \cap \{ \vect n, \vect n'\} = \emptyset$, 
$A \equiv A' \sigma$, $\fn(\alpha') \cap \{ \vect n, \vect n'\} = \emptyset$, 
$\Sigma \vdash \alpha = \alpha' \sigma$, and 
$A'\sigma$ is closed for some $A'$, $\alpha'$.
\item\label{prop:ltr-enveq-closed} If $\alpha = \Rcv{N}{x}$ and $x \notin \dom(\sigma)$,
then 
$P' \sigma' \ltrP{\Rcv{N'\sigma'}{x}} A' \sigma'$, 
$\fn(A') \cap \{ \vect n, \vect n'\} = \emptyset$, 
$A \equiv A' \sigma$, 
$\fn(N') \cap \{ \vect n, \vect n'\} = \emptyset$, 
$\Sigma \vdash N = N' \sigma$, 
$\fv(A') \subseteq \dom(\sigma) \cup \{x\}$, and 
$\fv(N') \subseteq \dom(\sigma) 
$ for some $A'$, $N'$. 
\end{enumerate}
\end{lemma}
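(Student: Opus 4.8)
The plan is to prove both parts by induction on the derivation of $P \ltrP{\alpha} A$, following the template of the proofs of Lemmas~\ref{lem:equiv-enveq} and~\ref{lem:red-enveq}. The inductive cases $\brn{Scope}'$, $\brn{Par}'$, and $\brn{Struct}'$ are threaded through in the usual way (with $\brn{Struct}'$ invoking Lemma~\ref{lem:equiv-enveq} to transport the structural equivalence), while the base cases $\brn{In}'$ and $\brn{Out-Var}'$ produce the witnesses $\alpha'$ and $A'$ directly. Alternatively, one can first use Lemma~\ref{lem:caract-ltrP} to reduce $P \ltrP{\alpha} A$ to a structural equivalence $P \equivP \Res{\vect m}(P_1 \parop P_2)$ followed by a single application of $\brn{In}'$ or $\brn{Out-Var}'$, transport that structural equivalence with Lemma~\ref{lem:equiv-enveq}, and rebuild the transition of $P'\sigma'$.

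In the base case $\brn{In}'$, $P = \Rcv{N}{y}.P_0$; since $\Sigma \vdash P'\sigma = P$ never alters the process structure, the hypothesis process has the matching shape $P' = \Rcv{N''}{y}.P_0'$ with $\Sigma \vdash N''\sigma = N$ and $\Sigma \vdash P_0'\sigma = P_0$ (renaming $y$ so that $y \notin \dom(\sigma)$). For part~(1) one takes $\alpha' \eqdef \Rcv{N''}{M'}$ and $A' \eqdef P_0'\{\subst{M'}{y}\}$ with the given $M'$; for part~(2), where $\alpha = \Rcv{N}{x}$ with $x \notin \dom(\sigma)$, one takes $\alpha' \eqdef \Rcv{N''}{x}$, $A' \eqdef P_0'\{\subst{x}{y}\}$, and $N' \eqdef N''$. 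In both subcases rule $\brn{In}'$ applied to $P'\sigma'$ yields $P'\sigma' \ltrP{\alpha'\sigma'} A'\sigma'$. The case $\brn{Out-Var}'$ is symmetric: $P = \Snd{N}{M}.P_0$, $P' = \Snd{N''}{M''}.P_0'$ with $\Sigma \vdash N''\sigma = N$ and $\Sigma \vdash M''\sigma = M$, and one takes $\alpha' \eqdef \Res{x}\Snd{N''}{x}$ and $A' \eqdef P_0' \parop \{\subst{M''}{x}\}$, choosing $x$ genuinely fresh, which is legitimate because the bound variable of an output label may be renamed.

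The remaining conclusions are then bookkeeping. The equational conditions $\Sigma \vdash \alpha = \alpha'\sigma$ and $A \equiv A'\sigma$ follow, via applications of $\brn{Rewrite}$, from the equalities $\Sigma \vdash N''\sigma = N$, $\Sigma \vdash M'\sigma = M$ (resp.\ $\Sigma \vdash M''\sigma = M$), and $\Sigma \vdash P_0'\sigma = P_0$ collected during the induction. The disjointness conditions $\fn(\alpha') \cap \{\vect n,\vect n'\} = \emptyset$ and $\fn(A') \cap \{\vect n,\vect n'\} = \emptyset$ use $\fn(P') \cap \{\vect n,\vect n'\} = \emptyset$ (preserved by renaming bound names away from $\vect n,\vect n'$) together with $\fn(M') \cap \{\vect n,\vect n'\} = \emptyset$ in part~(1). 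The closure conclusions are obtained exactly as property~(*) of Lemmas~\ref{lem:equiv-enveq} and~\ref{lem:red-enveq}, possibly with an appeal to Lemma~\ref{lem:closing}: from $P'\sigma$ closed one reads off $\fv(N''\sigma) = \emptyset$ and $\fv(P_0'\sigma) \subseteq \{y\}$; since $\sigma$ is closed, $\fv(t\sigma) = \fv(t) \setminus \dom(\sigma)$, which yields $A'\sigma$ closed in part~(1) (recalling that ``closed'' means $\dom = \fv$, so in the output case the exported variable $x$ is accounted for) and $\fv(A') \subseteq \dom(\sigma) \cup \{x\}$, $\fv(N') \subseteq \dom(\sigma)$ in part~(2).

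The main obstacle is the inductive case $\brn{Scope}'$. From $P = \Res{n}P_1 \ltrP{\alpha} \Res{n}A_1$ and $\Sigma \vdash P'\sigma = \Res{n}P_1$ one gets $P' = \Res{n}P_1'$, and the induction hypothesis supplies $\alpha'$; to re-apply $\brn{Scope}'$ to $P'\sigma' = \Res{n}(P_1'\sigma')$ one needs $n \notin \fn(\alpha'\sigma')$. This is not immediate, because equational rewriting does not preserve free names: the channel $N''$ of $P_1'$, whose $\sigma$-image rewrites to the label's channel $N$ with $n \notin \fn(N)$, might itself mention $n$. The plan is to address this by (a) renaming $n$ fresh relative to everything in sight --- including $\fn(M')$ in part~(1) --- before invoking the induction hypothesis, and (b) strengthening the inductive invariant so that the channel of $\alpha'$ may be chosen with $\fn(\alpha')$ disjoint from the restrictions already traversed, mirroring the guarantee $\{\vect m\} \cap \fn(\alpha) = \emptyset$ that Lemma~\ref{lem:caract-ltrP} provides on the unprimed side. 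The milder side condition of $\brn{Par}'$, $\bv(\alpha) \cap \fv(B) = \emptyset$, is handled by keeping the bound variable of output labels fresh. Everything else follows the template already established by Lemmas~\ref{lem:equiv-enveq} and~\ref{lem:red-enveq}.
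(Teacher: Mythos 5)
Your proposal matches the paper's proof in all essentials: Property~1 is proved by induction on the derivation of $P \ltrP{\alpha} A$ with exactly the witnesses you give for $\brn{In}'$ and $\brn{Out-Var}'$ and with Lemma~\ref{lem:equiv-enveq} discharging $\brn{Struct}'$, while Property~2 is done by the route you list as the alternative, namely Lemma~\ref{lem:caract-ltrP} followed by Lemma~\ref{lem:equiv-enveq} and a reconstruction of the transition (your direct induction for Property~2 would also work). The only divergence is $\brn{Scope}'$: the paper just renames $n$ away from $\{\vect n,\vect n'\}$ and $\fn(\sigma)\cup\fn(\sigma')$ and reapplies the rule, whereas you rightly observe that the side condition $n \notin \fn(\alpha'\sigma')$ is not automatic, since the syntactic channel $N''$ of $P'$ only satisfies $\Sigma \vdash N''\sigma = N$ and may itself mention $n$ (e.g.\ $N'' = \Const{fst}((a,n))$ with $N = a$); your repair---choosing an equationally equal channel representative that avoids the traversed restrictions, via $\brn{Rewrite}'$ before firing the transition---is the appropriate fix and is compatible with the statement, which only requires \emph{some} $\alpha'$ with $\Sigma \vdash \alpha = \alpha'\sigma$.
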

\begin{proof}
  \emph{Property~1:}
By induction on the derivation of $P \ltrP{\alpha} A$.
\begin{itemize}
\item Case $\brn{In}'$. We have $P = \Rcv{N}{x}.P_0 \ltrP{\Rcv{N}{M}} P_0\{\subst{M}{x}\} = A$ and there exists $M'$ such that $\Sigma \vdash M' \sigma = M$ and $\fn(M') \cap \{ \vect n, \vect n'\} = \emptyset$. 
We rename $x$ so that $x \notin \dom(\sigma)$. So $P' = \Rcv{N'}{x}.P'_0$ with $\Sigma \vdash N = N'\sigma$ and $\Sigma \vdash P_0 = P_0'\sigma$.
Let $A' = P'_0\{\subst{M'}{x}\}$ and $\alpha' = \Rcv{N'}{M'}$.
Then we have $P' \sigma' = \Rcv{N'\sigma'}{x}.P'_0\sigma'
\ltrP{\Rcv{N'\sigma'}{M'\sigma'}} 
P'_0\sigma'\{\subst{M'\sigma'}{x}\} = A' \sigma'$,
$\fn(A') \cap \{ \vect n, \vect n'\} = \emptyset$, 
$\Sigma \vdash A = P'_0\sigma\{\subst{M'\sigma}{x}\} = A' \sigma$, 
$\fn(\alpha') \cap \{ \vect n, \vect n'\} = \emptyset$, and
$\Sigma \vdash \alpha = \alpha' \sigma$.
Since $P'\sigma$ is closed, $\fv(P'_0) \subseteq \dom(\sigma) \cup \{x\}$;
moreover $M'\sigma$ is closed, so $A'\sigma$ is closed.

\item Case $\brn{Out-Var}'$. We have $P = \Snd{N}{M}.P_0 \ltrP{\Res{x}\Snd{N}{x}}
P_0 \parop \{\subst{M}{x}\} = A$ with $x\notin \fv(\Snd{N}{M}.P_0)$.
So $P' = \Snd{N'}{M'}.P'_0$ with 
$\Sigma \vdash N = N'\sigma$, $\Sigma \vdash M = M'\sigma$, and $\Sigma \vdash P_0 = P_0'\sigma$.
Let $A' = P'_0 \parop \{\subst{M'}{x}\}$ and $\alpha' = \Res{x}\Snd{N'}{x}$.
We have $P' \sigma' = \Snd{N'\sigma'}{M'\sigma'}.P'_0\sigma'           
\ltrP{\Res{x}\Snd{N'\sigma'}{x}}
P'_0\sigma' \parop \{\subst{M'\sigma'}{x}\} = A' \sigma'$,
$\fn(A') \cap \{ \vect n, \vect n'\} = \emptyset$,
$\Sigma \vdash A = P'_0\sigma \parop \{\subst{M'\sigma}{x}\} = A' \sigma$,
$\fn(\alpha') \cap \{ \vect n, \vect n'\} = \emptyset$, and
$\Sigma \vdash \alpha = \alpha' \sigma$.
Since $P'\sigma$ is closed, $P'_0\sigma$ is closed;
moreover $M'\sigma$ is closed, so $A'\sigma$ is closed.

\item Case $\brn{Scope}'$. The transition $P = \Res{n} P_0 \ltrP{\alpha} \Res{n}A_0 = A$ is derived from $P_0 \ltrP{\alpha} A_0$, where $n$ does not occur in $\alpha$. We rename $n$ so that $n \notin \{\vect n, \vect n'\}$ and $n \notin\fn(\sigma) \cup \fn(\sigma')$. We have $P' = \Res{n} P'_0$ for some $P'_0$, 
so $\Sigma \vdash P'_0\sigma = P_0$ and $P'_0 \sigma$ is closed. By induction hypothesis, 
$P'_0 \sigma' \ltrP{\alpha'\sigma'} A'_0 \sigma'$, $\fn(A'_0) \cap \{ \vect n, \vect n'\} = \emptyset$, $\Sigma \vdash A_0 = A'_0 \sigma$, $\fn(\alpha') \cap \{ \vect n, \vect n'\} = \emptyset$, $\Sigma \vdash \alpha = \alpha' \sigma$, and $A'_0\sigma$ is closed for some $A'_0$, $\alpha'$.
Let $A' = \Res{n} A'_0$. 
Then $P' \sigma' \ltrP{\alpha'\sigma'} A' \sigma'$ by $\brn{Scope}'$, 
so we have the desired result.

\item Case $\brn{Par}'$. The transition $P = P_0 \parop Q_0 \ltrP{\alpha} A_0 \parop Q_0 = A$ is derived from $P_0 \ltrP{\alpha} A_0$, where $\bv(\alpha) \cap\fv(Q_0) = \emptyset$. We have $P' = P'_0 \parop Q'_0$ for some $P'_0$, $Q'_0$, so
$\Sigma \vdash P'_0 \sigma = P_0$, $\Sigma \vdash Q'_0 \sigma = Q_0$, and
$P'_0 \sigma$ and $Q'_0 \sigma$ are closed. 
By induction hypothesis, 
$P'_0 \sigma' \ltrP{\alpha'\sigma'} A'_0 \sigma'$, $\fn(A'_0) \cap \{ \vect n,
\vect n'\} = \emptyset$, $\Sigma \vdash A_0 = A'_0 \sigma$, $\fn(\alpha') \cap
\{ \vect n, \vect n'\} = \emptyset$, $\Sigma \vdash \alpha = \alpha' \sigma
$, and $A'_0\sigma$ is closed for some $A'_0$, $\alpha'$.
Let $A' = A'_0 \parop Q'_0$.
Then $P' \sigma' \ltrP{\alpha'\sigma'} A' \sigma'$ by $\brn{Par}'$,
since $\fv(Q'_0\sigma') = \emptyset$. 
Since $P'\sigma$ is closed, $Q'_0\sigma$ is closed, so
$A'\sigma$ is closed. Therefore, we have the desired result.

\item Case $\brn{Struct}'$ follows by Lemma~\ref{lem:equiv-enveq} and induction hypothesis.
\end{itemize}
\noindent \emph{Property~2:}
By Lemma~\ref{lem:caract-ltrP}, $P \equivP \Res{\vect n''}(\Rcv{N}{y}.P_1 \parop P_2)$,
$A \equiv \Res{\vect n''}(P_1\{\subst{x}{y}\} \parop P_2)$,
and $\{\vect n''\} \cap \fn(N) = \emptyset$,
for some $\vect n''$, $P_1$, $P_2$, $N$, $y$.
We rename $\vect n''$ so that $\{ \vect n''\} \cap (\fn(\sigma) \cup \fn(\sigma')) = \emptyset$, and we rename $y$ so that $y \notin \dom(\sigma)$.
By Lemma~\ref{lem:equiv-enveq}, $P'\sigma' \equivP Q'\sigma'$,
$\fn(Q') \cap \{ \vect n, \vect n'\} = \emptyset$, and $\Sigma \vdash \Res{\vect n''}(\Rcv{N}{y}.P_1 \parop P_2) = Q' \sigma$ for some $Q'$ such that $Q'\sigma$ is closed, so $\fv(Q') \subseteq \dom(\sigma) = \dom(\sigma')$.
Hence, $Q'$ is of the form $Q' = \Res{\vect n''}(\Rcv{N'}{y}.P'_1 \parop P_2')$ with $\Sigma \vdash N = N'\sigma$, $\Sigma \vdash P_1 = P'_1\sigma$, and $\Sigma \vdash P_2 = P_2'\sigma$.
Hence, by Lemma~\ref{lem:caract-ltrP}, $P' \sigma' \equivP Q' \sigma' =\Res{\vect n''}(\Rcv{N'\sigma'}{y}.P'_1 \sigma' \parop P'_2 \sigma') \ltrP{\Rcv{N'\sigma'}{x}} \Res{\vect n''}(P'_1\sigma'\{\subst{x}{y}\} \parop P'_2\sigma')$.
Let $A' = \Res{\vect n''}(P'_1\{\subst{x}{y}\} \parop P'_2)$.
Then $P' \sigma' \ltrP{\Rcv{N'\sigma'}{x}} A' \sigma'$, $\fn(A') \cap \{ \vect n, \vect n'\} = \emptyset$ and $\fn(N') \cap \{ \vect n, \vect n'\} = \emptyset$ because $\fn(Q') \cap \{ \vect n, \vect n'\} = \emptyset$, $A \equiv \Res{\vect n''}(P_1\{\subst{x}{y}\} \parop P_2) \equiv \Res{\vect n''}(P'_1\sigma\{\subst{x}{y}\} \parop P'_2\sigma) \equiv A' \sigma$, $\Sigma \vdash N = N' \sigma$, and $\fv(A') \subseteq \dom(\sigma) \cup \{x\}$ and $\fv(N') \subseteq \dom(\sigma) = \dom(\sigma')$ because $\fv(Q') \subseteq \dom(\sigma) = \dom(\sigma')$.
\end{proof}

\subsection{Labelled Bisimilarity Implies Observational Equivalence}\label{app:onedirection}

The goal of this section is to establish the lemmas needed in the
outline of the argument that labelled bisimilarity implies
observational equivalence in Section~\ref{sec:bigpf}.

\begin{lemma}\label{lem:rename-eqstr-red}
Let $A$ and $B$ be two extended processes.
Let $\sigma$ be a bijective renaming (a substitution that is a bijection from names to names).
We have:
\begin{itemize}
\item $A \equiv B$ if and only if $A \sigma \equiv B \sigma$,
\item $A \rightarrow B$ if and only if $A \sigma \rightarrow B \sigma$,
\item $A \ltr{\alpha} B$ if and only if $A \sigma \ltr{\alpha\sigma} B\sigma$.
\end{itemize}
Let $A'$, $B'$, and $\alpha'$ be obtained from $A$, $B$, and $\alpha$, respectively, by replacing all variables (including their occurrences in domains of active substitutions) with distinct variables. We have:
\begin{itemize}
\item $A \equiv B$ if and only if $A' \equiv B'$,
\item $A \rightarrow B$ if and only if $A' \rightarrow B'$,
\item $A \ltr{\alpha} B$ if and only if $A' \ltr{\alpha'} B'$.
\end{itemize}
\end{lemma}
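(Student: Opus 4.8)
The plan is to prove Lemma~\ref{lem:rename-eqstr-red} by straightforward induction on derivations, exploiting the fact that the operational semantics of the applied pi calculus is defined by syntax-directed inference rules, and that bijective renamings and injective variable replacements commute with all the syntactic operations involved (parallel composition, restriction, substitution application, and so on). Since the paper works modulo $\alpha$-renaming of bound names and variables, I may freely assume when needed that the bound names and variables in a derivation avoid any finite set, which removes all capture-avoidance worries.

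First I would handle the case of a bijective renaming $\sigma$ of names. For each of the three relations ($\equiv$, $\rightarrow$, $\ltr{\alpha}$) I would argue the implication from left to right by induction on the derivation. Each base rule is stable under $\sigma$: for structural equivalence, the rules \rulename{Par-$\nil$}, \rulename{Par-A}, \rulename{Par-C}, \rulename{Repl}, \rulename{New-$\nil$}, \rulename{New-C}, \rulename{New-Par}, \rulename{Alias}, \rulename{Subst}, and \rulename{Rewrite} all commute with $\sigma$ — the only slightly delicate ones are \rulename{New-Par} (where the side condition $u \notin \fv(A) \cup \fn(A)$ is preserved since $\sigma$ is a bijection on names and leaves variables untouched) and \rulename{Rewrite} (where $\Sigma \vdash M = N$ implies $\Sigma \vdash M\sigma = N\sigma$ because the equational theory is closed under substitution of terms for names, in particular under renaming). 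For reduction, \rulename{Comm} commutes with $\sigma$; \rulename{Then} is immediate; \rulename{Else} uses that $\Sigma \not\vdash M = N$ implies $\Sigma \not\vdash M\sigma = N\sigma$, which again follows from closure of the theory under renaming (applying the inverse renaming $\sigma^{-1}$). For labelled transitions, \rulename{In} and \rulename{Out-Var} commute with $\sigma$ (noting that the freshness condition $x \notin \fv(\Snd{N}{M}.P)$ is untouched since $\sigma$ does not affect variables), and the context rules \rulename{Scope}, \rulename{Par}, \rulename{Struct} propagate by the induction hypothesis, using the name-renaming case of $\equiv$ already established for \rulename{Struct}, and observing that the side conditions ($u$ not in $\alpha$; $\bv(\alpha) \cap \fv(B) = \emptyset$) are preserved. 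Closure under evaluation contexts is handled the same way. The converse implications follow by applying the inverse renaming $\sigma^{-1}$.

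Next I would treat the variable-replacement case. Let $\rho$ denote the injective map on variables that transforms $A$, $B$, $\alpha$ into $A'$, $B'$, $\alpha'$; it acts on \emph{all} variable occurrences, including those in domains of active substitutions, so that $\{\subst{M}{x}\}$ becomes $\{\subst{M\rho}{x\rho}\}$. The argument is again induction on derivations, and is essentially identical in structure to the name-renaming case: every base rule commutes with $\rho$, with \rulename{Rewrite}, \rulename{Then}, and \rulename{Else} using closure of the equational theory under substitution of variables for variables (a special case of closure under substitution of terms for variables); the side conditions on \rulename{New-Par}, \rulename{Out-Var}, \rulename{Scope}, and \rulename{Par} are preserved precisely because $\rho$ is injective (so it preserves and reflects membership in and disjointness of variable sets, and it maps the single restricted/bound variable in \rulename{Alias}, \rulename{Out-Var}, variable restriction, or an input binder to a single fresh variable). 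The context rules propagate by induction. The converse again uses a left inverse of $\rho$ on the (finite) set of variables actually occurring in the relevant process.

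The main obstacle — though it is a routine one rather than a deep one — is bookkeeping around binders and side conditions: ensuring that applying $\sigma$ or $\rho$ does not accidentally conflate a bound name/variable with a free one, and checking that each rule's side condition ($u \notin \fv(A)\cup\fn(A)$ in \rulename{New-Par}, $x \notin \fv(\Snd{N}{M}.P)$ in \rulename{Out-Var}, the non-occurrence and disjointness conditions in \rulename{Scope} and \rulename{Par}, the cycle-freeness assumption on substitutions) is both preserved in the forward direction and reflected in the backward direction. For bijective name renamings this is automatic; for the variable replacement it rests entirely on injectivity, and I would make that point explicit once and then invoke it uniformly. There is no genuine difficulty hiding here: the lemma is exactly the statement that the calculus is closed under injective renaming, which the rules are designed to satisfy, so the proof is a mechanical (if somewhat lengthy) case analysis that I would present in compressed form, spelling out only \rulename{New-Par}, \rulename{Rewrite}/\rulename{Else}, and \rulename{Struct} as representative cases.
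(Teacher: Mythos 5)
Your proposal is correct and follows essentially the same route as the paper: the paper's proof is a one-line induction on derivations, invoking closure of the equational theory under renaming of names and variables for the \rulename{Rewrite}/\rulename{Then}/\rulename{Else} cases and obtaining the converse implications via the inverse renaming. Your version simply spells out the case analysis and side-condition bookkeeping that the paper leaves implicit.
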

\begin{proof}
The implications from left to right are proved by induction on the derivations. We use that the equational theory is closed under renaming of names and variables. 
The same argument also proves the converse implications, via the inverse renaming.
\end{proof}

\begin{lemma}\label{lem:ren-wkbisim}
Let $A$ and $B$ be two closed extended processes.
\begin{itemize}
\item Let $\sigma$ be a bijective renaming.
We have $A \wkbisim B$ if and only if $A \sigma \wkbisim B \sigma$.
\item Let $A'$ and $B'$ be obtained from $A$ and $B$, respectively, by replacing all variables (including their occurrences in domains of active substitutions) with distinct variables. We have
$A \wkbisim B$ if and only if $A' \wkbisim B'$.
\end{itemize}
\end{lemma}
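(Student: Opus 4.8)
The plan is to show that $\wkbisim$ is preserved by applying such a renaming, in both directions at once. Since a bijective renaming $\sigma$ of names has an inverse $\sigma^{-1}$ that is again a bijective renaming, it suffices to prove that $A \wkbisim B$ implies $A\sigma \wkbisim B\sigma$: the converse then follows by instantiating this implication with $\sigma^{-1}$ at $A\sigma$ and $B\sigma$. The same remark applies to the variable replacement, which is a bijection on the set of variables and hence invertible.

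So, for a bijective renaming $\sigma$ of names, I would consider the relation
\[ \mathcal{R} = \{\, (A\sigma,B\sigma) \mid A \text{ and } B \text{ are closed extended processes with } A \wkbisim B \,\} \]
and check that $\mathcal{R}$ is a labelled bisimulation in the sense of Definition~\ref{def:wkbisim}; since $\wkbisim$ is the largest such relation, this yields $\mathcal{R} \subseteq {\wkbisim}$, which is what we want. The relation $\mathcal{R}$ is symmetric because $\wkbisim$ is. Property~\ref{ppone} follows from Lemma~\ref{lem:rename-static-eq}: $A \wkbisim B$ gives $A \enveq B$, hence $A\sigma \enveq B\sigma$. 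For Property~\ref{pptwo}, if $A\sigma \rightarrow A''$ with $A''$ closed, then $A \rightarrow A''\sigma^{-1}$ by Lemma~\ref{lem:rename-eqstr-red}, and $A''\sigma^{-1}$ is closed because a bijective renaming of names leaves $\fv$ and $\dom$ unchanged; Property~\ref{pptwo} for $A \wkbisim B$ then gives $B'$ with $B \rightarrow^* B'$ and $A''\sigma^{-1} \wkbisim B'$, and applying $\sigma$ step by step along the reduction sequence (Lemma~\ref{lem:rename-eqstr-red} again) gives $B\sigma \rightarrow^* B'\sigma$ with $(A'',B'\sigma) \in \mathcal{R}$. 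Property~\ref{ppthree} is analogous: from $A\sigma \ltr{\alpha} A''$ with $A''$ closed and $\fv(\alpha) \subseteq \dom(A\sigma)$, Lemma~\ref{lem:rename-eqstr-red} yields $A \ltr{\alpha\sigma^{-1}} A''\sigma^{-1}$, where $\fv(\alpha\sigma^{-1}) = \fv(\alpha) \subseteq \dom(A\sigma) = \dom(A)$ and $A''\sigma^{-1}$ is closed; the transition matched by $A \wkbisim B$ is then pushed forward through $\sigma$, using $(\alpha\sigma^{-1})\sigma = \alpha$, to a transition $B\sigma \rightarrow^* \ltr{\alpha} \rightarrow^* B'\sigma$ with $(A'',B'\sigma) \in \mathcal{R}$.

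For the bijective variable replacement the argument is word for word the same, using the second parts of Lemmas~\ref{lem:rename-static-eq} and~\ref{lem:rename-eqstr-red} and the observations that such a replacement preserves closedness (it acts consistently on free variables and on the domains of active substitutions) and transforms $\dom(A)$ and $\fv(\alpha)$ compatibly, so that the side condition $\fv(\alpha) \subseteq \dom(A)$ of Definition~\ref{def:wkbisim} is preserved by applying the replacement or its inverse. The only real work here is this bookkeeping around closedness and the domain-containment side condition; there is no genuine obstacle, since the behaviour of renamings with respect to $\equiv$, $\rightarrow$, $\rightarrow^*$, and $\ltr{\alpha}$ is exactly what Lemma~\ref{lem:rename-eqstr-red} supplies, and their behaviour with respect to static equivalence is Lemma~\ref{lem:rename-static-eq}.
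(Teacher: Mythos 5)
Your proposal is correct and follows essentially the same route as the paper: the paper also defines the relation $\{(A\sigma,B\sigma)\mid A\wkbisim B\}$, checks the three properties of Definition~\ref{def:wkbisim} using Lemmas~\ref{lem:rename-static-eq} and~\ref{lem:rename-eqstr-red} (pulling reductions and transitions back through $\sigma^{-1}$ and pushing the matching ones forward through $\sigma$), and obtains the converse via the inverse renaming, treating the variable replacement case identically.
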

\begin{proof}
To prove the first point, we define a relation $\rel$ by
$A' \rel B'$ if and only if $A' = A\sigma$, $B' = B\sigma$, and $A \wkbisim B$ for some $A$ and $B$. We show that $\rel$ satisfies the three properties of Definition~\ref{def:wkbisim}. Then ${\rel} \subseteq {\wkbisim}$, so if $A \wkbisim B$, then $A' = A\sigma \wkbisim B' = B\sigma$.
\begin{enumerate}
\item Property~\ref{ppone} comes from Lemma~\ref{lem:rename-static-eq}.
\item If $A' \rel B'$, $A' \rightarrow A'_1$, and $A'_1$ is closed, then by Lemma~\ref{lem:rename-eqstr-red},
$A = A'\sigma^{-1} \rightarrow A'_1 \sigma^{-1}$. We let $A'' = A'_1 \sigma^{-1}$, which is also closed.
So by definition of $\wkbisim$, $B \rightarrow^* B''$ and $A'' \wkbisim B''$ for some $B''$. By Lemma~\ref{lem:rename-eqstr-red}, $B' = B\sigma \rightarrow^* B''\sigma$. We let $B'_1 = B''\sigma$. We have $A'_1 \rel B'_1$ and $B' \rightarrow^* B'_1$. So Property~\ref{pptwo} holds.

\item The proof of Property~\ref{ppthree} is similar to the proof of Property~\ref{pptwo}.
\end{enumerate}
The same argument also proves the converse, via the inverse renaming.

The proof of the second point is similar.
\end{proof}

\begin{restate}{Lemma}{\ref{lem:bisim-context-closed}}
$\wkbisim$ is closed by application of closing evaluation contexts.
\end{restate}%
\begin{proof}
Let $A$ and $B$ be two closed extended processes such that $A \wkbisim B$,
and $\CTX$ be an evaluation context closing for $A$ and $B$.
Our goal is to show that $\CTX[A] \wkbisim \CTX[B]$.
We first rename the free names and variables of $\CTX$ by Lemma~\ref{lem:ren-wkbisim}, so that the obtained context is simple. Then by Lemma~\ref{lem:simplecontexts}, we construct a context $\CTX'$ of the form $\Res{\vect u}(\hole \parop C'')$ such that $\CTX \equiv \CTX'$. Since $\wkbisim$ is invariant by structural equivalence, it is sufficient to show that $\CTX'[A] \wkbisim \CTX'[B]$. 
Hence, it is sufficient to consider evaluation contexts of the form
  $\nuc{\hole}$, such that $\nuc{A}$ and $\nuc{B}$ are closed.

  To every relation $\rel$ on closed extended processes, we
  associate the relation ${\rel'} = \{ (\nuc A, \nuc B) \mid A \rel B, \nuc{\hole} \text{ closing for $A$ and $B$} \}$.
  We prove that, if $\rel$ is a labelled bisimulation,
  then $\rel'$ is a labelled bisimulation up to $\equiv$, hence ${\rel}
  \subseteq {\equiv\rel'\equiv} \subseteq {\wkbisim}$.
  For ${\rel} = {\wkbisim}$, this establishes that $\wkbisim$ is
  closed by application of evaluation contexts $\nuc{\hole}$.

  Assume $S \rel' T$, with $S = \nuc A$, $T = \nuc B$, and $A \rel B$.
Let $\pnf(A) = \Res{\vect n}(\sigma \parop P)$, $\pnf(B) = \Res{\vect n'}(\sigma' \parop P')$, $\pnf(C) = \Res{\vect n''}(\sigma'' \parop P'')$, $\vect u$ consist of names $\vect n'''$ and variables $\vect x$.
We suppose that $A$ or $C$ is not a plain process. (The case in which $A$ and $C$ are plain processes is simpler.) Since $A \rel B$, we have $\dom(A) = \dom(B)$, so we also have that $B$ or $C$ is not a plain process.
We rename $\vect n$, $\vect n'$, and $\vect n''$ so that they are disjoint,
the names of $\vect n$ and of $\vect n'$ are not free in $\sigma'' \parop P''$, and the names of
$\vect n''$ are not free in $\sigma \parop P$ nor in $\sigma' \parop P'$.
Since $A$ is closed, by Lemma~\ref{lem:pnf-closed}, $\pnf(A)$ is closed, so 
$P$ and the image of $\sigma$ have no free variables,
so they are not modified by $\sigma''$.
Similarly, $P'$ and the image of $\sigma'$ have no free variables,
so they are not modified by $\sigma''$.
Hence $\pnf(S) = \Res{\vect n''',\vect n,\vect n''}({(\sigma \parop \sigma''\sigma)}_{|\dom(\sigma \parop \sigma''\sigma) \setminus \{ \vect x \}} \parop P \parop P'' \sigma)$
and $\pnf(T) = \Res{\vect n''',\vect n',\vect n''}({(\sigma' \parop \sigma''\sigma')}_{|\dom(\sigma' \parop \sigma''\sigma') \setminus \{ \vect x \}} \parop P' \parop P'' \sigma')$.

We argue that $\rel'$ satisfies the three properties of a labelled bisimulation up to $\equiv$ (Definition~\ref{def:weakbisimstruct}). 
The proof of the first property is trivial; those of 
the last two properties (given in more detail below) go as follows. From a (labelled or internal) reduction of $S$, we infer a reduction of $\pnf(S)$, hence a reduction of $P \parop P''\sigma$ by a decomposition lemma (Lemma~\ref{lem:decomp-ltrpnf} or~\ref{lem:decomp-redpnf}), hence reductions of $P$ and/or $P''\sigma$ by another decomposition lemma (Lemma~\ref{lem:decomp-ltrP} or~\ref{lem:decomp-redP-closed}).
From a reduction of $P$, we infer a reduction of $A$, hence a reduction of $B$ since $\rel$ is labelled bisimulation, so a reduction of $P'$ by a decomposition lemma.
From a reduction of $P''\sigma$, we infer a reduction of $P''\sigma'$ using the static equivalence $A \enveq B$, which means that $\Res{n}\sigma \enveq \Res{n'}\sigma'$.
Therefore, in all cases, we obtain a reduction of $P' \parop P''\sigma'$, hence a reduction of $\pnf(T)$, so a reduction of $T$.
In more detail, the proof proceeds as follows.
\begin{enumerate}
\item  $S \enveq T$ immediately follows from $A \enveq B$ by Lemma~\ref{LEM:INVARIANT-STATIC-EQ}. 

  \item For every $S \rightarrow S'$ with $S'$ closed, we prove that $T \rightarrow^* T'$
  and $S' \equiv \rel' \equiv T'$ for some $T'$. 
By Lemma~\ref{lem:red-std-to-pnf}, we have $\pnf(S) \redpnf \pnf(S')$.
By Lemma~\ref{lem:decomp-redpnf}, we have $P \parop P''\sigma \redP Q$ and 
$\pnf(S') \equiv  \Res{\vect n''',\vect n,\vect n''}({(\sigma \parop \sigma''\sigma)}_{|\dom(\sigma \parop \sigma''\sigma) \setminus \{ \vect x \}} \parop Q)$ for some $Q$.
By Lemma~\ref{lem:decomp-redP-closed}, we have four cases:
\begin{enumerate}

\item $P \redP Q'$ and $Q \equiv Q'\parop P''\sigma$ for some closed process $Q'$.
By Lemmas~\ref{lem:equivpnf} and~\ref{lem:red-pnf-to-std}, $A \equiv \Res{\vect n}(\sigma \parop P) \rightarrow A'$ where $A' = \Res{\vect n}(\sigma \parop Q')$.
Since $A \rel B$ and $A'$ is closed, we have $B \rightarrow^* B'$ and $A' \rel B'$ for some $B'$.
By Lemma~\ref{lem:red-std-to-pnf}, $\pnf(B) \redpnf^* \pnf(B')$, 
so by Lemma~\ref{lem:decomp-redpnf-closed}, $P' \redP^* Q''$ and $\pnf(B') \equiv  \Res{\vect n'}(\sigma' \parop Q'')$ for some closed process $Q''$.
We rename $\vect n''$ so that $\{\vect n''\} \cap \fn(Q') = \emptyset$
and $\{\vect n''\} \cap \fn(Q'') = \emptyset$.
Hence, by Lemmas~\ref{lem:equivpnf} and~\ref{lem:red-pnf-to-std}, 
\begin{align*}
T &\equiv \Res{\vect n''',\vect n',\vect n''}({(\sigma' \parop \sigma''\sigma')}_{|\dom(\sigma' \parop \sigma''\sigma') \setminus \{ \vect x \}} \parop P' \parop P'' \sigma')\\
&\rightarrow^* \Res{\vect n''',\vect n',\vect n''}({(\sigma' \parop \sigma''\sigma')}_{|\dom(\sigma' \parop \sigma''\sigma') \setminus \{ \vect x \}} \parop Q'' \parop P'' \sigma') \\
&\quad \equiv \Res{\vect u}(\Res{\vect n'}(\sigma' \parop Q'') \parop \Res{\vect n''}(\sigma'' \parop P'')) \equiv \Res{\vect u}(B' \parop C)
\end{align*}
If there is at least one reduction step in this trace, we let $T' = \Res{\vect u}(B' \parop C)$; otherwise, we let $T' = T$.
In all cases, $T \rightarrow^* T'$ and $T' \equiv \Res{\vect u}(B' \parop C)$.
Since $A' \rel B'$,
\begin{align*}
S' &\equiv 
\Res{\vect n''',\vect n,\vect n''}({(\sigma \parop \sigma''\sigma)}_{|\dom(\sigma \parop \sigma''\sigma) \setminus \{ \vect x \}} \parop Q' \parop P''\sigma) \\
&\equiv 
\Res{\vect u}(\Res{\vect n}(\sigma \parop Q') \parop \Res{\vect n''}(\sigma'' \parop P''))\\
& \equiv \Res{\vect u}(A' \parop C)\,,
\end{align*}
and $\nuc{\hole}$ is closing for $A'$ and $B'$, we have $S' \equiv \rel' \equiv T'$.

\item $P''\sigma \redP Q'$ and $Q \equiv P \parop Q'$ for some closed process $Q'$.
Since $A \rel B$, we have $A \enveq B$, that is, $\Res{\vect n}\sigma \enveq \Res{\vect n'}\sigma'$.
By Lemma~\ref{lem:red-enveq}, $P'' \sigma' \redP Q'' \sigma'$, $\fn(Q'') \cap \{ \vect n, \vect n'\} = \emptyset$, and $\Sigma \vdash Q' = Q'' \sigma$ for some $Q''$ such that $Q''$ is closed, so $\fv(Q'') \subseteq \dom(\sigma) = \dom(\sigma')$.
So by Lemmas~\ref{lem:equivpnf} and~\ref{lem:red-pnf-to-std}, 
\begin{align*}
T &\equiv \Res{\vect n''',\vect n',\vect n''}({(\sigma' \parop \sigma''\sigma')}_{|\dom(\sigma' \parop \sigma''\sigma') \setminus \{ \vect x \}} \parop P' \parop P'' \sigma')\\
&\rightarrow \Res{\vect n''',\vect n',\vect n''}({(\sigma' \parop \sigma''\sigma')}_{|\dom(\sigma' \parop \sigma''\sigma') \setminus \{ \vect x \}} \parop P' \parop Q'' \sigma') \\
&\quad \equiv \Res{\vect u}(\Res{\vect n'}(\sigma' \parop P') \parop \Res{\vect n''}(\sigma'' \parop Q'')) \equiv \Res{\vect u}(B \parop C')
\end{align*}
where $C' = \Res{\vect n''}(\sigma'' \parop Q'')$.
Moreover, 
\begin{align*}
S' &\equiv 
\Res{\vect n''',\vect n,\vect n''}({(\sigma \parop \sigma''\sigma)}_{|\dom(\sigma \parop \sigma''\sigma) \setminus \{ \vect x \}} \parop P \parop Q''\sigma) \\
&\equiv 
\Res{\vect u}(\Res{\vect n}(\sigma \parop P) \parop \Res{\vect n''}(\sigma'' \parop Q'')) \\
&\equiv \Res{\vect u}(A \parop C')
\end{align*}
We let $T' = \Res{\vect u}(B \parop C')$. We have $T \rightarrow T'$. Since $\fv(Q'') \subseteq \dom(\sigma) = \dom(\sigma')$, $\Res{\vect u}(\hole \parop C')$ is closing for $A$ and $B$, and since $A \rel B$, we have $S' \equiv \rel' T'$.

\item $P \ltrP{\Rcv{N}{x}} A_1$, $P''\sigma \ltrP{\Res{x}\Snd{N}{x}} C_1$, and $Q \equiv \Res{x}(A_1 \parop C_1)$ for some $A_1$, $C_1$, $x$, and ground term $N$.
We rename $x$ so that $x \notin \dom(\sigma) = \dom(\sigma')$.
By Lemma~\ref{lem:caract-ltrP} applied twice,
$P \equivP \Res{\vect n_1}(\Rcv{N}{y}.P_1 \parop P_2)$,
$A_1 \equiv \Res{\vect n_1}(P_1\{\subst{x}{y}\} \parop P_2)$,
$\{\vect n_1\} \cap \fn(N) = \emptyset$ and
$P''\sigma \equivP \Res{\vect n_2}(\Snd{N}{M}.P_3 \parop P_4)$,
$C_1 \equiv \Res{\vect n_2}(P_3 \parop \{\subst{M}{x}\} \parop P_4)$,
$\{\vect n_2\} \cap \fn(N) = \emptyset$,
$x \notin \fv(\Snd{N}{M}.P_3 \parop P_4)$.
By Lemma~\ref{lem:closing}\eqref{closing:equivP}, we transform
$\Res{\vect n_1}(\Rcv{N}{y}.P_1 \parop P_2)$ and 
$\Res{\vect n_2}(\Snd{N}{M}.P_3 \parop P_4)$ into closed processes that
satisfy the same properties.
Since $A \rel B$, we have $A \enveq B$, that is, $\Res{\vect n}\sigma \enveq \Res{\vect n'}\sigma'$.
By Lemma~\ref{lem:equiv-enveq}, $P''\sigma' \equivP Q' \sigma'$,
$\fn(Q') \cap \{\vect n, \vect n'\} = \emptyset$,
and $\Sigma \vdash Q' \sigma = \Res{\vect n_2}(\Snd{N}{M}.P_3 \parop P_4)$
for some $Q'$ such that $Q'\sigma$ is closed, so $\fv(Q') \subseteq \dom(\sigma)$.
Then $Q'$ is of the form $Q' = \Res{\vect n_2}(\Snd{N'}{M'}.P'_3 \parop P'_4)$,
with $\Sigma \vdash N'\sigma = N$, $\Sigma \vdash M'\sigma = M$,
$\Sigma \vdash P'_3 \sigma = P_3$, $\Sigma \vdash P'_4 \sigma = P_4$,
and $\fv(\Snd{N'}{M'}) \subseteq \dom(\sigma) = \dom(A)$.
We rename $\vect n_1$ and $\vect n_2$ so that $\{\vect n_1\} \cap \fn(M) = \emptyset$, $\{\vect n_1\} \cap \{\vect n_2\} = \emptyset$,
$\{\vect n_1\} \cap (\fn(P'_3\sigma) \cup \fn(P'_4\sigma) \cup \fn(\sigma \parop \sigma''\sigma)) = \emptyset$, 
$\{\vect n_2\} \cap (\fn(P_1) \cup \fn(P_2) \cup \fn(\sigma \parop \sigma''\sigma)) = \emptyset$.
By Lemma~\ref{lem:caract-ltrP},
$P \equivP \Res{\vect n_1}(\Rcv{N}{y}.P_1 \parop P_2)
\ltrP{\Rcv{N}{M}}
\Res{\vect n_1}(P_1\{\subst{M}{y}\} \parop P_2)$.
By definition of $\ltrpnf{\Rcv{N'}{M'}}$, $A \equiv \Res{\vect n}(\sigma \parop P) 
\ltrpnf{\Rcv{N'}{M'}} A'$ where $A' = \Res{\vect n}(\sigma \parop \Res{\vect n_1}(P_1\{\subst{M}{y}\} \parop P_2))$.
(The elements of $\vect n$ do not occur in $\Rcv{N'}{M'}$ since $\fn(Q') \cap \{\vect n, \vect n'\} = \emptyset$.)
So by Lemma~\ref{lem:redalpha-pnf-to-std}, $A \ltr{\Rcv{N'}{M'}} A'$.
Since $A'$ is closed and $A \rel B$, we have $B \rightarrow^* \ltr{\Rcv{N'}{M'}} B'' \rightarrow^* B'$
and $A' \rel B'$ for some $B'$, $B''$.
By Lemmas~\ref{lem:red-std-to-pnf} and~\ref{lem:redalpha-std-to-pnf},
$\pnf(B) = \Res{\vect n'}(\sigma' \parop P') \redpnf^* \ltrpnf{\Rcv{N'}{M'}} B''$.
By Lemmas~\ref{lem:decomp-redpnf-closed} and~\ref{lem:decomp-ltrpnf},
$P' \redP^* \ltrP{\Rcv{N'\sigma'}{M'\sigma'}} B'''$ and $B'' \equiv \Res{\vect n'}(\sigma' \parop B''')$ for some $B'''$.
By Lemma~\ref{lem:caract-ltrP}, $P' \redP^* \equivP \Res{\vect n_3}(\Rcv{N'\sigma'}{z}.P_5 \parop P_6)$,
$B''' \equiv \Res{\vect n_3}(P_5\{\subst{M'\sigma'}{z}\} \parop P_6)$,
and $\{\vect n_3\} \cap \fn(\Rcv{N'\sigma'}{M'\sigma'}) = \emptyset$ for some $\vect n_3$, $P_5$, and $P_6$.
We rename $\vect n''$ so that $\{ \vect n'' \} \cap \fn(\Res{\vect n_1}(P_1\{\subst{M}{y}\} \parop P_2)) = \emptyset$
and $\{ \vect n'' \} \cap \fn(B''') = \emptyset$. 
Then we have
\begin{align*}
S' &\equiv 
\Res{\vect n''',\vect n,\vect n''}({(\sigma \parop \sigma''\sigma)}_{|\dom(\sigma \parop \sigma''\sigma) \setminus \{ \vect x \}} \parop \Res{x}(A_1 \parop C_1))\\
& \equiv 
\Res{\vect u}\Res{\vect n}\Res{\vect n''}(\sigma \parop \sigma''\sigma \parop \Res{x}(\Res{\vect n_1}(P_1\{\subst{x}{y}\} \parop P_2) \parop \Res{\vect n_2}(P_3 \parop \{\subst{M}{x}\} \parop P_4)) \\
&\equiv 
\Res{\vect u}\Res{\vect n}\Res{\vect n''}(\sigma \parop \sigma''\sigma \parop \Res{x}(\Res{\vect n_1}(P_1\{\subst{x}{y}\} \parop P_2) \parop \Res{\vect n_2}(P'_3\sigma \parop \{\subst{M}{x}\} \parop P'_4\sigma)) \\
&\equiv 
\Res{\vect u}\Res{\vect n}\Res{\vect n''}\Res{\vect n_1}\Res{\vect n_2}(\sigma \parop \sigma''\sigma \parop P_1\{\subst{M}{y}\} \parop P_2 \parop P'_3\sigma \parop P'_4\sigma) \\
&\equiv 
\Res{\vect u}\Res{\vect n}\Res{\vect n''}\Res{\vect n_1}\Res{\vect n_2}(\sigma \parop \sigma'' \parop P_1\{\subst{M}{y}\} \parop P_2 \parop P'_3 \parop P'_4) \\
&\equiv 
\Res{\vect u,\vect n_2}\Res{\vect n}\Res{\vect n''}(\sigma \parop \Res{\vect n_1}(P_1\{\subst{M}{y}\} \parop P_2) \parop \sigma'' \parop (P'_3 \parop P'_4)) \\
&\equiv \Res{\vect u,\vect n_2}(A' \parop C')
\end{align*}
where $C' = \Res{\vect n''}(\sigma'' \parop (P'_3 \parop P'_4))$.
We have 
\begin{align*}
T &\equiv \Res{\vect n''',\vect n',\vect n''}({(\sigma' \parop \sigma''\sigma')}_{|\dom(\sigma' \parop \sigma''\sigma') \setminus \{ \vect x \}} \parop P' \parop P'' \sigma')\\
&\rightarrow^*\equiv \Res{\vect u}\Res{\vect n'}\Res{\vect n''}(\sigma' \parop \sigma''\sigma' \parop \Res{\vect n_3}(\Rcv{N'\sigma'}{z}.P_5 \parop P_6) \\
&\phantom{\rightarrow^*\equiv \Res{\vect u}\Res{\vect n'}\Res{\vect n''}}\parop \Res{\vect n_2}(\Snd{N'\sigma'}{M'\sigma'}.P'_3\sigma' \parop P'_4\sigma'))\\
&\rightarrow \Res{\vect u,\vect n_2}\Res{\vect n'}\Res{\vect n''}(\sigma' \parop \sigma''\sigma' \parop \Res{\vect n_3}(P_5\{\subst{M'\sigma'}{z}\} \parop P_6) \parop (P'_3 \sigma' \parop P'_4 \sigma'))\\
&\equiv \Res{\vect u,\vect n_2}\Res{\vect n'}\Res{\vect n''}(\sigma' \parop \Res{\vect n_3}(P_5\{\subst{M'\sigma'}{z}\} \parop P_6) \parop \sigma'' \parop (P'_3 \parop P'_4))\\
&\equiv \Res{\vect u,\vect n_2}(\Res{\vect n'}(\sigma' \parop B''') \parop C')\\
&\equiv \Res{\vect u,\vect n_2}(B'' \parop C') \rightarrow^* \Res{\vect u}(B' \parop C')
\end{align*}
We let $T' = \Res{\vect u,\vect n_2}(B' \parop C')$.
Hence, $T \rightarrow^* T'$. Since $\fv(P'_3 \parop P'_4) \subseteq \fv(Q') \subseteq \dom(\sigma)$, $\Res{\vect u,\vect n_2}(\hole \parop C')$ is closing for $A'$ and $B'$, and moreover $A' \rel B'$, so $S' \equiv \rel' T'$.

\item $P \ltrP{\Res{x}\Snd{N}{x}} A_1$, $P''\sigma \ltrP{\Rcv{N}{x}} C_1$, and $Q \equiv \Res{x}(A_1 \parop C_1)$ for some $A_1$, $C_1$, $x$, and ground term $N$.
We rename $x$ so that $x \notin \dom(\sigma) = \dom(\sigma')$.
By Lemma~\ref{lem:closing}\eqref{closing:ltrP}, we transform $A_1$ into a closed extended process that satisfies the same properties.
Since $A \rel B$, we have $A \enveq B$, that is, $\Res{\vect n}\sigma \enveq \Res{\vect n'}\sigma'$.
By Lemma~\ref{lem:ltr-enveq}\eqref{prop:ltr-enveq-closed}, $P'' \sigma' \ltrP{\Rcv{N'\sigma'}{x}} C_2 \sigma'$, $\fn(C_2) \cap \{ \vect n, \vect n'\} = \emptyset$, $C_1 \equiv C_2 \sigma$, $\fn(N') \cap \{ \vect n, \vect n'\} = \emptyset$, $\Sigma \vdash N = N' \sigma$, $\fv(C_2) \subseteq \dom(\sigma) \cup \{x\}$, and $\fv(N') \subseteq \dom(\sigma) = \dom(A)$ for some $C_2$ and $N'$.
By definition of $\ltrpnf{\Res{x}\Snd{N'}{x}}$, $A \equiv \Res{\vect n}(\sigma \parop P) 
\ltrpnf{\Res{x}\Snd{N'}{x}} A'$ where $A' = \Res{\vect n}(\sigma \parop A_1)$,
so by Lemma~\ref{lem:redalpha-pnf-to-std}, $A \ltr{\Res{x}\Snd{N'}{x}} A'$.
Since $A'$ is closed and $A \rel B$, we have $B \rightarrow^* \ltr{\Res{x}\Snd{N'}{x}} B'' \rightarrow^* B'$
and $A' \rel B'$ for some $B'$, $B''$.
By Lemmas~\ref{lem:red-std-to-pnf} and~\ref{lem:redalpha-std-to-pnf},
$\pnf(B) = \Res{\vect n'}(\sigma' \parop P') \redpnf^* \ltrpnf{\Res{x}\Snd{N'}{x}} B''$.
By Lemmas~\ref{lem:decomp-redpnf-closed} and~\ref{lem:decomp-ltrpnf},
$P' \redP^* \ltrP{\Res{x}\Snd{N'\sigma'}{x}} B'''$ and $B'' \equiv \Res{\vect n'}(\sigma' \parop B''')$ for some $B'''$.
By Lemma~\ref{lem:comp-ltr}, $P' \parop P'' \sigma' \redP^* \redP \Res{x} (B''' \parop C_2 \sigma')$.
We rename $\vect n''$ so that $\{ \vect n'' \} \cap \fn(A_1) = \emptyset$
and $\{ \vect n'' \} \cap \fn(B''') = \emptyset$. 
Moreover, 
\begin{align*}
S' &\equiv 
\Res{\vect n''',\vect n,\vect n''}({(\sigma \parop \sigma''\sigma)}_{|\dom(\sigma \parop \sigma''\sigma) \setminus \{ \vect x \}} \parop \Res{x}(A_1 \parop C_1))\\
& \equiv 
\Res{x,\vect u}\Res{\vect n}\Res{\vect n''}(\sigma \parop \sigma''\sigma \parop A_1 \parop C_2 \sigma) \\
&\equiv 
\Res{x,\vect u}(\Res{\vect n}(\sigma \parop A_1) \parop \Res{\vect n''}(\sigma'' \parop C_2)) \\
&\equiv \Res{x,\vect u}(A' \parop C')
\end{align*}
where $C' =  \Res{\vect n''}(\sigma'' \parop C_2)$.
We have 
\begin{align*}
T &\equiv \Res{\vect n''',\vect n',\vect n''}({(\sigma' \parop \sigma''\sigma')}_{|\dom(\sigma' \parop \sigma''\sigma') \setminus \{ \vect x \}} \parop P' \parop P'' \sigma')\\
&\rightarrow^* \rightarrow \Res{\vect n''',\vect n',\vect n''}({(\sigma' \parop \sigma''\sigma')}_{|\dom(\sigma' \parop \sigma''\sigma') \setminus \{ \vect x \}} \parop \Res{x}(B'''\parop C_2 \sigma')) \\
&\quad \equiv
\Res{x,\vect u}(\Res{n'}(\sigma' \parop B''') \parop \Res{n''}(\sigma'' \parop C_2)) \equiv \Res{x,\vect u}(B'' \parop C')\\
&\quad \rightarrow^* \Res{x,\vect u}(B' \parop C')
\end{align*}
We let $T' = \Res{x,\vect u}(B' \parop C')$.
We have $T \rightarrow^* T'$ and, since $A' \rel B'$ and $\Res{x,\vect u}(\hole \parop C')$ is closing for $A'$ and $B'$, 
$S' \equiv \rel' T'$.

\end{enumerate}

\item For every $S \ltr{\alpha} S'$ with $S'$ closed and $\fv(\alpha) \subseteq \dom(S)$, we prove that $T \rightarrow^* \ltr{\alpha} \rightarrow^* T'$ and $S' \equiv \rel' \equiv T'$ for some $T'$.
We rename $\vect n''', \vect n, \vect n', \vect n''$ so that these names do 
not occur in $\alpha$.
By Lemma~\ref{lem:redalpha-std-to-pnf}, we have $\pnf(S) \ltrpnf{\alpha} S'$.
By Lemma~\ref{lem:decomp-ltrpnf}, we have $P \parop P'' \sigma \ltrP{\alpha'} A_0$,
$S' \equiv  \Res{\vect n''',\vect n,\vect n''}({(\sigma \parop \sigma''\sigma)}_{|\dom(\sigma \parop \sigma''\sigma) \setminus \{ \vect x \}} \parop A_0)$, and 
$\bv(\alpha) \cap \dom(\sigma\parop \sigma''\sigma) \setminus \{ \vect x \} = \emptyset$  for $\alpha' = \alpha (\sigma \parop \sigma''\sigma)_{|\dom(\sigma \parop \sigma''\sigma) \setminus \{ \vect x \}} = \alpha \sigma''\sigma$ and some
$A_0$. 
We rename $\vect x$ so that $\bv(\alpha) \cap \{ \vect x \} = \emptyset$.
By Lemma~\ref{lem:decomp-ltrP},
we have two cases:
\begin{enumerate}
\item $P \ltrP{\alpha'} A_1$ and $A_0 \equiv A_1 \parop P''\sigma$ for some $A_1$.
By Lemma~\ref{lem:closing}\eqref{closing:ltrP}, we transform $A_1$ into
a closed extended process that satisfies the same properties.
We have $\alpha' = (\alpha \sigma'')\sigma$,
the elements of $\vect n$ do not occur in $\alpha \sigma''$,
because they do not occur in $\alpha$ nor in $\sigma''$.
We also have $\bv(\alpha') \cap \dom(\sigma) = \emptyset$.
By definition of $\ltrpnf{\alpha\sigma''}$, we have $A \equiv \Res{\vect n}(\sigma \parop P) \ltrpnf{\alpha\sigma''} A'$ where $A' = \Res{\vect n}(\sigma \parop A_1)$, so by Lemma~\ref{lem:redalpha-pnf-to-std}, $A \ltr{\alpha\sigma''} A'$.
Since $\fv(\alpha) \subseteq \dom(S) \subseteq \dom(\sigma) \cup \dom(\sigma'')$, we have $\fv(\alpha\sigma'') \subseteq \dom(\sigma) = \dom(A)$.
Since $A'$ is closed and $A \rel B$, we have $B \rightarrow^* \ltr{\alpha\sigma''} B'' \rightarrow^* B'$
and $A' \rel B'$ for some $B'$ and $B''$.
By Lemmas~\ref{lem:red-std-to-pnf} and~\ref{lem:redalpha-std-to-pnf},
$\pnf(B) = \Res{\vect n'}(\sigma' \parop P') \redpnf^* \ltrpnf{\alpha\sigma''} B''$.
By Lemmas~\ref{lem:decomp-redpnf-closed} and~\ref{lem:decomp-ltrpnf},
$P' \redP^* \ltrP{\alpha\sigma''\sigma'} B'''$, $B'' \equiv \Res{\vect n'}(\sigma' \parop B''')$, and $\bv(\alpha) \cap \dom(\sigma') = \emptyset$ for some $B'''$.
Hence by $\brn{Par}'$, $P' \parop P''\sigma' \redP^* \ltrP{\alpha\sigma''\sigma'} B''' \parop P''\sigma'$. By definition of $\ltrpnf{\alpha}$,
\[\begin{split} 
\pnf(T) &= \Res{\vect n''',\vect n',\vect n''}({(\sigma' \parop \sigma''\sigma')}_{|\dom(\sigma' \parop \sigma''\sigma') \setminus \{ \vect x \}} \parop P' \parop P'' \sigma')\\
& \redpnf^* \ltrpnf{\alpha} \Res{\vect n''',\vect n',\vect n''}({(\sigma' \parop \sigma''\sigma')}_{|\dom(\sigma' \parop \sigma''\sigma') \setminus \{ \vect x \}} \parop B''' \parop P'' \sigma')\,.
\end{split}\]
(We have $\bv(\alpha) \cap \fv({(\sigma' \parop \sigma''\sigma')}_{|\dom(\sigma' \parop \sigma''\sigma') \setminus \{ \vect x \}}) = \emptyset$ because we have 
$\bv(\alpha) = \bv(\alpha')$,
$\fv({(\sigma' \parop \sigma''\sigma')}_{|\dom(\sigma' \parop \sigma''\sigma') \setminus \{ \vect x \}}) = \dom(\sigma') \cup \dom(\sigma'') \setminus \{\vect x\} = \dom(\sigma) \cup \dom(\sigma'') \setminus \{ \vect x\}$,  and 
$\bv(\alpha') \cap (\dom(\sigma) \cup \dom(\sigma'') \setminus \{ \vect x \}) = \emptyset$.)
We rename $\vect n''$ so that $\{\vect n''\} \cap \fn(A_1) = \emptyset$ and $\{\vect n''\} \cap \fn(B''') = \emptyset$. 
By Lemmas~\ref{lem:equivpnf}, \ref{lem:red-pnf-to-std}, and~\ref{lem:redalpha-pnf-to-std}, we have
\begin{align*}
T &\equiv \pnf(T) \\
&\rightarrow^* \ltr{\alpha} \Res{\vect n''',\vect n',\vect n''}({(\sigma' \parop \sigma''\sigma')}_{|\dom(\sigma' \parop \sigma''\sigma') \setminus \{ \vect x \}} \parop B''' \parop P'' \sigma')\\
&\equiv \Res{\vect u}(\Res{\vect n'}(\sigma' \parop B''') \parop \Res{\vect n''}(\sigma'' \parop P''))\\
&\equiv \Res{\vect u}(B'' \parop C) \\
&\rightarrow^* \Res{\vect u}(B' \parop C)
\end{align*}
Moreover,
\begin{align*}
S' &\equiv  \Res{\vect n''',\vect n,\vect n''}({(\sigma \parop \sigma''\sigma)}_{|\dom(\sigma \parop \sigma''\sigma) \setminus \{ \vect x \}} \parop A_0)\\
&\equiv \Res{\vect n''',\vect n,\vect n''}({(\sigma \parop \sigma''\sigma)}_{|\dom(\sigma \parop \sigma''\sigma) \setminus \{ \vect x \}} \parop A_1 \parop P''\sigma)\\
&\equiv \Res{\vect u}(\Res{\vect n}(\sigma \parop A_1) \parop \Res{\vect n''}(\sigma'' \parop P''))\\
&\equiv \Res{\vect u}(A' \parop C)
\end{align*}
We let $T' = \Res{\vect u}(B' \parop C)$.
Then we have $T \rightarrow^* \ltr{\alpha} \rightarrow^* T'$
and $\nuc{\hole}$ is closing for $A'$ and $B'$ so $S' \equiv \rel' T'$.

\item $P''\sigma \ltrP{\alpha'} A_1$ and $A_0 \equiv P \parop A_1$ for some $A_1$.
Since $A \rel B$, we have $A \enveq B$, that is, $\Res{\vect n}\sigma \enveq \Res{\vect n'}\sigma'$.
We have $\Sigma \vdash (\alpha\sigma'')\sigma = \alpha'$, so if $\alpha' = \Rcv{N}{M}$, then we have $\alpha\sigma'' = \Rcv{N'}{M'}$, $\Sigma \vdash M'\sigma = M$, and $\fn(M') \cap \{\vect n, \vect n'\} = \emptyset$ for some $N', M'$.
By Lemma~\ref{lem:ltr-enveq}\eqref{prop:ltr-enveq-open}, $P'' \sigma' \ltrP{\alpha''\sigma'} A''\sigma'$, $\fn(A'') \cap \{\vect n, \vect n'\} = \emptyset$, $A_1 \equiv A''\sigma$, $\fn(\alpha'') \cap \{\vect n, \vect n'\} = \emptyset$, $\Sigma \vdash \alpha' = \alpha'' \sigma$, and $A''\sigma$ is closed for some $A'', \alpha''$.
By \brn{Par'}, $P' \parop P''\sigma' \ltrP{\alpha''\sigma'} P' \parop A'' \sigma'$.
We have $\Sigma \vdash \alpha\sigma''\sigma = \alpha' = \alpha''\sigma$,
$\fn(\alpha\sigma'') \cap \{\vect n, \vect n'\} = \emptyset$,
$\fn(\alpha'') \cap \{\vect n, \vect n'\} = \emptyset$,
and $\Res{\vect n}\sigma \enveq \Res{\vect n'}\sigma'$, so
$\Sigma \vdash \alpha\sigma''\sigma' = \alpha''\sigma'$ by definition of 
static equivalence.
By definition of $\ltrpnf{\alpha}$,
\[\begin{split}
\pnf(T) &= \Res{\vect n''',\vect n',\vect n''}({(\sigma' \parop \sigma''\sigma')}_{|\dom(\sigma' \parop \sigma''\sigma') \setminus \{ \vect x \}} \parop P' \parop P'' \sigma')\\
&\quad\ltrpnf{\alpha} \Res{\vect n''',\vect n',\vect n''}({(\sigma' \parop \sigma''\sigma')}_{|\dom(\sigma' \parop \sigma''\sigma') \setminus \{ \vect x \}} \parop P' \parop A'' \sigma')\,.
\end{split}\]
(We have $\bv(\alpha''\sigma') \cap \fv({(\sigma' \parop \sigma''\sigma')}_{|\dom(\sigma' \parop \sigma''\sigma') \setminus \{ \vect x \}})  = \emptyset$ because
$\bv(\alpha''\sigma') = \bv(\alpha'') = \bv(\alpha')$, 
$\fv({(\sigma' \parop \sigma''\sigma')}_{|\dom(\sigma' \parop \sigma''\sigma') \setminus \{ \vect x \}}) = \dom(\sigma') \cup \dom(\sigma'') \setminus \{\vect x\} = \dom(\sigma) \cup \dom(\sigma'') \setminus \{ \vect x\}$, and $\bv(\alpha') \cap (\dom(\sigma) \cup \dom(\sigma'') \setminus \{ \vect x \}) = \emptyset$.)
Hence, by Lemmas~\ref{lem:equivpnf}, \ref{lem:red-pnf-to-std}, and~\ref{lem:redalpha-pnf-to-std}, we have
\begin{align*}
T &\equiv \pnf(T) \\
&\ltr{\alpha} \Res{\vect n''',\vect n',\vect n''}({(\sigma' \parop \sigma''\sigma')}_{|\dom(\sigma' \parop \sigma''\sigma') \setminus \{ \vect x \}} \parop P' \parop A'' \sigma')\\
&\equiv \Res{\vect u}(\Res{\vect n'}(\sigma' \parop P') \parop \Res{\vect n''}(\sigma'' \parop A''))\\
&\equiv \Res{\vect u}(B \parop C')
\end{align*}
where $C' = \Res{\vect n''}(\sigma'' \parop A'')$. Moreover,
\begin{align*}
S' &\equiv  \Res{\vect n''',\vect n,\vect n''}({(\sigma \parop \sigma''\sigma)}_{|\dom(\sigma \parop \sigma''\sigma) \setminus \{ \vect x \}} \parop A_0)\\
&\equiv  \Res{\vect n''',\vect n,\vect n''}({(\sigma \parop \sigma''\sigma)}_{|\dom(\sigma \parop \sigma''\sigma) \setminus \{ \vect x \}} \parop P \parop A_1)\\
&\equiv  \Res{\vect n''',\vect n,\vect n''}({(\sigma \parop \sigma''\sigma)}_{|\dom(\sigma \parop \sigma''\sigma) \setminus \{ \vect x \}} \parop P \parop A''\sigma)\\
&\equiv \Res{\vect u}(\Res{\vect n}(\sigma \parop P) \parop \Res{\vect n''}(\sigma'' \parop A''))\\
&\equiv \Res{\vect u}(A \parop C')
\end{align*}
We let $T' = \Res{\vect u}(B \parop C')$. We have $T \ltr{\alpha} T'$ and since $A \rel B$ and $\Res{\vect u}(\hole \parop C')$ is closing for $A$ and $B$, we have $S' \equiv \rel' T'$.
\qedhere

\end{enumerate}

\end{enumerate}
\end{proof}

\begin{restate}{Lemma}{\ref{lem:caract-barb}}
Let $A$ be a closed extended process.
We have $A \barb{a}$ if and only if $A \rightarrow^* \ltr{\nu x.\Snd{a}{x}} A'$ for some fresh variable $x$ and some $A'$.
\end{restate}%
\begin{proof}
In order to establish this claim, we argue that
$A \equiv \CTX[\Snd{a}{M}.P]$ for some evaluation context $\CTX[\hole]$ that does not bind $a$ if and only if $A \ltr{\nu x.\Snd{a}{x}} A'$ for some fresh variable $x$ and some $A'$. 

For the implication from left to right, let $x$ be a fresh variable. We derive
\begin{align*}
&\Snd{a}{M}.P \ltr{\nu x.\Snd{a}{x}} P \parop \{\subst{M}{x}\} \tag*{by \brn{Out-Var}}\\
&\CTX[\Snd{a}{M}.P] \ltr{\nu x.\Snd{a}{x}} \CTX[P \parop \{\subst{M}{x}\}]\tag*{by \brn{Par} and \brn{Scope}}\\
&A \ltr{\nu x.\Snd{a}{x}} \CTX[P \parop \{\subst{M}{x}\}]\tag*{by \brn{Struct}, since $A \equiv \CTX[\Snd{a}{M}.P]$}
\end{align*}

Conversely, if $A \ltr{\nu x.\Snd{a}{x}} A'$ for some fresh variable $x$ and some $A'$, then we show by induction on the derivation that $A \equiv \CTX[\Snd{a}{M}.P]$ for some evaluation context $\CTX[\hole]$ that does not bind $a$. In case \brn{Out-Var}, the context $\CTX$ is empty. In case \brn{Scope}, a restriction that does not bind $a$ is added to $\CTX$. In case \brn{Par}, a parallel composition is added to $\CTX$. In case \brn{Struct}, the context $\CTX$ is unchanged.
\end{proof}

\subsection{Observational Equivalence Implies Labelled Bisimilarity}\label{app:twodirection}

Finally, the goal of this section is to establish the lemmas needed in
the outline of the argument that observational equivalence implies
labelled bisimilarity in Section~\ref{sec:bigpf}. The section also
contains a corollary, namely that observational equivalence and static
equivalence coincide on frames.

\begin{lemma}\label{lem:p-notin-fnA-preserved}
Let $P$ be a plain process. The existence of $P'$ such that $\Sigma \vdash P = P'$ and $p \notin \fn(P')$ is preserved by structural equivalence ($\equivP$) and reduction ($\redP$) of $P$.

Let $A$ be a normal process.
The existence of $A'$ such that $\Sigma \vdash A = A'$ and $p \notin \fn(A')$ is preserved by structural equivalence ($\equivpnf$) and reduction ($\redpnf$) of $A$.
\end{lemma}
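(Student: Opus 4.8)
The statement asserts a preservation property: if a process $P$ (resp.\ a normal process $A$) is ``equationally $p$-free'' --- meaning there is some $P'$ with $\Sigma \vdash P = P'$ and $p \notin \fn(P')$ --- then after any $\equivP$ or $\redP$ step (resp.\ $\equivpnf$ or $\redpnf$ step) the resulting process still has this property. The plan is to proceed by induction on the derivation of the structural equivalence or reduction, in the style of Lemmas~\ref{lem:equiv-enveq} and~\ref{lem:red-enveq}, but now tracking the single name $p$ rather than a pair of equivalent frames. The key observation that makes the induction go through is that the property ``$\exists P'.\ \Sigma \vdash P = P' \wedge p \notin \fn(P')$'' is itself closed under $\equivP$ as soon as we know it holds for $P$: so in each inductive case I only need to exhibit one witness for the target process.

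First I would handle the plain-process part. For each base rule of $\equivP$ and $\redP$, given a witness $P'$ for the left-hand side, I construct a witness for the right-hand side. For the purely structural rules ($\brn{Par}$-rules, $\brn{Repl}'$, $\brn{New}$-rules) the witness is obtained by applying the same structural rule to $P'$, using that $\fn$ does not grow and that $\brn{New-Par}'$'s side condition $n \notin \fn(P)$ transfers to $P'$ after $\alpha$-renaming $n$ away from $\fn(P')$ if needed. For $\brn{Rewrite}'$ the two sides are already $\Sigma$-equal, so the same witness works (transitivity of $\Sigma \vdash \cdot = \cdot$). For $\brn{Comm}'$, $\brn{Then}'$, $\brn{Else}'$ the reduct is a subterm-plus-substitution combination of pieces of $P$; since $\Sigma \vdash P = P'$ is a congruence, $P'$ decomposes correspondingly (e.g.\ $P' = \Snd{N'}{M'}.P'_0 \parop \Rcv{N'}{x}.Q'_0$ with the primed components $\Sigma$-equal to the unprimed ones and $p$-free), and the reduct of $P'$ is the required witness; for $\brn{Else}'$ one also uses that $\Sigma \not\vdash M = N$ together with $\Sigma \vdash M = M'$, $\Sigma \vdash N = N'$ forces $\Sigma \not\vdash M' = N'$, so $\brn{Else}'$ applies to $P'$ too. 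The context and transitivity cases are routine: transitivity applies the induction hypothesis twice, and evaluation-context closure follows because $p \notin \fn(P')$ is compositional over $\parop$, $\nu$, and congruence. The normal-process part ($\equivpnf$, $\redpnf$) is then an immediate corollary: each $\equivpnf$ / $\redpnf$ rule acts on $\Res{\vect n}(\sigma \parop P)$ either by manipulating the restricted names (which only shrinks $\fn$), by a $\brn{Rewrite}''$ on $\sigma$ (covered by transitivity of $\Sigma$-equality, since $\Sigma \vdash A = A'$ already allows rewriting $\sigma$), or by a $P \equivP P'$ / $P \redP P'$ step inside, for which I invoke the plain-process result just proved and combine the witnesses through the congruence $\Res{\vect n}(\comp{}{}\cdot)$.

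The main obstacle I anticipate is the bookkeeping in the $\equivpnf$ rule $\brn{Rewrite}''$, where $\sigma$ is rewritten to $\sigma'$ possibly introducing fresh variables, and one must be careful that the witness $A'$ can simultaneously absorb the rewrite of $\sigma$ \emph{and} keep $p$ out of the image of the (new) substitution; this is handled by noting that the rewrite on $\sigma$ is a $\Sigma$-equality step, so it composes with the given $\Sigma \vdash A = A'$ by transitivity and the $p$-freeness of the witness is unaffected (we are free to rewrite $\sigma'$ back toward a $p$-free representative). A secondary nuisance is the side conditions on bound names in $\brn{New-Par}$-style rules, where the witness may bind a name appearing free in the other operand; this is dispatched by $\alpha$-renaming the witness, which is legitimate since we only need \emph{some} witness, not a canonical one. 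None of these steps require genuinely new ideas beyond those already used in the proofs of Lemmas~\ref{lem:equiv-enveq}, \ref{lem:red-enveq}, and~\ref{lem:instance}.
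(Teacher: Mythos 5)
Your proposal is correct and follows essentially the same route as the paper's proof: induction on the derivation, constructing a witness for the target process case by case (transitivity for the \brn{Rewrite} rules, structural decomposition of $P'$ for the remaining rules), and reducing the normal-process statement to the plain-process one. The only cosmetic difference is that for the reduction cases you arrange for $P'$ itself to reduce (checking, e.g., that \brn{Else}$'$ still applies to $P'$), whereas the paper simply takes the corresponding rearrangement of $P'$ as the witness without needing $P'$ to perform any step --- both work, since only $\Sigma$-equality to the reduct and $p$-freeness are required of the witness.
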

\begin{proof}
\emph{Property~1:} Suppose that $P \equivP Q$, $\Sigma \vdash P = P'$, and $p \notin \fn(P')$.
We show that there exists $Q'$ such that $\Sigma \vdash Q = Q'$ and $p \notin \fn(Q')$, 
by induction on the derivation of $P \equivP Q$. We consider as base cases
the application of each rule under an evaluation context, in the two directions,
and use induction only for transitivity.
\begin{itemize}
\item Case $\brn{Rewrite}'$, under an evaluation context $\CTX$. We have $\CTX[P_1\{\subst{M}{x}\}] \equivP \CTX[P_1\{\subst{N}{x}\}]$, $\Sigma \vdash M = N$, $\Sigma \vdash \CTX[P_1\{\subst{M}{x}\}] = P'$, and $p \notin \fn(P')$. Since $\Sigma \vdash \CTX[P_1\{\subst{N}{x}\}] = \CTX[P_1\{\subst{M}{x}\}] = P'$, we have the result with $Q' = Q$.
\item Case $\brn{Par-C}'$, under an evaluation context $\CTX$. We have $\CTX[P_1 \parop Q_1] \equivP \CTX[Q_1 \parop P_1]$, $\Sigma \vdash \CTX[P_1 \parop Q_1] = P'$, and $p \notin \fn(P')$. Since $\Sigma \vdash \CTX[P_1 \parop Q_1] = P'$, we have $P' = \CTX'[P'_1 \parop Q'_1]$ with $\Sigma \vdash \CTX = \CTX'$, $\Sigma \vdash P_1 = P'_1$, and $\Sigma \vdash Q_1 = Q'_1$. Let $Q' = \CTX'[Q'_1 \parop P'_1]$. We have $\Sigma \vdash \CTX[Q_1 \parop P_1] = Q'$ and $p \notin \fn(Q') = \fn(P')$.
\item All other base cases are handled similarly to case $\brn{Par-C}'$.
\item The case of transitivity follows by applying the induction hypothesis twice.
\end{itemize}
\emph{Property~2:} Suppose that $P \redP Q$, $\Sigma \vdash P = P'$, and $p \notin \fn(P')$.
We show that there exists $Q'$ such that $\Sigma \vdash Q = Q'$ and $p \notin \fn(Q')$,
by induction on the derivation of $P \redP Q$. Again, we consider as base cases
the application of each rule under an evaluation context,
and use induction only for the application of $\equivP$.
\begin{itemize}
\item Case $\brn{Comm}'$, under an evaluation context $\CTX$. We have $\CTX[\Snd{N}{M}.P_1 \parop \Rcv{N}{x}.Q_1] \redP \CTX[P_1 \parop Q_1\{\subst{M}{x}\}]$, $\Sigma \vdash \CTX[\Snd{N}{M}.P_1 \parop \Rcv{N}{x}.Q_1] = P'$, and $p \notin \fn(P')$. Since $\Sigma \vdash \CTX[\Snd{N}{M}.P_1 \parop \Rcv{N}{x}.Q_1] = P'$, we have $P' = \CTX'[\Snd{N'}{M'}.P'_1 \parop \Rcv{N''}{x}.Q'_1]$ with $\Sigma \vdash \CTX = \CTX'$, $\Sigma \vdash M = M'$, $\Sigma \vdash P_1 = P'_1$, and $\Sigma \vdash Q_1 = Q'_1$. Let $Q' = \CTX'[P'_1 \parop Q'_1\{\subst{M'}{x}\}]$.
We have $\Sigma \vdash \CTX[P_1 \parop Q_1\{\subst{M}{x}\}] = Q'$ and $p \notin \fn(Q')$ since $\fn(Q') \subseteq \fn(P')$.

\item Case $\brn{Then}'$, under an evaluation context $\CTX$. We have $\CTX[\IfThenElse{M}{M}{P_1}{Q_1}] \redP \CTX[P_1]$,  $\Sigma \vdash \CTX[\IfThenElse{M}{M}{P_1}{Q_1}] = P'$, and $p \notin \fn(P')$. Since $\Sigma \vdash \CTX[\IfThenElse{M}{M}{P_1}{Q_1}] = P'$, we have $P' = \CTX'[\IfThenElse{M'}{M''}{P'_1}{Q'_1}]$ with $\Sigma \vdash \CTX = \CTX'$ and $\Sigma \vdash P_1 = P'_1$. Let $Q' = \CTX'[P'_1]$.
We have $\Sigma \vdash \CTX[P_1] = Q'$ and $p \notin \fn(Q')$ since $\fn(Q') \subseteq \fn(P')$.

\item Case $\brn{Else}'$ is handled similarly to case $\brn{Then}'$.
\item In case we additionally apply $\equivP$, we conclude using Property~1 and the induction hypothesis.
\end{itemize}
\emph{Property~3:} Suppose that $A \equivpnf B$, $\Sigma \vdash A = A'$, and $p \notin \fn(A')$.
We show that there exists $B'$ such that $\Sigma \vdash B = B'$ and $p \notin \fn(B')$, 
by induction on the derivation of $A \equivpnf B$.
\begin{itemize}
\item Case $\brn{Plain}''$ follows from Property~1.
\item Case $\brn{New-Par}''$. We have $\Res{\vect n}(\sigma \parop \Res{n'}P) \equivpnf \Res{\vect n, n'}(\sigma \parop P)$
with $n' \notin \fn(\sigma)$,  $\Sigma \vdash \Res{\vect n}(\sigma \parop \Res{n'}P) = A'$, and $p \notin \fn(A')$.
If $p \in \{ \vect n, n'\}$, then we have the result with $B' = \Res{\vect n, n'}(\sigma \parop P)$.
Otherwise, since $\Sigma \vdash \Res{\vect n}(\sigma \parop \Res{n'}P) = A'$, we have $A' = \Res{\vect n}(\sigma' \parop \Res{n'}P')$
with $\Sigma \vdash \sigma = \sigma'$ and $\Sigma \vdash P = P'$.
Let $B' = \Res{\vect n, n'}(\sigma' \parop P')$. We have $\Sigma \vdash \Res{\vect n, n'}(\sigma \parop P) = B'$ and $p \notin \fn(B')$ since $\fn(B') \subseteq \fn(A')$.
\item Case $\brn{New-Par}''$ reversed. We have $\Res{\vect n, n'}(\sigma \parop P) \equivpnf \Res{\vect n}(\sigma \parop \Res{n'}P)$
with $n' \notin \fn(\sigma)$,  $\Sigma \vdash \Res{\vect n,n'}(\sigma \parop P) = A'$, and $p \notin \fn(A')$.
If $p \in \{ \vect n\}$, then we have the result with $B' = \Res{\vect n}(\sigma \parop \Res{n'}P)$.
If $p = n'$, then we also have the result with $B' = \Res{\vect n}(\sigma \parop \Res{n'}P)$ because $n' \notin \fn(\sigma)$.
Otherwise, since $\Sigma \vdash \Res{\vect n,n'}(\sigma \parop P) = A'$, we have $A' = \Res{\vect n,n'}(\sigma' \parop P')$
with $\Sigma \vdash \sigma = \sigma'$ and $\Sigma \vdash P = P'$.
Let $B' = \Res{\vect n}(\sigma' \parop \Res{n'}P')$. We have $\Sigma \vdash \Res{\vect n}(\sigma \parop \Res{n'}P) = B'$ and $p \notin \fn(B')$.
\item Case $\brn{New-C}''$ is handled similarly to case $\brn{New-Par}''$.
\item Case $\brn{Rewrite}''$. We have $\Res{\vect n}(\sigma \parop P) \equivpnf \Res{\vect n}(\sigma' \parop P)$ with $\Sigma \vdash \sigma = \sigma'$, $\Sigma \vdash \Res{\vect n}(\sigma \parop P) = A'$, and $p \notin \fn(A')$. Since
$\Sigma \vdash \Res{\vect n}(\sigma' \parop P) = \Res{\vect n}(\sigma \parop P) = A'$, we have the result with $B' = A'$.
\item The case of transitivity follows by applying the induction hypothesis twice.
\end{itemize}
\emph{Property~4:} Suppose that $A \redpnf B$, $\Sigma \vdash A = A'$, and $p \notin \fn(A')$.
We show that there exists $B'$ such that $\Sigma \vdash B = B'$ and $p \notin \fn(B')$.
Suppose $\Res{\vect n}(\sigma \parop P) \redpnf \Res{\vect n}(\sigma \parop Q)$ with $P \redP Q$, 
$\Sigma \vdash \Res{\vect n}(\sigma \parop P) = A'$, and $p \notin \fn(A')$. 
If $p \in \{\vect n\}$, then we have the result with $B' = \Res{\vect n}(\sigma \parop Q)$.
Otherwise, since
$\Sigma \vdash \Res{\vect n}(\sigma \parop P) = A'$, we have $A' = \Res{\vect n}(\sigma' \parop P')$ with 
$\Sigma \vdash \sigma = \sigma'$ and $\Sigma \vdash P = P'$. Since $p \notin \fn(A')$ and $p \notin \{\vect n\}$,
$p \notin \fn(\sigma') \cup \fn(P')$. By Property~2, there exists $Q'$ such that $\Sigma \vdash Q = Q'$ and $p \notin \fn(Q')$. Let $B' = \Res{\vect n}(\sigma' \parop Q')$. We have $\Sigma \vdash \Res{\vect n}(\sigma \parop Q) = B'$ and $p \notin \fn(B')$.
In case we additionally apply $\equivpnf$, we conclude using Property~3 and the induction hypothesis.
\end{proof}

\begin{lemma}\label{lem:barb-obvious}
If $p \notin \fn(A)$, then $A \not\barb{p}$.
\end{lemma}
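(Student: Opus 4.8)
The plan is to assume $A \barb p$ and contradict $p \notin \fn(A)$. Unfolding the definition, $A \barb p$ means that $A \rightarrow^* B$ for some $B$, with $B \equiv \CTX[\Snd p M.P]$ for an evaluation context $\CTX[\hole]$ that does not bind $p$. The subtlety is that structural equivalence on extended processes does not preserve the free-name set --- by \rulename{Alias}, $\nil \equiv \nu x.\{\subst{p}{x}\}$ --- so one cannot directly track ``$p \notin \fn(\cdot)$'' along the reduction. I would therefore pass to partial normal forms, which have precisely eliminated the \rulename{Alias}-redexes, and carry instead the invariant of Lemma~\ref{lem:p-notin-fnA-preserved}. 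Write $\Phi(C)$ for the property that there exists $C'$ with $\Sigma \vdash C = C'$ and $p \notin \fn(C')$, where $\Sigma \vdash \cdot = \cdot$ on processes is the term-rewriting relation introduced before Lemma~\ref{lem:inv-subst}.

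First I would note, by a straightforward induction on $A$ like the one in the proof of Lemma~\ref{lem:pnf-closed}, that $\fn(\pnf(A)) \subseteq \fn(A)$, so $p \notin \fn(A)$ gives $p \notin \fn(\pnf(A))$ and hence $\Phi(\pnf(A))$ (with witness $\pnf(A)$ itself). By Lemma~\ref{lem:red-std-to-pnf}, $A \rightarrow^* B$ gives $\pnf(A) \redpnf^* \pnf(B)$; by Lemma~\ref{lem:struct-std-to-pnf}, $B \equiv \CTX[\Snd p M.P]$ gives $\pnf(B) \equivpnf \pnf(\CTX[\Snd p M.P])$. Since Lemma~\ref{lem:p-notin-fnA-preserved} shows $\Phi$ preserved both by $\redpnf$ and by $\equivpnf$ on normal processes, I obtain $\Phi(\pnf(\CTX[\Snd p M.P]))$.

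Next I would analyze $\pnf(\CTX[\Snd p M.P])$ by induction on the evaluation context $\CTX$ --- whose hole sits only under parallel compositions and restrictions, none of the latter binding $p$ --- to show that $\pnf(\CTX[\Snd p M.P])$ contains an occurrence of an output $\Snd{N_0}{M_0}.P_0$ with $\Sigma \vdash p = N_0$ and with no restriction on $p$ above that occurrence. In the base case $N_0 = p$; in the parallel and restriction cases the $\pnf$ construction only prepends fresh restrictions and applies the composed substitution $\comp{\sigma}{\sigma'}$ to $N_0$, which, as $p$ is a name, leaves $p$ fixed and preserves both $\Sigma \vdash p = N_0$ and the absence of a restriction on $p$ above it. Now apply $\Phi(\pnf(\CTX[\Snd p M.P]))$: some $C'$ satisfies $\Sigma \vdash \pnf(\CTX[\Snd p M.P]) = C'$ and $p \notin \fn(C')$, and since $\Sigma \vdash \cdot = \cdot$ only rewrites terms, $C'$ contains, at the position of that output, an output on a channel term $N_0'$ with $\Sigma \vdash N_0 = N_0'$ (hence $\Sigma \vdash p = N_0'$) and still with no restriction on $p$ above it.

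To finish, it suffices to prove the rigidity fact that $\Sigma \vdash p = N_0'$ forces $p \in \fn(N_0')$: then $p \in \fn(N_0')$ and the output carrying $N_0'$ lies outside the scope of any restriction on $p$, so $p \in \fn(C')$, contradicting $p \notin \fn(C')$. For the rigidity fact, suppose $p \notin \fn(N_0')$; choosing a name $q \ne p$ of the same sort and using closure of the equational theory under substitution of names for names in $\Sigma \vdash p = N_0'$ gives $\Sigma \vdash q = N_0'$ (the substitution does not affect $N_0'$), hence $\Sigma \vdash p = q$; then closure under substitution of terms for names, applied to $\Sigma \vdash p = q$ with an arbitrary term $T$ of that sort in place of $q$, gives $\Sigma \vdash p = T$, so all terms of that sort are $\Sigma$-equal --- contradicting non-triviality of the equational theory. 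I expect the main obstacle to be the combination of this rigidity argument with the bookkeeping in the induction on $\CTX$: one must be careful that partial normalization and $\Sigma$-rewriting act only on the \emph{terms} occurring inside an output, never turning its channel name $p$ into an unrelated name, and that the free-name ``leaks'' caused by \rulename{Alias} are genuinely confined to the realm of non-normal extended processes, where Lemma~\ref{lem:p-notin-fnA-preserved} does not apply.
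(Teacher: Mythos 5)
Your proposal is correct and follows essentially the same route as the paper's proof: pass to partial normal forms, invoke Lemma~\ref{lem:p-notin-fnA-preserved} to carry the invariant ``$\Sigma \vdash C = C'$ with $p \notin \fn(C')$'' along the trace, extract a term $N$ with $\Sigma \vdash p = N$ and $p \notin \fn(N)$ from the output on $p$, and contradict non-triviality via closure of the theory under substitution of terms for names. You merely spell out two steps the paper leaves implicit (the induction on $\CTX$ locating the unrestricted output channel inside $\pnf(\CTX[\Snd{p}{M}.P])$, and the detour through a second name $q$ in the rigidity argument, where the paper substitutes directly into $\Sigma \vdash p = N$); both elaborations are sound.
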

\begin{proof}
In order to obtain a contradiction, suppose that $A \barb{p}$, that is, that
$A \rightarrow^* \equiv \CTX[\Snd{p}{M}.P]$ for some $M$, $P$, and evaluation context $\CTX[\hole]$ that does not bind $p$.
Hence, $\pnf(A) \redpnf^* \equivpnf \CTX[\Snd{p}{M}.P]$ for some $M$, $P$, and evaluation context $\CTX[\hole]$ that does not bind $p$.
Let $A_1 = \pnf(A)$. We have $p \notin \fn(A_1)$.
By Lemma~\ref{lem:p-notin-fnA-preserved}, the existence of $A'$ such that $\Sigma \vdash A_1 = A'$ and $p \notin \fn(A')$ is preserved by structural equivalence and reduction of $A_1$, so there exists $A'$ such that $\Sigma \vdash \CTX[\Snd{p}{M}.P] = A'$ and $p \notin \fn(A')$. Hence, there exists $N$ such that $\Sigma \vdash p = N$ and $p \notin \fn(N)$. 
Since the equational theory is preserved by substitution of terms for names, for all $N'$,  $\Sigma \vdash p\{\subst{N'}{p}\} = N\{\subst{N'}{p}\}$, that is $\Sigma \vdash N' = N$, which contradicts the assumption that the equational theory is non-trivial.
\end{proof}

\begin{lemma}\label{lem:red-no-fn}
If $p \notin \fn(P)$ and $P \redP^* \ltrP{\Res{x}\Snd{N}{x}} A$ or 
$P \redP^* \ltrP{\Rcv{N}{M}} A$, then $\Sigma \vdash N \neq p$.
\end{lemma}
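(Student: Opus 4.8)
\textbf{Proof plan for Lemma~\ref{lem:red-no-fn}.}
The plan is to reduce the claim to the invariance results about names from Lemma~\ref{lem:p-notin-fnA-preserved}, together with the characterization of labelled transitions in Lemma~\ref{lem:caract-ltrP} and the non-triviality of the equational theory, exactly as in the proof of Lemma~\ref{lem:barb-obvious}. First I would observe that from $p \notin \fn(P)$ we trivially have $\Sigma \vdash P = P$ with $p \notin \fn(P)$, so the hypothesis ``there exists $P'$ with $\Sigma \vdash P = P'$ and $p \notin \fn(P')$'' holds for $P$. By Lemma~\ref{lem:p-notin-fnA-preserved}\,(Property~2), this property is preserved along the reduction sequence $P \redP^* Q$ for whatever intermediate process $Q$ is reached just before the labelled transition, so $p \notin \fn(Q')$ for some $Q'$ with $\Sigma \vdash Q = Q'$.

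Next I would analyze the labelled transition $Q \ltrP{\alpha} A$ (with $\alpha = \Res{x}\Snd{N}{x}$ or $\alpha = \Rcv{N}{M}$) using Lemma~\ref{lem:caract-ltrP}: in either case $Q \equivP \Res{\vect n}(P_1 \parop P_2)$ with $\{\vect n\} \cap \fn(\alpha) = \emptyset$, and $P_1$ is $\Snd{N}{M'}.P'$ or $\Rcv{N}{x}.P'$, so that the channel $N$ occurs inside $Q$ outside the scope of the restrictions $\vect n$. Concretely, $N$ is a subterm of $Q$ and $\fn(N) \cap \{\vect n\} = \emptyset$, hence $\fn(N) \subseteq \fn(Q)$. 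Applying Lemma~\ref{lem:p-notin-fnA-preserved}\,(Property~1) to the structural equivalence $Q \equivP \Res{\vect n}(P_1 \parop P_2)$ (and pushing the name-freeness through), I get a representative $Q''$ of this process, equal modulo $\Sigma$, in which $p$ does not occur free; since the channel position is not under the restrictions $\vect n$, this yields a term $N'$ with $\Sigma \vdash N = N'$ and $p \notin \fn(N')$.

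Finally, suppose for contradiction that $\Sigma \vdash N = p$. Then $\Sigma \vdash N' = p$ as well (by transitivity with $\Sigma \vdash N' = N$). Since the equational theory is closed under substitution of terms for names, substituting an arbitrary term $N''$ for $p$ gives $\Sigma \vdash N'\{\subst{N''}{p}\} = p\{\subst{N''}{p}\}$; because $p \notin \fn(N')$ the left side is just $N'$, and the right side is $N''$, so $\Sigma \vdash N' = N''$ for every term $N''$ of the appropriate sort. Combined with $\Sigma \vdash N' = p$ this forces $\Sigma \vdash N'' = N'''$ for all such $N''$, $N'''$, contradicting the assumption that the equational theory is non-trivial (there exist two distinct terms in each sort). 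Hence $\Sigma \vdash N \neq p$.

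The main obstacle I anticipate is the bookkeeping in the middle step: carefully arguing that the name-freeness representative provided by Lemma~\ref{lem:p-notin-fnA-preserved} can be chosen so that it exhibits a term $N'$ equal to the channel $N$ with $p \notin \fn(N')$, i.e. that the invariance of ``$p$-freeness up to $\Sigma$'' really does descend to the channel subterm revealed by Lemma~\ref{lem:caract-ltrP}, rather than only to the whole process. Everything else — the appeal to closure of the equational theory under name substitution and to non-triviality — is identical to the final paragraph of the proof of Lemma~\ref{lem:barb-obvious} and should be routine.
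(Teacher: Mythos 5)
Your proposal is correct and follows essentially the same route as the paper's proof: Lemma~\ref{lem:caract-ltrP} to expose the channel $N$ outside the restrictions $\vect n$, Lemma~\ref{lem:p-notin-fnA-preserved} to propagate the ``$p$ not free up to $\Sigma$'' invariant through the reductions and the structural equivalence so as to obtain $N'$ with $\Sigma \vdash N = N'$ and $p \notin \fn(N')$, and then the non-triviality contradiction exactly as at the end of the proof of Lemma~\ref{lem:barb-obvious}. The bookkeeping step you flag as a potential obstacle is precisely what the paper also relies on (and states tersely), so there is nothing further to add.
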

\begin{proof}
The proof uses ideas similar to the proof of Lemma~\ref{lem:barb-obvious}.
By Lemma~\ref{lem:caract-ltrP}, 
$P \redP^* \equivP \Res{\vect n}(\Snd{N}{M}.P_1 \parop P_2)$
for some $\vect n$, $M$, $P_1$, $P_2$ with 
$\{ \vect n \} \cap \fn(N) = \emptyset$, or 
$P \redP^* \equivP \Res{\vect n}(\Rcv{N}{x}.P_1 \parop P_2)$ for some
for some $\vect n$, $x$, $P_1$, $P_2$ with 
$\{ \vect n \} \cap \fn(N) = \emptyset$.
By Lemma~\ref{lem:p-notin-fnA-preserved}, the existence of $P'$ such that $\Sigma \vdash P = P'$
and $p \notin \fn(P')$ is preserved by structural equivalence and reduction of $P$, so there exists
$N'$ such that $\Sigma \vdash N = N'$ and $p \notin \fn(N')$.
If we had $\Sigma \vdash N = p$, then we would have $\Sigma \vdash p = N'$ and $p \notin \fn(N')$, which yields a contradiction as in the proof of Lemma~\ref{lem:barb-obvious}. So $\Sigma \vdash N \neq p$.
\end{proof}

\begin{lemma}\label{lem:enveq-gen}
${\bicong} \subseteq {\enveq}$.
\end{lemma}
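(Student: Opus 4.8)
The plan is to follow the informal argument sketched just after Lemma~\ref{lem:enveq2}: for terms $M$ and $N$, the equality $(M=N)\varphi$ in a frame can be probed by a context of the form $\hole \parop \IfThen{M}{N}{\Snd{a}{n}}$ with a fresh name $a$, because this context exhibits the barb $\barb{a}$ exactly when $M$ and $N$ are equal in the frame. Since observational equivalence is a congruence that preserves barbs, the two frames must agree on every such equality, which is static equivalence.

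Concretely, I would start from closed extended processes $A$ and $B$ with $A \bicong B$. As $\bicong$ relates only processes of the same domain, $\dom(\frameof{A}) = \dom(A) = \dom(B) = \dom(\frameof{B})$, so by Definition~\ref{def:enveq} it remains to prove $(M=N)\frameof{A}\iff(M=N)\frameof{B}$ for all terms $M,N$. If $M$ and $N$ have different sorts, or if $\fv(M)\cup\fv(N)\not\subseteq\dom(A)$, then both sides are false (the equational theory respects sorts, and Definition~\ref{def:eqframe} demands $\fv(M)\cup\fv(N)\subseteq\dom(\varphi)$), so the equivalence holds trivially; hence assume $M,N$ of a common sort $\tau$ with $\fv(M)\cup\fv(N)\subseteq\dom(A)$. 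Fix a name $a$ of sort $\Const{Channel}$ not occurring in $A$ or $B$ and an arbitrary name $n$, and put $\CTX \eqdef \hole \parop \IfThen{M}{N}{\Snd{a}{n}}$. Since the inserted process has empty domain and its free variables lie in $\dom(A)=\dom(B)$, both $\CTX[A]$ and $\CTX[B]$ are closed, so $\CTX$ is a closing evaluation context for each; by Definition~\ref{def:bicong}\eqref{bcthree}, $\CTX[A]\bicong\CTX[B]$, and by Definition~\ref{def:bicong}\eqref{bcone} together with symmetry of $\bicong$, $\CTX[A]\barb{a}\iff\CTX[B]\barb{a}$.

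The crux is then the claim $\CTX[A]\barb{a}\iff(M=N)\frameof{A}$ (and symmetrically for $B$). To prove it, I would write $A\equiv\Res{\vect{n}}(\sigma\parop P)$ where $\sigma$ is the ground, cycle-free underlying substitution of $\frameof{A}$, $\{\vect{n}\}\subseteq\fn(\sigma)$, and, after renaming, $\{\vect{n}\}\cap(\fn(M)\cup\fn(N)\cup\{a,n\})=\emptyset$. Using \rulename{New-Par} and then \rulename{Subst} (valid because $\fv(M)\cup\fv(N)\subseteq\dom(\sigma)$ and $\sigma$ is ground and cycle-free) one gets $\CTX[A]\equiv\Res{\vect{n}}(\sigma\parop Q\parop\IfThen{M\sigma}{N\sigma}{\Snd{a}{n}})$ for some plain process $Q$, with $M\sigma$ and $N\sigma$ ground. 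Now $a$ occurs in $\CTX[A]$ only inside the subterm $\Snd{a}{n}$, which sits under the conditional and hence not in an evaluation-context position; so a reduct can acquire a barb on $a$ only after this conditional is consumed. If $\Sigma\vdash M\sigma=N\sigma$, the usual \rulename{Alias}/\rulename{Subst}/\rulename{Rewrite}/\rulename{Then} derivation (introduce $\{\subst{M\sigma}{z}\}$, rewrite it to $\{\subst{N\sigma}{z}\}$, fold it back to make both guards equal to $N\sigma$, apply \rulename{Then}, discard the alias) gives $\CTX[A]\rightarrow\Res{\vect{n}}(\sigma\parop Q\parop\Snd{a}{n})$, which barbs on $a$ since $a\notin\{\vect{n}\}$. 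If $\Sigma\not\vdash M\sigma=N\sigma$, then the guards cannot be made syntactically equal, so \rulename{Then} is never applicable to this conditional and it can only be consumed by \rulename{Else}, producing $\nil$; tracking the reduction, $\Snd{a}{n}$ therefore never escapes the conditional and $a$ is free in no reduct, so $\CTX[A]\not\barb{a}$ by Lemma~\ref{lem:barb-obvious}. Hence $\CTX[A]\barb{a}\iff\Sigma\vdash M\sigma=N\sigma\iff(M=N)\frameof{A}$, the last step by Definition~\ref{def:eqframe} with the chosen representative (independence of the representative being the content of the remark following Definition~\ref{def:eqframe}). Combining the three equivalences yields $(M=N)\frameof{A}\iff(M=N)\frameof{B}$, which completes the proof.

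The step I expect to be the main obstacle is precisely the equivalence $\CTX[A]\barb{a}\iff\Sigma\vdash M\sigma=N\sigma$: one must check, by a short invariant argument on reduction sequences, that no reduction can manufacture a barb on $a$ when the \rulename{Else} branch is forced (this is where Lemma~\ref{lem:barb-obvious}, and implicitly the preservation of ``$a\notin\fn(\cdot)$'' under reduction, are used), and that rewriting $A$ into the form $\Res{\vect{n}}(\sigma\parop P)$ and applying \rulename{Subst} to render the guards ground is legitimate for closed $A$. The remaining bookkeeping about sorts, domains, freshness, and well-sortedness of $\CTX$ is routine.
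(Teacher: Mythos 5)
Your strategy is exactly the paper's: test $(M=N)\varphi$ with the context $\hole \parop \IfThen{M}{N}{\Snd{a}{s}}$ for a fresh channel $a$, use congruence and barb preservation of $\bicong$, and reduce everything to the characterization $\CTX[A]\barb{a}\iff(M=N)\frameof{A}$ (this is the paper's Lemma~\ref{lem:enveq-gen} proof in Appendix~\ref{app:twodirection}, with the forward direction handled by the same \rulename{Then} reduction you describe).

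The one place where your sketch is not yet a proof is the step you yourself flag. The invariant you propose for the converse direction --- ``$a$ occurs only under the conditional, hence in no evaluation position, hence $a$ is free in no reduct'' --- is not literally preserved by the semantics: reduction is closed under structural equivalence, and \rulename{Alias}, \rulename{Subst}, and \rulename{Rewrite} can introduce spurious occurrences of $a$ elsewhere in a reduct (e.g.\ rewriting a term $M'$ in an active substitution to $\Const{fst}((M',a))$ and then substituting). So the correct invariant is the up-to-$\Sigma$-equality version: every reduct is $\Sigma$-equal to a process in which $a$ occurs only under the still-unfired conditional (or not at all, once \rulename{Else} has fired). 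Establishing that this weaker invariant is preserved, and that it still excludes a barb on $a$ (because $\Sigma\vdash N'=a$ with $a\notin\fn(N')$ would contradict non-triviality of the theory), is precisely what the paper's Lemmas~\ref{lem:p-notin-fnA-preserved}, \ref{lem:barb-obvious}, and~\ref{lem:red-no-fn} do, together with the partial-normal-form decomposition lemmas that let one track where the conditional sits along a reduction sequence. With that refinement your argument goes through; without it the ``short invariant argument'' has a real hole, since the naive statement of the invariant is false.
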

\begin{proof}
If $A$ and $B$ are observationally equivalent, then 
$A \parop C$ and $B \parop C$ have the same barbs for every $C$ 
with $\fv(C)\subseteq\dom(A)$.
In particular, 
$A \parop C$ and $B \parop C$ have the same barb $\barb{a}$ for every $C$
of the special form $\IfThen{M}{N}{\Snd{a}{s}}$, where $a$ does not occur 
in $A$ or $B$ and $\fv(C)\subseteq\dom(A)$, that is, 
$\fv(M) \cup \fv(N) \subseteq \dom(A)$. 
We obtain that $A$ and $B$ are statically equivalent, using the following
property:
assuming that $A$ is closed, $\fv(M) \cup \fv(N) \subseteq \dom(A)$, and $a$ does not occur in $A$, we have $(M=N)\frameof{A}$ if and only if
$A \parop \IfThen{M}{N}{\Snd{a}{s}} \barb{a}$.
We show this property below.

Let $\pnf(A) = \Res{\vect n}(\sigma \parop P)$.
We rename $\vect n$ so that $\{ \vect n \} \cap (\fn(M) \cup \fn(N) \cup \{a\}) = \emptyset$. 
If $(M=N)\frameof{A}$, then $M\sigma = N\sigma$, so
$A \parop \IfThen{M}{N}{\Snd{a}{s}} \equiv \Res{\vect n}(\sigma \parop P \parop 
\IfThen{M\sigma}{N\sigma}{\Snd{a}{s}}) \rightarrow \Res{\vect n}(\sigma \parop P \parop \Snd{a}{s})$, so 
we conclude that $A \parop \IfThen{M}{N}{\Snd{a}{s}} \barb{a}$.
Conversely, in order to obtain a contradiction, suppose that $(M \neq N)\frameof{A}$ and $A \parop \IfThen{M}{N}{\Snd{a}{s}} \barb{a}$.
Lemma~\ref{lem:caract-barb} implies that
$A \parop \IfThen{M}{N}{\Snd{a}{s}} \rightarrow^* \ltr{\Res{x}\Snd{a}{x}} A'$ for some fresh variable $x$ and some $A'$.
So $\pnf(A \parop \IfThen{M}{N}{\Snd{a}{s}}) = \Res{\vect n}(\sigma \parop P \parop 
\IfThen{M\sigma}{N\sigma}{\Snd{a}{s}}) \redpnf^* \ltrpnf{\Res{x}\Snd{a}{x}} A'$
by Lemmas~\ref{lem:red-std-to-pnf} and~\ref{lem:redalpha-std-to-pnf}.
Then $P \parop 
\IfThen{M\sigma}{N\sigma}{\Snd{a}{s}} \redP^* \ltrP{\Res{x}\Snd{a}{x}} A''$,
$A' \equiv \Res{\vect n}(\sigma \parop A'')$, 
and $x \notin \dom(\sigma)$
for some $A''$ by Lemmas~\ref{lem:decomp-redpnf-closed}
and~\ref{lem:decomp-ltrpnf}.
We have $a \notin \fn(\pnf(A))$, so $a \notin \fn(P)$.
We show by induction on the length of the trace, that it is impossible to have
$P \parop \IfThen{M\sigma}{N\sigma}{\Snd{a}{s}} \redP^* \ltrP{\Res{x}\Snd{a}{x}} A''$.
\begin{itemize}

\item If this trace contains a single step, then $P \parop \IfThen{M\sigma}{N\sigma}{\Snd{a}{s}} \ltrP{\Res{x}\Snd{a}{x}} A''$, so by Lemma~\ref{lem:decomp-ltrP}, 
$P \ltrP{\Res{x}\Snd{a}{x}}$, which yields a contradiction by Lemma~\ref{lem:red-no-fn}.

\item If this trace contains several steps, the first step is an internal reduction, so by Lemma~\ref{lem:decomp-redP-closed}, either $P$ reduces, and we conclude by induction hypothesis, or $\IfThen{M\sigma}{N\sigma}{\Snd{a}{s}}$ reduces to $\nil$ and $P \parop \nil \redP^* \ltrP{\Res{x}\Snd{a}{x}} A''$, which yields a contradiction by Lemma~\ref{lem:red-no-fn}.
\qedhere
\end{itemize}
\end{proof}

\begin{lemma}\label{lem:rename-free-names}
Let $\vect n$ be pairwise distinct names.
Let $\vect n'$ be pairwise distinct names that do not occur in $P$ nor in $P'$.

If $P\equivP P'$ and $\Sigma \vdash P = P\{\subst{\vect n'}{\vect n}\}$,
then $P\{\subst{\vect n'}{\vect n}\} \equivP P'\{\subst{\vect n'}{\vect n}\}$
and $\Sigma \vdash P' = P'\{\subst{\vect n'}{\vect n}\}$.

If $P \redP P'$ and $\Sigma \vdash P = P\{\subst{\vect n'}{\vect n}\}$,
then $P\{\subst{\vect n'}{\vect n}\} \redP P'\{\subst{\vect n'}{\vect n}\}$
and $\Sigma \vdash P' = P'\{\subst{\vect n'}{\vect n}\}$.
\end{lemma}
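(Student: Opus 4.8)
The plan is to prove Lemma~\ref{lem:rename-free-names} by induction on the derivations, exactly in the style of Lemma~\ref{lem:rename-eqstr-red} and Lemma~\ref{lem:p-notin-fnA-preserved}. For the structural-equivalence part, I would argue that each structural rule $P \equivP P'$ is either insensitive to the names being renamed or carries the renaming through verbatim, so applying $\{\subst{\vect n'}{\vect n}\}$ to both sides yields $P\{\subst{\vect n'}{\vect n}\} \equivP P'\{\subst{\vect n'}{\vect n}\}$. The only rule that touches terms in a nontrivial way is $\brn{Rewrite}'$: if $P = \CTX[P_1\{\subst{M}{x}\}]$ and $P' = \CTX[P_1\{\subst{N}{x}\}]$ with $\Sigma \vdash M = N$, then, because the equational theory is closed under substitution of terms for names, $\Sigma \vdash M\{\subst{\vect n'}{\vect n}\} = N\{\subst{\vect n'}{\vect n}\}$, which gives the renamed structural equivalence. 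To get the companion conclusion $\Sigma \vdash P' = P'\{\subst{\vect n'}{\vect n}\}$, I would show by a parallel induction that $\Sigma \vdash P = P\{\subst{\vect n'}{\vect n}\}$ is preserved along $\equivP$: this is the same kind of invariant-preservation argument used in Lemma~\ref{lem:p-notin-fnA-preserved} and indeed is invoked in the proof of Lemma~\ref{lem:decomp-redP} (in the sub-case $P_0 = P \parop \Res{n}Q \equivP \Res{n}(P \parop Q)$), so it is routine; the key point again is closure of the equational theory under substitution of terms (here, names) for names.

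For the reduction part, I would likewise induct on the derivation of $P \redP P'$. The rules $\brn{Comm}'$, $\brn{Then}'$, $\brn{Else}'$ each trigger on an equality test or a channel match; since $\Sigma \vdash P = P\{\subst{\vect n'}{\vect n}\}$ and the equational theory is closed under name substitution, whatever equality or disequality enabled the reduction in $P$ also holds in $P\{\subst{\vect n'}{\vect n}\}$ (this is the same observation already used for $\brn{Else}'$ in Lemma~\ref{lem:closing}, and the disequality direction requires the non-triviality/closure facts exactly as there). Reduction under an evaluation context, and the closure of $\redP$ under $\equivP$, are handled by the induction hypothesis, appealing to the structural-equivalence part just proved for the $\equivP$ steps. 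The accompanying invariant $\Sigma \vdash P' = P'\{\subst{\vect n'}{\vect n}\}$ propagates in the same way.

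The main obstacle is nothing deep but rather bookkeeping: ensuring that the hypothesis ``$\vect n'$ does not occur in $P$ nor in $P'$'' is enough to make the renaming act as a genuine bijective renaming on all the relevant subterms, and in particular that renaming commutes correctly past the binders ($\nu$, input) that appear in the $\brn{New-C}'$, $\brn{New-Par}'$, and $\brn{In}'$ cases — one must $\alpha$-rename bound names away from $\vect n, \vect n'$ first, which is harmless since expressions are taken modulo renaming of bound names. A second, minor subtlety is that $\{\subst{\vect n'}{\vect n}\}$ is not literally a bijective renaming of the whole name set (the $\vect n'$ are fresh but the $\vect n$ are not removed from the codomain), so Lemma~\ref{lem:rename-eqstr-red} cannot be applied as a black box; this is precisely why the lemma needs its own proof, and why it is stated with the equational side conditions $\Sigma \vdash P = P\{\subst{\vect n'}{\vect n}\}$ rather than as a plain bijective-renaming statement. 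Once that is observed, every case is a direct structural check.
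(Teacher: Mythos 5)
Your proposal takes essentially the same route as the paper, whose entire proof of this lemma is the single line ``By induction on the derivations of $P\equivP P'$ and $P \redP P'$, respectively.'' The case analysis you sketch --- the special treatment of $\brn{Rewrite}'$ and $\brn{Else}'$ via closure of the equational theory under substitution of names, the $\alpha$-renaming of binders away from $\vect n,\vect n'$, and the observation that Lemma~\ref{lem:rename-eqstr-red} cannot be invoked as a black box --- is exactly how the paper carries out the analogous inductions in Lemmas~\ref{lem:inv-subst} and~\ref{lem:closing} and in the inline renaming argument inside the proof of Lemma~\ref{lem:decomp-redP}.
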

\begin{proof}
By induction on the derivations of $P\equivP P'$ and $P \redP P'$, respectively.
\end{proof}

\begin{restate}{Lemma}{\ref{lem:caract-RcvNM}}
Let $A$ be a closed extended process.
Let $N$ and $M$ be terms such that $\fv(\Snd{N}{M}) \subseteq \dom(A)$. Let $p$ be a name that does not occur in $A$, $M$, and $N$.
\begin{enumerate}
\item If $A \ltr{\Rcv{N}{M}} A'$ and $p$ does not occur in $A'$, 
then $A \parop T^p_{\Rcv N M} \rightarrow \rightarrow A'$ and $A' \not\barb{p}$.
\item If $A \parop T^p_{\Rcv N M} \rightarrow^* A'$ and
$A' \not\barb{p}$, then
$A \rightarrow^* \ltr{\Rcv{N}{M}} \rightarrow^* A'$.
\end{enumerate}
\end{restate}%
\begin{proof}
  \emph{Property~1:}
Let $\pnf(A) = \Res{\vect n}(\sigma \parop P)$. We rename $\vect n$ so that these names do not occur in $N$, $M$, $p$.
By Lemma~\ref{lem:redalpha-std-to-pnf}, $\pnf(A) \ltrpnf{\Rcv{N}{M}} A'$.
By Lemma~\ref{lem:decomp-ltrpnf},  $P \ltrP{\Rcv{N\sigma}{M\sigma}} A''$ and $A' \equiv \Res{\vect n}(\sigma \parop A'')$ for some $A'$.
By Lemma~\ref{lem:caract-ltrP}, $P \equivP \Res{\vect n'}(\Rcv{N\sigma}{x'}.P_1 \parop P_2)$, $A'' \equiv \Res{\vect n'}(P_1\{\subst{M\sigma}{x'}\} \parop P_2)$,
$\{\vect n'\} \cap \fn(\Rcv{N\sigma}{M\sigma}) = \emptyset$,
for some $\vect n'$, $P_1$, $P_2$, $x'$.
We rename $\vect n'$ so that $p \notin \{\vect n'\}$.
Hence, by Lemmas~\ref{lem:equivpnf} and~\ref{lem:struct-pnf-to-std}, 
{\allowdisplaybreaks\begin{align*}
A \parop \Snd{p}{p} \parop \Snd{N}{M}.\Rcv{p}{x}
&\equiv \pnf(A) \parop \Snd{p}{p} \parop \Snd{N}{M}.\Rcv{p}{x}\\
&\equiv \Res{\vect n}(\sigma \parop \Res{\vect n'}(\Rcv{N\sigma}{x'}.P_1 \parop P_2)) \parop \Snd{p}{p}\parop  \Snd{N}{M}.\Rcv{p}{x}\\
&\equiv \Res{\vect n}(\sigma \parop \Res{\vect n'}(\Rcv{N\sigma}{x'}.P_1 \parop P_2 \parop \Snd{p}{p}\parop  \Snd{N\sigma}{M\sigma}.\Rcv{p}{x}))\\
&\rightarrow \Res{\vect n}(\sigma \parop \Res{\vect n'}(P_1\{\subst{M\sigma}{x'}\} \parop P_2  \parop \Snd{p}{p}\parop  \Rcv{p}{x}))\\
&\rightarrow \Res{\vect n}(\sigma \parop \Res{\vect n'}(P_1\{\subst{M\sigma}{x'}\} \parop P_2))\\
&\equiv \Res{\vect n}(\sigma \parop A'')\\
&\equiv A'
\end{align*}}%
Since $p \notin \fn(A')$, we have $A' \not\barb{p}$ by Lemma~\ref{lem:barb-obvious}.

\medskip
\noindent \emph{Property~2:} Let $\pnf(A) = \Res{\vect n}(\sigma \parop P)$. 
By Lemma~\ref{lem:pnf-closed}, $\pnf(A)$ is closed.
We rename $\vect n$ so that these names do not occur in $N$, $M$, $p$.
Then $\pnf(A \parop \Snd{p}{p} \parop \Snd{N}{M}.\Rcv{p}{x}) = \Res{\vect n}(\sigma \parop P \parop \Snd{p}{p} \parop \Snd{N\sigma}{M\sigma}.\Rcv{p}{x})$.
By Lemma~\ref{lem:red-std-to-pnf}, $\pnf(A \parop \Snd{p}{p} \parop \Snd{N}{M}.\Rcv{p}{x}) \redpnf^* \pnf(A')$.
By Lemma~\ref{lem:decomp-redpnf-closed} applied several times,
$P \parop \Snd{p}{p} \parop \Snd{N\sigma}{M\sigma}.\Rcv{p}{x} \redP^* P'$ and $\pnf(A') \equiv \Res{\vect n}(\sigma \parop P')$ for some closed process $P'$.
Since $A' \not\barb{p}$, we have $P' \not\barb{p}$. (If we had $P' \barb{p}$, we would immediately obtain $A' \barb{p}$ by definition of $\barb{p}$.)

We prove that, if $P$ is a closed process, $P \parop \Snd{p}{p} \parop \Rcv{p}{x} \redP^* P'$,
$P' \not\barb{p}$, and $p \notin \fn(P)$, then $P \equivP\redP^* P'$,
by induction on the length of the trace.
Since $P \parop \Snd{p}{p} \parop \Rcv{p}{x} \barb{p}$,
the trace $P \parop \Snd{p}{p} \parop \Rcv{p}{x} \redP^* P'$ has at 
least one step: $P \parop \Snd{p}{p} \parop \Rcv{p}{x} \redP P_1 \redP^* P'$ .
By Lemmas~\ref{lem:decomp-redP-closed}, \ref{lem:decomp-ltrP}, and~\ref{lem:red-no-fn}, the only cases that can happen in the first step are:
\begin{itemize}
\item $P \redP P''$ and $P'' \parop \Snd{p}{p} \parop \Rcv{p}{x} \equiv P_1 \redP^* P'$ for some closed process $P''$. As above this trace has at least one step, so $P'' \parop \Snd{p}{p} \parop \Rcv{p}{x} \redP^* P'$. By Lemma~\ref{lem:rename-free-names}, we rename $p$ inside $P''$ so that $p \notin \fn(P'')$, and we obtain the desired result by induction hypothesis.
\item $\Snd{p}{p} \ltrP{\Res{y}\Snd{N}{y}} A_1 \equiv \{\subst{p}{y}\}$,
$\Rcv{p}{x} \ltrP{\Rcv{N}{y}} A_2 \equiv \nil$, 
$\Sigma \vdash N = p$,
$P_1 \equiv P \parop \Res{y}(A_1 \parop A_2) \equiv P \parop \Res{y}(\{\subst{p}{y}\} \parop \nil) \equiv P$
so $P \parop \Snd{p}{p} \parop \Rcv{p}{x} \redP P \equivP P_1 \redP^* P'$,
so we obtain $P \equivP\redP^* P'$ as desired.
\end{itemize}

Next, we prove that, if $P \parop \Snd{p}{p} \parop \Snd{N\sigma}{M\sigma}.\Rcv{p}{x}$ is a closed process, $P \parop \Snd{p}{p} \parop \Snd{N\sigma}{M\sigma}.\Rcv{p}{x} \redP^* P'$, $P' \not\barb{p}$, and $p \notin \fn(P) \cup \fn(N\sigma) \cup \fn(M\sigma)$, then $P \redP^* \ltrP{\Rcv{N\sigma}{M\sigma}} \redP^* P'$,
by induction on the length of the trace.
Since $P \parop \Snd{p}{p} \parop \Snd{N\sigma}{M\sigma}.\Rcv{p}{x} \barb{p}$,
the trace $P \parop \Snd{p}{p} \parop \Snd{N\sigma}{M\sigma}.\Rcv{p}{x} \redP^* P'$ has at 
least one step: $P \parop \Snd{p}{p} \parop \Snd{N\sigma}{M\sigma}.\Rcv{p}{x} \redP P_1 \redP^* P'$.
By Lemmas~\ref{lem:decomp-redP-closed}, \ref{lem:decomp-ltrP}, and~\ref{lem:red-no-fn}, the only cases that can happen in the first step are:
\begin{itemize}
\item $P \redP P''$ and $P'' \parop \Snd{p}{p} \parop \Snd{N\sigma}{M\sigma}.\Rcv{p}{x} \equiv P_1 \redP^* P'$ for some closed process $P''$. As above this trace has at least one step, so $P'' \parop \Snd{p}{p} \parop \Snd{N\sigma}{M\sigma}.\Rcv{p}{x} \redP^* P'$. By Lemma~\ref{lem:rename-free-names}, we rename $p$ inside $P''$ so that $p \notin \fn(P'')$, and we obtain the desired result by induction hypothesis.
\item $P \ltrP{\Rcv{N'}{y}} B$, $\Snd{N\sigma}{M\sigma}.\Rcv{p}{x} \ltrP{\Res{y}\Snd{N'}{y}} B'$,
and $P_1 \equiv \Res{y}(B \parop \Snd{p}{p} \parop B')$.
By Lemma~\ref{lem:caract-ltrP},
$P \equivP \Res{\vect n'}(\Rcv{N'}{z}.P_2 \parop P_3)$,
$B \equiv \Res{\vect n'}(P_2\{\subst{y}{z}\} \parop P_3)$, and
$\{\vect n'\} \cap \fn(N') = \emptyset$ for some $\vect n'$, $z$, $P_2$, and $P_3$.
By Lemma~\ref{lem:decomp-ltrP},
$\Sigma \vdash N\sigma = N'$, $y \notin \fv(\Snd{N\sigma}{M\sigma}.\Rcv{p}{x})$, and $B' \equiv \Rcv{p}{x} \parop \{\subst{M\sigma}{y}\}$.
We rename $\vect n'$ so that these names do not appear in $M\sigma$
and are distinct from $p$.
By Lemma~\ref{lem:rename-free-names}, we rename $p$ inside 
$\Res{\vect n'}(\Rcv{N'}{z}.P_2 \parop P_3)$ so that $p \notin \fn(\Res{\vect n'}(\Rcv{N'}{z}.P_2 \parop P_3))$, so $p \notin \fn(P_2) \cup \fn(P_3)$.
Hence 
$P_1 \equiv \Res{y}(\Res{\vect n'}(P_2\{\subst{y}{z}\} \parop P_3) \parop \Snd{p}{p} \parop \{\subst{M\sigma}{y}\} \parop \Rcv{p}{x}) 
\equiv \Res{\vect n'}(P_2\{\subst{M\sigma}{z}\} \parop P_3) \parop \Snd{p}{p} \parop \Rcv{p}{x}$.
We have
$P \equivP \Res{\vect n'}(\Rcv{N'}{z}.P_2 \parop P_3)
\ltrP{\Rcv{N\sigma}{M\sigma}} \Res{\vect n'}(P_2\{\subst{M\sigma}{z}\} \parop P_3)
$. Let $P_4 = \Res{\vect n'}(P_2\{\subst{M\sigma}{z}\} \parop P_3)$. 
We have then $P \ltrP{\Rcv{N\sigma}{M\sigma}} P_4$ and $P_4 \parop \Snd{p}{p} \parop \Rcv{p}{x} \equiv P_1 \redP^* P'$. 
By Lemma~\ref{lem:closing}\eqref{closing:ltrP}, we transform $P_4$ into a closed process that satisfies the same properties. Since $P' \not\barb{p}$, this trace has at least one step, so $P_4 \parop \Snd{p}{p} \parop \Rcv{p}{x} \redP^* P'$. Since $p \notin \fn(P_4)$,
by the property shown above, $P_4 \equivP\redP^* P'$, so
$P \ltrP{\Rcv{N\sigma}{M\sigma}} \redP^* P'$.

\end{itemize}

To sum up, we have $A \equiv \pnf(A) = \Res{\vect n}(\sigma \parop P)$, 
$P \redP^* P_5 \ltrP{\Rcv{N\sigma}{M\sigma}} P_6 \redP^* P'$, and $A'
\equiv \pnf(A') \equiv \Res{\vect n}(\sigma \parop P')$.
So $\Res{\vect n}(\sigma \parop P) \redpnf^* \Res{\vect n}(\sigma \parop P_5) \ltrpnf{\Rcv{N}{M}} \Res{\vect n}(\sigma \parop P_6) \redpnf^* \Res{\vect n}(\sigma \parop P')$. Hence by Lemmas~\ref{lem:red-pnf-to-std} 
and~\ref{lem:redalpha-pnf-to-std},
$A \rightarrow^* \ltr{\Rcv{N}{M}} \rightarrow^* A'$.
\end{proof}

\begin{restate}{Lemma}{\ref{lem:caract-SndN}}
Let $A$ be a closed extended process.
Let $N$ be a term such that $\fv(N) \subseteq \dom(A)$. Let $p$ and $q$ be names that do not occur in $A$ and $N$.
\begin{enumerate}
\item If $A \ltr{\Res{x}\Snd{N}{x}} A'$ and $p$ and $q$ do not occur in $A'$, 
then $A \parop T^{p,q}_{\Res{x}\Snd N x}  \rightarrow\rightarrow \Res{x}(A' \parop \Snd{q}{x})$,
$\Res{x}(A' \parop \Snd{q}{x}) \not\barb{p}$, and $x \notin \dom(A)$.
\item Let $x$ be a variable such that $x \notin \dom(A)$.
If $A \parop T^{p,q}_{\Res{x}\Snd N x} \rightarrow^* A''$ and $A'' \not\barb{p}$, then
$A \rightarrow^*\allowbreak \ltr{\Res{x}\Snd{N}{x}}\allowbreak \rightarrow^* A'$
and $A'' \equiv \Res{x}(A' \parop \Snd{q}{x})$ for some $A'$.
\end{enumerate}
\end{restate}%
\begin{proof}
\emph{Property~1:}
Let $\pnf(A) = \Res{\vect n}(\sigma \parop P)$. 
By Lemma~\ref{lem:pnf-closed}, $\pnf(A)$ is closed.
We rename $\vect n$ so that these names do not occur in $N$, $p$, and $q$.
By Lemma~\ref{lem:redalpha-std-to-pnf}, $\pnf(A) \ltrpnf{\Res{x}\Snd{N}{x}} A'$.
By Lemma~\ref{lem:decomp-ltrpnf},  $P \ltrP{\Res{x}\Snd{N\sigma}{x}} A''$, $A' \equiv \Res{\vect n}(\sigma \parop A'')$, and $x \notin \dom(\sigma)$ for some $A''$.
By Lemma~\ref{lem:caract-ltrP}, $P \equivP \Res{\vect n'}(\Snd{N\sigma}{M}.P_1 \parop P_2)$, $A'' \equiv \Res{\vect n'}(P_1 \parop \{\subst{M}{x}\} \parop P_2)$,
$\{\vect n'\} \cap \fn(N\sigma) = \emptyset$,
and $x \notin \fv(\Snd{N\sigma}{M}.P_1 \parop P_2))$
for some $\vect n'$, $P_1$, $P_2$, $M$.
We rename $\vect n'$ so that $p,q \notin \{\vect n'\}$ and $y$ so that $y \notin \fv(M)$.
Hence, by Lemmas~\ref{lem:equivpnf} and~\ref{lem:struct-pnf-to-std}, 
{\allowdisplaybreaks\begin{align*}
A \parop \Snd{p}{p} \parop \Rcv{N}{x}.\Rcv{p}{y}.\Snd{q}{x}
&\equiv \pnf(A) \parop \Snd{p}{p} \parop \Rcv{N}{x}.\Rcv{p}{y}.\Snd{q}{x}\\
&\equiv \Res{\vect n}(\sigma \parop \Res{\vect n'}(\Snd{N\sigma}{M}.P_1 \parop P_2)) \parop \Snd{p}{p}\parop  \Rcv{N}{x}.\Rcv{p}{y}.\Snd{q}{x}\\
&\equiv \Res{\vect n}(\sigma \parop \Res{\vect n'}(\Snd{N\sigma}{M}.P_1 \parop P_2 \parop \Snd{p}{p}\parop  \Rcv{N}{x}.\Rcv{p}{y}.\Snd{q}{x}))\\
&\rightarrow \Res{\vect n}(\sigma \parop \Res{\vect n'}(P_1 \parop P_2  \parop \Snd{p}{p}\parop  \Rcv{p}{y}.\Snd{q}{M}))\\
&\rightarrow \Res{\vect n}(\sigma \parop \Res{\vect n'}(P_1 \parop P_2 \parop \Snd{q}{M}))\\
&\equiv\Res{x}\Res{\vect n}(\sigma \parop \Res{\vect n'}(P_1 \parop \{\subst{M}{x}\} \parop P_2 \parop \Snd{q}{x}))\\
&\equiv \Res{x}(\Res{\vect n}(\sigma \parop A'') \parop \Snd{q}{x})\\
&\equiv \Res{x}(A' \parop \Snd{q}{x})
\end{align*}}%
Since $p \notin \fn(\Res{x}(A' \parop \Snd{q}{x}))$, we have $\Res{x}(A' \parop \Snd{q}{x}) \not\barb{p}$ by Lemma~\ref{lem:barb-obvious}.

\medskip
\noindent \emph{Property~2:}
Let $\pnf(A) = \Res{\vect n}(\sigma \parop P)$. By Lemma~\ref{lem:pnf-closed}, $\pnf(A)$ is closed. We rename $\vect n$ so that these names do not occur in $N$, $p$, $q$.
Then $\pnf(A \parop \Snd{p}{p} \parop \Rcv{N}{x}.\Rcv{p}{y}.\Snd{q}{x}) = \Res{\vect n}(\sigma \parop P \parop \Snd{p}{p} \parop \Rcv{N\sigma}{x}.\Rcv{p}{y}.\Snd{q}{x})$.
By Lemma~\ref{lem:red-std-to-pnf}, $\pnf(A \parop \Snd{p}{p} \parop \Rcv{N}{x}.\Rcv{p}{y}.\Snd{q}{x}) \redpnf^* \pnf(A'')$.
By Lemma~\ref{lem:decomp-redpnf-closed} applied several times,
$P \parop \Snd{p}{p} \parop \Rcv{N\sigma}{x}.\Rcv{p}{y}.\Snd{q}{x} \redP^* P''$ and $\pnf(A'') \equiv \Res{\vect n}(\sigma \parop P'')$ for some closed process $P''$. Since $A'' \not\barb{p}$, we have $P'' \not\barb{p}$.

We prove that, if $P_2$ and $M'$ are closed, $P_2 \parop \Snd{q}{M'} \equivP \redP^* P''$, and $q \notin \fn(P_2)$, then $P'' \equiv P_3 \parop \Snd{q}{M'}$ and $P_2 \redP^* P_3$ for some closed process $P_3$, by induction on the length of the trace $P_2 \parop \Snd{q}{M'} \equivP \redP^* P''$. If this trace has zero reduction steps, then the result holds obviously with $P_3 = P_2$. If this trace has at least one reduction step, then $P_2 \parop \Snd{q}{M'} \redP P_4 \redP^* P''$, so by Lemmas~\ref{lem:decomp-redP-closed} and~\ref{lem:red-no-fn}, the only case that can happen is that $P_2 \redP P_2'$ and $P_2' \parop \Snd{q}{M'} \equiv P_4 \redP^* P''$ for some closed process $P'_2$. By Lemma~\ref{lem:rename-free-names}, we rename $q$ inside $P_2'$ so that $q \notin \fn(P'_2)$, and we obtain the desired result by induction hypothesis.

Next, we prove that, if $P_1$ and $M'$ are closed, $P_1 \parop \Snd{p}{p} \parop \Rcv{p}{y}.\Snd{q}{M'} \redP^* P''$, $P'' \not\barb{p}$, and $p, q\notin\fn(P_1)$, then $P'' \equiv P_3 \parop \Snd{q}{M'}$ and $P_1 \redP^* P_3$ for some closed process $P_3$, by induction on the length of the trace.
Since $P_1 \parop \Snd{p}{p} \parop \Rcv{p}{y}.\Snd{q}{M'} \barb{p}$,
the trace $P_1 \parop \Snd{p}{p} \parop \Rcv{p}{y}.\Snd{q}{M'} \redP^* P''$ has 
at least one step: $P_1 \parop \Snd{p}{p} \parop \Rcv{p}{y}.\Snd{q}{M'} \redP P_1' \redP^* P''$.
By Lemmas~\ref{lem:decomp-redP-closed}, \ref{lem:decomp-ltrP}, and~\ref{lem:red-no-fn}, the only cases that can happen in the first step are:
\begin{itemize}

\item $P_1 \redP P_1''$ and $P_1''  \parop \Snd{p}{p} \parop \Rcv{p}{y}.\Snd{q}{M'} \equiv P_1' \redP^* P''$ for some closed process $P_1''$. As above this trace has at least one step, so $P_1''  \parop \Snd{p}{p} \parop \Rcv{p}{y}.\Snd{q}{M'} \redP^* P''$. By Lemma~\ref{lem:rename-free-names}, we rename $p$ and $q$ inside $P_1''$ so that $p, q \notin \fn(P_1'')$, and we obtain the desired result by induction hypothesis.

\item $\Snd{p}{p} \ltrP{\Res{z}\Snd{N}{z}} A_1 \equiv \{\subst{p}{z}\}$,
$\Rcv{p}{y}.\Snd{q}{M'} \ltrP{\Rcv{N}{z}} A_2 \equiv \Snd{q}{M'}$, 
$P_1' \equiv P_1 \parop \Res{z}(A_1 \parop A_2) 
\equiv P_1 \parop \Res{z}(\{\subst{p}{z}\} \parop \Snd{q}{M'}) \equiv P_1 \parop \Snd{q}{M'}$
so $P_1 \parop \Snd{p}{p} \parop \Rcv{p}{y}.\Snd{q}{M'} \redP P_1 \parop \Snd{q}{M'} \equivP P_1' \redP^* P''$ and $q \notin \fn(P_1)$,
so by the property shown above, $P'' \equiv P_3 \parop \Snd{q}{M'}$ and
$P_1 \redP^* P_3$ for some closed process $P_3$, as desired.

\end{itemize}

Finally, we prove that, if $P$ and $N\sigma$ are closed, $P \parop \Snd{p}{p} \parop \Rcv{N\sigma}{x}.\Rcv{p}{y}.\Snd{q}{x} \redP^* P''$, $P'' \not\barb{p}$, and $p,q \notin \fn(P) \cup \fn(N\sigma)$, then $P \redP^* \ltrP{\Res{x}\Snd{N\sigma}{x}} \rightarrow^* B$ and $P'' \equiv \Res{x}(B \parop \Snd{q}{x})$ for some $B$,
by induction on the length of the trace.
Since $P \parop \Snd{p}{p} \parop \Rcv{N\sigma}{x}.\Rcv{p}{y}.\Snd{q}{x} \barb{p}$,
the trace $P \parop \Snd{p}{p} \parop \Rcv{N\sigma}{x}.\Rcv{p}{y}.\Snd{q}{x} \redP^* P''$ has at 
least one step: $P \parop \Snd{p}{p} \parop \Rcv{N\sigma}{x}.\Rcv{p}{y}.\Snd{q}{x} \redP P_1 \redP^* P''$.
By Lemmas~\ref{lem:decomp-redP-closed}, \ref{lem:decomp-ltrP}, and~\ref{lem:red-no-fn}, the only cases that can happen in the first step are:
\begin{itemize}

\item $P \redP P'$ and $P' \parop \Snd{p}{p} \parop \Rcv{N\sigma}{x}.\Rcv{p}{y}.\Snd{q}{x} \equiv P_1 \redP^* P''$ for some closed process $P'$. As above this trace has at least one step, so $P' \parop \Snd{p}{p} \parop \Rcv{N\sigma}{x}.\Rcv{p}{y}.\Snd{q}{x} \redP^* P''$. By Lemma~\ref{lem:rename-free-names}, we rename $p$ and $q$ inside $P'$ so that $p, q \notin \fn(P')$, and we obtain the desired result by induction hypothesis.

\item $P \ltrP{\Res{z}\Snd{N'}{z}} B'$, 
$\Rcv{N\sigma}{x}.\Rcv{p}{y}.\Snd{q}{x} \ltrP{\Rcv{N'}{z}} B''$,
and $P_1 \equiv \Res{z}(B' \parop \Snd{p}{p} \parop B'')$.
By Lemma~\ref{lem:caract-ltrP},
$P \equivP \Res{\vect n'}(\Snd{N'}{M'}.P_2 \parop P_3)$,
$B' \equiv \Res{\vect n'}(P_2 \parop \{\subst{M'}{z}\} \parop P_3)$,
$\{\vect n'\} \cap \fn(N') = \emptyset$, and
$z \notin \fv(\Snd{N'}{M'}.P_2 \parop P_3)$.
By Lemma~\ref{lem:decomp-ltrP},
$\Sigma \vdash N\sigma = N'$ and $B'' \equiv \Rcv{p}{y}.\Snd{q}{z}$.
Using Lemma~\ref{lem:closing}\eqref{closing:equivP}, 
we can guarantee that $N'$, $M'$, $P_2$, $P_3$ are closed.
We rename $\vect n'$ so that these names are distinct from $p$ and $q$.
By Lemma~\ref{lem:rename-free-names}, we rename $p$ and $q$ inside
$\Res{\vect n'}(\Snd{N'}{M'}.P_2 \parop P_3)$ so that 
$p, q \notin \fn(\Res{\vect n'}(\Snd{N'}{M'}.P_2 \parop P_3))$.
So $P_1 \equiv \Res{z}(B' \parop \Snd{p}{p} \parop B'')
\equiv \Res{z}(\Res{\vect n'}(P_2 \parop \{\subst{M'}{z}\} \parop P_3)
\parop \Snd{p}{p} \parop \Rcv{p}{y}.\Snd{q}{z})
\equiv \Res{\vect n'}(P_2 \parop P_3 \parop \Snd{p}{p} \parop \Rcv{p}{y}.\Snd{q}{M'})$.
Since $P_1 \redP^* P''$ and this trace has at least one step because $P_1 \barb{p}$ and $P'' \not\barb{p}$, we have $\Res{\vect n'}(P_2 \parop P_3 \parop \Snd{p}{p} \parop \Rcv{p}{y}.\Snd{q}{M'}) \redP^* P''$, so by Lemma~\ref{lem:decomp-redP},
$P_2 \parop P_3 \parop \Snd{p}{p} \parop \Rcv{p}{y}.\Snd{q}{M'} \redP^* P_4$
and $P'' \equiv \Res{\vect n'}P_4$ for some $P_4$.
Since $p,q \notin \fn(P_2 \parop P_3)$, by the previous result, $P_2 \parop P_3 \redP^* P_5$
and $P_4 \equiv P_5 \parop \Snd{q}{M'}$ for some closed process $P_5$.
Therefore, we have $P'' \equiv \Res{\vect n'}(P_5 \parop \Snd{q}{M'})
\equiv \Res{x}(\Res{\vect n'}(P_5 \parop \{\subst{M'}{x}\}) \parop \Snd{q}{x})$
and $P \equivP \Res{\vect n'}(\Snd{N'}{M'}.P_2 \parop P_3)
\ltrP{\Res{x}\Snd{N\sigma}{x}} \Res{\vect n'}(P_2 \parop \{\subst{M'}{x}\} \parop P_3)\rightarrow^* \Res{\vect n'}(P_5 \parop \{\subst{M'}{x}\})$.
Let $B \eqdef \Res{\vect n'}(P_5 \parop \{\subst{M'}{x}\})$.
Then we have $P \ltrP{\Res{x}\Snd{N\sigma}{x}} \rightarrow^* B$ and
$P'' \equiv \Res{x}(B \parop \Snd{q}{x})$.
\end{itemize}

To sum up, we have $A \equiv \pnf(A) = \Res{\vect n}(\sigma \parop P)$,
$P \redP^* \ltrP{\Res{x}\Snd{N\sigma}{x}} \rightarrow^* B$, and $P'' \equiv \Res{x}(B \parop \Snd{q}{x})$, so $A'' \equiv \pnf(A'') \equiv \Res{\vect n}(\sigma \parop P'') \equiv \Res{\vect n}(\sigma \parop \Res{x}(B \parop \Snd{q}{x}))
\equiv \Res{x}(\Res{\vect n}(\sigma \parop B) \parop \Snd{q}{x})$
since $x\notin \fv(\sigma)$.
Let $A' \eqdef \Res{\vect n}(\sigma \parop B)$.
So $\pnf(A) \redpnf^* \ltrpnf{\Res{x}\Snd{N}{x}} \rightarrow^* A'$. Hence by Lemmas~\ref{lem:red-pnf-to-std} and~\ref{lem:redalpha-pnf-to-std}, $A \rightarrow^* \ltr{\Res{x}\Snd{N}{x}} \rightarrow^* A'$
and $A'' \equiv \Res{x}(A' \parop \Snd{q}{x})$.
\end{proof}

\begin{lemma}\label{lem:ren-bicong}
Let $A$ and $B$ be two closed extended processes.
\begin{itemize}
\item Let $\sigma$ be a bijective renaming.
We have $A \bicong B$ if and only if $A \sigma \bicong B\sigma$.
\item Let $A'$ and $B'$ be obtained from $A$ and $B$, respectively, by replacing all variables (including their occurrences in domains of active substitutions) with distinct variables. We have
$A \bicong B$ if and only if $A' \bicong B'$.
\end{itemize}
\end{lemma}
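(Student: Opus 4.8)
The plan is to prove both equivalences in the style of the proof of Lemma~\ref{lem:ren-wkbisim}: exhibit a relation on closed extended processes and check that it is an observational bisimulation in the sense of Definition~\ref{def:bicong}; since $\bicong$ is the largest such relation, this yields the left-to-right implications, and the right-to-left implications then follow by applying them to the inverse renaming (using $(A\sigma)\sigma^{-1} = A$, $(B\sigma)\sigma^{-1} = B$). For the first point, fix a bijective renaming $\sigma$ and set $\rel \eqdef \{ (A\sigma, B\sigma) \mid A \text{ and } B \text{ closed},\ A \bicong B \}$. The processes $A\sigma$ and $B\sigma$ are closed, because a name renaming changes neither free variables nor domains, and they share the domain $\dom(A) = \dom(B)$; moreover $\rel$ is symmetric because $\bicong$ is. It thus remains to verify the three conditions of Definition~\ref{def:bicong} for an arbitrary pair $(A\sigma, B\sigma) \in \rel$, using throughout Lemma~\ref{lem:rename-eqstr-red}, which states that $\equiv$, $\rightarrow$, and the labelled transitions $\ltr{\alpha}$ are invariant under bijective renaming.

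The checks are then routine. For barbs: if $A\sigma \barb{a}$ then, applying Lemma~\ref{lem:rename-eqstr-red} with $\sigma^{-1}$, we get $A \barb{a\sigma^{-1}}$ (a restriction over the hole that avoids $a$ still avoids $a\sigma^{-1}$ since $\sigma$ is injective); as $A \bicong B$ this gives $B \barb{a\sigma^{-1}}$, and applying $\sigma$ back yields $B\sigma \barb{a}$. For reductions: if $A\sigma \rightarrow^* A'$ with $A'$ closed, then $A \rightarrow^* A'\sigma^{-1}$ by Lemma~\ref{lem:rename-eqstr-red}, and $A'\sigma^{-1}$ is again closed, so there is $B_1$ with $B \rightarrow^* B_1$ and $A'\sigma^{-1} \bicong B_1$; hence $B\sigma \rightarrow^* B_1\sigma$ and $(A', B_1\sigma) = ((A'\sigma^{-1})\sigma, B_1\sigma) \in \rel$. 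For contexts: given a closing evaluation context $\CTX[\hole]$ for $A\sigma$ (equivalently for $B\sigma$), let $\CTX'[\hole]$ be obtained from it by renaming every name by $\sigma^{-1}$, $\alpha$-renaming the bound names of $\CTX$ so that they avoid the free names of $A$ and $B$; then $\CTX'[\hole]$ is a closing evaluation context for $A$ and $B$, so $\CTX'[A] \bicong \CTX'[B]$ by the context condition of Definition~\ref{def:bicong}, hence $(\CTX'[A]\sigma, \CTX'[B]\sigma) \in \rel$. Since renaming commutes with context application, $\CTX'[A]\sigma = \CTX[A\sigma]$ and $\CTX'[B]\sigma = \CTX[B\sigma]$, so $\CTX[A\sigma] \rel \CTX[B\sigma]$, as required.

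The second point is handled identically, with $\sigma$ replaced by the injective variable substitution $\rho$ that performs the declared variable replacement and using the variable-replacement clauses of Lemma~\ref{lem:rename-eqstr-red}; the only extra observation is that $\rho$ restricts to a bijection between the variables occurring in a closed $A$ (in particular those in $\dom(A) = \fv(A)$) and their images, so closedness, domains, and closing contexts are preserved by $\rho$ and its inverse. I expect the only delicate point to be the context condition: one must set up the extension of renaming to contexts so that renaming genuinely commutes with plugging a process into the hole — that is, track the $\alpha$-conversions needed to keep the bound names (and, for the second point, bound variables) of the context apart from the free names/variables of $A$ and $B$ — and check that a bijective renaming sends closing evaluation contexts to closing evaluation contexts. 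Everything else is bookkeeping on top of Lemma~\ref{lem:rename-eqstr-red}.
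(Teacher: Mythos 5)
Your proposal is correct and follows essentially the same route as the paper's proof: both define the relation $\rel = \{(A\sigma, B\sigma) \mid A \bicong B\}$, verify the three conditions of Definition~\ref{def:bicong} via Lemma~\ref{lem:rename-eqstr-red} (pulling barbs, reductions, and contexts back through $\sigma^{-1}$ and pushing them forward again), and obtain the converse by the inverse renaming. The extra care you take with $\alpha$-renaming the bound names of the context so that renaming commutes with plugging is a detail the paper leaves implicit, but it does not change the argument.
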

\begin{proof}
To prove the first point, we define a relation $\rel$ by
$A' \rel B'$ if and only if $A' = A\sigma$, $B' = B\sigma$, and $A \bicong B$ for some $A$ and $B$. We show that $\rel$ satisfies the three properties of Definition~\ref{def:bicong}. Then ${\rel} \subseteq {\bicong}$, so if $A \bicong B$, then $A' = A \sigma \bicong B' = B\sigma$.
\begin{enumerate}
\item If $A' \rel B'$ and $A' \barb{a}$, then 
$A' \rightarrow^* \equiv \CTX[\Snd{a}{M}.P]$ for some evaluation context
$\CTX$ that does not bind $a$. Then, by Lemma~\ref{lem:rename-eqstr-red},
$A = A'\sigma^{-1} \rightarrow^* \equiv C\sigma^{-1}[\Snd{a\sigma^{-1}}{M\sigma^{-1}}.P\sigma^{-1}]$, so $A \barb{a\sigma^{-1}}$.
By definition of $\bicong$, $B \barb{a\sigma^{-1}}$, so $B' \barb{a}$ as above.

\item If $A' \rel B'$, $A' \rightarrow A'_1$, and $A'_1$ is closed, then by Lemma~\ref{lem:rename-eqstr-red},
$A = A'\sigma^{-1} \rightarrow A'_1 \sigma^{-1}$. We let $A'' = A'_1 \sigma^{-1}$, which is also closed.
So by definition of $\bicong$, $B \rightarrow^* B''$ and $A'' \bicong B''$ for some $B''$. By Lemma~\ref{lem:rename-eqstr-red}, $B' = B\sigma \rightarrow^* B''\sigma$. We let $B'_1 = B''\sigma$. We have $A'_1 \rel B'_1$ and $B' \rightarrow^* B'_1$. So Property~\ref{bctwo} holds.

\item If $A' \rel B'$, then $A = A'\sigma^{-1} \bicong B'\sigma^{-1} = B$,
so $\CTX[A']\sigma^{-1} = \CTX\sigma^{-1}[A] \bicong \CTX\sigma^{-1}[B] = \CTX[B']\sigma^{-1}$,
hence $\CTX[A'] \rel \CTX[B']$.

\end{enumerate}
The same argument also proves the converse, via the inverse renaming.

The proof of the second point is similar.
\end{proof}

\begin{lemma}\label{lem:bicong-comm}
If $M$ is ground, $\fv(P) \subseteq \{x\}$, and $a \notin \fn(P) \cup \fn(M)$, then $\Res{a}(\Snd{a}{M} \parop \Rcv{a}{x}.P) \bicong P\{\subst{M}{x}\}$.
\end{lemma}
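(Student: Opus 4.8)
The plan is to establish $\Res{a}(\Snd{a}{M} \parop \Rcv{a}{x}.P) \bicong P\{\subst{M}{x}\}$ by showing both processes reduce, up to structural equivalence, into each other's orbit, and then invoking the fact that structural equivalence is contained in observational equivalence. Concretely, the first step is the derivation already shown in Section~\ref{sec:ope-sem}: using \rulename{Alias}, \rulename{Subst}, and \rulename{Comm}, we have
\begin{eqnarray*}
\Res{a}(\Snd{a}{M} \parop \Rcv{a}{x}.P)
& \equiv & \Res{a}\Res{x}(\{\subst{M}{x}\} \parop \Snd{a}{x} \parop \Rcv{a}{x}.P)\\
& \rightarrow & \Res{a}\Res{x}(\{\subst{M}{x}\} \parop P)\\
& \equiv & P\{\subst{M}{x}\}
\end{eqnarray*}
(the hypotheses $\fv(P) \subseteq \{x\}$, $M$ ground, and $a \notin \fn(P) \cup \fn(M)$ ensure the side conditions on \rulename{New-Par} and on closing $x$ are met, and that the term being output may be replaced by a variable). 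So $\Res{a}(\Snd{a}{M} \parop \Rcv{a}{x}.P) \rightarrow Q$ where $Q \equiv P\{\subst{M}{x}\}$.

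With that single-step reduction in hand, I would exhibit an explicit observational bisimulation $\rel$ (symmetric) relating $S \eqdef \Res{a}(\Snd{a}{M} \parop \Rcv{a}{x}.P)$ to $P\{\subst{M}{x}\}$. The cleanest candidate is $\rel = \{(S, P\{\subst{M}{x}\}), (P\{\subst{M}{x}\}, S)\} \cup {\equiv}$, or equivalently the relation $\{(A,B) : A \equiv S, B \equiv P\{\subst{M}{x}\}\} \cup {\equiv} \cup (\text{its inverse})$. The only genuinely new pair is $(S, P\{\subst{M}{x}\})$. For Condition~\eqref{bcone} of Definition~\ref{def:bicong}: since $a$ is restricted at the top of $S$ and $a \notin \fn(P) \cup \fn(M)$, any barb $S\barb{c}$ with $c \neq a$ must come from $\Rcv{a}{x}.P$ after the internal communication, i.e.\ from $P\{\subst{M}{x}\}$; and conversely $P\{\subst{M}{x}\}\barb{c}$ gives $S \rightarrow Q \equiv P\{\subst{M}{x}\}$ and hence $S\barb{c}$. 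For Condition~\eqref{bctwo}: a reduction $S \rightarrow^* A'$ either is empty (match by $P\{\subst{M}{x}\} \rightarrow^* P\{\subst{M}{x}\}$, using that $S \equiv P\{\subst{M}{x}\}$ is \emph{not} true but $S \rel P\{\subst{M}{x}\}$ is), or factors through the unique communication step, after which $A' $ is reachable from $Q \equiv P\{\subst{M}{x}\}$, so $P\{\subst{M}{x}\} \rightarrow^* A''$ with $A' \equiv A''$; conversely any reduction of $P\{\subst{M}{x}\}$ is matched by first doing $S \rightarrow Q$ and then replaying it. For Condition~\eqref{bcthree}, I would rely on the fact that structural equivalence is a congruence and reductions are closed under evaluation contexts, so $\CTX[S]$ and $\CTX[P\{\subst{M}{x}\}]$ stand in the context-closure of $\rel$; more carefully, one should check that $\rel$ as defined is already closed under closing evaluation contexts, which follows because $\CTX[S] \rightarrow \CTX[Q]$ with $\CTX[Q] \equiv \CTX[P\{\subst{M}{x}\}]$ for any evaluation context $\CTX$.

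The main obstacle is making Condition~\eqref{bcthree} airtight: observational equivalence bakes context-closure into the coinductive definition, so the relation $\rel$ I write down must \emph{itself} be closed under all closing evaluation contexts, not merely become so after one closure step. The clean way around this is to take $\rel = \{(\CTX[S], \CTX[P\{\subst{M}{x}\}]) : \CTX \text{ a closing evaluation context for } S\} \cup (\text{inverse})$ and verify the three conditions for this larger relation; Conditions~\eqref{bcone} and~\eqref{bctwo} then require analyzing reductions of $\CTX[S]$, but since $S$ is closed and $a \notin \fn(P)\cup\fn(M)$, the restriction on $a$ confines all interaction with $\CTX$ to what $P\{\subst{M}{x}\}$ already offers, and the single internal step $\CTX[S] \rightarrow \CTX[Q] \equiv \CTX[P\{\subst{M}{x}\}]$ lets every reduction trace of $\CTX[S]$ be reproduced by $\CTX[P\{\subst{M}{x}\}]$ and vice versa. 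I would use Lemma~\ref{lem:ren-bicong} if any renaming of $\CTX$'s bound names clashing with $a$ is needed to keep $\CTX$ simple. Once these checks are complete, $\rel$ is an observational bisimulation containing $(S, P\{\subst{M}{x}\})$, and since $\bicong$ is the largest such relation, $S \bicong P\{\subst{M}{x}\}$.
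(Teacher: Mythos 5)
Your proof is correct in outline but takes a genuinely different route from the paper's. The paper does not construct an observational bisimulation at all: it invokes Lemma~\ref{lem:th1-first-dir} (${\wkbisim}\subseteq{\bicong}$) and then exhibits a \emph{labelled} bisimulation, namely $\rel = \{(A,B) \mid A\equiv A_1, B\equiv B_1 \text{ or vice versa}\} \cup \{(A,B)\mid A\equiv B\}$ with $A_1 = \Res{a}(\Snd{a}{M}\parop\Rcv{a}{x}.P)$ and $B_1 = P\{\subst{M}{x}\}$. This only requires checking that $\frameof{A_1}=\frameof{B_1}=\nil$, that $A_1$'s unique reduction lands in $\equiv B_1$, and that $A_1$ has no labelled transitions --- an analysis of the small closed process $A_1$ alone, with context closure obtained for free from the already-proved Lemma~\ref{lem:bisim-context-closed}. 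Your approach instead builds the context closure into the candidate relation and verifies the observational-bisimulation conditions for $\CTX[S]$ versus $\CTX[P\{\subst{M}{x}\}]$ for arbitrary closing $\CTX$; this works, but it forces you to analyse the reductions of $\CTX[S]$ for every context, which amounts to re-proving a special case of the congruence property and in this paper would require the partial-normal-form decomposition lemmas (e.g.\ Lemma~\ref{lem:decomp-redP}) applied under an arbitrary context rather than just to $A_1$. Two small repairs are needed for your version to go through: the context-closed relation must still contain ${\equiv}$ (restricted to closed processes), since after the communication step the pair $(\CTX[Q],\CTX[P\{\subst{M}{x}\}])$ is only structurally equivalent and would otherwise fall outside $\rel$; and the claim that every reduction of $\CTX[S]$ either stays inside $\CTX$ or is the single communication on $a$ needs the same decomposition machinery the paper defers to, not just the informal observation that $a$ is restricted. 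With those points made precise, your argument is a valid, more self-contained but heavier alternative to the paper's.
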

\begin{proof}
By Lemma~\ref{lem:th1-first-dir}, it is enough to prove that
$\Res{a}(\Snd{a}{M} \parop \Rcv{a}{x}.P) \wkbisim P\{\subst{M}{x}\}$.
Let $A_1 = \Res{a}(\Snd{a}{M} \parop \Rcv{a}{x}.P)$ and $B_1 = P\{\subst{M}{x}\}$.
Let ${\rel} = \{ (A, B) \mid A$ and $B$ are closed extended processes, $A \equiv A_1$ and $B \equiv B_1$, or $A \equiv B_1$ and $B \equiv A_1 \} \cup \{ (A, B) \mid A$ and $B$ are closed extended processes and $A \equiv B\}$.
We show that $\rel$ is a labelled bisimulation: $\rel$ is symmetric and
\begin{enumerate}
\item We have $A_1 \enveq B_1$ since $\frameof{A_1} = \nil = \frameof{B_1}$. 
Hence, if $A \rel B$, then $A \enveq B$.

\item If $A_1 \rightarrow A'$ and $A'$ is closed, then 
$A' \equiv B_1$. (This point can be proved in detail by using 
partial normal forms.)

Hence, if $A \rel B$, $A \rightarrow A'$, and $A'$ is closed, then 
\begin{itemize}
\item either $A \equiv A_1$ and $B \equiv B_1$, so $A' \equiv B_1 \equiv B$,
hence with $B' \eqdef B$, $B \rightarrow^* B'$ and $A' \rel B'$.

\item or $A \equiv B_1$ and $B \equiv A_1$, so
$B \equiv A_1 \rightarrow B_1 \equiv A \rightarrow A'$,
hence with $B' \eqdef A'$, $B \rightarrow^* B'$ and $A' \rel B'$.

\item or $A \equiv B$, so with $B' \eqdef A'$, $B \equiv A \rightarrow A' = B'$,
and $A' \rel B'$.
\end{itemize}

\item $A_1$ does not reduce by $\ltr{\alpha}$, for any $\alpha$.
(This point can be proved in detail by using partial normal forms.)
Hence, if $A \rel B$, $A \ltr{\alpha} A'$, and $A'$ is closed, then 
\begin{itemize}
\item either $A \equiv A_1$ and $B \equiv B_1$, so $A_1 \ltr{\alpha} A'$.
This case is impossible. 

\item or $A \equiv B_1$ and $B \equiv A_1$, so
$B \equiv A_1 \rightarrow B_1 \equiv A \ltr{\alpha} A'$,
hence with $B' \eqdef A'$, $B \rightarrow \ltr{\alpha} B'$ and $A' \rel B'$.

\item or $A \equiv B$, so with $B' \eqdef A'$, $B \equiv A \ltr{\alpha} A' = B'$,
and $A' \rel B'$.
\end{itemize}
\end{enumerate}
Therefore, ${\rel} \subseteq {\wkbisim}$, so $A_1 \wkbisim B_1$.
\end{proof}

\begin{corollary}\label{cor:bicong-comm}
If $A$ is a closed extended process, $x \in \dom(A)$, $\fv(P) \subseteq \dom(A)$, and $a \notin \fn(P)$, then $A \parop \Res{a}(\Snd{a}{x} \parop \Rcv{a}{x}.P) \bicong A \parop P$.
\end{corollary}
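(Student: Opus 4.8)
The plan is to reduce the statement to Lemma~\ref{lem:bicong-comm} by using the active substitutions of $A$ to turn the sent variable $x$ into a ground term, and then to close the resulting equivalence under an evaluation context. The only genuine content is this ``resolution'' of $x$; everything else is name/variable hygiene.

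First I would set up the normal form. Since $a$ is restricted in $\Res{a}(\Snd{a}{x} \parop \Rcv{a}{x}.P)$ and $a \notin \fn(P)$, I may $\alpha$-rename $a$ so that $a$ does not occur in $A$. As $A$ is closed and $x \in \dom(A)$, the normal-form observation of Section~\ref{sec:ope-sem} gives $A \equiv \nu\vect{n}.(\sigma \parop P_0)$ with $\sigma = \{\subst{\vect{M}}{\vect{x}}\}$, $\dom(\sigma) = \dom(A) \ni x$, $\fv(P_0) = \fv(\vect{M}) = \emptyset$, and (after $\alpha$-renaming $\vect{n}$) $\{\vect{n}\} \cap (\fn(P) \cup \{a\}) = \emptyset$; in particular $a \notin \fn(\vect{M})$. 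Put $M \eqdef x\sigma$, a ground term, pick a fresh variable $x'$ so that $\Rcv{a}{x}.P = \Rcv{a}{x'}.P\{\subst{x'}{x}\}$, and set $Q \eqdef P\{\subst{x'}{x}\}\sigma$, so that $\fv(Q) \subseteq \{x'\}$ and $a \notin \fn(Q)$. Now, pulling $\nu\vect{n}$ out with \rulename{New-Par} and applying \rulename{Subst} iteratively over $\sigma$ (legitimate because $a \notin \fn(\vect{M})$, so no name capture, and $\fv(P_0)=\emptyset$), I obtain
\[A \parop \Res{a}(\Snd{a}{x} \parop \Rcv{a}{x}.P) \equiv \nu\vect{n}.\bigl(\sigma \parop P_0 \parop \Res{a}(\Snd{a}{M} \parop \Rcv{a}{x'}.Q)\bigr), \qquad A \parop P \equiv \nu\vect{n}.(\sigma \parop P_0 \parop P\sigma),\]
and a one-line computation of composites shows $Q\{\subst{M}{x'}\} = P\sigma$ (net effect $x \mapsto M = x\sigma$, and $y \mapsto y\sigma$ for the other exported variables).

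Next I would invoke Lemma~\ref{lem:bicong-comm} with ground message $M$, body $Q$, and formal parameter $x'$; its hypotheses hold since $M$ is ground, $\fv(Q) \subseteq \{x'\}$, and $a \notin \fn(Q) \cup \fn(M)$ (the latter because $a$ was arranged not to occur in $A$, hence not in $\vect{M}$). This gives $\Res{a}(\Snd{a}{M} \parop \Rcv{a}{x'}.Q) \bicong Q\{\subst{M}{x'}\} = P\sigma$. Applying Condition~\eqref{bcthree} of Definition~\ref{def:bicong} with the evaluation context $\CTX \eqdef \nu\vect{n}.(\sigma \parop P_0 \parop \hole)$, which is closing for both $\Res{a}(\Snd{a}{M} \parop \Rcv{a}{x'}.Q)$ and $P\sigma$ (both have empty free variables and empty domain, while $\CTX$ supplies exactly $\dom(A)$), I get $\nu\vect{n}.(\sigma \parop P_0 \parop \Res{a}(\Snd{a}{M} \parop \Rcv{a}{x'}.Q)) \bicong \nu\vect{n}.(\sigma \parop P_0 \parop P\sigma)$. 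Chaining this with the two structural equivalences above, using $\equiv \subseteq \bicong$ (immediate from Definition~\ref{def:bicong}) and transitivity of $\bicong$, yields $A \parop \Res{a}(\Snd{a}{x} \parop \Rcv{a}{x}.P) \bicong A \parop P$.

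The part requiring care—and the only real obstacle—is the bookkeeping just described: choosing the normal form of $A$ with restricted names disjoint from $\fn(P)\cup\{a\}$ and with ground values (so that iterated \rulename{Subst} neither captures $a$ nor leaves stray free variables), renaming the input-bound $x$ to a fresh $x'$ so it is not conflated with the exported $x$, and checking that $\CTX$ is genuinely closing for the two processes related by Lemma~\ref{lem:bicong-comm}. Once those conditions are in place, the argument is a direct appeal to Lemma~\ref{lem:bicong-comm} under a context.
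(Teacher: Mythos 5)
Your proposal is correct and follows essentially the same route as the paper: put $A$ into the normal form $\Res{\vect n}(\sigma \parop P_0)$ with $\vect n$ renamed away from $\fn(P)$ and $a$ renamed fresh, push the active substitution into the message and into the body of the input, apply Lemma~\ref{lem:bicong-comm} to the resulting ground instance, and close under the evaluation context $\Res{\vect n}(\sigma \parop P_0 \parop \hole)$. The only cosmetic difference is that you rename the input-bound variable to a fresh $x'$, whereas the paper instead instantiates the body with $\sigma$ restricted away from $x$; both are equivalent ways of handling the same capture issue.
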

\begin{proof}
Let $\pnf(A) = \Res{\vect n}(\sigma \parop P')$.
We rename $\vect n$ so that $\{\vect n\} \cap \fn(P) = \emptyset$.
Let $\sigma' = \sigma_{|\dom(\sigma)\setminus\{x\}}$. Let $a' \notin \fn(P) \cup \fn(\sigma)$. We have
\begin{align*}
A \parop \Res{a}(\Snd{a}{x} \parop \Rcv{a}{x}.P) 
&\equiv \Res{\vect n}(\sigma \parop P' \parop \Res{a'}(\Snd{a'}{x\sigma} \parop \Rcv{a'}{x}.P\sigma')\\
&\bicong \Res{\vect n}(\sigma \parop P' \parop P\sigma'\{\subst{x\sigma}{x}\})
\tag*{by Lemma~\ref{lem:bicong-comm}}\\
&= \Res{\vect n}(\sigma \parop P' \parop P\sigma)\\
&\equiv A \parop P
\end{align*} 
\end{proof}

\begin{restate}{Lemma}{\ref{lem:extrusion}}
Let $A$ and $B$ be two closed extended processes with a same domain that contains $\vect{x}$.
Let $\CTX_{\vect x}[\hole] \eqdef \nu \vect{x}.( \prod_{x \in \vect{x}}\Snd {n_x} x \parop \hole\,)$
using names $n_x$ that do not occur in $A$ or $B$.
If $\CTX_{\vect x}[A] \bicong \CTX_{\vect x}[B]$, then $A \bicong B$.
\end{restate}%
\begin{proof}
  We rely on the following property: if $A$ is a closed extended process with $\{\vect x\} \subseteq \dom(A)$ and $\CTX_{\vect x}[A] \rightarrow
  C'$, then $A \rightarrow A'$ and $C' \equiv
  \CTX_{\vect x}[A']$ for some closed extended process $A'$, proved as follows. Let $\pnf(A) = \Res{\vect n}(\sigma \parop P)$. We rename $\vect n$ so that $\{ \vect n \} \cap \{ \vect n_x \} = \emptyset$. Then $\pnf(\CTX_{\vect x}[A]) = \Res{\vect n}(\sigma_{|\dom(\sigma)\setminus\{\vect x\}} \parop \prod_{x \in \vect{x}}\Snd {n_x}{x\sigma} \parop P)$. By Lemma~\ref{lem:red-std-to-pnf}, $\pnf(\CTX_{\vect x}[A]) \redpnf \pnf(C')$. By Lemma~\ref{lem:decomp-redpnf}, $\prod_{x \in \vect{x}}\Snd {n_x}{x\sigma} \parop P \redP P'$ and 
$\pnf(C') \equiv \Res{\vect n}(\sigma_{|\dom(\sigma)\setminus\{\vect x\}} \parop P')$
for some $P'$.
By Lemmas~\ref{lem:decomp-redP-closed} and~\ref{lem:red-no-fn}, since $\{ \vect n_x \} \cap \fn(P) = \emptyset$, the only case that can happen is $P \redP P''$ and $P' \equiv \prod_{x \in \vect{x}}\Snd {n_x}{x\sigma} \parop P''$ for some closed process $P''$.
Let $A' = \Res{\vect n}(\sigma \parop P'')$.
Then $A \equiv \pnf(A) = \Res{\vect n}(\sigma \parop P) \rightarrow \Res{\vect n}(\sigma \parop P'') = A'$ and
$C' \equiv \pnf(C') 
\equiv \Res{\vect n}(\sigma_{|\dom(\sigma)\setminus\{\vect x\}} \parop P') 
\equiv \Res{\vect n}(\sigma_{|\dom(\sigma)\setminus\{\vect x\}} \parop \prod_{x \in \vect{x}}\Snd {n_x}{x\sigma} \parop P'')
\equiv \Res{\vect x}(\prod_{x \in \vect{x}}\Snd {n_x}{x} \parop \Res{\vect n}(\sigma \parop P''))
\equiv \CTX_{\vect x}[A']$.
 
Let $\rel$ be the relation that
collects all closed extended processes $A$ and~$B$ with a same domain that contains $\vect{x}$, such that $\CTX_{\vect x}[A] \bicong \CTX_{\vect x}[B]$, for some $\vect x$ and some names $\vect n_x$ that do not occur in $A$ or $B$.
We show that $\rel$ is an observational bisimulation.

  Assume $A \rel B$. 
\begin{itemize}
\item If $A \rightarrow A'$ and $A'$ is closed, then $\CTX_{\vect x}[A] \rightarrow \CTX_{\vect x}[A']$.
By bisimulation hypothesis, $\CTX_{\vect x}[B] \rightarrow^* C' \bicong \CTX_{\vect x}[A']$.
By induction on the number of reductions and using partial normal forms,
we build $B \rightarrow^* B'$ such that $C' \equiv \CTX_{\vect x}[B']$ for some closed extended process $B'$
and conclude using $A' \rel B'$.

\item We have $\CTX_{\vect x}[A] \barb{n}$ if and only if $n = n_x$ for some $x \in \vect{x}$ 
or $A \barb{n}$, and similarly for $B$. Hence,
if $A \barb{n}$, then $\CTX_{\vect x}[A] \barb{n}$, so $\CTX_{\vect x}[B] \barb{n}$.
By Lemma~\ref{lem:barb-obvious},
since $A \barb{n}$, we have $n \neq n_x$ for all $x \in \vect{x}$, 
so $B \barb{n}$.

\item 
For the congruence property, we suppose that $A \rel B$, and we want to show
that $\CTX[A] \rel \CTX[B]$ for all closing evaluation contexts $\CTX$.
Using Lemma~\ref{lem:ren-bicong}, we show that $\rel$ is invariant
by renaming of free names and variables, so we can rename the free names
and variables of $\CTX$, so that the obtained context is simple.
Then by Lemma~\ref{lem:simplecontexts}, we construct a context $\CTX'$ of 
the form $\Res{\vect u}(\hole \parop C'')$ such that $\CTX \equiv \CTX'$.
Hence, it is sufficient to show that $\CTX'[A] \rel \CTX'[B]$.

Let $\pnf(C'') = \Res{\vect n}(\sigma \parop P)$. Let $\vect u = \vect m\vect z$. We rename $\vect n$ so that $\{\vect n\} \cap (\fn(A) \cup \fn(B)) = \emptyset$.
Since $A \rel B$, $\CTX_{\vect x}[A] \bicong \CTX_{\vect x}[B]$ for some $\vect x$.
Using Lemma~\ref{lem:ren-bicong}, we rename $\vect n_x$ so that
$\{ \vect n_x \} \cap (\{ \vect n, \vect m\} \cup \fn(P) \cup \fn(\sigma)) = \emptyset$.
Let $n'_x$ for $x \in (\vect x \cup \dom(\sigma))\setminus \vect z$ be fresh names.
Let $\CTX_1[\hole] = \Res{\vect m,\vect n, \vect n_x,\vect z\setminus(\vect x \cup \dom(\sigma))}(\hole \parop \widetilde{\Rcv{n_x}{x}}.(P \parop \prod_{x \in \vect x \setminus \vect z} \Snd{n'_x}{x} \parop \prod_{y \in \dom(\sigma) \setminus \vect z} \Snd{n'_y}{y\sigma}))$, where $\widetilde{\Rcv{n_x}{x}}$ stands for $\Rcv{n_{x_1}}{x_1}\ldots \Rcv{n_{x_k}}{x_k}$ when $\vect x = x_1, \ldots, x_k$.
\begin{align*}
&\CTX_{(\vect x \cup \dom(\sigma)) \setminus \vect z }[\CTX'[A]] \\
&\quad\equiv \Res{(\vect x \cup \dom(\sigma)) \setminus \vect z}(\Res{\vect m}\Res{\vect z}(A \parop \Res{\vect n}(\sigma \parop P)) \parop \prod_{x \in \vect x \setminus \vect z} \Snd{n'_x}{x} \parop \prod_{y \in \dom(\sigma) \setminus \vect z} \Snd{n'_y}{y})\\
&\quad\equiv \Res{\vect m,\vect n,\vect z\cup\vect x\cup\dom(\sigma)}
(A \parop P \parop \sigma \parop \prod_{x \in \vect x \setminus \vect z} \Snd{n'_x}{x} \parop \prod_{y \in \dom(\sigma) \setminus \vect z} \Snd{n'_y}{y\sigma})\\
&\quad\equiv \Res{\vect m,\vect n, \vect z\setminus(\vect x \cup \dom(\sigma))}\Res{\vect x}(A \parop P \parop \prod_{x \in \vect x \setminus \vect z} \Snd{n'_x}{x} \parop \prod_{y \in \dom(\sigma) \setminus \vect z} \Snd{n'_y}{y\sigma})\\
&\quad\bicong \Res{\vect m,\vect n, \vect z\setminus(\vect x \cup \dom(\sigma))}\Res{\vect x}\\
&\phantom{\quad{}\equiv {}\quad}(A \parop \Res{\vect n_x}(\prod_{x \in \vect x} \Snd{n_x}{x} \parop \widetilde{\Rcv{n_x}{x}}.(P \parop \prod_{x \in \vect x \setminus \vect z} \Snd{n'_x}{x} \parop \prod_{y \in \dom(\sigma) \setminus \vect z} \Snd{n'_y}{y\sigma}))
\end{align*}
by Corollary~\ref{cor:bicong-comm} applied several times, so
\begin{align*}
&\CTX_{(\vect x \cup \dom(\sigma)) \setminus \vect z }[\CTX'[A]] \\
&\quad\equiv \Res{\vect m,\vect n, \vect n_x,\vect z\setminus(\vect x \cup \dom(\sigma))}\\
&\phantom{\quad{}\equiv {}\quad}(\Res{\vect x}(A \parop \prod_{x \in \vect x} \Snd{n_x}{x})\parop \widetilde{\Rcv{n_x}{x}}.(P \parop \prod_{x \in \vect x \setminus \vect z} \Snd{n'_x}{x} \parop \prod_{y \in \dom(\sigma) \setminus \vect z} \Snd{n'_y}{y\sigma})) \\
&\quad\equiv \CTX_1[\CTX_{\vect x}[A]]
\end{align*}
By the same argument, $\CTX_{(\vect x \cup \dom(\sigma)) \setminus \vect z }[\CTX'[B]] \bicong \CTX_1[\CTX_{\vect x}[B]]$.
Since $\CTX_{\vect x}[A] \bicong \CTX_{\vect x}[B]$, we also have $\CTX_1[\CTX_{\vect x}[A]] \bicong \CTX_1[\CTX_{\vect x}[B]]$, so by transitivity of $\bicong$, $\CTX_{(\vect x \cup \dom(\sigma)) \setminus \vect z }[\CTX'[A]] \bicong \CTX_{(\vect x \cup \dom(\sigma)) \setminus \vect z }[\CTX'[B]]$. Hence,
$\CTX'[A] \rel \CTX'[B]$.
\end{itemize}
Since $\rel$ is an observational bisimulation, ${\rel} \subseteq {\bicong}$,
so $\CTX_{\vect x}[A] \bicong \CTX_{\vect x}[B]$ implies $A \bicong B$.
\end{proof}

\begin{corollary}\label{cor:enveq-frame}
Observational equivalence and static equivalence coincide on frames.
\end{corollary}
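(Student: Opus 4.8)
The plan is to deduce Corollary~\ref{cor:enveq-frame} from the results already established for extended processes, using the fact that a frame is a particular kind of extended process whose associated frame (in the sense of $\varphi(\cdot)$) is itself. Recall Lemma~\ref{lem:enveq1}, which states that observational equivalence and static equivalence coincide on frames; this lemma is precisely the statement of the corollary, so the real content is to check that Lemma~\ref{lem:enveq1} is available at this point. Since Lemma~\ref{lem:enveq1} is stated to be ``an immediate corollary of Theorem~\ref{THM:OBSERVATIONAL-LABELED}'' (via Corollary~\ref{cor:enveq-frame} in the appendix), the honest reading is that the present corollary \emph{is} the appendix proof of Lemma~\ref{lem:enveq1}, and I should prove it directly from Theorem~\ref{THM:OBSERVATIONAL-LABELED} and the lemmas of this section.

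First I would recall that, by Lemma~\ref{lem:enveq2} (or rather Lemma~\ref{lem:enveq-gen}, which is its generalization proved above), observational equivalence always implies static equivalence; hence ${\bicong} \subseteq {\enveq}$ holds on all closed extended processes, and in particular on frames. So the only direction to prove is that static equivalence of two closed frames $\varphi$ and $\psi$ implies observational equivalence. By Theorem~\ref{THM:OBSERVATIONAL-LABELED} it suffices to show $\varphi \wkbisim \psi$. I would exhibit a candidate labelled bisimulation: take the relation $\rel$ consisting of all pairs $(\varphi, \psi)$ of statically equivalent closed frames (symmetrized, which it already is since static equivalence is symmetric). The three conditions of Definition~\ref{def:wkbisim} then need to be verified. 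Condition~\ref{ppone} is static equivalence, which holds by hypothesis. For Condition~\ref{pptwo}, a frame has no plain-process component, so by the semantics it performs no internal reduction $\varphi \rightarrow \varphi'$ (using that $\equiv$-closure cannot turn a frame into something with a redex, which can be made precise via partial normal forms: $\pnf(\varphi) = \nu\vect n.(\sigma \parop \nil)$ has no $\redP$ step), so the condition is vacuous. For Condition~\ref{ppthree}, a frame has no input or output prefix either, so it performs no labelled transition $\varphi \ltr{\alpha}\varphi'$, and the condition is again vacuous. Hence $\rel$ is a labelled bisimulation, so $\varphi \wkbisim \psi$, so $\varphi \bicong \psi$ by Theorem~\ref{THM:OBSERVATIONAL-LABELED}.

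The main obstacle is the bookkeeping needed to justify that a closed frame has no reductions and no labelled transitions: although intuitively obvious, the definitions close $\equiv$ and $\rightarrow$ (and $\ltr{\alpha}$) under structural equivalence and evaluation contexts, so one must argue that nothing structurally equivalent to $\varphi$ is a plain-process-bearing process with a redex. The cleanest route is to pass to partial normal forms (Section~\ref{app:bigpfpnf}): by Lemma~\ref{lem:equivpnf}, $\varphi \equiv \pnf(\varphi)$, and $\pnf(\varphi)$ is of the form $\nu \vect n.(\sigma \parop \nil)$; then by Lemma~\ref{lem:decomp-redpnf} any reduction from $\pnf(\varphi)$ would require $\nil \redP P'$, which is impossible, and by Lemma~\ref{lem:decomp-ltrpnf} any labelled transition would require $\nil \ltrP{\alpha\sigma} A'$, which is impossible by inspection of the rules defining $\ltrP{}$; transferring back via Lemmas~\ref{lem:red-std-to-pnf} and~\ref{lem:redalpha-std-to-pnf} gives the claim for $\varphi$ itself. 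A short remark recording that static equivalence of frames is invariant under $\equiv$ (Lemma~\ref{LEM:INVARIANT-STATIC-EQ}), so that $\rel$ is well-behaved under the $\equiv$-closure built into the semantics, completes the argument. This is all routine once the right lemmas are cited, so the write-up should be only a few lines.

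\begin{proofof}{Corollary~\ref{cor:enveq-frame}}
Let $\varphi$ and $\psi$ be closed frames. If $\varphi \bicong \psi$, then $\varphi \enveq \psi$ by Lemma~\ref{lem:enveq-gen}. Conversely, assume $\varphi \enveq \psi$. By Theorem~\ref{THM:OBSERVATIONAL-LABELED}, it suffices to show $\varphi \wkbisim \psi$. Consider the relation $\rel$ relating any two statically equivalent closed frames; it is symmetric since $\enveq$ is symmetric. We check the conditions of Definition~\ref{def:wkbisim}. Condition~\ref{ppone} holds by definition of $\rel$. For Conditions~\ref{pptwo} and~\ref{ppthree}, we observe that a closed frame performs neither internal reductions nor labelled transitions: by Lemma~\ref{lem:equivpnf}, $\varphi \equiv \pnf(\varphi)$, and $\pnf(\varphi)$ has the form $\Res{\vect n}(\sigma \parop \nil)$; by Lemmas~\ref{lem:red-std-to-pnf} and~\ref{lem:redalpha-std-to-pnf}, any reduction or labelled transition of $\varphi$ would induce one of $\pnf(\varphi)$, hence by Lemmas~\ref{lem:decomp-redpnf} and~\ref{lem:decomp-ltrpnf} a reduction $\nil \redP P'$ or a transition $\nil \ltrP{\alpha\sigma} A'$, both of which are impossible by inspection of the rules. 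So Conditions~\ref{pptwo} and~\ref{ppthree} hold vacuously, and $\rel$ is a labelled bisimulation. Hence $\varphi \wkbisim \psi$, so $\varphi \bicong \psi$.
\end{proofof}
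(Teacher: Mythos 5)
Your proof is correct and takes essentially the same route as the paper's: the paper's entire argument is that frames perform no reductions and no labelled transitions, so labelled bisimilarity degenerates to static equivalence on frames, and Theorem~\ref{THM:OBSERVATIONAL-LABELED} then transfers this to observational equivalence. Your write-up merely makes explicit (via partial normal forms) the "frames do not reduce" step that the paper states without further justification.
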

\begin{proof}
Since frames do not reduce, static equivalence and labelled bisimilarity
coincide on frames. By Theorem~\ref{THM:OBSERVATIONAL-LABELED}, we can 
then conclude.
\end{proof}

\section{Proof of Lemma~\ref{LEM:DISCLOSURE}}\label{app:disclosure}

The \emph{image} of a substitution $\sigma = \{ \subst{M_1}{x_1}, \dots, \subst{M_n}{x_n} \}$
is the set of terms $\{M_1, \allowbreak \dots, \allowbreak M_n\}$.
We denote by $\rho$ a bijective renaming.
We denote by $\sigma\rho$ the substitution obtained by applying the renaming $\rho$
to the terms in the image of $\sigma$, that is,
when $\sigma = \{ \subst{M_1}{x_1}, \dots, \subst{M_n}{x_n} \}$,
$\sigma\rho = \{\subst{M_1\rho}{x_1}, \dots, \subst{M_n\rho}{x_n} \}$.

\begin{lemma}\label{lem:equivframes}
Let $\Res{\vect n}\sigma$ and $\Res{\vect n'}\sigma'$ be two frames such that $\Res{\vect n}\sigma \equiv \Res{\vect n'}\sigma'$, and $M$ and $N$ be two terms such that $\fv(M) \cup \fv(N) \subseteq \dom(\sigma) = \dom(\sigma')$ and $\{\vect n, \vect n'\} \cup (\fn(M) \cup \fn(N)) = \emptyset$. If $\Sigma \vdash M\sigma = N\sigma$, then $\Sigma \vdash M\sigma' = N\sigma'$.
\end{lemma}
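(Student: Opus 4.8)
\textbf{Proof plan for Lemma~\ref{lem:equivframes}.}
The goal is to show that the relation ``$M$ and $N$ are equal in $\varphi$'' (Definition~\ref{def:eqframe}) does not depend on the representative $\Res{\vect n}\sigma$ chosen for the frame, as long as that representative avoids the names of $M$ and $N$. The plan is to analyze the structural equivalence $\Res{\vect n}\sigma \equiv \Res{\vect n'}\sigma'$ and extract from it an explicit relationship between $\sigma$ and $\sigma'$. Since both sides are frames with the same domain and all restrictions are on names (never on variables that are in the domain), structural equivalence on frames is quite rigid: after pulling all restrictions to the top via \rulename{New-Par}, \rulename{New-C}, \rulename{New-$\nil$}, and discarding \rulename{Par-$\nil$}, \rulename{Par-A}, \rulename{Par-C} rearrangements, the only rule that can alter the terms is \rulename{Rewrite}, which replaces a term by an equationally equal one. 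So I expect to prove, by induction on the derivation of $\Res{\vect n}\sigma \equiv \Res{\vect n'}\sigma'$, that there is a bijective renaming $\rho$ on names with $\rho$ fixing all names outside $\{\vect n\}\cup\{\vect n'\}$ (in particular fixing $\fn(M)\cup\fn(N)$), such that $\rho(\{\vect n\}) = \{\vect n'\}$ and $\Sigma \vdash x\sigma\rho = x\sigma'$ for every $x \in \dom(\sigma)$; this is exactly the kind of normalization the paper already uses when it says ``ignoring all restrictions in a frame yields an underlying substitution.''

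Granting that, the conclusion follows by a direct computation. Since $\fv(M)\cup\fv(N)\subseteq\dom(\sigma)=\dom(\sigma')$ and $\fn(M)\cup\fn(N)$ is disjoint from $\{\vect n\}\cup\{\vect n'\}$, the renaming $\rho$ fixes $M$ and $N$, so $M\rho = M$ and $N\rho = N$. From $\Sigma\vdash M\sigma = N\sigma$, applying the renaming $\rho$ and using that the equational theory is closed under renaming of names (it is closed under substitution of terms for names, hence a fortiori under bijective renaming), I get $\Sigma\vdash (M\sigma)\rho = (N\sigma)\rho$, i.e.\ $\Sigma\vdash (M\rho)(\sigma\rho) = (N\rho)(\sigma\rho)$, i.e.\ $\Sigma\vdash M(\sigma\rho) = N(\sigma\rho)$. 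Then using $\Sigma\vdash x\sigma\rho = x\sigma'$ for each $x\in\dom(\sigma')$ together with the congruence property of the equational theory (closure under applying function symbols, plus the fact that $M$ and $N$ are built from the variables of $\dom(\sigma')$ and names, on which both $\sigma\rho$ and $\sigma'$ act compatibly), I obtain $\Sigma\vdash M\sigma' = M(\sigma\rho)$ and $\Sigma\vdash N\sigma' = N(\sigma\rho)$, hence $\Sigma\vdash M\sigma' = N\sigma'$ by transitivity.

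The main obstacle will be the inductive extraction of $\rho$ and the equational relationship from $\Res{\vect n}\sigma\equiv\Res{\vect n'}\sigma'$: the structural-equivalence relation is generated as a congruence closed under evaluation contexts, so the derivation may apply rules deep inside subterms and in both directions, and care is needed to maintain the invariant (that each intermediate expression is a frame with the given domain, that the accumulated renaming stays bijective and fixes the relevant names, and that the pointwise equational identity between images is preserved). A convenient way to organize this is to prove once and for all a normal-form statement for frames --- every frame $\varphi$ with $\dom(\varphi)=\{\vect x\}$ satisfies $\varphi \equiv \Res{\vect n}\sigma$ where $\{\vect n\}\subseteq\fn(\mathrm{image}(\sigma))$ and $\sigma$ is determined up to renaming of $\vect n$ and up to $\Sigma$-equality of its image terms --- and then reduce the lemma to comparing two such normal forms, where the renaming is read off directly from the fact that restricted names that actually occur in the image are matched up bijectively. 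The remaining steps (stability of equality under renaming, congruence of $\Sigma$) are routine given the assumptions the paper already places on equational theories.
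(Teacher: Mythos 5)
Your overall strategy is the one the paper follows---induct on the derivation of the structural equivalence to extract a renaming together with a pointwise $\Sigma$-equality between the two substitutions, then conclude by closure of the theory under renaming and congruence---and your second paragraph (the concluding computation) is correct as it stands. The problem is the inductive invariant you propose to carry through the first step. You claim a bijective renaming $\rho$ fixing all names outside $\{\vect n\}\cup\{\vect n'\}$ with $\rho(\{\vect n\})=\{\vect n'\}$ and $\Sigma\vdash x\sigma\rho = x\sigma'$ for all $x$. This is false in general: structural equivalence does not preserve the number of restricted names, since $\sigma \equiv \sigma\parop\nil \equiv \sigma\parop\Res{n'}\nil \equiv \Res{n'}\sigma$ by \rulename{Par-$\nil$}, \rulename{New-$\nil$}, and \rulename{New-Par}, so $\Res{\vect n}\sigma\equiv\Res{\vect n}\Res{n'}\sigma$ with $|\vect n|\neq|\vect n,n'|$ and no bijection of the required shape exists. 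Your suggested repair---match up only the restricted names ``that actually occur in the image''---does not survive \rulename{Rewrite} either: $\Res{m,n}\{\subst{\Const{fst}((m,n))}{x}\}\equiv\Res{m,n}\{\subst{m}{x}\}$, and $n$ occurs in the image of one representative but not the other. The fallback you sketch (a normal form for frames in which $\sigma$ is ``determined up to renaming of $\vect n$ and up to $\Sigma$-equality'') is essentially the statement of the lemma itself, so it cannot be assumed as the organizing device without a separate proof.

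The paper circumvents both difficulties. First, it does not induct on $\equiv$ directly: it transports the hypothesis to the partial-normal-form equivalence $\equivpnf$ via Lemma~\ref{lem:struct-std-to-pnf}, so the induction only has to handle a handful of top-level rules ($\brn{Plain}''$, $\brn{New-C}''$, $\brn{New-Par}''$, $\brn{Rewrite}''$, plus symmetry and transitivity) rather than arbitrary rule applications under evaluation contexts in both directions---exactly the obstacle you flag at the end. Second, instead of a single bijection between $\{\vect n\}$ and $\{\vect n'\}$, it uses two \emph{independent} freshening renamings $\rho$ and $\rho'$ that move all of $\vect n$ (respectively $\vect n'$) to names disjoint from $\fn(\Res{\vect n}\sigma)\cup\fn(M)\cup\fn(N)$, and proves the symmetric invariant $\Sigma\vdash M(\sigma\rho)=N(\sigma\rho)\Leftrightarrow\Sigma\vdash M(\sigma'\rho')=N(\sigma'\rho')$; this is insensitive to how many spurious restrictions each side carries and to which restricted names happen to occur syntactically. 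If you adopt those two adjustments, the rest of your argument goes through.
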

\begin{proof}
Let us prove the following result:
\begin{quote}
Suppose $\Res{\vect n}(\sigma \parop P) \equivpnf \Res{\vect n'}(\sigma' \parop P')$ and $\fv(M) \cup \fv(N) \subseteq \dom(\sigma) = \dom(\sigma')$.
Let $\rho$ be a bijective renaming that maps names in $\vect n$ to names not in $\fn(\Res{\vect n}\sigma) \cup \fn(M) \cup \fn(N)$ and leaves names in $(\fn(\Res{\vect n}\sigma) \cup \fn(M) \cup \fn(N))\setminus \{\vect n\}$ unchanged, and
$\rho'$ be a bijective renaming that maps names in $\vect n'$ to names not in $\fn(\Res{\vect n'}\sigma') \cup \fn(M) \cup \fn(N)$ and leaves names in $(\fn(\Res{\vect n'}\sigma') \cup \fn(M) \cup \fn(N))\setminus \{\vect n'\}$ unchanged.

We have $\Sigma \vdash M(\sigma\rho) = N(\sigma\rho)$ if and only if
$\Sigma \vdash M(\sigma'\rho') = N(\sigma'\rho')$.
\end{quote}
This result is proved by induction on the derivation of $\Res{\vect n}(\sigma \parop P) \equivpnf \Res{\vect n'}(\sigma' \parop P')$.
\begin{itemize}

\item Transitivity and symmetry: obvious.

\item Reflexivity: The renamings $\rho$ and $\rho'$ map names in $\vect n$ to names not in $\fn(\Res{\vect n}\sigma) \cup \fn(M) \cup \fn(N)$ and leave names in $(\fn(\Res{\vect n}\sigma) \cup \fn(M) \cup \fn(N))\setminus \{\vect n\}$ unchanged. Let $\rho''$ be a bijective renaming that maps $\vect n\rho$ to $\vect n\rho'$ and leaves names in $\fn(\Res{\vect n}\sigma) \cup \fn(M) \cup \fn(N)$ unchanged.
If $\Sigma \vdash M(\sigma\rho) = N(\sigma\rho)$, then
$\Sigma \vdash M(\sigma\rho)\rho'' = N(\sigma\rho)\rho''$,
so $\Sigma \vdash M(\sigma\rho') = N(\sigma\rho')$.
The converse is proved is the same way, using $\rho''^{-1}$ 
instead of $\rho''$.

\item Cases $\brn{Plain}''$ and $\brn{New-C}''$: These cases are proved by
  the same proof as for reflexivity, since the desired property does
  not depend on the process $P$ nor on the order of $\vect n$.

\item Case $\brn{New-Par}''$: $\Res{\vect n}(\sigma \parop \Res{n'}P) \equivpnf \Res{\vect n, n'}(\sigma \parop P)$ where $n' \notin \fn(\sigma)$.
Let $\rho$ be a bijective renaming that maps names in $\vect n$ to names not in $\fn(\Res{\vect n}\sigma) \cup \fn(M) \cup \fn(N)$ and leaves names in $(\fn(\Res{\vect n}\sigma) \cup \fn(M) \cup \fn(N))\setminus \{\vect n\}$ unchanged, and
$\rho'$ be a bijective renaming that maps names in $\vect n, n'$ to names not in $\fn(\Res{\vect n, n'}\sigma) \cup \fn(M) \cup \fn(N)$ and leaves names in $(\fn(\Res{\vect n,n'}\sigma) \cup \fn(M) \cup \fn(N))\setminus \{\vect n, n'\}$ unchanged.
Let $\rho''$ be a bijective renaming that maps $\vect n \rho$ to $\vect n\rho'$ and that leaves names in  $\fn(\Res{\vect n}\sigma) \cup \fn(M) \cup \fn(N)$ unchanged. (Since $n' \notin \fn(\sigma)$, $\fn(\Res{\vect n}\sigma) = \fn(\Res{\vect n, n'}\sigma)$, so the names $\vect n\rho'$ do not collide with $\fn(\Res{\vect n}\sigma) \cup \fn(M) \cup \fn(N)$, hence $\rho''$ exists.)

If $\Sigma \vdash M (\sigma\rho) = N (\sigma\rho)$, then
$\Sigma \vdash M (\sigma\rho)\rho'' = N (\sigma\rho)\rho''$,
so $\Sigma \vdash M (\sigma\rho') = N (\sigma\rho')$.
(We have $\sigma \rho \rho'' = \sigma\rho'$ because $n' \notin \fn(\sigma)$.)

The converse is proved in the same way, using $\rho''^{-1}$ 
instead of $\rho''$.

\item Case $\brn{Rewrite}''$: $\Res{\vect n}(\sigma \parop P) \equivpnf \Res{\vect n}(\sigma' \parop P)$ where $\dom(\sigma) = \dom(\sigma')$, 
$\Sigma \vdash x\sigma = x\sigma'$ for all $x \in \dom(\sigma)$,
and $(\fv(x\sigma) \cup \fv(x\sigma')) \cap \dom(\sigma)  = \emptyset$
for all $x \in \dom(\sigma)$.

Let $\rho$ be a bijective renaming that maps names in $\vect n$ to names not in $\fn(\Res{\vect n}\sigma) \cup \fn(M) \cup \fn(N)$ and leaves names in $(\fn(\Res{\vect n}\sigma) \cup \fn(M) \cup \fn(N))\setminus \{\vect n\}$ unchanged, and
$\rho'$ be a bijective renaming that maps names in $\vect n$ to names not in $\fn(\Res{\vect n}\sigma') \cup \fn(M) \cup \fn(N)$ and leaves names in $(\fn(\Res{\vect n}\sigma') \cup \fn(M) \cup \fn(N))\setminus \{\vect n\}$ unchanged.

Let $\rho''$ be a bijective renaming that maps names in $\vect n$ to names not in $\fn(\Res{\vect n}\sigma) \cup \fn(\Res{\vect n}\sigma') \cup \fn(M) \cup \fn(N)$ and leaves names in $(\fn(\Res{\vect n}\sigma) \cup \fn(\Res{\vect n}\sigma') \cup \fn(M) \cup \fn(N))\setminus \{\vect n\}$ unchanged.

The renaming $\rho''$ a fortiori maps names in $\vect n$ to names not in $\fn(\Res{\vect n}\sigma) \cup \fn(M) \cup \fn(N)$ and leaves names in $(\fn(\Res{\vect n}\sigma) \cup \fn(M) \cup \fn(N))\setminus \{\vect n\}$ unchanged, so by the case of reflexivity 
$\Res{\vect n}(\sigma \parop P) \equivpnf \Res{\vect n}(\sigma \parop P)$, 
we have
$\Sigma \vdash M(\sigma\rho) = N(\sigma\rho)$ if and only if
$\Sigma \vdash M(\sigma\rho'') = N(\sigma\rho'')$.

Similarly, $\Sigma \vdash M(\sigma'\rho') = N(\sigma'\rho')$ if and only if
$\Sigma \vdash M(\sigma'\rho'') = N(\sigma'\rho'')$.

Moreover, for all $x \in \dom(\sigma)$, $\Sigma \vdash x\sigma = x\sigma'$,
so $\Sigma \vdash x\sigma\rho'' = x\sigma'\rho''$,
hence $\Sigma \vdash M(\sigma\rho') = M(\sigma'\rho'')$
and $\Sigma \vdash N(\sigma\rho') = N(\sigma'\rho'')$,
therefore $\Sigma \vdash M(\sigma\rho'') = N(\sigma\rho'')$
if and only if $\Sigma \vdash M(\sigma'\rho'') = N(\sigma'\rho'')$.

We can then conclude that $\Sigma \vdash M(\sigma\rho) = N(\sigma\rho)$
if and only if $\Sigma \vdash M(\sigma'\rho') = N(\sigma'\rho')$.

\end{itemize}

The lemma is an easy consequence of this result: since $\Res{\vect n}\sigma \equiv \Res{\vect n'}\sigma'$, we have $\Res{\vect n}(\sigma \parop \nil) \equivpnf \Res{\vect n'}(\sigma' \parop \nil)$ by Lemma~\ref{lem:struct-std-to-pnf}. We conclude by applying the previous result taking $P = P' = \nil$ and the identity for $\rho$ and $\rho'$.
\end{proof}

\begin{lemma}\label{lem:red-inside-Res.s.s/x}
Let $A$ be a closed extended process.
If $\Res{s}(\{\subst{s}{x}\} \parop A ) \rightarrow B'$,
then there exists a closed extended process $A'$ such that
$A \rightarrow A'$ and $B' \equiv \Res{s}(\{\subst{s}{x}\} \parop A')$.
\end{lemma}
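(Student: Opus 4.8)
The statement to prove is Lemma~\ref{lem:red-inside-Res.s.s/x}: if $\Res{s}(\{\subst{s}{x}\} \parop A ) \rightarrow B'$ and $A$ is a closed extended process, then $A \rightarrow A'$ and $B' \equiv \Res{s}(\{\subst{s}{x}\} \parop A')$ for some closed $A'$. The plan is to reduce this to the analysis of reductions on partial normal forms, using the decomposition lemmas from Section~\ref{app:comppnf}, exactly as is done in the proofs of Lemmas~\ref{lem:extrusion} and~\ref{lem:caract-SndN}. The key point to exploit is that the added material $\{\subst{s}{x}\}$ is a frame (an active substitution under a restriction on a fresh-looking name $s$) and so it cannot itself participate in an internal reduction, nor can its name $s$ become free in a label or term in a harmful way; the only reductions available are those of $A$ itself.

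\textbf{Main steps.} First I would put $A$ in partial normal form: let $\pnf(A) = \Res{\vect n}(\sigma \parop P)$, renaming $\vect n$ so that $s \notin \{\vect n\}$ and $x \notin \dom(\sigma) \cup \{\vect n\}$ (so $x$ is not exported by $A$, which is consistent with $A$ being closed — the substitution $\{\subst{s}{x}\}$ binds $x$). Then $\pnf(\Res{s}(\{\subst{s}{x}\} \parop A)) = \Res{s,\vect n}(\sigma \parop P)$ up to $\equivpnf$, since the substitution $\{\subst{s}{x}\}$ is erased by the variable restriction $\Res{x}$ in the partial-normal-form computation — I would double-check this against the definition of $\pnf$ for $\Res{x}A$ and $A \parop B$, but it should come out that the exported substitution is $\sigma$ with $x$ removed from its domain, i.e. just $\sigma$, since $x \notin \dom(\sigma)$. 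Next, from $\Res{s}(\{\subst{s}{x}\} \parop A) \rightarrow B'$ and Lemma~\ref{lem:red-std-to-pnf}, I get $\pnf(\Res{s}(\{\subst{s}{x}\} \parop A)) \redpnf \pnf(B')$, hence $\Res{s,\vect n}(\sigma \parop P) \redpnf \pnf(B')$. By Lemma~\ref{lem:decomp-redpnf} (or its closed version, Lemma~\ref{lem:decomp-redpnf-closed}), $P \redP P'$ and $\pnf(B') \equiv \Res{s,\vect n}(\sigma \parop P')$ for some (closed) process $P'$. Setting $A' \eqdef \Res{\vect n}(\sigma \parop P')$, Lemmas~\ref{lem:equivpnf} and~\ref{lem:red-pnf-to-std} give $A \equiv \pnf(A) = \Res{\vect n}(\sigma \parop P) \rightarrow \Res{\vect n}(\sigma \parop P') = A'$, so $A \rightarrow A'$, and $A'$ is closed since $A$ is. Finally, $B' \equiv \pnf(B') \equiv \Res{s,\vect n}(\sigma \parop P') \equiv \Res{s}(\{\subst{s}{x}\} \parop \Res{\vect n}(\sigma \parop P')) \equiv \Res{s}(\{\subst{s}{x}\} \parop A')$, using \rulename{Alias}/\rulename{Subst}-style manipulations together with \rulename{New-Par} to reintroduce the erased substitution; concretely, $\Res{x}(\{\subst{s}{x}\} \parop \nil) \equiv \nil$ by \rulename{Alias} shows the substitution can be dropped, and reading this backwards (or rather: $A' \equiv \Res{x}(\{\subst{s}{x}\} \parop A')$ since $x \notin \fv(A')$, via the combination of \rulename{Alias} and \rulename{Subst} derived in Section~\ref{sec:ope-sem}) recovers it.

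\textbf{Anticipated obstacle.} The main subtlety is the bookkeeping around the variable $x$ and the interaction between the substitution $\{\subst{s}{x}\}$ and the partial-normal-form machinery: I need to be sure that $\pnf(\Res{s}(\{\subst{s}{x}\} \parop A))$ really has the clean form $\Res{s,\vect n}(\sigma \parop P)$ (with no residual occurrence of $x$), and that reintroducing $\{\subst{s}{x}\}$ at the end is justified by the structural-equivalence rules — in particular that $A'$ does not already export $x$, which is why the renaming $x \notin \dom(\sigma)$ at the start is essential. A secondary point is that Lemma~\ref{lem:decomp-redpnf} as stated may produce a non-closed $P'$; if closedness of $A'$ is needed downstream one invokes Lemma~\ref{lem:decomp-redpnf-closed} instead, which guarantees $P'$ closed. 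Everything else is routine application of the transfer lemmas between the standard and partial-normal-form semantics.
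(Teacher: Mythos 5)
You have the right skeleton --- pass to partial normal forms, apply Lemma~\ref{lem:red-std-to-pnf}, decompose the reduction with Lemma~\ref{lem:decomp-redpnf-closed}, and transfer back with Lemmas~\ref{lem:equivpnf} and~\ref{lem:red-pnf-to-std} --- and this is exactly the route the paper takes. But there is a genuine error in how you handle the active substitution $\{\subst{s}{x}\}$. In $\Res{s}(\{\subst{s}{x}\} \parop A)$ only the \emph{name} $s$ is restricted; the variable $x$ is not, so $x$ belongs to the domain of the process and $\{\subst{s}{x}\}$ is part of the exported frame (this is the whole point of the name-disclosure construction: the context reaches $s$ through the free variable $x$). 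There is no variable restriction $\Res{x}$ for the $\pnf$ computation to erase it with, and the correct partial normal form is $\Res{s,\vect n}(\{\subst{s}{x}\} \parop \sigma \parop P)$, not $\Res{s,\vect n}(\sigma \parop P)$. (Since $A$ is closed and $x \notin \dom(A)$, the variable $x$ does not occur in $\sigma$ or $P$, so composing $\{\subst{s}{x}\}$ with $\sigma$ is mere juxtaposition.)

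The error propagates to your last step: $\Res{s,\vect n}(\sigma \parop P')$ and $\Res{s}(\{\subst{s}{x}\} \parop A')$ have different domains ($x$ is exported by the latter but not by the former), and structural equivalence preserves domains, so they cannot be structurally equivalent. The derived law from Section~\ref{sec:ope-sem} that you invoke is $A'\{\subst{s}{x}\} \equiv \Res{x}(\{\subst{s}{x}\} \parop A')$ --- note the restriction on $x$ on the right-hand side; dropping that restriction is not a structural-equivalence step, it changes the domain. The fix is simply not to drop $\{\subst{s}{x}\}$ in the first place: apply Lemma~\ref{lem:decomp-redpnf-closed} to $\Res{s,\vect n}(\{\subst{s}{x}\} \parop \sigma \parop P)$, obtaining $P \redP P'$ with $P'$ closed and $\pnf(B') \equiv \Res{s,\vect n}(\{\subst{s}{x}\} \parop \sigma \parop P')$, and conclude with $A' = \Res{\vect n}(\sigma \parop P')$; nothing needs to be ``reintroduced''. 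That is precisely the paper's proof.
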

\begin{proof}
Let $\pnf(A) = \Res{\vect n}(\sigma \parop P)$. We rename $\vect n$ so that $s \notin \{ \vect n\}$.
By Lemma~\ref{lem:red-std-to-pnf}, 
$\pnf(\Res{s}(\{\subst{s}{x}\} \parop A )) = \Res{s,\vect n}(\{\subst{s}{x}\} \parop \sigma \parop P)
\redpnf \pnf(B')$.
By Lemma~\ref{lem:decomp-redpnf-closed},
$P \redP P'$ and $\pnf(B') \equiv \Res{s,\vect n}(\{\subst{s}{x}\} \parop \sigma \parop P')$ for some
closed process $P'$.
Let $A' = \Res{\vect n}(\sigma \parop P')$.
Hence, $A \equiv \Res{\vect n}(\sigma \parop P) \rightarrow \Res{\vect n}(\sigma \parop P') = A'$ 
and $B' \equiv \pnf(B') \equiv \Res{s,\vect n}(\{\subst{s}{x}\} \parop \sigma \parop P') \equiv \Res{s}(\{\subst{s}{x}\} \parop A')$.
\end{proof}

\begin{lemma}\label{lem:ltr-inside-Res.s.s/x}
Let $A$ be a closed extended process and $\alpha$ be such that $\fv(\alpha) \subseteq \dom(A) \cup\{x\}$ and $s \notin \fn(\alpha)$.
If $\Res{s}(\{\subst{s}{x}\} \parop A) \ltr{\alpha} B'$,
then there exists a closed extended process $A'$ such that
$A \ltr{\alpha\{\subst{s}{x}\}} A'$ and $B' \equiv \Res{s}(\{\subst{s}{x}\} \parop A')$.
\end{lemma}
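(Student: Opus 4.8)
The plan is to prove Lemma~\ref{lem:ltr-inside-Res.s.s/x} by the same technique used for Lemma~\ref{lem:red-inside-Res.s.s/x}, namely by pushing everything through partial normal forms, using the decomposition machinery of Appendix~\ref{app:comppnf}. First I would write $\pnf(A) = \Res{\vect n}(\sigma \parop P)$, renaming $\vect n$ so that $s \notin \{\vect n\}$ and the names in $\vect n$ do not occur in $\alpha$. Then $\pnf(\Res{s}(\{\subst{s}{x}\} \parop A))$ is structurally equivalent (by $\brn{New-Par}''$ and reorganization of the substitution, since $\{\subst{s}{x}\} \parop \sigma$ is cycle-free and $s$ is not in $\{\vect n\}$) to $\Res{s,\vect n}((\{\subst{s}{x}\} \parop \sigma) \parop P)$, with $P$ unchanged because $P$ has no free variables. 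By Lemma~\ref{lem:redalpha-std-to-pnf} the transition $\Res{s}(\{\subst{s}{x}\} \parop A) \ltr{\alpha} B'$ lifts to $\pnf(\Res{s}(\{\subst{s}{x}\} \parop A)) \ltrpnf{\alpha} B'$.

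Next I would apply Lemma~\ref{lem:decomp-ltrpnf} to this transition. To do so I need $\fv(\alpha) \subseteq \dom(\{\subst{s}{x}\} \parop \sigma) = \dom(A) \cup \{x\}$, which is exactly the hypothesis, and that the restricted names $s, \vect n$ do not occur in $\alpha$, which holds by the renaming. The lemma then yields $P \ltrP{\alpha(\{\subst{s}{x}\} \parop \sigma)} A''$, i.e.\ $P \ltrP{\alpha\{\subst{s}{x}\}\sigma} A''$, together with $B' \equiv \Res{s,\vect n}((\{\subst{s}{x}\} \parop \sigma) \parop A'')$ and $\bv(\alpha) \cap (\dom(A) \cup \{x\}) = \emptyset$. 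Setting $A' \eqdef \Res{\vect n}(\sigma \parop A'')$, and using Lemma~\ref{lem:closing}\eqref{closing:ltrP} if necessary to make $A''$ (hence $A'$) closed, I would reconstruct the transition on $A$: by definition of $\ltrpnf{}$, since $\pnf(A) = \Res{\vect n}(\sigma \parop P)$, $P \ltrP{\alpha\{\subst{s}{x}\}\sigma} A''$, the names $\vect n$ do not occur in $\alpha\{\subst{s}{x}\}$, and $\bv(\alpha\{\subst{s}{x}\}) \cap \dom(\sigma) = \emptyset$, we get $\pnf(A) \ltrpnf{\alpha\{\subst{s}{x}\}} A'$, so $A \equiv \pnf(A) \ltr{\alpha\{\subst{s}{x}\}} A'$ by Lemmas~\ref{lem:equivpnf}, \ref{lem:struct-pnf-to-std}, and~\ref{lem:redalpha-pnf-to-std}. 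Finally $B' \equiv \Res{s,\vect n}((\{\subst{s}{x}\} \parop \sigma) \parop A'') \equiv \Res{s}(\{\subst{s}{x}\} \parop \Res{\vect n}(\sigma \parop A'')) = \Res{s}(\{\subst{s}{x}\} \parop A')$, again by $\brn{New-Par}''$/$\brn{New-Par}$, which is the desired conclusion.

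The main obstacle I anticipate is bookkeeping around the label transformation: in the ``inner'' transition the label is $\alpha(\{\subst{s}{x}\}\parop\sigma)$, and one must check carefully that $\alpha(\{\subst{s}{x}\}\parop\sigma) = \alpha\{\subst{s}{x}\}\sigma$ (composition order) and that $\alpha\{\subst{s}{x}\}$ still has its free variables inside $\dom(\sigma) = \dom(A)$ so that Lemma~\ref{lem:decomp-ltrpnf} can be re-applied in the other direction, and that the $\Sigma \vdash$ equalities in the definition of $\ltrpnf{}$ line up. One subtlety is that $\alpha$ may be an input label $\Rcv{N}{M}$ in which $x$ occurs, so $\alpha\{\subst{s}{x}\}$ genuinely differs from $\alpha$; this is fine and indeed is the content of the statement, but it requires keeping track of where the substitution acts. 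The bound variable of an output label $\nu x'.\Snd{N}{x'}$ must be kept distinct from $x$ and from $\dom(A)$, which is arranged by the hypothesis $\bv(\alpha)\cap\dom(\{\subst{s}{x}\}\parop\sigma)=\emptyset$ supplied by Lemma~\ref{lem:decomp-ltrpnf}. Once these conventions are fixed the argument is a routine mirror of the proof of Lemma~\ref{lem:red-inside-Res.s.s/x}.
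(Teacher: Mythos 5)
Your proposal is correct and follows essentially the same route as the paper's proof: compute $\pnf(A)$, lift the transition with Lemma~\ref{lem:redalpha-std-to-pnf}, decompose it with Lemma~\ref{lem:decomp-ltrpnf}, and rebuild the transition on $A$ from the definition of $\ltrpnf{}$ and Lemma~\ref{lem:redalpha-pnf-to-std}, with the label bookkeeping $(\alpha\{\subst{s}{x}\})\sigma = \alpha(\{\subst{s}{x}\}\parop\sigma)$ exactly as you describe. The only step you compress is making the residual $A''$ closed, which is where the paper spends most of its effort (via Lemma~\ref{lem:caract-ltrP} together with Lemma~\ref{lem:closing}), but the tool you cite is the right one and the application is routine since $P$ and the inner label are closed.
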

\begin{proof}
Let $\pnf(A) = \Res{\vect n}(\sigma \parop P)$. We rename $\vect n$ so that $s \notin \{ \vect n\}$ and the elements of $\vect n$ do not occur in $\alpha$.
By Lemma~\ref{lem:redalpha-std-to-pnf}, 
$\pnf(\Res{s}(\{\subst{s}{x}\} \parop A )) = \Res{s,\vect n}(\{\subst{s}{x}\} \parop \sigma \parop P)
\ltrpnf{\alpha} B'$.
By Lemma~\ref{lem:decomp-ltrpnf}, 
$P \ltrP{\alpha(\{\subst{s}{x}\} \parop \sigma)} B''$ and $B' \equiv \Res{s,\vect n}(\{\subst{s}{x}\} \parop \sigma \parop B'')$ for some $B''$.

Next, we show that we can choose $B''$ so that it is closed.
Let $\alpha' = \alpha(\{\subst{s}{x}\} \parop \sigma)$.
By Lemma~\ref{lem:caract-ltrP}, for some $\vect n'$, $P_1$, $P_2$, $B_1$, $N$, $M$, $P'$, $y$, we have
$P \equivP \Res{\vect n'}(P_1 \parop P_2)$,
$B'' \equiv \Res{\vect n'}(B_1 \parop P_2)$,
$\{\vect n'\} \cap \fn(\alpha') = \emptyset$,
$\bv(\alpha) \cap \fv(P_1 \parop P_2) = \emptyset$,
and
one of the following two cases holds:
\begin{enumerate}
\item $\alpha' = \Rcv{N}{M}$, $P_1 = \Rcv{N}{y}.P'$, and $B_1 = P'\{\subst{M}{y}\}$; or
\item $\alpha' = \Res{y}\Snd{N}{y}$, $P_1 = \Snd{N}{M}.P'$, and $B_1 = P' \parop \{\subst{M}{y}\}$.
\end{enumerate}
Let $\sigma'$ be a substitution that maps variables of $\fv(B'')\setminus\dom(B'')$ to distinct fresh names. We rename $y$ so that $y \notin \fv(B'')\setminus\dom(B'')$.
Since $P$ and $\alpha'$ are closed, $P = P\sigma'$ and $\alpha' = \alpha'\sigma'$.
By Lemma~\ref{lem:closing}\eqref{closing:redP}, $P \equivP \Res{\vect n'}(P_1\sigma' \parop P_2\sigma')$ and $\Sigma \vdash \Res{\vect n'}(P_1 \parop P_2) = \Res{\vect n'}(P_1\sigma' \parop P_2\sigma')$, so $\Sigma \vdash P_1 = P_1\sigma'$ and $\Sigma \vdash P_2 = P_2 \sigma'$.
By Lemma~\ref{lem:instance-equiv}, $B''\sigma' \equiv \Res{\vect n'}(B_1\sigma' \parop P_2\sigma')$.
Finally, one of the following two cases holds:
\begin{enumerate}
\item $\alpha' = \Rcv{N}{M}$, $P_1\sigma' = \Rcv{N}{y}.P'\sigma'$, and $B_1\sigma' = P'\sigma'\{\subst{M}{y}\}$; or
\item $\alpha' = \Res{y}\Snd{N}{y}$, $P_1\sigma' = \Snd{N}{M\sigma'}.P'\sigma'$, and $B_1\sigma' = P'\sigma' \parop \{\subst{M\sigma'}{y}\}$.
\end{enumerate}
Hence, by Lemma~\ref{lem:caract-ltrP}, 
$P \ltrP{\alpha'} B''\sigma'$.
Moreover, $\Sigma \vdash B_ 1= B_1\sigma'$ because $\Sigma \vdash P_1 = P_1\sigma'$, so $\Sigma \vdash \Res{\vect n'}(B_1 \parop P_2) = \Res{\vect n'}(B_1\sigma' \parop P_2\sigma')$, hence $B''\sigma' \equiv \Res{\vect n'}(B_1\sigma' \parop P_2\sigma') \equiv \Res{\vect n'}(B_1 \parop P_2) \equiv B''$,
so $B' \equiv \Res{s,\vect n}(\{\subst{s}{x}\} \parop \sigma \parop B''\sigma')$. Hence, by replacing $B''$ with $B''\sigma'$, we obtain the same properties as above, and additionally $B''\sigma'$ is closed.

Let $A' = \Res{\vect n}(\sigma \parop B''\sigma')$.
Hence, $A \equiv \Res{\vect n}(\sigma \parop P) \ltrpnf{\alpha\{\subst{s}{x}\}} A'$ by definition of $\ltrpnf{\alpha\{\subst{s}{x}\}}$, so $A \ltr{\alpha\{\subst{s}{x}\}} A'$ by Lemma~\ref{lem:redalpha-pnf-to-std},
and $B' \equiv \Res{s,\vect n}(\{\subst{s}{x}\} \parop \sigma \parop B''\sigma')
\equiv \Res{s}(\{\subst{s}{x}\} \parop A')$.
\end{proof}

\begin{restatenamed}{Lemma}{\ref{LEM:DISCLOSURE}}{Name disclosure}
  Let $A$ and $B$ be closed extended processes and $x$ be a variable such that
  $x \notin \dom(A)$.
We have $A \wkbisim B$ if and only if
\eqns{
  \nu n.(\{\subst{n}{x}\}
  \parop A )
  & \wkbisim & 
  \nu n.(\{\subst{n}{x}\}
  \parop B )}%
\end{restatenamed}%
\begin{proof}
The direct implication follows from context closure of
$\wkbisim$. Conversely, we show that the relation $\rel$ defined by
  $A \rel B$ if and only if $A$ and $B$ are closed extended processes and
  $\Res{s}(\{\subst{s}{x}\} \parop A )
  \wkbisim  
  \Res{s}(\{\subst{s}{x}\} \parop B )$ for some $x \notin \dom(A)$ 
  is a labelled bisimulation.
\begin{enumerate}
\item The relation $\rel$ is symmetric, because $\wkbisim$ is.
\item We suppose that $A \rel B$ and show that $A \enveq B$.
Since $A \rel B$, we have $\Res{s}(\{\subst{s}{x}\} \parop A )
  \wkbisim  
  \Res{s}(\{\subst{s}{x}\} \parop B )$ for some $x \notin \dom(A)$, so
$\Res{s}(\{\subst{s}{x}\} \parop A )
  \enveq
  \Res{s}(\{\subst{s}{x}\} \parop B )$.
We have $\dom(A) = \dom(\Res{s}(\{\subst{s}{x}\} \parop A )) \setminus \{ x\} =
\dom(\Res{s}(\{\subst{s}{x}\} \parop B )) \setminus \{ x\} = \dom(B)$.
Let $M$, $N$ be two terms such that $\fv(M) \cup \fv(N) \subseteq \dom(A)$.
Let $M' = M\{\subst{x}{s}\}$ and $N' = N\{\subst{x}{s}\}$.
We show that $(M=N)\frameof{A}$ if and only if $(M'=N')\frameof{\Res{s}(\{\subst{s}{x}\} \parop A)}$.

If $(M=N)\frameof{A}$, then $\frameof{A} \equiv \Res{\vect n}\sigma$, $\Sigma \vdash M \sigma = N \sigma$, and $\{\vect n\} \cap (\fn(M) \cup \fn(N)) = \emptyset$ for some $\vect n$ and $\sigma$. If $s \in \fn(M) \cup \fn(N)$, we know that $s \notin \{ \vect n\}$. Otherwise, we rename $\vect n$ so that $s \notin \{ \vect n\}$, while preserving the previous properties. So  $\frameof{\Res{s}(\{\subst{s}{x}\} \parop A)}
\equiv \Res{s,\vect n}(\sigma \parop \{\subst{s}{x}\})$, $\Sigma \vdash M'(\sigma \parop \{\subst{s}{x}\}) = N'(\sigma \parop \{\subst{s}{x}\})$, and $\{s,\vect n\} \cap (\fn(M') \cup \fn(N')) = \emptyset$, so $(M'=N')\frameof{\Res{s}(\{\subst{s}{x}\} \parop A)}$.

Conversely, if $(M'=N')\frameof{\Res{s}(\{\subst{s}{x}\} \parop A)}$, then
$\frameof{\Res{s}(\{\subst{s}{x}\} \parop A)} \equiv \Res{\vect n'}\sigma'$, $\Sigma \vdash M' \sigma' = N' \sigma'$, and $\{\vect n'\} \cap (\fn(M') \cup \fn(N')) = \emptyset$ for some $\vect n'$ and $\sigma'$.
We have $\frameof{A} \equiv \Res{\vect n}\sigma$ for some $\vect n$ and $\sigma$. We rename $\vect n$ so that $(\{s\} \cup \fn(N) \cup \fn(M)) \cap  \{ \vect n\} = \emptyset$, so $(\fn(N') \cup \fn(M')) \cap  \{ \vect n\} = \emptyset$. Then $\frameof{\Res{s}(\{\subst{s}{x}\} \parop A)} \equiv \Res{s,\vect n}(\sigma \parop \{\subst{s}{x}\})$, so
$\Res{\vect n'}\sigma' \equiv \Res{s,\vect n}(\sigma \parop \{\subst{s}{x}\})$.
By Lemma~\ref{lem:equivframes}, $\Sigma \vdash M'(\sigma \parop \{\subst{s}{x}\}) = N'(\sigma \parop \{\subst{s}{x}\})$,
so $\Sigma \vdash M\sigma = N\sigma$, hence $(M=N)\frameof{A}$.

Symmetrically, $(M=N)\frameof{B}$ if and only if $(M'=N')\frameof{\Res{s}(\{\subst{s}{x}\} \parop B)}$.
Moreover, $(M'=N')\frameof{\Res{s}(\{\subst{s}{x}\} \parop A)}$ if and only if $(M'=N')\frameof{\Res{s}(\{\subst{s}{x}\} \parop B)}$, because $\Res{s}(\{\subst{s}{x}\} \parop A) \enveq \Res{s}(\{\subst{s}{x}\} \parop B)$.
Therefore, $(M=N)\frameof{A}$ if and only if $(M=N)\frameof{B}$, so $A \enveq B$.

\item We suppose that $A \rel B$, $A \rightarrow A'$, and $A'$ is closed, and we show that
$B \rightarrow^* B'$ and $A' \rel B'$ for some $B'$. 
For some $x \notin \dom(A)$, we have $\Res{s}(\{\subst{s}{x}\} \parop A )
  \wkbisim  
  \Res{s}(\{\subst{s}{x}\} \parop B )$, $\Res{s}(\{\subst{s}{x}\} \parop A ) \rightarrow \Res{s}(\{\subst{s}{x}\} \parop A' )$,
and $\Res{s}(\{\subst{s}{x}\} \parop A' )$ is closed, so 
$\Res{s}(\{\subst{s}{x}\} \parop B ) \rightarrow^* B''$ and $\Res{s}(\{\subst{s}{x}\} \parop A' ) \wkbisim B''$ for some $B''$.

By Lemma~\ref{lem:red-inside-Res.s.s/x} applied several times, $B \rightarrow^* B'$ and 
$B'' \equiv \Res{s}(\{\subst{s}{x}\} \parop B' )$ for some closed extended process $B'$, so
 $\Res{s}(\{\subst{s}{x}\} \parop A' ) \wkbisim \Res{s}(\{\subst{s}{x}\} \parop B' )$,
which shows that $A' \rel B'$.

\item We suppose that $A \rel B$, $A \ltr{\alpha} A'$, $A'$ is closed, and $\fv(\alpha) \subseteq \dom(A)$, and we show that
$B \rightarrow^* \ltr{\alpha}\rightarrow^* B'$ and $A' \rel B'$ for some $B'$. 

For some $x \notin \dom(A)$, we have $\Res{s}(\{\subst{s}{x}\} \parop A )
  \wkbisim  
  \Res{s}(\{\subst{s}{x}\} \parop B )$.
First, we rename $x$ in this equivalence so that $x \notin \bv(\alpha)$, 
by Lemma~\ref{lem:ren-wkbisim}.

Let $\alpha' = \alpha\{\subst{x}{s}\}$.
By Lemma~\ref{lem:redalpha-std-to-pnf}, we have $\pnf(A) \ltrpnf{\alpha} A'$, so 
there exist $\vect n$, $\sigma$, $P$, $\alpha''$, and $A''$
such that
$\pnf(A) \equivpnf \Res{\vect n}(\sigma \parop P)$, 
$P \ltrP{\alpha''} A''$, $A' \equiv \Res{\vect n}(\sigma \parop A'')$, 
$\fv(\sigma) \cap \bv(\alpha'') = \emptyset$, 
$\Sigma \vdash \alpha \sigma = \alpha''$, and the elements of $\vect n$ 
do not occur in $\alpha$.
We rename $\vect n$ so that $s \notin \{\vect n\}$.
Since $A$ is closed, $\pnf(A)$ is closed, so by Lemma~\ref{lem:closing}\eqref{closing:equivP},
we can arrange that $\Res{\vect n}(\sigma \parop P)$ is also closed,
by substituting fresh names for its free variables.

We have $\Res{s}(\{\subst{s}{x}\} \parop A) \equiv \Res{s,\vect n}(\{\subst{s}{x}\} \parop \sigma \parop P)$,
so by Lemma~\ref{lem:struct-std-to-pnf}, $\pnf(\Res{s}(\{\subst{s}{x}\} \parop A)) \equivpnf \Res{s,\vect n}(\{\subst{s}{x}\} \parop \sigma \parop P)$ since $\Res{s,\vect n}(\{\subst{s}{x}\} \parop \sigma \parop P)$
is in partial normal form, because $x \notin \fv(P)$ and $x \notin \fv(\sigma)$, since $\Res{\vect n}(\sigma \parop P)$ is closed and $x \notin \dom(A) = \dom(\sigma)$.
Moreover, $P \ltrP{\alpha''} A''$, $\Res{s}(\{\subst{s}{x}\} \parop A') \equiv \Res{s,\vect n}(\{\subst{s}{x}\} \parop \sigma \parop A'')$, 
$\fv(\{\subst{s}{x}\} \parop \sigma) \cap \bv(\alpha'') = \emptyset$ because $x \notin \bv(\alpha'') = \bv(\alpha)$, 
$\Sigma \vdash \alpha' (\{\subst{s}{x}\} \parop \sigma) = \alpha \sigma = \alpha''$, and the elements of $s,\vect n$ 
do not occur in $\alpha'$.
Therefore, $\pnf(\Res{s}(\{\subst{s}{x}\} \parop A)) \ltrpnf{\alpha'} \Res{s}(\{\subst{s}{x}\} \parop A')$.

So $\Res{s}(\{\subst{s}{x}\} \parop A) \equiv \pnf(\Res{s}(\{\subst{s}{x}\} \parop A)) \ltr{\alpha'}\Res{s}(\{\subst{s}{x}\} \parop A')$ using Lemmas~\ref{lem:equivpnf} and~\ref{lem:redalpha-pnf-to-std}, so $\Res{s}(\{\subst{s}{x}\} \parop A) \ltr{\alpha'}\Res{s}(\{\subst{s}{x}\} \parop A')$ by \brn{Struct}.

Since $\Res{s}(\{\subst{s}{x}\} \parop A)\wkbisim \Res{s}(\{\subst{s}{x}\} \parop B)$, we have $\Res{s}(\{\subst{s}{x}\} \parop B) \rightarrow^* \ltr{\alpha'} \rightarrow^* B''$ and $\Res{s}(\{\subst{s}{x}\} \parop A') \wkbisim B''$ for some $B''$.
By Lemma~\ref{lem:red-inside-Res.s.s/x} applied several times and
Lemma~\ref{lem:ltr-inside-Res.s.s/x}, $B \rightarrow^* \ltr{\alpha}\rightarrow^* B'$ and $B'' \equiv \Res{s}(\{\subst{s}{x}\} \parop B' )$ for some closed extended process $B'$, so
 $\Res{s}(\{\subst{s}{x}\} \parop A' ) \wkbisim \Res{s}(\{\subst{s}{x}\} \parop B' )$,
which shows that $A' \rel B'$.

\end{enumerate}
Since $\rel$ is a labelled bisimulation and $\wkbisim$ is the largest labelled bisimulation, we have ${\rel} \subseteq {\wkbisim}$.
If $\Res{s}(\{\subst{s}{x}\} \parop A ) \wkbisim  
  \Res{s}(\{\subst{s}{x}\} \parop B )$, then $A \rel B$, so $A \wkbisim B$.
\end{proof}

\section{Proofs for Section~\ref{SUBSEC:REFINING}}\label{app:refining}

\begin{restate}{Lemma}{\ref{lem:not-so-new}}
The following three properties are equivalent:
\begin{enumerate}
\item the variables $\vect x$ resolve to $\vect M$ in $A$;
\item there exists $A'$ such that $A \equiv \smxvect \parop A'$;
\item $(\vect x = \vect M)\frameof{A}$
and the substitution $\smxvect$ is cycle-free.
\end{enumerate}
\end{restate}%
\begin{proof}
The implication from~\ref{equiv:resolve} to~\ref{equiv:caract-equiv} is immediate, with $A' = \nu \vect x.A$. 
The implication from~\ref{equiv:caract-equiv} to~\ref{equiv:caract-phi} is also obvious. Let us prove the implication from~\ref{equiv:caract-phi} to~\ref{equiv:resolve}. Since $(\vect x = \vect M)\frameof{A}$, we have $\{\vect x\} \subseteq \dom(\frameof{A}) = \dom(A)$, so $A \equiv \Res{\vect n}(\{\subst{\vect M'}{\vect x}\} \parop \sigma \parop P)$ for some $\vect n$, $\vect M'$, $\sigma$, and $P$ such that the variables of $\dom(A)$ do not occur in $\vect M'$, the image of $\sigma$, nor $P$. We rename $\vect n$ so that these names do not occur in $\vect M$.
Since $(\vect x = \vect M)\frameof{A}$, we have $\vect M' = \vect M\{\subst{\vect M'}{\vect x}\} \sigma = \vect M \smxvect \sigma$ using that $\smxvect$ is
cycle-free, 
so
$A \equiv \Res{\vect n}(\smxvect \parop \sigma \parop P)$. 
Since the names $\vect n$ do not occur in $\vect M$,
$A \equiv \smxvect \parop \Res{\vect n}(\sigma \parop P) \equiv \smxvect \parop \Res{\vect x}A$, which proves~\ref{equiv:resolve}.
\end{proof}

\begin{restate}{Lemma}{\ref{lem:output-correspondence}}
  $A \ltr{ \nu\vect{x}.\Snd{N}{M}} A'$ if and only if, for some $z$ that does
  not occur in any of $A$, $A'$, $\vect{x}$, $N$, and $M$,
  $A \ltr{\nu z.\Snd{N}{z} } \nu\vect{x}. (\{\subst{M}{z}\} \parop A')$,
  $\{\vect x\} \subseteq \fv(M) \setminus \fv(N)$,
  and the variables $\vect{x}$ are solvable in $\{\subst{M}{z}\} \parop A'$. 
\end{restate}%
\begin{proof}
We prove the implication from left to right by induction
on the derivation of $A \ltr{ \nu\vect{x}.\Snd{N}{M}} A'$.
Precisely, we prove the result for all $z$ that do not occur in the
derivation of $A \ltr{ \nu\vect{x}.\Snd{N}{M}} A'$.
\begin{itemize}
\item Case \brn{Out-Term}. We have $A = \Snd{N}{M}.P \ltr{\Snd{N}{M}} P = A'$ and
$\vect x$ is empty. Let $z \notin \fv(\Snd{N}{M}.P)$.
By \brn{Out-Var}, $A = \Snd{N}{M}.P \ltr{\Res{z}\Snd{N}{z}} P \parop \{\subst{M}{z}\} \equiv \{\subst{M}{z}\} \parop A'$, so by \brn{Struct},
$A \ltr{\Res{z}\Snd{N}{z}} \{\subst{M}{z}\} \parop A'$.

\item Case \brn{Open-Var}. The transition $A = \Res{\vect x}B \ltr{\Res{\vect x}\Snd{N}{M}} A'$ is derived from $B \ltr{\Snd{N}{M}} A'$ with $\{ \vect x\} \subseteq \fv(M) \setminus \fv(N)$ and $\vect x$ solvable in $\{\subst{M}{z'}\} \parop A'$ for some $z' \notin fv(A') \cup \{ \vect x \}$.
By induction hypothesis, $B \ltr{\Res{z}\Snd{N}{z}} \{\subst{M}{z}\} \parop A'$ for all $z$ that do not occur in the derivation of $B \ltr{\Snd{N}{M}} A'$, so $z$ does not occur in $A = \Res{\vect x}B \ltr{\Res{\vect x}\Snd{N}{M}} A'$ since $\{ \vect x\} \subseteq \fv(M)$.
By \brn{Scope}, $A = \Res{\vect x}B \ltr{\Res{z}\Snd{N}{z}} \Res{\vect x}(\{\subst{M}{z}\} \parop A')$, since $\{ \vect x\}\cap \fv(N) = \emptyset$.

\item Case \brn{Scope}. The transition $A = \Res{u}B \ltr{\Snd{N}{M}} \Res{u}B' = A'$ is derived from $B \ltr{\Snd{N}{M}} B'$, where $u$ does not occur in $\Snd{N}{M}$. (The restriction of the rule \brn{Scope} guarantees that $\vect x$ is empty.) By induction hypothesis, $B \ltr{\Res{z}\Snd{N}{z}} \{\subst{M}{z}\} \parop B'$ for all $z$ that do not occur in the derivation of $B \ltr{\Snd{N}{M}} B'$. Let $z$ be a variable that does not occur in the derivation of $A = \Res{u}B \ltr{\Snd{N}{M}} \Res{u}B' = A'$. Since the derivation of $A = \Res{u}B \ltr{\Snd{N}{M}} \Res{u}B' = A'$ includes the derivation of $B \ltr{\Snd{N}{M}} B'$, $z$ does not occur in the derivation of $B \ltr{\Snd{N}{M}} B'$. Hence, we have $B \ltr{\Res{z}\Snd{N}{z}} \{\subst{M}{z}\} \parop B'$, so by \brn{Scope}, $A = \Res{u}B \ltr{\Res{z}\Snd{N}{z}} \Res{u}(\{\subst{M}{z}\} \parop B')$, since $u$ does not occur in $\Res{z}\Snd{N}{z}$. Moreover, $\Res{u}(\{\subst{M}{z}\} \parop B') \equiv \{\subst{M}{z}\} \parop \Res{u}B' = \{\subst{M}{z}\} \parop A'$ since $u$ does not occur in $\{\subst{M}{z}\}$. So by \brn{Struct}, $A \ltr{\Res{z}\Snd{N}{z}} \{\subst{M}{z}\} \parop A'$.

\item Case \brn{Par}. The transition $A = B \parop C \ltr{\Res{\vect x}\Snd{N}{M}} B' \parop C = A'$ is derived from $B \ltr{\Res{\vect x}\Snd{N}{M}} B'$, with $\{ \vect x \} \cap \fv(C) = \emptyset$. By induction hypothesis, $B \ltr{\Res{z}\Snd{N}{z}} \Res{\vect x}(\{\subst{M}{z}\} \parop B')$, $\{\vect x\} \subseteq \fv(M) \setminus \fv(N)$, and the variables $\vect x$ are solvable in $\{\subst{M}{z}\} \parop B'$, for all $z$ that do not occur in the derivation of $B \ltr{\Snd{N}{M}} B'$. Let $z$ be a variable that does not occur in the derivation of $A = B \parop C \ltr{\Res{\vect x}\Snd{N}{M}} B' \parop C = A'$. Since the derivation of $A = B \parop C \ltr{\Res{\vect x}\Snd{N}{M}} B' \parop C = A'$ includes the derivation of $B \ltr{\Snd{N}{M}} B'$, $z$ does not occur in the derivation of $B \ltr{\Snd{N}{M}} B'$. Hence, we have $B \ltr{\Res{z}\Snd{N}{z}} \Res{\vect x}(\{\subst{M}{z}\} \parop B')$, so by \brn{Par}, $B \parop C \ltr{\Res{z}\Snd{N}{z}} \Res{\vect x}(\{\subst{M}{z}\} \parop B') \parop C$, since $z \notin \fv(C)$. Moreover, $\Res{\vect x}(\{\subst{M}{z}\} \parop B') \parop C \equiv \Res{\vect x}(\{\subst{M}{z}\} \parop (B' \parop C)) = \Res{\vect x}(\{\subst{M}{z}\} \parop A')$ since $\{ \vect x \} \cap \fv(C) = \emptyset$, so by \brn{Struct}, $A \ltr{\Res{z}\Snd{N}{z}} \Res{\vect x}(\{\subst{M}{z}\} \parop A')$. Moreover, the variables $\vect x$ are solvable in $\{\subst{M}{z}\} \parop A'$: assuming that the variables $\vect x$ resolve to $\vect M$ in $\{\subst{M}{z}\} \parop B'$, we have
\begin{align*}
\{\subst{\vect M}{\vect x}\} \parop \Res{\vect x}(\{\subst{M}{z}\} \parop A')
&\equiv \{\subst{\vect M}{\vect x}\} \parop \Res{\vect x}(\{\subst{M}{z}\} \parop (B' \parop C))\\
&\equiv \{\subst{\vect M}{\vect x}\} \parop \Res{\vect x}(\{\subst{M}{z}\} \parop B') \parop C\tag*{since $\{ \vect x \} \cap \fv(C) = \emptyset$}\\
&\equiv \{\subst{M}{z}\} \parop B' \parop C
\tag*{since $\vect x$ resolve to $\vect M$ in $\{\subst{M}{z}\} \parop B'$}\\
&\equiv \{\subst{M}{z}\} \parop A'
\end{align*}

\item Case \brn{Struct}. The transition $A \ltr{\Res{\vect x}\Snd{N}{M}} A'$ is derived from $B \ltr{\Res{\vect x}\Snd{N}{M}} B'$, $A \equiv B$ and $A' \equiv B'$. By induction hypothesis, $B \ltr{\Res{z}\Snd{N}{z}} \Res{\vect x}(\{\subst{M}{z}\} \parop B')$, $\{\vect x\} \subseteq \fv(M) \setminus \fv(N)$, and the variables $\vect x$ are solvable in $\{\subst{M}{z}\} \parop B'$, for all $z$ that do not occur in the derivation of $B \ltr{\Snd{N}{M}} B'$.
  Let $z$ be a variable that does not occur in the derivation of $A \ltr{\Res{\vect x}\Snd{N}{M}} A'$. Since the derivation of $A \ltr{\Res{\vect x}\Snd{N}{M}} A'$ includes the derivation of $B \ltr{\Snd{N}{M}} B'$, $z$ does not occur in the derivation of $B \ltr{\Snd{N}{M}} B'$. Hence, we have $B \ltr{\Res{z}\Snd{N}{z}} \Res{\vect x}(\{\subst{M}{z}\} \parop B')$ and $\Res{\vect x}(\{\subst{M}{z}\} \parop B')\equiv \Res{\vect x}(\{\subst{M}{z}\} \parop A')$, so by \brn{Struct}, 
$A \ltr{\Res{z}\Snd{N}{z}} \Res{\vect x}(\{\subst{M}{z}\} \parop A')$.
Moreover, the variables $\vect x$ are solvable in $\{\subst{M}{z}\} \parop B'$ and $\{\subst{M}{z}\} \parop B' \equiv \{\subst{M}{z}\} \parop A'$, 
so by Definition~\ref{defn:derived}, the variables $\vect x$ are solvable in $\{\subst{M}{z}\} \parop A'$.

\end{itemize}

Let us now prove the implication from right to left. For this proof, we use the notion of partial normal form introduced in Appendix~\ref{app:bigpfpnf}.
We have $A \ltr{ \nu z.\Snd{N}{z} } \nu\vect{x}. (\{\subst{M}{z}\} \parop A')$ where the
  variables $\vect{x}$ are solvable in $\{\subst{M}{z}\} \parop  A'$,
$\{\vect x\} \subseteq \fv(M) \setminus \fv(N)$, and $z$ does not occur in
$A$, $A'$, $\vect x$, $N$, $M$.
By Lemma~\ref{lem:redalpha-std-to-pnf}, we have $\pnf(A) \ltrpnf{ \nu z.\Snd{N}{z} } \nu\vect{x}. (\{\subst{M}{z}\} \parop A')$.
By definition of $\ltrpnf{\nu z.\Snd{N}{z}}$, we have 
$\pnf(A) \equiv \Res{\vect n}(\sigma \parop P)$, 
$P \ltrP{\Res{z}\Snd{N'}{z}} B'$,
$\Res{\vect x}(\{\subst{M}{z}\} \parop A') \equiv \Res{\vect n}(\sigma \parop B')$,
$z \notin \fv(\sigma)$, $\Sigma \vdash N\sigma = N'$,
and the elements of $\vect n$ do not occur in $N$,
for some $\vect n$, $\sigma$, $P$, $N'$, $B'$.
By Lemma~\ref{lem:caract-ltrP}, we have 
$P \equivP \Res{\vect n'}(\Snd{N'}{M'}.P_1 \parop P_2)$,
$B' \equiv \Res{\vect n'}(P_1 \parop \{\subst{M'}{z}\} \parop P_2)$,
$\{\vect n'\} \cap \fn(N') = \emptyset$,
$z \notin \fv(P_1 \parop P_2)$
for some $\vect n'$, $P_1$, $P_2$, $N'$, $M'$.
Hence, we have
\begin{align*}
&A \equiv \Res{\vect n}(\sigma \parop \Res{\vect n'}(\Snd{N'}{M'}.P_1 \parop P_2))\\
&\Res{\vect x}(\{\subst{M}{z}\} \parop A') \equiv \Res{\vect n}(\sigma \parop \Res{\vect n'}(P_1 \parop \{\subst{M'}{z}\} \parop P_2))
\end{align*}
We rename the names in $\vect n'$ so that they do not occur in $\sigma$ nor in $N$. Then
\begin{align*}
&A \equiv \Res{\vect n, \vect n'}(\sigma \parop \Snd{N'}{M'}.P_1 \parop P_2)\\
&\Res{\vect x}(\{\subst{M}{z}\} \parop A') \equiv \Res{\vect n,\vect n'}(\sigma \parop P_1 \parop \{\subst{M'}{z}\} \parop P_2)
\end{align*}
We instantiate the variables using $\sigma$, so that the variables of $\dom(\sigma)$ do not occur in the image of $\sigma$ nor in $N', M', P_1, P_2$.
Furthermore, let $\sigma'$ be a substitution that maps $\vect x$ to distinct fresh names. By Lemma~\ref{lem:struct-std-to-pnf}, 
\[\pnf(\Res{\vect x}(\{\subst{M}{z}\} \parop A')) \equivpnf \Res{\vect n,\vect n'}((\sigma \parop \{\subst{M'}{z}\}) \parop (P_1  \parop P_2))\]
Moreover, $\Sigma \vdash \pnf(\Res{\vect x}(\{\subst{M}{z}\} \parop A')) = \pnf(\Res{\vect x}(\{\subst{M}{z}\} \parop A'))\sigma'$
because $\{\vect x\} \cap \fv(\pnf(\Res{\vect x}(\{\subst{M}{z}\} \parop A'))) = \emptyset$.
Therefore, by Lemma~\ref{lem:inv-subst},
$\Sigma \vdash \Res{\vect n,\vect n'}((\sigma \parop \{\subst{M'}{z}\}) \parop (P_1  \parop P_2)) = \Res{\vect n,\vect n'}((\sigma \parop \{\subst{M'}{z}\}) \parop (P_1  \parop P_2))\sigma'$, so
$\Sigma \vdash M' = M'\sigma'$, $\Sigma \vdash \sigma = \sigma\sigma'$, $\Sigma \vdash P_1 = P_1 \sigma'$, and $\Sigma \vdash P_2 = P_2 \sigma'$, so by replacing $\sigma$ with $\sigma\sigma'$, $M'$ with $M'\sigma'$, $P_1$ with $P_1 \sigma'$, and $P_2$ with $P_2\sigma'$, we obtain 
\begin{align*}
&A \equiv \Res{\vect n, \vect n'}(\sigma \parop \Snd{N}{M'}.P_1 \parop P_2)\\
&\Res{\vect x}(\{\subst{M}{z}\} \parop A') \equiv \Res{\vect n,\vect n'}(\sigma \parop P_1 \parop \{\subst{M'}{z}\} \parop P_2)
\end{align*}
and the variables $\vect x$ are not free in the right-hand sides of these equivalences.

The variables $\vect x$ resolve to some $\vect M$ in $\{\subst{M}{z}\} \parop A'$, so
\[
\{\subst{M}{z}\} \parop A' \equiv \{\subst{\vect M}{\vect x}\} \parop \Res{\vect x}(\{\subst{M}{z}\} \parop A') \equiv \{\subst{\vect M}{\vect x}\} \parop \Res{\vect n,\vect n'}(\sigma \parop P_1 \parop \{\subst{M'}{z}\} \parop P_2)
\]
We rename the names $\vect n, \vect n'$ so that they do not occur in $\vect M$.
Hence
\begin{align*}
&\{\subst{M}{z}\} \parop A' \equiv \Res{\vect n,\vect n'}(\sigma  \parop \{\subst{M'}{z}\} \parop \{\subst{\vect M}{\vect x}\} \parop P_1 \parop P_2)\\
&A' \equiv \Res{z}(\{\subst{M}{z}\} \parop A') \equiv 
\Res{z,\vect n,\vect n'}(\sigma \parop \{\subst{M'}{z}\} \parop \{\subst{\vect M}{\vect x}\} \parop P_1 \parop P_2)
\end{align*}
By Lemma~\ref{lem:not-so-new}, $(z = M)\frameof{\{\subst{M}{z}\} \parop A'}$,
so $(z = M) \Res{\vect n, \vect n'}(\sigma \parop \{\subst{M'}{z}\} \parop \{\subst{\vect M}{\vect x}\})$.
We rename the names $\vect n, \vect n'$ so that they do not occur in $M$.
Therefore, 
\begin{align*}
A &\equiv \Res{\vect x, z, \vect n, \vect n'}(\sigma \parop  \{\subst{M'}{z}\} \parop \{\subst{\vect M}{\vect x}\} \parop \Snd{N}{z}.P_1 \parop P_2)\\
&\equiv \Res{\vect x, z, \vect n, \vect n'}(\sigma \parop  \{\subst{M'}{z}\} \parop \{\subst{\vect M}{\vect x}\} \parop \Snd{N}{M}.P_1 \parop P_2)
\end{align*}
So we derive
{\allowdisplaybreaks\begin{align*}
& \Snd{N}{M}.P_1 \ltr{\Snd{N}{M}} P_1 \tag*{by \brn{Out-Term}}\\
& \Snd{N}{M}.P_1 \parop \sigma \parop  \{\subst{M'}{z}\} \parop \{\subst{\vect M}{\vect x}\} \parop P_2 \ltr{\Snd{N}{M}} 
P_1 \parop \sigma \parop  \{\subst{M'}{z}\} \parop \{\subst{\vect M}{\vect x}\} \parop P_2\tag*{by \brn{Par}}\\
&\Res{z,\vect n,\vect n'}(\Snd{N}{M}.P_1 \parop \sigma \parop  \{\subst{M'}{z}\} \parop \{\subst{\vect M}{\vect x}\} \parop P_2 ) \ltr{\Snd{N}{M}} 
\Res{z,\vect n,\vect n'}(P_1 \parop \sigma \parop  \{\subst{M'}{z}\} \parop \{\subst{\vect M}{\vect x}\} \parop P_2)
\tag*{by \brn{Scope}, since $z,\vect n,\vect n'$ do not occur in $\Snd{N}{M}$}\\
&\Res{z,\vect n,\vect n'}(\sigma \parop  \{\subst{M'}{z}\} \parop \{\subst{\vect M}{\vect x}\} \parop \Snd{N}{M}.P_1 \parop P_2)
\ltr{\Snd{N}{M}} A'
\tag*{by \brn{Struct}}\\
&\Res{\vect x, z, \vect n, \vect n'}(\sigma \parop  \{\subst{M'}{z}\} \parop \{\subst{\vect M}{\vect x}\} \parop \Snd{N}{M}.P_1 \parop P_2) \ltr{\Res{\vect x}\Snd{N}{M}} A'
\tag*{by \brn{Open-Var}, since $\{\vect x\} \subseteq \fv(M) \setminus \fv(N)$}\\*[-0.5mm]
&\tag*{and the variables $\vect x$ are solvable in $\{\subst{M}{z}\}\parop A'$}\\
&A \ltr{\Res{\vect x}\Snd{N}{M}} A'\tag*{by \brn{Struct}}
\end{align*}}%
\end{proof}

\begin{restate}{Lemma}{\ref{lem:single-var-output-correspondence}}
$A \ltr{ \nu x.\Snd{N}{x}} A'$ in the refined semantics 
if and only if 
$A \ltr{ \nu x.\Snd{N}{x}} A'$ in the simple semantics.
\end{restate}%
\begin{proof}
Suppose that $A \ltr{ \nu x.\Snd{N}{x}} A'$
in the refined semantics.
By Lemma~\ref{lem:output-correspondence}, for some variable $z$ that 
does not occur in this transition, we have $A \ltr{ \nu
    z.\Snd{N}{z} } \nu x. (\{\subst{x}{z}\} \parop A')$
in the simple semantics.
Since $x \in \dom(A')$, $A' \equiv \Res{\vect n}(\{\subst{M}{x}\} \parop A'')$
for some $\vect n$ and some $M$ and $A''$ that do not contain $x$ nor $z$, so
\[\nu x. (\{\subst{x}{z}\} \parop A') \equiv \nu x. (\{\subst{x}{z}\} \parop \Res{\vect n}(\{\subst{M}{x}\} \parop A'')) \equiv \Res{\vect n}(\{\subst{M}{z}\} \parop A'')\]
Hence by \brn{Struct}, $A \ltr{ \nu z.\Snd{N}{z} } \Res{\vect n}(\{\subst{M}{z}\} \parop A'')$.
By renaming $z$ into $x$ and $x$ into $z$ everywhere in the derivation of this transition, we obtain
$A \ltr{ \nu x.\Snd{N}{x} } \Res{\vect n}(\{\subst{M}{x}\} \parop A'')$,
since $z$ and $x$ are not free in $A$, $N$, $A''$, $M$.
Since we have $\Res{\vect n}(\{\subst{M}{x}\} \parop A'') \equiv A'$, 
we obtain $A \ltr{ \nu x.\Snd{N}{x} } A'$ by \brn{Struct} 
in the simple semantics.

Conversely, suppose that $A \ltr{ \nu x.\Snd{N}{x} } A'$
in the simple semantics.
Since $x \in \dom(A')$, $A' \equiv \Res{\vect n}(\{\subst{M}{x}\} \parop A'')$
for some $\vect n$ and some $M$ and $A''$ that do not contain $x$, so
by \brn{Struct}, $A \ltr{ \nu x.\Snd{N}{x} } \Res{\vect n}(\{\subst{M}{x}\} \parop A'')$.
By renaming $x$ into a fresh variable $z$ everywhere in the derivation of this transition,
$A \ltr{ \nu z.\Snd{N}{z} } \Res{\vect n}(\{\subst{M}{z}\} \parop A'')$,
since $x$ is not free in $A$, $M$, $A''$.
Moreover, $\Res{\vect n}(\{\subst{M}{z}\} \parop A'') \equiv \Res{x}(\{\subst{x}{z}\} \parop \Res{\vect n}(\{\subst{M}{x}\} \parop A'')) \equiv \Res{x}(\{\subst{x}{z}\} \parop A')$, so by \brn{Struct}, we obtain
$A \ltr{ \nu z.\Snd{N}{z} } \Res{x}(\{\subst{x}{z}\} \parop A')$.

The variable $x$ resolves to $z$ in $\{\subst{x}{z}\} \parop A'$, because
\begin{align*}
\{\subst{z}{x}\} \parop \Res{x}(\{\subst{x}{z}\} \parop A')
&\equiv \{\subst{z}{x}\} \parop \Res{x}(\{\subst{x}{z}\} \parop \Res{\vect n}(\{\subst{M}{x}\} \parop A''))\\
&\equiv \{\subst{z}{x}\} \parop \Res{\vect n}(\{\subst{M}{z}\} \parop A'')\\
&\equiv \Res{\vect n}(\{\subst{z}{x}\} \parop \{\subst{M}{z}\} \parop A'')\\
&\equiv \Res{\vect n}(\{\subst{M}{x}\} \parop \{\subst{x}{z}\} \parop A'')\\
&\equiv \{\subst{x}{z}\} \parop \Res{\vect n}(\{\subst{M}{x}\} \parop A'')\\
&\equiv \{\subst{x}{z}\} \parop A'
\end{align*}
Therefore, by Lemma~\ref{lem:output-correspondence},
$A \ltr{ \nu x.\Snd{N}{x} } A'$
in the refined semantics.
\end{proof}

\begin{restate}{Theorem}{\ref{thm:relate-lts}}
  Let $\altbisim$ be the relation of labelled
  bisimilarity obtained by applying Definition~\ref{def:wkbisim} to
  the refined semantics. We have ${\wkbisim} = {\altbisim}$.
\end{restate}%
\begin{proof}
By Lemma~\ref{lem:single-var-output-correspondence}, 
$\altbisim$ is a simple-labelled bisimulation, and thus ${\altbisim} \subseteq {\wkbisim}$. 
Conversely, to show that $\wkbisim$ is a refined-labelled bisimulation,
it suffices to prove its bisimulation property for any refined output label.

Assume $A \wkbisim B$, $A \ltr{ \nu\vect{x}.\Snd{N}{M}} A'$, $A'$ is closed,
and $\fv(\nu\vect{x}.\Snd{N}{M}) \subseteq \dom(A)$.
By Lemma~\ref{lem:output-correspondence}, 
we have 
\[A \ltr{ \nu z.\Snd{N}{z} } A^\circ = \nu\vect{x}.(\{\subst{M}{z}\} \parop A')\]
for some fresh variable~$z$, 
where $\{\vect x\} \subseteq \fv(M) \setminus \fv(N)$ and $\vect{x}$ resolves to $\vect M$
in $\{\subst{M}{z}\} \parop A'$:
\begin{equation}
\{\subst{M}{z}\} \parop A' 
\equiv 
\{\subst{\vect M}{\vect x}\} \parop \nu{\vect{x}}.(\{\subst{M}{z}\} \parop A') 
\equiv 
\{\subst{\vect M}{\vect x}\} \parop A^\circ
\label{eq:struct-equiv1}\end{equation}
Let $\CTX[\_] = \nu z.( \{\subst{\vect M}{\vect x}\} \parop \_ )$.  Using
the structural equivalence above and structural rearrangements, we
obtain $\CTX[A^\circ] \equiv \nu z.( \{\subst{M}{z}\} \parop A') \equiv A'$.
By labelled bisimulation hypothesis on the simple output transition above,
we have $B \rightarrow^* B_1 \ltr{\nu z.\Snd{N}{z}} B_2 \rightarrow^* B^\circ$ with $A^\circ \wkbisim B^\circ$ for some $B_1, B_2, B^\circ$.
By instantiating all variables in $\fv(B_2) \setminus \dom(B_2)$ with 
fresh names in the derivation of this reduction, we obtain the same property
and additionally $B_2$ is closed.
By Theorem~\ref{THM:OBSERVATIONAL-LABELED}, labelled bisimilarity is closed 
by application of closing contexts. Using $\CTX[\_]$, we obtain $A' \wkbisim \CTX[B^\circ]$. Let $B' = \CTX[B^\circ]$.

Let us first show that $B_2 \equiv \Res{\vect x}(\{\subst{M}{z}\} \parop \CTX[B_2])$.
By Lemma~\ref{lem:not-so-new}, we have $(z = M)\frameof{\{\subst{M}{z}\} \parop A'}$
and by the structural equivalence~\eqref{eq:struct-equiv1},
$(\vect x = \vect M)\frameof{\{\subst{M}{z}\} \parop A'}$,
so $(z = M\smxvect)\frameof{\{\subst{M}{z}\} \parop A'}$,
so $(z = M\smxvect)\frameof{\Res{\vect x}(\{\subst{M}{z}\} \parop A')}$
since the variables $\vect x$ do not occur in $M\smxvect$.
Hence $(z = M\smxvect)\frameof{A^\circ}$.
Since $A^\circ \wkbisim B^\circ \enveq B_2$, we have $A^\circ \enveq B_2$,
so $(z = M\smxvect)\frameof{B_2}$.
Since $z \in \dom(B_2)$, 
we have $B_2 \equiv \Res{\vect n}(\{\subst{N}{z}\} \parop B_3)$
for some $\vect n$, $N$ and $B_3$ such that $z$ is not free in $B_3$.
We rename $\vect n$ so that these names do not occur in $\vect M$ nor in $M$.
Then $\CTX[B_2] \equiv \Res{\vect n}(\{\subst{\vect M\{\subst{N}{z}\}}{\vect x}\} \parop B_3)$, so
\[\Res{\vect x}(\{\subst{M}{z}\} \parop \CTX[B_2]) \equiv
\Res{\vect n}(\{\subst{M\{\subst{\vect M\{\subst{N}{z}\}}{\vect x}\}}{z}\} \parop B_3) \equiv
\Res{\vect n}(\{\subst{N}{z}\} \parop B_3) \equiv B_2\]
because $(N = M\{\subst{\vect M\{\subst{N}{z}\}}{\vect x}\}) \frameof{B_3}$ since $(z = M\smxvect)\frameof{B_2}$.
So we have the desired structural equivalence $B_2 \equiv \Res{\vect x}(\{\subst{M}{z}\} \parop \CTX[B_2])$.

Then 
\[B_1 \ltr{\nu z.\Snd{N}{z}} \Res{\vect x}(\{\subst{M}{z}\} \parop \CTX[B_2])\]
Moreover, $\vect{x}$ resolves to $\vect M$
in $\{\subst{M}{z}\} \parop A'$ and 
\[\{\subst{M}{z}\} \parop A' \equiv \{\subst{M}{z}\} \parop \CTX[A^\circ]
\wkbisim \{\subst{M}{z}\} \parop \CTX[B^\circ] \enveq \{\subst{M}{z}\} \parop \CTX[B_2]\]
so  $\{\subst{M}{z}\} \parop A' \enveq \{\subst{M}{z}\} \parop \CTX[B_2]$,
so by Lemma~\ref{lem:enveq-preserves-resolve},
$\vect{x}$ resolves to $\vect M$ in $\{\subst{M}{z}\} \parop \CTX[B_2]$.
Hence, by Lemma~\ref{lem:output-correspondence}, $B_1 \ltr{\Res{\vect x}\Snd{N}{M}} \CTX[B_2]$.
Hence 
\[B \rightarrow^* B_1 \ltr{\Res{\vect x}\Snd{N}{M}} \CTX[B_2] \rightarrow^* \CTX[B^{\circ}] = B'\] 
so we have $A' \wkbisim B'$ and $B  \rightarrow^* \ltr{\Res{\vect x}\Snd{N}{M}} \rightarrow^* B'$, which concludes the proof.
\end{proof}

\section{Proofs for Section~\ref{SUBSEC:CONSTRUCTING}}\label{app:constructing}

In this appendix, we suppose that the signature $\Sigma$ satisfies the 
assumptions of Theorem~\ref{th:mac} and write $\rew$ for its convergent 
rewrite system.
In particular, since $\rew$ terminates, the left-hand sides of its rewrite rules 
cannot be variables. 

\newcommand{\onlykey}{$k$ occurs only as $\Const{mac}(k,\cdot)$}

In preparation for the proof of Theorem~\ref{th:mac}, we study the effect of the translation $\tr{\cdot}$ 
on the semantics of terms and processes where {\onlykey},
relying on the partial normal forms defined in Appendix~\ref{app:bigpfpnf}.

\begin{lemma}\label{lem:h-inj}
$\Sigma \vdash M_1 = M_2$ if and only if 
$\Sigma \vdash \hbin{k}{M_1} = \hbin{k}{M_2}$.
\end{lemma}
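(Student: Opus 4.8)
The forward implication, that $\Sigma \vdash M_1 = M_2$ entails $\Sigma \vdash \hbin{k}{M_1} = \hbin{k}{M_2}$, is immediate, since the equational theory is a congruence. For the converse, the plan is to prove that $\hbin{k}{\cdot}$ is injective modulo $\Sigma$, by analysing $\rew$-normal forms. Since $\rew$ is convergent, $\Sigma \vdash P = Q$ holds if and only if $P$ and $Q$ have the same $\rew$-normal form, and $\hbin{k}{M}$ has the same normal form as $\hbin{k}{\hat M}$, where $\hat M$ denotes the $\rew$-normal form of $M$ (first reduce the subterm $M$, then use confluence). Hence it suffices to show: if $N_1$ and $N_2$ are normal forms and $\hbin{k}{N_1}$, $\hbin{k}{N_2}$ have the same normal form, then $N_1 = N_2$ syntactically.

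The heart of the argument is a description of the normal form of $\hbin{k}{N}$ for an arbitrary normal form $N$. Using that $k$ is a name, hence itself normal and never a redex (names do not occur in rewrite rules); that $\Const{f}$ and $\Const{mac}$ never occur in left-hand sides, so every $\Const{f}$-headed term with normal arguments is normal; and that the only rules with $\Const{h}$ at the root are \eqref{eq:h-iter} and \eqref{eq:h-end}, while $::$ and $\Const{nil}$ never head a left-hand side (so list structure is free in normal forms) — one checks, by induction, that while normalising $\hbin{k}{N}$ every redex that arises sits at the root. Writing $N = \cons{M_1}{\cons{M_2}{\cdots\cons{M_m}{L}}}$, where $L$ is the maximal tail of $N$ that is not a $::$-application (so $m \geq 0$, and $L$ is either $\Const{nil}$ or has a root symbol other than $::$), this yields exactly two possible shapes: the $\Const{f}$-spine $\fbin{\fbin{\cdots\fbin{k}{M_1}\cdots}{M_{m-1}}}{M_m}$ when $L = \Const{nil}$ and $m \geq 1$; and $\hbin{\fbin{\cdots\fbin{k}{M_1}\cdots}{M_{m-1}}}{\cons{M_m}{L}}$ otherwise — with the spine degenerating to $k$ and the second argument to $N$ itself when $m = 0$.

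Injectivity then follows by decoding. The two shapes are distinguished by their root symbol, $\Const{f}$ versus $\Const{h}$, so the image of $\hbin{k}{\cdot}$ splits accordingly. In either case one recovers $N$ by descending along first arguments, peeling off the second arguments $M_m, M_{m-1}, \dots$ one at a time until the atom $k$ is reached; since $k$ is a name, it is syntactically distinct from every $\Const{f}$-application, so this determines $m$ and $M_1, \dots, M_m$ unambiguously, and in the $\Const{h}$-headed case the second argument of the head $\Const{h}$ directly supplies the remaining tail ($\cons{M_m}{L}$, or $N$ when $m = 0$). A short induction on $m$ makes this precise, giving $N_1 = N_2$ and hence $\Sigma \vdash M_1 = M_2$.

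I expect the main obstacle to be the normal-form computation of $\hbin{k}{N}$: one must check carefully that no $\rew$-step ever fires strictly below the root $\Const{h}$ during normalisation — this is where all four hypotheses on $\rew$ are used together — and that the case split on the tail $L$ is exhaustive and produces the two genuinely disjoint shapes claimed above.
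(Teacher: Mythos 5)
Your proposal is correct and follows essentially the same route as the paper's proof: reduce $M_1,M_2$ to $\rew$-normal forms, write each as a $::$-list with a non-$::$ tail, compute the normal form of $\hbin{k}{\cdot}$ by cases on that tail (an $\Const{f}$-spine when the tail is $\Const{nil}$ and the list is non-trivial, an $\Const{h}$-headed term otherwise), justify irreducibility from the hypotheses on $\rew$, and decode. The paper splits the ``otherwise'' case into two ($n=1$ versus $n>1$ with non-$\Const{nil}$ tail) where you merge them, and you are somewhat more explicit than the paper about why the spine decoding is unambiguous (the name $k$ at the bottom cannot be an $\Const{f}$-application); these are only cosmetic differences.
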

\begin{proof}
The implication from left to right is obvious. Conversely, suppose that $\Sigma \vdash \hbin{k}{M_1} = \hbin{k}{M_2}$. Let $M'_1$ and $M'_2$ be the normal forms under $\rew$ of $M_1$ and $M_2$ respectively. Hence $\Sigma \vdash \hbin{k}{M'_1} = \hbin{k}{M'_2}$. 
For some $n, n' \geq 1$, 
we have $M'_1 = \cons{N_1}{\cons{\dots}{N_n}}$
and $M'_2 = \cons{N'_1}{\cons{\dots}{N'_{n'}}}$ where the root symbols of 
$N_n$ and $N'_{n'}$ are not $::$. 
We compute the normal form of $\hbin{k}{M'_1}$:
\begin{itemize}
\item If $n = 1$, then $\hbin{k}{M'_1} = \hbin{k}{N_1}$ is irreducible since 
$N_1$ is irreducible and the rewrite rules with $\Const{h}$ at the root 
of the left-hand side apply only to terms with $::$ at the root.

\item If $n > 1$ and $N_n = \Const{nil}$, then
$\hbin{k}{M'_1}$ reduces to $\fbin{\dots(\fbin{k}{N_1}, \dots)}{N_{n-1}}$,
and this term is irreducible since $N_1, \ldots, N_{n-1}$ are irreducible
as subterms of an irreducible term,
and no rewrite rule contains $\Const{f}$ in its left-hand side.

\item If $n > 1$ and $N_n \neq \Const{nil}$, then 
$\hbin{k}{M'_1}$ reduces to 
$\Const{h}(\Const{f}(\dots(\fbin{k}{N_1}, \dots), \allowbreak N_{n-2}), \allowbreak \cons{N_{n-1}}{N_n})$
and this term is irreducible since $N_1, \ldots, N_n$ are irreducible,
no rewrite rule contains $\Const{f}$ in its left-hand side,
and no rewrite rule with $\Const{h}$ at the root applies since
$N_n$ is not $\Const{nil}$ and does not contain $::$ at the root.
\end{itemize}
and similarly compute a normal form of $\hbin{k}{M'_2}$.
Their equality implies 
$n = n'$ and $N_i = N'_i$ for all $i \leq n$,
hence $M'_1 = M'_2$ and $\Sigma \vdash M_1 = M_2$.
\end{proof}

\begin{lemma}\label{lem:tr-red}
If $M_1 \rightarrow_{\rew} M_2$ and {\onlykey} in $M_1$,
then $\tr{M_1} \rightarrow_{\rew} \tr{M_2}$ and {\onlykey} in $M_2$.
\end{lemma}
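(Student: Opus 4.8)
The plan is to analyze the single rewrite step directly. Fix the rule $L \to R$ of $\rew$, the redex position $p$, and the matcher $\theta$ witnessing $M_1 \to_{\rew} M_2$, so that $M_1|_p = L\theta$ and $M_2 = M_1[R\theta]_p$; write $C$ for the one-hole context $M_1[\cdot]_p$. I will use freely the hypotheses on $\rew$: neither $L$ nor $R$ contains a name, $\Const{mac}$ does not occur in $L$, $L$ is not a variable (since $\rew$ terminates, its left-hand sides are not variables), and every variable of $R$ occurs in $L$ (part of the definition of a rewrite rule). The translation $\tr{\cdot}$ is compositional everywhere except at subterms of the form $\mac{k}{N}$, which it replaces by $\fbin{k}{\hbin{k}{\tr{N}}}$; the whole argument amounts to checking that this single point of non-compositionality is neither created nor destroyed in passing from $M_1$ to $M_2$.

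First I would record two locality facts about $\theta$. (i) $L\theta$ has a function symbol at its root, so $M_1|_p = L\theta \neq k$. (ii) For every variable $x$ of $L$, $x\theta \neq k$ and moreover $k$ occurs only as $\mac{k}{\cdot}$ inside $x\theta$. Indeed, $x$ occurs in $L$ as an argument of some function symbol other than $\Const{mac}$ (because $L$ contains no $\Const{mac}$ and $L \neq x$); so if $x\theta$ were $k$, then $k$ would occur in $L\theta \subseteq M_1$ as an argument of a non-$\Const{mac}$ symbol, contradicting the hypothesis on $M_1$; and for a $k$-occurrence inside $x\theta$, the enclosing $\mac{k}{\cdot}$-node has that $k$ as its first, leaf, child, so it cannot be a proper ancestor of the subterm $x\theta$ and hence lies inside $x\theta$. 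From (ii) and the absence of names in $R$ it follows similarly that $R\theta \neq k$ (either $R$ is not a variable, so $R\theta$ has a function symbol at its root, or $R = x$ with $x\theta \neq k$), and that every $k$-occurrence of $R\theta$ lies inside some $x\theta$, enclosed by a $\Const{mac}$-node inside that $x\theta$.

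Next I would establish two distributivity statements for $\tr{\cdot}$, each by a routine structural induction. First: for every one-hole context $D$ and every term $s \neq k$, $\tr{D[s]} = \tr{D}[\tr{s}]$, where $\tr{D}$ is obtained by applying $\tr{\cdot}$ to $D$ without touching the hole; the only case in which the induction could fail is a $\Const{mac}$-node of $D$ whose first argument is the hole filled by $k$, which is excluded by $s \neq k$ (one also uses that $\tr{s} = k$ precisely when $s = k$). Second: since $L$ and $R$ contain no names, $\Const{mac}$ is absent from $L$, and by (ii) no $\Const{mac}$-node of $R$ has first argument $k$ after applying $\theta$, the only $\mac{k}{\cdot}$ subterms of $L\theta$ and of $R\theta$ lie inside the substituted terms; hence $\tr{L\theta} = L\tr{\theta}$ and $\tr{R\theta} = R\tr{\theta}$, where $\tr{\theta}$ is the substitution sending $x$ to $\tr{x\theta}$ (well-defined by (ii)).

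Putting these together, by (i) and the first distributivity statement $\tr{M_1} = \tr{C[L\theta]} = \tr{C}[\tr{L\theta}] = \tr{C}[L\tr{\theta}]$, and likewise $\tr{M_2} = \tr{C}[R\tr{\theta}]$ using $R\theta \neq k$. The term $\tr{C}[L\tr{\theta}]$ contains the instance $L\tr{\theta}$ of the left-hand side $L$ at the hole position of $\tr{C}$, and rewriting there with $L \to R$ yields exactly $\tr{C}[R\tr{\theta}] = \tr{M_2}$, so $\tr{M_1} \to_{\rew} \tr{M_2}$. For the remaining claim, a $k$-occurrence of $M_2 = C[R\theta]$ lies either in $C$, where it was already enclosed by a $\Const{mac}$-node of $C$ (the hole sits at $L\theta \neq k$, so it is not itself a $k$-occurrence and cannot split such a node), or inside $R\theta$, where by (ii) it is enclosed by a $\Const{mac}$-node inside $R\theta$; either way \onlykey{} in $M_2$. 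The main obstacle, and really the only delicate point, is the bookkeeping in (ii) and in the two distributivity statements: one must verify that the $\mac{k}{\cdot}$ pattern is stable across the boundaries between redex, context, and substituted subterms, which is exactly where the hypotheses that \onlykey{} in $M_1$, that $L$ and $R$ are name-free, and that $\Const{mac}$ does not occur in $L$ are used.
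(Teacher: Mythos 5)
Your proposal is correct and takes essentially the same route as the paper's proof: decompose the step as $M_1 = C[L\theta] \rightarrow_{\rew} C[R\theta] = M_2$, use the hypotheses that $L$ is not a variable and that names and $\Const{mac}$ are absent from $L$ (and names from $R$) to show that no $\mac{k}{\cdot}$ pattern straddles the boundaries between context, rule, and substitution, so that $\tr{\cdot}$ distributes as $\tr{M_i} = \tr{C}[\,\cdot\,\tr{\theta}]$ and the same rule fires. You merely spell out in more detail the locality facts and the two distributivity lemmas that the paper's shorter proof treats as immediate.
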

\begin{proof}
We have $M_1 = C[M_3\sigma]$ and $M_2 = C[M_4\sigma]$
for some rewrite rule $M_3 \rightarrow M_4$ of $\rew$, term context $C$, 
and substitution $\sigma$.
Hence $\tr{M_1} = \tr{C[M_3\sigma]}$.
Furthermore, $\Const{mac}$ and $k$ do not occur in $M_3$ and $M_3$ 
is not a variable, so $\tr{C[M_3\sigma]} = \tr{C}[M_3\tr{\sigma}]$.
Since {\onlykey} in $M_1$ and $\Const{mac}$ does 
not occur in $M_3$, 
{\onlykey} in $C$ and in the image of $\sigma$.
Furthermore, $k$ does not occur in $M_4$.
Therefore, {\onlykey} in $M_2 = C[M_4\sigma]$
and $\tr{M_2} = \tr{C[M_4\sigma]} = \tr{C}[M_4\tr{\sigma}]$.
We can then conclude that $\tr{M_1} \rightarrow_{\rew} \tr{M_2}$.
\end{proof}

\begin{lemma}\label{lem:tr-inj}
Suppose that {\onlykey} in $M_1$ and $M_2$.
We have $\Sigma \vdash M_1 = M_2$ if and only if 
$\Sigma \vdash \tr{M_1} = \tr{M_2}$.
\end{lemma}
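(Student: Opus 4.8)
The plan is to prove the two directions separately, keeping in mind that since $\rew$ is convergent, $\Sigma \vdash M = M'$ holds exactly when $M$ and $M'$ have the same $\rew$-normal form, which I write $M{\downarrow}$.

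For the implication from left to right, I would reduce $M_1$ and $M_2$ to their common normal form $N$ and push the translation through each rewrite step by Lemma~\ref{lem:tr-red}. That lemma also guarantees that ``$k$ occurs only as $\mac{k}{\cdot}$'' is preserved along a reduction, so its hypothesis is available at every step; hence $\tr{M_1} \rightarrow_{\rew}^* \tr{N}$ and $\tr{M_2} \rightarrow_{\rew}^* \tr{N}$, and therefore $\Sigma \vdash \tr{M_1} = \tr{N} = \tr{M_2}$.

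For the converse, the direction just proved lets me assume $M_1$ and $M_2$ are in normal form: from $\Sigma \vdash \tr{M_1} = \tr{M_2}$ together with $\tr{M_i} \rightarrow_{\rew}^* \tr{M_i{\downarrow}}$ we get $\Sigma \vdash \tr{M_1{\downarrow}} = \tr{M_2{\downarrow}}$, so it is enough to show that for normal forms $N_1,N_2$ with $k$ occurring only as $\mac{k}{\cdot}$, $\Sigma \vdash \tr{N_1} = \tr{N_2}$ forces $N_1 = N_2$. The core of the argument is an explicit description, by structural induction on a normal form $N$ with the key restriction, of the normal form $\tr{N}{\downarrow}$: outside the subterms $\mac{k}{M}$ of $N$ the term is already irreducible and free of $k$, and --- using that neither $\Const{f}$ nor $\Const{mac}$ nor any name occurs in a left-hand side of $\rew$, that $::$ and $\Const{nil}$ never head a left-hand side, and that a translated mac $\fbin{k}{\hbin{k}{\cdot}}$ has root $\Const{f}$ --- no rewrite step crosses into this part; inside each $\mac{k}{M}$ the translation yields $\fbin{k}{\hbin{k}{\tr{M}}}$, the outer $\Const{f}$ is never a redex, and $\hbin{k}{\tr{M}{\downarrow}}$ normalizes to the explicit iterated-compression term computed in the proof of Lemma~\ref{lem:h-inj}, which still contains $k$. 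Thus $\tr{N}{\downarrow}$ is obtained from $N$ by replacing each $\mac{k}{M}$ with $\fbin{k}{L}$ for a suitable $L$ (recursing into $M$), and every occurrence of $k$ in $\tr{N}{\downarrow}$ sits at the first-argument position of such an $\Const{f}$.

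It then remains to invert this. I would define a map $\rho$ that picks out the outermost subterms of the form $\fbin{k}{L}$, reconstructs for each the underlying block list --- well defined because, by Lemma~\ref{lem:h-inj}, the normal form of $\hbin{k}{M}$ determines $M$ modulo $\Sigma$, hence exactly for normal-form arguments --- and replaces $\fbin{k}{L}$ by $\mac{k}{\cdot}$ applied to the reconstructed argument, recursing into that argument. The description of $\tr{N}{\downarrow}$ above gives $\rho(\tr{N}{\downarrow}) = N$ for every normal form $N$ with the key restriction, so applying $\rho$ to $\tr{N_1}{\downarrow} = \tr{N_2}{\downarrow}$ yields $N_1 = N_2$, and hence $\Sigma \vdash M_1 = M_2$. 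The main obstacle is the inductive description of $\tr{N}{\downarrow}$: one must check that translating and normalizing a mac block neither perturbs the surrounding redex structure (including redexes for the other cryptographic primitives permitted by Theorem~\ref{th:mac}) nor lets two blocks coalesce, which is precisely where the constraints on $\rew$ stated in Theorem~\ref{th:mac} are used; once that is in hand, the inversion step is routine.
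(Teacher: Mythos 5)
Your proposal is correct; the first direction and the reduction of the converse to irreducible terms (both via Lemma~\ref{lem:tr-red}) coincide exactly with the paper's proof, and the divergence is only in how injectivity of $\tr{\cdot}$ on $\rew$-normal forms is established. The paper proves it directly by induction on the total size of the pair $(M'_1,M'_2)$, distinguishing whether each term has the form $\mac{k}{\cdot}$: in the non-mac case it argues that no rewrite rule can fire at the root of $\tr{M'_1}$ or $\tr{M'_2}$ (such a rule would match only above the $\Const{f}(\dots)$ positions, and by the induction hypothesis equal subterms of the translations correspond to equal subterms of the originals, so the same rule would fire in the irreducible $M'_i$) and then peels off the common root symbol; in the mac case it strips the outer $\Const{f}$ and invokes Lemma~\ref{lem:h-inj}. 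You instead characterize the full normal form of $\tr{N}$ and construct an explicit left inverse $\rho$. That buys a more informative intermediate statement — a reusable description of what translated terms look like after normalization — at the cost of more bookkeeping, since you must control the entire normal form rather than just compare two translations at the root. One point you should make explicit when carrying this out: the ``no new redexes / no coalescing'' obligation is discharged not only by the syntactic constraints on $\rew$ but also by the induction hypothesis — you need that $\rho$ already inverts the translate-and-normalize map on proper subterms of $N$ (hence that this map is injective there) to rule out a non-linear left-hand side matching in the translated term where it did not match in $N$. This is precisely the role the induction hypothesis plays in the paper's non-mac case, so the two arguments rest on the same inductive invariant; yours merely packages it as invertibility rather than injectivity.
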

\begin{proof}
Let us first prove the implication from left to right.
If $\Sigma \vdash M_1 = M_2$, then $M_1 \rightarrow_{\rew}^* M'$ and
$M_2 \rightarrow_{\rew}^* M'$ for some $M'$.
By Lemma~\ref{lem:tr-red}, $\tr{M_1} \rightarrow_{\rew}^* \tr{M'}$ and
$\tr{M_2} \rightarrow_{\rew}^* \tr{M'}$,
so $\Sigma \vdash \tr{M_1} = \tr{M_2}$.

Conversely, suppose that $\Sigma \vdash \tr{M_1} = \tr{M_2}$.
Let $M'_1$ and $M'_2$ be the normal forms under $\rew$ of $M_1$ and $M_2$, respectively. By Lemma~\ref{lem:tr-red}, {\onlykey} in $M'_1$ and $M'_2$, $\tr{M_1} \rightarrow_{\rew}^* \tr{M'_1}$, and $\tr{M_2} \rightarrow_{\rew}^* \tr{M'_2}$, so $\Sigma \vdash \tr{M'_1} = \tr{M'_2}$. 
We show by induction on the total size of the terms $M'_1$ and $M'_2$ that, 
if {\onlykey} in $M'_1$ and $M'_2$, $M'_1$ and $M'_2$ are irreducible under $\rew$, and $\Sigma \vdash \tr{M'_1} = \tr{M'_2}$, then $M'_1 = M'_2$:
\begin{itemize}

\item First suppose that $M'_1$ and $M'_2$ are not of the form $\Const{mac}(k, \cdot)$. 

Since $\Const{f}$ does not occur on the left-hand sides of rewrite rules of $\rew$, 
if a rewrite rule of $\rew$ could be applied at the root of $\tr{M'_1}$ or $\tr{M'_2}$, then it would match only symbols above occurrences of $\Const{f}(\dots)$ in  $\tr{M'_1}$ or $\tr{M'_2}$, hence, only symbols above occurrences of $\Const{mac}(k,\cdot)$ in $M'_1$ or $M'_2$. Moreover, by induction hypothesis, if subterms of $\tr{M'_1}$ or $\tr{M'_2}$ are equal, the corresponding subterms of $M'_1$ or $M'_2$ are also equal. Hence, the same rewrite rule would also apply at the root of $M'_1$ or $M'_2$, which is impossible since $M'_1$ and $M'_2$ are irreducible.

Hence, the equality $\Sigma \vdash \tr{M'_1} = \tr{M'_2}$ is equivalent to the
equality between the immediate subterms of $\tr{M'_1}$ and $\tr{M'_2}$,
and we conclude by induction.

\item Now suppose that $M'_1 = \Const{mac}(k, M''_1)$ and $M'_2 = \Const{mac}(k, M''_2)$.
Then $\tr{M'_1} = \fbin{k}{\hbin{k}{\tr{M''_1}}}$
and $\tr{M'_2} = \fbin{k}{\hbin{k}{\tr{M''_2}}}$.
Since $\Sigma \vdash \tr{M'_1} = \tr{M'_2}$, we have $\Sigma \vdash \hbin{k}{\tr{M''_1}} = \hbin{k}{\tr{M''_2}}$ since no rewrite rule applies to $\Const{f}(\cdot,\cdot)$, so by Lemma~\ref{lem:h-inj}, $\Sigma \vdash \tr{M''_1} = \tr{M''_2}$. By induction hypothesis, $M''_1 = M''_2$, so $M'_1 = M'_2$.

\item Finally, if $M'_1 = \Const{mac}(k, M''_1)$ and $M'_2$ is not of the form $\Const{mac}(k, \cdot)$, then $\tr{M'_1} =  \Const{f}(k, \allowbreak \Const{h}(k, \allowbreak \tr{M''_1}))$ and $\tr{M'_2}$ is not of the form $\fbin{k}{\cdot}$ because {\onlykey} in $M'_2$, so $\Sigma \vdash \tr{M'_1} \neq \tr{M'_2}$: this case cannot happen.
Symmetrically, the case $M'_2 = \Const{mac}(k, M''_2)$ and $M'_1$ is not of the form $\Const{mac}(k, \cdot)$ cannot happen.

\end{itemize}
From this result, we easily conclude that $\Sigma \vdash M_1 = M_2$.
\end{proof}

\begin{lemma}\label{lem:ltrP-tr1}
Suppose that $P_0$ is closed, $\alpha = \Res{x}\Snd{N'}{x}$ or $\alpha = \Rcv{N'}{M'}$ for some ground term $N'$, and {\onlykey} in $P_0$ and $\alpha$.

If $P_0 \ltrP{\alpha} A$, then $\tr{P_0} \ltrP{\tr{\alpha}} \tr{A'}$ and
$A \equiv A'$ for some $A'$ where {\onlykey} and, moreover,
\begin{itemize}
\item 
when $\alpha = \Res{x}\Snd{N'}{x}$, $A' = \CTX[\{\subst{M}{x}\}]$ 
where $\CTX$ is a closed plain evaluation context (with no active
substitutions and no variable restrictions) and $M$ is a ground term;

\item when $\alpha = \Rcv{N'}{M'}$, $A'$ is a plain process with
$\fv(A') \subseteq \fv(M')$.
\end{itemize}
\end{lemma}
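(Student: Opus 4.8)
The statement of Lemma~\ref{lem:ltrP-tr1} concerns restricted labelled transitions $\ltrP{\alpha}$ on closed plain processes, so the natural strategy is a structural induction on the derivation of $P_0 \ltrP{\alpha} A$, following the rules $\brn{In}'$, $\brn{Out-Var}'$, $\brn{Scope}'$, $\brn{Par}'$, and $\brn{Struct}'$. The plan is to first invoke Lemma~\ref{lem:closed-interm}\eqref{prop:closed-interm-ltrP} to obtain a derivation of $P_0 \ltrP{\alpha} A$ that is closed on the left---this is what justifies treating intermediate processes as closed throughout the induction, and it is available precisely because $\alpha$ is $\Res{x}\Snd{N'}{x}$ or $\Rcv{N'}{M'}$ with $N'$ ground. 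Actually, since the property being proved is not itself preserved under $\equivP$ in an obvious inductive way (the decomposition of $A$ into $\CTX[\{\subst{M}{x}\}]$ or a plain process is delicate), I expect to need the same kind of strengthening-the-invariant trick used in Lemma~\ref{lem:decomp-ltrP}: one defines a greatest-fixpoint predicate $\prop$ that bundles the conclusion together with the promise that it holds recursively, and proves $\prop$ by induction on derivations closed on the left, handling the $\brn{Struct}'$ case by a nested induction on the structural-equivalence derivation.

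The base cases are where the real content lives. For $\brn{In}'$: $P_0 = \Rcv{N'}{x'}.P$ and $A = P\{\subst{M'}{x'}\}$; here $\tr{P_0} = \Rcv{\tr{N'}}{x'}.\tr{P}$ since $k$ and $\Const{mac}$ do not occur in channels (channels carry ground terms, and the translation $\tr{\cdot}$ commutes with substitution for subterms not containing $k$ or $\Const{mac}$ as I will check), and by $\brn{In}'$ we get $\tr{P_0} \ltrP{\Rcv{\tr{N'}}{\tr{M'}}} \tr{P}\{\subst{\tr{M'}}{x'}\} = \tr{P\{\subst{M'}{x'}\}}$, taking $A' = A = P\{\subst{M'}{x'}\}$, which is a plain process with $\fv(A') \subseteq \fv(M')$ because $P_0$ is closed so $\fv(P) \subseteq \{x'\}$. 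For $\brn{Out-Var}'$: $P_0 = \Snd{N'}{M}.P$ and $A = P \parop \{\subst{M}{x}\}$ with $x \notin \fv(\Snd{N'}{M}.P)$; since $P_0$ is closed, $M$ is ground and $P$ is closed, so $A' = A = P \parop \{\subst{M}{x}\}$ already has the form $\CTX[\{\subst{M}{x}\}]$ with $\CTX = P \parop \hole$ a closed plain evaluation context, and $\tr{P_0} = \Snd{\tr{N'}}{\tr{M}}.\tr{P} \ltrP{\Res{x}\Snd{\tr{N'}}{x}} \tr{P} \parop \{\subst{\tr{M}}{x}\}$, which is $\tr{A'}$. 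In both cases ``{\onlykey}'' in $A'$ follows from the hypothesis on $P_0$ and $\alpha$ (and from the fact that the label $\Res{x}\Snd{N'}{x}$ or $\Rcv{N'}{M'}$ carries the only other occurrences of $k$, which land inside $M$ or $M'$).

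The inductive cases $\brn{Scope}'$, $\brn{Par}'$ should go through by pushing the context: in $\brn{Scope}'$, $P_0 = \Res{n}P$ with $n$ not in $\alpha$, and since channels of $\Const{mac}$-free relevant form are involved, $\tr{\Res{n}P} = \Res{n}\tr{P}$, so one applies the induction hypothesis to $P \ltrP{\alpha} A_0$ and wraps the resulting context/process with $\Res{n}$; one must check that prepending $\Res{n}$ to a closed plain evaluation context $\CTX$ keeps it a closed plain evaluation context, and that $n$ being restricted does not reintroduce $k$---it does not, since $k$ is a specific name distinct from bound $n$. In $\brn{Par}'$, $P_0 = P \parop Q$ with $\bv(\alpha) \cap \fv(Q) = \emptyset$; apply the IH to $P \ltrP{\alpha} A_0$, set $A' = A'_0 \parop Q$ in the output case (so $\CTX$ becomes $\CTX_0 \parop Q$, still a closed plain evaluation context since $Q$ is closed---here $P_0$ closed plus $P$ having a labelled transition forces $Q$ closed, or more carefully $Q$ closed because $P_0$ closed and the relevant free variable is $x \notin \fv(Q)$) and $A' = A'_0 \parop Q$ in the input case (plain process, and $\fv(A') \subseteq \fv(A'_0) \cup \fv(Q) = \fv(M') \cup \emptyset$). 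The main obstacle, as in Lemma~\ref{lem:decomp-ltrP}, will be the $\brn{Struct}'$ case: when $P_0 \equivP Q_0 \ltrP{\alpha} B \equiv A$, I must show that a structural equivalence applied to $P_0$ can be absorbed into a corresponding rearrangement of the context $\CTX$ (or of the plain process $A'$) while preserving the ``closed plain evaluation context'' shape and the ``$k$ only as $\Const{mac}$'' invariant; the tricky subcases are $\brn{Rewrite}'$ (where $\Sigma \vdash M_1 = M_2$ in $P_0$ must be transported to $\tr{P_0}$, which is exactly where Lemma~\ref{lem:tr-inj} is used), and the interaction of $\brn{New-Par}'$ with the decomposition $A' = \CTX[\{\subst{M}{x}\}]$, which requires the restricted name in $\brn{New-Par}'$ not to be $x$ (arranged by $\alpha$-renaming) and not to capture into $M$. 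I would set up the nested induction over the $\equivP$-derivation exactly as in the proof of Lemma~\ref{lem:decomp-ltrP}, reusing that pattern verbatim, and conclude by applying Lemma~\ref{lem:closed-interm}\eqref{prop:closed-interm-ltrP} to extract the closed-on-the-left derivation needed to launch the $\prop$-induction.
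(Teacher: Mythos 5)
Your base cases and your use of Lemma~\ref{lem:tr-inj} match the substance of the paper's argument, but your overall induction scheme has a genuine problem, and it is exactly the problem the paper's proof is organized to avoid. You propose to induct on the derivation of $P_0 \ltrP{\alpha} A$ and to handle $\brn{Struct}'$ by a nested induction on the derivation of $P_0 \equivP Q_0$, ``reusing the pattern of Lemma~\ref{lem:decomp-ltrP} verbatim.'' That pattern works for Lemma~\ref{lem:decomp-ltrP} because its predicate $\prop$ says nothing about $k$ or the translation. Here, your inductive invariant must assert something about $\tr{Q_0}$ for the intermediate processes $Q_0$ of the derivation --- and structural equivalence does not preserve the hypothesis that \onlykey{}. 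Concretely, $\brn{Rewrite}'$ can replace a term $M$ by any $N$ with $\Sigma \vdash M = N$, e.g.\ $N = \Const{fst}((M,k))$, so $Q_0$ may contain $k$ outside of $\Const{mac}(k,\cdot)$; then $\tr{Q_0}$ is not the process you want, Lemma~\ref{lem:tr-inj} is not applicable (it requires the invariant on \emph{both} sides of the equation), and your inductive hypothesis cannot be instantiated at $Q_0$. So the $\brn{Struct}'$ case of your induction does not go through as described.

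The paper's proof sidesteps this entirely: it first proves the syntax-directed decomposition Lemma~\ref{lem:decomp-ltrP} (which is where the $\prop$/nested-induction machinery lives, with no reference to $k$), and then proves the present lemma by induction on the \emph{syntax} of $P_0$, applying Lemma~\ref{lem:decomp-ltrP} at each step. Because that decomposition is phrased in terms of the literal subterms of $P_0$ ($P_0 = P \parop Q$, $P_0 = \Res{n}P$, $P_0 = \Rcv{N}{x}.P$, \dots), the hypothesis that \onlykey{} transfers directly to every process and term the argument touches, and Lemma~\ref{lem:tr-inj} is always applied to terms satisfying its precondition. (Your syntactic case analysis is also missing replication, which Lemma~\ref{lem:decomp-ltrP} supplies as a separate case, handled via $\Repl{\tr{P}} \equivP \tr{P} \parop \Repl{\tr{P}}$.) Replacing your derivation-induction by this syntax-induction-plus-decomposition is the fix; the content of your base and context cases can then be kept essentially as written.
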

\begin{proof}
We proceed by induction on the syntax of $P_0$ and apply Lemma~\ref{lem:decomp-ltrP}
to decompose $P_0 \ltrP{\alpha} A$, with the following cases:
\begin{enumerate}
\item $P_0 = P \parop Q$ and either $P \ltrP{\alpha} A'$ and $A \equiv A' \parop Q$, or $Q \ltrP{\alpha} A'$ and $A \equiv P \parop A'$, for some $P$, $Q$, and $A'$. 
In the first case, by induction hypothesis, 
$\tr{P} \ltrP{\tr{\alpha}} \tr{A''}$ and  $A' \equiv A''$ for some $A''$ where \onlykey.
By $\brn{Par}'$, since $\tr{Q}$ is closed, 
$\tr{P_0} = \tr{P} \parop \tr{Q} \ltrP{\tr{\alpha}} \tr{A''}\parop \tr{Q} = \tr{A'' \parop Q}$ and $A'' \parop Q \equiv A' \parop Q \equiv A$. 
Furthermore, {\onlykey} in $A'' \parop Q$. The second case is symmetric.

\item $P_0 = \Res{n} P$, $P \ltrP{\alpha} A'$, and $A \equiv \Res{n} A'$ for some $P$, $A'$, and $n$ that does not occur in $\alpha$. We rename $n$ so that $n \neq k$. 
By induction hypothesis, 
$\tr{P} \ltrP{\tr{\alpha}} \tr{A''}$ and $A' \equiv A''$ 
for some $A''$ where \onlykey.
By $\brn{Scope}'$, $\tr{P_0} = \Res{n} \tr{P} \ltrP{\alpha} \Res{n} \tr{A''} = \tr{\Res{n}A''}$ and $\Res{n}A'' \equiv \Res{n}A' \equiv A$. 
Furthermore, {\onlykey} in $\Res{n}A''$.

\item $P_0 = \Repl{P}$, $P \ltrP{\alpha} A'$, and $A \equiv A' \parop \Repl{P}$ for some $P$ and $A'$. 
By induction hypothesis, 
$\tr{P} \ltrP{\tr{\alpha}} \tr{A''}$ and $A' \equiv A''$ 
for some $A''$  where {\onlykey}. We have $\tr{P_0} = \Repl{\tr{P}} \equivP \tr{P} \parop \Repl{\tr{P}} \ltrP{\tr{\alpha}} \tr{A''} \parop \Repl{\tr{P}}$ by $\brn{Par}'$, since $\Repl{\tr{P}}$ is closed. Hence by $\brn{Struct}'$, $\tr{P_0} \ltrP{\tr{\alpha}} \tr{A'' \parop \Repl{P}}$ and $A'' \parop \Repl{P} \equiv A' \parop \Repl{P} \equiv A$. Furthermore, {\onlykey} in $A'' \parop \Repl{P}$.

\item $P_0 = \Rcv{N}{x}.P$, $\alpha = \Rcv{N'}{M'}$, $\Sigma \vdash N = N'$, and $A \equiv P\{\subst{M'}{x}\}$ for some $N$, $x$, $P$, $N'$, and $M'$.
By Lemma~\ref{lem:tr-inj}, $\Sigma \vdash \tr{N} = \tr{N'}$, so we have $\tr{P_0} = \Rcv{\tr{N}}{x}.\tr{P} \equivP \Rcv{\tr{N'}}{x}.\tr{P} \ltrP{\tr{\alpha}} \tr{P}\{\subst{\tr{M'}}{x}\}$ by $\brn{In}'$. Since {\onlykey} in $M'$, the substitution $\tr{P}\{\subst{\tr{M'}}{x}\}$ does not create new occurrences of $\Const{mac}$ with key $k$, so $\tr{P}\{\subst{\tr{M'}}{x}\} = \tr{P\{\subst{M'}{x}\}}$. By $\brn{Struct}'$, we obtain $\tr{P_0} \ltrP{\tr{\alpha}} \tr{P\{\subst{M'}{x}\}}$, and we have $A \equiv P\{\subst{M'}{x}\}$. Furthermore, {\onlykey} in $P\{\subst{M'}{x}\}$.

\item $P_0 = \Snd{N}{M}.P$, $\alpha = \Res{x}\Snd{N'}{x}$, $\Sigma \vdash N = N'$, $x \notin \fv(P_0)$, and $A \equiv P \parop \{\subst{M}{x}\}$ for some $N$, $M$, $P$, $x$, and $N'$. By Lemma~\ref{lem:tr-inj}, $\Sigma \vdash \tr{N} = \tr{N'}$, so we have $\tr{P_0} = \Snd{\tr{N}}{\tr{M}}.\tr{P} \equivP \Snd{\tr{N'}}{\tr{M}}.\tr{P} \ltrP{\tr{\alpha}} \tr{P} \parop \{\subst{\tr{M}}{x}\} = \tr{P \parop \{\subst{M}{x}\}}$ by $\brn{Out-Var}'$. By $\brn{Struct}'$, we obtain $\tr{P_0} \ltrP{\tr{\alpha}} \tr{P \parop \{\subst{M}{x}\}}$, and we have $A \equiv P \parop \{\subst{M}{x}\}$. Furthermore, {\onlykey} in $P \parop \{\subst{M}{x}\}$.
\qedhere
\end{enumerate}
\end{proof}

\begin{lemma}\label{lem:redP-tr1}
If $P_0 \redP R$ for some closed process $P_0$ where \onlykey,
then $\tr{P_0} \redP \tr{R'}$ 
and $R' \equiv R$ for some closed process $R'$ where {\onlykey}.
\end{lemma}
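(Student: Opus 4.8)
The plan is to prove this by induction, decomposing the reduction $P_0 \redP R$ with Lemma~\ref{lem:decomp-redP-closed} and treating each of its four cases. Throughout I would use that $\tr{\cdot}$ commutes, by its very definition, with parallel composition, name restriction (renaming the restricted name away from $k$), replication, and the operation of filling an evaluation context; that $\Sigma \vdash M = N$ iff $\Sigma \vdash \tr{M} = \tr{N}$ on terms where \onlykey (Lemma~\ref{lem:tr-inj}); that $\tr{P\{\subst{M}{x}\}} = \tr{P}\{\subst{\tr{M}}{x}\}$ whenever \onlykey in $M$ (substituting such an $M$ creates no new $\Const{mac}(k,\cdot)$ subterm beyond those already expanded); and that $\redP$ is closed under $\equivP$ and under evaluation contexts. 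The invariant "\onlykey" is preserved by $\parop$, $\nu$, $\Repl{\cdot}$, by passing to a branch of a conditional, and by a message substitution whose message satisfies \onlykey, so checking it in $R'$ in each case is immediate.

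The easy cases are the conditional and the two ``structural'' ones. If $P_0 = \IfThenElse{M}{N}{P}{Q}$ with $\Sigma \vdash M = N$ and $R \equiv P$, then by Lemma~\ref{lem:tr-inj} $\Sigma \vdash \tr{M} = \tr{N}$, so $\tr{P_0} \redP \tr{P}$ (via $\brn{Rewrite}'$ inside the $\equivP$-closure, then $\brn{Then}'$), and I take $R' = P$; the $\brn{Else}'$ sub-case is symmetric, using that $M,N$ ground implies $\tr{M},\tr{N}$ ground. If $P_0 = P \parop Q$ with $P \redP P'$ and $R \equiv P' \parop Q$, the induction hypothesis gives $\tr{P} \redP \tr{P''}$ with $P'' \equiv P'$ (closed, \onlykey), and then $\tr{P_0} = \tr{P}\parop\tr{Q} \redP \tr{P''}\parop\tr{Q} = \tr{P'' \parop Q}$, so $R' = P'' \parop Q$; the case $P_0 = \Res{n}P$ is analogous. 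For $P_0 = \Repl{P}$, Lemma~\ref{lem:decomp-redP-closed} hands back a reduction $P \parop P \redP Q'$, which forces the induction to run on a well-founded measure rather than on plain syntax (as in the proof of Lemma~\ref{lem:decomp-redP}); I would either mirror that greatest-fixpoint argument, or simply apply Lemma~\ref{lem:decomp-redP-closed} once more to $P \parop P \redP Q'$, so that it bottoms out either on a reduction of the strict subprocess $P$ (handled by the induction hypothesis) or on a communication between the two copies of $P$ (handled as below).

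The communication case is the one real obstacle. Here $P_0 = P \parop Q$ with $P \ltrP{\Rcv{N}{x}} A$, $Q \ltrP{\Res{x}\Snd{N}{x}} B$ ($N$ ground), and $R \equiv \Res{x}(A \parop B)$. I would apply Lemma~\ref{lem:ltrP-tr1} to each component, obtaining $\tr{P} \ltrP{\Rcv{\tr{N}}{x}} \tr{A'}$ with $A' \equiv A$ ($A'$ plain, \onlykey) and $\tr{Q} \ltrP{\Res{x}\Snd{\tr{N}}{x}} \tr{B'}$ with $B' \equiv B$, where $B' = \CTX[\{\subst{M}{x}\}]$ for a closed plain evaluation context $\CTX$ and a ground term $M$ (and \onlykey in $B'$, hence in $M$). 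Since $\tr{P},\tr{Q}$ are closed and $\tr{N}$ is ground, Lemma~\ref{lem:comp-ltr} gives $\tr{P_0} = \tr{P}\parop\tr{Q} \redP R_1$ for some plain $R_1 \equiv \Res{x}(\tr{A'}\parop\tr{B'})$. The delicate step is to recognise $R_1$ as $\tr{R'}$ (up to $\equivP$) for a suitable closed plain process $R'$: I would take $R' = (A' \parop \CTX[\nil])\{\subst{M}{x}\}$ — the plain process obtained from $\Res{x}(A'\parop B')$ by applying $\brn{Subst}$ and $\brn{Alias}$ (valid since $M$ is ground, so $x \notin \fv(M)$). Then $R' \equiv \Res{x}(A'\parop B') \equiv R$, \onlykey in $R'$, and pushing $\tr{\cdot}$ through the substitution (legitimate because \onlykey in $M$) and through the same $\brn{Alias}/\brn{Subst}$ manipulation on the translated side yields $\tr{R'} \equiv \Res{x}(\tr{A'}\parop\tr{B'}) \equiv R_1$, whence $\tr{P_0}\redP\tr{R'}$ by closure of $\redP$ under $\equivP$. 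The whole argument is routine but bookkeeping-heavy; the point that really needs the hypothesis ``$k$ occurs only as MAC key'' is exactly that the transmitted message $M$ satisfies it, so that the translation commutes with the message substitution it triggers.
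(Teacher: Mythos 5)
Your route is essentially the paper's: decompose $P_0 \redP R$ (the paper uses Lemma~\ref{lem:decomp-redP}, you use the closed variant, which is fine), handle conditionals via Lemma~\ref{lem:tr-inj}, handle the context cases by the induction hypothesis, handle $\Repl P$ by a weighted measure (the paper counts replication double in its size function) or by re-decomposing $P \parop P \redP Q'$, and handle communication via Lemmas~\ref{lem:ltrP-tr1} and~\ref{lem:comp-ltr}. The one step that fails is your choice of witness in the communication case. You set $R' = (A' \parop \CTX[\nil])\{\subst{M}{x}\}$, which, since $\CTX[\nil]$ is closed, is just $A'\{\subst{M}{x}\} \parop \CTX[\nil]$. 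But Lemma~\ref{lem:ltrP-tr1} only guarantees that $\CTX$ has no active substitutions and no variable restrictions; it may contain \emph{name} restrictions, and the emitted term $M$ may mention exactly those names---this is the scope-extrusion situation that the applied pi calculus is built to handle. Concretely, take $Q = \Res{n}\Snd{N}{n}$, so that $B' = \Res{n}(\nil \parop \{\subst{n}{x}\})$, i.e.\ $\CTX = \Res{n}(\nil \parop \hole)$ and $M = n$, and take $A' = \Snd{c}{x}$. Then $R \equiv \Res{x}(A' \parop B') \equiv \Res{n}\Snd{c}{n}$, whereas your $R'$ is $\Snd{c}{n} \parop \Res{n}\nil \equiv \Snd{c}{n}$ with $n$ \emph{free}; these have different free names, so $R' \not\equiv R$. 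The manipulation you invoke (``apply \rulename{Subst} and \rulename{Alias}'') silently requires first extruding the restrictions of $\CTX$ over $A'$, and once $x$ has been eliminated those restrictions cannot be pushed back past $A'\{\subst{M}{x}\}$, because $M$ may contain the restricted names.

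The repair is exactly the paper's construction: rename the bound names of $\CTX$ so that they do not occur in $A'$, and take $R' = \CTX[A'\{\subst{M}{x}\}]$, placing the substituted receiver \emph{inside} the sender's residual context so that the restrictions of $\CTX$ continue to bind the occurrences of those names introduced by substituting $M$. With this choice everything else you wrote goes through unchanged: $R' \equiv \Res{x}(A' \parop B') \equiv R$, {\onlykey} in $R'$, and $\tr{R'} = \tr{\CTX}[\tr{A'}\{\subst{\tr{M}}{x}\}] \equiv \Res{x}(\tr{A'} \parop \tr{B'}) \equiv R_1$, since $M$ is ground and {\onlykey} in $M$, so the translation commutes with the substitution. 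The remaining cases of your proof match the paper's.
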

\begin{proof}
We define the size of processes by induction on the syntax, such that $\size(\Repl{P}) = 1 + 2 \times \size(P)$ and, when $P$ is not a replication, $\size(P)$ is one plus the size of the immediate subprocesses of $P$.
We proceed by induction on the size of $P_0$. 
By Lemma~\ref{lem:decomp-redP}, we decompose $P_0 \redP R$, with the following cases:
\begin{enumerate}
\item $P_0 = P \parop Q$ for some $P$ and $Q$, and one of the following cases holds:
\begin{enumerate}
\item $P \redP P'$ and $R \equiv P' \parop Q$ for some $P'$,
\item $P \ltrP{\Rcv{N}{x}} A$, $Q \ltrP{\Res{x}\Snd{N}{x}} B$, and $R \equiv \Res{x}(A \parop B)$ for some $A$, $B$, $x$, and ground term $N$, 
\end{enumerate}
and two symmetric cases obtained by swapping $P$ and $Q$.

In case (a), by induction hypothesis, $\tr{P} \redP \tr{P''}$ and $P'' \equiv P'$ for some closed process $P''$ where \onlykey.
Hence $\tr{P_0} = \tr{P} \parop \tr{Q} \redP \tr{P''} \parop \tr{Q} = \tr{P'' \parop Q}$ and $P'' \parop Q \equiv P' \parop Q \equiv R$. Furthermore, {\onlykey} in $P'' \parop Q$.

In case (b), by Lemma~\ref{lem:ltrP-tr1}, 
$\tr{P} \ltrP{\Rcv{\tr{N}}{x}} \tr{P_1}$ and $A \equiv P_1$ for some $P_1$ where {\onlykey} and $\fv(P_1) \subseteq \{ x\}$;
and $\tr{Q} \ltrP{\Res{x}\Snd{\tr{N}}{x}} \tr{B'}$ for some $B' = \CTX_2[\{\subst{M_2}{x}\}]$ such that $B \equiv B'$, {\onlykey} in $B'$, $\CTX_2$ is a closed plain evaluation context and $M_2$ is a ground term. 
By Lemma~\ref{lem:comp-ltr}, 
$\tr{P_0} = \tr{P} \parop \tr{Q} \redP R'$ and $R' \equiv \Res{x}(\tr{P_1} \parop \tr{B'}) = \Res{x}(\tr{P_1} \parop \tr{\CTX_2}[\{\subst{\tr{M_2}}{x}\}])$ for some $R'$. 
We rename the bound names of $\CTX_2$ so that they do not occur in $P_1$.
Let $R'' = \CTX_2[P_1\{\subst{M_2}{x}\}]$.
The process $R''$ is closed and such that {\onlykey}.
We have $R' \equiv \tr{R''}$, so $\tr{P_0}\redP \tr{R''}$ and
$R'' \equiv \Res{x}(P_1 \parop B') \equiv \Res{x}(A \parop B) \equiv R$. 
The last two cases are symmetric.

\item $P_0 = \Res{n}P$, $P \redP Q'$, and $R \equiv \Res{n} Q'$ for some $n$, $P$, and $Q'$. We rename $n$ so that $n \neq k$. 
By induction hypothesis, 
$\tr{P} \redP \tr{Q''}$ and $Q' \equiv Q''$ for some closed process $Q''$ where \onlykey.
Hence $\tr{P_0} = \Res{n}\tr{P} \redP \Res{n}\tr{Q''} = \tr{\Res{n}Q''}$ and $\Res{n}Q'' \equiv \Res{n}Q' \equiv R$. Furthermore, {\onlykey} in $\Res{n}Q''$.

\item $P_0 = \Repl{P}$, $P \parop P \redP Q'$, and $R \equiv Q' \parop \Repl{P}$ for some $P$ and $Q'$. 
By induction hypothesis, 
$\tr{P \parop P} \redP \tr{Q''}$ and $Q' \equiv Q''$ for some closed process $Q''$ where \onlykey. 
Hence $\tr{P_0} = \Repl{\tr{P}} \equivP \tr{P} \parop \tr{P} \parop \Repl{\tr{P}} \redP \tr{Q''} \parop \Repl{\tr{P}} = \tr{Q'' \parop \Repl{P}}$ and $Q'' \parop \Repl{P} \equiv Q' \parop \Repl{P} \equiv R$. Furthermore, {\onlykey} in $Q'' \parop \Repl{P}$.

\item $P_0 = \IfThenElse{M}{N}{P}{Q}$ and either $\Sigma \vdash M = N$ and $R \equiv P$, or $\Sigma \vdash M \neq N$ and $R \equiv Q$, for some $M$, $N$, $P$, and $Q$. 

In the first case, by Lemma~\ref{lem:tr-inj}, $\Sigma \vdash \tr{M} = \tr{N}$, so $\tr{P_0} = \kw{if}$ $\tr{M} = \tr{N}$ $\kw{then}$ $\tr{P}$ $\kw{else}$ $\tr{Q} \redP \tr{P}$ and we know that $P \equiv R$ and {\onlykey} in $P$. 
In the second case, by Lemma~\ref{lem:tr-inj}, $\Sigma \vdash \tr{M} \neq \tr{N}$, so $\tr{P_0} = \IfThenElse{\tr{M}}{\tr{N}}{\tr{P}}{\tr{Q}} \redP \tr{Q}$ and we know that $Q \equiv R$ and {\onlykey} in $Q$.
\qedhere
\end{enumerate}
\end{proof}

\begin{lemma}\label{lem:ltrP-tr2}
Suppose that $P_0$ is closed, $\alpha = \Res{x}\Snd{N'}{x}$ or $\alpha = \Rcv{N'}{M'}$ for some ground term $N'$, and {\onlykey} in $P_0$ and $\alpha$.

If $\tr{P_0} \ltrP{\tr{\alpha}} A$, then $P_0 \ltrP{\alpha} A'$ and $A \equiv \tr{A'}$
for some $A'$ where \onlykey. 
Furthermore, when $\alpha = \Res{x}\Snd{N'}{x}$, $A' = \CTX[\{\subst{M}{x}\}]$ where $\CTX$ is a closed plain evaluation context and $M$ is a ground term, and when $\alpha = \Rcv{N'}{M'}$, $A'$ is a plain process with $\fv(A') \subseteq \fv(M')$.
\end{lemma}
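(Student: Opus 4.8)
The plan is to mirror the proof of Lemma~\ref{lem:ltrP-tr1}, arguing by induction on the syntax of $P_0$ and decomposing the transition $\tr{P_0} \ltrP{\tr{\alpha}} A$ with Lemma~\ref{lem:decomp-ltrP}. That lemma applies: $\tr{P_0}$ is closed, since the translation introduces no free variables, so $\fv(\tr{P_0}) = \fv(P_0) = \emptyset$; and $\tr{\alpha}$ is of the required shape, with a ground channel term, because $\tr{\cdot}$ maps a ground term to a ground term ($k$ is a name and no variables are introduced). Since $\tr{\cdot}$ is homomorphic on the process constructors (with $\Res{k}$, if it occurs bound inside $P_0$, renamed away as usual), the top-level structure of $\tr{P_0}$ coincides with that of $P_0$, so $P_0$ is a plain process and each of the five cases of Lemma~\ref{lem:decomp-ltrP} arises from a matching case for $P_0$.

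In the two base cases, where the transition fires at the head of $P_0$, I would use Lemma~\ref{lem:tr-inj} to transfer the channel equality $\Sigma \vdash \tr{N} = \tr{N'}$ back to $\Sigma \vdash N = N'$ (the hypotheses of Lemma~\ref{lem:tr-inj} hold since {\onlykey} in $P_0$ and in $\alpha$, hence in $N$, $N'$, and $M'$). For $P_0 = \Rcv{N}{x}.P$ and $\tr{\alpha} = \Rcv{\tr{N'}}{\tr{M'}}$, this gives $P_0 \equivP \Rcv{N'}{x}.P \ltrP{\alpha} P\{\subst{M'}{x}\}$ by $\brn{In}'$ and $\brn{Struct}'$, with $A \equiv \tr{P}\{\subst{\tr{M'}}{x}\} = \tr{P\{\subst{M'}{x}\}}$; here $P\{\subst{M'}{x}\}$ is a plain process with $\fv(P\{\subst{M'}{x}\}) \subseteq \fv(M')$ because $P_0$ closed forces $\fv(P) \subseteq \{x\}$. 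For $P_0 = \Snd{N}{M}.P$ and $\tr{\alpha} = \Res{x}\Snd{\tr{N'}}{x}$ with $x \notin \fv(\tr{P_0}) = \fv(P_0)$, it gives $P_0 \ltrP{\alpha} P \parop \{\subst{M}{x}\}$, with $A \equiv \tr{P} \parop \{\subst{\tr{M}}{x}\} = \tr{P \parop \{\subst{M}{x}\}}$, and $P \parop \{\subst{M}{x}\} = \CTX[\{\subst{M}{x}\}]$ with $\CTX = P \parop \hole$ a closed plain evaluation context and $M$ ground. The key elementary fact used here is that $\tr{\cdot}$ commutes with substitution of a term satisfying {\onlykey}, i.e.\ $\tr{P}\{\subst{\tr{M'}}{x}\} = \tr{P\{\subst{M'}{x}\}}$ when {\onlykey} in $M'$; this holds because such an $M'$ introduces no bare occurrence of $k$ and hence no new $\Const{mac}(k,\cdot)$ redex after substitution. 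I would record this as a small separate observation (it is the same fact the proof of Lemma~\ref{lem:ltrP-tr1} uses implicitly).

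In the three inductive cases I would apply the induction hypothesis to the premise transition and reassemble: $\brn{Par}'$ for $P_0 = P \parop Q$ (the side condition $\bv(\tr{\alpha}) \cap \fv(\tr{Q}) = \emptyset$ is automatic since $Q$, hence $\tr{Q}$, is closed), $\brn{Scope}'$ for $P_0 = \Res{n}P$ (after renaming $n \neq k$, so that $\tr{\Res{n}A'''} = \Res{n}\tr{A'''}$), and $\brn{Repl}'$ combined with $\brn{Par}'$ and $\brn{Struct}'$ for $P_0 = \Repl P$. In each case $A \equiv \tr{A'}$ is inherited from the induction hypothesis through the translation of the surrounding wrapper, the invariant {\onlykey} is maintained exactly as in Lemma~\ref{lem:ltrP-tr1}, and the auxiliary structural claims are preserved because $Q \parop \hole$, $\Repl P \parop \hole$, and $\Res{n}\_$ wrap a closed plain evaluation context (resp.\ a plain process) without placing the hole under a replication, input, or output. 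I do not expect a genuine obstacle; the one point requiring care is the commutation of $\tr{\cdot}$ with substitution in the input base case, which I would isolate and dispatch first.
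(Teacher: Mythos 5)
Your proposal matches the paper's proof essentially step for step: structural induction on $P_0$, decomposition of the transition of $\tr{P_0}$ via Lemma~\ref{lem:decomp-ltrP}, transfer of the channel equality back through Lemma~\ref{lem:tr-inj} in the input/output base cases (together with the commutation of $\tr{\cdot}$ with substitution of a term in which $k$ occurs only under $\Const{mac}$), and reassembly with $\brn{Par}'$, $\brn{Scope}'$, and $\brn{Struct}'$ in the inductive cases. The only difference is presentational: you isolate the substitution-commutation observation as a separate fact, which the paper uses inline.
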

\begin{proof}
We proceed by structural induction on $P_0$, with the following cases:
\begin{itemize}
\item $P_0 = P \parop Q$. Then $\tr{P_0} = \tr{P} \parop \tr{Q}$, so by Lemma~\ref{lem:decomp-ltrP}, either $\tr{P} \ltrP{\tr{\alpha}} A'$ and $A \equiv A' \parop \tr{Q}$, or $\tr{Q} \ltrP{\tr{\alpha}} A'$ and $A \equiv \tr{P} \parop A'$, for some $A'$. In the first case, by induction hypothesis, 
$P \ltrP{\alpha} A''$ and $A' \equiv \tr{A''}$ for some $A''$ where \onlykey. 
By $\brn{Par}'$, since $Q$ is closed, we have $P_0 = P \parop Q \ltrP{\alpha} A''\parop Q$ and $\tr{A'' \parop Q} \equiv A' \parop \tr{Q} \equiv A$. 
Furthermore, {\onlykey} in $A'' \parop Q$. The second case is symmetric.

\item $P_0 = \Res{n} P$. We rename $n$ so that $n \neq k$ and $n$ does not occur in $\alpha$. Then $\tr{P_0} = \Res{n}\tr{P}$, so by Lemma~\ref{lem:decomp-ltrP}, 
we have $\tr{P} \ltrP{\tr{\alpha}} A'$ and $A \equiv \Res{n} A'$ for some $A'$. 
By induction hypothesis, $P \ltrP{\alpha} A''$ and $A' \equiv \tr{A''}$
for some $A''$ where \onlykey.
By $\brn{Scope}'$, $P_0 = \Res{n} P \ltrP{\alpha} \Res{n} A''$ and $\tr{\Res{n}A''} = \Res{n}\tr{A''} \equiv \Res{n}A' \equiv A$. Furthermore, {\onlykey} in $\Res{n}A''$.

\item $P_0 = \Repl{P}$. Then $\tr{P_0} = \Repl{\tr{P}}$, so by Lemma~\ref{lem:decomp-ltrP}, we have $\tr{P} \ltrP{\tr{\alpha}} A'$, and $A \equiv A' \parop \Repl{\tr{P}}$ for some $A'$. 
By induction hypothesis, $P \ltrP{\alpha} A''$ and $A' \equiv \tr{A''}$ 
for some $A''$ where \onlykey. 
We have $P_0 = \Repl{P} \equivP P \parop \Repl{P} \ltrP{\alpha} A'' \parop \Repl{P}$ by $\brn{Par}'$, since $\Repl{P}$ is closed. Hence by $\brn{Struct}'$, $P_0 \ltrP{\alpha} A'' \parop \Repl{P}$ and $\tr{A'' \parop \Repl{P}} \equiv A' \parop \Repl{\tr{P}} \equiv A$. Furthermore, {\onlykey} in $A'' \parop \Repl{P}$.

\item $P_0 = \Rcv{N}{x}.P$. Then $\tr{P_0} = \Rcv{\tr{N}}{x}.\tr{P}$, so by Lemma~\ref{lem:decomp-ltrP}, we have $\tr{\alpha} = \Rcv{N'}{M'}$, $\Sigma \vdash \tr{N} = N'$, and $A \equiv \tr{P}\{\subst{M'}{x}\}$ for some $N'$ and $M'$.
Hence $\alpha = \Rcv{N''}{M''}$, $N' = \tr{N''}$, and $M' = \tr{M''}$ for some $N''$ and $M''$. We have $\Sigma \vdash \tr{N} = \tr{N''}$, so
by Lemma~\ref{lem:tr-inj}, $\Sigma \vdash N = N''$, so we have
$P_0 = \Rcv{N}{x}.P \equivP \Rcv{N''}{x}.P \ltrP{\alpha} P\{\subst{M''}{x}\}$ by $\brn{In}'$. 
By $\brn{Struct}'$, we obtain $P_0 \ltrP{\alpha} P\{\subst{M''}{x}\}$.
The name {\onlykey} in $M''$, so the substitution $\tr{P}\{\subst{\tr{M''}}{x}\}$ does not create new occurrences of $\Const{mac}(k,\cdot)$, so $\tr{P\{\subst{M''}{x}\}} = \tr{P}\{\subst{\tr{M''}}{x}\} = \tr{P}\{\subst{M'}{x}\} \equiv A$. Furthermore, {\onlykey} in $P\{\subst{M''}{x}\}$.

\item $P_0 = \Snd{N}{M}.P$. Then $\tr{P_0} = \Snd{\tr{N}}{\tr{M}}.\tr{P}$, so by Lemma~\ref{lem:decomp-ltrP}, $\tr{\alpha} = \Res{x}\Snd{N'}{x}$, $\Sigma \vdash \tr{N} = N'$, $x \notin \fv(\tr{P_0}) = \fv(P_0)$, and $A \equiv \tr{P} \parop \{\subst{\tr{M}}{x}\}$ for some $x$ and $N'$. Hence $\alpha = \Res{x}\Snd{N''}{x}$ and $N' = \tr{N''}$ for some $N''$. We have $\Sigma \vdash \tr{N} = \tr{N''}$, so by Lemma~\ref{lem:tr-inj}, $\Sigma \vdash N = N''$, so we have $P_0 = \Snd{N}{M}.P \equivP \Snd{N''}{M}.P \ltrP{\alpha} P \parop \{\subst{M}{x}\}$ by $\brn{Out-Var}'$. Hence by $\brn{Struct}'$, we obtain $P_0 \ltrP{\alpha} P \parop \{\subst{M}{x}\}$, and we have $\tr{P \parop \{\subst{M}{x}\}} = \tr{P} \parop \{\subst{\tr{M}}{x}\} \equiv A$. Furthermore, {\onlykey} in $P \parop \{\subst{M}{x}\}$.

\item $P_0$ is neither $\nil$ nor a conditional, because by Lemma~\ref{lem:decomp-ltrP}, $\tr{P_0}$ would not have a labelled transition.
\qedhere
\end{itemize}
\end{proof}

\begin{lemma}\label{lem:ltrP-alpha-is-tr}
Suppose that $P_0$ is a closed process where \onlykey.
If $\tr{P_0} \ltrP{\alpha} A$
with $\alpha = \Res{x}\Snd{N'}{x}$ or $\alpha = \Rcv{N'}{M'}$, 
then $\Sigma \vdash N' = \tr{N}$
for some ground term $N$ where {\onlykey}.
\end{lemma}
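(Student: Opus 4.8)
The plan is to exhibit an invariant on processes that $\tr{P_0}$ satisfies, that is preserved by structural equivalence $\equivP$, and that immediately delivers the conclusion for the channel of any acting prefix. Call a term $T$ \emph{good} if $\Sigma \vdash T = \tr{N}$ for some ground term $N$ in which $k$ occurs only as $\mac{k}{\cdot}$; by transitivity, goodness is closed under $\Sigma$-equality, and the statement to prove is exactly that $N'$ is good. Say that a process $Q$ \emph{has good channels} if every term occurring as the channel of an input or output prefix anywhere in $Q$---including under restrictions and replications---is good; throughout, I work with representatives in which $k$ is not among the names bound by restrictions.

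First I would check that $\tr{P_0}$ has good channels. The translation $\tr{\cdot}$ only rewrites subterms of the form $\mac{k}{M}$ into $\fbin{k}{\hbin{k}{M}}$ and leaves the process structure untouched, so the prefixes of $\tr{P_0}$ are precisely the translations of the prefixes of $P_0$: whenever $\Rcv{N}{x}.P$ (respectively $\Snd{N}{M}.P$) occurs in $P_0$, the prefix $\Rcv{\tr{N}}{x}.\tr{P}$ (respectively $\Snd{\tr{N}}{\tr{M}}.\tr{P}$) occurs in $\tr{P_0}$, and its channel $\tr{N}$ is good with witness $N$, since $N$ is ground ($P_0$ being a closed plain process has no free variables) and $k$ occurs in $N$ only as $\mac{k}{\cdot}$ (because it does so in $P_0$).

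Next I would show that ``has good channels'' is preserved by $\equivP$, by induction on the derivation, in the style of Lemma~\ref{lem:p-notin-fnA-preserved}. The rules for parallel composition, restriction, and replication, as well as closure under evaluation contexts, leave the family of prefixes unchanged (up to copying, in the case of $\brn{Repl}'$), so they are immediate. The only delicate rule is $\brn{Rewrite}'$, where $P\{\subst{M}{x}\} \equivP P\{\subst{N}{x}\}$ with $\Sigma \vdash M = N$: a prefix of $P$ with channel $C$ induces the channels $C\{\subst{M}{x}\}$ and $C\{\subst{N}{x}\}$ on the two sides, and these are $\Sigma$-equal because the equational theory is a congruence closed under substitution of terms for variables; hence, as goodness is closed under $\Sigma$-equality, one side has good channels iff the other does.

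Finally, given $\tr{P_0} \ltrP{\alpha} A$ with $\alpha$ of channel $N'$, the left-to-right direction of Lemma~\ref{lem:caract-ltrP} yields $\tr{P_0} \equivP \Res{\vect n}(P_1 \parop P_2)$ with $P_1 = \Rcv{N'}{x}.P'$ or $P_1 = \Snd{N'}{M}.P'$; in particular $N'$ is the channel of a prefix of $\Res{\vect n}(P_1 \parop P_2)$. Since $\tr{P_0}$ has good channels and, by the previous step, so does every process $\equivP$-equivalent to it, the process $\Res{\vect n}(P_1 \parop P_2)$ has good channels, whence $N'$ is good, i.e.\ $\Sigma \vdash N' = \tr{N}$ for some ground term $N$ in which $k$ occurs only as $\mac{k}{\cdot}$, as required. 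I expect the main obstacle to be the bookkeeping in the $\brn{Rewrite}'$ case of the preservation argument---verifying carefully that substituting $\Sigma$-equal terms for a variable changes the family of prefix channels only up to $\Sigma$-equality, and that no rule of $\equivP$ produces a genuinely new prefix whose channel is not already matched (up to $\Sigma$-equality) by a prefix on the other side.
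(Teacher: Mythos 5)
There is a genuine gap in the first step. Your invariant quantifies over \emph{every} input/output prefix ``anywhere in $Q$'', including prefixes guarded by other inputs, and you justify goodness of each channel by arguing that the channel $N$ is ground ``since $P_0$ being a closed plain process has no free variables''. That inference is wrong: closedness of $P_0$ forbids free variables, but a channel occurring under an input binder may contain the bound variable. For instance $P_0 = \Rcv{a}{x}.\Snd{x}{M}$ is closed and satisfies the hypothesis of the lemma, yet the inner prefix has channel $x$, which is not ground and is not $\Sigma$-equal to $\tr{N}$ for any ground $N$ (if it were, closure of the theory under substitution would force all terms of that sort to be equal, contradicting non-triviality). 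So $\tr{P_0}$ does \emph{not} satisfy your invariant as stated, and the whole argument collapses at its base case.

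The repair is to restrict the invariant to \emph{unguarded} prefixes, i.e.\ those not in the scope of any input prefix: for such a prefix of a closed plain process the channel really is ground (restrictions bind only names), the hypothesis on $k$ gives goodness of $\tr{N}$, and none of the rules of $\equivP$ moves a prefix out from under an input, so preservation still goes through (your treatment of $\brn{Rewrite}'$ is correct, and goodness being closed under $\Sigma$-equality also absorbs the free variables that $\brn{Rewrite}'$ can introduce). With that restriction your route via Lemma~\ref{lem:caract-ltrP} is sound, but note that it then becomes essentially the paper's proof in disguise: the paper performs a structural induction on $P_0$ using Lemma~\ref{lem:decomp-ltrP}, which descends only through parallel composition, restriction, and replication and therefore only ever inspects unguarded prefixes --- exactly the set your corrected invariant tracks.
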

\begin{proof}
We proceed by structural induction on $P_0$, with the following cases:
\begin{itemize}
\item $P_0 = P \parop Q$. Then $\tr{P_0} = \tr{P} \parop \tr{Q}$, so by Lemma~\ref{lem:decomp-ltrP}, either $\tr{P} \ltrP{\alpha} A'$ and $A \equiv A' \parop \tr{Q}$, or $\tr{Q} \ltrP{\alpha} A'$ and $A \equiv \tr{P} \parop A'$, for some $A'$. In both cases, the result follows immediately from the induction hypothesis.

\item $P_0 = \Res{n} P$. We rename $n$ so that $n \neq k$ and $n$ does not occur in $\alpha$. Then $\tr{P_0} = \Res{n}\tr{P}$, so by Lemma~\ref{lem:decomp-ltrP}, $\tr{P} \ltrP{\alpha} A'$, and $A \equiv \Res{n} A'$ for some $A'$.
The result follows immediately from the induction hypothesis.

\item $P_0 = \Repl{P}$. Then $\tr{P_0} = \Repl{\tr{P}}$, so by Lemma~\ref{lem:decomp-ltrP}, $\tr{P} \ltrP{\alpha} A'$, and $A \equiv A' \parop \Repl{\tr{P}}$ for some $A'$. The result follows immediately from the induction hypothesis.

\item $P_0 = \Rcv{N}{x}.P$. Then $\tr{P_0} = \Rcv{\tr{N}}{x}.\tr{P}$, so by Lemma~\ref{lem:decomp-ltrP}, $\alpha = \Rcv{N'}{M'}$, $\Sigma \vdash \tr{N} = N'$, and $A \equiv \tr{P}\{\subst{M'}{x}\}$ for some $N'$ and $M'$.
Moreover, since $N$ occurs in $P_0$, $N$ is ground and {\onlykey} 
in $N$, so the result holds.

\item $P_0 = \Snd{N}{M}.P$. Then $\tr{P_0} = \Snd{\tr{N}}{\tr{M}}.\tr{P}$, so by Lemma~\ref{lem:decomp-ltrP}, $\alpha = \Res{x}\Snd{N'}{x}$, $\Sigma \vdash \tr{N} = N'$, $x \notin \fv(\tr{P_0}) = \fv(P_0)$, and $A \equiv \tr{P} \parop \{\subst{\tr{M}}{x}\}$ for some $x$ and $N'$. 
Moreover, since $N$ occurs in $P_0$, $N$ is ground and {\onlykey} in $N$, so
the result holds.

\item $P_0$ is neither $\nil$ nor a conditional, because by Lemma~\ref{lem:decomp-ltrP}, $\tr{P_0}$ would not have a labelled transition.
\qedhere
\end{itemize}
\end{proof}

\begin{lemma}\label{lem:ltrP-eq-alpha}
If $P \ltrP{\alpha} A$ and $\Sigma \vdash \alpha = \alpha'$, then
$P \ltrP{\alpha'} A$.
\end{lemma}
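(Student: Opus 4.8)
The plan is to reduce to the alternative characterisation of restricted labelled transitions provided by Lemma~\ref{lem:caract-ltrP}, which isolates exactly the terms that occur in the label and thus makes rewriting them straightforward. First I would unfold $P \ltrP{\alpha} A$ via Lemma~\ref{lem:caract-ltrP}: there exist $\vect n$, $P_1$, $P_2$, $A_1$, $N$, $M$, $P'$, $x$ with $P \equivP \Res{\vect n}(P_1 \parop P_2)$, $A \equiv \Res{\vect n}(A_1 \parop P_2)$, $\{\vect n\} \cap \fn(\alpha) = \emptyset$, $\bv(\alpha) \cap \fv(P_1 \parop P_2) = \emptyset$, and either (input case) $\alpha = \Rcv{N}{M}$, $P_1 = \Rcv{N}{x}.P'$, $A_1 = P'\{\subst{M}{x}\}$, or (output case) $\alpha = \Res{x}\Snd{N}{x}$, $P_1 = \Snd{N}{M}.P'$, $A_1 = P' \parop \{\subst{M}{x}\}$. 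By definition of $\Sigma \vdash \alpha = \alpha'$, the label $\alpha'$ has the same shape as $\alpha$, with channel term $N'$ such that $\Sigma \vdash N = N'$ and, in the input case, message term $M'$ with $\Sigma \vdash M = M'$; in particular $\bv(\alpha') = \bv(\alpha)$.

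Next I would observe that $\equivP$ already suffices to replace these terms. Using a fresh variable $y \neq x$ and rule $\brn{Rewrite}'$, one has $\Rcv{N}{x}.P' = (\Rcv{y}{x}.P')\{\subst{N}{y}\} \equivP (\Rcv{y}{x}.P')\{\subst{N'}{y}\} = \Rcv{N'}{x}.P'$, and similarly $\Snd{N}{M}.P' \equivP \Snd{N'}{M}.P'$ in the output case and $P'\{\subst{M}{x}\} \equivP P'\{\subst{M'}{x}\}$ in the input case. Let $P_1'$ and $A_1'$ denote the processes obtained by performing these rewritings ($A_1'=A_1$ in the output case, since there $M$ is untouched). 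Then $\Res{\vect n}(P_1 \parop P_2) \equivP \Res{\vect n}(P_1' \parop P_2)$ and $\Res{\vect n}(A_1 \parop P_2) \equiv \Res{\vect n}(A_1' \parop P_2)$, so by transitivity $P \equivP \Res{\vect n}(P_1' \parop P_2)$ and $A \equiv \Res{\vect n}(A_1' \parop P_2)$. If some name of $\vect n$ is free in $\alpha'$, I first alpha-rename $\vect n$ to fresh names in both $\Res{\vect n}(P_1' \parop P_2)$ and $\Res{\vect n}(A_1' \parop P_2)$ — legitimate since expressions are identified modulo renaming of bound names — so that $\{\vect n\} \cap \fn(\alpha') = \emptyset$; and $\bv(\alpha') \cap \fv(P_1' \parop P_2) = \emptyset$ still holds because $\bv(\alpha') = \bv(\alpha)$ and rewriting terms inside an input/output prefix introduces no new free variables. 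Applying the converse direction of Lemma~\ref{lem:caract-ltrP} with $P_1'$, $A_1'$, $N'$, $M'$ in the appropriate case then yields $P \ltrP{\alpha'} A$.

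The argument is entirely routine; the only point that needs a little care is the bookkeeping of the side conditions of Lemma~\ref{lem:caract-ltrP} — in particular re-establishing $\{\vect n\} \cap \fn(\alpha') = \emptyset$ after rewriting the label's terms, which may enlarge their free names — and this is dealt with by the alpha-renaming of $\vect n$ noted above. (An alternative proof by induction on the derivation of $P \ltrP{\alpha} A$, using $\brn{Rewrite}'$ and $\brn{Struct}'$ in the base cases $\brn{In}'$, $\brn{Out-Var}'$ and the induction hypothesis in $\brn{Scope}'$, $\brn{Par}'$, $\brn{Struct}'$, works too, but it forces one to re-establish the side condition of $\brn{Scope}'$ by renaming the restricted name, so the characterisation-based proof is cleaner.)
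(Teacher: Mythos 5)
Your route --- reducing to the characterisation of Lemma~\ref{lem:caract-ltrP}, rewriting the exposed terms with $\brn{Rewrite}'$, and reassembling with the converse direction of that lemma --- is genuinely different from the paper's proof, which proceeds by induction on the derivation of $P \ltrP{\alpha} A$, applying $\brn{Rewrite}'$ and $\brn{Struct}'$ in the base cases $\brn{In}'$ and $\brn{Out-Var}'$ and pushing the hypothesis through $\brn{Scope}'$, $\brn{Par}'$, and $\brn{Struct}'$ (renaming the restricted name in the $\brn{Scope}'$ case, and using $\bv(\alpha')=\bv(\alpha)$ in the $\brn{Par}'$ case). Your version concentrates the whole case analysis into one already-proved lemma, which is arguably cleaner; the paper's induction is more self-contained but repeats exactly the bookkeeping you describe.

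There is, however, a step that fails as ordered. You first substitute $N'$ (and, in the input case, $M'$) into the prefix, obtaining $P_1' = \Rcv{N'}{x}.P'$, and only afterwards alpha-rename $\vect n$ to fresh names in $\Res{\vect n}(P_1' \parop P_2)$ when some name of $\vect n$ is free in $\alpha'$. But that renaming acts on every occurrence of those names in the body, including the occurrences inside $N'$ and $M'$ --- and such occurrences exist precisely in the case where the renaming is needed. The renamed process has prefix channel $N'\{\subst{\vect m}{\vect n}\}$, which is not syntactically $N'$, so the converse direction of Lemma~\ref{lem:caract-ltrP} yields a transition labelled $\Rcv{N'\{\subst{\vect m}{\vect n}\}}{M'\{\subst{\vect m}{\vect n}\}}$ rather than $\alpha'$. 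The repair is to swap the two steps: rename $\vect n$ to fresh $\vect m$ \emph{before} rewriting --- this leaves $N$ and $M$ untouched because $\{\vect n\}\cap\fn(\alpha)=\emptyset$ --- and only then apply $\brn{Rewrite}'$ to replace $N$ by $N'$ and $M$ by $M'$. A second, smaller point: your justification that $\bv(\alpha')\cap\fv(P_1'\parop P_2)=\emptyset$ because rewriting ``introduces no new free variables'' is wrong as stated, since $\brn{Rewrite}'$ can replace a term by an equal one containing extra free variables (the paper itself notes that structural equivalence may introduce free unused variables); the condition nevertheless holds because the only bound variable of $\alpha'$ is the $x$ of an output label $\Res{x}\Snd{N'}{x}$, which by well-formedness of labels does not occur in $N'$.
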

\begin{proof}
We proceed by induction on the derivation of $P \ltrP{\alpha} A$.
\begin{itemize}
\item Case $\brn{In}'$. We have $P = \Rcv{N}{x}.P'$,
$\alpha = \Rcv{N}{M}$, and $A = P'\{\subst{M}{x}\}$ for some $N$, $M$, 
$x$, and $P'$.
Since $\Sigma \vdash \alpha = \alpha'$, we have $\alpha' = \Rcv{N'}{M'}$,
$\Sigma \vdash N' = N$, and $\Sigma \vdash M' = M$ for some $N'$ and $M'$.
Hence $P \equivP \Rcv{N'}{x}.P' \ltrP{\alpha'} P'\{\subst{M'}{x}\} \equiv A$
by $\brn{In}'$, so $P \ltrP{\alpha'} A$ by $\brn{Struct}'$.

\item Case $\brn{Out-Var}'$. We have $P = \Snd{N}{M}.P'$,
$\alpha = \Res{x}\Snd{N}{x}$, and $A = P\parop \{\subst{M}{x}\}$
for some $N$, $M$, $x$, and $P'$.
Since $\Sigma \vdash \alpha = \alpha'$, we have $\alpha' = \Res{x}\Snd{N'}{x}$
and $\Sigma \vdash N' = N$ for some $N'$.
Hence $P \equivP \Snd{N'}{M}.P' \ltrP{\alpha'} P\parop \{\subst{M}{x}\}$
by $\brn{Out-Var}'$, so $P \ltrP{\alpha'} A$ by $\brn{Struct}'$.

\item The other cases follow easily from the induction hypothesis.
In the case $\brn{Scope}'$, we rename the bound name $n$ so that it
does not occur in $\alpha$.
In the case $\brn{Par}'$, we use that $\bv(\alpha') = \bv(\alpha)$.
\qedhere
\end{itemize}
\end{proof}

\begin{lemma}\label{lem:redP-tr2}
Suppose that $P_0$ is a closed process where {\onlykey}.
If $\tr{P_0} \redP R$, then $P_0 \redP R'$ and $R \equiv \tr{R'} $
for some closed process $R'$ where {\onlykey}.
\end{lemma}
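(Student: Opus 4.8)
The plan is to establish Lemma~\ref{lem:redP-tr2} by mirroring the proof of Lemma~\ref{lem:redP-tr1} in the reverse direction, by induction on the size of $P_0$ (using the same size measure as there). Since $P_0$ is closed and $\tr{\cdot}$ commutes with every process constructor --- writing, after renaming $n$ away from $k$, $\tr{\Res{n}P} = \Res{n}\tr{P}$, $\tr{\Repl{P}} = \Repl{\tr{P}}$, $\tr{\Rcv{N}{x}.P} = \Rcv{\tr{N}}{x}.\tr{P}$, $\tr{\Snd{N}{M}.P} = \Snd{\tr{N}}{\tr{M}}.\tr{P}$, $\tr{\IfThenElse{M}{N}{P}{Q}} = \IfThenElse{\tr{M}}{\tr{N}}{\tr{P}}{\tr{Q}}$, and $\tr{\nil} = \nil$ --- the top-level shape of $\tr{P_0}$ is determined by that of $P_0$. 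I would apply Lemma~\ref{lem:decomp-redP} to the reduction $\tr{P_0} \redP R$ and go through its four cases, which correspond to $P_0$ being a parallel composition, a restriction, a replication, or a conditional; when $P_0$ is $\nil$, an input, or an output, $\tr{P_0}$ admits no $\redP$-reduction, so these cases are vacuous.

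The restriction, replication, and conditional cases are routine. For $P_0 = \Res{n}P$ (resp.\ $P_0 = \Repl{P}$), Lemma~\ref{lem:decomp-redP} gives a reduction of $\tr{P}$ (resp.\ of $\tr{P} \parop \tr{P} = \tr{P \parop P}$); applying the induction hypothesis produces a back-translated reduct $Q''$ in which $k$ occurs only as $\Const{mac}(k,\cdot)$ and with $\tr{Q''}$ structurally equivalent to the original reduct, and one reassembles $P_0 \redP R'$ with $R' = \Res{n}Q''$ (resp.\ $R' = Q'' \parop \Repl{P}$), checking $\tr{R'} \equiv R$ via the commutation of $\tr{\cdot}$ with the constructors and, for replication, $\tr{\Repl{P}} \equivP \tr{P} \parop \tr{P} \parop \Repl{\tr{P}}$. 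For $P_0 = \IfThenElse{M}{N}{P}{Q}$, the test $\Sigma \vdash \tr{M} = \tr{N}$ appearing in Lemma~\ref{lem:decomp-redP} is, by Lemma~\ref{lem:tr-inj} (since $M$ and $N$ are subterms of $P_0$, so $k$ occurs in them only as $\Const{mac}(k,\cdot)$), equivalent to $\Sigma \vdash M = N$; hence $P_0$ reduces by $\brn{Then}'$ or $\brn{Else}'$, and one takes $R' = P$ or $R' = Q$.

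The parallel case $P_0 = P \parop Q$ carries the real content, and its communication subcase is the main obstacle. When Lemma~\ref{lem:decomp-redP} gives $\tr{P} \redP P'$ with $R \equiv P' \parop \tr{Q}$ (or the symmetric case), the induction hypothesis on $\tr{P}$ finishes as in the restriction case. The delicate subcase is $\tr{P} \ltrP{\Rcv{N}{x}} A$, $\tr{Q} \ltrP{\Res{x}\Snd{N}{x}} B$, $R \equiv \Res{x}(A \parop B)$, where $N$ is only known to be ground, not of the form $\tr{N''}$. To handle it I would first apply Lemma~\ref{lem:ltrP-alpha-is-tr} to obtain a ground term $N''$ in which $k$ occurs only as $\Const{mac}(k,\cdot)$ and with $\Sigma \vdash N = \tr{N''}$; then Lemma~\ref{lem:ltrP-eq-alpha} to rewrite both transitions so that their labels are $\tr{\Rcv{N''}{x}}$ and $\tr{\Res{x}\Snd{N''}{x}}$; then Lemma~\ref{lem:ltrP-tr2} twice (both labels, and $N''$, having $k$ only as $\Const{mac}(k,\cdot)$), yielding $P \ltrP{\Rcv{N''}{x}} A'$ and $Q \ltrP{\Res{x}\Snd{N''}{x}} B'$ with $A \equiv \tr{A'}$, $B \equiv \tr{B'}$, $\fv(A') \subseteq \{x\}$, and $B' = \CTX[\{\subst{M}{x}\}]$ for a closed plain evaluation context $\CTX$ and a ground term $M$. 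Lemma~\ref{lem:comp-ltr} (applicable since $P$ and $Q$ are closed and $N''$ is ground) then gives $P \parop Q \redP R''$ with $R'' \equiv \Res{x}(A' \parop B')$. I would take $R'$ to be $\CTX[A'\{\subst{M}{x}\}]$, the closed plain process obtained by pulling $\nu x$ through $\CTX$ and applying $\{\subst{M}{x}\}$ (using $x \notin \fv(M)$): it satisfies $R' \equiv \Res{x}(A' \parop B')$, $k$ occurs in it only as $\Const{mac}(k,\cdot)$ (since this holds of $\CTX$, $A'$, and $M$), and $\tr{R'} \equiv \Res{x}(\tr{A'} \parop \tr{B'}) \equiv \Res{x}(A \parop B) \equiv R$, using that $\tr{\cdot}$ commutes with the substitution $\{\subst{M}{x}\}$ because $k$ occurs only as $\Const{mac}(k,\cdot)$ in $M$. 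The remaining symmetric subcase is identical with the roles of $P$ and $Q$ exchanged.
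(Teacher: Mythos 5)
Your proposal is correct and follows essentially the same route as the paper's proof: induction on the same size measure, decomposition via Lemma~\ref{lem:decomp-redP}, the conditional case via Lemma~\ref{lem:tr-inj}, and the communication subcase handled by the exact chain Lemma~\ref{lem:ltrP-alpha-is-tr}, Lemma~\ref{lem:ltrP-eq-alpha}, Lemma~\ref{lem:ltrP-tr2} (twice), and Lemma~\ref{lem:comp-ltr}, with the reduct reassembled as $\CTX[A'\{\subst{M}{x}\}]$ just as the paper forms $\CTX_2[P_1\{\subst{M_2}{x}\}]$ (after the implicit renaming of the bound names of the context away from $A'$, which the paper makes explicit). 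No gaps.
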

\begin{proof}
We proceed by induction on the size of $P_0$, with the same definition of size
as in the proof of Lemma~\ref{lem:redP-tr1}. The following cases may occur:
\begin{enumerate}
\item $P_0 = P \parop Q$. Then $\tr{P_0} = \tr{P} \parop \tr{Q}$, so by Lemma~\ref{lem:decomp-redP}, one of the following cases holds:
\begin{enumerate}
\item $\tr{P} \redP P'$ and $R \equiv P' \parop \tr{Q}$ for some $P'$,
\item $\tr{P} \ltrP{\Rcv{N}{x}} A$, $\tr{Q} \ltrP{\Res{x}\Snd{N}{x}} B$, and $R \equiv \Res{x}(A \parop B)$ for some $A$, $B$, $x$, and ground term $N$, 
\end{enumerate}
and two symmetric cases obtained by swapping $P$ and $Q$.
In the first case, by induction hypothesis, $P \redP P''$ and $\tr{P''} \equiv P'$
for some closed process $P''$ where \onlykey. 
Hence $P_0 = P \parop Q \redP P'' \parop Q$ and $\tr{P'' \parop Q} = \tr{P''} \parop \tr{Q} \equiv P' \parop \tr{Q} \equiv R$. Furthermore, {\onlykey} in $P'' \parop Q$.
In the second case, by Lemma~\ref{lem:ltrP-alpha-is-tr}, $\Sigma \vdash N = \tr{N'}$ for some ground term $N'$ where {\onlykey}.
By Lemma~\ref{lem:ltrP-eq-alpha}, $\tr{P} \ltrP{\Rcv{\tr{N'}}{x}} A$ and $\tr{Q} \ltrP{\Res{x}\Snd{\tr{N'}}{x}} B$.
By Lemma~\ref{lem:ltrP-tr2}, $P \ltrP{\Rcv{N}{x}} P_1$ and $A \equiv \tr{P_1}$
for some $P_1$ where {\onlykey} and $\fv(P_1) \subseteq \{ x\}$; and $Q \ltrP{\Res{x}\Snd{N}{x}} B'$ and $B \equiv \tr{B'}$ for some $B' = \CTX_2[\{\subst{M_2}{x}\}]$ 
where {\onlykey} in $B'$, $\CTX_2$ is a closed plain evaluation context, and $M_2$ is a ground term.
By Lemma~\ref{lem:comp-ltr}, $P_0 = P \parop Q \redP R'$ and $R' \equiv \Res{x}(P_1 \parop B') = \Res{x}(P_1 \parop \CTX_2[\{\subst{M_2}{x}\}])$ for some $R'$. 
We rename the bound names of $\CTX_2$ so that they do not occur in $P_1$.
Let $R'' = \CTX_2[P_1\{\subst{M_2}{x}\}]$.
The process $R''$ is closed and {\onlykey} in $R''$.
We have $R' \equiv R''$, so $P_0\redP R''$ and
$\tr{R''} \equiv \tr{\Res{x}(P_1 \parop B')} \equiv \Res{x}(A \parop B) \equiv R$. 
The last two cases are symmetric.

\item $P_0 = \Res{n}P$. We rename $n$ so that $n \neq k$. Then $\tr{P_0} = \Res{n}\tr{P}$, so by Lemma~\ref{lem:decomp-redP}, $\tr{P} \redP Q'$, and $R \equiv \Res{n} Q'$ for some $Q'$. By induction hypothesis, 
$P \redP Q''$ and $Q' \equiv \tr{Q''}$ for some closed process $Q''$ where {\onlykey}. 
Hence $P_0 = \Res{n}P \redP \Res{n}Q''$ and $\tr{\Res{n}Q''} = \Res{n}\tr{Q''} \equiv \Res{n}Q' \equiv R$. Furthermore, {\onlykey} in $\Res{n}Q''$.

\item $P_0 = \Repl{P}$. Then $\tr{P_0} = \Repl{\tr{P}}$, so by Lemma~\ref{lem:decomp-redP}, $\tr{P \parop P} = \tr{P} \parop \tr{P} \redP Q'$, and $R \equiv Q' \parop \Repl{\tr{P}}$ for some $Q'$. By induction hypothesis, 
$P \parop P \redP Q''$ and $Q' \equiv \tr{Q''}$ for some closed process $Q''$ where {\onlykey}. 
Hence $P_0 = \Repl{P} \equivP P \parop P \parop \Repl{P} \redP Q'' \parop \Repl{P}$ and $\tr{Q'' \parop \Repl{P}} = \tr{Q''} \parop \Repl{\tr{P}} \equiv Q' \parop \Repl{\tr{P}} \equiv R$. Furthermore, {\onlykey} in $Q'' \parop \Repl{P}$.

\item $P_0 = \IfThenElse{M}{N}{P}{Q}$. Then $\tr{P_0} = 
\IfThen{\tr{M}}{\tr{N}}{\tr{P}}\ \kw{else}$ $\tr{Q}
$, so by Lemma~\ref{lem:decomp-redP}, either $\Sigma \vdash \tr{M} = \tr{N}$ and $R \equiv \tr{P}$, or $\Sigma \vdash \tr{M} \neq \tr{N}$ and $R \equiv \tr{Q}$. 
In the first case, by Lemma~\ref{lem:tr-inj}, $\Sigma \vdash M = N$, so $P_0 = \IfThenElse{M}{N}{P}{Q} \redP P$ and we know that $\tr{P} \equiv R$ and {\onlykey} in $P$. 
In the second case, by Lemma~\ref{lem:tr-inj}, $\Sigma \vdash M \neq N$, so $P_0 = \IfThenElse{M}{N}{P}{Q} \redP Q$ and we know that $\tr{Q} \equiv R$ and {\onlykey} in $Q$.

\item $P_0$ is not $\nil$, an input, or an output, because by Lemma~\ref{lem:decomp-redP}, $\tr{P_0}$ would not reduce.
\qedhere
\end{enumerate}
\end{proof}

\begin{restate}{Theorem}{\ref{th:mac}}
Suppose that the signature $\Sigma$ is equipped with 
an equational theory generated by a convergent rewrite system such that
$\Const{mac}$ and $\Const{f}$ do not occur in the left-hand sides of rewrite rules;
the only rewrite rules with $\Const{h}$ at the root of the left-hand side are
those of~\eqref{eq:h-iter} and~\eqref{eq:h-end} oriented from left to right;
there are no rewrite rules with $::$ nor $\Const{nil}$ at the root of the left-hand side;
and names do not occur in rewrite rules.
  Suppose that $C$ is closed and the name $k$ appears only as first argument of $\Const{mac}$ in $C$. 
Then $\nu k. C \bicong \nu k.\tr{C}$.
\end{restate}%
\begin{proof}
Let $\rel$ relate all closed extended processes $A$ and $B$ 
such that
 $A \equiv \Res{k}C$ and $B \equiv \Res{k}\tr{C}$
for some closed normal process $C$ 
where {\onlykey}.

We show that $\rel \cup \rel^{-1}$ is a labelled bisimulation. It is symmetric by construction. Assume that $A \rel B$ for some 
$C = \Res{\vect n}(\sigma \parop P)$ where {\onlykey}. 
In particular, $k$ does not occur in $\vect n$,
$A$ and $B$ are closed, 
$A \equiv \Res{k}C$, and 
$B \equiv \Res{k}\tr{C}$.
\begin{enumerate}
\item We show that $A \enveq B$.

Let $M$ and $N$ be two terms such that $\fv(M) \cup \fv(N) \subseteq \dom(A) = \dom(B) = \dom(\sigma)$.
We have $\frameof{A} \equiv \Res{k,\vect n}\sigma$ and
$\frameof{B} \equiv \Res{k,\vect n}\tr{\sigma}$.
We rename $k$ and $\vect n$ so that these names do not occur in $M$ and $N$.
Then 
\begin{align*}
(M=N)\frameof{A}
\Leftrightarrow {}&\Sigma \vdash M\sigma = N\sigma\\
\Leftrightarrow {}&\Sigma \vdash \tr{M\sigma} = \tr{N\sigma} \tag*{by Lemma~\ref{lem:tr-inj}}\\
\Leftrightarrow {}&\Sigma \vdash M\tr{\sigma} = N\tr{\sigma}
\end{align*}
since $k$ does not occur in $M$ and $N$ and {\onlykey} in $\sigma$, so 
\begin{align*}
(M=N)\frameof{A} \Leftrightarrow {}&(M=N)\frameof{B}
\end{align*}
Therefore, $A \enveq B$.

\item We first show that, if $A \ltr{\alpha} A'$, $A'$ is closed, and 
    $\fv(\alpha) \subseteq \dom(A)$,
    then $B \rightarrow^*
    \ltr{\alpha} \rightarrow^* B'$ and $A' \rel B'$ for some $B'$.

We have $\Res{k,\vect n}(\sigma \parop P) \ltr{\alpha} A'$, 
so by Lemma~\ref{lem:redalpha-std-to-pnf},
$\Res{k,\vect n}(\sigma \parop P) \ltrpnf{\alpha} A'$.
We rename $k$ and $\vect n$ so that these names do not occur in $\alpha$.
By Lemma~\ref{lem:decomp-ltrpnf}, $P \ltrP{\alpha\sigma} A'_1$, $A' \equiv \Res{k,\vect n}(\sigma \parop A'_1)$, and $\bv(\alpha) \cap \dom(\sigma) = \emptyset$ for some $A'_1$.
By Lemma~\ref{lem:ltrP-tr1}, $\tr{P} \ltrP{\tr{\alpha\sigma}} \tr{A''_1}$ 
and $A'_1 \equiv A''_1$ for some $A''_1$ where {\onlykey}.
Furthermore, when $\alpha\sigma = \Res{x}\Snd{N'}{x}$, $A''_1 = \CTX[\{\subst{M}{x}\}]$ where $\CTX$ is a closed plain evaluation context and $M$ is a ground term, and when $\alpha\sigma = \Rcv{N'}{M'}$, $A''_1$ is a plain process with $\fv(A''_1) \subseteq \fv(M') = \emptyset$, so $A''_1$ is a closed plain process.
Since $k$ does not occur in $\alpha$ and {\onlykey} in $\sigma$, 
we have $\tr{\alpha\sigma} = \alpha \tr{\sigma}$.
Therefore, $B \equiv \Res{k}\tr{C} \equiv \Res{k,\vect n}(\tr{\sigma} \parop \tr{P}) \ltrpnf{\alpha} \Res{k,\vect n}(\tr{\sigma} \parop \tr{A''_1})$.
Let $C' = \pnf(\Res{\vect n}(\sigma \parop A''_1))$.
The process $C'$ is a closed normal process
where {\onlykey}
We have $\Res{k}C' \equiv \Res{k,\vect n}(\sigma \parop A''_1) \equiv \Res{k,\vect n}(\sigma \parop A'_1) \equiv A'$.
Let $B' = \Res{k}\tr{C'}$.
We have $A' \rel B'$.
Given the form of $A''_1$, we can show that $\tr{C'} \equiv \Res{\vect n}(\tr{\sigma} \parop \tr{A''_1})$, so $B \ltr{\alpha} B'$.

Next, we show that, if $B \ltr{\alpha} B'$, $B'$ is closed, and 
    $\fv(\alpha) \subseteq \dom(B)$,
    then $A \rightarrow^*
    \ltr{\alpha} \rightarrow^* A'$ and $A' \rel B'$ for some $A'$.

We have $\Res{k,\vect n}(\tr{\sigma} \parop \tr{P}) \ltr{\alpha} B'$, 
so by Lemma~\ref{lem:redalpha-std-to-pnf},
$\Res{k,\vect n}(\tr{\sigma} \parop \tr{P}) \ltrpnf{\alpha} B'$.
We rename $k$ and $\vect n$ so that these names do not occur in $\alpha$.
By Lemma~\ref{lem:decomp-ltrpnf}, $\tr{P} \ltrP{\alpha\tr{\sigma}} B'_1$, $B' \equiv \Res{k,\vect n}(\tr{\sigma} \parop B'_1)$, and $\bv(\alpha) \cap \dom(\tr{\sigma}) = \bv(\alpha) \cap \dom(\sigma) = \emptyset$ for some $B'_1$.
Since $k$ does not occur in $\alpha$ and {\onlykey} in $\sigma$, 
we have $\tr{\alpha\sigma} = \alpha \tr{\sigma}$.
By Lemma~\ref{lem:ltrP-tr2}, 
$P \ltrP{\alpha\sigma} A'_1$ and $B'_1 \equiv \tr{A'_1}$ 
for some $A'_1$ where {\onlykey}.
Furthermore, when $\alpha\sigma = \Res{x}\Snd{N'}{x}$, $A'_1 = \CTX[\{\subst{M}{x}\}]$ where $\CTX$ is a closed plain evaluation context and $M$ is a ground term, and when $\alpha\sigma = \Rcv{N'}{M'}$, $A'_1$ is a plain process with $\fv(A'_1) \subseteq \fv(M') = \emptyset$, so $A'_1$ is a closed plain process.
Therefore, $A \equiv \Res{k}C \equiv \Res{k,\vect n}(\sigma \parop P) \ltrpnf{\alpha} \Res{k,\vect n}(\sigma \parop A'_1)$.
Let $C' = \pnf(\Res{\vect n}(\sigma \parop A'_1))$.
The process $C'$ is a closed normal process
where {\onlykey}.
Given the form of $A'_1$, we can show that $\Res{k}\tr{C'} \equiv \Res{k,\vect n}(\tr{\sigma} \parop \tr{A'_1}) \equiv \Res{k,\vect n}(\tr{\sigma} \parop B'_1) \equiv B'$.
Let $A' = \Res{k}C'$.
We have $A' \rel B'$ and $\Res{k,\vect n}(\sigma \parop A'_1) \equiv \Res{k}C' = A'$ so $A \ltr{\alpha} A'$.

\item We first show that, if $A \rightarrow A'$ for some closed $A'$,
  then $B \rightarrow^* B'$ and $A' \rel B'$ for some $B'$.  

We have $\Res{k,\vect n}(\sigma \parop P) \rightarrow A'$, 
so by Lemma~\ref{lem:red-std-to-pnf}, 
$\Res{k,\vect n}(\sigma \parop P) \redpnf \pnf(A')$.
By Lemma~\ref{lem:decomp-redpnf},
$P \redP P'$ and $\pnf(A') \equiv \Res{k,\vect n}(\sigma \parop P')$ for some $P'$.
By Lemma~\ref{lem:redP-tr1}, 
$\tr{P} \redP \tr{P''}$ and $P' \equiv P''$ 
for some closed process $P''$ where {\onlykey}.
Let $C' = \Res{\vect n}(\sigma \parop P'')$.
The process $C'$ is a closed normal process
where {\onlykey}.
We have $\Res{k}C' \equiv \Res{k,\vect n}(\sigma \parop P'') \equiv \pnf(A') \equiv A'$.
Let $B' = \Res{k}\tr{C'}$.
We have $A' \rel B'$ and
$B \equiv \Res{k}\tr{C} \equiv \Res{k,\vect n}(\tr{\sigma}\parop\tr{P})
\redpnf \Res{k,\vect n}(\tr{\sigma}\parop\tr{P''}) = \Res{k}\tr{C'} = B'$,
so $B \rightarrow B'$. 

Next, we show that, if $B \rightarrow B'$ for some closed $B'$,
then $A \rightarrow^* A'$ and $A' \rel B'$ for some $A'$.

We have $\Res{k,\vect n}(\tr{\sigma} \parop \tr{P}) \rightarrow B'$, 
so by Lemma~\ref{lem:red-std-to-pnf}, 
$\Res{k,\vect n}(\tr{\sigma} \parop \tr{P}) \redpnf \pnf(B')$.
By Lemma~\ref{lem:decomp-redpnf},
$\tr{P} \redP P'$ and $\pnf(B') \equiv \Res{k,\vect n}(\tr{\sigma} \parop P')$ for some $P'$.
By Lemma~\ref{lem:redP-tr2}, $P \redP P''$ and $P' \equiv \tr{P''}$ for some closed process $P''$ where {\onlykey}.
Let $C' = \Res{\vect n}(\sigma \parop P'')$.
The process $C'$ is a closed normal process where {\onlykey}. 
We have $\Res{k}\tr{C'} \equiv \Res{k,\vect n}(\tr{\sigma} \parop \tr{P''}) \equiv \Res{k,\vect n}(\tr{\sigma} \parop P') \equiv \pnf(B') \equiv B'$.
Let $A' = \Res{k}C'$.
We have $A' \rel B'$ and
$A \equiv \Res{k}C \equiv \Res{k,\vect n}(\sigma\parop P)
\redpnf \Res{k,\vect n}(\sigma\parop P'') = \Res{k}C' = A'$,
so $A \rightarrow A'$. 

\end{enumerate}
Therefore, ${\rel} \subseteq {\wkbisim}$ and, by Theorem~\ref{THM:OBSERVATIONAL-LABELED}, ${\rel} \subseteq {\bicong}$.

Finally, when $C$ is a closed extended process where {\onlykey},
we have $\Res{k}C \rel \Res{k}\tr{C}$ because $\pnf(C)$ is a closed normal process where {\onlykey} such that $\pnf(C) \equiv C$.
We thus obtain $\Res{k}C \bicong \Res{k}\tr{C}$.
\end{proof}

\begin{restate}{Corollary}{\ref{cor:mac}}
Suppose that the signature $\Sigma$ is equipped with the equational theory
defined by the equations~\eqref{eq:fst}, \eqref{eq:snd}, \eqref{eq:seq}, \eqref{eq:++}, \eqref{eq:h-iter}, and
\eqref{eq:h-end}.
Suppose that $C$ is closed and the name $k$ appears only as first argument of $\Const{mac}$ in $C$. 
Then $\nu k. C \bicong \nu k.\tr{C}$.
\end{restate}%
\begin{proof}
We define the rewrite system $\rew$ by orienting the equations~\eqref{eq:fst}, \eqref{eq:snd}, \eqref{eq:seq}, \eqref{eq:++}, \eqref{eq:h-iter}, and~\eqref{eq:h-end} from left to right:
{\allowdisplaybreaks\begin{eqnarray}
  \Const{fst}( (x,y) ) & \rightarrow  & x \label{red:fst}\\
  \Const{snd}( (x,y) ) & \rightarrow  & y \label{red:snd}\\
\Const{hd}(\cons{x}{y}) &\rightarrow& x\label{red:hd}\\
\Const{tl}(\cons{x}{y}) &\rightarrow& y\label{red:tl}\\
\Const{nil} \bappend x &\rightarrow& \cons{x}{\Const{nil}}\label{red:nil++}\\
(\cons{x}{y}) \bappend z &\rightarrow& \cons{x}{(y\bappend z)}\label{red:cons++}\\
\hbin{x}{\cons{y_0}{\cons{y_1}{z}}} &\rightarrow& \hbin{\fbin{x}{y_0}}{\cons{y_1}{z}}\label{red:h-iter}\\
\hbin{x}{\cons{y}{\Const{nil}}} &\rightarrow& \fbin{x}{y}\label{red:h-end}
\end{eqnarray}}%
In order to prove that $\rew$ terminates, we order terms $M$
lexicographically, using:
\begin{enumerate}
\item the size of $M$; then
\item the number of occurrences of the $\bappend$ symbol in $M$; then
\item the number of occurrences of the $::$ symbol in $M$; then 
\item the sum, over all occurrences of $\bappend$ in $M$, of
the lengths of the first arguments of $\bappend$, 
computed as follows: $\len(\cons{N_1}{N_2}) = 1 + \len(N_2)$,
$\len(N_1 \bappend N_2) = 1 + \len(N_1)$,
and $\len(N) = 0$ for all other terms.
\end{enumerate}
This ordering is well-founded.
Rules~\eqref{red:fst}, \eqref{red:snd}, \eqref{red:hd}, \eqref{red:tl}, and~\eqref{red:h-end} decrease the size.
Rule~\eqref{red:nil++} preserves the size and decreases the number of occurrences of $\bappend$.
Rule~\eqref{red:cons++} preserves the size and the numbers of occurrences of $\bappend$ and $::$ but it decreases the sum above, 
because the length of the first argument decreases for the occurrence of $\bappend$ modified by rule~\eqref{red:cons++} ($\len(N) < \len(\cons{M}{N})$) and is unchanged for all other occurrences of $\bappend$ in the term.
Rule~\eqref{red:h-iter} preserves the size and the number of occurrences of $\bappend$; it decreases the number of occurrences of $::$.
Therefore, if $M$ reduces to $M'$ by any of these rules, 
we have $M' < M$. 
This property shows that $\rew$ terminates.
(The termination of $\rew$ can also be proved using well-known techniques.
For instance, it can be proved using a lexicographic path ordering,
provided the second argument of $\Const{h}$, which decreases by~\eqref{red:h-iter}, is considered
before the first one, which increases, in the lexicographic ordering.)

The rewrite system $\rew$ is confluent because there are no critical pairs between the rules. Hence $\rew$ is convergent. Since $\rew$ generates
the equational theory under consideration, we conclude by Theorem~\ref{th:mac}.
\end{proof}

\section{Proofs for Section~\ref{SUBSEC:INDIF}}\label{app:indif}

In Lemma~\ref{lem:key-stat-eq} and Corollary~\ref{cor:key-stat-eq},
we suppose that the signature $\Sigma$ is equipped with
an equational theory generated by a convergent rewrite system $\rew$.
Since $\rew$ terminates, the left-hand side of its rewrite rules cannot be 
variables. 
We suppose that the rewrite rules of $\rew$ do not contain names.
We denote by $\theta$ a substitution and by $\rho$ a variable renaming.
We first study active substitutions from variables to hash computations,
that is, terms whose root symbols range over functions that do not occur on the left-hand side of $\rew$.

\begin{lemma}\label{lem:key-stat-eq}
  Suppose $\Sigma$ is equipped with an equational theory generated by a convergent rewrite system $\rew$.
  Let $\theta$ be a closed substitution that ranges over pairwise distinct terms modulo $\Sigma$, each
  of the form $f(k, M)$ where $f$ does not occur on the left-hand side
  of the rules of $\rew$.
  Let $\sigma$ map the same variables to
  pairwise distinct names $\vect{a}$.  We have
  $\Res{k}\theta \enveq \Res{\vect{a}}\sigma$.
\end{lemma}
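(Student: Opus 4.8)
The plan is to prove that no equational test can distinguish the frame $\Res{k}\theta$ from the frame $\Res{\vect a}\sigma$, where the latter maps variables to fresh independent names. By Definition~\ref{def:enveq}, I must show that for all terms $M$ and $N$ with $\fv(M) \cup \fv(N) \subseteq \dom(\theta) = \dom(\sigma)$, we have $(M = N)\Res{k}\theta$ if and only if $(M = N)\Res{\vect a}\sigma$. First I would fix representatives: renaming $k$ and $\vect a$ so they do not occur in $M$ or $N$, the condition $(M = N)\Res{k}\theta$ becomes $\Sigma \vdash M\theta = N\theta$, and similarly $(M = N)\Res{\vect a}\sigma$ becomes $\Sigma \vdash M\sigma = N\sigma$, i.e.\ $M\sigma$ and $N\sigma$ have the same normal form under $\rew$ (after substituting the fresh names for the variables).

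The key structural observation is that since each $x\theta$ has the form $f(k,M_x)$ with $f$ not occurring on any left-hand side of $\rew$, and since $k$ does not occur in $M$, $N$, or in any rewrite rule, no rewrite step of $\rew$ applied to $M\theta$ (or $N\theta$) can ever ``see inside'' or ``rewrite at'' the occurrences coming from $\theta$: a redex could only match symbols strictly above those occurrences. Thus normalizing $M\theta$ under $\rew$ proceeds exactly as normalizing $M$ with each $x\theta$ treated as an opaque constant, and the same holds for $M\sigma$ with each $x\sigma$ treated as an opaque constant. The plan is therefore to make precise a correspondence between the normal form of $M\theta$ and the normal form of $M\sigma$: both are obtained by normalizing $M$ ``abstractly'' and then substituting. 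I would formalize this by an induction on the length of rewrite sequences, proving a lemma of the shape: for any term $T$ with $\fv(T) \subseteq \dom(\theta)$, the normal form of $T\theta$ equals $T_0\theta$ where $T_0$ is the normal form of $T$ under the rewriting relation on terms-with-opaque-variables, and likewise $T_0\sigma$ is the normal form of $T\sigma$. Here I rely on the fact that $\theta$ ranges over \emph{pairwise distinct terms modulo $\Sigma$} and $\sigma$ over \emph{pairwise distinct names}, so the abstract-variable picture is faithful in both directions: equality of two such opaque terms $x\theta$ and $y\theta$ holds iff $x = y$ iff $x\sigma = y\sigma$. Crucially, no rewrite rule can create an equality between distinct opaque terms, since the $f(k,\cdot)$ heads are irreducible and $k$ is disjoint from the rules.

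From this correspondence the equivalence is immediate: $\Sigma \vdash M\theta = N\theta$ iff $M$ and $N$ have the same abstract normal form $T_0$ iff $\Sigma \vdash M\sigma = N\sigma$. One subtlety to handle carefully is that after substituting $\theta$, two syntactically distinct abstract normal forms $T_0 \neq T'_0$ must still denote distinct terms modulo $\Sigma$ --- this is where I need that the $x\theta$ are distinct \emph{modulo $\Sigma$} and that the only way two abstract normal forms could become equal is via further rewriting at the opaque positions, which is ruled out; I would phrase this as: if $T_0\theta$ and $T'_0\theta$ have a common $\rew$-reduct, then already $T_0$ and $T'_0$ have a common reduct in the abstract rewrite system, hence $T_0 = T'_0$ by confluence and the fact that they are abstract normal forms.

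The main obstacle will be setting up the ``abstract rewriting'' bookkeeping cleanly: one must argue rigorously that redexes in $T\theta$ never overlap the substituted subterms, which requires the hypothesis that left-hand sides of $\rew$ contain no names (so $k$ cannot be matched) together with the fact that the head symbols $f$ of the $x\theta$ do not appear on left-hand sides (so the substituted subterms cannot be matched as the root of a redex, nor can a redex match a proper context around them that ``reads'' their structure). Once this non-interference is established, the confluence and termination of $\rew$ do the rest, and the two-directional nature of the argument --- first replacing $\theta$ by abstract variables, then instantiating to $\sigma$ --- gives the ``if and only if'' symmetrically. I expect the formal statement of the non-interference lemma and its proof by case analysis on the position of a redex to be the bulk of the work; the conclusion $\Res{k}\theta \enveq \Res{\vect a}\sigma$ then follows directly from Definition~\ref{def:enveq}.
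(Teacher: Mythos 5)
Your proposal is correct and follows essentially the same route as the paper's proof: both reduce the claim to showing $\Sigma \vdash M\theta = N\theta$ iff $\Sigma \vdash M\sigma = N\sigma$ by arguing that no redex of $\rew$ can overlap the non-variable part of a substituted subterm $f(k,\cdot)$ (because $f$ is absent from left-hand sides and the rules contain no names), so that normalization commutes with the substitution, and then using injectivity of $\theta$ on normal forms (the paper's property \substinj) both to pull non-linear matches back to the abstract term and to conclude from syntactic equality of normal forms. The only detail to add is that your commutation lemma ``the normal form of $T\theta$ equals $T_0\theta$'' requires first reducing the images of $\theta$ to $\rew$-normal form (they may be reducible below the head symbol $f$), a harmless preliminary step the paper performs explicitly by appealing to Lemma~\ref{LEM:INVARIANT-STATIC-EQ}.
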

\begin{proof}
More explicitly, let $\theta = \{(\subst{M_i}{x_i})_{i = 1..n}\}$,
$\sigma = \{(\subst{a_i}{x_i})_{i = 1..n}\}$, and $\vect a = a_1, \dots, a_n$.

We first prove the property
\begin{quote}
\substinj:
if, moreover, $\theta$ ranges over syntactically pairwise distinct terms, then
$N_1\theta  = N_2\theta$ and $k \notin \fn(N_1) \cup \fn(N_2)$ implies $N_1 = N_2$.
\end{quote}
Let $N_1'$ be obtained from $N_1$ by replacing the occurrences of $x_1, \ldots, x_n$
with pairwise distinct variables $y_1, \ldots, y_{n'}$, and let $(i_j)_{j = 1.. n'}$ and $\rho = \{(\subst{x_{i_j}}{y_j})_{j = 1..n'}\}$ be such that $N_1 = N'_1 \rho$.
We have $N'_1\rho\theta = N_2\theta $.
Since 
$k$ does not occur in $N_2$ or $N'_1$, and 
$k$ occurs as first argument of the root function symbol of $M_i$ for $i = 1..n$ 
and $M_{i_j}$ for $j = 1..n'$, 
the terms $N_2$ and $N'_1$ are equal up to some variable renaming.
Since each variable $y_j$ occurs once in $N'_1$, 
we have $N_2 = N'_1\rho'$ for some $(i'_j)_{j = 1.. n'}$ 
and $\rho' = \{(\subst{x_{i'_j}}{y_j})_{j = 1.. n'}\}$.
We have $N'_1\rho'\theta = N'_1\rho\theta$, so
for all $j = 1..n'$ we have $y_j\rho'\theta = y_j\rho\theta$,
so $M_{i'_j} = M_{i_j}$.
Since $M_1, \ldots, M_n$ are pairwise distinct, we have $i'_j = i_j$,
so $\rho' = \rho$.
Hence $N_1 = N'_1 \rho  = N'_1 \rho' = N_2$.

Let us now prove the lemma itself.
We first reduce $M_1, \dots, M_n$ into irreducible form under $\rew$.
By Lemma~\ref{LEM:INVARIANT-STATIC-EQ}, it is enough to prove static equivalence on these reduced terms.
Moreover, they are still of the form $f(k,M)$ with the same condition on $f$.
(Indeed, the left-hand sides of rewrite rules do not contain $f$, so
the rewrite rules apply to strict subterms of $f(k,M)$; and $k$ is
irreducible, so the rewrite rules apply only to the terms $M$ within
$f(k,M)$.)

Let $N_1$, $N_2$ be two terms with $\fv(N_1) \cup \fv(N_2) \subseteq \{x_1, \ldots, x_n\}$.
We need to show that $(N_1 = N_2) \Res{k}\theta$
if and only if $(N_1 =  N_2) \Res{\vect{a}}\sigma$.
We rename $k, \vect a$ so that $(\fn(N_1) \cup \fn(N_2)) \cap \{k, \vect a\} = \emptyset$.
We have $(N_1 = N_2) \Res{k}\theta  $
if and only if $\Sigma \vdash N_1\theta  = N_2\theta $,
and $(N_1 =  N_2) \sigma$
if and only if $\Sigma \vdash N_1 \sigma = N_2 \sigma$.
We show that $\Sigma \vdash N_1\theta  = N_2\theta  $ if and only if 
$\Sigma \vdash N_1 = N_2$ if and only if 
$\Sigma \vdash N_1 \sigma = N_2 \sigma$.

Since the equational theory is closed under substitution of terms for variables and names, we have that $\Sigma \vdash N_1 = N_2$ implies $\Sigma \vdash N_1\theta  = N_2\theta  $, 
$\Sigma \vdash N_1 = N_2$ implies $\Sigma \vdash N_1\sigma = N_2 \sigma$, and
$\Sigma \vdash N_1 \sigma = N_2 \sigma$ implies $\Sigma \vdash N_1 = N_2$ (by substituting $x_i$ for $a_i$ for $i = 1..n$). Hence, we just have to show that $\Sigma \vdash N_1\theta  = N_2\theta  $ implies
$\Sigma \vdash N_1 = N_2$. We can restrict our attention to the case in which $N_1$ and $N_2$ are irreducible under $\rew$, since the equality of the initial terms is equivalent to the equality of their reduced forms. 

Suppose that $\Sigma \vdash N_1\theta  = N_2\theta  $, with $N_1, \allowbreak N_2, \allowbreak M_1, \allowbreak \ldots, \allowbreak M_n$ irreducible under $\rew$. We first show that $N_1\theta $ is irreducible under $\rew$. In order to derive a contradiction, suppose that $N_1\theta $ is reducible by a rewrite rule $N_3 \rightarrow N_4$ of $\rew$. Then there exists a term context $C$ and a substitution $\sigma$ such that $C[N_3\sigma] = N_1\theta $.
Let $N'_1$ be obtained from $N_1$ by renaming the occurrences of $x_1, \ldots, x_n$ into pairwise distinct variables $y_1, \ldots, y_{n'}$, and let $(i_j)_{j = 1..n'}$ and 
$\rho = \{ (\subst{x_{i_j}}{y_j})_{j = 1.. n'} \}$ such that $N_1 = N'_1\rho $.
We have $C[N_3\sigma] = N_1'\rho\theta$.
The position of the hole of $C$ cannot be inside $M_{i_j}$, since
otherwise $M_{i_j}$ would be reducible by $N_3 \rightarrow N_4$.
Hence, the position of the hole of $C$ is inside $N_1'$, so
$N_3\sigma = N_1''\rho\theta$,
$C = C'\rho\theta$,
and $N_1' = C'[N_1'']$
for some subterm $N_1''$ of $N_1'$ and term context $C'$.

Let $\rho'$ be a variable renaming such that $N_3'\rho' = N_3$ and all variable
occurrences in $N_3'$ are fresh and pairwise distinct.
We have $N_3' \rho' \sigma = N_1'' \rho \theta$.
Since the function symbols $f$ at the root of $M_{i_j}$ do not occur in $N_3$, all occurrences of $M_{i_j}$ are in $z\rho'\sigma$ for some $z \in \fv(N_3')$. 
Hence, for all $z \in \fv(N_3')$, 
there exists a subterm $N_z$ of $N_1''$ such that 
$z\rho'\sigma = N_z\rho\theta$ and 
$N_1'' = N_3'\{(\subst{N_z}{z})_{z \in \fv(N_3')} \}$. 
Furthermore, when $z$ and $z'$ are distinct variables of $N_3'$ such that $z\rho' = z'\rho'$, we have
$z\rho'\sigma =z'\rho'\sigma$, 
so 
$
N_z\rho\theta = N_{z'}\rho\theta$ and, by \substinj{}, $N_z\rho = N_{z'}\rho$.

For each variable $y$ of $N_3$, let us choose one variable $z_y$ of $N_3'$ such that $z_y\rho' = y$.
Let us define $\sigma'$ by $y \sigma' = N_{z_y} \rho$.
Since for all $z, z' \in \fv(N_3')$, we have $z\rho' = z'\rho'$ implies $N_z\rho = N_{z'}\rho$, 
we have for all $z \in \fv(N_3')$, $z\rho'\sigma'=N_z\rho$.
Let $C''= C'\rho$.
We have
\[
C''[N_3\sigma'] =C''[N_3'\rho'\sigma'] 
= C''[N_3'\{(\subst{N_z}{z})_{z \in \fv(N_3')}\}\rho]
= C''[N_1''\rho]
= C'[N_1'']\rho
= N_1'\rho 
= N_1
\]
Hence $N_1$ would be reducible by $N_3 \rightarrow N_4$, which is a contradiction. Therefore, $N_1\theta $ is irreducible. Similarly, $N_2\theta $ is irreducible. Hence $\Sigma \vdash N_1\theta  = N_2\theta  $ implies $N_1\theta  = N_2\theta $. By \substinj{}, $N_1 = N_2$, so a fortiori $\Sigma\vdash N_1 = N_2$.
\end{proof}

\begin{corollary}\label{cor:key-stat-eq}
Suppose $\Sigma$ is equipped with an equational theory generated by  
a convergent rewrite system $\rew$.
Let $\theta$ be a closed substitution that ranges over terms of the form $f(k, M)$ where
each $f$ does not occur on the left-hand side of
the rules of $\rew$.
Let $\sigma$ map the same variables to names $\vect a$ 
such that, 
for all $x, y \in \dom(\theta)$, we have $x\sigma =  y\sigma$ if and only if $\Sigma \vdash x\theta  = y\theta$.
We have $\Res{k}\theta  \enveq \Res{\vect{a}}\sigma$.

\end{corollary}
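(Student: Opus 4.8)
The plan is to derive Corollary~\ref{cor:key-stat-eq} from Lemma~\ref{lem:key-stat-eq} by first \emph{collapsing} the variables that $\theta$ identifies modulo $\Sigma$, applying the lemma to the collapsed frames, and then \emph{re-introducing} the collapsed variables by an application of a closing evaluation context, which preserves static equivalence by Lemma~\ref{LEM:INVARIANT-STATIC-EQ}. Let $X \eqdef \dom(\theta) = \dom(\sigma)$, and let $\sim$ be the equivalence on $X$ defined by $x \sim y$ if and only if $\Sigma \vdash x\theta = y\theta$; by hypothesis this coincides with $x\sigma = y\sigma$. Choose one representative in each $\sim$-class, write $X_0$ for the set of representatives and $r\colon X \to X_0$ for the map sending each variable to its representative (so $r$ is the identity on $X_0$), and let $\theta_0$ and $\sigma_0$ be the restrictions of $\theta$ and $\sigma$ to $X_0$. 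Then $\theta_0$ is closed, still ranges over terms of the form $f(k,M)$ with $f$ not on a left-hand side of $\rew$, and ranges over \emph{pairwise distinct} terms modulo $\Sigma$ (one per class); and $\sigma_0$ maps $X_0$ to pairwise distinct names $\vect a_0$ (the distinct names occurring in $\vect a$). Hence Lemma~\ref{lem:key-stat-eq} applies and gives $\Res{k}\theta_0 \enveq \Res{\vect a_0}\sigma_0$.

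Next I would consider the evaluation context $C[\hole] \eqdef \prod_{x \in X \setminus X_0}\{\subst{r(x)}{x}\} \parop \hole$, which is a frame placed in parallel with the hole, hence a closing evaluation context; it is well-sorted because $x$ and $r(x)$ have the same sort (the equal terms $x\theta$ and $r(x)\theta$ do), and it closes both $\Res{k}\theta_0$ and $\Res{\vect a_0}\sigma_0$, since both are closed with domain $X_0$ and $C$ adds exactly the variables $X \setminus X_0$ together with substitutions whose free variables $r(x)$ lie in $X_0$. By Lemma~\ref{LEM:INVARIANT-STATIC-EQ}, $C[\Res{k}\theta_0] \enveq C[\Res{\vect a_0}\sigma_0]$. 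I would then simplify each side by structural equivalence: by \rulename{New-Par} (the names $k$, resp.\ $\vect a_0$, do not occur in the added substitutions) followed by componentwise use of \rulename{Subst} (each $r(x)$ lies in the domain, hence may be replaced by $r(x)\theta_0$, resp.\ $r(x)\sigma_0$), the left-hand side becomes $\Res{k}\bigl(\prod_{x \in X \setminus X_0}\{\subst{r(x)\theta_0}{x}\} \parop \theta_0\bigr)$, which by \rulename{Rewrite}, using $\Sigma \vdash r(x)\theta = x\theta$, is structurally equivalent to $\Res{k}\theta$; and the right-hand side becomes $\Res{\vect a_0}\bigl(\prod_{x \in X \setminus X_0}\{\subst{x\sigma}{x}\} \parop \sigma_0\bigr)$, using $r(x)\sigma_0 = r(x)\sigma = x\sigma$, which is exactly $\Res{\vect a_0}\sigma$ and hence structurally equivalent to $\Res{\vect a}\sigma$ (the lists $\vect a$ and $\vect a_0$ restrict the same set of names). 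Invariance of $\enveq$ under $\equiv$ (again Lemma~\ref{LEM:INVARIANT-STATIC-EQ}) then yields $\Res{k}\theta \enveq \Res{\vect a}\sigma$, as required.

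The whole argument is essentially bookkeeping on top of Lemma~\ref{lem:key-stat-eq}, so the main obstacle is administrative: one must check that every intermediate parallel composition of substitutions is cycle-free and a legal extended process (it is — the added substitutions have disjoint singleton domains and images in $X_0 = \dom(\theta_0)$, which stays untouched, and after the \rulename{Subst} steps the images contain no variables), that the bound names $k$ and $\vect a_0$ can be renamed to avoid any clash with the re-introduced substitutions (trivially, since those contain no names), and that \rulename{Subst} applied componentwise really transforms $\prod_{x}\{\subst{r(x)}{x}\} \parop \theta_0$ into $\prod_{x}\{\subst{r(x)\theta_0}{x}\} \parop \theta_0$. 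The only genuinely delicate point is the treatment of possible repetitions in the list $\vect a$: it is cleanest to establish the equivalence for the list $\vect a_0$ of pairwise distinct names and observe that, up to $\equiv$, restricting a name twice or restricting a name not free in a frame does not change it, so $\Res{\vect a}\sigma \equiv \Res{\vect a_0}\sigma$.
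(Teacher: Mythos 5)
Your proof is correct and follows essentially the same route as the paper: the paper factors $\theta$ and $\sigma$ through a variable renaming onto representatives of the classes identified by $\Sigma$, applies Lemma~\ref{lem:key-stat-eq} to the collapsed (injective) frames, and concludes by the context-closure property of Lemma~\ref{LEM:INVARIANT-STATIC-EQ} --- exactly your collapse/apply/re-expand argument. The paper states this in two sentences; you have merely filled in the bookkeeping (the aliasing context, the \rulename{New-Par}/\rulename{Subst}/\rulename{Rewrite} simplifications, and the handling of $\vect a$ versus $\vect a_0$), all of which checks out.
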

\begin{proof} We factor $\theta$ and $\sigma$ into 
  $\rho\theta'$ and $\rho\sigma'$ where $\theta'$ and $\sigma'$
  range over pairwise distinct terms modulo $\Sigma$ and $\rho$ is a variable renaming. 
We apply Lemma~\ref{lem:key-stat-eq} and conclude by Lemma~\ref{LEM:INVARIANT-STATIC-EQ}.
\end{proof}

Our next lemma confirms that, with the equations \eqref{eq:pairs-all}, all terms are pairs.

\begin{lemma}\label{lem:pairs-stat-eq}
Suppose $\Sigma$ is equipped with an equational theory that contains the equations~\eqref{eq:pairs-all}. 
We have $\Res{a_1,a_2}\{\subst{(a_1,a_2)}{x}\} \enveq \Res{a}\{\subst{a}{x}\}$.
\end{lemma}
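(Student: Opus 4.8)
The plan is to reason directly from Definitions~\ref{def:enveq} and~\ref{def:eqframe}. Write $\varphi_1 \eqdef \Res{a_1,a_2}\{\subst{(a_1,a_2)}{x}\}$ and $\varphi_2 \eqdef \Res{a}\{\subst{a}{x}\}$; both are closed frames with domain $\{x\}$, so it suffices to show that, for all terms $M$ and $N$ with $\fv(M) \cup \fv(N) \subseteq \{x\}$, we have $(M=N)\varphi_1$ if and only if $(M=N)\varphi_2$ (for terms whose free variables are not all contained in $\{x\}$, both equalities fail by Definition~\ref{def:eqframe}). Fixing such $M$ and $N$, I would first rename the restricted names so that $a$, $a_1$, $a_2$ are pairwise distinct and none of them occurs in $\fn(M) \cup \fn(N)$; by Lemma~\ref{lem:equivframes} the two equality predicates are independent of this choice of representatives. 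After unfolding Definition~\ref{def:eqframe}, the goal becomes the equivalence
\[
\Sigma \vdash M\{\subst{(a_1,a_2)}{x}\} = N\{\subst{(a_1,a_2)}{x}\}
\quad\Longleftrightarrow\quad
\Sigma \vdash M\{\subst{a}{x}\} = N\{\subst{a}{x}\}.
\]

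For the implication from right to left, I would substitute the term $(a_1,a_2)$ for the name $a$: the equational theory is closed under substitution of terms for names, and since $a \notin \fn(M) \cup \fn(N)$ the only occurrences of $a$ in $M\{\subst{a}{x}\}$ and $N\{\subst{a}{x}\}$ are the ones introduced by the substitution for $x$, so this substitution transforms the right-hand equality into the left-hand one. For the converse, I would substitute $\Const{fst}(a)$ for $a_1$ and $\Const{snd}(a)$ for $a_2$; since $a_1, a_2 \notin \fn(M) \cup \fn(N)$, this turns $M\{\subst{(a_1,a_2)}{x}\}$ into $M\{\subst{(\Const{fst}(a),\Const{snd}(a))}{x}\}$ and likewise for $N$. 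The third equation of~\eqref{eq:pairs-all}, instantiated at $a$, gives $\Sigma \vdash (\Const{fst}(a),\Const{snd}(a)) = a$, and since the equational theory is a congruence this propagates through $M$ and $N$, so $\Sigma \vdash M\{\subst{(\Const{fst}(a),\Const{snd}(a))}{x}\} = M\{\subst{a}{x}\}$ and similarly for $N$; transitivity then yields the left-to-right direction. Everything respects sorts, because $a$ has sort $\BlockPair$, $a_1$ and $a_2$ have sort $\Block$, and $\Const{fst},\Const{snd} : \BlockPair \to \Block$ while $\Const{pair} : \Block \times \Block \to \BlockPair$.

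The argument is short and I do not anticipate a genuine obstacle. The only points requiring care are the choice of representatives of $\varphi_1$ and $\varphi_2$ whose restricted names avoid $\fn(M) \cup \fn(N)$ --- precisely what Lemma~\ref{lem:equivframes} licenses --- and the observation that substituting a term for $a$ (respectively for $a_1$ and $a_2$) affects only the occurrences of those names coming from the active substitution. The mathematical content is simply that the surjectivity law $(\Const{fst}(x),\Const{snd}(x)) = x$ of~\eqref{eq:pairs-all} makes a fresh $\BlockPair$-name indistinguishable from a fresh pair of two $\Block$-names.
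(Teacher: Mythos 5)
Your proposal is correct and follows essentially the same route as the paper's proof: both directions are handled by closure of the equational theory under substitution of terms for names — substituting $(a_1,a_2)$ for $a$ in one direction and $\Const{fst}(a),\Const{snd}(a)$ for $a_1,a_2$ in the other, then invoking the surjective-pairing equation of~\eqref{eq:pairs-all}. The extra remarks on representative independence and sorts are fine but not needed beyond what the paper already does.
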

\begin{proof}
Let $M$ and $N$ be two terms such that $\fv(M) \cup \fv(N) \subseteq \{x\}$.
We rename $a, a_1, a_2$ so that $\{a, a_1, a_2 \} \cap (\fn(M) \cup \fn(N)) = \emptyset$.

If $(M = N) \Res{a_1,a_2}\{\subst{(a_1,a_2)}{x}\}$, then
$\Sigma \vdash M\{\subst{(a_1,a_2)}{x}\} = N\{\subst{(a_1,a_2)}{x}\}$.
Since the equational theory is closed under substitution of any term for names,
we have $\Sigma \vdash M\{\subst{(a_1,a_2)}{x}\}\{\subst{\Const{fst}(a)}{a_1}, \subst{\Const{snd}(a)}{a_2}\} = N\{\subst{(a_1,a_2)}{x}\}\{\subst{\Const{fst}(a)}{a_1}, \subst{\Const{snd}(a)}{a_2}\}$, that is,
$\Sigma \vdash M\{\subst{(\Const{fst}(a),\Const{snd}(a))}{x}\} = N\{\subst{(\Const{fst}(a),\Const{snd}(a))}{x}\}$,
so $\Sigma \vdash M\{\subst{a}{x}\} = N\{\subst{a}{x}\}$ by the equation 
$(\Const{fst}(x),\Const{snd}(x)) = x$.
Hence $(M = N) \Res{a}\{\subst{a}{x}\}$.

Conversely, suppose that $(M = N) \Res{a}\{\subst{a}{x}\}$. Hence
$\Sigma \vdash M\{\subst{a}{x}\} = N\{\subst{a}{x}\}$.
Since the equational theory is closed under substitution of any term for names,
we have $\Sigma \vdash M\{\subst{a}{x}\}\{\subst{(a_1,a_2)}{a}\} = N\{\subst{a}{x}\}\{\subst{(a_1,a_2)}{a}\}$, that is,
$\Sigma \vdash M\{\subst{(a_1,a_2)}{x}\} = N\{\subst{(a_1,a_2)}{x}\}$,
so $(M = N) \Res{a_1,a_2}\{\subst{(a_1,a_2)}{x}\}$.

Therefore, $(M = N) \Res{a_1,a_2}\{\subst{(a_1,a_2)}{x}\}$ if and only if 
$(M = N) \Res{a}\{\subst{a}{x}\}$, so $\Res{a_1,a_2}\{\subst{(a_1,a_2)}{x}\} \enveq \Res{a}\{\subst{a}{x}\}$.
\end{proof}

\begin{lemma}\label{lem:ACconv}
The equational theory defined by equations~\eqref{eq:seq}, \eqref{eq:++}, \eqref{eq:pairs-all}, \eqref{eq:h}, \eqref{eq:h2nil}, \eqref{eq:h2cons},
and~\eqref{eq:neseq}
 is generated by a convergent rewrite system $\rew$.
\end{lemma}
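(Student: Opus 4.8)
The plan is to orient each of the equations \eqref{eq:seq}, \eqref{eq:++}, \eqref{eq:pairs-all}, \eqref{eq:h}, \eqref{eq:h2nil}, \eqref{eq:h2cons}, \eqref{eq:neseq} from left to right, obtaining a rewrite system $\rew$ with rules such as $\Const{hd}(\cons{x}{y}) \rightarrow x$, $\Const{tl}(\cons{x}{y}) \rightarrow y$, $\Const{nil}\bappend x\rightarrow \cons{x}{\Const{nil}}$, $(\cons{x}{y}) \bappend z \rightarrow \cons{x}{(y \bappend z)}$, $\Const{fst}((x,y))\rightarrow x$, $\Const{snd}((x,y))\rightarrow y$, $(\Const{fst}(x),\Const{snd}(x)) \rightarrow x$, $\Const{h}(w,z) \rightarrow \Const{h}_2(w,(0,0),z)$, $\Const{h}_2(w,x,\Const{nil})\rightarrow\Const{fst}(x)$, $\Const{h}_2(w,x,\cons{y}{z}) \rightarrow \Const{h}_2(w,\Const{f}(w,(x,y)),z)$, $\neseq(\cons{x}{\Const{nil}})\rightarrow\Const{true}$, and $\neseq(\cons{x}{\cons{y}{z}})\rightarrow\neseq(\cons{y}{z})$. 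I would then show $\rew$ is terminating and locally confluent, hence convergent by Newman's lemma; since every rule is one of the listed equations read in one direction, the theory generated by $\rew$ is exactly the one in the statement.

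For termination, the plan is to exhibit a recursive path ordering. I would fix a well-founded precedence, e.g.\ $\Const{h} \succ \Const{h}_2 \succ \Const{f} \succ \Const{pair} \succ \bappend \succ \neseq \succ \cons \succ \Const{hd} \succ \Const{tl} \succ \Const{fst} \succ \Const{snd} \succ \Const{nil} \succ 0 \succ \Const{true}$, and give $\Const{h}_2$ a lexicographic status that compares its third (block-list) argument before the other two, all other symbols getting ordinary left-to-right lexicographic status. Then one checks that each rule's left-hand side strictly dominates its right-hand side: the rules for $\Const{hd}$, $\Const{tl}$, $\Const{fst}$, $\Const{snd}$, $\neseq$, and $\Const{h}_2(w,x,\Const{nil})\rightarrow\Const{fst}(x)$ follow from the subterm property and the precedence; $\Const{nil}\bappend x\rightarrow\cons{x}{\Const{nil}}$ and $(\cons{x}{y})\bappend z\rightarrow\cons{x}{(y\bappend z)}$ use $\bappend\succ\cons$; $\Const{h}(w,z)\rightarrow\Const{h}_2(w,(0,0),z)$ uses $\Const{h}\succ\Const{h}_2$ and $\Const{h}\succ\Const{pair}\succ 0$; and for the Merkle--Damg{\r{a}}rd rule $\Const{h}_2(w,x,\cons{y}{z})\rightarrow\Const{h}_2(w,\Const{f}(w,(x,y)),z)$ the permuted status does the work, since $\cons{y}{z}\succ z$ by the subterm property, after which one checks that $\Const{h}_2(w,x,\cons{y}{z})$ dominates the remaining right-hand arguments $w$ and $\Const{f}(w,(x,y))$ using $\Const{h}_2\succ\Const{f}\succ\Const{pair}$ together with the fact that $w,x$ are subterms and $y$ occurs inside $\cons{y}{z}$.

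For local confluence, the plan is to enumerate all critical pairs. The only symbols occurring at the root of a left-hand side are $\Const{hd}$, $\Const{tl}$, $\bappend$, $\Const{fst}$, $\Const{snd}$, $\Const{pair}$, $\Const{h}$, $\Const{h}_2$, $\neseq$; inspecting the proper non-variable subterms of the left-hand sides shows that the pairs with the same root ($\bappend$ on $\rew_3/\rew_4$, $\Const{h}_2$ on $\rew_9/\rew_{10}$, $\neseq$ on $\rew_{11}/\rew_{12}$) do not unify, so the only overlaps are between $\Const{fst}((x,y))\rightarrow x$ (resp.\ $\Const{snd}((x,y))\rightarrow y$) and the surjective-pairing rule $(\Const{fst}(u),\Const{snd}(u))\rightarrow u$, namely on $\Const{fst}((\Const{fst}(u),\Const{snd}(u)))$, $\Const{snd}((\Const{fst}(u),\Const{snd}(u)))$, and $(\Const{fst}((x,y)),\Const{snd}((x,y)))$. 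In each case the two reducts reach the same normal form (e.g.\ $\Const{fst}((\Const{fst}(u),\Const{snd}(u)))$ rewrites to $\Const{fst}(u)$ either way), so all critical pairs are joinable; by the critical pair lemma $\rew$ is locally confluent, and with termination it is convergent.

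The main obstacle is the termination of the rule obtained from \eqref{eq:h2cons}: it increases term size (introducing a fresh $\Const{f}$-application) and replicates the key argument $w$, so no simple size or symbol-count measure works, and even a multiset measure on the sizes of the block-list arguments of $\Const{h}_2$ can increase when the rule fires deep inside such an argument. Using a path ordering with a permuted lexicographic status for $\Const{h}_2$ sidesteps this, because such orderings are closed under contexts by construction, so rewriting below a context is handled automatically; an alternative would be a carefully tuned monotone polynomial interpretation in which $[\Const{h}_2]$ depends strongly and super-linearly on its list argument so as to absorb the contributions of $w$ and $\Const{f}$, or simply appealing to a standard automated termination check.
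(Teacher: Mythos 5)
Your proposal is correct and follows essentially the same route as the paper: orient all equations left to right, prove termination, and observe that the only critical pairs are the overlaps between the $\Const{fst}$/$\Const{snd}$ projection rules and the surjective-pairing rule $(\Const{fst}(x),\Const{snd}(x))\rightarrow x$, all of which are joinable. The only divergence is in the termination sub-argument, where the paper builds an ad hoc lexicographic measure (a weight $\hval$ tracking $\Const{h}$/$\Const{h}_2$ occurrences, then size, then $\bappend$-counts), but it explicitly remarks that a lexicographic path ordering giving $\Const{h}_2$ a status that compares its third argument before its second works equally well --- which is precisely the ordering you use.
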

\begin{proof}
We define $\rew$ by orienting all equations from left to right, as follows:
{\allowdisplaybreaks\begin{eqnarray}
\Const{hd}(\cons{x}{y}) &\rightarrow& x\label{red:hd2}\\
\Const{tl}(\cons{x}{y}) &\rightarrow& y\label{red:tl2}\\
\Const{nil} \bappend x &\rightarrow& \cons{x}{\Const{nil}}\label{red:nil++2}\\
(\cons{x}{y}) \bappend z &\rightarrow& \cons{x}{(y\bappend z)}\label{red:cons++2}\\
\neseq(\cons{x}{\cons{y}{z}}) &\rightarrow& \neseq(\cons{y}{z})\label{red:neseq-nil}\\
\neseq(\cons{x}{\Const{nil}}) &\rightarrow& \Const{true}\label{red:neseq-cons}\\
  \Const{fst}( (x,y) ) & \rightarrow  & x \label{red:fst2}\\
  \Const{snd}( (x,y) ) & \rightarrow  & y \label{red:snd2}\\
(\Const{fst}(x),\Const{snd}(x)) & \rightarrow  & x \label{red:pair}\\
\Const{h}(k,z) & \rightarrow  &  \Const{h}_2(k,(0,0),z)\label{red:h}\\
\Const{h}_2(k,x,\Const{nil}) & \rightarrow  &  \Const{fst}(x) \label{red:h2nil}\\
\Const{h}_2(k,x,\cons{y}{z}) & \rightarrow  &  \Const{h}_2(k,\Const{f}(k,(x,y)),z) \label{red:h2cons}
\end{eqnarray}}%
To prove that $\rew$ terminates, we order terms $M$ lexicographically, as follows:
\begin{enumerate}
\item by $\hval(M)$, where $\hval$ is defined by
\begin{align*}
\hval(\Const{h}_2(M_1, M_2, M_3)) &= \hval(M_2) + \hval(M_3) + \len(M_3)\\
\hval(\Const{h}(M_1, M_2)) &= \hval(M_2) + \len(M_2) + 1\\
\hval(f(M_1, \ldots, M_n)) &= \hval(M_1) + \dots + \hval(M_n)\\
&\qquad \text{ for all other functions}\\
\hval(M) &= 0\text{ when $M$ is a variable or a name}
\end{align*}
and the $\len$ of a term is defined by
{\allowdisplaybreaks\begin{align*}
&\len(\cons{M}{N}) = 1 + \len(N)\\*
&\quad\text{when the symbol $::$ has sort $\Block \times \BlockList \rightarrow \BlockList$}\\
&\len(M \bappend N) = 1 + \len(M)\\
&\len(f(M_1, \ldots, M_n)) = \max(\len(M_1), \ldots, \len(M_n))\\*
&\quad\text{where $f$ is a function symbol other than}\\*
&\qquad ::\hspace*{3mm}:\;\Block \times \BlockList \rightarrow \BlockList\\*
&\qquad \bappend\;:\;\BlockList \times \Block \rightarrow \BlockList\\*
&\quad\begin{minipage}{10.5cm}
such that the sort of the result of $f$ may contain $\BlockList$, that is,
this sort is $\BlockList$, $\BlockBlockList$, $\BlockBlockListList$,
or one of the sorts of pairs used in the syntactic sugar for
$\Rcv{\ell}{x,t,s}$ and $\Snd{\ell}{x,t,s}$.
\end{minipage}\\
&\len(M) = 0\text{ for all other terms $M$}; 
\end{align*}}%
\item then by the size of $M$; 
\item then by the number of occurrences of the $\bappend$ symbol in $M$; 
\item then by the sum of the lengths of the first arguments of $\bappend$ in $M$.

\end{enumerate}
This ordering is well-founded. By induction on $C$, we show that, for all term contexts $C$, 
\begin{itemize}
\item if $\len(M') \leq \len(M)$, then $\len(C[M']) \leq \len(C[M])$;
\item if $\hval(M') \leq \hval(M)$ and $\len(M') \leq \len(M)$, then $\hval(C[M']) \leq \hval(C[M])$;
\item if $\hval(M') < \hval(M)$ and $\len(M') \leq \len(M)$, then $\hval(C[M']) < \hval(C[M])$.
\end{itemize}
We notice that terms of sorts $\Bool$, $\Block$, $\BlockPair$, and $\BlockT$ have length 0.
For all rewrite rules $M \rightarrow M'$ above and all substitutions $\sigma$, 
we show that $\len(M'\sigma) \leq \len(M\sigma)$ by inspecting each rule.
For all rules except~\eqref{red:h}, \eqref{red:h2nil}, and~\eqref{red:h2cons} and all substitutions $\sigma$, we have $\hval(M'\sigma) \leq \hval(M\sigma)$ because 
\[\hval(M\sigma) = \sum_{x \in \fv(M)} \hval(x\sigma) \times (\text{number of occurrences of $x$ in $M$})\] 
and similarly for $M'$, and all variables $x$ occur at least as many times in $M$ as in $M'$.
We have 
$\hval(\Const{h}(M_1,M_2)) =  \hval(M_2) + \len(M_2) + 1$
and
$\hval(\Const{h}_2(M_1, (0,0), M_2)) = \hval(M_2) + \len(M_2)$,
so rule~\eqref{red:h} decreases $\hval$.
We have
$\hval(\Const{h}_2(M_1, M_2, \Const{nil})) = \hval(M_2)$,
so rule~\eqref{red:h2nil} preserves $\hval$.
We have 
$\hval(\Const{h}_2(M_1, M_2, \cons{M_3}{M_4})) = \hval(M_2) + \hval(M_3) + \hval(M_4) + \len(M_4) +1$
and 
\begin{align*}
&\hval(\Const{h}_2(M_1,\Const{f}(M_1,(M_2,M_3)),M_4)) \\
&\quad = \hval(\Const{f}(M_1,(M_2,M_3))) + \hval(M_4) + \len(M_4) \\
&\quad = \hval(M_1) + \hval(M_2) + \hval(M_3) + \hval(M_4) + \len(M_4) \\
&\quad = \hval(M_2) + \hval(M_3) + \hval(M_4) + \len(M_4)
\end{align*}
since $\hval(M_1) = 0$ because $M_1$ is a variable or a name since no function returns sort $\Key$.
Hence rule~\eqref{red:h2cons} decreases $\hval$.
Therefore, we have:
\begin{itemize}
\item Rules~\eqref{red:hd2}, \eqref{red:tl2}, \eqref{red:neseq-nil}, \eqref{red:neseq-cons}, \eqref{red:fst2}, \eqref{red:snd2}, \eqref{red:pair}, \eqref{red:h2nil} do not increase $\hval$ and decrease the size.
\item Rule~\eqref{red:nil++2} does not increase $\hval$, preserves the size and decreases the number of occurrences of $\bappend$.
\item Rule~\eqref{red:cons++2} does not increase $\hval$, preserves the size and the number of occurrences of $\bappend$, and decreases the sum because the length of the first argument decreases for the occurrence of $\bappend$ modified by rule~\eqref{red:cons++2} ($\len(N) < \len(\cons{M}{N})$) and is unchanged for all other occurrences of $\bappend$ in the term.
\item Rules~\eqref{red:h} and~\eqref{red:h2cons} decrease $\hval$.
\end{itemize}
Therefore, if $M$ reduces to $M'$ by any of these rules, then $M'$ is smaller than $M$ 
in a well-founded lexicographic ordering, and thus $\rew$ terminates.
(The termination of $\rew$ can also be proved using well-known techniques.
For instance, it can be proved using a lexicographic path ordering,
provided the third argument of $\Const{h}_2$, which decreases by~\eqref{red:h2cons}, is considered
before the second one, which increases, in the lexicographic ordering.)

The only critical pairs between these rules are:
\begin{itemize}
\item between rules~\eqref{red:fst2} and~\eqref{red:pair}:
$\Const{fst}((\Const{fst}(x), \Const{snd}(x)))$ reduces to $\Const{fst}(x)$ by both rules, and
$(\Const{fst}((x,\allowbreak y)), \allowbreak \Const{snd}((x,\allowbreak y)))$ reduces to $(x,y)$ by~\eqref{red:pair} or by~\eqref{red:fst2} and~\eqref{red:snd2}, 
so these two critical pairs are joinable.
\item between rules~\eqref{red:snd2} and~\eqref{red:pair}, symmetrically.
\end{itemize} 
Since all critical pairs are joinable, $\rew$ is confluent,
so it is convergent.
\end{proof}

\begin{lemma}\label{lem:eq-g}
Suppose that $\Sigma$ is equipped with the equational theory of Lemma~\ref{lem:ACconv}.
If $\Sigma \vdash \Const f(k, (\dots \Const f(k,\allowbreak ((0,0),M_1)) \ldots, M_n)) =
\Const f(k, (\dots \Const f(k,((0,0),M'_1)) \ldots, M'_{n'}))$, then
$n = n'$ and $\Sigma \vdash M_i = M'_i$ for all $i = 1..n$.
\end{lemma}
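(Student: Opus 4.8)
The plan is to exploit that the equational theory of Lemma~\ref{lem:ACconv} is presented by a convergent rewrite system $\rew$ in which $\Const{f}$ never heads a left-hand side, so the two terms in the hypothesis can be compared through their $\rew$-normal forms. Throughout I write $T_n \eqdef \Const{f}(k, (\dots \Const{f}(k, ((0,0),M_1)) \dots, M_n))$ and similarly $T'_{n'}$, and I note at the outset that the statement implicitly assumes $n, n' \ge 1$, which makes the base case well-formed.

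First I would compute the normal form of $T_n$. Reducing each $M_i$ to its $\rew$-normal form $\widehat{M_i}$, I claim that $\widehat{T_n} = \Const{f}(k, (\dots \Const{f}(k, ((0,0), \widehat{M_1})) \dots, \widehat{M_n}))$. The point to check is that no redex is ever created at one of the $\Const{f}$-nodes or at one of the pairing nodes of this skeleton: since $\Const{f}$ does not head any left-hand side of $\rew$, no rule fires at an $\Const{f}$-node; the only rule of $\rew$ with a pair at the root of its left-hand side is the one rewriting $(\Const{fst}(x),\Const{snd}(x))$ to $x$, which requires the first component to be headed by $\Const{fst}$, whereas in the skeleton every pair has first component either the constant pair $(0,0)$ or a term headed by $\Const{f}$; and the constant $0$ (hence $(0,0)$) and $k$ are themselves irreducible. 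Consequently the skeleton is stable under reduction, only the subterms $M_i$ get rewritten, and confluence together with termination of $\rew$ give the claimed normal form; the same computation applies to $T'_{n'}$.

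Next, by convergence of $\rew$, the hypothesis $\Sigma \vdash T_n = T'_{n'}$ is equivalent to the syntactic identity $\widehat{T_n} = \widehat{T'_{n'}}$. I would then peel off the outermost layer by structural induction on $n$: both normal forms are of the shape $\Const{f}(k, (A, B))$, with $B$ equal to $\widehat{M_n}$ (resp.\ $\widehat{M'_{n'}}$) and $A$ equal either to $(0,0)$ (exactly when the corresponding index is $1$) or to a term headed by $\Const{f}$ (when the index is $>1$). Syntactic equality forces $B = B'$, hence $\Sigma \vdash M_n = M'_{n'}$, and $A = A'$. Since $(0,0)$ is not headed by $\Const{f}$, comparing the heads of $A$ and $A'$ shows $n = 1 \iff n' = 1$; in that case we are done, and otherwise $A$ and $A'$ are precisely $\widehat{T_{n-1}}$ and $\widehat{T'_{n'-1}}$, so the induction hypothesis yields $n-1 = n'-1$ and $\Sigma \vdash M_i = M'_i$ for $i \le n-1$, completing the proof.

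The only genuine obstacle is the normal-form computation in the second step: one must verify carefully that inserting an arbitrary normal form $\widehat{M_i}$ as the second component of a pair whose first component is $(0,0)$ or $\Const{f}(k,\dots)$ never produces an instance of the $(\Const{fst}(x),\Const{snd}(x)) \to x$ rule, and that reductions strictly inside the $M_i$ cannot propagate upward through the $\Const{f}$- or pairing-nodes of the skeleton. Once this stability observation is established, everything else is routine: a single structural induction peeling off one $\Const{f}$-layer at a time.
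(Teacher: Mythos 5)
Your proof is correct and follows essentially the same route as the paper's: both arguments peel off one $\Const{f}$-layer at a time by induction on $n$, relying on the facts that $\Const{f}$ never heads a left-hand side of $\rew$ and that the pair-reconstruction rule cannot fire on the skeleton, with your version merely making the normal-form stability check explicit where the paper leaves it implicit. The only divergence is that the paper also treats the degenerate cases $n=0$ or $n'=0$ (reading the term as $(0,0)$ there), which you exclude by assumption but which your head-comparison step would dispatch in the same way.
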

\begin{proof}
We proceed by induction on $n$.
\begin{itemize}
\item If $n = n' = 0$, the result holds trivially.
\item If $n = 0$ and $n' > 0$, then 
$\Sigma \vdash (0,0) = \Const f(k, M)$ for some term $M$ and, after reducing under 
$\rew$ of Lemma~\ref{lem:ACconv}, 
$(0,0) = \Const f(k, M')$ for some term $M'$. 
This equality does not hold, so this case is excluded.
By symmetry, the case $n>0$ and $n' = 0$ is also excluded.

\item If $n>0$ and $n'>0$, then 
$\Sigma \vdash \Const f(k,\allowbreak  (\dots \Const f(k,\allowbreak ((0,0),M_1)) \dots, \allowbreak M_n)) = \Const f(k,\allowbreak (\dots \Const f(k,\allowbreak ((0,0),\allowbreak M'_1)) \ldots, \allowbreak M'_{n'}))$
implies 
$\Sigma \vdash \Const f(k, \allowbreak (\dots \Const f(k,\allowbreak ((0,0),\allowbreak M_1)) \dots, \allowbreak M_{n-1})) = \Const f(k, \allowbreak (\dots \Const f(k,\allowbreak ((0,0),\allowbreak M'_1)) \ldots, \allowbreak M'_{n'-1}))$ 
and $\Sigma \vdash M_n=M'_{n'}$. 
By induction hypothesis, $n = n'$ and $\Sigma \vdash M_i = M'_i$
for all $i \leq n-1$.\qedhere
\end{itemize}
\end{proof}

\begin{restate}{Theorem}{\ref{th:indif}}
$\Res{k}{(A^0_h \parop A^0_f)} 
\bicong 
\Res{k}{( A^1_h \parop A^1_f)} $.
\end{restate}%
\begin{proof}
In this proof, we use uppercase letters $X, Y, Z, S, \ldots$ for terms substituted for variables named with the corresponding lowercase letters $x, y, z, s, \ldots$ during execution.
We first extend the notations of Section~\ref{SUBSEC:INDIF}
with intermediate processes parametrized by terms, which we will use
to define our candidate bisimulation.
{\allowdisplaybreaks\begin{align*}
A^0_{\mathit{hi}}(Y) & {} = \IfThen{\neseq(Y)}{\Const{true}}{\Snd{c'_h}{ \Const{h}(k,Y) }} \\
A^1_{\mathit{hi}}(Y) & {} = \IfThen{\neseq(Y)}{\Const{true}}{\Snd{c'_h}{ \Const{h}'(k,Y) }} \\[1ex]
A^1_f(S) & {} = \Res{\ell,c_s}({ \Repl{\Rcv{c_s}{s}. \Rcv{c_f}{x}. \Snd{\ell}{x,s,s}}  \parop \Repl{Q} \parop \Snd{c_s}{S}})
\\
A^1_{\mathit{fi}}(X, S) & {} = \Res{\ell,c_s}({ \Repl{\Rcv{c_s}{s}. \Rcv{c_f}{x}. \Snd{\ell}{x,s,s}}  \parop \Repl{Q} \parop \Snd{\ell}{X,S,S}})
\end{align*}}%
Hence, for hash requests
we have 
$A^0_h \ltr{\Rcv{c_h}{Y}} A^0_h \parop A^0_{\mathit{hi}}$ and
$A^1_h \ltr{\Rcv{c_h}{Y}} A^1_h \parop A^1_{\mathit{hi}}$; 
and for compression requests we have 
$A^1_f(S) \rightarrow \ltr{\Rcv{c_f}{X}} A^1_{\mathit{fi}}(X,S) \rightarrow^* A^1_f(S') \parop 
\Snd{c'_f}{X'}$ for some $S'$ and $X'$ with, initially, $A^1_f = A^1_f(\cons{((0,0),\Const{nil})}{\Const{nil}})$.

Consider traces that interleave
inputs $\Rcv{c_f}{X_i}$ for $i \in \fset$, 
outputs $\Res{x_i}{\Snd{c'_f}{x_i}}$ for $i \in \fsetOut \subseteq \fset$, 
inputs $\Rcv{c_h}{Y_i}$ for $i \in \hset$, and 
outputs $\Res{h_i}{\Snd{c'_h}{h_i}}$ for $i \in \hsetOut \subseteq \hset$,
for some disjoint index sets $\fset$ and $\hset$, such that 
variables $x_j$ or $h_j$ may occur in $X_i$ or $Y_i$ only when $j < i$.
We let $\rel$ be the smallest relation closed by reductions within $A^1_f(S)$ or $A^1_{\mathit{fi}}(X_{i_0}, S)$,
but not in $A^0_{\mathit{his}}$ or $A^1_{\mathit{his}}$, such that 
\begin{align*}
&\Res{k, \vect{x}}{(A^0_h \parop A^0_{\mathit{his}} \parop A^0_f     \parop \sigma^0 \parop O)} 
\;\rel\;
\Res{k, \vect{x}}{( A^1_h \parop A^1_{\mathit{his}} \parop A^1_f(S) \parop \sigma^1 \parop O)} \\
\text{and }&\Res{k, \vect{x}}{(A^0_h \parop A^0_{\mathit{his}} \parop A^0_f  \parop \sigma^0 \parop O)} 
\;\rel\;
\Res{k, \vect{x}}{( A^1_h \parop A^1_{\mathit{his}} \parop A^1_{\mathit{fi}}(X_{i_0}, S) \parop \sigma^1 \parop O')} 
\end{align*}
where the following conditions hold:
\begin{itemize} 
\item 
$J = \hsetOut \uplus \hsetBOut \uplus \hsetDrop \uplus \hsetInt$,
$I = \fsetOut \uplus \fsetBOut = \fsetRel \uplus \fsetNRel$ and, in
the second case of the definition of $\rel$, $i_0 \in \fsetBOut$
is the greatest index in $\hset$ and $\fset$.

Intuitively, 
$\hset$ collects the indices of all hash requests processed so far, partitioned into
$\hsetInt$, for requests before the test $\neseq(Y_i) = \Const{true}$;
$\hsetDrop$, for requests after failing the test;
$\hsetBOut$, for requests after passing the test but before the output; and
$\hsetOut$, for requests after passing the test and performing the output.
And $\fset$ collects the indices of all compression requests processed so far,
partitioned into
$\fsetBOut$, for requests before the output and
$\fsetOut$, for requests after the output;
and also into 
$\fsetRel$, for requests that must be made consistent with the hash function, and
$\fsetNRel$, for unrelated requests;
$i_0$ is the index of the current compression request.

\item $A^0_{\mathit{his}} = \prod_{i \in \hsetInt} A^0_{\mathit{hi}}(Y_i)$ 
and $A^1_{\mathit{his}} = \prod_{i \in \hsetInt} A^1_{\mathit{hi}}(Y_i)$.

These processes represent requests before the test $\neseq(Y_i) = \Const{true}$.

\item 
$\vect{x} = \{ x_i \mid i \in \fsetBOut \} \cup \{ h_i \mid i \in \hsetBOut \}$
are pairwise distinct variables, and the name $k$ and the variables $\vect x$ do not occur in any $(X_i)_{i \in \fset}$ 
or $(Y_i)_{i \in \hset}$.

\item $\neseq(Y_i) = \Const{true}$ for $i \in \hsetOut \cup \hsetBOut$, and $\neseq(Y_i) \neq \Const{true}$ for $i \in \hsetDrop$.

\item 
$O = \prod_{i \in \fsetBOut} \Snd{c'_f}{x_i} \parop \prod_{i \in \hsetBOut} \Snd{c'_h}{h_i}$ and

$O' = \prod_{i \in \fsetBOut \setminus \{ i_0\}} \Snd{c'_f}{x_i} \parop \prod_{i \in \hsetBOut} \Snd{c'_h}{h_i}$.

These parallel compositions represent pending request
outputs, and each output transition consists of removing one message
from $O$ and one restriction on the corresponding variable in
$\vect{x}$.

\item $S$ is (any list representation of) a finite map from pairs of blocks to lists of blocks
that maps $(0,0)$ to $\Const{nil}$ 
and $(\Const{h}'(k, M), \Const{f}_c(k, M))$
to $M$ for some lists $M = \cons{M_1}{\cons{\dots}{\cons{M_n}{\Const{nil}}}}$
with $n > 0$.
The range of $S$ is prefix-closed, that is, if $S$ maps a pair to $M\bappend M'$, 
then it also maps a pair to~$M$. 

The variables $x_i$ and $h_i$ do not occur in $S$.

For every $i \in \fset$, $S$ maps $\Const{fst}(X_i\sigma^1)$ to some
list $M$ if and only if $i \in \fsetRel$; then $S$ also maps
$x_i\sigma^1$ to $M \bappend \Const{snd}(X_i\sigma^1)$, except when
$i = i_0$ in the second case in the definition of $\rel$.

\item $\sigma^0 = \sigma_h^0 \parop \sigma_f^0 \parop \sigma_{\mathit{fo}}^0 $
and $\sigma^1 = \sigma^1_h \parop \sigma^1_f \parop \sigma^1_{\mathit{fo}}$, where

$\sigma^0_h = \{(\subst{\Const h(k,Y_i)}{h_i})_{i \in \hsetOut \cup \hsetBOut} \}$ and 
$\sigma^1_h = \{(\subst{\Const h'(k,Y_i)}{h_i})_{i \in \hsetOut \cup \hsetBOut} \}$,

$\sigma^0_f = \{(\subst{\Const f(k,X_i)}{x_i})_{i \in \fsetRel} \}$ and 
$\sigma^1_f = \{(\subst{(\Const h'(k, Z'_i), \Const f_c(k, Z'_i))}{x_i})_{i \in \fsetRel} \}$ 
where $S$ maps $\Const{fst}(X_i\sigma^1)$ to $Z_i$
and $Z'_i$ is $Z_i \bappend \Const{snd}(X_i)$,

$\sigma^0_{\mathit{fo}} = \{(\subst{\Const f(k,X_i)}{x_i})_{i \in \fsetNRel}\}$ and 
$\sigma^1_{\mathit{fo}} = \{(\subst{\Const f'(k,X_i)}{x_i})_{i \in \fsetNRel}\}$.

\end{itemize}
With $\sigma^1$ defined in the second case of $\rel$, for instance, we have
\[A^1_{\mathit{fi}}(X_{i_0}\sigma^1, S) \rightarrow^* A^1_f(\cons{
(x_{i_0}\sigma^1, M \bappend \Const{snd}(X_{i_0}\sigma^1) 
)}{S}) \parop \Snd{c_f'}{x_{i_0} \sigma^1}\]
when $S$ maps $\Const{fst}(X_{i_0}\sigma^1)$ to $M$,
and $A^1_{\mathit{fi}}(X_{i_0}\sigma^1, S)  \rightarrow^* A^1_f(S) \parop \Snd{c_f'}{x_{i_0} \sigma^1}$ otherwise.
Hence, the second case of the definition of $\rel$ reduces to the first one.
However, the first case is useful for the initial case, and 
the second case is useful after inputs $\Rcv{c_f}{X_i}$.
Taking $S = \cons{((0,0),\Const{nil})}{\Const{nil}}$ and $\fset = \hset = \emptyset$,
the first case yields $\Res{k}(A^0_h \parop A^0_f) \rel \Res{k}(A^1_h \parop A^1_f)$,
so $\rel$ includes our target observational equivalence.
We show that $\rel \cup \rel^{-1}$ is a labelled bisimulation.
\begin{enumerate}
\item We show that, if $A \rel B$, then $A \enveq B$.
To this end, we prove the two properties below by induction on the number of variables in the domain of
$\sigma^0$ and $\sigma^1$.
\begin{enumerate}
\item[P1.] $\Res{k} \sigma^0 \enveq \Res{k} \sigma^1$ and
\item[P2.] if $M = \cons{M_1}{\cons{\dots}{\cons{M_n}{\Const{nil}}}}$
for some $n \geq 1$
contains neither $k$ nor the variable with greatest index in $\dom(\sigma^0) = \dom(\sigma^1)$, 
then for all $i \in \fset$ we have 
$\Sigma \vdash \Const{fst}(x_i \sigma^0) = \Const{h}(k,M\sigma^0)
\Longleftrightarrow
\Sigma \vdash \Const{fst}(x_i \sigma^1) = \Const{h}'(k,M\sigma^1)$.
\end{enumerate}

For all $i \in \fset$, $x_i \sigma^0  = \Const{f}(k, X_i\sigma^0)$ 
and for all $i \in \hsetOut \cup \hsetBOut$, $h_i \sigma^0 = \Const{fst}(\Const{f}(k, M))$ for some term $M$.
Hence, by Corollary~\ref{cor:key-stat-eq}, 
\[\Res{k} \sigma^0 \enveq \Res{\vect a}\{(\subst{x_i \sigma_0}{x_i})_{i \in \fset}, (\subst{\Const{fst}(h_i \sigma_0)}{h_i})_{i \in \hsetOut \cup \hsetBOut}\}\]
where the following conditions hold: 
\begin{itemize}
\item $x_i \sigma_0$ for $i \in \fset$ and $h_i \sigma_0$ for $i \in \hsetOut \cup \hsetBOut$ are names in $\vect a$.
\item For all $i, j \in \fset$, $x_i \sigma_0 = x_j \sigma_0$ if and only if $\Sigma \vdash \Const{f}(k, X_i\sigma^0 ) = \Const{f}(k, X_j\sigma^0)$, that is,
$\Sigma \vdash X_i\sigma^0 = X_j\sigma^0$.
\item For all $i, j \in \hsetOut \cup \hsetBOut$, $h_i \sigma_0 = h_j \sigma_0$ if and only if $\Sigma \vdash \Const f(k, \allowbreak (\dots \Const f(k,\allowbreak ((0,0),\allowbreak M_1)) \ldots, \allowbreak M_n)) =
\Const f(k, (\dots \Const f(k,((0,0),M'_1)) \ldots, M'_{n'}))$ where $Y_i\sigma^0 = \cons{M_1}{\cons{\dots}{ \cons{M_n}{\Const{nil}}}}$
and $Y_j\sigma^0 = \cons{M'_1}{\cons{\dots}{ \cons{M'_n}{\Const{nil}}}}$, that is,
$\Sigma \vdash Y_i\sigma^0 = Y_j\sigma^0$, by Lemma~\ref{lem:eq-g}.
\item For all $i\in \fset$ and $j \in \hsetOut \cup \hsetBOut$, $x_i \sigma_0 = h_j \sigma_0$ if and only if $\Sigma \vdash \Const{f}(k, X_i\sigma^0) = \Const f(k, \allowbreak (\dots \Const f(k,\allowbreak ((0,0),\allowbreak M_1)) \ldots, \allowbreak M_n))$
where $Y_j\sigma^0 = \cons{M_1}{\cons{\dots}{ \cons{M_n}{\Const{nil}}}}$. In this case, we have
$\Sigma \vdash \Const{fst}(x_i \sigma^0) = \Const{fst}(\Const{f}(k, X_i\sigma^0)) = \Const{fst}(\Const f(k, \allowbreak (\dots \Const f(k,\allowbreak ((0,0),\allowbreak M_1)) \ldots, \allowbreak M_n)))
= \Const{h}(k,Y_j\sigma^0)$. Conversely, if $\Sigma \vdash \Const{fst}(x_i \sigma^0) = \Const{h}(k,Y_j\sigma^0)$, then
$\Sigma \vdash \Const{fst}(\Const{f}(k, X_i\sigma^0)) = \Const{fst}(\Const f(k, \allowbreak (\dots \Const f(k,\allowbreak ((0,0),\allowbreak M_1)) \ldots, \allowbreak M_n)))$.
Since these terms do not reduce at the root under the rewrite system $\rew$ of Lemma~\ref{lem:ACconv},
we have $\Sigma \vdash \Const{f}(k, X_i\sigma^0) = \Const f(k, \allowbreak (\dots \Const f(k,\allowbreak ((0,0),\allowbreak M_1)) \ldots, \allowbreak M_n))$.
Therefore, $x_i \sigma_0 = h_j \sigma_0$ if and only if $\Sigma \vdash \Const{fst}(x_i \sigma^0) = \Const{h}(k,Y_j\sigma^0)$.
\end{itemize}
By Lemma~\ref{lem:pairs-stat-eq}, we can replace the names $x_i \sigma_0$ and $h_i \sigma_0$ with pairs
$(x_i \sigma_1, \allowbreak x_i \sigma_2)$ and $(h_i \sigma_1, h_i \sigma_2)$ respectively.
Thus \[\Res{k} \sigma^0 \enveq \Res{\vect a'}\{(\subst{(x_i \sigma_1, x_i \sigma_2)}{x_i})_{i \in \fset}, \allowbreak (\subst{h_i \sigma_1}{h_i})_{i \in \hsetOut \cup \hsetBOut}\}\]
where the following conditions hold:
\begin{itemize}
\item $x_i \sigma_1, x_i \sigma_2$ for $i \in \fset$ and $h_i \sigma_1$ for $i \in \hsetOut \cup \hsetBOut$ are names in $\vect a'$.
\item For all $i, j \in \fset$, $x_i \sigma_1 \neq  x_j \sigma_2$.
\item For all $i\in \fset$ and $j \in \hsetOut \cup \hsetBOut$, $x_i \sigma_2 \neq  h_j \sigma_1$.
\item For all $i, j \in \fset$, $x_i \sigma_1 = x_j \sigma_1 \Longleftrightarrow x_i \sigma_2 = x_j \sigma_2 \Longleftrightarrow \Sigma \vdash X_i\sigma^0 = X_j\sigma^0$.
\item For all $i, j \in \hsetOut \cup \hsetBOut$, $h_i \sigma_1 = h_j \sigma_1 \Longleftrightarrow \Sigma \vdash Y_i\sigma^0 = Y_j\sigma^0$.
\item For all $i\in \fset$ and $j \in \hsetOut \cup \hsetBOut$, $x_i \sigma_1 = h_j \sigma_1
\Longleftrightarrow \Sigma \vdash \Const{fst}(x_i \sigma^0) = \Const{h}(k,Y_j\sigma^0)$.
\end{itemize}

For all $i \in \fsetNRel$, $x_i \sigma^1  = \Const{f}'(k, X_i\sigma^1)$,
for all $i \in \fsetRel$, $x_i \sigma^1  = (\Const{h}'(k, Z'_i\sigma^1), \allowbreak \Const{f}_c(k, Z'_i\sigma^1))$,
and for all $i \in \hsetOut \cup \hsetBOut$, $h_i \sigma^1 = \Const{h}'(k, Y_i\sigma^1)$.
Hence, by Corollary~\ref{cor:key-stat-eq}, 
\[\Res{k} \sigma^1 \enveq \Res{\vect a}\{
(\subst{x_i \sigma_3}{x_i})_{i \in \fsetNRel}, 
(\subst{(x_i \sigma_4, x_i \sigma_5)}{x_i})_{i \in \fsetRel}, 
(\subst{h_i \sigma_4}{h_i})_{i \in \hsetOut \cup \hsetBOut}\}\]
where the following conditions hold:
\begin{itemize}
\item $x_i \sigma_3$ for $i \in \fsetNRel$, $x_i \sigma_4$ and $x_i \sigma_5$ for $i \in \fsetRel$, and $h_i \sigma_4$ for $i \in \hsetOut \cup \hsetBOut$ are names in $\vect a$.
\item For all $i, j \in \fsetRel$, $x_i \sigma_4 \neq x_j \sigma_5$.
\item For all $i \in \fsetNRel$ and $j \in \fsetRel$, $x_i \sigma_3 \neq x_j \sigma_4$ and $x_i \sigma_3 \neq x_j \sigma_5$.
\item For all $i \in \fsetNRel$ and $j \in \hsetOut \cup \hsetBOut$, $x_i \sigma_3 \neq h_j \sigma_4$.
\item For all $i \in \fsetRel$ and $j \in \hsetOut \cup \hsetBOut$, $x_i \sigma_5 \neq h_j \sigma_4$.
\item For all $i, j \in \fsetNRel$, $x_i \sigma_3 = x_j \sigma_3\Longleftrightarrow \Sigma \vdash \Const{f}'(k, X_i\sigma^1) = \Const{f}'(k, X_j\sigma^1) \Longleftrightarrow 
\Sigma \vdash X_i\sigma^1 = X_j\sigma^1$.
\item For all $i, j \in \hsetOut \cup \hsetBOut$, $h_i \sigma_4 = h_j \sigma_4\Longleftrightarrow \Sigma \vdash Y_i\sigma^1 = Y_j\sigma^1$.
\item For all $i \in \fsetRel$ and $j \in \hsetOut \cup \hsetBOut$, $x_i \sigma_4 = h_j \sigma_4\Longleftrightarrow \Sigma \vdash \Const{fst}(x_i\sigma^1) = \Const{h}'(k, Y_j\sigma^1)$.
\item For all $i, j \in \fsetRel$, $x_i \sigma_4 = x_j \sigma_4\Longleftrightarrow x_i \sigma_5 = x_j \sigma_5\Longleftrightarrow \Sigma \vdash Z'_i\sigma^1 = Z'_j\sigma^1$.
In this case, by definition of $Z'_i$, $\Sigma \vdash Z_i\sigma^1 = Z_j\sigma^1$ and $\Sigma \vdash \Const{snd}(X_i \sigma^1) = \Const{snd}(X_j \sigma^1)$.
Since $S$ maps $\Const{fst}(X_i \sigma^1)$ to $Z_i$
and $\Const{fst}(X_j \sigma^1)$ to $Z_j$,
we have  $Z_i = Z_i\sigma^1$ and $Z_j = Z_j\sigma^1$, so
either $\Sigma \vdash Z_i = Z_j = \Const{nil}$ and $\Sigma \vdash \Const{fst}(X_i \sigma^1) = (0,0) = \Const{fst}(X_j \sigma^1)$
or $\Sigma \vdash Z_i = Z_j \neq \Const{nil}$ and $\Sigma \vdash \Const{fst}(X_i \sigma^1) = (\Const{h'}(k, Z_i), \Const{f}_c(k, Z_i)) = (\Const{h'}(k, Z_j), \Const{f}_c(k, Z_j)) = \Const{fst}(X_j \sigma^1)$.
So in both cases, $\Sigma \vdash X_i \sigma^1 = X_j \sigma^1$. Conversely, if $\Sigma \vdash X_i \sigma^1 = X_j \sigma^1$, then $\Sigma \vdash Z'_i\sigma^1 = Z'_j\sigma^1$,
since $Z'_i$ is computed from $X_i$.
Therefore, for all $i, j \in \fsetRel$, $x_i \sigma_4 = x_j \sigma_4\Longleftrightarrow x_i \sigma_5 = x_j \sigma_5\Longleftrightarrow \Sigma \vdash X_i \sigma^1 = X_j \sigma^1$.
\end{itemize}
By Lemma~\ref{lem:pairs-stat-eq}, we can replace the names $x_i \sigma_3$ for $i \in \fsetNRel$ with pairs
$(x_i \sigma_4, \allowbreak x_i \sigma_5)$.
Thus $\Res{k} \sigma^1 \enveq \Res{\vect a'}\{
(\subst{(x_i \sigma_4, x_i \sigma_5)}{x_i})_{i \in \fset}, 
(\subst{h_i \sigma_4}{h_i})_{i \in \hsetOut \cup \hsetBOut}\}$
where the following conditions hold:
\begin{itemize}
\item $x_i \sigma_4$ and $x_i \sigma_5$ for $i \in \fset$, and $h_i \sigma_4$ for $i \in \hsetOut \cup \hsetBOut$ are names in $\vect a'$.
\item For all $i, j \in \fset$, $x_i \sigma_4 \neq x_j \sigma_5$.
\item For all $i \in \fset$ and $j \in \hsetOut \cup \hsetBOut$, $x_i \sigma_5 \neq h_j \sigma_4$.
\item For all $i, j \in \fset$, $x_i \sigma_4 = x_j \sigma_4\Longleftrightarrow x_i \sigma_5 = x_j \sigma_5\Longleftrightarrow \Sigma \vdash X_i \sigma^1 = X_j \sigma^1$.
Indeed, when $i\in\fsetRel$ and $j\in\fsetNRel$, we have $x_i \sigma_4 \neq x_j \sigma_4$, $x_i \sigma_5 \neq x_j \sigma_5$,
and $\Sigma \vdash X_i \sigma^1 \neq X_j \sigma^1$ since $\Const{fst}(X_i\sigma^1)$ and $\Const{fst}(X_j\sigma^1)$ are not mapped to the same value by $S$.
When $i$ and $j$ are both in $\fsetRel$ or both in $\fsetNRel$, the result comes from the equivalences shown above.
\item For all $i, j \in \hsetOut \cup \hsetBOut$, $h_i \sigma_4 = h_j \sigma_4\Longleftrightarrow \Sigma \vdash Y_i\sigma^1 = Y_j\sigma^1$.
\item For all $i \in \fset$ and $j \in \hsetOut \cup \hsetBOut$, $x_i \sigma_4 = h_j \sigma_4\Longleftrightarrow \Sigma \vdash \Const{fst}(x_i\sigma^1) = \Const{h}'(k, Y_j\sigma^1)$.
Indeed, if $i \in \fsetNRel$, we have $x_i \sigma_4 \neq h_j \sigma_4$ and $\Sigma \vdash \Const{fst}(x_i\sigma^1) \neq \Const{h}'(k, Y_j\sigma^1)$.
When $i \in \fsetRel$, the result comes from an equivalence shown above.
\end{itemize}
Since $X_i$ and $Y_i$ contain variables $x_j$ and $h_j$ only with $j < i$, 
the variable of $\dom(\sigma^0)$ with the greatest
index does not occur in $X_i$ and $Y_i$, so by induction hypothesis, we have 
$\Sigma \vdash X_i \sigma^0 = X_j \sigma^0\Longleftrightarrow \Sigma \vdash X_i \sigma^1 = X_j \sigma^1$
and 
$\Sigma \vdash Y_i \sigma^0 = Y_j \sigma^0\Longleftrightarrow \Sigma \vdash Y_i \sigma^1 = Y_j \sigma^1$,
so it suffices to show property P2 to obtain $\Res{k} \sigma^0 \enveq \Res{k} \sigma^1$.

Let us now show property P2, by induction on $n$. 
Let $M = \cons{M_1}{\cons{\dots}{\cons{M_n}{\Const{nil}}}}$
be a term that does not contain $k$ nor the variable with greatest index in $\dom(\sigma^0) = \dom(\sigma^1)$, $n \geq 1$, and $i \in \fset$.

Suppose that $\Sigma \vdash \Const{fst}(x_i \sigma^0) = \Const{h}(k,M\sigma^0)$.
We have 
\begin{align*}
\Const{fst}(x_i \sigma^0) &= \Const{fst}(\Const{f}(k, X_i\sigma^0))\\
\Const{h}(k,M\sigma^0) &= \Const{fst}( \Const f(k,(\dots \Const f(k,((0,0),M_1))\dots,M_n)))\sigma^0
\end{align*}
so
\[\Sigma \vdash X_i\sigma^0 = (\dots \Const f(k,((0,0),M_1))\dots,M_n)\sigma^0\]
If $n = 1$, we have $\Sigma \vdash X_i\sigma^0 = ((0,0), M_n\sigma^0)$.
Since $X_i$ and $M_n$ do not contain the variable of $\dom(\sigma^0)$ with the greatest
index, we have $\Sigma \vdash X_i\sigma^1 = ((0,0), M_n\sigma^1)$ by induction hypothesis,
so $S$ maps $\Const{fst}(X_i\sigma^1) = (0,0)$ to $Z_i = \Const{nil}$,
hence $Z'_i = Z_i \bappend \Const{snd}(X_i)$, so
$\Sigma \vdash Z'_i \sigma^1 = \Const{nil} \bappend M_n\sigma^1 = \cons{M_n\sigma^1}{\Const{nil}} = M\sigma^1$
and $\Sigma \vdash \Const{fst}(x_i\sigma^1) = \Const{h}'(k, Z'_i\sigma^1) = \Const{h}'(k, M\sigma^1)$.

If $n> 1$, let $M' = \cons{M_1}{\cons{\dots}{\cons{M_{n-1}}{\Const{nil}}}}$ and $H = \Const f(k,(\dots \Const f(k,((0,0),\allowbreak M_1))\allowbreak \dots,M_{n-1}))$.
We have $\Sigma \vdash X_i\sigma^0 = (H, M_n)\sigma^0$.
Since $k$ does not occur in $X_i$ and $X_i \sigma^0$ is of the form $(H\sigma^0,M_n\sigma^0) = (\Const{f}(k, \allowbreak \cdot), \cdot)$, 
there exists $j_0\in \fset$ such that $\Sigma \vdash H\sigma^0 = x_{j_0}\sigma^0$ and $x_{j_0}$ occurs in $X_i$, so
$j_0 < i$.
Thus $\Sigma \vdash\Const{fst}(x_{j_0}\sigma^0) = \Const{fst}(H\sigma^0) = \Const{h}(k, M' \sigma^0)$,
so by induction hypothesis, 
\[\Sigma \vdash\Const{fst}(x_{j_0}\sigma^1) = \Const{h}'(k, M' \sigma^1)\]
By construction of $\sigma^1$, we have 
\[\Sigma \vdash\Const{snd}(x_{j_0}\sigma^1) = \Const{f}_c(k, M' \sigma^1)\]
Moreover, we have $\Sigma \vdash X_i\sigma^0 = (x_{j_0}, M_n) \sigma^0$
and $X_i$, $x_{j_0}$, and $M_n$ do not contain the variable of $\dom(\sigma^0)$ with
the greatest index, so we have $\Sigma \vdash X_i\sigma^1 = (x_{j_0}, M_n) \sigma^1$ by induction hypothesis.
Hence $\Sigma \vdash \Const{fst}(X_i \sigma^1) = x_{j_0}\sigma^1 = (\Const{h}'(k, M' \sigma^1), \Const{f}_c(k, M' \sigma^1))$.
Hence $S$ maps $\Const{fst}(X_i \sigma^1)$ to $Z_i$ such that $\Sigma \vdash Z_i = M' \sigma^1$,
so $\Sigma \vdash Z'_i \sigma^1 = Z_i \bappend \Const{snd}(X_i)\sigma^1 = 
M\sigma^1$,
so $\Sigma \vdash \Const{fst}(x_i\sigma^1) = \Const{h}'(k, Z'_i\sigma^1) = \Const{h}'(k, M\sigma^1)$.

Conversely, suppose that $\Sigma \vdash \Const{fst}(x_i \sigma^1) = \Const{h}'(k,M\sigma^1)$. Thus $i \in \fsetRel$ and $\Sigma \vdash Z'_i\sigma^1 = M\sigma^1$.
So $S$ maps $\Const{fst}(X_i\sigma^1)$ to some $Z_i$.

If $Z_i = \Const{nil}$, then $\Sigma \vdash \Const{fst}(X_i\sigma^1) = (0,0)$.
Since $X_i$ does not contain the variable of $\dom(\sigma^0)$ with the greatest
index, we have $\Sigma \vdash \Const{fst}(X_i\sigma^0) = (0,0)$ by induction hypothesis.
Moreover $Z'_i = Z_i \bappend \Const{snd}(X_i)$, so
$\Sigma \vdash M\sigma^1 = Z'_i\sigma^1 = \cons{\Const{snd}(X_i\sigma^1)}{\Const{nil}}$.
Since $M$ and $X_i$ do not contain the variable of $\dom(\sigma^0)$ with the greatest
index, we have $\Sigma \vdash M\sigma^0 = \cons{\Const{snd}(X_i\sigma^0)}{\Const{nil}}$ by induction hypothesis.
We obtain
$\Sigma \vdash \Const{h}(k, M\sigma^0) = \Const{fst}(\Const{f}(k, ((0,0), \allowbreak \Const{snd}(X_i\sigma^0))) = \Const{fst}(\Const{f}(k, X_i\sigma^0)) = \Const{fst}(x_i\sigma^0)$.

If $Z_i \neq \Const{nil}$, then  $\Sigma \vdash \Const{fst}(X_i\sigma^1) = (\Const{h}'(k, Z_i), \Const{f}_c(k, Z_i))$ and $Z'_i = Z_i \bappend \Const{snd}(X_i)$.
Since  $\Sigma \vdash Z'_i\sigma^1 = M\sigma^1$, we have
\begin{align*}
&\Sigma \vdash Z_i = (\cons{M_1}{\cons{\dots}{\cons{M_{n-1}}{\Const{nil}}}})\sigma^1\\
&\Sigma \vdash \Const{snd}(X_i\sigma^1) = M_n\sigma^1
\end{align*}
Since $k$ does not occur in $X_i$, there exists $j_0 \in \fset$ such that
$\Sigma \vdash x_{j_0}\sigma^1 = (\Const{h}'(k, Z_i), \Const{f}_c(k, Z_i)) = \Const{fst}(X_i\sigma^1)$.
Hence
\[\Sigma \vdash \Const{fst}(x_{j_0}\sigma^1) = \Const{h}'(k, Z_i) = \Const{h}'(k, (\cons{M_1}{\cons{\dots}{\cons{M_{n-1}}{\Const{nil}}}})\sigma^1)\]
By induction hypothesis, 
\begin{align*}
\Sigma \vdash \Const{fst}(x_{j_0}\sigma^0) &= \Const{h}(k, (\cons{M_1}{\cons{\dots}{\cons{M_{n-1}}{\Const{nil}}}})\sigma^0)\\
& = \Const{fst}(\Const f(k, (\dots \Const f(k,((0,0),M_1)) \ldots, M_{n-1}))\sigma^0)
\end{align*}
Since $x_{j_0}\sigma^0$ is of the form $\Const{f}(k, \cdot)$, we obtain 
\[\Sigma \vdash x_{j_0}\sigma^0 = \Const f(k, (\dots \Const f(k,((0,0),M_1)) \ldots, M_{n-1}))\sigma^0\]
We have $\Sigma \vdash \Const{fst}(X_i\sigma^1) = x_{j_0}\sigma^1$
and $\Sigma \vdash \Const{snd}(X_i\sigma^1) = M_n\sigma^1$,
so $\Sigma \vdash X_i\sigma^1 = (x_{j_0}\sigma^1, M_n\sigma^1)$
Since $X_i$, $x_{j_0}$, and $M_n$ do not contain the variable of $\dom(\sigma^0)$ with the greatest
index, we have $\Sigma \vdash X_i\sigma^0 = (x_{j_0}\sigma^0, M_n\sigma^0)$ by induction hypothesis.
So $\Sigma \vdash \Const{fst}(x_i\sigma^0) = \Const{fst}(\Const{f}(k, X_i\sigma^0)) = \Const{fst}(\Const{f}(k, \allowbreak (x_{j_0}\sigma^0, \allowbreak M_n\sigma^0))) = \Const{h}(k, M\sigma^0)$.

\item We first show that, if $A \rel B$, $A \ltr{\alpha} A'$, $A'$ is closed, and 
    $\fv(\alpha) \subseteq \dom(A)$,
    then $B \rightarrow^*
    \ltr{\alpha} \rightarrow^* B'$ and $A' \rel B'$ for some $B'$.
The only possible labelled transitions in $A$ are as follows:
\begin{itemize}
\item $A^0_h$ performs an input with label $\alpha = \Rcv{c_h}{Y_i}$,
  with $\fv(Y_i) \subseteq \dom(\sigma^0) \setminus \{\vect x\}$,
creating a process $A^0_{\mathit{hi}}(Y_i)$. 
The process $A^1_h$ can perform
the same input, creating a process
$A^1_{\mathit{hi}}(Y_i)$.
The resulting extended processes are still in $\rel$, by adding to
$\hsetInt$ an index $i$ greater than those already in $\fset$ and $\hset$.

\item $A^0_f$ performs an input with label $\alpha = \Rcv{c_f}{X_i}$,
  with $\fv(X_i) \subseteq \dom(\sigma^0) \setminus \{\vect x\}$,
creating a process $\Snd{c'_f}{ \Const{f}(k,X_i) } \linebreak[2]\equiv \Res{x_i}(\Snd{c'_f}{x_i} \parop \{\subst{\Const{f}(k,X_i)}{x_i}\})$.
A reduced form of $A^1_f(S)$ or $A^1_{\mathit{fi}}(X_{i_0}, S)$ can perform the
same input (possibly after internal reductions). We keep performing
internal reductions after the input, until the output on $c'_f$ is enabled.
Hence a new process 
\[\Snd{c'_f}{ \Const{f}'(k,X_i) }  \equiv \Res{x_i}(\Snd{c'_f}{x_i} \parop \{\subst{\Const{f}'(k,X_i)}{x_i}\})\]
or 
\[\Snd{c'_f}{ (\Const h'(k, Z'_i), \Const f_c(k, Z'_i)) } \equiv 
\Res{x_i}(\Snd{c'_f}{x_i} \parop \{ \subst{(\Const h'(k, Z'_i), \Const f_c(k, Z'_i))}{x_i}\})\]
appears, 
depending on whether $S$ maps $\Const{fst}(X_i)$ to some $Z_i$ or not, and for
$Z'_i$ given in the definition of $\rel$.
The resulting extended processes are still in $\rel$, 
by adding to $\fsetBOut$ an index greater than those already in $\fset$ and $\hset$.

\item $O$ performs an output with label $\alpha = \Res{h_i}{\Snd{c'_h}{h_i}}$.
(We arrange that the bound variable of $\alpha$ has the same name
as the variable used internally by the output that we perform.)
In this case, $h_i$ is removed from $\vect{x}$ and $\Snd{c'_h}{h_i}$ is removed from $O$.
The process $O$ can perform the same output on the right-hand side, hence
we remain in $\rel$ by moving the index $i$ from $\hsetBOut$ to $\hsetOut$.

\item $O$ performs an output with label $\alpha = \Res{x_i}{\Snd{c'_f}{x_i}}$.
In this case, $x_i$ is removed from $\vect{x}$ and $\Snd{c'_f}{x_i}$ is removed from $O$.
If we are in the second case of the definition of $\rel$ with $i = i_0$, 
we first reduce $A^1_{\mathit{fi}}(X_{i_0}, S)$ until we arrive at the first case of the
definition of $\rel$.
The process $O$ can then perform the same output on the right-hand side, hence
we remain in $\rel$ by moving the index $i$ from $\fsetBOut$ to $\fsetOut$.
\end{itemize}
A detailed proof that these are the only possible labelled transitions of~$A$
uses the partial normal forms introduced in Appendix~\ref{app:bigpfpnf}
and the decomposition lemmas proved in Appendix~\ref{app:comppnf}.
This comment also applies to other case distinctions below in the proof of Theorem~\ref{th:indif}.

Conversely, we show that, if $A \rel B$, $B \ltr{\alpha} B'$, $B'$ is closed, and 
    $\fv(\alpha) \subseteq \dom(B)$,
    then $A \rightarrow^*
    \ltr{\alpha} \rightarrow^* A'$ and $A' \rel B'$ for some $A'$.
The only possible labelled transitions in $B$ are as follows:
\begin{itemize}
\item $A^1_h$ performs an input with label $\alpha = \Rcv{c_h}{Y_i}$, with $\fv(Y_i) \subseteq \dom(\sigma^1) \setminus \{\vect x\}$.
The process $A^1_h$ can perform the same input, and we remain 
in $\rel$ by adding to $\hsetInt$ an index $i$ greater than those already in $\fset$ and~$\hset$.

\item (A reduced form of) $A^1_f(S)$ performs an input with label $\alpha = \Rcv{c_f}{X_i}$, with $\fv(Y_i) \subseteq \dom(\sigma^1) \setminus \{\vect x\}$.
After the input, $A^1_f(S)$ is transformed into the process $A^1_{\mathit{fi}}(X_i, S)$.
The process $A^0_f$ can perform the same input.
A new process $\Snd{c'_f}{ \Const{f}(k,X_i) } \equiv \Res{x_i}(\Snd{c'_f}{x_i} \parop \{\subst{\Const{f}(k,X_i)}{x_i}\})$
appears on left-hand side.
We remain in $\rel$ by adding to $\fsetBOut$
an index $i_0$ greater than those already in $\fset \cup \hset$.
(On the right-hand side, the variable $x_{i_0}$ is defined but not used.)

When a reduced form of $A^1_{\mathit{fi}}(X_{i_0}, S)$ performs an input with label $\alpha = \Rcv{c_f}{X_i}$,
$A^1_{\mathit{fi}}(X_{i_0}, S)$ has first been reduced so that the configuration is in
the first case of the definition of $\rel$, with the considered input in $A^1_f(S)$,
so this case is already treated above.

\item $O$ performs an output with label $\alpha = \Res{h_i}{\Snd{c'_h}{h_i}}$.
The process $O$ performs the same output on the left-hand side and 
we remain in $\rel$ by moving the index $i$ from $\hsetBOut$ to $\hsetOut$.

\item $O$ performs an output with label $\alpha = \Res{x_i}{\Snd{c'_f}{x_i}}$.
The process $O$ performs the same output on the left-hand side, and 
we remain in $\rel$, by moving the index $i$ from $\fsetBOut$ to $\fsetOut$.

When a reduced form of $A^1_{\mathit{fi}}(X_{i_0}, S)$ performs an output with label $\alpha = \Res{x_i}{\Snd{c'_f}{x_i}}$,
$A^1_{\mathit{fi}}(X_{i_0}, S)$ has first been reduced so that the configuration is in
the first case of the definition of $\rel$, with the considered output included in $O$,
so this case is already treated above.
\end{itemize}

\item We first show that, if $A \rel B$, $A \rightarrow A'$, and $A'$ is closed, then $B
    \rightarrow^* B'$ and $A' \rel B'$ for some~$B'$.

The only processes that can be reduced in $A$ are processes $A^0_{\mathit{hi}}(Y_i)$ inside $A^0_{\mathit{his}}$.
If $\neseq(Y_i) = \Const{true}$, then $A^0_{\mathit{hi}}(Y_i)$ reduces to
\[\Snd{c'_h}{ \Const{h}(k,Y_i) } \equiv \Res{h_i}(\Snd{c'_h}{ h_i } \parop \{ \subst{\Const{h}(k,Y_i)}{h_i} \})\]
and similarly $A^1_{\mathit{hi}}(Y_i)$ reduces to
\[\Snd{c'_h}{ \Const{h}'(k,Y_i) } \equiv \Res{h_i}(\Snd{c'_h}{ h_i } \parop \{ \subst{\Const{h}'(k,Y_i)}{h_i} \})\]
and we remain in $\rel$ by moving $i$ from $\hsetInt$ to $\hsetBOut$.
(The value of $\neseq(Y_i)$ remains unchanged when we instantiate $Y_i$ with
$\sigma^0$ or $\sigma^1$ because the image of these substitutions does not contain lists.)
If $\neseq(Y_i) \neq \Const{true}$, then $A^0_{\mathit{hi}}(Y_i)$ reduces to $\nil$
and similarly $A^1_{\mathit{hi}}(Y_i)$ reduces to $\nil$,
and we remain in $\rel$ by moving $i$ from $\hsetInt$ to $\hsetDrop$.

Conversely, we show that, if $A \rel B$, $B \rightarrow B'$, and $B'$ is closed, then $A
    \rightarrow^* A'$ and $A' \rel B'$ for some $A'$.

The only reductions in $B$ are due to processes $A^1_{\mathit{hi}}(Y_i)$ within $A^1_{\mathit{his}}$, 
and $A^1_f(S)$ or $A^1_{\mathit{fi}}(X_{i_0}, S)$. The first case can be handled similarly to the case
in which $A^0_{\mathit{hi}}(Y_i)$ reduces. In the second case, we remain in $\rel$ with $A' = A$.

\end{enumerate}
Therefore, ${\rel} \subseteq {\wkbisim}$. By Theorem~\ref{THM:OBSERVATIONAL-LABELED}, ${\rel} \subseteq {\bicong}$.
So $\Res{k}{(A^0_h \parop A^0_f)} 
\bicong 
\Res{k}{( A^1_h \parop A^1_f)} $.
\end{proof}

\section*{Acknowledgments}

We thank Rocco De Nicola, Andy Gordon, Tony Hoare, and Phil Rogaway
for discussions that contributed to this work. Georges Gon\-thier and
Jan J\"urjens suggested improvements to the presentation of the
conference version of this paper. 
Steve Kremer and Ben Smyth provided helpful comments on a draft of this paper.


\begin{thebibliography}{100}

\bibitem{Abadi:protection}
Mart\'{\i}n Abadi.
\newblock Protection in programming-language translations.
\newblock In Kim~G. Larsen, Sven Skyum, and Glynn Winskel, editors, {\em
  Proceedings of the 25th International Colloquium on Automata, Languages and
  Programming}, volume 1443 of {\em Lecture Notes in Computer Science}, pages
  868--883, Heidelberg, July 1998. Springer.
\newblock Also Digital Equipment Corporation Systems Research Center report
  No.\ 154, April 1998.

\bibitem{abadi:secrecy}
Mart{\'\i}n Abadi.
\newblock Secrecy by typing in security protocols.
\newblock {\em Journal of the ACM}, 46(5):749--786, September 1999.

\bibitem{abadi:fosad}
Mart{\'i}n Abadi.
\newblock Security protocols: Principles and calculi.
\newblock In Alessandro Aldini and Roberto Gorrieri, editors, {\em Foundations
  of Security Analysis and Design {IV}, {FOSAD} 2006/2007 Tutorial Lectures},
  volume 4677 of {\em Lecture Notes in Computer Science}, pages 1--23,
  Heidelberg, 2007. Springer.

\bibitem{Abadi04c}
Mart{\'\i}n Abadi and Bruno Blanchet.
\newblock Analyzing security protocols with secrecy types and logic programs.
\newblock {\em Journal of the ACM}, 52(1):102--146, January 2005.

\bibitem{Abadi04f}
Mart{\'\i}n Abadi and Bruno Blanchet.
\newblock Computer-assisted verification of a protocol for certified email.
\newblock {\em Science of Computer Programming}, 58(1--2):3--27, October 2005.
\newblock Special issue SAS'03.

\bibitem{abadi:progress}
Mart\'{\i}n Abadi, Bruno Blanchet, and Hubert Comon-Lundh.
\newblock Models and proofs of protocol security: A progress report.
\newblock In Ahmed Bouajjani and Oded Maler, editors, {\em Computer Aided
  Verification, 21st International Conference}, volume 5643 of {\em Lecture
  Notes in Computer Science}, pages 35--49, Heidelberg, 2009. Springer.

\bibitem{abf06-jfk}
Mart\'{\i}n Abadi, Bruno Blanchet, and C\'edric Fournet.
\newblock Just fast keying in the pi calculus.
\newblock {\em ACM Transactions on Information and System Security},
  10(2):1--59, 2007.

\bibitem{Abadi06b}
Mart{\'\i}n Abadi and V{\'e}ronique Cortier.
\newblock Deciding knowledge in security protocols under equational theories.
\newblock {\em Theoretical Computer Science}, 367(1--2):2--32, November 2006.

\bibitem{AbadiFournetGonthier}
Mart\'{\i}n Abadi, C\'edric Fournet, and Georges Gonthier.
\newblock Secure implementation of channel abstractions.
\newblock In {\em Proceedings of the 13th Annual IEEE Symposium on Logic in
  Computer Science}, pages 105--116, Los Alamitos, CA, June 1998. IEEE Computer
  Society.

\bibitem{AbadiFournetGonthier:auth}
Mart\'{\i}n Abadi, C\'edric Fournet, and Georges Gonthier.
\newblock Authentication primitives and their compilation.
\newblock In {\em Proceedings of the 27th ACM Symposium on Principles of
  Programming Languages}, pages 302--315, New York, NY, January 2000. ACM
  Press.

\bibitem{spi2four}
Mart\'{\i}n Abadi and Andrew~D. Gordon.
\newblock A calculus for cryptographic protocols: The spi calculus.
\newblock {\em Information and Computation}, 148(1):1--70, January 1999.
\newblock An extended version appeared as Digital Equipment Corporation Systems
  Research Center report No.\ 149, January 1998.

\bibitem{Abadi2002b}
Mart\'{\i}n Abadi and Phillip Rogaway.
\newblock Reconciling two views of cryptography ({T}he computational soundness
  of formal encryption).
\newblock {\em Journal of Cryptology}, 15(2):103--127, 2002.

\bibitem{logjam}
David Adrian, Karthikeyan Bhargavan, Zakir Durumeric, Pierrick Gaudry, Matthew
  Green, J~Alex Halderman, Nadia Heninger, Drew Springall, Emmanuel Thom{\'e},
  Luke Valenta, et~al.
\newblock Imperfect forward secrecy: How {D}iffie-{H}ellman fails in practice.
\newblock In {\em ACM SIGSAC Conference on Computer and Communications Security
  (CCS)}, pages 5--17, New York, NY, 2015. ACM Press.

\bibitem{jfk-tissec}
W.~Aiello, S.M. Bellovin, M.~Blaze, R.~Canetti, J.~Ionnidis, A.D Keromytis, and
  O.~Reingold.
\newblock Just fast keying: Key agreement in a hostile internet.
\newblock {\em ACM Transactions on Information and System Security},
  7(2):1--30, May 2004.

\bibitem{Allamigeon05}
Xavier Allamigeon and Bruno Blanchet.
\newblock Reconstruction of attacks against cryptographic protocols.
\newblock In {\em 18th IEEE Computer Security Foundations Workshop (CSFW-18)},
  pages 140--154, Los Alamitos, CA, June 2005. IEEE Computer Society.

\bibitem{amadiolugiez}
Roberto~M. Amadio and Denis Lugiez.
\newblock On the reachability problem in cryptographic protocols.
\newblock In Catuscia Palamidessi, editor, {\em {CONCUR} 2000: Concurrency
  Theory (11th International Conference)}, volume 1877 of {\em Lecture Notes in
  Computer Science}, pages 380--394, Heidelberg, August 2000. Springer.

\bibitem{Arapinis:post2014}
Myrto Arapinis, Jia Liu, Eike Ritter, and Mark Ryan.
\newblock Stateful applied pi calculus.
\newblock In Mart\'{\i}n Abadi and Steve Kremer, editors, {\em Principles of
  Security and Trust---Third International Conference}, volume 8414 of {\em
  Lecture Notes in Computer Science}, pages 22--41, Heidelberg, 2014. Springer.

\bibitem{statverif}
Myrto Arapinis, Eike Ritter, and Mark~Dermot Ryan.
\newblock Stat{V}erif: Verification of stateful processes.
\newblock In {\em 24th IEEE Computer Security Foundations Symposium}, pages
  33--47, Los Alamitos, CA, 2011. IEEE Computer Society.

\bibitem{AVISPA}
Alessandro Armando, David Basin, Yohan Boichut, Yannick Chevalier, Luca
  Compagna, Jorge Cuellar, Paul~Hankes Drielsma, Pierre-Cyrille H{\'e}am, Olga
  Kouchnarenko, Jacopo Mantovani, Sebastian M{\"o}dersheim, David von Oheimb,
  Micha{\"e}l Rusinowitch, Judson Santiago, Mathieu Turuani, Luca Vigan{\'o},
  and Laurent Vigneron.
\newblock The {AVISPA} tool for automated validation of {I}nternet security
  protocols and applications.
\newblock In Kousha Etessami and Sriram~K. Rajamani, editors, {\em Computer
  Aided Verification, 17th International Conference, CAV 2005}, volume 3576 of
  {\em Lecture Notes in Computer Science}, pages 281--285, Heidelberg, July
  2005. Springer.

\bibitem{drown}
Nimrod Aviram, Sebastian Schinzel, Juraj Somorovsky, Nadia Heninger, Maik
  Dankel, Jens Steube, Luke Valenta, David Adrian, J.~Alex Halderman, Viktor
  Dukhovni, Emilia K{\"{a}}sper, Shaanan Cohney, Susanne Engels, Christof Paar,
  and Yuval Shavitt.
\newblock {DROWN:} breaking {TLS} using {SSL}v2.
\newblock In {\em {USENIX} Security Symposium}, pages 689--706, Berkeley, CA,
  2016. USENIX.

\bibitem{Backes:CoSP}
Michael Backes, Dennis Hofheinz, and Dominique Unruh.
\newblock Cosp: a general framework for computational soundness proofs.
\newblock In {\em 16th ACM Conference on Computer and Communications Security},
  pages 66--78, New York, NY, 2009. ACM Press.

\bibitem{Backes08}
Michael Backes, Matteo Maffei, and Dominique Unruh.
\newblock Zero-knowledge in the applied pi-calculus and automated verification
  of the direct anonymous attestation protocol.
\newblock In {\em IEEE Symposium on Security and Privacy (S\&P'08)}, pages
  202--215, Los Alamitos, CA, May 2008. IEEE Computer Society.

\bibitem{baldamustranslation}
Michael Baldamus, Joachim Parrow, and Bj{\"o}rn Victor.
\newblock Spi calculus translated to pi-calculus preserving may-tests.
\newblock In {\em 19th Annual IEEE Symposium on Logic in Computer Science},
  pages 22--31, Los Alamitos, CA, 2004. IEEE Computer Society.

\bibitem{webspi}
Chetan Bansal, Karthikeyan Bhargavan, and Sergio Maffeis.
\newblock Discovering concrete attacks on website authorization by formal
  analysis.
\newblock In {\em 25th IEEE Computer Security Foundations Symposium}, pages
  247--262, Los Alamitos, CA, 2012. IEEE Computer Society.

\bibitem{barthe:techniques}
Gilles Barthe, Benjamin Gr{\'e}goire, and Santiago~Zanella B{\'e}guelin.
\newblock Programming language techniques for cryptographic proofs.
\newblock In Matt Kaufmann and Lawrence~C. Paulson, editors, {\em Interactive
  Theorem Proving, First International Conference}, volume 6172 of {\em Lecture
  Notes in Computer Science}, pages 115--130, Heidelberg, 2010. Springer.

\bibitem{Basin15}
David Basin, Jannik Dreier, and Ralf Casse.
\newblock Automated symbolic proofs of observational equivalence.
\newblock In {\em CCS'15: 22nd ACM Conference on Computer and Communications
  Security}, pages 1144--1155, New York, NY, October 2015. ACM.

\bibitem{Baudet05ccs}
Mathieu Baudet.
\newblock Deciding security of protocols against off-line guessing attacks.
\newblock In {\em {P}roceedings of the 12th {ACM} {C}onference on {C}omputer
  and {C}ommunications {S}ecurity ({CCS}'05)}, pages 16--25, New York, NY,
  November 2005. ACM Press.

\bibitem{Baudet07}
Mathieu Baudet.
\newblock {\em S\'ecurit\'e des protocoles cryptographiques: aspects logiques
  et calculatoires}.
\newblock PhD thesis, Ecole Normale Sup\'erieure de Cachan, January 2007.

\bibitem{Baudet09}
Mathieu Baudet, V{\'e}ronique Cortier, and St{\'e}phanie Delaune.
\newblock Yapa: A generic tool for computing intruder knowledge.
\newblock In Ralf Treinen, editor, {\em Rewriting Techniques and Applications
  (RTA'09)}, volume 5595 of {\em Lecture Notes in Computer Science}, pages
  148--163, Heidelberg, 2009. Springer.

\bibitem{bck2009}
Mathieu Baudet, V{\'e}ronique Cortier, and Steve Kremer.
\newblock Computationally sound implementations of equational theories against
  passive adversaries.
\newblock {\em Information and Computation}, 207(4):496--520, 2009.

\bibitem{br}
Mihir Bellare and Phillip Rogaway.
\newblock Entity authentication and key distribution.
\newblock In {\em Advances in Cryptology---CRYPTO~'94}, volume 773 of {\em
  Lecture Notes in Computer Science}, pages 232--249, Heidelberg, 1993.
  Springer.

\bibitem{bengtson:psi}
Jesper Bengtson, Magnus Johansson, Joachim Parrow, and Bj{\"o}rn Victor.
\newblock Psi-calculi: a framework for mobile processes with nominal data and
  logic.
\newblock {\em Logical Methods in Computer Science}, 7(1), 2011.

\bibitem{berry92}
G\'erard Berry and G\'erard Boudol.
\newblock The chemical abstract machine.
\newblock {\em Theoretical Computer Science}, 96(1):217--248, April 1992.

\bibitem{BhargavanSP17}
Karthikeyan Bhargavan, Bruno Blanchet, and Nadim Kobeissi.
\newblock Verified models and reference implementations for the {TLS} 1.3
  standard candidate.
\newblock In {\em IEEE Symposium on Security and Privacy (S\&P'17)}, pages
  483--503, Los Alamitos, CA, May 2017. IEEE.

\bibitem{Bhargavan08b}
Karthikeyan Bhargavan, Ricardo Corin, C\'{e}dric Fournet, and Eugen
  Z\u{a}linescu.
\newblock Cryptographically verified implementations for {TLS}.
\newblock In {\em 15th ACM Conference on Computer and Communications Security
  (CCS'08)}, pages 459--468, New York, NY, October 2008. ACM.

\bibitem{BhargavanSP17b}
Karthikeyan Bhargavan, Antoine Delignat-Lavaud, C\'edric Fournet, Markulf
  Kohlweiss, Jianyang Pan, Jonathan Protzenko, Aseem Rastogi, Nikhil Swamy,
  Santiago Zanella-B\'eguelin, and Jean~Karim Zinzindohou\'e.
\newblock Implementing and proving the {TLS} 1.3 record layer.
\newblock In {\em IEEE Symposium on Security and Privacy (S\&P'17)}, pages
  463--482, Los Alamitos, CA, May 2017. IEEE.

\bibitem{3shake}
Karthikeyan Bhargavan, Antoine Delignat-Lavaud, C{\'e}dric Fournet, Alfredo
  Pironti, and Pierre-Yves Strub.
\newblock Triple handshakes and cookie cutters: Breaking and fixing
  authentication over {TLS}.
\newblock In {\em IEEE Symposium on Security and Privacy (S\&P'14)}, pages
  98--113, Los Alamitos, CA, 2014. IEEE Computer Society.

\bibitem{tls-toplas}
Karthikeyan Bhargavan, C{\'e}dric Fournet, Ricardo Corin, and Eugen
  Z\u{a}linescu.
\newblock Verified cryptographic implementations for {TLS}.
\newblock {\em ACM TOPLAS}, 15(1), 2012.

\bibitem{Bhargavan04}
Karthikeyan Bhargavan, C{\'e}dric Fournet, Andrew~D. Gordon, and Riccardo
  Pucella.
\newblock Tula{F}ale: A security tool for web services.
\newblock In Frank~S. de~Boer, Marcello~M. Bonsangue, Susanne Graf, and
  Willem-Paul de~Roever, editors, {\em Formal Methods for Components and
  Objects (FMCO 2003)}, volume 3188 of {\em Lecture Notes in Computer Science},
  pages 197--222, Heidelberg, November 2003. Springer.

\bibitem{Bhargavan:dec08}
Karthikeyan Bhargavan, C{\'e}dric Fournet, Andrew~D. Gordon, and Stephen Tse.
\newblock Verified interoperable implementations of security protocols.
\newblock {\em ACM Transactions on Programming Languages and Systems}, 31(1),
  December 2008.

\bibitem{BhargavanFKPS13}
Karthikeyan Bhargavan, C{\'e}dric Fournet, Markulf Kohlweiss, Alfredo Pironti,
  and Pierre-Yves Strub.
\newblock Implementing {TLS} with verified cryptographic security.
\newblock In {\em IEEE Symposium on Security and Privacy (S\&P'13)}, pages
  445--459, Los Alamitos, CA, 2013. IEEE Computer Society.

\bibitem{DBLP:conf/crypto/BhargavanFKPSB14}
Karthikeyan Bhargavan, C{\'{e}}dric Fournet, Markulf Kohlweiss, Alfredo
  Pironti, Pierre{-}Yves Strub, and Santiago {Zanella-B{\'{e}}guelin}.
\newblock Proving the {TLS} handshake secure (as it is).
\newblock In {\em Advances in Cryptology -- {CRYPTO} 2014}, volume 8617 of {\em
  Lecture Notes in Computer Science}, pages 235--255, Heidelberg New York,
  2014. Springer.

\bibitem{Blanchet2001}
Bruno Blanchet.
\newblock An efficient cryptographic protocol verifier based on {P}rolog rules.
\newblock In {\em 14th IEEE Computer Security Foundations Workshop}, pages
  82--96, Los Alamitos, CA, 2001. IEEE Computer Society.

\bibitem{Blanchet04}
Bruno Blanchet.
\newblock Automatic proof of strong secrecy for security protocols.
\newblock In {\em 2004 IEEE Symposium on Security and Privacy}, pages 86--100,
  Los Alamitos, CA, 2004. IEEE Computer Society.

\bibitem{BlanchetOakland06}
Bruno Blanchet.
\newblock A computationally sound mechanized prover for security protocols.
\newblock In {\em IEEE Symposium on Security and Privacy (S\&P'06)}, pages
  140--154, Los Alamitos, CA, 2006. IEEE Computer Society.

\bibitem{Blanchet08c}
Bruno Blanchet.
\newblock Automatic verification of correspondences for security protocols.
\newblock {\em Journal of Computer Security}, 17(4):363--434, July 2009.

\bibitem{ProVerifSurvey}
Bruno Blanchet.
\newblock Modeling and verifying security protocols with the applied pi
  calculus and {P}ro{V}erif.
\newblock {\em Foundations and Trends in Privacy and Security}, 1(1--2):1--135,
  October 2016.

\bibitem{Blanchet07b}
Bruno Blanchet, Mart{\'\i}n Abadi, and C{\'e}dric Fournet.
\newblock Automated verification of selected equivalences for security
  protocols.
\newblock {\em Journal of Logic and Algebraic Programming}, 75(1):3--51,
  February--March 2008.

\bibitem{Blanchet03e}
Bruno Blanchet and Benjamin Aziz.
\newblock A calculus for secure mobility.
\newblock In Vijay Saraswat, editor, {\em 8th Asian Computing Science
  Conference (ASIAN'03)}, volume 2896 of {\em Lecture Notes in Computer
  Science}, pages 188--204, Heidelberg, December 2003. Springer.

\bibitem{BlanchetPointchevalCrypto06}
Bruno Blanchet and David Pointcheval.
\newblock Automated security proofs with sequences of games.
\newblock In {\em CRYPTO'06}, volume 4117 of {\em Lecture Notes in Computer
  Science}, pages 537--554, Heidelberg, 2006. Springer.

\bibitem{bodei:control}
Chiara Bodei, Pierpaolo Degano, Flemming Nielson, and Hanne~Riis Nielson.
\newblock Control flow analysis for the pi-calculus.
\newblock In Davide Sangiorgi and Robert {de Simone}, editors, {\em CONCUR '98:
  Concurrency Theory (9th International Conference)}, volume 1466 of {\em
  Lecture Notes in Computer Science}, pages 84--98, Heidelberg, September 1998.
  Springer.

\bibitem{boreale:techniques}
Michele Boreale, Rocco {De Nicola}, and Rosario Pugliese.
\newblock Proof techniques for cryptographic processes.
\newblock In {\em Proceedings of the 14th Annual IEEE Symposium on Logic in
  Computer Science}, pages 157--166, Los Alamitos, CA, July 1999. IEEE Computer
  Society.

\bibitem{Borgstrom14}
Johannes Borgstr{\"{o}}m, Ramunas Gutkovas, Joachim Parrow, Bj{\"{o}}rn Victor,
  and Johannes~{\AA}man Pohjola.
\newblock A sorted semantic framework for applied process calculi (extended
  abstract).
\newblock In Mart{\'\i}n Abadi and Alberto Lluch~Lafuente, editors, {\em
  Trustworthy Global Computing, TGC 2013}, volume 8358 of {\em Lecture Notes in
  Computer Science}, pages 103--118, Cham, 2014. Springer.

\bibitem{Borgstrom16}
Johannes Borgstr{\"{o}}m, Ramunas Gutkovas, Joachim Parrow, Bj{\"{o}}rn Victor,
  and Johannes~{\AA}man Pohjola.
\newblock A sorted semantic framework for applied process calculi.
\newblock {\em Logical Methods in Computer Science}, 12(1), March 2016.

\bibitem{Briais2008}
S\'ebastien Briais.
\newblock {\em Theory and Tool Support for the Formal Verification of
  Cryptographic Protocols}.
\newblock PhD thesis, \'Ecole Polytechnique F\'ed\'erale de Lausanne, January
  2008.

\bibitem{CCpi}
Maria~Grazia Buscemi and Ugo Montanari.
\newblock {CC-Pi}: A constraint-based language for specifying service level
  agreements.
\newblock In Rocco~De Nicola, editor, {\em Programming Languages and Systems,
  16th European Symposium on Programming, ESOP 2007}, volume 4421 of {\em
  Lecture Notes in Computer Science}, pages 18--32, Berlin Heidelberg, 2007.
  Springer.

\bibitem{carbonemaffeis}
Marco Carbone and Sergio Maffeis.
\newblock On the expressive power of polyadic synchronisation in pi-calculus.
\newblock {\em Nordic Journal of Computing}, 10(2):70--98, 2003.

\bibitem{cardelli:mark}
Luca Cardelli.
\newblock Mobility and security.
\newblock In F.~L. Bauer and R.~Steinbrueggen, editors, {\em Foundations of
  Secure Computation}, {NATO} Science Series, pages 3--37, Amsterdam, 2000. IOS
  Press.

\bibitem{Cardelli00}
Luca Cardelli and Andrew~D. Gordon.
\newblock Mobile ambients.
\newblock {\em Theoretical Computer Science}, 240(1):177--213, June 2000.

\bibitem{Chadha12}
Rohit Chadha, Stefan Ciobaca, and Steve Kremer.
\newblock Automated verification of equivalence properties of cryptographic
  protocols.
\newblock In Helmut Seidl, editor, {\em Programming Languages and Systems -
  21st European Symposium on Programming, ESOP 2012}, volume 7211 of {\em
  Lecture Notes in Computer Science}, pages 108--127, Heidelberg, March 2012.
  Springer.

\bibitem{CCD-tcs13}
Vincent Cheval, V{\'e}ronique Cortier, and St{\'e}phanie Delaune.
\newblock Deciding equivalence-based properties using constraint solving.
\newblock {\em Theoretical Computer Science}, 492:1--39, June 2013.

\bibitem{Chretien15b}
R{\'e}my Chr{\'e}tien, V{\'e}ronique Cortier, and St{\'e}phanie Delaune.
\newblock Decidability of trace equivalence for protocols with nonces.
\newblock In {\em {CSF}'15: 28th {IEEE} {C}omputer {S}ecurity {F}oundations
  {S}ymposium}, pages 170--184, Los Alamitos, CA, July 2015. IEEE Computer
  Society.

\bibitem{Chretien15}
R{\'e}my Chr{\'e}tien, V{\'e}ronique Cortier, and St{\'e}phanie Delaune.
\newblock From security protocols to pushdown automata.
\newblock {\em ACM Transactions on Computational Logic}, 17(1), September 2015.

\bibitem{CDK-jar10}
{\c{S}}tefan Ciob{\^a}c{\u{a}}, St{\'e}phanie Delaune, and Steve Kremer.
\newblock Computing knowledge in security protocols under convergent equational
  theories.
\newblock {\em Journal of Automated Reasoning}, 48(2):219--262, February 2012.

\bibitem{CLC:ccs}
Hubert Comon-Lundh and V\'{e}ronique Cortier.
\newblock Computational soundness of observational equivalence.
\newblock In {\em Proceedings of the 15th {ACM} Conference on Computer and
  Communications Security}, pages 109--118, New York, NY, 2008. ACM Press.

\bibitem{JoCAML}
Sylvain Conchon and Fabrice~Le Fessant.
\newblock Jocaml: Mobile agents for {Objective-Caml}.
\newblock In {\em First International Symposium on Agent Systems and
  Applications (ASA'99)/Third International Symposium on Mobile Agents
  (MA'99)}, pages 22--29, Washington, DC, October 1999. IEEE Computer Society.

\bibitem{sshattack}
{Core {SDI} {S.A.}}
\newblock {SSH} insertion attack.
\newblock Bugtraq mailing list, June 1998.
\newblock Available at \url{http://seclists.org/bugtraq/1998/Jun/65}.

\bibitem{Coron}
Jean-S{\'e}bastien Coron, Yevgeniy Dodis, C{\'e}cile Malinaud, and Prashant
  Puniya.
\newblock Merkle-{D}amg{\r{a}}rd revisited: How to construct a hash function.
\newblock In {\em Advances in Cryptology---CRYPTO~2005}, volume 3621 of {\em
  Lecture Notes in Computer Science}, pages 430--448, Heidelberg, 2005.
  Springer.

\bibitem{CortierKremer14}
V{\'e}ronique Cortier and Steve Kremer.
\newblock Formal models and techniques for analyzing security protocols: A
  tutorial.
\newblock {\em Foundations and Trends in Programming Languages}, 1(3):151--267,
  2014.

\bibitem{DBLP:conf/sp/CremersHSM16}
Cas Cremers, Marko Horvat, Sam Scott, and Thyla van~der Merwe.
\newblock Automated analysis and verification of {TLS} 1.3: 0-{RTT}, resumption
  and delayed authentication.
\newblock In {\em {IEEE} Symposium on Security and Privacy (S\&P'16)}, pages
  470--485, Los Alamitos, CA, 2016. IEEE Computer Society.

\bibitem{Cremers08b}
Cas~J.F. Cremers.
\newblock Unbounded verification, falsification, and characterization of
  security protocols by pattern refinement.
\newblock In {\em 15th ACM conference on Computer and Communications Security
  (CCS'08)}, pages 119--128, New York, NY, 2008. ACM Press.

\bibitem{CruzFilipe14}
Lu{\'i}s Cruz-Filipe, Ivan Lanese, Francisco Martins, Ant{\'o}nio Ravara, and
  Vasco~Thudichum Vasconcelos.
\newblock The stream-based service-centered calculus: a foundation for
  service-oriented programming.
\newblock {\em Formal Aspects of Computing}, 26(5):865--918, 2014.

\bibitem{Curti04}
M.~Curti, P.~Degano, C.~Priami, and C.T. Baldari.
\newblock Modelling biochemical pathways through enhanced $\pi$-calculus.
\newblock {\em Theoretical Computer Science}, 325:111--140, 2004.

\bibitem{Dam97:Proving_Trust}
Mads Dam.
\newblock Proving trust in systems of second-order processes.
\newblock In {\em Proceedings of the 31th Hawaii International Conference on
  System Sciences}, volume VII, pages 255--264, Los Alamitos, CA, 1998. IEEE
  Computer Society.

\bibitem{datta:jcs05}
Anupam Datta, Ante Derek, John~C. Mitchell, and Dusko Pavlovic.
\newblock A derivation system and compositional logic for security protocols.
\newblock {\em Journal of Computer Security}, 13(3):423--482, 2005.

\bibitem{DelauneKremerRyan07}
St{\'e}phanie Delaune, Steve Kremer, and Mark~D. Ryan.
\newblock Symbolic bisimulation for the applied pi calculus.
\newblock Research Report LSV-07-14, LSV, ENS Cachan, April 2007.

\bibitem{DKR08}
St{\'e}phanie Delaune, Steve Kremer, and Mark~D. Ryan.
\newblock {Verifying privacy-type properties of electronic voting protocols}.
\newblock {\em Journal of Computer Security}, 17(4):435--487, July 2009.

\bibitem{Delaune10}
St{\'e}phanie Delaune, Steve Kremer, and Mark~D. Ryan.
\newblock Symbolic bisimulation for the applied pi calculus.
\newblock {\em Journal of Computer Security}, 18(2):317--377, 2010.

\bibitem{DLM}
Richard~A. DeMillo, Nancy~A. Lynch, and Michael Merritt.
\newblock Cryptographic protocols.
\newblock In {\em Proceedings of the 14th Annual ACM Symposium on Theory of
  Computing}, pages 383--400, New York, NY, 1982. ACM Press.

\bibitem{TLS12}
T.~Dierks and E.~Rescorla.
\newblock {The Transport Layer Security (TLS) Protocol Version 1.2}.
\newblock IETF RFC 5246, 2008.

\bibitem{DH}
W.~Diffie and M.~Hellman.
\newblock New directions in cryptography.
\newblock {\em IEEE Transactions on Information Theory}, IT-22(6):644--654,
  November 1976.

\bibitem{STS}
Whitfield Diffie, Paul~C. van Oorschot, and Michael~J. Wiener.
\newblock Authentication and authenticated key exchanges.
\newblock {\em Designs, Codes and Cryptography}, 2:107--125, 1992.

\bibitem{DY83}
Danny Dolev and Andrew~C. Yao.
\newblock On the security of public key protocols.
\newblock {\em IEEE Transactions on Information Theory}, {IT}-29(12):198--208,
  March 1983.

\bibitem{Escobar06}
Santiago Escobar, Catherine Meadows, and Jos{\'e} Meseguer.
\newblock A rewriting-based inference system for the {NRL} protocol analyzer
  and its meta-logical properties.
\newblock {\em Theoretical Computer Science}, 367(1--2):162--202, 2006.

\bibitem{FournetGonthier98:equivalences}
C{\'e}dric Fournet and Georges Gonthier.
\newblock A hierarchy of equivalences for asynchronous calculi.
\newblock In Kim~G. Larsen, Sven Skyum, and Glynn Winskel, editors, {\em
  Proceedings of the 25th International Colloquium on Automata, Languages and
  Programming}, volume 1443 of {\em Lecture Notes in Computer Science}, pages
  844--855, Heidelberg, July 1998. Springer.

\bibitem{SSLthreeotwo}
Alan~O. Freier, Philip Karlton, and Paul~C. Kocher.
\newblock The {SSL} protocol: Version 3.0.
\newblock Internet Draft available at
  \url{http://tools.ietf.org/html/draft-ietf-tls-ssl-version3-00}, November
  1996.

\bibitem{GoldwasserBellare}
Shafi Goldwasser and Mihir Bellare.
\newblock Lecture notes on cryptography.
\newblock Summer Course ``Cryptography and Computer Security'' at MIT,
  1996--1999, August 1999.

\bibitem{gm}
Shafi Goldwasser and Silvio Micali.
\newblock Probabilistic encryption.
\newblock {\em Journal of Computer and System Sciences}, 28:270--299, April
  1984.

\bibitem{gmr}
Shafi Goldwasser, Silvio Micali, and Ronald Rivest.
\newblock A digital signature scheme secure against adaptive chosen-message
  attack.
\newblock {\em SIAM Journal on Computing}, 17:281--308, 1988.

\bibitem{Gordon2004}
Andrew Gordon and Alan Jeffrey.
\newblock Types and effects for asymmetric cryptographic protocols.
\newblock {\em Journal of Computer Security}, 12(3/4):435--484, 2004.

\bibitem{Hirschkoff97}
Daniel Hirschkoff.
\newblock A full formalisation of $\pi$-calculus theory in the calculus of
  constructions.
\newblock In Elsa~L. Gunter and Amy Felty, editors, {\em Theorem Proving in
  Higher Order Logics}, volume 1275 of {\em Lecture Notes in Computer Science},
  pages 153--169, New York, NY, 1997. Springer.

\bibitem{Honda95}
Kohei Honda and Nobuko Yoshida.
\newblock On reduction-based process semantics.
\newblock {\em Theoretical Computer Science}, 151:437--486, 1995.

\bibitem{honsell2001pi}
Furio Honsell, Marino Miculan, and Ivan Scagnetto.
\newblock $\pi$-calculus in (co) inductive type theory.
\newblock {\em Theoretical Computer Science}, 253(2):239--285, 2001.

\bibitem{DBLP:conf/crypto/JagerKSS12}
Tibor Jager, Florian Kohlar, Sven Sch{\"{a}}ge, and J{\"{o}}rg Schwenk.
\newblock On the security of {TLS-DHE} in the standard model.
\newblock In {\em {CRYPTO} 2012}, pages 273--293, New York, NY, 2012. Springer.

\bibitem{JCRYPT::KemmererMM1994}
R.~Kemmerer, C.~Meadows, and J.~Millen.
\newblock Three systems for cryptographic protocol analysis.
\newblock {\em Journal of Cryptology}, 7(2):79--130, Spring 1994.

\bibitem{Krawczyk:SKEME}
Hugo Krawczyk.
\newblock {SKEME}: A versatile secure key exchange mechanism for internet.
\newblock In {\em Proceedings of the Internet Society Symposium on Network and
  Distributed Systems Security}, pages 114--127, Los Alamitos, February 1996.
  IEEE Computer Society.

\bibitem{kpw13}
Hugo Krawczyk, Kenneth~G. Paterson, and Hoeteck Wee.
\newblock On the security of the {TLS} protocol: {A} systematic analysis.
\newblock In {\em {CRYPTO} 2013}, pages 429--448, New York, NY, 2013. Springer.

\bibitem{Kremer14}
Steve Kremer and Robert K{\"u}nnemann.
\newblock Automated analysis of security protocols with global state.
\newblock In {\em IEEE Symposium on Security and Privacy (S\&P'14)}, pages
  163--178, Los Alamitos, CA, May 2014. IEEE Computer Society.

\bibitem{Kremer05}
Steve Kremer and Mark~D. Ryan.
\newblock Analysis of an electronic voting protocol in the applied pi calculus.
\newblock In Mooly Sagiv, editor, {\em Programming Languages and Systems: 14th
  European Symposium on Programming, ESOP 2005}, volume 3444 of {\em Lecture
  Notes in Computer Science}, pages 186--200, Heidelberg, April 2005. Springer.

\bibitem{Orchestration}
Alessandro Lapadula, Rosario Pugliese, and Francesco Tiezzi.
\newblock A calculus for orchestration of web services.
\newblock In Rocco~De Nicola, editor, {\em Programming Languages and Systems,
  16th European Symposium on Programming, ESOP 2007}, volume 4421 of {\em
  Lecture Notes in Computer Science}, pages 33--47, Berlin Heidelberg, 2007.
  Springer.

\bibitem{liblit}
Ben Liblit and Alexander Aiken.
\newblock Type systems for distributed data structures.
\newblock In {\em Proceedings of the 27th ACM Symposium on Principles of
  Programming Languages}, pages 199--213, New York, NY, January 2000. ACM
  Press.

\bibitem{Lincoln98}
P.~Lincoln, J.~Mitchell, M.~Mitchell, and A.~Scedrov.
\newblock A probabilistic poly-time framework for protocol analysis.
\newblock In {\em Proceedings of the 5th ACM Conference on Computer and
  Communications Security}, pages 112--121, New York, NY, 1998. ACM Press.

\bibitem{Liu11}
Jia Liu.
\newblock A proof of coincidence of labeled bisimilarity and observational
  equivalence in applied pi calculus.
\newblock \url{http://lcs.ios.ac.cn/~jliu/papers/LiuJia0608.pdf}, 2011.

\bibitem{LiuLin12}
Jia Liu and Humin Lin.
\newblock A complete symbolic bisimulation for full applied pi calculus.
\newblock {\em Theoretical Computer Science}, 458:76--112, November 2012.

\bibitem{Lowe}
Gavin Lowe.
\newblock Breaking and fixing the {N}eedham-{S}chroeder public-key protocol
  using {FDR}.
\newblock In {\em Tools and Algorithms for the Construction and Analysis of
  Systems}, volume 1055 of {\em Lecture Notes in Computer Science}, pages
  147--166, Heidelberg, 1996. Springer.

\bibitem{Lucchi07}
Roberto Lucchi and Manuel Mazzara.
\newblock A pi-calculus based semantics for {WS-BPEL}.
\newblock {\em Journal of Logic and Algebraic Programming}, 70:96--118, 2007.

\bibitem{martinhoravara}
Joana Martinho and Ant{\'o}nio Ravara.
\newblock Encoding cryptographic primitives in a calculus with polyadic
  synchronisation.
\newblock {\em Journal of Automated Reasoning}, 46(3--4):293--323, 2011.

\bibitem{Meier:tamarin}
Simon Meier, Benedikt Schmidt, Cas Cremers, and David~A. Basin.
\newblock The {T}amarin prover for the symbolic analysis of security protocols.
\newblock In Natasha Sharygina and Helmut Veith, editors, {\em Computer Aided
  Verification, 25th International Conference, CAV 2013}, volume 8044 of {\em
  Lecture Notes in Computer Science}, pages 696--701, Heidelberg, 2013.
  Springer.

\bibitem{Menezes96}
Alfred~J. Menezes, Paul~C. van Oorschot, and Scott~A. Vanstone.
\newblock {\em Handbook of Applied Cryptography}.
\newblock CRC Press, Boca Raton, FL, 1996.

\bibitem{Merritt:thesis}
Michael~J. Merritt.
\newblock {\em Cryptographic Protocols}.
\newblock PhD thesis, Georgia Institute of Technology, February 1983.

\bibitem{MilCC89}
Robin Milner.
\newblock {\em Communication and Concurrency}.
\newblock International Series in Computer Science. Prentice Hall, Upper Saddle
  River, NJ, 1989.

\bibitem{milner92a}
Robin Milner.
\newblock Functions as processes.
\newblock {\em Mathematical Structures in Computer Science}, 2:119--141, 1992.

\bibitem{milner:communicating-mobile}
Robin Milner.
\newblock {\em Communicating and Mobile Systems: the $\pi$-Calculus}.
\newblock Cambridge University Press, Cambridge, 1999.

\bibitem{Milner1992}
Robin Milner and Davide Sangiorgi.
\newblock Barbed bisimulation.
\newblock In W.~Kuich, editor, {\em Automata, Languages and Programming: 19th
  International Colloquium Wien, Austria, July 13--17, 1992 Proceedings}, pages
  685--695, Berlin, Heidelberg, 1992. Springer.

\bibitem{mitchell:book}
John~C. Mitchell.
\newblock {\em Foundations for Programming Languages}.
\newblock {MIT} Press, Cambridge, MA, 1996.

\bibitem{Mitchell:murphi}
John~C. Mitchell, Mark Mitchell, and Ulrich Stern.
\newblock Automated analysis of cryptographic protocols using {M}ur$\phi$.
\newblock In {\em Proceedings of the 1997 IEEE Symposium on Security and
  Privacy}, pages 141--151, Los Alamitos, CA, 1997. IEEE Computer Society.

\bibitem{paterson2011tag}
Kenneth~G Paterson, Thomas Ristenpart, and Thomas Shrimpton.
\newblock Tag size does matter: Attacks and proofs for the {TLS} record
  protocol.
\newblock In {\em ASIACRYPT}, pages 372--389, Berlin Heidelberg, 2011.
  Springer.

\bibitem{Paulson:induct}
Lawrence~C. Paulson.
\newblock The inductive approach to verifying cryptographic protocols.
\newblock {\em Journal of Computer Security}, 6(1--2):85--128, 1998.

\bibitem{Pfitzmann01}
Andreas Pfitzmann and Marit K\"{o}hntopp.
\newblock {Anonymity, Unobservability, and Pseudonymity -- A Proposal for
  Terminology}.
\newblock In {\em International Workshop on Design Issues in Anonymity and
  Unobservability}, volume 2009 of {\em Lecture Notes in Computer Science},
  pages 1--9, New York, NY, 2001. Springer.
\newblock Extended versions available at
  \url{http://dud.inf.tu-dresden.de/Anon_Terminology.shtml}.

\bibitem{PfScWa00}
Birgit Pfitzmann, Matthias Schunter, and Michael Waidner.
\newblock Cryptographic security of reactive systems (extended abstract).
\newblock {\em Electronic Notes in Theoretical Computer Science}, 32:59--77,
  April 2000.

\bibitem{pierce.turner:pict-programming}
Benjamin~C. Pierce and David~N. Turner.
\newblock Pict: A programming language based on the pi-calculus.
\newblock In Gordon Plotkin, Colin Stirling, and Mads Tofte, editors, {\em
  Proof, Language and Interaction: Essays in Honour of Robin Milner},
  Foundations of Computing, pages 455--494, Cambridge, MA, May 2000. MIT Press.

\bibitem{2011-Applied-pi-calculus}
Mark~D. Ryan and Ben Smyth.
\newblock {Applied pi calculus}.
\newblock In V\'eronique Cortier and Steve Kremer, editors, {\em Formal Models
  and Techniques for Analyzing Security Protocols}, chapter~6, pages 112--142.
  IOS Press, Amsterdam, 2011.

\bibitem{RyanSchneider98}
Peter Y.~A. Ryan and Steve~A. Schneider.
\newblock An attack on a recursive authentication protocol. {A} cautionary
  tale.
\newblock {\em Information Processing Letters}, 65(1):7--10, January 1998.

\bibitem{San98MFCS}
D.~Sangiorgi.
\newblock On the bisimulation proof method.
\newblock {\em Journal of Mathematical Structures in Computer Science},
  8:447--479, 1998.

\bibitem{Sangiorgi93}
Davide Sangiorgi.
\newblock {\em Expressing Mobility in Process Algebras: First-Order and
  Higher-Order Paradigms}.
\newblock PhD thesis, University of Edinburgh, May 1993.

\bibitem{Santiago14}
Sonia Santiago, Santiago Escobar, Catherine Meadows, and Jos\'e Meseguer.
\newblock A formal definition of protocol indistinguishability and its
  verification using {M}aude-{NPA}.
\newblock In Sjouke Mauw and Christian~Damsgaard Jensen, editors, {\em STM'14:
  Security and Trust Management}, volume 8743 of {\em Lecture Notes in Computer
  Science}, pages 162--177, Heidelberg, September 2014. Springer.

\bibitem{Schmidt12}
Benedikt Schmidt, Simon Meier, Cas Cremers, and David Basin.
\newblock Automated analysis of {D}iffie-{H}ellman protocols and advanced
  security properties.
\newblock In {\em 25th IEEE Computer Security Foundations Symposium (CSF'12)},
  pages 78--94, Los Alamitos, CA, June 2012. IEEE Computer Society.

\bibitem{Schneider}
Steve Schneider.
\newblock Security properties and {CSP}.
\newblock In {\em Proceedings of the 1996 {IEEE} Symposium on Security and
  Privacy}, pages 174--187, Los Alamitos, CA, 1996. IEEE Computer Society.

\bibitem{Schnei96}
Bruce Schneier.
\newblock {\em Applied Cryptography: Protocols, Algorithms, and Source Code in
  {C}}.
\newblock John Wiley \& Sons, Inc., Hoboken, NJ, second edition, 1996.

\bibitem{stubblebine:integrity}
Stuart~G. Stubblebine and Virgil~D. Gligor.
\newblock On message integrity in cryptographic protocols.
\newblock In {\em Proceedings of the 1992 {IEEE} Symposium on Research in
  Security and Privacy}, pages 85--104, Los Alamitos, CA, 1992. IEEE Computer
  Society.

\bibitem{strand}
F.~Javier {Thayer F\'abrega}, Jonathan~C. Herzog, and Joshua~D. Guttman.
\newblock Strand spaces: Why is a security protocol correct?
\newblock In {\em Proceedings of the 1998 IEEE Symposium on Security and
  Privacy}, pages 160--171, Los Alamitos, CA, May 1998. IEEE Computer Society.

\bibitem{TiuDawson10}
Alwen Tiu and Jeremy Dawson.
\newblock Automating open bisimulation checking for the spi-calculus.
\newblock In {\em 23rd IEEE Computer Security Foundations Symposium (CSF'10)},
  pages 307--321, Los Alamitos, CA, July 2010. IEEE Computer Society.

\bibitem{vanhoef2015all}
Mathy Vanhoef and Frank Piessens.
\newblock All your biases belong to us: Breaking {RC4} in {WPA-TKIP} and {TLS}.
\newblock In {\em USENIX Security Symposium}, pages 97--112, Berkeley, CA,
  2015. USENIX.

\bibitem{victor:fusion-calculus}
Bj{\"o}rn Victor.
\newblock {\em The Fusion Calculus: Expressiveness and Symmetry in Mobile
  Processes}.
\newblock PhD thesis, Dept.~of Computer Systems, Uppsala University, Sweden,
  June 1998.

\bibitem{Occampi}
Peter~H. Welch and Frederick R.~M. Barnes.
\newblock Communicating mobile processes: Introducing occam-pi.
\newblock In Ali~E. Abdallah, Cliff~B. Jones, and Jeff~W. Sanders, editors,
  {\em Communicating Sequential Processes. The First 25 Years}, volume 3525 of
  {\em Lecture Notes in Computer Science}, pages 175--210, Berlin Heidelberg,
  2005. Springer.

\bibitem{ExplicitFusion}
Lucian Wischik and Philippa Gardner.
\newblock Strong bisimulation for the explicit fusion calculus.
\newblock In Igor Walukiewicz, editor, {\em Foundations of Software Science and
  Computation Structures, 7th International Conference, FOSSACS 2004}, volume
  2987 of {\em Lecture Notes in Computer Science}, pages 484--498, Berlin
  Heidelberg, 2004. Springer.

\bibitem{y}
Andrew~C. Yao.
\newblock Theory and applications of trapdoor functions.
\newblock In {\em Proceedings of the 23rd Annual Symposium on Foundations of
  Computer Science (FOCS 82)}, pages 80--91, Los Angeles, CA, 1982. IEEE
  Computer Society.

\end{thebibliography}

\newcommand{\noopsort}[1]{}

\end{document}